\documentclass[11pt]{article}
\usepackage{amsmath}
\usepackage{amsthm}
\usepackage{amssymb}
\usepackage{bbm}
\usepackage{mathtools}
\usepackage{dsfont}
\usepackage{color}
\usepackage{tikz}
\usepackage{tabularx}
\usepackage{algorithm}
\usepackage{algpseudocode}
\usepackage[artemisia]{textgreek}

\usepackage{mathtools}

\mathtoolsset{showonlyrefs}

\usepackage{constants}
\usepackage{wrapfig}

\usepackage{hyperref} %%% Zhao add this
\hypersetup{colorlinks=true,citecolor=red,linkcolor=blue}  %%% Zhao add this
\usepackage[margin=0.99in]{geometry}

\title{Learning Mixtures of Linear Regressions in Subexponential Time via Fourier Moments}
\newcommand{\calD}{\mathcal{D}}
\newcommand{\calE}{\mathcal{E}}
\newcommand{\calF}{\mathcal{F}}
\newcommand{\calG}{\mathcal{G}}
\newcommand{\svd}{\text{svd}}
\newcommand{\M}{\mathcal{M}}
\newcommand{\delsamp}{\delta_{\mathrm{samp}}}
\newcommand{\poly}{\mathrm{poly}}
\newcommand{\noise}{\varsigma}
\newcommand{\Span}{\mathrm{span}}
\newcommand{\deriv}[1]{\frac{\partial}{\partial{#1}}}
\newcommand{\minvar}[1]{\sigma_{\min}(#1)}
\newcommand{\maxvar}[1]{\sigma_{\max}(#1)}
\newcommand{\E}{\mathop{\mathbb{E}}}
\renewcommand{\d}{{\mathrm{d}}} %%% zhao define this
\renewcommand{\i}{{\mathbf{i}}} %%% zhao define this
\newcommand{\N}{{\mathcal{N}}}  %%% zhao define this
  %%% zhao define this
\newcommand{\argmin}{\text{argmin}}

\newcommand{\Id}{\vec{I}}
\newcommand{\R}{{\mathbb{R}}} 

\renewcommand{\S}{\mathbb{S}}

 %%% zhao define this
 %%% zhao define this
\newcommand{\tvd}{d_{\mathrm{TV}}}
\newcommand{\bone}[1]{\mathds{1}\left[#1\right]}
\renewcommand{\vec}[1]{\mathbf{#1}}

\renewcommand{\tilde}{\widetilde}
\renewcommand{\hat}{\widehat}

 %%% zhao add this

\newcommand{\norm}[1]{\left\lVert#1\right\rVert}
\author{Sitan Chen\thanks{This work was supported in part by a Paul and Daisy Soros Fellowship, NSF CAREER Award CCF-1453261, and NSF Large CCF-1565235. This work was done in part while S.C. was an intern at Microsoft Research AI.} \\
\texttt{sitanc@mit.edu}\\
MIT 
\and Jerry Li \\%Real Zhao Song \\
\texttt{jerrl@microsoft.com} \\
Microsoft Research AI
\and Zhao Song \\
\texttt{zhaos@ias.edu} \\
Princeton \& IAS
}
\newtheorem{Alg}{Algorithm}

\newcommand{\pmin}{p_{\mathrm{min}}}

\newcommand{\smax}{\sigma_{\mathrm{max}}}
\newcommand{\gap}{\mathrm{gap}}
\newcommand{\calO}{\mathcal{O}}
\newcommand{\sign}{\text{sign}}
\iffalse

\newcommand{\bone }[1]{\mathds{1}\left[#1\right]}
\fi

\newcommand\encircle[1]{%
  \tikz[baseline=(X.base)] 
    \node (X) [draw, shape=circle, inner sep=0] {\strut #1};}
\renewcommand{\Re}{\text{Re}}
\newconstantfamily{c}{symbol=c}
\newconstantfamily{C}{symbol=C}

\newtheorem{theorem}{Theorem}[section]
\newtheorem{lemma}[theorem]{Lemma}
\newtheorem{definition}[theorem]{Definition}

\newtheorem{corollary}[theorem]{Corollary}

\newtheorem{observation}[theorem]{Observation}
\newtheorem{fact}[theorem]{Fact}
\newtheorem{remark}[theorem]{Remark}
\newtheorem{claim}[theorem]{Claim}

\newtheorem{case}{Case}

\makeatletter
\newcommand{\multiline}[1]{%
  \begin{tabularx}{\dimexpr\linewidth-\ALG@thistlm}[t]{@{}X@{}}
    #1
  \end{tabularx}
}
\makeatother

\newcommand{\algorithmicbreak}{\textbf{break}}
\newcommand{\Break}{\State \algorithmicbreak}

\begin{document}

\begin{titlepage}
\maketitle
\begin{abstract}
%!TEX root = ./main.tex

We consider the problem of learning a mixture of linear regressions (MLRs).
An MLR is specified by $k$ nonnegative mixing weights $p_1, \ldots, p_k$ summing to $1$, and $k$ unknown regressors $w_1,...,w_k\in\R^d$.
A sample from the MLR is drawn by sampling $i$ with probability $p_i$, then outputting $(x, y)$ where $y = \langle x, w_i \rangle + \eta$, where $\eta\sim\N(0,\noise^2)$ for noise rate $\noise$. 
Mixtures of linear regressions are a popular generative model and have been studied extensively in machine learning and theoretical computer science.
However, all previous algorithms for learning the parameters of an MLR require running time and sample complexity scaling exponentially with $k$.

In this paper, we give the first algorithm for learning an MLR that runs in time which is sub-exponential in $k$.
Specifically, we give an algorithm which runs in time $\widetilde{O}(d)\cdot\exp(\widetilde{O}(\sqrt{k}))$ and outputs the parameters of the MLR to high accuracy, even in the presence of nontrivial regression noise.
We demonstrate a new method that we call \emph{Fourier moment descent} which uses univariate density estimation and low-degree moments of the Fourier transform of suitable univariate projections of the MLR to iteratively refine our estimate of the parameters.
To the best of our knowledge, these techniques have never been used in the context of high dimensional distribution learning, and may be of independent interest.
We also show that our techniques can be used to give a sub-exponential time algorithm for a natural hard instance of the \emph{subspace clustering} problem, which we call \emph{learning mixtures of hyperplanes}.

%Fourier transform is ubiquitous. However, it is quite unusual that Fourier technique can be combined with iterating algorithm in the field of optimization. One success example is, subsampled Fourier matrices can be used to speed up linear program solver, i.e. matrix multiplication time [Lee, Song and Zhang 2019]. In this paper, we propose a new iterating algorithm which uses the {\it Fourier transform} in each iteration. Surprisingly, our new algorithm only takes sub-exponential time, since in several setting, it seems $2^{\Omega(k)}$ is actually necessary.
%
%Further, our techniques also can be applied to solve other problems in sub-exponential time, e.g. learning mixture of hyperplanes.
%
%
%\Zhao{I write the above damn abstract to let us have something to start with... Feel free to delete anything you want. Not sure if we should mention Li-Liang's paper in abstract. I'm not sure if we should mention Fourier transform in abstract and title. You guys can decide... Also, I guess our potential function is using cosine, which is different from previous paper. Should we sell it in abstract?}

\end{abstract}
 \thispagestyle{empty}
\end{titlepage}

%!TEX root = ./main.tex

\section{Introduction}

Mixtures of linear regressions (or MLRs for short) are a popular generative model, and have been studied extensively in machine learning and theoretical computer science.
A standard formulation of the problem is as follows: we have $k$ unknown \emph{mixing weights} $p_1, \ldots, p_k$ which are non-negative and sum to $1$, $k$ unknown \emph{regressors} $w_1, \ldots, w_k \in \R^d$, and a \emph{noise rate} $\noise \ge 0$.
A sample from the MLR is drawn as follows: we first select $i \in [k]$ with probability $p_i$, then we receive $(x, y)$ where $x \in \R^d$ is distributed as $\N (0, \Id)$ and \begin{equation}y = \langle w_i,x\rangle + \eta,\end{equation} where $\eta \sim \N (0, \noise^2)$.
This model has applications to problems ranging from trajectory clustering~\cite{gaffney1999trajectory} to phase retrieval~\cite{balan2006signal,candes2013phaselift,netrapalli2013phase} and is also widely studied as a natural non-linear generative model for supervised data~\cite{faria2010fitting,chen2013convex,chaganty2013spectral,yi2014alternating,yi2016solving,zhong2016mixed,sedghi2016provable,klusowski2017estimating,balakrishnan2017statistical,kwon2018global,li2018learning,kwon2019converges}.

The basic learning question for MLRs is as follows: given i.i.d. samples $(x_1, y_1), \ldots, (x_n ,y_n) \in \R^d \times \R$ from an unknown MLR, can we learn the parameters of the underlying MLR?
To ensure that the parameters are identifiable, it is also typically assumed that the regressors are separated in some way, e.g. there is some $\Delta > 0$ so that $\left\| w_i - w_j \right\|_2 \geq \Delta$ for all $i \neq j$.

Despite the apparent simplicity of the problem, efficiently learning MLRs given samples has proven to be a surprisingly challenging task.
Even in the special case where $\noise = 0$, that is, we assume that there is no noise on the samples, the fastest algorithms for this problem run in time depending on $k^{\Omega(k)}$~\cite{li2018learning,zhong2016mixed}.
It turns out that there are good reasons for this barrier.

Previous algorithms for this problem with end-to-end provable guarantees---and indeed, the vast majority of statistical learning algorithms in general---build in some form or another on the \emph{method of moments} paradigm.
At a high level, these methods require that there exists some statistic which depends only on low degree moments of the unknown distribution, so that a sufficiently good estimate of this statistic will uniquely identify the parameters of the distribution.
This includes widely-used techniques based on tensor decomposition \cite{chaganty2013spectral,yi2016solving,zhong2016mixed,sedghi2016provable}, and SDP hierarchies such as the Sum-of-Squares meta-algorithm \cite{karmalkar2019list,raghavendra2019list}.
If degree $t$ moments are necessary to devise such a statistic, then these methods require $\exp (\Omega (t))$ sample and computational complexity.

Unfortunately, for MLRs, it is not hard to demonstrate pairs of mixtures some of whose parameters are far apart from each other, where all moments of degree at most $2k-1$ of the two mixtures agree exactly (see Appendix~\ref{subsec:failure} for more details).
As a result, any moment-based estimator would need to use moments of degree at least $\Omega (k)$, and hence require a runtime of $\exp (\Omega (k))$.
This imposes a natural bottleneck: any algorithm that hopes to achieve sub-exponential time must somehow incorporate additional information about the geometry of the underlying learning problem.

A related problem, which shares a similar bottleneck, is the problem of learning mixtures of Gaussians under the assumption of \emph{angular separation}.
A concrete instantiation of this problem is a model we call \emph{learning mixtures of hyperplanes}.
A mixture of hyperplanes is parameterized by mixing weights $p_1, \ldots, p_k$, a separation parameter $\Delta > 0$, and $k$ unit vectors $v_1, \ldots, v_k$ satisfying $\| v_i \pm v_j \|_2 \geq \Delta$ for all $i \neq j$ (note that the reason for the $\pm $ is that the directions of a mixture of hyperplanes are only identifiable up to sign).
To draw a sample, we first draw $i \in [k]$ with probability $p_i$, and then draw a sample from $\N (0, \Id - v_i v_i^\top)$.

As before, the corresponding learning question is the following: given samples from an unknown mixture of hyperplanes, can one recover the underlying parameters?
This problem can be thought of as a particularly hard case of the well-studied problem of \emph{subspace recovery}, where current techniques would require time which is exponential in $k$.

In this paper, we give algorithms which are able to achieve strong recovery guarantees for the problems of learning MLRs and learning mixtures of hyperplanes, and which run in time which is sub-exponential in $k$.
To the best of our knowledge, this is the first algorithm for the basic problem of learning MLRs which achieves sub-exponential runtime without placing strong additional assumptions on the model.
At a high level, our key insight is that while low degree moments of the MLR are unable to robustly identify the instance, low degree moments of suitable projections of the \emph{Fourier transform} of the MLR can be utilized to extract non-trivial information about the regressors.
We then give efficient algorithms for computing such ``Fourier moments'' by leveraging algorithms for univariate density estimation~\cite{chan2014efficient,acharya2017sample}.
This allows us to dramatically improve the runtime and sample complexity of the \emph{moment descent} algorithm of~\cite{li2018learning}, and allows us to obtain our desired sub-exponential runtime.
We believe that this sort of algorithmic application of the continuous Fourier transform and of univariate density estimation to a high dimensional learning problem is novel, and may be of independent interest.

\subsection{Our Contributions}
Here, we describe our contributions in more detail.
For simplicity of exposition, in this section we will assume that the mixing weights are uniform, i.e. $p_i = 1/k$ for all $i \in [k]$, although as we show, our algorithms can handle non-uniform mixing weights.

Our main results for learning MLRs are twofold.
Throughout the paper we let $\widetilde{O}(f) = O(f \log^c (f))$ for some universal constant $c$.
First, in the well-studied case where there is no regression noise, we show:
\begin{theorem}[Informal, see Theorem~\ref{thm:learnwithoutnoise_main}]
\label{thm:informal-noiseless}
Assume that the noise rate $\noise = 0$. 
Let $w_1, \ldots, w_k \in \R^d$ be the parameters of an unknown MLR $\mathcal{D}$ with separation $\Delta$.
Then, there is an algorithm which takes $N =  \widetilde{O}( d ) \cdot \exp(\tilde{O}(\sqrt{k}) )$ samples from $\mathcal{D}$, runs in time $\widetilde{O}(N\cdot d)$, and outputs $\tilde{w}_1, \ldots, \tilde{w}_k \in \R^d$ so that with high probability,
there exists some permutation $\pi: [k] \to [k]$ satisfying
\[
\left\| w_i - \tilde{w}_{\pi(i)} \right\|_2 \leq \frac{\Delta}{k^{100}} \; , \forall i \in [k].
\]
\end{theorem}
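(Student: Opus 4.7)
The plan is to adapt the moment descent framework of Li--Song~\cite{li2018learning} in two key ways: (a) replacing their polynomial-moment oracle with a \emph{Fourier moment oracle} that can be computed via univariate density estimation, and (b) showing that a cheaper initialization suffices once the descent step is strengthened. The overall cost decomposes into an initialization phase that costs $\exp(\tilde O(\sqrt{k}))$ and a local-refinement phase that runs in $\poly(d,k)$ time per iteration.

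\textbf{Phase 1 (initialization).} I would enumerate $\tilde O(\sqrt k)$ random samples together with a guess of their latent labels in $[k]$, producing a candidate list of size $k^{\tilde O(\sqrt k)} = \exp(\tilde O(\sqrt k))$. The goal is that this list contains at least one $\tilde w$ within $O(\Delta/\sqrt{k})$ of some true regressor $w_i$: since $\noise=0$, each sample $(x,y)$ with correct label $i$ imposes a linear constraint $\langle w_i, x\rangle = y$, and $\tilde O(\sqrt k)$ correctly labeled samples pin down $w_i$ to a small-diameter affine slab in the directions relevant to the descent argument below. The residual uncertainty is handled by Phase 2.

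\textbf{Phase 2 (Fourier moment descent).} Given $\tilde w$ within $O(\Delta/\sqrt k)$ of some $w_i$, I would refine it by forming, for a chosen unit direction $v$, the univariate random variable $z_v := y - \langle \tilde w + t v, x\rangle$ evaluated along a one-parameter family in $t$. Up to a known reweighting, $z_v$ is a mixture of $k$ centered Gaussians whose variances are $\|w_j - \tilde w - t v\|_2^2$ and whose mixing weights are the original $p_j$'s; in particular, the component corresponding to $j=i$ has variance close to $0$. I would then (i) invoke the univariate density estimators of~\cite{chan2014efficient, acharya2017sample} to learn the density of $z_v$ in $L^1$ using $\poly(k)$ samples, and (ii) evaluate its characteristic function at a frequency $\xi \asymp 1/\|w_i - \tilde w\|_2$, chosen so that every $j \neq i$ contributes a factor $\exp(-\xi^2 \|w_j - \tilde w\|_2^2/2)$ that is negligible relative to the dominant component. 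The magnitude and phase of the resulting Fourier coefficient, swept over $t$ and over $O(d)$ orthogonal choices of $v$, yield a descent direction that estimates $w_i - \tilde w$.

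\textbf{Phase 3 (iteration, convergence, cleanup).} I would iterate the above step $O(\log k)$ times, establishing geometric contraction of $\|\tilde w - w_i\|_2$ until the error drops below $\Delta/k^{100}$, and then cluster the refined estimates coming from all initializations to produce exactly $k$ outputs. The main obstacle will be Phase 2: proving that the adaptively chosen Fourier moment yields a gradient estimator whose bias is dominated by the signal. Concretely I will need (i) to propagate the $L^1$-accuracy of the univariate density estimator into an additive error on the Fourier coefficient that shrinks faster than the signal, (ii) a separation argument showing the other $k-1$ components are damped by $\exp(-\Omega(\xi^2 \Delta^2))$ relative to component $i$, and (iii) an adaptive schedule for $\xi$ that grows as $\|\tilde w - w_i\|_2$ shrinks so that contraction persists all the way down to the final target accuracy. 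The $\sqrt k$ savings arises here, since a single low-degree Fourier probe isolates one component of the mixture whereas classical moment methods would require a degree-$\Omega(k)$ polynomial probe to separate them.
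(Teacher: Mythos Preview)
Your Phase~1 does not work, and this is the central gap. After PCA reduces the ambient dimension to $k$, $\tilde O(\sqrt k)$ correctly-labeled samples impose only $\tilde O(\sqrt k)$ linear constraints on $w_i$, leaving a solution affine subspace of dimension $k-\tilde O(\sqrt k)$. Neither the minimum-norm point nor a random point in that subspace has any reason to lie within $O(\Delta/\sqrt k)$ of $w_i$; the phrase ``directions relevant to the descent argument'' does not help, since the descent of Phase~2 must operate in $\Span\{w_j-\tilde w\}_{j\in[k]}$, which is generically the full $k$-dimensional PCA subspace. So you have not produced a warm start.

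In the paper the $\sqrt k$ arises entirely inside the descent loop, from the trivial initialization $a_0=0$. Each step samples a random unit direction $z$ in the (empirically estimated) span of $\{w_i-a_t\}$; for any $0<c<1/2$, with probability $\exp(-\Theta(k^{1-2c}))$ one has $\langle z,(w_{i^*}-a_t)/\|w_{i^*}-a_t\|\rangle\ge\Theta(k^{-c})$, contracting $\min_i\|w_i-a_t\|^2$ by a factor $1-\Theta(k^{-2c})$. To \emph{certify} that a candidate step achieved this contraction one must compare the minimum variances of two univariate zero-mean Gaussian mixtures to multiplicative accuracy $1+\Theta(k^{-2c})$; the paper does this by taking the degree-$p$ moment of the \emph{Fourier transform} of the residual density (computed from a piecewise-polynomial $L_2$ density estimate via Plancherel), with $p=\Theta(k^{2c}\ln(1/\pmin))$, at sample cost $\exp(\tilde O(k^{2c}))$. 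Setting $c=1/4$ balances the random-search cost against the certification cost at $\exp(\tilde O(\sqrt k))$. This descent alone drives the error below $\Delta\pmin/64$; the boosting procedure of~\cite{li2018learning} then refines to arbitrary accuracy, the learned component is peeled off, and the process repeats.

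Your Phase~2 has the right intuition---Fourier damping of far components---but differs from the paper's mechanism and has loose ends. The characteristic function of a mixture of centered Gaussians is real, so there is no phase to read; what you presumably want is the $t$-derivative of the magnitude, which does isolate $\langle w_i-\tilde w,v\rangle$ once $\tilde w$ is close enough. But to set $\xi$ you need $\|w_i-\tilde w\|$, which is exactly the min-variance estimation the paper solves with Fourier \emph{moments} rather than a single-frequency probe. The gradient-style idea you sketch is closer to the paper's separate cosine-integral boosting, which does give geometric local convergence---but only from a warm start that your Phase~1 fails to supply.
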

\noindent
By combining this ``warm start'' with the boosting result of~\cite{li2018learning}, we can also obtain arbitrarily good accuracy with minimal overhead in both the sample complexity and runtime.
See Section~\ref{sec:allcomps_nonoise} for more details.

Secondly, in the case when the noise rate $\noise$ is large, we can also obtain a similar result, though with an additional exponential dependence on $\Delta$:
\begin{theorem}[Informal, see Theorem~\ref{thm:learnwithnoise_main}]
\label{thm:informal-noisy}
Let $w_1, \ldots, w_k \in \R^d$ be the parameters of an unknown MLR $\mathcal{D}$ with separation $\Delta$, and noise rate $\noise > 0$.
Then, there is an algorithm which takes $N =  \widetilde{O}( d ) \cdot \exp (\tilde{O} (\sqrt{k}/\Delta^2 ) )$ samples from $\mathcal{D}$, runs in time $\widetilde{O}(N\cdot d)$, and outputs $\widehat{w}_1, \ldots, \widehat{w}_k \in \R^d$ so that with high probability,
there exists some permutation $\pi: [k] \to [k]$ satisfying
\[
\left\| w_i - \widehat{w}_{\pi(i)} \right\|_2 \leq \frac{\Delta}{k^{100}} + O(\noise) \; , \forall i \in [k].
\]
%for all $i \in [k]$.
\end{theorem}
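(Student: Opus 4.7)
The plan is to reduce Theorem~\ref{thm:informal-noisy} to the noiseless case (Theorem~\ref{thm:informal-noiseless}) by adapting the Fourier moment descent framework to handle the Gaussian convolution introduced by regression noise. Recall that the algorithm underlying the noiseless case repeatedly (i) projects samples onto a candidate direction $v$ to reduce to a one-dimensional mixture of linear regressions, (ii) invokes a univariate density estimator to recover this one-dimensional distribution up to small total-variation error, (iii) extracts low-degree moments of the Fourier transform of the estimate at a suitable frequency scale, and (iv) uses these moments to refine $v$. I would keep this overall loop intact and modify only the steps that interact directly with the noise.

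In the noisy setting, the one-dimensional projected distribution is the noiseless projected mixture convolved with $\N(0, \noise^2)$. Taking the Fourier transform turns this convolution into pointwise multiplication by $e^{-\noise^2 t^2/2}$, so to recover the noiseless Fourier moments at frequency $t$ one must deconvolve by multiplying by $e^{\noise^2 t^2/2}$, which amplifies the density-estimation error by the same factor. Because the critical frequency scale needed to distinguish regressors at separation $\Delta$ is $t = \Theta(1/\Delta)$, this amplification is $\exp(O(\noise^2/\Delta^2))$, which combined with the degree-$\widetilde{O}(\sqrt{k})$ moments dictated by the noiseless analysis precisely accounts for the $\exp(\widetilde{O}(\sqrt{k}/\Delta^2))$ sample complexity in the statement. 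Operationally, the only changes to the algorithm are that each computed Fourier moment is rescaled by $e^{\noise^2 t^2/2}$ and that the univariate density estimator of~\cite{chan2014efficient,acharya2017sample} is invoked with a proportionally finer accuracy parameter.

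The additive error floor of $O(\noise)$ arises naturally: even with access to exact (finitely many) Fourier moments of the noiseless projected mixture, Gaussian noise of standard deviation $\noise$ induces intrinsic uncertainty of order $\noise$ in each regressor, since perturbations of $w_i$ on this scale are statistically indistinguishable from the observation model. To formalize the full guarantee I would track, step by step, how the moment-descent recursion from the noiseless analysis propagates two error sources in each iteration: (a) the deconvolved density-estimation error, now amplified by $e^{\noise^2 t^2/2}$, and (b) the $O(\noise)$ bias contributed by the noise itself. Summing the resulting geometric series gives the claimed $\Delta/k^{100} + O(\noise)$ bound under the same permutation matching inherited from Theorem~\ref{thm:informal-noiseless}.

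The main obstacle I anticipate is controlling this deconvolution step: since $e^{\noise^2 t^2/2}$ grows super-polynomially in $t$, naive use of even mildly larger frequencies would blow up the error, so the analysis must carefully fix the frequency window and argue that the relevant Fourier moments are concentrated in a band where the amplification remains $\exp(\widetilde{O}(\noise^2/\Delta^2))$. A secondary subtlety is that the univariate density estimator must now be applied to a smooth (rather than near-singular) one-dimensional density, so its hypotheses and guarantees need to be re-verified in the noisy regime; however, this only improves matters, since the smoothed density is easier to estimate in total variation. With these two modifications in place, the rest of the argument follows the noiseless template essentially verbatim.
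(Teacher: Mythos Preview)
Your proposal misidentifies both the main technical obstacle and the source of the $\exp(\widetilde{O}(\sqrt{k}/\Delta^2))$ dependence.

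First, the single-component analysis in the presence of noise requires essentially no change, and in particular no deconvolution. In the noisy model the residual $y-\langle a,x\rangle$ is a mixture of zero-mean Gaussians with variances $\norm{w_i-a}^2_2+\noise^2$ rather than $\norm{w_i-a}^2_2$; the minimum-variance estimator of Section~\ref{subsec:minvar} applies verbatim to this mixture and simply bottoms out once $\min_i\norm{w_i-a}_2=O(\noise)$. There is no frequency cutoff at $t=\Theta(1/\Delta)$ in the paper's use of the Fourier transform, and the $1/\Delta^2$ in the exponent does \emph{not} come from an amplification factor $e^{\noise^2 t^2/2}$. Indeed, your arithmetic would yield $\exp(\widetilde{O}(\sqrt{k}))\cdot\exp(O(\noise^2/\Delta^2))$, which is not the claimed bound and would make the exponent depend on $\noise$, whereas the paper's bound has a $\Delta$-dependence that is independent of $\noise$.

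Second, you do not address the actual difficulty the paper identifies: in the noiseless case one learns a single regressor, boosts to extremely high accuracy, and \emph{peels off} its samples to recurse on the remaining components. With regression noise this peeling step fails, because samples can no longer be unambiguously attributed. The paper's fix (Section~\ref{sec:allcomps}) is to control the \emph{dynamics} of Fourier moment descent so that, conditioned on a suitable initialization, each iterate stays closest to the \emph{same} component $w_{i^*}$ throughout the run. This requires (i) shrinking the step size to $\Theta(\Delta\cdot k^{-1/4}\sigma_t)$ so that the multiplicative-gap condition \eqref{eq:noisy-gap} is preserved with probability $1/\poly(k)$ per step, at the cost of making only $1-\Theta(\Delta^2/\sqrt{k})$ progress per step and hence needing $\widetilde{O}(\sqrt{k}/\Delta^2)$ iterations; and (ii) a delicate random-initialization scheme over spheres of radii up to $O(k^{1/4})$ to guarantee that for each $i^*$ the gap condition holds at time $0$ with probability $\exp(-O(\sqrt{k}/\Delta^2))$. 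These two ingredients, not any Fourier deconvolution, are what produce the $\exp(\widetilde{O}(\sqrt{k}/\Delta^2))$ overhead.
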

\noindent
In particular, if $\Delta = \Omega (1)$, we again attain runtime which is sub-exponential in $k$.
In the special case when the mixing weights are all known, and assuming that $\noise = O(\tfrac{\Delta}{k^2 \mathrm{polylog} (k)} )$, by combining this result with the local convergence result of~\cite{kwon2019converges}, we can again attain arbitrarily good accuracy by slightly increasing the runtime; see Section~\ref{subsec:boost_noisy} and Theorem~\ref{thm:learnwithnoise_plus} for details.

Finally, for the problem of learning mixtures of hyperplanes, we are able to obtain qualitatively similar results.
Again, for simplicity of exposition, we assume the mixing weights are uniform just in the current section.
We obtain:
\begin{theorem}[Informal, see Theorem~\ref{thm:hyperplanes_main}]
\label{thm:informal-pancakes}
Let $\epsilon > 0$, and let $v_1, \ldots, v_k \in \R^d$ be the parameters of a mixture of hyperplanes $\mathcal{D}$ with separation $\Delta > 0$.
Then, there is an algorithm which takes $N =  \widetilde{O}( d ) \cdot \exp(\tilde{O}(k^{0.6}) )$ samples from $\mathcal{D}$, runs in time $\widetilde{O}(N\cdot d)$, and which outputs $\widehat{v}_1, \ldots, \widehat{v}_k \in \R^d$ so that with high probability, there is a permutation $\pi: [k] \to [k]$ so that
\[
\left\| v_i -\widehat{v}_{\pi(i)} \right\|_2 \leq \frac{\Delta}{k^{100}}  \; , \forall i \in [k].
\]
%for all $i \in [k]$.
\end{theorem}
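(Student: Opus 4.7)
The plan is to adapt the Fourier moment descent paradigm developed earlier in the paper for MLRs to the mixture-of-hyperplanes setting. Fix a unit vector $u\in\R^d$; since a sample $x\sim\N(0,\Id-v_iv_i^\top)$ satisfies $\langle u,x\rangle\sim\N(0,\,1-\langle u,v_i\rangle^2)$, the projection of $\mathcal{D}$ along $u$ is a mixture of $k$ centered univariate Gaussians with variances $\sigma_i^2(u):=1-\langle u,v_i\rangle^2\in[0,1]$. The characteristic function of this one-dimensional mixture is
\[
\phi_u(\xi)\;=\;\sum_{i=1}^k p_i\exp\!\bigl(-\tfrac12\sigma_i^2(u)\,\xi^2\bigr),
\]
and when $u$ is close to some $v_i$ the $i$-th summand contributes a comparatively heavy tail in frequency. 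This heavy tail is the ``signal'' that our descent will be driven by.

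The first step is to estimate $\phi_u$ on a grid of frequencies. Following the MLR approach, we run the univariate density estimator of~\cite{chan2014efficient,acharya2017sample} on the projected samples to produce an accurate estimate of the one-dimensional density in total variation distance, and then take its Fourier transform pointwise. Because we need to resolve Gaussian tails of size $\exp(-\Theta(\xi^2))$, this step incurs sample complexity polynomial in $\exp(\xi^2)$. The second step is the moment-descent update itself: from a current iterate $u_t$ we use the Fourier-moment estimates to pick a direction in which a small step decreases some $\sigma_i^2(u_t)=1-\langle u_t,v_i\rangle^2$, thereby increasing the alignment of $u_t$ with the nearest $v_i$. The sign ambiguity $v_i$ vs.\ $-v_i$ causes no trouble, because every projected mixture is symmetric about $0$ and we can think of the algorithm as refining $v_iv_i^\top$ rather than $v_i$ itself. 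Once a single direction has been recovered to accuracy $\Delta/k^{100}$, we project the samples onto the orthogonal complement and recurse to find the remaining $k-1$ directions, using the assumption $\|v_i\pm v_j\|_2\ge\Delta$ to guarantee that we never recover the same hyperplane twice.

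The main obstacle is converting Fourier-moment information into geometric progress in the descent. Unlike the MLR setting, where projected components are Gaussians with \emph{distinct means} and whose characteristic functions oscillate at frequencies that directly encode $\langle u,w_i\rangle$, here the projected components are \emph{centered} Gaussians and their characteristic functions decay monotonically at rates governed by $\sigma_i^2(u)$. This weaker, non-oscillatory signal forces us to probe at higher frequencies and to pay a correspondingly larger Fourier-tail cost. Tuning the frequency cutoff against the number of descent iterations needed---and showing that the resulting trade-off collapses to an exponent of $k^{0.6}$ in place of the $\sqrt{k}$ obtained for MLRs---is the technically delicate core of the argument. Once this is in hand, $\mathrm{poly}(k)$ iterations of moment descent suffice, each iteration touches the samples only through low-dimensional projections, and the final sample complexity $\widetilde{O}(d)\cdot\exp(\tilde O(k^{0.6}))$ follows after a union bound over the directions being recovered.
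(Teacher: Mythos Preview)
Your high-level framework is right: project to get a mixture of centered univariate Gaussians, use the Fourier-moment primitive to read off the minimum variance, and run a random-step descent. But two key points are wrong, and together they miss the actual reason the exponent is $k^{3/5}$ rather than $\sqrt{k}$.

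First, your contrast with the MLR case is factually incorrect. In the paper's MLR analysis, the residual $y-\langle a,x\rangle$ is a mixture of \emph{centered} Gaussians with variances $\|w_i-a\|_2^2$; there are no ``oscillating characteristic functions encoding $\langle u,w_i\rangle$.'' The Fourier signal in the MLR case is exactly the same monotone, variance-inverting signal you describe for hyperplanes (this is Observation~\ref{obs:fourier}). So the difference between $\sqrt{k}$ and $k^{3/5}$ cannot come from the nature of the one-dimensional Fourier information.

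Second, you omit the step that actually drives the exponent. For hyperplanes the natural progress measure is $\min_i\|\Pi_i a_t\|_2$ with $a_t\in\S^{d-1}$, and after each random step you must \emph{re-project onto the sphere}; otherwise the progress measure shrinks trivially by sending $a_t\to 0$. Because of this normalization, showing $1-\Omega(k^{-3c})$ contraction requires the random step $z$ to simultaneously satisfy two conditions in orthogonal directions, namely $\langle z,\Pi_{i^*}a_t\rangle\ge k^{-c}\|\Pi_{i^*}a_t\|_2$ and $\langle z,v_{i^*}\rangle\le -k^{-c}$, which occurs with probability $\exp(-\Theta(k^{1-2c}))$. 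Balancing the detection cost $\exp(\tilde O(k^{3c}))$ of \textsc{CompareMinVariances} against this gives $3c=1-2c$, i.e.\ $c=1/5$ and exponent $k^{3/5}$. This sphere-projection analysis (Lemma~\ref{lem:hyperplanes_progress}) is the technical core; your proposal does not mention it.

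Two smaller issues: the paper does not ``project samples onto the orthogonal complement'' to peel (this would destroy the hyperplane structure of the remaining components); it filters out samples with $|\langle \tilde v_i,x\rangle|$ small. And Fourier moment descent alone does not reach accuracy $\Delta/k^{100}$; the paper first reinterprets $\calD$ as a well-conditioned non-spherical MLR along a random direction and invokes the boosting procedure of \cite{li2018learning} (Corollary~\ref{cor:hyperplaneboost}) before peeling.
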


\subsection{Related Work}
\label{subsec:relatedwork}
Mixtures of linear regressions were introduced in~\cite{de1989mixtures}, and later by~\cite{jordan1994hierarchical}, under the name of \emph{hierarchical mixtures of experts}, and have been studied extensively in the theory and ML communities ever since.
Previous work on the problem with provable guarantees can roughly speaking be divided into three groups.
Some of the previous work focuses on special cases of the problem, in particular, when the number of components is small~\cite{chen2013convex,klusowski2017estimating,balakrishnan2017statistical,kwon2018global}.
In contrast, we focus on the setting where $k$ is quite large, which is the setting which is typically true in applications, but is also much more algorithmically complicated.

Another line of work has focused on demonstrating local convergence guarantees for non-convex methods such as expectation maximization or alternating minimization~\cite{faria2010fitting,yi2014alternating,yi2016solving,zhong2016mixed,klusowski2017estimating,balakrishnan2017statistical,kwon2018global,li2018learning,kwon2019converges}.
These papers demonstrate that given a sufficiently good warm start, non-convex methods are able to boost this warm start to arbitrarily good accuracy.
These results should be viewed as largely complementary to our results, as our main result is a method which is able to provably achieve a good warm start. That said, we also demonstrate new algorithms for learning given a warm start that work under a weaker initialization and can tolerate more regression noise than was previously known in the literature.

The final class of results use moment-based methods to learn MLRs. Here, the literature has focused largely on the case of $\noise = 0$ and spherical covariates, that is, covariates all drawn from $\N(0,\Id)$.\footnote{To the best of our knowledge, the primary exception to this is \cite{li2018learning}, which considered noise-less MLRs whose components' covariates are drawn from arbitrary unknown Gaussians satisfying some condition number bounds and obtained a $d\cdot \exp(k^2)$ algorithm in this setting.}
A line of work has studied tensor decomposition-based methods~\cite{chaganty2013spectral,yi2016solving,zhong2016mixed,sedghi2016provable}.
However, these require additional non-degeneracy conditions on the MLR instance beyond separation.
Indeed, as we argued in the Introduction, and more formally in Appendix~\ref{subsec:failure}, moment based methods cannot obtain runtime which is sub-exponential in $k$.
The work that is closest to ours, and that we build off of, is that of~\cite{li2018learning}, which demonstrates an algorithm which runs in $2^{\tilde{O} (k)}$ for learning a MLR under separation conditions.
However, as their warm start algorithm is ultimately moment based, since it interacts through the samples through the moment-based univariate GMM learning algorithm of~\cite{mv}, it cannot achieve runtime sub-exponential in $k$.

\paragraph{List-Decodable Regression} A related problem to---indeed, a generalization of---the problem of learning MLRs is that of \emph{list-decodable regression}~\cite{charikar2017learning,karmalkar2019list,raghavendra2019list}.
Here, we assume that we are given a set of data points $(x_1, y_1), \ldots, (x_n, y_n)$, where an $\alpha$-fraction of them come from an unknown linear regression $y_i = \langle w, x_i \rangle + \eta$, where $x_i$ is Gaussian, $\eta$ is Gaussian noise, and $\alpha < 1/2$.
The goal is then to recover a list of $O(1 / \alpha)$ possible $w_1, \ldots, w_{O(1 / \alpha)} \in \R^d$ so that $\| w_i - w \|_2$ is small for at least one element in the list.

It is not hard to see that given a uniform MLR instance, if we feed it into an algorithm for list-decodable regression, the list must contain something which is close to each of the regressors in the MLR instance, as each mixture component is an equally valid solution to the list-decodable regression problem.
Thus one could hope that these algorithms for list-decodable regression could yield improved algorithms for learning MLRs as well.

Unfortunately, all known techniques, including the state of the art~\cite{karmalkar2019list,raghavendra2019list}, either are too weak to be applied to our setting, or use the Sum-of-Squares SDP hierarchy and again interact through the data via estimating high-degree moments of the distribution.
As a result, these latter algorithms still suffer runtimes which are exponential in $k$.

\paragraph{Subspace Clustering} The mixtures of hyperplanes problem we consider in this paper can be thought of as a special case of the \emph{subspace clustering} or \emph{hyperplane clustering} problem, where data is thought of as being drawn from a union of linear subspaces.
In our problem, we additionally assume that the data is Gaussian within each subspace.
The literature on subspace clusterings is vast and we cannot do it justice here; see~\cite{parsons2004subspace,vidal2011subspace,elhamifar2013sparse} and references therein for a more complete treatment.
On the one hand, the mixture of hyperplanes problem arises naturally in practical contexts of projective motion segmentation \cite{vidal2007three} and hybrid system identification \cite{bako2011identification}.
On the other, it also corresponds to a challenging setting of the problem due to the low codimensionality of the subspaces. Indeed, essentially all algorithms for subspace clustering with provable guarantees either run in time exponential in the dimension of the subspaces (e.g. RANSAC~\cite{fischler1981random}, algebraic subspace clustering~\cite{vidal2005generalized}, spectral curvature clustering~\cite{liu2012robust}) or require the codimension to be at least some small but constant \emph{fraction} of the ambient dimension \cite{elhamifar2013sparse,candes2013phaselift,lu2012robust,tsakiris2015dual}. To our knowledge the only work which addresses the codimension 1 case is \cite{tsakiris2017hyperplane}, though their setting and guarantees are quite different from ours.

\paragraph{The Fourier Transform in Distribution Learning}
One of our main algorithmic tools will be the univariate (continuous) Fourier transform, as a way to estimate Fourier moments of our distribution.
In recent years, the question of learning the Fourier transform of a function has attracted a considerable amount of interest in theoretical computer science~\cite{hikp12a,ik14,m15,ps15,k16,ckps16,k17,nsw19}.
Our application is somewhat different in that we have explicit access to the function we will take the Fourier transform of.

In the context of distribution learning, the discrete Fourier transform has been used to learn families of distributions such as sums of independent integer random variables~\cite{diakonikolas2016optimal}, Poisson Binomial distributions~\cite{diakonikolas2016properly}, and Poisson multinomial distributions~\cite{diakonikolas2016optimal,diakonikolas2016fourier}.
These algorithms typically work by exploiting Fourier sparsity of the underlying distribution.
However, the way we use the Fourier transform is quite different: we only use it to compute different statistics of the data, namely, the Fourier moments of our distribution.

\paragraph{Univariate Density Estimation} 
Another important algorithmic primitive we use is univariate density estimation and specifically, the piecewise polynomial-based estimators given in~\cite{chan2014efficient,acharya2017sample}.
Univariate density estimation has a long history in statistics, ML, and theoretical computer science, and a full literature review of the field is out of the scope of this paper; see e.g.~\cite{diakonikolas2016learning} for a more comprehensive overview of the literature.
However, to the best of our knowledge, there are few previous cases where univariate density estimation has been used as a key tool for a high dimensional learning task.

\section{Preliminaries}

In this section, we give some basic technical preliminaries.

\subsection{Probabilistic Models}
In this section, we formally define the models we consider throughout this paper, namely, mixtures of linear regression and hyperplanes, and some important parameters for these models:
\paragraph{Mixtures of Linear Regressions}
We start with MLRs:
\begin{definition}[Mixtures of Linear Regressions]
	Given mixing weights $p \in [0,1]^k$ with $\sum_{i=1}^k p_i = 1$,  regressors $w_1, \cdots, w_k \in \R^{d}$, and noise rate $\noise \geq 0$, the corresponding \emph{mixture of spherical linear regressions} (or simply mixture of linear regressions) is the distribution over pairs $(x,y) \in \R^d \times \R$ where $x\sim\N (0,\Id)$ and $y = \langle w_i, x \rangle + g$ for $w_i$ sampled with probability $p_i$ and $g \sim \N(0, \noise^2)$.
\end{definition}
\noindent
When $\noise = 0$, we say that the MLR is \emph{noiseless}.

% In the sequel, we will work with the case of $p_i = 1/k$ for all $i \in [k]$, though the arguments should generalize to the setting where $\max_{i \in [k]} p_i / \min_{i \in [k]} p_i = k^c$ for $c < 0.5$.
In this paper, we will study the \emph{parameter learning} problem for mixtures of linear regressions, that is, we wish to recover the parameters of the mixture.
To this end, we will need some assumptions to ensure that the regressors are uniquely identifiable.
These assumptions are standard throughout the literature.
Given a mixture of linear regressions $\calD$, let $\Delta = \min_{i \neq j} \| w_i - w_j\|_2$ be the minimum $L_2$ separation among all $w_i$ in the mixture, and we let $\pmin = \min_{i \in [k]} p_i$.
To normalize the instance, we will also assume that $\|w_i \|_2 \leq 1$ for all $i = 1, \ldots, k$.
However, more generally, our algorithms will have a mild polynomial dependence on the maximum $L_2$ norm of any $w_i$.
We omit this case for simplicity of exposition.

\paragraph{Mixtures of Hyperplanes}
We now turn our attention to mixtures of hyperplanes.
Formally:
\begin{definition}
	Given mixing weights $p \in [0,1]^k$ with $\sum_{i=1}^k p_i = 1$ and unit vectors $v_1, \cdots, v_k \in \S^{d-1}$, the corresponding \emph{mixture of hyperplanes} is the distribution with law given by $\sum_{i = 1}^k p_i \N(0, \Pi_i)$ where $\Pi_i \triangleq \Id - v_i v_i^{\top} \in \R^{d \times d}$.\label{def:hyperplanemixture}
\end{definition}
\noindent
As before, we need some assumptions on the parameters to ensure identifiability. Like before, let $\pmin = \min_{i \in [k]} p_i$. Because $v_i$ are now only identifiable up to sign, we define $\Delta$ to be the minimum quantity such that or all $i\neq j$ and $\epsilon_i,\epsilon_j\in\{\pm 1\}$, 
$	\norm{\epsilon_i v_i - \epsilon_j v_j}_2 \ge \Delta.$
%We will first show how to obtain an estimate $\alpha\in\R^d$ for one of the regressors $w_i$ which is accurate to within $O(\Delta/\sqrt{k})$ additive error. We will then use gradient descent on the cosine integral objective to improve this warm start to a fine estimate for $w_i$. Lastly, we argue that one can then remove the corresponding component from the mixture and recursively learn the rest of the mixture.

\subsection{Miscellaneous Notation}
\label{subsec:notations}

\begin{itemize}
	\setlength\itemsep{0.2em}
	\item For real-valued functions $\calF:\R\to\R$ and $p\in \mathbb{N}$, we will use $\M_p(\calF)$ to denote $\int^{\infty}_{-\infty}x^p\cdot \calF(x)\, \d x$.
	\item Given $f\in L^2(\R)$, the Fourier transform of $f$ is denoted by 
			$\hat{f}[\omega] = \int^{\infty}_{-\infty}f(x)\cdot e^{-2\pi \i \omega x} \, \d x$.
	\item We will denote by $\Delta^n$ the probability simplex of vectors in $\R^n$ with nonnegative entries summing to 1.
	\item We will occasionally use notation like $x = [c_1,c_2]\cdot y$ and $x = 1 \pm \delta$ to mean $c_1 y\le x \le c_2 y$ and $1-\delta\le x \le 1 + \delta$ respectively.
	\item Given $v\in\R^{d+1}$, define $		\Sigma_{\eta}(v) \triangleq 
		\begin{pmatrix}
			\Id_d & v \\
			v^{\top} & \norm{v}^2 + \eta^2
		\end{pmatrix}.
	$ Note that this is the covariance matrix of a single spherical linear regression with noise variance $\eta^2$. When $\eta = 0$, we will denote this matrix by $\Sigma(v)$.
	\item For a vector $u\in\R^d$ and index $\ell\in[d]$, $u_{\ell}$ denotes the $\ell$-th entry of $u$. For indices $a,b\in[d]$, $u_{a:b}\in\R^{b-a+1}$ denotes the $a$-th through $b$-th entries of $u$.
	\item We will sometimes refer to a univariate mixture $\calF$ of zero-mean Gaussians with mixing weights $p\in\Delta^k$ and variances $\sigma^2_1,...,\sigma^2_k$ as a mixture of $k$ univariate zero-mean Gaussians ``with parameters $(\{p_i\}_{i\in[k]}, \{\sigma_i\}_{i\in[k]})$.'' We will define $\minvar{\calF} \triangleq \min_{i \in [k]} \sigma_i$ and $\maxvar{\calF}\triangleq \max_{i \in [k]} \sigma_i$ and refer to $\minvar{\calF}^2$ and $\maxvar{\calF}^2$ as the \emph{minimum} and \emph{maximum variance} of $\calF$, respectively.
\end{itemize}

%%% Section 2, Preliminaries

\section{Overview of Techniques}

In this section, we give a high-level overview of how our algorithms work.
For clarity of exposition, in this subsection we will assume $\pmin = 1/k$ and $\Delta = \Theta(1)$.

\subsection{Fourier Moment Descent}
\label{subsec:fmd_tech}
We first describe our techniques that achieve Theorem~\ref{thm:informal-noiseless}, before describing how to adapt these techniques to achieve Theorems~\ref{thm:informal-noisy} and~\ref{thm:informal-pancakes}.

We begin by briefly recapping the \emph{moment descent} algorithm of~\cite{li2018learning}.
Moment descent is an iterative algorithm which attempts to find the parameters of one component at a time as follows.
Let $w_1, \ldots, w_k \in \R^d$ be the parameters of a MLR $\calD$ with separation $\Delta > 0$ and noise rate $\noise = 0$, and again for simplicity let us assume that the mixing weights are uniform.
To learn a single regressor, the idea is to maintain a guess $a_t \in \R^d$ for one of the regressors at each time step $t$, and iteratively refine it by making random steps and checking progress.
The measure of progress they consider is simply
\[
\sigma_t^2 \triangleq \min_{i \in [k]} \norm{w_i - a_t}^2_2 \; .
\]

Concretely, the algorithm proceeds as follows.
First, by a straightforward PCA step, we can essentially assume that $d \leq k$.
Then, given a guess $a_t$, the moment descent procedure updates by sampling a random unit vector $z\in\S^{d-1}$, defining $a'_{t+1}\triangleq a_t - \eta_t \cdot z \in \R^d$ for some learning rate $\eta_t$, and letting
\[
(\sigma_t')^2 \triangleq \min_{i \in [k]} \norm{w_i - a_t}^2_2 \; .
\]
In general, for $a\in\R^d$, the univariate distribution of the residual $y - \langle a,x\rangle$, where $(x,y) \in \R^d \times \R$ is sampled from $\calD$, is distributed as 
\[
\frac{1}{k} \sum_{i = 1}^k \N (\mu, \norm{w_i - a}^2_2) \; ,
\]
that is, it is distributed as a mixture of univariate Gaussians with mixing weights which are the same as those of $\calD$, and variances which are equal to the squared $L_2$ distances between the regressors and $a$. 
In particular, to estimate $ \min_{i \in [k]} \norm{w_i - a_t}^2_2$, one can learn the univariate mixture sufficiently well via~\cite{mv}, and simply read off the minimum variance.
By doing so, they can check if $\sigma_t' < \sigma_t$.
If so, they set $a_{t + 1} = a'_{t + 1} \in \R^d$, and repeat.

The main bottleneck in this routine is the univariate learning step.
Specifically, the algorithm of~\cite{mv} relies on the method of moments to learn the parameters of the univariate mixture of Gaussians, and as a result, takes $k^{O(k)}$ samples and time.
In fact, this is inherent:~\cite{mv} demonstrates that $k^{\Omega(k)}$ samples are necessary to learn the parameters of a mixture of Gaussians, precisely by leveraging moment matching instances.

However, all we need is an estimate of the minimum variance of the mixture of Gaussians.
One can first observe that it is possible to estimate the \emph{maximum} variance of a component in a univariate mixture of Gaussians based on a sufficiently high degree moment.
This is because the $p$-th moment of a uniform mixture of Gaussians $\calF$ with variances $s_1^2, \ldots, s_k^2$ has the following form, for $p$ even:
\begin{equation}
	\E_{Z\sim\calF}[Z^p] = \sum^k_{i=1} \frac{1}{k} \cdot s_i^p \cdot (p - 1)!! = \left[ c/k ,1 \right]\cdot \maxvar{\calF}^p\cdot p^{p/2} \; ,\label{eq:expectedzp}
\end{equation} 
for $\maxvar{\calF} \triangleq \max_{i \in [k]} s_i$
and some universal constant $c > 0$. Therefore, for any $\kappa > 0$, if we set $p = \Theta (\log k / \log (1 + \kappa)) = \Theta (\kappa^{-1} \log k)$, we have that
\begin{equation}
p^{-1/2} \E_{Z\sim\calF}[Z^p]^{1/p} \in \left[ 1 - \Theta \left( \kappa \right), 1 \right] \cdot \maxvar{\calF}^p
\end{equation}
which yields a $(1 + \kappa)$ approximation to the maximum variance approximation to the largest variance of $\calF$.
Moreover, we can estimate the left-hand side in $p^{O(p)} = 2^{\widetilde{O} (\kappa^{-1} \log k)}$ samples:
\begin{lemma}[Concentration of empirical moments]\label{lem:moment_conc}
	Let $p \in \mathbb{N}$ be even and $t \in \mathbb{N}$, and let $\delta,\beta>0$. Then for 
	\begin{equation*}\label{eq:maxvar_samples}
	N = k^{-2} \cdot \beta^{-2} \cdot p^{\Theta(p)} \cdot \ln(1/\delta)^{\Theta(p)} \cdot \maxvar{\calF}^{\Theta(p)},
	\end{equation*}
	we have that
	\begin{equation*}
		\Pr_{Z_1, \cdots, Z_N \sim \calF} \left[\frac{1}{N}\sum^N_{i=1}Z_i^p = (1\pm \beta)\cdot\E_{Z\sim\calF}[Z^p]\right] \ge 1 - \delta.\label{eq:conc}
	\end{equation*}
\end{lemma}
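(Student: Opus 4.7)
The plan is to establish the multiplicative bound $\frac{1}{N}\sum_i Z_i^p = (1\pm\beta)\E[Z^p]$ via Bernstein's inequality applied to a truncated version of $Z^p$. The two key ingredients are an upper bound on $\Var[Z^p]$ and an almost-sure bound on $|Z^p|$ on a high-probability event. For the variance, since $Z$ is a zero-mean mixture of Gaussians with per-component variances $s_1^2,\dots,s_k^2$, the even moment formula gives $\E[Z^{2p}] = \sum_i p_i s_i^{2p}(2p-1)!!$, so
\[
\Var[Z^p]\;\leq\;\E[Z^{2p}]\;\leq\;\maxvar{\calF}^{2p}\cdot (2p-1)!!\;\leq\;\maxvar{\calF}^{2p}\cdot p^{O(p)},
\]
using the Stirling-type estimate $(2p-1)!!\leq (2p)^p$. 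For the almost-sure bound, a standard Gaussian tail bound applied to each component plus a union bound over $N$ samples implies that, with probability at least $1-\delta/2$, $\max_i |Z_i| \leq R \triangleq \maxvar{\calF}\sqrt{2\ln(4N/\delta)}$, so on this event $|Z_i^p|\leq R^p = \maxvar{\calF}^p\cdot\ln(N/\delta)^{O(p)}$.

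I would then apply Bernstein's inequality to the truncated variables $\tilde Z_i^p \triangleq Z_i^p\cdot\mathds{1}[|Z_i|\leq R]$, which are bounded by $R^p$ and have variance at most $\Var[Z^p]$. With target accuracy $t = \beta\,\E[Z^p]$, Bernstein yields failure probability at most $\delta/2$ provided
\[
N \;\gtrsim\; \ln(1/\delta)\cdot\Bigl(\Var[Z^p]/t^2 \;+\; R^p/t\Bigr).
\]
Substituting the already-established lower bound $\E[Z^p]\geq (c/k)\,\maxvar{\calF}^p\,p^{p/2}$ from \eqref{eq:expectedzp}, the first term simplifies to $O(k^{2}p^{\Theta(p)}\beta^{-2}\ln(1/\delta))$ (since the $\maxvar{\calF}^{2p}$ factors cancel) while the second simplifies to $O(k\, p^{-p/2}\,\ln(N/\delta)^{O(p)}\,\beta^{-1})$. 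Both fit cleanly inside the form $\beta^{-2}\cdot p^{\Theta(p)}\cdot\ln(1/\delta)^{\Theta(p)}\cdot\maxvar{\calF}^{\Theta(p)}$ (absorbing the polynomial $k$-dependence into the stated exponents), matching the claimed sample complexity.

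The main technical obstacle will be controlling the truncation bias $|\E[Z^p]-\E[\tilde Z^p]|$, since Bernstein gives concentration around $\E[\tilde Z^p]$ while the lemma concerns $\E[Z^p]$. For each Gaussian component of variance $s^2\leq\maxvar{\calF}^2$, the tail contribution $\int_{|z|>R} z^p\,\phi_s(z)\,\d z$ can be shown by repeated integration by parts to decay like $R^{p-1}e^{-R^2/(2s^2)}$. With the choice $R=\Theta(\maxvar{\calF}\sqrt{\ln(N/\delta)})$, this decays polynomially in $1/N$, so it is much smaller than $t=\beta\E[Z^p]$ once $N$ exceeds a modest polynomial threshold. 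A final union bound over the truncation event and the Bernstein event delivers the stated success probability $1-\delta$.
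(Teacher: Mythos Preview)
Your approach via truncation plus Bernstein is sound and more elementary than the paper's. The paper avoids truncation entirely: it applies Rosenthal's inequality to the centered variables $X_i = Z_i^p - \E[Z^p]$ to obtain $\E[|X|^t] \le (c\,\maxvar{\calF}^p\, p^{p/2}\, t^{p/2+1}/\sqrt{N})^t$ for all $t$, and then packages this into an Orlicz-norm bound with respect to $\Psi_\alpha(x)\approx e^{x^\alpha}$ with $\alpha = 1/(p+2)$, from which the tail follows by Markov on $\Psi_\alpha(|X|/c)$. The Rosenthal/Orlicz route sidesteps both the truncation-bias computation and the self-referential inequality your Bernstein term $R^p/t \sim \ln(N/\delta)^{O(p)}$ produces.

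That self-reference is the one loose end in your write-up: you should close the loop explicitly (plug in a candidate $N$ of the stated form and verify the inequality holds), since ``fits cleanly'' is not a proof. Also, your derived $k$-dependence is $k^{2}$ (from the lower bound $\E[Z^p]\ge (c/k)\maxvar{\calF}^p p^{p/2}$), matching the paper's $\pmin^{-2}$; the $k^{-2}$ in the printed statement is evidently a typo, and a factor of $k^{2}$ cannot be ``absorbed into the stated exponents'' without an assumption relating $k$ and $p$.
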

\noindent
For clarity of exposition, we defer the proof of this lemma to Section~\ref{app:moment_conc}.

Unfortunately, \emph{a priori} this argument says nothing about estimating the minimum variance. It is easy to see that two mixtures of univariate zero-mean Gaussians $D_1,D_2$ can have very similar $p$-th moments but wildly different minimum variances (e.g. take $D_1$ to be a single Gaussian $\N(0,k^{100})$, and take $D_2$ to be a uniform mixture of $\N(0,k^{100})$ and $\N(0,1)$).

The key insight is that while higher-degree moments of a mixture $D$ of zero-mean Gaussians tell us nothing about the minimum variance, those of its \emph{Fourier transform} do. The reason is because of the following observation: 
\begin{observation}
\label{obs:fourier}
\rm{
If the components of $D$ have variances $s^2_1,...,s^2_k$ and mixing weights $p_1,...,p_k$, the \emph{Fourier transform} of the density of $D$ is a new (unnormalized) mixture of Gaussians with variances $\Theta(s^{-2}_1),$ $\cdots,$ $\Theta(s^{-2}_k)$ and mixing weights proportional to $p_1/s_1,...,p_k/s_1$ (see Fact~\ref{fact:fouriergaussian}).}
\end{observation}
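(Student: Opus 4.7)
The plan is to verify this observation by a direct computation, since it is really just an application of linearity of the Fourier transform together with the well-known fact that a centered Gaussian has a Gaussian Fourier transform (namely Fact~\ref{fact:fouriergaussian}).

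First I would write the density of $D$ explicitly as
\begin{equation*}
f(x) = \sum_{i=1}^k p_i \cdot \frac{1}{\sqrt{2\pi s_i^2}} \, e^{-x^2/(2 s_i^2)},
\end{equation*}
and then apply linearity of the Fourier transform to obtain $\hat{f}(\omega) = \sum_{i=1}^k p_i \cdot \widehat{g_{s_i}}(\omega)$, where $g_{s_i}$ denotes the density of $\N(0,s_i^2)$. Invoking Fact~\ref{fact:fouriergaussian}, each term becomes $\widehat{g_{s_i}}(\omega) = e^{-2\pi^2 s_i^2 \omega^2}$, so
\begin{equation*}
\hat{f}(\omega) = \sum_{i=1}^k p_i \, e^{-2\pi^2 s_i^2 \omega^2}.
\end{equation*}

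Next I would rewrite each summand as a scalar multiple of a normalized Gaussian density in the frequency variable $\omega$. Setting $\tau_i^2 \triangleq 1/(4\pi^2 s_i^2)$, so that the exponent $-2\pi^2 s_i^2 \omega^2$ equals $-\omega^2/(2\tau_i^2)$, we get
\begin{equation*}
e^{-2\pi^2 s_i^2 \omega^2} = \sqrt{2\pi\tau_i^2} \cdot \frac{1}{\sqrt{2\pi\tau_i^2}} \, e^{-\omega^2/(2\tau_i^2)} = \frac{1}{s_i\sqrt{2\pi}\cdot \sqrt{2\pi}} \cdot \text{(density of } \N(0,\tau_i^2)\text{)},
\end{equation*}
up to a constant I would track precisely. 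Reading off the scalings, $\hat{f}$ is an unnormalized mixture of $\N(0,\tau_i^2)$ components with coefficients proportional to $p_i / s_i$. Since $\tau_i^2 = \Theta(s_i^{-2})$, this matches the claimed form.

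There is no real obstacle here: the only subtlety to keep track of is the exact normalization constant produced when rewriting $e^{-2\pi^2 s_i^2 \omega^2}$ as a scaled Gaussian density, which controls the constant of proportionality in the weights; since the observation only claims proportionality to $p_i/s_i$ and a $\Theta$-bound on the new variances, this bookkeeping is the whole content of the argument. I would conclude by remarking that this is why higher Fourier moments of $D$ probe the \emph{small} variances $s_i$: the component with the smallest $s_i$ corresponds to the \emph{largest} variance $\tau_i^2$ in $\hat f$, which is precisely what the moment-estimation argument of~\eqref{eq:expectedzp} applied to $\hat f$ would detect.
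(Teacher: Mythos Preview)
Your proposal is correct and is essentially identical to what the paper does: the observation is not given a separate proof in the paper, but in the proof of Lemma~\ref{lem:estimate_min_variance} the identity $\widehat{\calF}(\omega) = \sum_{i=1}^k p_i \frac{1}{\sqrt{2\pi}\sigma_i}\N(0,\frac{1}{4\pi^2\sigma_i^2};\omega)$ is derived by exactly the same linearity-plus-Fact~\ref{fact:fouriergaussian} computation you outline.
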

In particular, if we have a sufficiently good estimate of the maximum variance of any component in the Fourier transform of $D$, then by inverting this estimate, we can estimate the minimum variance of any component in $D$.
So if we had access to the Fourier transform of $D$, we could then use the moments of this distribution to estimate the maximum variance of any component of the Fourier transform, which would allow us to learn the minimum variance of $D$.

What remains is to estimate moments of the Fourier transform of $D$ using solely samples from $D$. Here we use existing primitives for univariate density estimation~\cite{chan2014efficient,acharya2017sample} to obtain an explicit approximation $\tilde{D}$ to the density of $D$, after which we can explicitly compute moments of the Fourier transform of $\tilde{D}$. We defer the technical details of how to argue that its moments are close to those of the Fourier transform of $D$ to Section~\ref{subsec:minvar}, as they are rather involved.
In short, this allows us to achieve the same sorts of guarantees for estimating min-variance as for estimating max-variance: for any $\kappa > 0$, we can learn the minimum variance of the mixture to multiplicative error $1 + O(\kappa)$ with $2^{\widetilde{O} (\kappa^{-1} \log k)}$ samples and time.

However, there is an important subtlety here.
Namely, the quality of the approximation to the minimum variance we can obtain strongly depends on the degree of the Fourier moment we use.
The degree in turn dictates the sample complexity of the algorithm: the higher the Fourier degree we need, the better the univariate density estimate must be, and therefore, the more samples we need in order to adequately perform the density estimate.
Therefore, if the difference between $\sigma_t$ and $\sigma_t'$ is too small, we cannot reliably check if we've made progress without taking too many samples.
Unfortunately, the difference between these two quantities is typically quite small.
Since $a_{t +1}'$ is a random perturbation of $a_t$, we have that with probability $1 / \poly (k)$, it holds that 
\begin{equation*}
\sigma'_t \le \left( 1 - \Omega\Big( \frac{1}{k} \Big) \right) \sigma_t \; ,
\end{equation*}
and moreover, this is tight.
In particular, this says that we would need to take $\kappa = \Omega (1/k)$ in the discussion above, which would result in a $2^{\Theta (k)}$ runtime, which we wish to avoid.

However, we show that with subexponentially large probability, the difference is sufficiently large so that we can detect this difference using subexponentially many samples.
In particular, observe that for any constant $0 < c < 1/2$, if $z$ is a random unit vector in the span of $\{w_i - a_{t}\}$, then with probability $\exp(-k^{1 - 2c})$ we have that 
\begin{equation}
\left\langle z,\frac{w_i - a_{t}}{\norm{w_i - a_{t}}_2}\right\rangle \ge \Omega(k^{-c}),\label{eq:elem_good_event}
\end{equation} 
in which case if we define $a'_{(t+1)}$ as previously, we get that with probability $\exp (-k^{1 - 2c})$,
\begin{equation}
\sigma'_t \le \left( 1 - \Omega\Big( \frac{1}{k^{2c}} \Big) \right) \sigma_t \; .\label{eq:goodprogress}
\end{equation} So by trying super-polynomially many random directions $z$ at every step, we ensure with high probability that one of those directions will make $1 - \Omega(k^{-2c})$ progress, for some $c < 1$.
By combining this with our certification procedure as described above, we show that we can make non-trivial progress in the algorithm after only sub-exponentially many samples.

By iteratively applying this update, we are able to obtain an $a_T$ so that
\[
\| a_T - w_i \|_2 \leq \frac{\Delta}{k^{100}} \; ,
\]
in subexponential time.
We call this subroutine \emph{Fourier moment descent}, and it allows us to learn a single regressor to good accuracy.
For technical reasons, the complexity of this approach grows as we get closer to $w_i$, however, it allows us to obtain a very good ``warm start''.
In the noiseless case, this can be combined with the boosting procedure from~\cite{li2018learning} to obtain arbitrarily high accuracy.

This technology now allows us to learn a single regressor to very high accuracy.
In the noiseless setting, this allows us to ``peel off'' the samples from this component almost completely, and we can now repeat this process on the sub-mixture with this component removed to learn another component, and iterate to eventually learn all of the regressors.

That said, as we shall see, this is much trickier in the presence of noise.

\subsection{Learning With Regression Noise}
\label{subsec:tech_noise}
What changes when we assume that there is a significant amount of noise $\noise$?
From the perspective of our Fourier moment descent algorithm, it turns out not much does, at least to a certain extent: in fact, essentially the same argument goes through and allows us to learn a single component to error at most
\[
\| a_T - w_i \|_2 \leq \frac{\Delta}{k^{100}} + O(\noise) \; ,
\]
where $\noise$ is the standard deviation of the white noise.

However, learning \emph{all} components becomes substantially more difficult.
In particular, the peeling process no longer works: the fact that there is regression noise does not allow us to perfectly remove the influence of a component that we have learned from the rest of the mixture.
As a result, it is no longer clear how to go from an algorithm that can learn a single component to one that can learn all components.

To avoid this, we circumvent the need for peeling altogether.
By a delicate analysis, we will show that with decent probability, we can control the dynamics of the Fourier moment descent algorithm, so that it will converge to the regressor that it was initially closest to.
This is the key technical ingredient behind getting Fourier moment descent to handle regression noise.

We sketch its proof below.
Let $a_t$ be the current iterate of Fourier moment descent, and suppose that $i^* = \argmin_{i \in [k]} \| w_i - a_t \|_2$.
Then, one can show that if $a'_{t + 1} \in \R^d$ is a random perturbation of $a_t$, then with probability at least $\exp (-\Omega (1 / \Delta^2)) / \poly (k)$, we have that $i^*$ is still the closest component to $a_{t + 1}'$.
Moreover, this bound is tight up to polynomial factors, if all we assume about the $w_i$ is that they are $\Delta$-separated.
One could hope that by using this sort of argument, we could argue that with at least subexponentially large probability, we always stay closest to $w_{i^*}$.
However, this argument runs into a couple of difficulties, which we address one at a time.

The first difficulty is that while this happens with decent probability for any individual $a_{t + 1}' \in \R^d$, our algorithm will typically need to try sub-exponentially many perturbations before we find one that makes progress.
Thus the naive union bound over all sub-exponentially many $a_{t + 1}' \in \R^d$ would be far too loose to say anything here.
To get around this, we instead demonstrate that conditioning on the event that $\sigma_t' < \sigma_t$, we remain closest to $i^*$ with non-trivial probability.

However, even this is insufficient, as we only have a multiplicative estimate of $\sigma_t$ and $\sigma_t'$.
To get around this, we make a stronger assumption: we assume that not only is our iterate the closest to $w_{i^*} \in \R^d$, but we also assume that its distance to the other $w_i \in \R^d$ for $i \neq i^*$ is at least some multiplicative factor larger than its distance to $w_{i^*} \in \R^d$.
Specifically, we assume that
\begin{equation}
\label{eq:noisy-gap}
\| w_i - a_t \|_2 \geq \left( 1 + c \frac{\Delta^2}{\sqrt{k}} \right) \| w_{i^*} - a_t \|_2 \; ,
\end{equation}
for all $i \neq i^*$, and some constant $c > 0$.
By making this stronger assumption, we are able to demonstrate that with probability at least $1/\poly(k)$, this gap is maintained for the $a_{t + 1}'$ which Fourier moment descent chooses as the next iterate.

Our overall algorithm then will be to demonstrate an initialization scheme for $a_0$ so that for any $i^*\in[k]$, \eqref{eq:noisy-gap} holds for $a_0$ and that $i^*$ with probability at least $\exp( -\widetilde{O}( \sqrt{k}/\Delta^2 ) )$.
If we can do so, then if we run Fourier moment descent starting from these random initializations enough times, then eventually with high probability we will output multiple estimates for each $w_i$ for $i = 1, \ldots, k$, and then we can simply run a basic clustering algorithm to recover all of the regressors. 
Furthermore, because each run of moment descent only needs to go on for $\widetilde{O}(\sqrt{k}/\Delta^2)$ iterations, and at each iteration we stay closest to the same component that the previous iterate was closest to with $1/\poly(k)$ probability, it suffices to try $\exp(\widetilde{O}(\sqrt{k}/\Delta^2))$ random initializations for all this to work.

Lastly, it turns out this initialization scheme is also quite delicate.
For instance, if we simply chose random initializations over the unit ball, then these will, with overwhelming probability, favor being close to $w_i$ with small norm. If we have regressors with different norms, we will thus likely never be close to a $w_i \in \R^d$ with large norm in our initializations, and as a result, cannot argue that our Fourier moment descent algorithm will ever recover this regressor.
To get around this, we demonstrate a gridding scheme, where we initialize randomly over spheres of up to radius $r$, for $r$ on a fine grid between $0$ and $O(k^{1/4})$.

We emphasize that this is quite counterintuitive: when we start at radius $O(k^{1/4})$, we are exceedingly far away from every $w_i \in \R^d$, as they are assumed to have norm at most $1$.
However, our analysis of Fourier moment descent works fine in this setting, and by having such large radius, we are able to ensure that for each $i^* \in[k]$, \eqref{eq:noisy-gap} holds for $a_0$ and $i^*$ with at least the desired probability.
The details of this are quite technically involved, and we defer them to Section~\ref{sec:allcomps}.

\paragraph{More Noise-Robust Boosting}

As mentioned previously, in the noiseless setting, the boosting algorithm of~\cite{li2018learning} allows us to bootstrap our warm start, obtained via Fourier moment descent, to arbitrarily high accuracy.
It turns out that in the noisy setting, their boosting algorithm also allows one to go slightly below $O(\noise)$.
Interestingly, motivated again by the connection to Fourier analysis, we demonstrate an improved boosting algorithm that is able to tolerate substantially more noise as well as a much weaker warm start.
% \footnote{\sitan{We emphasize that this result is orthogonal to the other results in our work, but we include it because issues of local convergence have been another focus of intense study in the literature on MLRs (see the discussion in Section~\ref{subsec:relatedwork} and the references therein), and our improvements fall out from the same connection to Fourier analysis driving the rest of our results.}}

The boosting algorithm of~\cite{li2018learning} is based on stochastic gradient descent on a regularized form of gravitational potential, which was notably used in \cite{holden2018gravitational}.
While this objective is \emph{concave}, they demonstrate that in a small neighborhood around the true regressors, SGD updates based on this objective contract in expectation, and hence they make progress.

\begin{wrapfigure}{r}{0.4\textwidth}
	\includegraphics[width=0.4\textwidth]{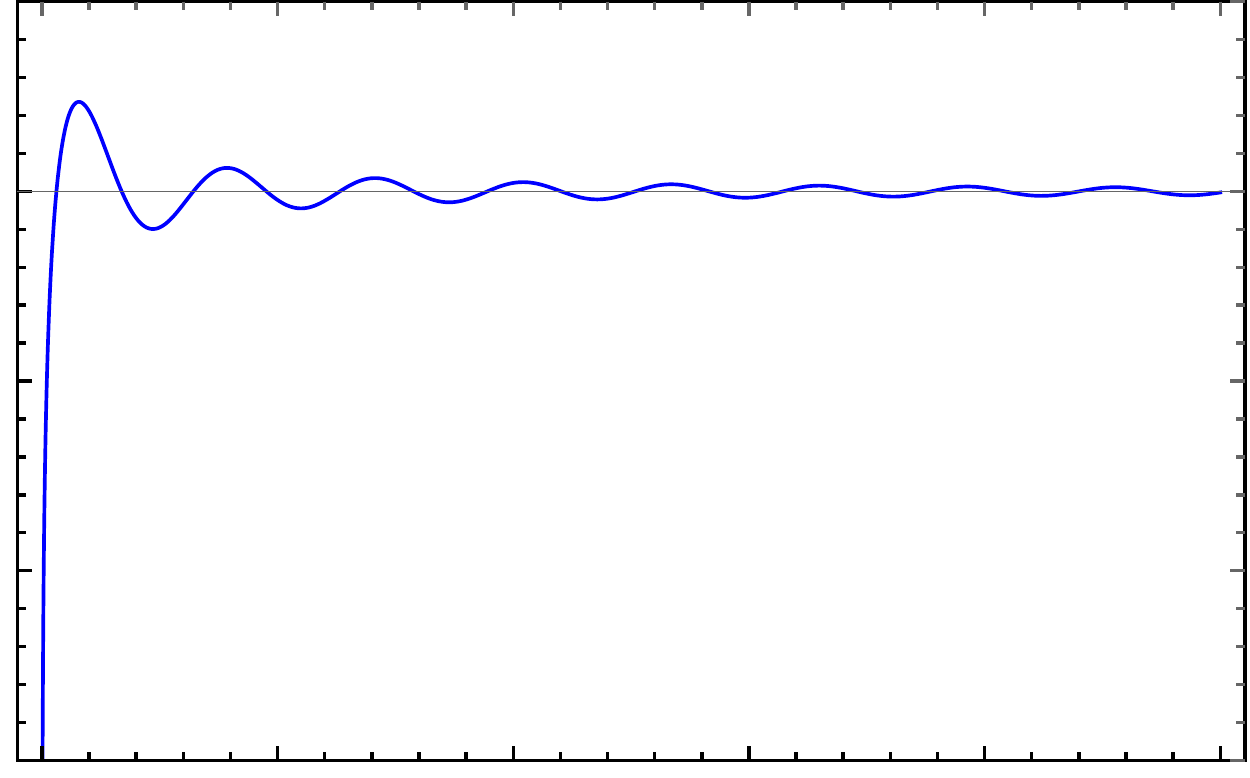}
	\caption{Cosine integral function $\text{Ci}(x) = -\int^{\infty}_{x}\frac{\cos(t)}{t}\, \d t$}
	\label{fig:cosint}
\end{wrapfigure}

In contrast, we propose an update based on a regularized form of the \emph{cosine integral objective} (see Figure~\ref{fig:cosint} for a plot of the cosine integral --- the objective function we use is a regularized version of the objective $g(v)\triangleq \E_{(x,y)\sim\calD}\text{Ci}(|\langle v,x\rangle - y|)$).
This objective looks much worse behaved: it is neither convex nor concave---indeed, it is not even monotone!
However, the key technical fact which makes this objective more noise tolerant is precisely Observation~\ref{obs:fourier}.
This allows us to argue that the contribution to the gradient of the objective from the ``good'' component, which we are close to, dominates the contribution of the gradient from the other ``bad'' components.
At a high level, ignoring many technical issues for the time being, this is because if $v$ is the current iterate, a main part of the contribution to the gradient from component $i$ is $\E_{g \sim \N(0, \beta_i^2)} [\cos (g)]$, where $\beta_i$ is a monotone function of $\norm{w_i - v}_2$.
This is exactly the real part of the Fourier transform of $\N(0, \beta_i^2)$, and the fact that this decreases as $\exp (O(\beta_i^{-2}))$ follows precisely from Observation~\ref{obs:fourier}.
This allows us to have much finer control over the contribution from the ``bad'' components, which allows us to tolerate significantly more noise and a substantially weaker warm start.
For the quantitative details, see Section~\ref{sec:boosting}.
% \sitan{should mention Caramanis et al.'s result at some point}

\subsection{Learning Mixtures of Hyperplanes} 

As we will see, mixtures of hyperplanes share enough qualitative features with MLRs that an appropriate instantiation of our techniques also suffices for this problem. 
% \footnote{On one hand, by a trivial reduction, mixtures of hyperplanes are a special case of \emph{non-spherical} MLRs: given a sample $x\sim\calD$ from a mixture of hyperplanes $\calD$ with mixing weights $p$ and directions $v_1,...,v_k$, the pair $(x,0)\in\R^d\times\R$ is an MLR whose regressors are $v_1,...,v_k$ and whose covariate distributions are $\N(0,\Id - v_iv_i^{\top})$ for $i\in[k]$. This reduction is not so helpful though, as the one result on learning non-spherical MLRs \cite{li2018learning} requires that the covariances of the covariate distributions be well-conditioned.}

It is not hard to show that vanilla moment descent can be modified as follows to get a $\widetilde{O}(d\cdot \exp(\widetilde{O}(k)))$-time algorithm for mixtures of hyperplanes. First it is not hard to see that as with spherical MLRs, here we can still effectively reduce the dimension of the problem from $d$ to $k$. At time $t$ we still maintain an estimate $a_t$ for one of the components, and in lieu of the usual progress measure $\min_{i\in[k]}\norm{w_i - a_t}_2$ to which we no longer have access, we can use the modified progress measure $\sigma_t \triangleq\min_{i\in[k]}\norm{\Pi_i a_t}_2$. (See Definition~\ref{def:hyperplanemixture} for definition of $\Pi_i \in \R^{d \times d}$) We can estimate $\sigma_t$ in $\exp(\widetilde{O}(k))$ time by simply projecting in the direction of $a_t$ itself to get a mixture of univariate Gaussians with variances $\{\norm{\Pi_i a_t}^2_2\}_{i\in[k]}$ and learning that mixture via \cite{mv}. This leads to the following straightforward modification of moment descent: repeatedly update $a_t \in \R^d$ by sampling many random steps $\{\eta_t\cdot z_j\}$ for step size $\eta_t$ and $z_j \in \S^{d-1}$ and taking one of them to get to $a_{t+1} \in \R^d$ if it contracts the progress measure.

One immediate issue with this approach, even for achieving $\widetilde{O}(d )\cdot \exp(\widetilde{O}(k))$ runtime, is that if we don't control $\norm{a_t}_2$, then for all we know $\sigma^2_t$ contracts simply because our random steps are taking us closer and closer to 0. The natural workaround is to insist $a_t\in\S^{d-1}$ for all $t$ by projecting onto $\S^{d-1}$ after every step. That is, in every iteration of moment descent, given a random collection of candidate steps $\{\eta\cdot z_j\}_j$, choose one for which, if we define 
\begin{align*}
a_{t+1} = \frac{a_t + \eta\cdot z_j}{\norm{a_t + \eta\cdot z_j}_2},
\end{align*}
then the progress measure contracts. One can show this already suffices to get a $\widetilde{O}( d )\cdot \exp(\widetilde{O}(k))$-time algorithm for learning mixtures of hyperplanes.

The extra projection onto $\S^{d-1}$ at every step introduces a variety of technical challenges for trying to implement the strategy of Section~\ref{subsec:fmd_tech} to achieve sub-exponential runtime. Specifically, in order to carry out the balancing of parameters from Section~\ref{subsec:fmd_tech}, one needs to be careful in choosing the right events, analogous to the event in \eqref{eq:elem_good_event}, such that 1) conditioned on those events $\sigma_t$ contracts by at least a factor of $1 - \Omega(k^{-a})$ for some $a>0$, and 2) these events all occur with probability $\exp(-k^{1 - a'})$ for some $a' > 0$.

If for instance $a_t$ is positively correlated with some $v_{i^*} \in \R^d$, it turns out the right events to choose are that the random step is both $k^{-1/5}$-positively correlated with $\frac{\Pi_{i^*}a_t}{\norm{\Pi_{i^*}a_t}_2}$ and at least $k^{-1/5}$-negatively correlated with $v_{i^*} \in \R^d$, and because these directions are orthogonal, if $z_j$ is a random vector in the span of $v_1,...,v_k$ then we could lower bound the probability of both events occurring by the product of the probabilities they individually occur, which is $\exp(-k^{3/5})$, and get 2) for $a' = 2/5$. The analysis for showing 1) (for $a = 3/5$) is involved, so we defer it to Section~\ref{subsec:hyperplanes_moment}. These together yield a $\widetilde{O}(d)\cdot \exp(k^{3/5})$-time algorithm to learning one component of a mixture of hyperplanes.

% Additionally, it turns out that if $a_t$ is already $k^{-c}$-correlated with $v_{i^*}$, one can take $c = 1/5$ and show that $\sigma_t$ contracts by a factor of $1 - \Omega(k^{-3/5})$ (the analysis to show all of this is involved, so we defer it to Section~\ref{subsec:hyperplanes_moment}). And one can ensure that the initial estimate $a_0$ is already $k^{-c}$ correlated with some $v_{i^*}$ by simply trying $\exp(k^{2c - 1}) = \exp(k^{3/5})$ many initializations, running Fourier moment descent on each of them, and checking whether the output of any run is close to a component of $\calD$. Combined with our technology for estimating minimum variance, this yields a $\widetilde{O}(d\cdot \exp(k^{3/5}))$-time algorithm for learning a single component of a mixture of hyperplanes.

To learn all components, we would like to implement some kind of boosting procedure. Our approach here is to regard $\calD$ in a certain way as a non-spherical MLR with well-conditioned covariances, at which point we can invoke, e.g., the boosting algorithm of \cite{li2018learning}. We defer these details to Section~\ref{subsec:hyperplane_boost}.
% Specifically, given any direction $w\in\R^d$, we can regard $\calD$ as a non-spherical MLR \emph{over $\R^{d-1}$} as follows. For instance, if $w = (0,...,0,1)$ and $x\sim\calD$ then we can regard $x$ as the tuple $(x',y')\triangleq (x_{1:d-1},x_d)$ and note that, if $x$ is drawn from hyperplane $i$ so that $\langle x,v_i\rangle = 0$, then $y' = \langle \tilde{w},x'\rangle$ for $\tilde{w}\triangleq \frac{1}{v_d}\cdot(v_i)_{1:d-1}$. Furthermore, in this case $x'$ is distributed according to $\N(0,\Pi'_i)$, where $\Pi'_i$ is the top $(d-1)\times(d-1)$ submatrix of $\Id - v_i v_i^{\top}$, which has minimum eigenvalue $\langle w,v_i\rangle^2$. By picking the direction $w$ randomly, we can ensure with high probability that each of these covariances has $\poly(k)$ condition number, which is enough to invoke, e.g., the boosting algorithm of \cite{li2018learning}.
Once we are able to refine an estimate for a direction of $\calD$ to arbitrary precision, we can carry out the ``peeling'' procedure outlined at the end of Section~\ref{subsec:fmd_tech} to learn all components.

%%% Section 3, Overview of Techniques

%!TEX root = ./main.tex

\section{Roadmap}

Here we give a brief overview of the organization of the rest of the paper.
In Section~\ref{subsec:facts} we give some additional technical preliminaries we will need in the paper.
In Section~\ref{sec:fourier} we present our Fourier moment descent algorithm for learning a single component.
In Section~\ref{sec:allcomps_nonoise} we show how to use this to learn all the components, when there is no regression noise.
In Section~\ref{sec:allcomps} we demonstrate a modification of our algorithm to learn all the components in the presence of regression noise.
In Section~\ref{sec:hyperplanes} we demonstrate our subexponential time algorithm for learning a mixture of hyperplanes.
Finally, in Section~\ref{sec:boosting} we demonstrate our improved boosting algorithm based on the cosine integral objective.
Deferred proofs appear in the Appendix, as well as our moment-matching example.

 %%% Section 4, Roadmap

%!TEX root = main.tex

\section{Additional Technical Preliminaries}
\label{subsec:facts}
In this section we give a number of miscellaneous facts that we will require throughout the paper.
For clarity of exposition we defer the proofs of these facts to Appendix~\ref{sec:defer}.
The first is a monotonicity property of moments of Gaussians, restricted to the tails of the Gaussian:
% gaussian integral of x^p at tails is increasing in variance of the gaussian
\begin{fact}\label{fact:monotone}
	Let $\sigma^*>0$ and $\tau > \sigma^*$, and let $p \in \mathbb{N}$ be even. Then \begin{equation}
	\int_{[-\tau,\tau]^c} \N(0, \sigma^2; x) \cdot x^p \, \d x< \int_{[-\tau,\tau]^c} \N(0,(\sigma^*)^2; x) \cdot x^p \, \d x \label{eq:monotone}
	\end{equation} 
	for all $0 < \sigma<\sigma^*$.
\end{fact}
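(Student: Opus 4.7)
The natural approach is to view the tail integral as a function of the variance parameter and show strict monotonicity on the interval $(0,\tau]$. Concretely, define
\[
F(s) \;\triangleq\; \int_{[-\tau,\tau]^c} \N(0,s^2;x)\, x^p \, \d x,
\]
and show $F$ is strictly increasing on $(0,\tau]$. Since $\sigma < \sigma^* < \tau$, this immediately gives $F(\sigma) < F(\sigma^*)$, which is the desired inequality.

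To show monotonicity I would differentiate under the integral sign. Using
\[
\frac{\partial}{\partial s}\N(0,s^2;x) \;=\; \N(0,s^2;x)\cdot\frac{x^2-s^2}{s^3},
\]
I get
\[
F'(s) \;=\; \frac{1}{s^3}\int_{[-\tau,\tau]^c} \N(0,s^2;x)\, x^p\,(x^2-s^2)\, \d x.
\]
On the integration region $|x|>\tau$, and in the range $s\in(0,\tau]$ we have $x^2 > \tau^2 \geq s^2$, so $x^2-s^2>0$ pointwise. Since $p$ is even, $x^p \geq 0$ (and strictly positive a.e.). Hence the integrand is nonnegative and not identically zero, giving $F'(s)>0$ on $(0,\tau]$.

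The only mild technicality is justifying differentiation under the integral, which is standard: the derivative of the integrand is dominated locally (in $s$) by a Gaussian-weighted polynomial, which is integrable. Once that is set, strict monotonicity and hence the claimed inequality follow immediately by the fundamental theorem of calculus applied to $F$ on $[\sigma,\sigma^*]\subset(0,\tau]$. There is no real obstacle here; the key observation is simply that the condition $\tau>\sigma^*$ ensures $x^2 - s^2 > 0$ throughout the tail region for every relevant $s$, which makes the sign of $F'$ unambiguous.
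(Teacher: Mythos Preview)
Your proposal is correct and follows essentially the same approach as the paper: define $F(\sigma)$ as the tail integral, differentiate in $\sigma$, and show $F'(\sigma)>0$ on $(0,\sigma^*]$. Your argument for the positivity of $F'$ is in fact more direct than the paper's---you simply observe that on $|x|>\tau$ with $s\le\tau$ the factor $x^2-s^2$ is pointwise positive, whereas the paper rewrites the tail integral as the full-line moment minus the integral over $[-\tau,\tau]$ and then shows the latter is increasing in $\tau$ via the fundamental theorem of calculus; both routes amount to the same pointwise sign observation.
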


The following fact about Fourier transforms of Gaussian pdfs is standard.
\begin{fact}\label{fact:fouriergaussian}
	\begin{equation*}
	\widehat{\N(0,\sigma^2)}[\omega] = e^{-2\pi^2 \omega^2\sigma^2} = \frac{1}{\sqrt{2\pi}\sigma}\N\left(0,\frac{1}{4\pi^2\sigma^2}; \omega\right)
	\end{equation*}
\end{fact}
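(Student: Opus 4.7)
The plan is to prove this standard Fourier pair by direct computation, completing the square in the exponent. Starting from the definition of the Fourier transform given in Section~\ref{subsec:notations}, I would write
\begin{equation*}
\widehat{\N(0,\sigma^2)}[\omega] \;=\; \frac{1}{\sqrt{2\pi}\,\sigma} \int_{-\infty}^{\infty} \exp\!\left(-\frac{x^2}{2\sigma^2} - 2\pi\i\omega x\right)\,\d x,
\end{equation*}
and then complete the square in $x$ inside the exponent to get
\begin{equation*}
-\frac{x^2}{2\sigma^2} - 2\pi\i\omega x \;=\; -\frac{1}{2\sigma^2}\bigl(x + 2\pi\i\omega\sigma^2\bigr)^2 \;-\; 2\pi^2\omega^2\sigma^2.
\end{equation*}
This pulls the $\omega$-dependent factor $e^{-2\pi^2\omega^2\sigma^2}$ out of the integral.

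The second step is to evaluate the remaining integral $\int_{-\infty}^{\infty} e^{-(x + 2\pi\i\omega\sigma^2)^2/(2\sigma^2)}\,\d x$. I would handle this by a routine contour-shift argument: since $z \mapsto e^{-z^2/(2\sigma^2)}$ is entire and decays superpolynomially along any horizontal line, Cauchy's theorem applied to a tall rectangular contour shows that this integral equals $\int_{-\infty}^{\infty} e^{-x^2/(2\sigma^2)}\,\d x = \sqrt{2\pi}\,\sigma$. Combining with the prefactor $1/(\sqrt{2\pi}\,\sigma)$ leaves $\widehat{\N(0,\sigma^2)}[\omega] = e^{-2\pi^2\omega^2\sigma^2}$, which is the first claimed equality.

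For the second equality, I would simply substitute $\tau \triangleq 1/(2\pi\sigma)$ into the Gaussian density formula: then $\N(0,\tau^2;\omega) = \frac{1}{\sqrt{2\pi}\,\tau}\,e^{-\omega^2/(2\tau^2)} = \sqrt{2\pi}\,\sigma \cdot e^{-2\pi^2\sigma^2\omega^2}$, so dividing by $\sqrt{2\pi}\,\sigma$ reproduces the exponential form. There is essentially no obstacle here; the only mildly nontrivial step is justifying the contour shift past the imaginary offset $2\pi\i\omega\sigma^2$, but this is a textbook application of Cauchy's theorem and the rapid decay of the Gaussian along horizontal lines.
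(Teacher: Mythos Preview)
Your proof is correct and is exactly the standard completing-the-square argument. The paper itself does not give a proof of this fact at all; it simply states it as ``standard'' and moves on, so there is nothing to compare against beyond noting that you have supplied the textbook derivation the paper omits.
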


\noindent
We will require the following standard estimate for Gaussian tails, see e.g. Proposition 2.1.2 in \cite{vershynin2018high}.
\begin{fact}\label{fact:gaussian_tails}
Let $g\sim \N(0,\Id_d)$. Let $w\in\S^{d-1}$ be a fixed vector. Then for all $t > 0$,
	\begin{equation}\label{eq:gaussian_tails}
		\left(\frac{1}{t} - \frac{1}{t^3}\right)\cdot\frac{1}{\sqrt{2\pi}}e^{-t^2/2} \le \Pr_g[\langle g,w\rangle \ge t] \le \frac{1}{t}\cdot\frac{1}{\sqrt{2\pi}}e^{-t^2/2}.
	\end{equation}
\end{fact}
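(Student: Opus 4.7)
The plan is to reduce the problem to a one-dimensional tail estimate and then derive both bounds by integration-by-parts tricks built around the identity $\phi'(x) = -x\phi(x)$, where $\phi(x) = \frac{1}{\sqrt{2\pi}}e^{-x^2/2}$ is the standard normal density.

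First I would use rotational invariance. Since $g \sim \N(0,\Id_d)$ has a distribution invariant under orthogonal transformations, for any fixed unit vector $w \in \S^{d-1}$ the projection $\langle g,w\rangle$ is distributed as $\N(0,1)$. Thus it suffices to prove, for $Z \sim \N(0,1)$ and $t>0$, the two-sided estimate $\bigl(\tfrac{1}{t}-\tfrac{1}{t^3}\bigr)\phi(t) \le \Pr[Z \ge t] \le \tfrac{1}{t}\phi(t)$.

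For the upper bound, I would observe that on $[t,\infty)$ we have $1 \le x/t$, so $\phi(x) \le (x/t)\phi(x) = -\phi'(x)/t$. Integrating from $t$ to $\infty$ and using $\phi(\infty)=0$ gives $\Pr[Z \ge t] = \int_t^\infty \phi(x)\,\d x \le \phi(t)/t$. For the lower bound, the key computation is to verify via the product rule and $\phi'(x) = -x\phi(x)$ that
\begin{equation*}
\frac{\d}{\d x}\!\left[\left(-\frac{1}{x}+\frac{1}{x^3}\right)\phi(x)\right] = \left(1 - \frac{3}{x^4}\right)\phi(x).
\end{equation*}
Integrating from $t$ to $\infty$ (the boundary term at infinity vanishes since $\phi(x)$ decays super-polynomially) yields the exact identity $\int_t^\infty (1-3/x^4)\phi(x)\,\d x = (1/t - 1/t^3)\phi(t)$. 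Since $1 - 3/x^4 \le 1$ pointwise, we then conclude $(1/t - 1/t^3)\phi(t) \le \int_t^\infty \phi(x)\,\d x = \Pr[Z \ge t]$.

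There is really no substantive obstacle here; this is the classical Mills-ratio estimate, and the only thing to be careful about is noting that for $t \in (0,1]$ the lower bound is nonpositive and therefore trivially valid, so one does not need a case analysis on $t$. The identity derivation in the display above is the only nontrivial algebraic step and can be verified in one line. Thus the entire proof is a short reduction followed by two antiderivative computations.
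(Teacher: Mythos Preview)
Your proof is correct and complete; this is exactly the standard Mills-ratio argument. The paper does not actually supply its own proof of this fact but simply cites Proposition~2.1.2 of Vershynin's \emph{High-Dimensional Probability}, so there is nothing further to compare.
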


\noindent
We also have the following standard concentration inequality.
\begin{fact}
Let $g\sim\N(0,\Id_d)$. There is a universal constant $c_{\mathrm{shell}} > 0$ such that for all $t > 0$,
	\begin{equation}
		\Pr_g \left[ \left| \|g\|_2 - \sqrt{d} \right| \ge t \right] \le 2e^{-c_{\mathrm{shell}}t^2} \label{eq:thinshell}
	\end{equation}
\label{fact:thinshell}
\end{fact}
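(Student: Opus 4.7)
The plan is to derive this as a standard consequence of Gaussian Lipschitz concentration, combined with the elementary fact that $\mathbb{E}\|g\|_2$ is within an additive constant of $\sqrt{d}$.

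First, I would observe that the function $f(g)\triangleq \|g\|_2$ is $1$-Lipschitz with respect to the Euclidean norm, since $|\|g\|_2-\|g'\|_2|\le \|g-g'\|_2$. Invoking the classical Borell--Tsirelson--Ibragimov--Sudakov (Borell-TIS) inequality for Lipschitz functionals of a standard Gaussian vector yields
\begin{equation*}
\Pr_g\bigl[\,\bigl|\|g\|_2 - \mathbb{E}\|g\|_2\bigr|\ge s\,\bigr]\le 2e^{-s^2/2} \qquad\text{for all } s>0.
\end{equation*}

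Next I would control the gap between $\mathbb{E}\|g\|_2$ and $\sqrt{d}$. By Jensen, $\mathbb{E}\|g\|_2 \le \sqrt{\mathbb{E}\|g\|_2^2}=\sqrt d$. In the other direction, the same Lipschitz concentration (or the Gaussian Poincar\'e inequality applied to $f$) gives $\mathrm{Var}(\|g\|_2)\le 1$, so $(\mathbb{E}\|g\|_2)^2 \ge \mathbb{E}\|g\|_2^2-1=d-1$, hence $\mathbb{E}\|g\|_2\ge\sqrt{d-1}\ge \sqrt d-1$. In particular, $|\mathbb{E}\|g\|_2 - \sqrt{d}|\le 1$.

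Combining the two via the triangle inequality, whenever $|\|g\|_2-\sqrt d|\ge t$ we have $|\|g\|_2-\mathbb{E}\|g\|_2|\ge t-1$. For $t\ge 2$, plugging $s=t-1\ge t/2$ into the Borell-TIS bound gives $\Pr_g[|\|g\|_2-\sqrt d|\ge t]\le 2e^{-t^2/8}$. For $t<2$, the claimed bound $2e^{-c_{\mathrm{shell}}t^2}$ is trivially at least one (and hence vacuous) provided $c_{\mathrm{shell}}$ is chosen small enough, e.g.\ $c_{\mathrm{shell}} = 1/8$. This gives the fact for all $t>0$.

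There is no real obstacle: the only subtlety is absorbing the additive gap between the mean and $\sqrt d$ into the subgaussian exponent, which is handled by shrinking the constant $c_{\mathrm{shell}}$ and noting the small-$t$ regime is vacuous. The argument uses only Borell-TIS and a variance bound, both of which are textbook; the statement of Fact~5.3 should be viewed essentially as a restatement of Gaussian concentration of the norm.
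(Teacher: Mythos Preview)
Your argument is correct. The paper does not actually provide a proof of this fact; it is stated as a standard concentration inequality and used without justification (much like the preceding Fact on Gaussian tails, which is cited to Vershynin). Your derivation via Gaussian Lipschitz concentration (Borell--TIS) plus the Poincar\'e-based bound $|\mathbb{E}\|g\|_2-\sqrt d|\le 1$ is precisely one of the standard textbook routes to this inequality, so there is nothing to compare against in the paper itself.
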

\noindent
Facts~\ref{fact:gaussian_tails} and \ref{fact:thinshell} imply the following pair of inequalities about the correlation between a Gaussian vector and a given unit vector.

% basic univariate Gaussian concentration
\begin{corollary}\label{cor:unitcorr}
	Let $w\in\S^{d-1}$, $g\sim\N(0,\Id_d)$, and $v = g/\norm{g}_2$. Then for any constant $0 < \gamma < 1/2$, the following holds.

	There exist increasing functions $\underline{f}_{\gamma}, \overline{f}_{\gamma}, D: \R_{>0}\to\R_{>0}$ such that for any absolute constants $0 < \underline{\alpha}\le \overline{\alpha}$, we have that for $d \ge D(\overline{\alpha})$,
	\begin{equation*}
		\Pr[\langle v,w\rangle \ge \underline{\alpha}\cdot d^{-\gamma}] \ge e^{-\underline{\beta}d^{1-2\gamma}}
	\end{equation*}
	and 
	\begin{equation*}
		\Pr\left[\langle v,w\rangle\le \overline{\alpha}d^{-\gamma}\right]\ge 1 - e^{-\overline{\beta} d^{1-2\gamma}}
	\end{equation*} for $\underline{\beta} = f(\underline{\alpha})$ and $\overline{\beta} = f(\overline{\alpha})$.
\end{corollary}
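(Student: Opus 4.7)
The plan is to exploit rotational symmetry to reduce everything to a one-dimensional Gaussian tail calculation and then combine Fact~\ref{fact:gaussian_tails} with the $\chi$-concentration bound of Fact~\ref{fact:thinshell}. Because $g\sim\N(0,\Id_d)$ is rotationally invariant, the distribution of $\langle v,w\rangle$ depends on $w$ only through $\norm{w}_2=1$, so I may take $w=e_1$ and write $\langle v,w\rangle = g_1/\norm{g}_2$, where $g_1\sim\N(0,1)$ and $\norm{g}_2$ concentrates around $\sqrt{d}$.

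For the lower bound, fix a small $\epsilon=\epsilon(\underline{\alpha})>0$ and consider the event
\[
E^{-} = \bigl\{g_1 \ge \underline{\alpha}(1+\epsilon)\,d^{1/2-\gamma}\bigr\}\cap\bigl\{\norm{g}_2\le(1+\epsilon)\sqrt{d}\bigr\},
\]
on which $g_1/\norm{g}_2\ge \underline{\alpha}\,d^{-\gamma}$. Fact~\ref{fact:gaussian_tails} yields
\[
\Pr\bigl[g_1\ge \underline{\alpha}(1+\epsilon)\,d^{1/2-\gamma}\bigr]\ge \Omega(d^{\gamma-1/2})\cdot \exp\bigl(-\tfrac12\underline{\alpha}^2(1+\epsilon)^2 d^{1-2\gamma}\bigr),
\]
while Fact~\ref{fact:thinshell} gives $\Pr[\norm{g}_2>(1+\epsilon)\sqrt{d}]\le 2e^{-c_{\mathrm{shell}}\epsilon^2 d}$. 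Since $\gamma<1/2$, the exponent $d^{1-2\gamma}$ grows strictly slower than $d$; so for $d$ larger than some threshold, the concentration error and the polynomial prefactor can be absorbed into the tail, and $\Pr[E^{-}]\ge \exp(-\underline{\beta}\,d^{1-2\gamma})$ for any $\underline{\beta}>\underline{\alpha}^2/2$. Setting $\underline{f}_\gamma(\underline{\alpha})\triangleq \underline{\alpha}^2$ (strictly increasing in $\underline{\alpha}$) then works.

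For the upper bound, note that whenever $\norm{g}_2\ge (1-\epsilon)\sqrt{d}$ and $g_1/\norm{g}_2>\overline{\alpha}\,d^{-\gamma}$, one must have $g_1>\overline{\alpha}(1-\epsilon)\,d^{1/2-\gamma}$. A union bound thus yields
\[
\Pr\bigl[\langle v,w\rangle>\overline{\alpha}\,d^{-\gamma}\bigr]\le \Pr\bigl[g_1>\overline{\alpha}(1-\epsilon)\,d^{1/2-\gamma}\bigr] + \Pr\bigl[\norm{g}_2<(1-\epsilon)\sqrt{d}\bigr].
\]
Fact~\ref{fact:gaussian_tails} bounds the first term by $\exp(-\tfrac12\overline{\alpha}^2(1-\epsilon)^2 d^{1-2\gamma})$ up to an $O(d^{\gamma-1/2})$ factor, and Fact~\ref{fact:thinshell} bounds the second by $2e^{-c_{\mathrm{shell}}\epsilon^2 d}$. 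The same dominance argument as above shows that for $d\ge D(\overline{\alpha})$ the Gaussian tail dominates, giving $\Pr[\langle v,w\rangle>\overline{\alpha}\,d^{-\gamma}]\le \exp(-\overline{\beta}\,d^{1-2\gamma})$ with e.g.\ $\overline{f}_\gamma(\overline{\alpha})=\overline{\alpha}^2/3$, any strictly increasing function of $\overline{\alpha}$ bounded above by $\overline{\alpha}^2/2$.

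The main obstacle is just the bookkeeping needed to choose $\epsilon$ and the threshold $D(\overline{\alpha})$ simultaneously so that (i)~the polynomial prefactor from Fact~\ref{fact:gaussian_tails} and (ii)~the $e^{-c_{\mathrm{shell}}\epsilon^2 d}$ concentration error can be absorbed uniformly into the exponents $\underline{\beta},\overline{\beta}$ of order $d^{1-2\gamma}$. The strict inequality $1-2\gamma<1$ (using $\gamma>0$) is exactly what makes this absorption possible, and it is what forces $D$ to depend on the larger parameter $\overline{\alpha}$.
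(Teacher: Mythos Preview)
Your proposal is correct and follows essentially the same route as the paper: reduce by rotational invariance to the scalar $g_1/\norm{g}_2$, apply the one-dimensional Gaussian tail bound (Fact~\ref{fact:gaussian_tails}) at a slightly perturbed threshold, and use the thin-shell concentration (Fact~\ref{fact:thinshell}) to control the normalization, combining via $\Pr[A\cap B]\ge\Pr[A]-\Pr[B^c]$ and a union bound respectively. The only cosmetic difference is that the paper fixes $\epsilon=0.1$ and writes explicit constants ($\underline{\beta}=1.21\,\underline{\alpha}^2$, $\overline{\beta}=0.81\,\overline{\alpha}^2/2$), whereas you leave $\epsilon$ general and pick $\underline{\alpha}^2$, $\overline{\alpha}^2/3$; both exploit $1-2\gamma<1$ in the same way to absorb the prefactor and the $e^{-\Omega(d)}$ shell error.
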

\noindent
It will also be useful to obtain a similar bounds for the probability that the inner products of a random vector with two orthogonal directions are \emph{simultaneously} in a particular range.
\begin{corollary}
\label{cor:joint}
	Let $w_1,w_2\in\S^{d-1}$ be orthogonal, $g\sim\N(0,\Id_d)$, and $v = g/\norm{g}_2$.

	For any $\alpha_1,\alpha_2 > 0$, we have that for sufficiently large $d$, \begin{equation}
		\Pr\left[\left(\langle v, w_1\rangle \ge \alpha_1\cdot d^{-1/4}\right) \wedge \left(\langle v, w_2\rangle \le \alpha_2\cdot d^{-1/2}\right)\right] \ge \frac{1}{\poly(d)}\Pr[\langle v,w_1\rangle \ge \alpha_1\cdot d^{-1/4}].\label{eq:wantprobrelation}
	\end{equation}
\end{corollary}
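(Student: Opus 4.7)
The plan is to exploit the reflection symmetry of $g\sim\N(0,\Id_d)$ in the $w_2$ direction, which will show that the additional upper bound on $\langle v,w_2\rangle$ is essentially free whenever $\alpha_2>0$.

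First I would decompose $g = X_1 w_1 + X_2 w_2 + g_\perp$, where $X_1 \triangleq \langle g, w_1\rangle$, $X_2 \triangleq \langle g, w_2\rangle$, and $g_\perp$ is the component of $g$ orthogonal to $\Span(w_1,w_2)$. Since $w_1,w_2$ are orthonormal, $X_1,X_2\sim\N(0,1)$ are mutually independent and independent of $g_\perp$; setting $R \triangleq \norm{g_\perp}_2$, the joint law of $(X_1,X_2,R)$ is invariant under the sign flip $X_2 \mapsto -X_2$. Next I would rewrite the relevant events in these coordinates. Using $\norm{g}_2^2 = X_1^2 + X_2^2 + R^2$, the event $E_1 \triangleq \{\langle v,w_1\rangle \ge \alpha_1 d^{-1/4}\}$ is equivalent to $X_1 \ge \alpha_1 d^{-1/4}\sqrt{X_1^2 + X_2^2 + R^2}$, which depends on $X_2$ only through $X_2^2$. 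Hence $E_1$ is invariant under $X_2 \mapsto -X_2$, and combined with the symmetry of the joint law this gives
\[
\Pr[E_1 \wedge \{X_2 \le 0\}] \;=\; \Pr[E_1 \wedge \{X_2 \ge 0\}] \;=\; \tfrac{1}{2}\Pr[E_1].
\]

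Finally, since $\alpha_2 > 0$, whenever $X_2 \le 0$ we have $\langle v,w_2\rangle = X_2/\norm{g}_2 \le 0 < \alpha_2 d^{-1/2}$, so $\{X_2 \le 0\} \subseteq E_2 \triangleq \{\langle v,w_2\rangle \le \alpha_2 d^{-1/2}\}$. Putting this together, $\Pr[E_1 \wedge E_2] \ge \Pr[E_1 \wedge \{X_2 \le 0\}] = \tfrac{1}{2}\Pr[E_1]$, which is actually stronger than the claimed $1/\poly(d)$ lower bound and is in fact valid for every $d \ge 2$. There is no real technical obstacle once one passes to the $(X_1,X_2,R)$ coordinates; the only subtlety worth flagging is that $\norm{g}_2$ itself depends on $X_2$, so one cannot naively invoke independence of $X_2$ and $\langle v,w_1\rangle$---the key point is that after clearing the denominator the constraint defining $E_1$ is manifestly even in $X_2$, which is exactly what licenses the reflection argument.
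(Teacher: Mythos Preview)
Your proposal is correct and takes a genuinely different, more elementary route than the paper. The paper's proof proceeds by separately lower-bounding the joint probability on the left (conditioning on thin-shell concentration of $\norm{g^\perp}_2$ and on $\langle g,w_2\rangle$ being $O(1)$, then solving for a perturbed threshold $\alpha'$ on $\langle g,w_1\rangle$) and upper-bounding $\Pr[E_1]$ on the right (via an integral over the density of $\norm{g'^\perp}_2^2$ and a Gaussian tail estimate), and finally comparing the two expressions to extract a $1/\poly(d)$ ratio. Your reflection argument bypasses all of this analysis: because $E_1$ depends on $X_2$ only through $X_2^2$ and the joint law of $(X_1,X_2,R)$ is invariant under $X_2\mapsto -X_2$, you get $\Pr[E_1\wedge\{X_2\le 0\}]=\tfrac12\Pr[E_1]$ in one line, and positivity of $\alpha_2$ finishes. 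Your approach yields the sharper constant $\tfrac12$ in place of $1/\poly(d)$, works for every $d\ge 2$ rather than only sufficiently large $d$, and never uses the specific scale $\alpha_2 d^{-1/2}$---only that the threshold is positive. The paper's heavier machinery buys nothing here; it would be more natural if the constraint on $\langle v,w_2\rangle$ were two-sided or if $E_1$ failed to be even in $X_2$, neither of which is the case.
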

\noindent
Finally, we also use the following approximate $k$-SVD algorithm as a black-box:
\begin{fact}[\cite{rokhlin2009randomized}]
\label{thm:power-method}
Let $M \in \R^{d \times d}$, let $k \leq d$ be a non-negative integer, and let $\sigma_1 \geq \sigma_2 \geq \ldots \sigma_d$ be the non-zero singular values of $M$.
For any $k = 1, \ldots, d - 1$, let $\gap_k = \sigma_k / \sigma_{k + 1}$.
Suppose there is an oracle $\calO$ which runs in time $R$, and which, given $v \in \R^d$, outputs $Mv$.
Then, for any $\eta, \delta > 0$, there is an algorithm $\textsc{ApproxBlockSVD} (M, \eta, \delta)$ which runs in time $\widetilde{O} (k \cdot R \cdot \log \tfrac{1}{\eta \cdot \delta \cdot \gap_k})$, and with probability at least $1 - \delta$ outputs a matrix $U \in R^{d \times k}$ with orthonormal columns so that $\| U - U_k \|_2 < \eta$, where $U_k$ is the matrix whose columns are the top $k$ right singular vectors of $M$.
\end{fact}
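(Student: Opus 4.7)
The plan is to invoke the standard analysis of randomized block subspace iteration (block power method), which is exactly what was introduced in the cited work. First, I would initialize $Y_0 \in \R^{d \times k}$ as a random standard Gaussian matrix (or with a handful of oversampled extra columns $k' = k + O(\log(1/\delta))$ to cheaply boost the success probability), and then run $T$ rounds of subspace iteration: at step $t$, apply the oracle once with each of the $k$ current column vectors to compute $MY_t$, apply it again to compute $M^\top M Y_t$, and re-orthonormalize via a thin QR factorization to get $Y_{t+1}$. Each round costs $O(k \cdot R)$ time for the oracle calls plus $O(dk^2)$ for orthonormalization, so $T$ rounds take $\widetilde{O}(k \cdot R \cdot T)$, absorbing the polylog factors in the $\widetilde{O}(\cdot)$.

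The convergence analysis factorizes through the block power method identity $Y_t = (M^\top M)^t Y_0$. Writing $Y_0$ in the right singular basis of $M$ as $Y_0 = V_k A + V_{\perp} B$, where $V_k$ spans the top-$k$ right singular subspace, the principal angles between $\Span(Y_t)$ and $\Span(V_k)$ satisfy the standard bound
\begin{equation*}
\tan\Theta(\Span(Y_t), \Span(V_k)) \;\le\; (\sigma_{k+1}/\sigma_k)^{2t} \cdot \|B\|_2 \cdot \sigma_{\min}(A)^{-1} \;=\; \gap_k^{-2t} \cdot \|B\|_2 / \sigma_{\min}(A).
\end{equation*}
This is a classical consequence of the fact that each application of $M^\top M$ rescales the $V_k$-component by factors at least $\sigma_k^2$ and the $V_{\perp}$-component by factors at most $\sigma_{k+1}^2$. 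Writing $U$ for the orthonormal basis of $\Span(Y_T)$ returned by QR, a standard $\sin\Theta$ argument then converts the tangent bound into a bound of the form $\|U - U_k\|_2 \le O(\gap_k^{-2T} \cdot \|B\|_2 / \sigma_{\min}(A))$, after choosing the correct sign convention (i.e.\ replacing $U$ by $UO$ for the optimal orthogonal $O$) to make the norm meaningful.

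The main obstacle is controlling the initial condition number $\|B\|_2 / \sigma_{\min}(A)$, since $A = V_k^\top Y_0$ could in principle be nearly singular. Here I would use the rotational invariance of the Gaussian: $A$ is itself a $k \times k$ (or $k \times k'$) standard Gaussian matrix, so by the Rudelson--Vershynin / Edelman bounds on extremal singular values of Gaussian matrices, with probability $\ge 1 - \delta/2$ we have $\sigma_{\min}(A) \ge \poly(\delta/k)$ and $\|B\|_2 \le O(\sqrt{d})$. Plugging these in, choosing
\begin{equation*}
T \;=\; \Theta\!\left(\frac{\log(d/(\eta \delta))}{\log \gap_k}\right) \;=\; \widetilde{O}\!\left(\log \tfrac{1}{\eta \delta \gap_k}\right)
\end{equation*}
makes $\gap_k^{-2T} \cdot \|B\|_2 / \sigma_{\min}(A) \le \eta$, yielding the claimed accuracy.

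Finally, a union bound combines the $\delta/2$ failure probability of the Gaussian singular-value event with the (negligible) failure probability of the numerical QR subroutine to give total failure probability $\le \delta$. The total runtime is dominated by $T$ oracle batch applications, giving $\widetilde{O}(k \cdot R \cdot \log\tfrac{1}{\eta \delta \gap_k})$ as stated. I expect the delicate part to be the rigorous $\sin\Theta$ bookkeeping that converts the tangent bound on principal angles into an operator-norm bound on $\|U - U_k\|_2$ (including the choice of sign/orthogonal alignment), rather than the subspace iteration or Gaussian initialization arguments, both of which are routine.
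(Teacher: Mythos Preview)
This statement is presented in the paper as a cited \emph{Fact} from \cite{rokhlin2009randomized} and is used as a black box; the paper gives no proof of its own, so there is nothing to compare your proposal against. Your sketch is a reasonable outline of the standard randomized block subspace iteration analysis underlying that reference, so it is an appropriate justification, but be aware that the paper itself simply invokes the result without argument.
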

\noindent

We next collect some elementary matrix perturbation bounds.

\begin{lemma}[Wedin Theorem]
	Let $\epsilon \ge 0$. For psd matrices $\vec{A}, \hat{\vec{A}}\in\R^{d\times d}$ for which $\norm{\vec{A} -\hat{\vec{A}}}_2 \le \epsilon$, and any $k\in[d]$, $\gap>0$, and $\mu\ge 0$ such that the top $k$ eigenvalues of $\vec{A}$ are at least $\mu$ and the bottom $d - k$ eigenvalues of $\hat{\vec{A}}$ are at most $\mu - \gap$, we have $\norm{\vec{U}^{\top}\hat{\vec{U}}}_2 \le \epsilon/\gap$, where $\vec{U}$ is the matrix whose columns consist of the top $k$ eigenvectors of $\vec{A}$, and $\hat{\vec{U}}$ is the matrix whose columns consist of the bottom $d - k$ eigenvectors of $\hat{\vec{B}}$.\label{lem:wedin}
\end{lemma}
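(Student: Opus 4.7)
The plan is to reduce the statement to a standard Sylvester-equation estimate. Let $E \triangleq \hat{\vec{A}} - \vec{A}$, so by assumption $\norm{E}_2 \le \epsilon$. Let $\Lambda_+ \in \R^{k\times k}$ be the diagonal matrix of the top $k$ eigenvalues of $\vec{A}$ (so that $\vec{A}\vec{U} = \vec{U}\Lambda_+$ and $\Lambda_+ \succeq \mu\, \Id_k$), and let $\hat{\Lambda}_-\in\R^{(d-k)\times(d-k)}$ be the diagonal matrix of the bottom $d-k$ eigenvalues of $\hat{\vec{A}}$ (so that $\hat{\vec{A}}\hat{\vec{U}} = \hat{\vec{U}}\hat{\Lambda}_-$ and $\hat{\Lambda}_- \preceq (\mu-\gap)\Id_{d-k}$).

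Next, I would multiply $\vec{A}\vec{U} = \vec{U}\Lambda_+$ on the right by $\hat{\vec{U}}$ and multiply $\hat{\vec{A}}\hat{\vec{U}} = \hat{\vec{U}}\hat{\Lambda}_-$ on the left by $\vec{U}^\top$, which after subtracting and using $\vec{A} - \hat{\vec{A}} = -E$ yields the Sylvester identity
\begin{equation*}
\Lambda_+ \bigl(\vec{U}^\top \hat{\vec{U}}\bigr) - \bigl(\vec{U}^\top \hat{\vec{U}}\bigr)\hat{\Lambda}_- \;=\; -\vec{U}^\top E\, \hat{\vec{U}}.
\end{equation*}
Setting $X \triangleq \vec{U}^\top\hat{\vec{U}}$ and $F \triangleq -\vec{U}^\top E\, \hat{\vec{U}}$, it then suffices to prove that for any Sylvester equation $\Lambda_+ X - X\hat{\Lambda}_- = F$ in which the spectra of $\Lambda_+$ and $\hat{\Lambda}_-$ are separated by at least $\gap$, one has $\norm{X}_2 \le \norm{F}_2/\gap$. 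Combining this with the elementary bound $\norm{F}_2 \le \norm{\vec{U}^\top}_2\cdot \norm{E}_2\cdot \norm{\hat{\vec{U}}}_2 \le \epsilon$ will deliver the conclusion $\norm{\vec{U}^\top\hat{\vec{U}}}_2 \le \epsilon/\gap$.

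The main technical step is the Sylvester bound. I would prove it via the integral representation trick: after the harmless shift $\Lambda_+ \mapsto \Lambda_+ - (\mu-\gap/2)\Id$, $\hat{\Lambda}_- \mapsto \hat{\Lambda}_- - (\mu-\gap/2)\Id$, the equation becomes $A' X + X(-B') = F$ with both $A' \succeq (\gap/2)\Id$ and $-B' \succeq (\gap/2)\Id$. The unique solution is
\begin{equation*}
X = \int_0^\infty e^{-A' t}\, F\, e^{-(-B') t}\, \d t,
\end{equation*}
and the operator-norm bounds $\norm{e^{-A' t}}_2, \norm{e^{-(-B') t}}_2 \le e^{-(\gap/2) t}$ yield $\norm{X}_2 \le \norm{F}_2 \int_0^\infty e^{-\gap t}\,\d t = \norm{F}_2/\gap$. (Alternatively, since $\Lambda_+$ and $\hat{\Lambda}_-$ are already diagonal one can work entrywise to show $X_{ij} = -F_{ij}/(\Lambda_+^{(i)} - \hat{\Lambda}_-^{(j)})$ and then upgrade this to the operator-norm statement.) I expect this Sylvester-norm inequality to be the only nontrivial ingredient; everything else is linear-algebraic bookkeeping and does not require the psd hypothesis beyond ensuring that the separation condition makes sense.
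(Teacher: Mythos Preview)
The paper does not actually prove this lemma; it is stated as the classical Wedin theorem (a variant of the Davis--Kahan $\sin\Theta$ theorem) and invoked as a black box. Your Sylvester-equation argument is correct and is in fact one of the standard textbook proofs of this result. One small expository slip: you wrote ``multiply $\vec{A}\vec{U} = \vec{U}\Lambda_+$ on the right by $\hat{\vec{U}}$,'' but the dimensions do not match there; what you need is to first use the symmetry of $\vec{A}$ to rewrite this as $\vec{U}^\top \vec{A} = \Lambda_+ \vec{U}^\top$ and then right-multiply by $\hat{\vec{U}}$. With that fix the Sylvester identity $\Lambda_+ X - X\hat{\Lambda}_- = -\vec{U}^\top E\,\hat{\vec{U}}$ and the integral-representation bound go through exactly as you wrote.
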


\begin{lemma}
	Let $\vec{A},\hat{\vec{A}}\in\R^d$ be rank-$k$ psd matrices for which $\norm{\vec{A} - \hat{\vec{A}}}_2 \le \epsilon$, and suppose $\vec{A}$ has minimum eigenvalue $\lambda$. Let $\vec{U}, \hat{\vec{U}}\in\R^{d\times k}$ be the column matrices consisting of the $k$ nonzero eigenvectors of $\vec{A}, \hat{\vec{A}}$ respectively. Then \begin{equation}\langle \vec{U}^{\top}a,\vec{U}^{\top} b\rangle \ge \langle a,b\rangle - \epsilon/\lambda\end{equation} for any $a,b\in\S^{d-1}$ in the column span of $\hat{\vec{U}}$. In particular, for $a = b \in\S^{d-1}$, we conclude that \begin{equation}\norm{\vec{U}^{\top} a}_2 \ge (1 - \epsilon/\lambda)^2 \ge 1 - 2\epsilon/\lambda.\end{equation}
	\label{lem:wedin_angle}
\end{lemma}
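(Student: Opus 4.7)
The plan is to reduce everything to Wedin's theorem (Lemma~\ref{lem:wedin}) applied with the parameter choice $\mu = \gap = \lambda$. Since $\vec{A}$ is rank-$k$ psd with minimum nonzero eigenvalue $\lambda$, its top $k$ eigenvalues are all at least $\lambda$, while by the psd rank-$k$ assumption on $\hat{\vec{A}}$ its bottom $d-k$ eigenvalues are all exactly $0 = \mu - \gap$. Wedin then yields $\norm{\vec{U}^{\top}\hat{\vec{U}}_{\perp}}_2 \le \epsilon/\lambda$, where $\hat{\vec{U}}_{\perp}\in\R^{d\times(d-k)}$ is any orthonormal basis for the kernel of $\hat{\vec{A}}$, equivalently the orthogonal complement of the column span of $\hat{\vec{U}}$.

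Next I would invoke the standard principal-angle symmetry for two $k$-dimensional subspaces: if $\vec{U}$ and $\hat{\vec{U}}$ are orthonormal bases for $k$-dimensional subspaces of $\R^d$, then $\norm{\vec{U}^{\top}\hat{\vec{U}}_{\perp}}_2 = \norm{\vec{U}_{\perp}^{\top}\hat{\vec{U}}}_2$ (both equal $\sin\theta_{\max}$ where $\theta_{\max}$ is the largest principal angle between the subspaces). Combining with the previous paragraph gives $\norm{\vec{U}_{\perp}^{\top}\hat{\vec{U}}}_2 \le \epsilon/\lambda$. Now for any $a\in\S^{d-1}$ in the column span of $\hat{\vec{U}}$, we have $a = \hat{\vec{U}}\hat{\vec{U}}^{\top}a$ with $\norm{\hat{\vec{U}}^{\top}a}_2 = 1$, so
\begin{equation*}
\norm{\vec{U}_{\perp}^{\top}a}_2 = \norm{\vec{U}_{\perp}^{\top}\hat{\vec{U}}\hat{\vec{U}}^{\top}a}_2 \le \norm{\vec{U}_{\perp}^{\top}\hat{\vec{U}}}_2\cdot\norm{\hat{\vec{U}}^{\top}a}_2 \le \epsilon/\lambda,
\end{equation*}
and likewise $\norm{\vec{U}_{\perp}^{\top}b}_2\le\epsilon/\lambda$.

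For the inner-product bound, I would decompose the identity as $\Id = \vec{U}\vec{U}^{\top} + \vec{U}_{\perp}\vec{U}_{\perp}^{\top}$, giving $\langle a,b\rangle = \langle \vec{U}^{\top}a,\vec{U}^{\top}b\rangle + \langle \vec{U}_{\perp}^{\top}a,\vec{U}_{\perp}^{\top}b\rangle$. Cauchy--Schwarz bounds the error term by $(\epsilon/\lambda)^2$, which is at most $\epsilon/\lambda$ (we may assume $\epsilon\le\lambda$, as the claim is vacuous otherwise), yielding the desired $\langle \vec{U}^{\top}a,\vec{U}^{\top}b\rangle \ge \langle a,b\rangle - \epsilon/\lambda$.

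Finally for the specialization $a=b$, the same decomposition gives $\norm{\vec{U}^{\top}a}_2^2 \ge 1 - (\epsilon/\lambda)^2$, and taking square roots (using $\sqrt{1-x^2}\ge 1-x$ for $x\in[0,1]$) yields $\norm{\vec{U}^{\top}a}_2 \ge 1 - \epsilon/\lambda \ge (1-\epsilon/\lambda)^2$, with the final inequality $(1-\epsilon/\lambda)^2 \ge 1 - 2\epsilon/\lambda$ following by expansion. The only nontrivial ingredient beyond Wedin is the principal-angle symmetry between equidimensional subspaces, which I expect to be the one step deserving a citation or a one-line justification; everything else is bookkeeping with Cauchy--Schwarz and the orthogonal decomposition.
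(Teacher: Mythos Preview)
Your proof is correct and follows essentially the same route as the paper's: apply Wedin with $\mu=\gap=\lambda$, then use the orthogonal decomposition $\Id = \vec{U}\vec{U}^{\top} + \vec{U}_{\perp}\vec{U}_{\perp}^{\top}$ together with Cauchy--Schwarz. The paper's proof is terser---it writes the Wedin conclusion directly as $\norm{(\Id-\vec{U}\vec{U}^{\top})\hat{\vec{U}}}_2\le\epsilon/\lambda$ without explicitly invoking the principal-angle symmetry you flagged---so your version is a bit more careful on that point, and your Cauchy--Schwarz actually gives the sharper error $(\epsilon/\lambda)^2$ before relaxing to $\epsilon/\lambda$.
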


\begin{proof}
	By Lemma~\ref{lem:wedin}, $\norm{(\Id_d - \vec{U}\vec{U}^{\top})\hat{\vec{U}}}_2 \le \epsilon/\lambda$. We can write \begin{equation}\langle \vec{U}^{\top}a,\vec{U}^{\top}b\rangle = \left\langle \left(\Id_d - \vec{U}\vec{U}^{\top}\right)a,\left(\Id_d - \vec{U}\vec{U}^{\top}\right)b\right\rangle \ge \langle a,b\rangle - \epsilon/\lambda,\end{equation} from which the lemma follows.
\end{proof}
 %%% Section 5, Additional Technical Perliminaries

%!TEX root = ./main.tex

\section{Warm Start via Fourier Moment Descent}
\label{sec:fourier}

Here we propose a technique for moment descent based on approximating the minimum variance of a component in a mixture of univariate zero-mean Gaussians. The main result of this section is an algorithm, which we call \textsc{FourierMomentDescent}, for learning a single component of a mixture of $k$ linear regressions in time and sample complexity sub-exponential in $k$:

\begin{theorem}[Fourier moment descent]\label{thm:fmd_main}
	Given $\delta,\epsilon>0$ and a mixture of spherical linear regressions $\calD$ with separation $\Delta$ and noise rate $\noise = O(\epsilon)$, %with probability at least $1 - \delta$, 
	there is an algorithm (\textsc{FourierMomentDesc} \textsc{ent}($\calD,\delta,\epsilon$) inAlgorithm~\ref{alg:fourier_moment_descent}) that outputs a vector $v \in \R^d$ such that with probability $1-\delta$, we have $\norm{w_i - v}_2 \le O(\epsilon)$ for some $i\in[k]$. Furthermore, \textsc{FourierMomentDescent} requires sample complexity 
	\begin{align*}
	N = \widetilde{O}\left(d\epsilon^{-2}\ln(1/\delta)\pmin^{-4}\cdot\poly\left(k,\ln(1/\pmin),\ln(1/\epsilon)\right)^{O(\sqrt{k}\ln(1/\pmin))}\right)
	\end{align*}
	and time complexity $Nd\cdot\poly\log(k,d,1/\Delta,1/\pmin,1/\epsilon)$.
\end{theorem}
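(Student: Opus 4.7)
The plan is to instantiate the Fourier moment descent strategy sketched in Section~\ref{subsec:fmd_tech}. First, I would reduce the ambient dimension: by forming the empirical $\E[y^2 x x^{\top}]$ matrix and invoking $\textsc{ApproxBlockSVD}$ from Fact~\ref{thm:power-method}, I can identify a subspace $V$ of dimension $\tilde O(k)$ that (approximately) contains every $w_i$. All subsequent work takes place inside $V$, which collapses the $d$-dependence to the prefactor $\widetilde{O}(d)$ and leaves an effectively $\tilde O(k)$-dimensional problem. The algorithm then maintains an iterate $a_t \in V$ with progress measure $\sigma_t \triangleq \min_{i\in[k]}\|w_i - a_t\|_2$; the key observation is that for any $a\in V$, the residual $y - \langle a, x\rangle$ is distributed as a univariate mixture of zero-mean Gaussians with variances $\|w_i - a\|_2^2 + \noise^2$ and mixing weights $p_i$, so $\sigma_t^2 + \noise^2$ is exactly the minimum variance of this residual mixture.

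The central technical step is implementing a subroutine $\textsc{MinVar}$ that estimates this minimum variance to multiplicative accuracy $1 + \kappa$. For this I would run a piecewise-polynomial univariate density estimator from \cite{chan2014efficient,acharya2017sample} on residual samples, producing an explicit density $\widetilde{D}$ that is $L^1$-close to the true residual density $D$. Because $\widetilde{D}$ is explicit and piecewise polynomial, its Fourier transform $\widehat{\widetilde{D}}$ can be computed in closed form on any grid, and the degree-$p$ moment $\M_p(\widehat{\widetilde{D}})$ can be evaluated directly. By Observation~\ref{obs:fourier} (see Fact~\ref{fact:fouriergaussian}), $\widehat{D}$ is an unnormalized Gaussian mixture whose components' variances are proportional to the \emph{reciprocals} of those of $D$, so the formula in \eqref{eq:expectedzp} applied to $\widehat{D}$ with $p = \Theta(\kappa^{-1}\log(k/\pmin))$ yields $(1+\kappa)$-multiplicative estimation of $\maxvar{\widehat{D}}$, and hence of $\minvar{D} = \sqrt{\sigma_t^2 + \noise^2}$. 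The delicate part, which I would handle along the lines promised for Section~\ref{subsec:minvar}, is propagating the $L^1$ error of $\widetilde{D}$ into the Fourier-moment computation: the truncation bound in Fact~\ref{fact:monotone} lets me restrict attention to a bounded interval where an $L^1$ approximation transfers to a multiplicative approximation of $\M_p(\widehat{D})$, provided we take enough samples for density estimation. Lemma~\ref{lem:moment_conc} and the cost of $\widetilde{D}$ both scale as $\exp(\tilde O(\kappa^{-1}\ln(1/\pmin)))$.

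Given $\textsc{MinVar}$, the descent is as follows. At iteration $t$, draw $M$ random unit vectors $z_1,\ldots,z_M$ from the unit sphere in $V$, form candidates $a'_{t+1,j} = a_t - \eta_t z_j$, estimate each candidate's $\sigma'$ via $\textsc{MinVar}$ with tolerance $\kappa$, and move to whichever candidate achieves the smallest estimate, provided it beats $\sigma_t$ by more than the tolerance. To argue progress, fix $i^* = \argmin_i \|w_i - a_t\|_2$ and project $z$ onto $V$. Corollary~\ref{cor:unitcorr} with $\gamma = 1/4$ shows that a single random $z$ achieves $\langle z, (w_{i^*}-a_t)/\|w_{i^*}-a_t\|_2\rangle \ge \Omega(k^{-1/4})$ with probability at least $\exp(-\tilde O(\sqrt{k}))$; expanding $\|w_{i^*} - a'_{t+1}\|_2^2$ and choosing $\eta_t \asymp k^{-1/4}\sigma_t$ then yields $\sigma'_{t+1} \le (1 - \Omega(k^{-1/2}))\sigma_t$, which is exactly the regime \eqref{eq:goodprogress} with $c = 1/4$. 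Taking $M = \exp(\tilde O(\sqrt k))$ random steps per iteration makes the union bound of failures over iterations negligible, and setting $\kappa = \Theta(k^{-1/2})$ lets $\textsc{MinVar}$ certify the gain while keeping the per-call cost $\exp(\tilde O(\sqrt k \ln(1/\pmin)))$.

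Putting these together: after $T = O(\sqrt k \cdot \log(\Delta/\epsilon))$ successful iterations the multiplicative shrinkage drives $\sigma_t$ down to $O(\epsilon)$, at which point $a_T$ is within $O(\epsilon)$ of some $w_i$ and we output it. The total sample and time complexity is $T \cdot M$ times the cost of one $\textsc{MinVar}$ call, which multiplies out to the bound stated in Theorem~\ref{thm:fmd_main}; the $\widetilde{O}(d)$ factor enters only through the PCA step and the linear algebra of forming residuals, and the $\pmin^{-4}$ prefactor arises from sampling to resolve the smallest-weight component. The noise rate $\noise = O(\epsilon)$ simply enters as a constant additive shift inside the squared variances, which is absorbed into the $O(\epsilon)$ stopping threshold. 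I expect the main obstacle to be the bookkeeping of error propagation through the density estimation step, since the Fourier moment of a degree-$p$ piecewise polynomial approximation to $D$ can blow up the $L^1$ error by factors polynomial in the grid size raised to $p$; carefully truncating the residual to a window of width $O(\maxvar{D}\sqrt{\log(1/\delta_{\mathrm{samp}})})$ before density estimation, and using Fact~\ref{fact:monotone} to bound the tail contribution, will be the key to making this argument quantitative.
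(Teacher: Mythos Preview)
Your overall plan matches the paper's approach closely: the iterative structure, the random steps in an (approximately) $k$-dimensional subspace, the $1-\Theta(k^{-1/2})$ contraction per step obtained by taking $\gamma=1/4$ in Corollary~\ref{cor:unitcorr}, the $\exp(\tilde O(\sqrt{k}))$ random directions per step, and the $\Theta(\sqrt{k}\log(1/\epsilon))$ iteration count are all essentially identical to Lemma~\ref{lem:progress} and Lemma~\ref{lem:fourier_moment_descent}. Two differences are worth noting.

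First, you flag the $L^1$-to-Fourier-moment error propagation as ``the main obstacle,'' and a direct $L^1$ argument is indeed awkward. The paper sidesteps this entirely: after clipping the piecewise-polynomial estimate to $[0,\,1/(\sqrt{2\pi}\,\underline{\sigma})]$ (valid since the true density is bounded by $1/(\sqrt{2\pi}\,\minvar{\calF})$), H\"older's inequality upgrades the $L^1$ guarantee to an $L^2$ guarantee (Corollary~\ref{cor:l2_estimate}). Plancherel then gives $\|\widehat{\calF}-\widehat{\calG}\|_2=\|\calF-\calG\|_2$ for free, and a single Cauchy--Schwarz against $x^{p}$ on a truncated window (Lemma~\ref{lem:moment_diff}, applied to $\widehat{\calF}$, which is itself a mixture of Gaussians by Fact~\ref{fact:fouriergaussian}) bounds $|\M_p(\widehat{\calF})-\M_p(\widehat{\calG})|$. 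This $L^2$/Plancherel route is the clean resolution of the obstacle you anticipate; your reference to Lemma~\ref{lem:moment_conc} here is a red herring, as that lemma concerns empirical moments of \emph{samples} and is not used in the \textsc{EstimateMinVariance} pipeline.

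Second, a minor structural point: rather than a one-time PCA onto $\Span\{w_i\}$, the paper re-estimates $\Span\{w_i - a_t\}$ at \emph{each} step via the matrix $\widehat{\vec{M}}^{(N_1)}_{a_t}$ of Lemma~\ref{lem:findspan}. Your one-shot reduction would also work (since $a_0=0$ and each step stays in the span), but the per-step version is what Lemmas~\ref{lem:correlation} and~\ref{lem:progress} actually analyze, and it makes the correlation bounds with $(w_i-a_t)/\|w_i-a_t\|_2$ immediate.
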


% \begin{table}[!h]
%     \centering
%     \begin{tabular}{|l|l|l|}
% 	\hline
% 	{\bf Name} & {\bf Algorithm} & {\bf Statement}   \\ \hline
% 	\textsc{L2Estimate} & Algorithm~\ref{alg:l2_estimate} & Corollary~\ref{cor:l2_estimate} \\ \hline
% 	\textsc{EstimateMinVariance} & Algorithm~\ref{alg:estimate_min_variance} & Lemma~\ref{lem:estimate_min_variance} \\ \hline
% 	\textsc{CompareMinVariances} & Algorithm~\ref{alg:compare_min_variances} & Corollary~\ref{cor:compare_min_variances} \\ \hline
% 	\textsc{FourierMomentDescent} & Algorithm~\ref{alg:fourier_moment_descent} & Lemma~\ref{lem:fourier_moment_descent} \\ \hline
%     \end{tabular}
%     \label{tab:algorithms}
% \end{table}

In Section~\ref{subsec:notations} we give an algorithm for estimating the minimum variance of a mixture of univariate, zero-mean Gaussians via its Fourier transform. In Section~\ref{subsec:fmd} we show how to leverage this technology to obtain our algorithm \textsc{FourierMomentDescent} and then give a proof of Theorem~\ref{thm:fmd_main}.

\subsection{Estimating Minimum Variance}
\label{subsec:minvar}

Here we give the key primitive underlying all of the algorithmic results of this work: an algorithm for estimating the minimum variance of a mixture of zero-mean Gaussians. 
This requires some setup regarding existing technology for density estimation.

\subsubsection{Density Estimation in \texorpdfstring{$L_2$}{L2}}
Our main density estimation tool will be to use piecewise polynomials.
We favor them because there are clean algorithms for density estimation via piecewise polynomials, and moreover, the form of the estimator will be useful for us later on.
Formally:
\begin{definition}
	An $s$-piecewise degree-$d$ polynomial $p : \R \rightarrow \R$ is specified by a collection of intervals $I_1 = [-\infty,a_1], I_2 = [a_1,a_2], I_3 = [a_2,a_3], ..., I_{s-1} = [a_{s-2},a_{s-1}], I_s = [a_{s-1},+\infty]$ and $s$ degree-$d$ polynomials $p_1,...,p_s$ such that for any $i\in[s]$ and $x\in I_s$, $p(x) = p_i(x)$. We refer to $a_1,...,a_{s-1}$ as the \emph{nodes} of $p$.
\end{definition}
\noindent
We will use the following algorithm as a black box:
\begin{theorem}[Theorem 43 in \cite{acharya2017sample}]\label{thm:piecewise}
For any $\eta > 0$, there is an algorithm that, given sample access to a mixture $\calF$ of $k$ univariate Gaussians, outputs a $O(k)$-piecewise degree-$O(\log 1 / \eta)$ polynomial hypothesis distribution $\calF'$ for which $\tvd(\calF,\calF') \le\eta$, using $N = O\left((k/\eta^2)\ln(1/\eta)\ln(1/\delta)\right)$ samples and running in time $\tilde{O}(N)$.
\end{theorem}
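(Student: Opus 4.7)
The plan is to combine a structural approximation theorem with VC-type uniform convergence and an efficient fitting procedure. Concretely, I would (i) show that $\calF$ is $\eta/3$-close in total variation to some idealized target $\mathcal{P}^\star$ that is $O(k)$-piecewise degree-$O(\log 1/\eta)$; (ii) argue that with $N = O((k/\eta^2)\log(1/\eta)\log(1/\delta))$ samples the empirical measure $\widehat\calF_N$ satisfies $\|\widehat\calF_N - \calF\|_{\mathcal{A}_m} \le \eta/3$ in the $\mathcal{A}_m$-pseudometric (the supremum of $|P(S) - Q(S)|$ over sets $S$ that are unions of at most $m$ intervals) for $m = O(k \log 1/\eta)$; and (iii) algorithmically output a hypothesis $\calF'$ in the piecewise polynomial class whose $\mathcal{A}_m$-distance to $\widehat\calF_N$ is within a constant factor of optimal. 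The key link is that on the class of $O(k)$-piecewise degree-$O(\log 1/\eta)$ densities, the $\mathcal{A}_m$-distance and the $L_1$-distance agree up to a factor of $2$: the signed difference of two such densities is itself a piecewise polynomial with at most $m$ sign changes, so the set on which it is positive is a union of at most $m/2$ intervals and witnesses the full $L_1$-norm.

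For the structural step (i), I would truncate to an effective support $I^\star$ of width $O(\smax\sqrt{\log(k/\eta)})$, losing at most $\eta/10$ of the mass by Gaussian tail bounds, and then partition $I^\star$ into $O(k)$ intervals on each of which a single component dominates. This partition exists because the log-ratio of any two Gaussian densities is a quadratic, so pairwise dominance flips at most twice, and a greedy sweep yields $O(k)$ ``dominance intervals.'' On each such interval, the subdominant components contribute pointwise at most $\eta/\poly(k)$, while the dominant Gaussian, restricted and rescaled to a unit interval, is well approximated to uniform error $\eta/(k|I^\star|)$ by its degree-$O(\log 1/\eta)$ Taylor polynomial. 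For step (ii), the VC dimension of the family of unions of $m$ intervals is $O(m)$, so the standard VC-Glivenko-Cantelli bound yields the claimed deviation with the stated sample size.

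The main obstacle is step (iii), achieving the near-linear $\widetilde O(N)$ runtime. One would cast the task as choosing $O(k)$ interval endpoints and a degree-$O(\log 1/\eta)$ polynomial density on each so as to approximately minimize the $\mathcal{A}_m$-distance to $\widehat\calF_N$. Without loss of generality the endpoints can be taken at sample positions, so this is a discrete optimization over $O(N)$ candidates per boundary. A dynamic program over boundary choices, where the cost of a single piece is the minimum empirical-CDF discrepancy on that piece against degree-$d$ polynomials, can be solved via an LP of size $\widetilde O(d)$ per piece. Making this run in $\widetilde O(N)$ total requires amortizing per-piece LP solves across sliding intervals (for instance, via warm-started simplex pivots and monotone-queue bookkeeping) and exploiting convexity of the per-piece cost as a function of endpoints. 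Combining the three ingredients via the triangle inequality, $\tvd(\calF',\calF) \le \tvd(\calF',\mathcal{P}^\star) + \tvd(\mathcal{P}^\star,\calF) \le \eta$, completes the proof.
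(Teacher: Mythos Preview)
The paper does not prove this statement at all: Theorem~\ref{thm:piecewise} is imported verbatim as a black box from \cite{acharya2017sample} (and implicitly \cite{chan2014efficient}), and the paper's only ``proof'' is the citation. So there is nothing in the paper to compare your argument against.

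That said, your outline is in the right spirit of the cited works, with one substantive wrinkle in step (i). The ``dominance interval'' decomposition you describe is not the mechanism behind the structural result. You claim that on each dominance interval the subdominant components contribute pointwise at most $\eta/\poly(k)$, but this is false in general: two Gaussians with nearby means and comparable variances can each contribute a constant fraction of the density over a wide region, with no interval on which either is negligible. The actual structural argument is simpler and does not need dominance at all: each individual Gaussian, after truncation to its effective support of width $O(\sigma_i\sqrt{\log(k/\eta)})$, is approximated to $L_1$ error $\eta/k$ by a single degree-$O(\log(k/\eta))$ polynomial (Taylor expansion of $e^{-t^2/2}$), yielding a $3$-piece approximation per component; summing the $k$ approximants and taking the common refinement of their $O(k)$ breakpoints gives the $O(k)$-piecewise degree-$O(\log 1/\eta)$ target $\mathcal{P}^\star$. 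Your steps (ii) and (iii) track the $\mathcal{A}_m$-distance framework of \cite{chan2014efficient,acharya2017sample} reasonably faithfully, though the $\widetilde O(N)$ runtime in \cite{acharya2017sample} comes from a carefully designed approximate dynamic program with merging (the ``ADLS'' algorithm), not from the amortized warm-started simplex heuristic you sketch; as written, your step (iii) is a plan rather than an argument.
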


\begin{algorithm}\caption{\textsc{L2Estimate}($\calF,\underline{\sigma},\eta,\delta$), Corollary~\ref{cor:l2_estimate}}\label{alg:l2_estimate}
\begin{algorithmic}[1]
	\State \textbf{Input}: Sample access to univariate $k$-GMM $\calF$, a number $\underline{\sigma}$ for which $\underline{\sigma} \le \minvar{\calF}$, precision parameter $\eta>0$, failure probability $\delta > 0$
	\State \textbf{Output}: Piecewise polynomial function $\calG$ for which $\norm{\calF - \calG}^2_2 \le\eta$
			\State Let $N = \Theta\left(\frac{k}{2\pi \underline{\sigma}^2\eta^2}\log\left(\frac{1}{\sqrt{2\pi}\underline{\sigma}\eta}\right)\right)$
			\State Draw $N$ samples from $\calF$ and run the algorithm from Theorem~\ref{thm:piecewise} to obtain an estimate $\calF'$ for which $\tvd(\calF,\calF') \le \sqrt{2\pi}\underline{\sigma}\cdot \eta$.
			\State For each $x\in\R$, define $\calG(x) = \min\left\{ \max \{ \calF'(x) , 0 \}, \frac{1}{\sqrt{2\pi}\underline{\sigma}} \right \}$.
			\State Output $\calG$.
\end{algorithmic}
\end{algorithm}

% can approximate mixture of gaussians in L2 distance
\begin{corollary}[Guarantee for \textsc{L2Estimate}]\label{cor:l2_estimate}
	For any $0<\eta,\delta< 1$, mixture of $k$ univariate Gaussians $\calF$, and $\underline{\sigma}>0$ for which $\underline{\sigma} \le \minvar{\calF}$, with probability at least $1 - \delta$ \textsc{L2Estimate}($\calF, \underline{\sigma}, \eta,\delta$) (Algorithm~\ref{alg:l2_estimate}) outputs a $O(k \log (1 / (\eta \underline{\sigma} ) ) )$-piece degree-$O(\log (1/ ( \eta \underline{\sigma} ) )$ polynomial hypothesis distribution $\calG: \R\to\R_{\ge 0}$ for which $\norm{\calF - \calG}^2_2 \le\eta$, using $N = O\left((k/\eta^2) \log (1/(\eta \underline{\sigma})) \log(1/\delta) \right)$ samples and running in time $\tilde{O}(N)$.
\end{corollary}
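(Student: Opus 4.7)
The plan is to show that, after running the density estimator of Theorem~\ref{thm:piecewise} to get $\calF'$ within TV distance $\sqrt{2\pi}\underline{\sigma}\cdot\eta$ of $\calF$, the clipped function $\calG$ satisfies $\norm{\calF-\calG}_2^2\le \eta$ (up to constants absorbed into the $O(\cdot)$).

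First I would record the pointwise upper bound $\calF(x)\le 1/(\sqrt{2\pi}\underline{\sigma})$. This follows because $\calF$ is a convex combination of zero-mean Gaussians each of whose variance is at least $\underline{\sigma}^2$, and a Gaussian density with variance $\sigma^2$ is maximized at the origin with value $1/(\sqrt{2\pi}\sigma)\le 1/(\sqrt{2\pi}\underline{\sigma})$. Consequently, both $\calF$ and the clipped estimator $\calG$ lie in $[0,1/(\sqrt{2\pi}\underline{\sigma})]$.

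Next I would argue that clipping is nonexpansive with respect to $\calF$ at every point, i.e. $|\calG(x)-\calF(x)|\le |\calF'(x)-\calF(x)|$. This is a standard three-case check: when $\calF'(x)\in[0,1/(\sqrt{2\pi}\underline{\sigma})]$ one has equality; when $\calF'(x)$ overshoots or undershoots the interval, replacing it by the nearer endpoint $\calG(x)$ moves it toward $\calF(x)$ (which itself lies in the interval), strictly decreasing the error. Combining this with the $L_\infty$ bound from the previous paragraph, I can then bound
\begin{equation*}
\norm{\calF-\calG}_2^2 \;\le\; \norm{\calF-\calG}_\infty \cdot \norm{\calF-\calG}_1 \;\le\; \frac{1}{\sqrt{2\pi}\underline{\sigma}}\cdot \norm{\calF-\calF'}_1 \;\le\; \frac{2\tvd(\calF,\calF')}{\sqrt{2\pi}\underline{\sigma}} \;\le\; 2\eta,
\end{equation*}
which gives the desired $L_2$ guarantee (rescaling the TV target in the algorithm by a constant yields $\le \eta$).

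Then I would verify the piecewise-polynomial description of $\calG$. Since $\calF'$ from Theorem~\ref{thm:piecewise} is $O(k)$-piecewise of degree $D=O(\log(1/(\eta\underline{\sigma})))$, on each of the $O(k)$ original intervals the polynomial piece crosses each of the two clipping thresholds $0$ and $1/(\sqrt{2\pi}\underline{\sigma})$ at most $D$ times (a degree-$D$ polynomial has at most $D$ roots). So each piece is split into at most $2D+1$ subpieces on which $\calG$ agrees with either a degree-$D$ polynomial or a constant, for a total of $O(kD)=O(k\log(1/(\eta\underline{\sigma})))$ pieces and degree $O(\log(1/(\eta\underline{\sigma})))$, matching the stated complexity.

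Finally, the sample complexity and runtime come by plugging the rescaled target $\eta'=\Theta(\sqrt{2\pi}\underline{\sigma}\eta)$ into Theorem~\ref{thm:piecewise}, yielding $N=O\!\big((k/(\underline{\sigma}\eta)^2)\log(1/(\underline{\sigma}\eta))\log(1/\delta)\big)$, which absorbs the $\underline{\sigma}^{-2}$ into the $\eta^{-2}$ factor stated; the clipping postprocessing takes $\wt{O}(N)$ time since one only needs to find the $O(D)$ real roots of each of the $O(k)$ polynomial pieces and record the resulting subpieces. There is no real obstacle here — the only subtle point is confirming that clipping is pointwise nonexpansive and that the $L_\infty\times L_1$ inequality is tight enough to convert the TV guarantee of Theorem~\ref{thm:piecewise} into an $L_2$ guarantee without losing any polynomial factor in $k$ or $1/\eta$.
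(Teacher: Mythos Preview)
Your proposal is correct and follows essentially the same approach as the paper: bound $\calF$ pointwise by $1/(\sqrt{2\pi}\underline{\sigma})$, argue that clipping does not increase the error relative to $\calF$, and then use H\"older's inequality ($\|\cdot\|_2^2\le\|\cdot\|_\infty\cdot\|\cdot\|_1$) to convert the TV guarantee from Theorem~\ref{thm:piecewise} into an $L_2$ guarantee, with the piecewise-polynomial count handled by root-counting. Your pointwise nonexpansiveness argument is in fact a bit cleaner than the paper's phrasing; the only quibble is that the $\underline{\sigma}^{-2}$ factor you derive does not literally ``absorb'' into $\eta^{-2}$ in the stated bound, but that is a discrepancy in the corollary's stated $N$ (cf.\ the explicit $\underline{\sigma}^{-2}$ in Step~3 of Algorithm~\ref{alg:l2_estimate}) rather than a flaw in your reasoning.
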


\begin{proof}
	Because $\calF$ has range in $\left[0,\frac{1}{\sqrt{2\pi}\minvar{\calF}}\right]$, by construction we have that $\tvd(\calF,\calG) \le \tvd(\calF,\calF')$. By Holder's, it suffices to show $\norm{\calF - \calG}_{\infty} \le \frac{1}{\sqrt{2\pi}\cdot\underline{\sigma}}$. But because $\calG(x) \le \calF(x)$ for all $x\in\R$, it is enough to show $\norm{\calF}_{\infty} \le \frac{1}{\sqrt{2\pi}\cdot\underline{\sigma}}$, which just follows from the fact that $\calF$ is a convex combination of Gaussians of variance at least $\minvar{\calF} \ge \underline{\sigma}$. Note that $\calG$ is a piecewise polynomial because we can refine the intervals defining $\calF'$ to incorporate the intersections of $\calF'$ with the lines $y = \frac{1}{\sqrt{2\pi}\underline{\sigma}}$ and $y = 0$.
	Since each individual component can intersect with these lines at most $O(\log ( 1 / ( \eta \underline{\sigma} ) ) )$ times since they have degree at most $O(\log ( 1 / ( \eta \underline{\sigma}) ) )$, this yields the desired bound on the number of pieces of the resulting piecewise polynomial estimate.
\end{proof}

\subsubsection{Minimum Variance Via Fourier Transform Moments}

We now show how to use an $L_2$-close estimator for the density of a mixture $\calF$ of zero-mean univariate Gaussians to approximate $\minvar{\calF}$.
As a first step, we show how to use an $L_2$-close estimator to estimate high moments of $\calF$:

% moments of true distribution close to moments of truncated L2-close estimator
\begin{lemma}\label{lem:moment_diff}
	For any even integer $p\in \mathbb{N}$ and $\xi > 0$ the following holds. Let $\calF: \R\to\R$ be an mixture of $k$ Gaussians given by 
	\begin{equation*}
		\calF(x) = \sum^k_{i=1} p_i \cdot \N(0 , \sigma^2_i ; x),
	\end{equation*} 
	and define $L \triangleq \sum_{i=1}^k p_i$. Let $\overline{\sigma} > 0$ be any number for which $\overline{\sigma} \ge \max_{i \in [k]} \sigma_i$.

	Let $\tau = 8\overline{\sigma}^2 \cdot \max( p , \ln( 4 L / \xi ) )$ and $\eta= \frac{\xi^2p}{8\tau^{2p+1}}$. Then if function $\calG:\R\to\R_{\ge 0}$ for satisfies $\norm{\calF - \calG}^2_2 \le \eta$, then $\calG'$ defined by $\calG'(x) = \bone{x\in[-\tau,\tau]}\cdot\calG(x)$ for all $x\in\R$ satisfies 
	\begin{equation}\label{eq:moment_diff}
	|\M_p(\calF) - \M_p(\calG')| \le \xi.
	\end{equation}
\end{lemma}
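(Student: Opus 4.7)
The plan is to split the error $M_p(\calF) - M_p(\calG')$ into a ``tail'' contribution supported on $[-\tau,\tau]^c$ and a ``bulk'' contribution supported on $[-\tau,\tau]$, and show each is at most $\xi/2$. Concretely, since $\calG'$ vanishes outside $[-\tau,\tau]$, we may write
\begin{equation*}
M_p(\calF) - M_p(\calG') \;=\; \underbrace{\int_{[-\tau,\tau]}\! x^p\bigl(\calF(x)-\calG(x)\bigr)\,\d x}_{\text{bulk}} \;+\; \underbrace{\int_{[-\tau,\tau]^c}\! x^p\,\calF(x)\,\d x}_{\text{tail}}.
\end{equation*}
Because $p$ is even, $x^p \geq 0$ and all sign issues are avoided.

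For the bulk term I would apply Cauchy--Schwarz with respect to Lebesgue measure on $[-\tau,\tau]$: the magnitude is at most $\bigl(\int_{-\tau}^{\tau} x^{2p}\,\d x\bigr)^{1/2}\cdot\norm{\calF-\calG}_2 = \bigl(\tfrac{2\tau^{2p+1}}{2p+1}\bigr)^{1/2}\cdot\sqrt{\eta}$. Substituting $\eta = \xi^2 p/(8\tau^{2p+1})$, the $\tau^{2p+1}$ factors cancel and what remains simplifies to $\xi\cdot\sqrt{p/(4(2p+1))}\le \xi/2$. This is the easy half, and uses only the hypothesis on $\norm{\calF-\calG}_2$ together with the choice of $\eta$ built into the statement.

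For the tail, I would first use Fact~\ref{fact:monotone} (monotonicity of tail Gaussian moments in $\sigma$) component-by-component: since each $\sigma_i \leq \overline{\sigma}$, the tail integral of $x^p\cdot\N(0,\sigma_i^2;x)$ over $[-\tau,\tau]^c$ is dominated by the tail integral of $x^p\cdot\N(0,\overline{\sigma}^2;x)$. Summing over $i$ with weights $p_i$ gives
\begin{equation*}
\Bigl|\int_{[-\tau,\tau]^c}\! x^p \calF(x)\,\d x\Bigr| \;\le\; L\int_{[-\tau,\tau]^c}\! x^p\,\N(0,\overline{\sigma}^2;x)\,\d x.
\end{equation*}
The remaining task is a Gaussian tail moment estimate: by a single integration by parts, $\int_\tau^\infty x^p e^{-x^2/(2\overline{\sigma}^2)}\,\d x \le 2\overline{\sigma}^2 \tau^{p-1} e^{-\tau^2/(2\overline{\sigma}^2)}$ once $\tau^2 \ge 2(p-1)\overline{\sigma}^2$ (the crossover at which the boundary term dominates the recursion). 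Combined with the $\frac{1}{\sqrt{2\pi}\overline{\sigma}}$ prefactor, I would then verify that the choice $\tau = 8\overline{\sigma}^2\cdot\max(p,\ln(4L/\xi))$ is comfortably large enough to drive $L\cdot \tau^{p-1} e^{-\tau^2/(2\overline{\sigma}^2)}$ below $\xi/2$: the $\max(p,\cdot)$ in $\tau$ ensures both $\tau \gg \overline{\sigma}\sqrt{p}$ (so $\tau^{p}$ is swamped by the exponential) and $\tau \gg \overline{\sigma}\sqrt{\ln(L/\xi)}$ (so the $L/\xi$ prefactors are absorbed).

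I expect the main obstacle to be the bookkeeping in the tail bound — in particular confirming that the threshold $\tau = 8\overline{\sigma}^2\max(p,\ln(4L/\xi))$ simultaneously gives (i) $\tau^2/(2\overline{\sigma}^2)$ large enough that the integration-by-parts reduction is valid and crushes $\tau^{p-1}$, and (ii) the remaining factor of $L$ (and of $\overline{\sigma}$ from normalization) is killed by the $\ln(4L/\xi)$ branch of the $\max$. Once that calculation is carried out, summing the two $\xi/2$ bounds gives the advertised \eqref{eq:moment_diff}.
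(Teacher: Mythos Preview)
Your proposal is correct and follows essentially the same approach as the paper: the same tail/bulk split, the same Cauchy--Schwarz argument for the bulk (with identical arithmetic), and the same use of Fact~\ref{fact:monotone} to reduce the tail to a single Gaussian with variance $\overline{\sigma}^2$. The only difference is cosmetic: for the final Gaussian tail moment, the paper uses the pointwise inequality $x^p \le e^{x^2/(4\overline{\sigma}^2)}\cdot\frac{\xi}{4L}$ for $x\ge\tau$ (splitting the exponential in half) rather than your integration-by-parts bound, but both routes lead to the same verification that $\tau = 8\overline{\sigma}^2\max(p,\ln(4L/\xi))$ suffices.
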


\begin{proof}
	For simplicity, we define
	\begin{equation*}
	\smax = \smax(\calF).
	\end{equation*}
	We would like to pick the truncation threshold $\tau$ so that 
	\begin{equation}\label{eq:tail}
		\int_{[-\tau,\tau]^c} x^p \cdot \calF(x) \, \d x \le \xi/2.
	\end{equation} 
	For this, it suffices to take $\tau$ for which 
	\begin{equation}\label{eq:wanttailbound}
	x^p \le e^{x^2/(4 \smax^2)}\cdot\frac{\xi}{4 L }, ~~~ \forall x \not\in [-\tau,\tau],
	\end{equation} 
	%for all $x\not\in[-\tau,\tau]$,
	 in which case 
	\begin{align*}
		\int_{[-\tau,\tau]^c} x^p \cdot \calF(x)\, \d x 
		= & ~ \sum^k_{i=1}p_i \int_{[-\tau,\tau]^c}x^p \cdot \frac{1}{\sqrt{2\pi}\sigma_i}\cdot e^{-x^2/(2\sigma_i^2)}\, \d x \\
		%&= \sum^k_{i=1}p_i \int_{[-\tau,\tau]^c}x^p \cdot \frac{1}{\sqrt{2\pi}\sigma_i}\cdot e^{-x^2/(2\sigma_i^2)}\, \d x \\
		\le & ~ L \cdot \int_{[-\tau,\tau]^c}x^p\cdot\frac{1}{\sqrt{2\pi}\smax}\cdot e^{-x^2/(2\smax^2)}\, \d x \\
		\le & ~ L \cdot \int_{[-\tau,\tau]^c}\frac{1}{\sqrt{2\pi}\smax }\cdot e^{-x^2/(4\smax^2)} \cdot \frac{\xi}{4 L  }\, \d x \\
		\leq & ~ \xi/2
	\end{align*}
	where the first step follows from definition of $\calF(x)$, the second step follows from Fact~\ref{fact:monotone}, and the third step follows from Eq.~\eqref{eq:wanttailbound}.

	To reach \eqref{eq:wanttailbound}, we want 
	\begin{equation}\label{eq:wanttailbound2}
	p\ln x \le\frac{x^2}{4 \smax^2} - \ln(4 L / \xi), ~~~ \forall x\not\in[-\tau,\tau],
	\end{equation} 
 
	For $x\ge 8\smax^2\ln(4 L / \xi)$, we get that 
	\begin{align*}
	\frac{x^2}{4 \smax^2} -\ln(4 L /\xi)\ge \frac{x^2}{8\smax^2},
	\end{align*}
	and for $x\ge 8p\smax^2$, we have that 
	\begin{align*}
	p\ln x\le\frac{x^2}{8\smax^2}.
	\end{align*}
	We conclude that for $\tau = 8\overline{\sigma}^2\cdot\max\{p,\ln(4L /\xi)\}$, Eq.~\eqref{eq:wanttailbound2} holds.

	We can now complete the proof of \eqref{eq:moment_diff}. We may write $|\M_p(\calF) - \M_p(\calG')|$ as 
	\begin{align*}
		|\M_p(\calF) - \M_p(\calG')|
		= & ~ \left|\int^{\infty}_{-\infty} x^p \cdot \calF(x)\, \d x - \int^{\tau}_{-\tau}x^p\cdot\calG(x)\, \d x \right| \\
		\le & ~\left| \int_{[-\tau,\tau]^c}x^p\cdot \calF(x)\, \d x\right| + \left|\int^{\tau}_{-\tau} x^p\cdot(\calF - \calG)(x)\, \d x\right| \\
		\le & ~ \xi/2 + \left(\int^{\tau}_{-\tau}x^{2p}\, \d x\right)^{1/2}\cdot\norm{\calF - \calG}_2 \\
		= & ~ \xi/2 + \left(\frac{2\tau^{2p+1}}{2p+1}\eta\right)^{1/2} \le \xi,
	\end{align*} 
	where the second step follows from triangle inequality, the third step follows by \eqref{eq:tail} and the last step follows if we take $\eta= \frac{\xi^2p}{8\tau^{2p+1}}$.
\end{proof}
\noindent
This is useful as good estimates of high moments of the mixture allow us to approximate the maximum variance of any component well, as components with large variance contribute significantly more to the high moments than do the components with small variance.
However, we wish to estimate the minimum variance of our mixture.
We now show that we can do so by taking high moments of the \emph{Fourier transform} of our $L_2$-close estimator.
As an important subroutine, we show that it is efficient to compute the Fourier moments of our density estimate, by using the fact that it is piecewise polynomial.
Specifically:
\begin{lemma}
\label{lem:piecewise-fourier-moment}
Given the description of a $s$-piece degree-$d$ polynomial $p: \R \to \R$, and any $\tau > 0$ and nonnegative integer $\ell > 0$, there is an algorithm which runs in time $O(s \ell d^3)$ and which outputs
\[
\int_{-\tau}^\tau \widehat{p} [\omega] \omega^\ell \d \omega \; .
\]
\end{lemma}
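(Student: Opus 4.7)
The plan is to swap the order of integration via Fubini and reduce the task to a collection of per-piece, closed-form one-dimensional integrals. Specifically,
\begin{equation*}
\int_{-\tau}^\tau \widehat{p}[\omega]\,\omega^\ell\,\d\omega
= \sum_{i=1}^s \int_{I_i} p_i(x)\,K_\ell(x,\tau)\,\d x,
\quad \text{where} \quad K_\ell(x,\tau) \triangleq \int_{-\tau}^\tau \omega^\ell e^{-2\pi \mathbf{i}\omega x}\,\d\omega,
\end{equation*}
so it suffices to (a) obtain an explicit description of the entire function $K_\ell(\cdot,\tau)$ and (b) integrate a degree-$d$ polynomial against it on each of the $s$ intervals.

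For (a), I would apply integration by parts $\ell$ times in $\omega$ to establish the recursion $K_\ell(x,\tau) = \frac{\tau^\ell e^{-2\pi \mathbf{i}\tau x} - (-\tau)^\ell e^{2\pi \mathbf{i}\tau x}}{-2\pi \mathbf{i}\,x} + \frac{\ell}{2\pi \mathbf{i}\,x}\,K_{\ell-1}(x,\tau)$, unrolling to
\begin{equation*}
K_\ell(x,\tau) = e^{2\pi \mathbf{i}\tau x}\,Q^+_\ell(1/x) + e^{-2\pi \mathbf{i}\tau x}\,Q^-_\ell(1/x),
\end{equation*}
for explicit polynomials $Q^\pm_\ell$ of degree at most $\ell+1$ whose coefficients can be computed in $O(\ell)$ arithmetic operations. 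For (b), expanding $p_i(x)\,K_\ell(x,\tau)$ yields $O(d\ell)$ primitive integrals of the form $\int_{a_{i-1}}^{a_i} x^m e^{\pm 2\pi \mathbf{i}\tau x}\,\d x$ with $|m| \le d+\ell+1$, each of which has a closed-form antiderivative obtainable by integration by parts in $x$ once the singular negative-$m$ terms are handled (see below). Polynomial arithmetic for collecting and multiplying coefficients costs $O(d^3)$ per piece using naive algorithms, giving the advertised $O(s\ell d^3)$ runtime overall.

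The main technical obstacle I anticipate is the apparent pole of $K_\ell(\cdot,\tau)$ at $x=0$: although $K_\ell$ itself is entire (being a finite integral of an entire function), the individual Laurent summands $x^m e^{\pm 2\pi \mathbf{i} \tau x}$ with $m < 0$ give divergent integrals whenever a piece $I_i$ contains the origin. I would handle this by subdividing any such piece at $0$ and, on a fixed small neighborhood of the origin, replacing $K_\ell(x,\tau)$ with its Taylor polynomial of degree $O(d+\ell)$---whose coefficients are given in closed form by the one-dimensional moments $\int_{-\tau}^\tau \omega^{\ell+n}\,\d\omega \cdot (-2\pi\mathbf{i})^n/n!$ and so are easy to compute within the stated time---so that every per-piece integral reduces either to a genuine polynomial in $x$ times $e^{\pm 2\pi \mathbf{i}\tau x}$ or to a pure polynomial, both of which have elementary closed-form antiderivatives.
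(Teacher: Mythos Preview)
Your route via Fubini---pushing the $\omega$-integral inside to obtain the kernel $K_\ell(x,\tau)$ and then integrating each polynomial piece against it---is correct and is the dual of the paper's argument. The paper proceeds in the opposite order: it first computes $\widehat{p}[\omega]$ piece by piece, using an integration-by-parts recursion to show that $\int_0^1 x^r\cos(ax)\,\d x$ and $\int_0^1 x^r\sin(ax)\,\d x$ have the closed forms $A_r(a,\sin a,\cos a)/a^r$ and $B_r(a,\sin a,\cos a)/a^r$ for explicit degree-$r$ polynomials $A_r,B_r$ whose coefficients are computable in $O(r^2)$ time, and only then integrates the resulting expression against $\omega^\ell$ over $[-\tau,\tau]$. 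Your approach buys a cleaner separation of work (compute $K_\ell$ once, reuse it across all $s$ pieces), at the price of moving the removable singularity from the outer $\omega$-integral to the outer $x$-integral.

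Your Taylor-polynomial workaround for that singularity is a genuine gap, though: truncating $K_\ell$ to a finite-degree Taylor polynomial on any positive-length interval is an \emph{approximation}, not an identity, so as written the procedure would not output the exact value the lemma promises. The clean fix in your framework is to exploit directly that $K_\ell$ is entire: split any piece containing the origin at $x=0$, compute each half as $\lim_{\epsilon\to 0^+}\int_\epsilon^{b} p_i(x)\bigl[e^{2\pi\i\tau x}Q_\ell^+(1/x)+e^{-2\pi\i\tau x}Q_\ell^-(1/x)\bigr]\,\d x$ using the explicit antiderivatives of each monomial $x^m e^{\pm 2\pi\i\tau x}$ (including negative $m$), and then observe that the $\epsilon$-divergent and exponential-integral contributions cancel pairwise because the full integrand is analytic at $0$. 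The paper's treatment of the mirror-image singularity at $\omega=0$ is itself terse (a one-line ``integration by parts $r-\ell$ times to remove the denominator''); in both presentations the cancellation is routine once acknowledged, but in neither is it fully spelled out.
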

\noindent
We defer the description of this algorithm as well as the proof of correctness to Appendix~\ref{sec:hermite}.
With this primitive, we can now show:

\begin{algorithm}[t]\caption{\textsc{EstimateMinVariance}($\calF,\overline{\sigma},\underline{\sigma},p,\delta$) Lemma~\ref{lem:estimate_min_variance}}\label{alg:estimate_min_variance}
\begin{algorithmic}[1]
	\State \textbf{Input}: Sample access to mixture of $k$ univariate Gaussians $\calF$, numbers $\overline{\sigma},\underline{\sigma}$ for which $\overline{\sigma}\ge \maxvar{\calF}$ and $\underline{\sigma}\le \minvar{\calF}$, degree $p\in\N$, failure probability $\delta > 0$
	\State \textbf{Output}: Estimate $\sigma^*$ for which \eqref{eq:min_pestimate} holds
		\State Let $\xi = (2\pi)^{-p-1/2}p^{p/2}\pmin\overline{\sigma}^{-p-1}$
		\State Let $L = \sqrt{2\pi}\cdot\underline{\sigma}^{-1}$
		\State Let $\tau = 8\overline{\sigma}^{2}\cdot \max(p,\ln(2L\sqrt{2}/\xi))$  
		\State Let $\eta = \frac{\xi^2p}{8\tau^{2p+1}}$
		\State Let $\calG = \text{\textsc{L2Estimate}}(\calF,\underline{\sigma},\eta,\delta)$ \Comment{Algorithm~\ref{alg:l2_estimate}, Corollary~\ref{cor:l2_estimate}}
		\State Explicitly compute $\M_p(\widehat{\calG})$ using Lemma~\ref{lem:piecewise-fourier-moment}
		\State \textbf{Return} $\sigma^*\triangleq \left(\frac{\M_p(\widehat{\calG})}{(2\pi)^{-p-1/2}p^{p/2}}\right)^{-1/(p-1)}$
\end{algorithmic}
\end{algorithm}

\begin{algorithm}[t]\caption{\textsc{CompareMinVariances}($\calF_1,\calF_2,\overline{\sigma},\underline{\sigma},\kappa_1,\kappa_2,\delta$), Corollary~\ref{cor:compare_min_variances}}\label{alg:compare_min_variances}
\begin{algorithmic}[1]
	\State \textbf{Input}: Sample access to two mixtures of $k$ univariate Gaussians $\calF_1,\calF_2$, numbers $\overline{\sigma},\underline{\sigma}$ for which $\overline{\sigma}\ge \max(\maxvar{\calF_1},\maxvar{\calF_2})$ and $\underline{\sigma}\le \min(\minvar{\calF_1},\minvar{\calF_2})$, tolerance parameters $\kappa_1 < \kappa_2$, failure probability $\delta > 0$
	\State \textbf{Output}: If the output is $\mathsf{true}$, then $\minvar{\calF_1}\ge(1 + \kappa_1)\minvar{\calF_2}$. 
	\State \hspace{16mm} If the output is $\mathsf{false}$, then $\minvar{\calF_1}\le (1 + \kappa_2)\minvar{\calF_2}$.
			\State Let $p = \Omega\left(\frac{\ln(1/\pmin)}{\kappa_2 - \kappa_1}\right)$.
			\State Let $\sigma^*_j = \text{\textsc{EstimateMinVariance}}(\calF_j,\overline{\sigma},\underline{\sigma},p,\delta)$ for $j = 1,2$. \Comment{Algorithm~\ref{alg:estimate_min_variance}, Lemma~\ref{lem:estimate_min_variance}}
			\If{$\frac{\M_p(\widehat{\calG}_1)}{\M_p(\widehat{\calG}_2)} > \frac{1}{2}\pmin(1 + \kappa_2)^{p-1}$}
				\State Output $\mathsf{true}$
			\Else
				\State Output $\mathsf{false}$
			\EndIf
\end{algorithmic}
\end{algorithm}

% EstimateMinVariance succeeds
\begin{lemma}[Guarantee for \textsc{EstimateMinVariance}]\label{lem:estimate_min_variance}
	Let $\calF$ be a mixture of $k$ univariate zero-mean Gaussians with parameters $( \{ p_i \}_{ i \in [k] }, \{ \sigma_i \}_{ i \in [k] } )$. Let $\overline{\sigma}\ge\maxvar{\calF}$, $\underline{\sigma} \le \minvar{\calF}$. Then with probability at least $1 - \delta$, \textsc{EstimateMinVariance}($p,\calF,\overline{\sigma},\underline{\sigma},\delta$) (Algorithm~\ref{alg:estimate_min_variance}) takes 
\[
N = \pmin^{-4} k \ln ( 1 / \delta ) \cdot \poly \left( \overline{\sigma}, p, \ln (1 / \pmin ), \ln (1 / \underline{\sigma} ) \right)^{O(p)}
\]	
samples, runs in time $\widetilde{O}(N)$, and outputs a number $\sigma^*$ for which 
	\begin{equation}\label{eq:min_pestimate}
	\left(\frac{3}{4}\right)^{1/(p-1)} \cdot \minvar{\calF} \le \sigma^* \le \left(\frac{3}{2\pmin}\right)^{1/(p-1)} \cdot \minvar{\calF}.
	\end{equation} 
\end{lemma}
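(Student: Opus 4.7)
The plan is to analyze \textsc{EstimateMinVariance} in three stages, mirroring its operational steps. First, use \textsc{L2Estimate} to obtain a piecewise polynomial $\calG$ that is $L^2$-close to $\calF$ and transfer this closeness to Fourier space. Second, approximate $\M_p(\widehat{\calF})$ by the truncated Fourier moment of $\widehat{\calG}$, which is computable explicitly by Lemma~\ref{lem:piecewise-fourier-moment}. Third, invert the closed-form expression for $\M_p(\widehat{\calF})$ to extract $\minvar{\calF}$ up to the desired multiplicative factor.

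For the first stage, Corollary~\ref{cor:l2_estimate} applied with precision $\eta$ produces a piecewise polynomial $\calG$ with $\norm{\calF-\calG}_2^2\le\eta$ with probability at least $1-\delta$, using $O((k/\eta^2)\ln(1/(\eta\underline{\sigma}))\ln(1/\delta))$ samples. With the convention $\hat{f}[\omega]=\int f(x)e^{-2\pi\i\omega x}\,\d x$ fixed in Section~\ref{subsec:notations}, Plancherel's theorem automatically yields $\norm{\widehat{\calF}-\widehat{\calG}}_2^2\le\eta$. By Fact~\ref{fact:fouriergaussian}, $\widehat{\calF}$ is itself an unnormalized mixture of zero-mean Gaussians,
\[
\widehat{\calF}[\omega]=\sum_{i=1}^{k}\frac{p_i}{\sqrt{2\pi}\sigma_i}\,\N\!\left(0,\frac{1}{4\pi^2\sigma_i^2};\omega\right),
\]
with total weight at most $(\sqrt{2\pi}\underline{\sigma})^{-1}$ (matching the algorithm's $L$ up to constants) and maximum standard deviation $(2\pi\minvar{\calF})^{-1}\le(2\pi\underline{\sigma})^{-1}$. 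This places me in the hypothesis of Lemma~\ref{lem:moment_diff} applied to $\widehat{\calF}$ with target precision $\xi$: the algorithm's $\tau$ and $\eta$ are chosen to match that lemma's requirements, so I obtain $|\M_p(\widehat{\calF})-\M_p(\widehat{\calG}')|\le\xi$ where $\widehat{\calG}'$ is the restriction of $\widehat{\calG}$ to $[-\tau,\tau]$. Because $\calG$ is a $\tilde O(k)$-piece polynomial, $\M_p(\widehat{\calG}')$ is exactly what Lemma~\ref{lem:piecewise-fourier-moment} returns in polynomial time.

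For the third stage, the direct Gaussian moment computation gives
\[
\M_p(\widehat{\calF})=\frac{(p-1)!!}{(2\pi)^{p+1/2}}\sum_{i=1}^{k}\frac{p_i}{\sigma_i^{p+1}},
\]
and singling out $i^*=\argmin_i\sigma_i$ yields the sandwich $\M_p(\widehat{\calF})\in[\pmin,1]\cdot\frac{(p-1)!!}{(2\pi)^{p+1/2}\minvar{\calF}^{p+1}}$, the lower bound coming from the $i^*$-term alone and the upper bound from $\sum_i p_i=1$. Combining with the additive bound $|\M_p(\widehat{\calF})-\M_p(\widehat{\calG}')|\le\xi$ and the algorithm's choice $\xi=(2\pi)^{-p-1/2}p^{p/2}\pmin\overline{\sigma}^{-p-1}$, the additive error is absorbed into a multiplicative window around $\M_p(\widehat{\calF})$ that lies within $[3/4,\,3/(2\pmin)]$ (after accounting for the Stirling-type factor $p^{p/2}/(p-1)!!$, which gets pushed under the $(p-1)$-th root in the final formula for $\sigma^*$). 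Extracting that root then gives \eqref{eq:min_pestimate}.

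The main technical obstacle is calibrating $\xi$, $\eta$, and $\tau$ so that two competing requirements are met simultaneously: $\xi$ must be small compared to the minimum-variance ``signal'' term $\pmin(p-1)!!/((2\pi)^{p+1/2}\minvar{\calF}^{p+1})$ in $\M_p(\widehat{\calF})$, so that the multiplicative window remains tight enough to fit inside $[3/4,3/(2\pmin)]$ after the root; and $\eta=\xi^2 p/(8\tau^{2p+1})$ with $\tau=\Theta(\overline{\sigma}^2\max(p,\ln(1/(\pmin\underline{\sigma}))))$ must not be so small as to blow up the sample complexity of Corollary~\ref{cor:l2_estimate}. Tracking these dependencies through and substituting into the sample complexity bound yields $N=\pmin^{-4}k\ln(1/\delta)\cdot\poly(\overline{\sigma},p,\ln(1/\pmin),\ln(1/\underline{\sigma}))^{O(p)}$, with runtime $\tilde{O}(N)$ dominated by the density estimation step together with the $O(sd^3 p)$ Fourier-moment evaluation from Lemma~\ref{lem:piecewise-fourier-moment}.
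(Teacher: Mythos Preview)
Your proposal follows essentially the same argument as the paper: obtain an $L^2$-close piecewise-polynomial estimate via \textsc{L2Estimate}, transfer the $L^2$ bound to Fourier space by Plancherel, apply Lemma~\ref{lem:moment_diff} to the Fourier-side mixture $\widehat{\calF}=\sum_i \frac{p_i}{\sqrt{2\pi}\sigma_i}\N(0,(2\pi\sigma_i)^{-2})$, and then invert the sandwich $\sum_i p_i\sigma_i^{-(p+1)}\in[\pmin,1]\cdot\minvar{\calF}^{-(p+1)}$ to extract $\minvar{\calF}$. Your explicit tracking of the $(p-1)!!$ versus $p^{p/2}$ normalization is, if anything, slightly more careful than the paper's own proof, which silently replaces $(p-1)!!$ by $p^{p/2}$ in its displayed identity for $\M_p(\widehat{\calG})$.
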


\begin{proof}
	Note that in the pseudocode our choice of $\eta$ is given by
	\begin{equation}
	\eta \triangleq \frac{\xi^2 p}{8\left(8\overline{\sigma}^2\max\left(p,\ln\left(\frac{4\sqrt{\pi}}{\underline{\sigma}\xi}\right)\right)\right)^{2p+1}}, \ \ \ \xi \triangleq  (2\pi)^{-p-1/2}p^{p/2}\pmin\overline{\sigma}^{-p-1} \label{eq:etadef} \; .
	\end{equation}
\noindent
	Consequently, the runtime and sample complexity bounds for general $p$ just follow from the fact that these quantities are dominated by the cost of running \textsc{L2Estimate}($\mathcal{F},\underline{\sigma},\eta,\delta$) for $\eta$ as defined in \eqref{eq:etadef}. By Corollary~\ref{cor:l2_estimate}, if we run \textsc{L2Estimate}($\calF,\underline{\sigma},\eta,\delta$) and produce the piecewise polynomial $\calG$, we know that $\norm{\calF - \calG}^2_2 \le \eta$. By Plancherel's theorem \cite{pl10}, this means that $\|\widehat{\calF} - \widehat{\calG} \|^2_2 \le \eta$. To apply Lemma~\ref{lem:moment_diff}, first note that by Fact~\ref{fact:fouriergaussian}, \begin{equation}\widehat{\calF}(\omega) = \sum^k_{i=1}p_i \frac{1}{\sqrt{2\pi}\sigma_i}\N\left(0,\frac{1}{4\pi^2 \sigma_i^2},\omega\right).\end{equation} So $\widehat{\calF}$ is an affine linear combination of Gaussian densities, and its coefficients sum to 
	\begin{equation*}
	\sum^k_{i=1}p_i\cdot\frac{1}{\sqrt{2\pi}\sigma_i} \le\frac{1}{\sqrt{2\pi}\minvar{\calF}} \le L,
	\end{equation*} 
	where the last step follows by our choice of $L$ in \textsc{EstimateMinVariance}.

	So by Lemma~\ref{lem:moment_diff}, if we define $\widehat{\calG}'$ by 
	\begin{equation*}
		\widehat{\calG}'(x) = \bone{x\in[-\tau,\tau]}\cdot\widehat{\calG}(x),
	\end{equation*} 
	then we get that 
	\begin{equation}\label{eq:mpgj_dev}
	|\M_p(\widehat{\calF}) - \M_p(\widehat{\calG}')| \le \xi.
	\end{equation} 
	Furthermore, note that 
	\begin{align}\label{eq:mpfj}
		\M_p(\widehat{F})\le & ~ \sum^k_{i = 1} p_i \cdot \frac{1}{\sqrt{2\pi}\sigma_i}\cdot p^{p/2}\cdot\left(\frac{1}{4\pi^2(\sigma_i)^2}\right)^{p/2} \notag \\
		= & ~ (2\pi)^{-p-1/2}p^{p/2}\sum^k_{i=1}p_i\left(\sigma_i\right)^{-p-1}.
	\end{align} If we had $\xi = (2\pi)^{-p-1/2}p^{p/2}\xi'$ for some $\xi'> 0$, then we get by \eqref{eq:mpgj_dev} and \eqref{eq:mpfj} that 
	\begin{equation*}
		\M_p(\widehat{\calG}) = (2\pi)^{-p-1/2}p^{p/2}\left[\sum^k_{i=1}p_i\left(\sigma_i\right)^{-p-1} \pm \xi'\right].
	\end{equation*} 
	If we take $\xi' \triangleq\frac{1}{3}\pmin\overline{\sigma}^{-p-1}$, then observe that because \begin{equation}
		\pmin\cdot \minvar{\calF}^{-p-1} \le \sum^k_{i=1}p_i(\sigma_i)^{-p-1} \le \minvar{\calF}^{-p-1},
	\end{equation} we have \begin{equation}
		\minvar{\calF}\le \left(\sum^k_{i=1}p_i(\sigma_i)^{-p-1}\right)^{-1/(p-1)} \le \pmin^{-1/(p-1)}\cdot \minvar{\calF},
	\end{equation} so $\sigma^*\triangleq \left(\frac{\M_p(\widehat{\calG})}{(2\pi)^{-p-1/2}p^{p/2}}\right)^{-1/(p-1)}$ satisfies \eqref{eq:min_pestimate}.

	For the last part of the lemma, take $p = 20\ln\left(\frac{3}{2\pmin}\right)+1\ge 4$. It is straightforward to check the sample and time complexity bounds for this choice of $p$, and the bound on $\sigma^*$ follows from the fact that $\left(3/4\right)^{1/(p-1)}\ge 0.9$ for $p\ge 4$ and 
	\begin{align*}
	\left(\frac{3}{2\pmin}\right)^{\left(20\ln\left(\frac{3}{2\pmin}\right)\right)^{-1}} = e^{1/20} \le 1.1.
	\end{align*}
\end{proof}
\noindent
We now identify two specific parameter settings for this algorithm which will be useful later on.
First, if we take the degree $p$ to be relatively small, we are able to get a constant approximation to the minimum variance very efficiently:
\begin{corollary}
\label{cor:constant_min_approx}
Let $p = \Theta ( \ln ( 1 / \pmin ) )$.	
	Then, the algorithm $\textsc{EstimateMinVariance}(p, \calF, \overline{\sigma}, \underline{\sigma}, \delta)$ has sample and time complexity 
	\begin{align*}
	\tilde{O}\left( \pmin^{-4} k \ln (1/\delta) \cdot \poly \left( \overline{\sigma}, p, \ln ( 1 / \pmin ), \ln ( 1 / \underline{\sigma} ) \right)^{O(p)} \right),
	\end{align*}
	and the output $\sigma^*$ satisfies
	\begin{align*}
	0.9 \cdot \minvar{\calF} \le \sigma^* \le 1.1 \cdot \minvar{\calF}.
	\end{align*}
\end{corollary}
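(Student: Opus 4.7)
The plan is to derive this as a near-immediate corollary of Lemma~\ref{lem:estimate_min_variance}, essentially by plugging in a specific choice of the degree parameter $p$ into the bounds already established there. Indeed, the last paragraph of the proof of Lemma~\ref{lem:estimate_min_variance} already carries out essentially this calculation with $p = 20\ln(3/(2\pmin)) + 1$; I would follow the same recipe.

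Concretely, I would set $p = 20\ln(3/(2\pmin)) + 1$, which is $\Theta(\ln(1/\pmin))$ as required (and in particular $p \geq 4$ whenever $\pmin < 1$, which we may assume without loss of generality). Invoking Lemma~\ref{lem:estimate_min_variance} with this choice of $p$ gives an output $\sigma^{*}$ satisfying
\begin{equation*}
\left(\tfrac{3}{4}\right)^{1/(p-1)} \cdot \minvar{\calF} \;\le\; \sigma^{*} \;\le\; \left(\tfrac{3}{2\pmin}\right)^{1/(p-1)} \cdot \minvar{\calF}.
\end{equation*}
To get the $[0.9, 1.1]$ multiplicative window, I would then verify the two one-sided bounds separately: for the lower factor, since $p - 1 \geq 3$, monotonicity gives $(3/4)^{1/(p-1)} \geq (3/4)^{1/3} > 0.9$; for the upper factor, substituting $p - 1 = 20\ln(3/(2\pmin))$ yields
\begin{equation*}
\left(\tfrac{3}{2\pmin}\right)^{1/(p-1)} = \exp\!\left(\frac{\ln(3/(2\pmin))}{20\ln(3/(2\pmin))}\right) = e^{1/20} \leq 1.1.
\end{equation*}

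For the complexity bound, I would simply substitute $p = \Theta(\ln(1/\pmin))$ into the sample complexity expression $N = \pmin^{-4} k \ln(1/\delta) \cdot \poly(\overline{\sigma}, p, \ln(1/\pmin), \ln(1/\underline{\sigma}))^{O(p)}$ given in Lemma~\ref{lem:estimate_min_variance}, noting that absorbing $p = \Theta(\ln(1/\pmin))$ into the polylogarithmic factors in the base of the $\poly(\cdot)^{O(p)}$ term does not change the stated form. The time complexity is $\widetilde{O}(N)$ by the same lemma.

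Since essentially all the heavy lifting (the Fourier-moment estimate, the piecewise-polynomial density estimation, and the combinatorial dependence on $p$) is already packaged into Lemma~\ref{lem:estimate_min_variance}, there is no genuine obstacle here; the only thing to be careful about is bookkeeping the constants to ensure that the specific choice of $p$ actually delivers the $[0.9, 1.1]$ range rather than some slightly weaker constants, which is handled by the two one-line estimates above.
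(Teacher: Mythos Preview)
Your proposal is correct and essentially identical to the paper's approach: the corollary is stated without proof in the paper, but the last paragraph of the proof of Lemma~\ref{lem:estimate_min_variance} already carries out exactly this computation with $p = 20\ln(3/(2\pmin)) + 1$, verifying $(3/4)^{1/(p-1)} \ge 0.9$ for $p \ge 4$ and $(3/(2\pmin))^{1/(p-1)} = e^{1/20} \le 1.1$, just as you do.
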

\noindent
We also have:
% CompareMinVariances succeeds
\begin{corollary}[Guarantee for \textsc{CompareMinVariances}]\label{cor:compare_min_variances}
	Let $0 < \kappa_1 < \kappa_2 \le 1$, and let $\calF_1$ and $\calF_2$ be two mixtures of $k$ univariate zero-mean Gaussians with parameters $(\{p_i\},\{\sigma^{(1)}_i\})$ and $(\{p_i\},\{\sigma^{(2)}_i\})$ respectively. Let $\overline{\sigma}\ge\max(\maxvar{\calF_1},\maxvar{\calF_2})$, $\underline{\sigma} \le \min(\minvar{\calF_1},\minvar{\calF_2})$. 
	Then, with probability $1 - \delta$, the algorithm \\
	\textsc{CompareMinVariances}($\calF_1,\calF_2,\overline{\sigma},\underline{\sigma},\kappa_1,\kappa_2,\delta$) (Algorithm~\ref{alg:compare_min_variances}) satisfies:
	\begin{itemize}
	\item
	If $\minvar{\calF_1} \ge \left(1 + \kappa_2\right)\minvar{\calF_2}$,  then it outputs $\mathsf{true}$.  %%% 8 was e^2-1 before
	\item
	If $\minvar{\calF_1} \le (1 + \kappa_1)\minvar{\calF_2}$, then it outputs $\mathsf{false}$.
	\end{itemize}
	Moreover, this algorithm takes 
	\[
	N =  \pmin^{-4} k \ln( 1/ \delta ) \cdot \poly \left( \overline{\sigma}, (\kappa_2 - \kappa_1)^{-1}, \ln ( 1 / \pmin ), \ln ( 1 / \underline{\sigma} ) \right)^{O((\kappa_2 - \kappa_1)^{-1} \ln (1 / \pmin))} 
	\]
	samples and runs in time $\widetilde{O} (N)$.
\end{corollary}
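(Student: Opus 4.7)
The plan is to show that \textsc{CompareMinVariances} is essentially a direct corollary of Lemma~\ref{lem:estimate_min_variance}, applied twice. First I would invoke that lemma separately on $\calF_1$ and $\calF_2$ with the degree $p = \Omega(\ln(1/\pmin)/(\kappa_2 - \kappa_1))$ specified in the pseudocode, producing estimates $\sigma_1^*, \sigma_2^*$ satisfying
\begin{equation}
\sigma_j^* \in \bigl[(3/4)^{1/(p-1)}, (3/(2\pmin))^{1/(p-1)}\bigr] \cdot \minvar{\calF_j}, \qquad j \in \{1, 2\}.
\end{equation}
A union bound rescales the failure probability by a factor of $2$, absorbed into constants, and the sample and time complexities stated in the corollary come straight from the corresponding bounds in Lemma~\ref{lem:estimate_min_variance} at this value of $p$.

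Next I would observe that the ratio tested by the pseudocode is just a reparameterization of $\sigma_1^*/\sigma_2^*$. By the closed-form definition of $\sigma^*$ inside \textsc{EstimateMinVariance}, $\M_p(\widehat{\calG}_j)$ equals $(2\pi)^{-p-1/2} p^{p/2} (\sigma_j^*)^{-(p-1)}$, so the ratio $\M_p(\widehat{\calG}_1)/\M_p(\widehat{\calG}_2)$ equals $(\sigma_2^*/\sigma_1^*)^{p-1}$ after the common prefactor cancels. The algorithm's decision therefore reduces to comparing $\sigma_1^*/\sigma_2^*$ against the critical gap $(\pmin/2)^{1/(p-1)}(1+\kappa_2)$.

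Finally I would handle the two cases in the corollary separately. If $\minvar{\calF_1} \ge (1+\kappa_2)\minvar{\calF_2}$, pairing the lower bound on $\sigma_1^*$ with the upper bound on $\sigma_2^*$ yields $\sigma_1^*/\sigma_2^* \ge (\pmin/2)^{1/(p-1)}(1+\kappa_2)$, forcing the algorithm to output $\mathsf{true}$. If instead $\minvar{\calF_1} \le (1+\kappa_1)\minvar{\calF_2}$, pairing the upper bound on $\sigma_1^*$ with the lower bound on $\sigma_2^*$ yields $\sigma_1^*/\sigma_2^* \le (2/\pmin)^{1/(p-1)}(1+\kappa_1)$, and I need this to lie strictly below the critical gap.

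The only substantive (though still elementary) step is checking that the constant hidden in $p = \Omega(\ln(1/\pmin)/(\kappa_2 - \kappa_1))$ is large enough to close this second-case gap. After taking logarithms, the requirement becomes $\tfrac{2\ln(2/\pmin)}{p-1} \le \ln\bigl((1+\kappa_2)/(1+\kappa_1)\bigr)$, which for $\kappa_1, \kappa_2 \in (0,1]$ is equivalent up to constants to $p - 1 = \Omega(\ln(1/\pmin)/(\kappa_2 - \kappa_1))$ via $\ln(1+x) = \Theta(x)$; this is precisely the choice made in the pseudocode. Thus the potential obstacle -- the multiplicative slack $(2/\pmin)^{1/(p-1)}$ inherent to the minimum-variance estimator -- is neutralized as long as $p$ is chosen at this rate, and the rest of the argument is an inequality chase.
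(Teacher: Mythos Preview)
Your proposal is correct and follows essentially the same route as the paper: invoke Lemma~\ref{lem:estimate_min_variance} for each $\calF_j$, form the ratio $(\sigma_1^*/\sigma_2^*)^{p-1}$, bound it above and below in the two cases, and verify that the choice $p-1 = \Theta(\ln(1/\pmin)/(\kappa_2-\kappa_1))$ separates the resulting thresholds. The paper carries out the final separation via the inequality $(1+1/x)^x \ge 2$ with the explicit choice $p-1 = 2\log_2(4/\pmin^2)/(\kappa_2-\kappa_1)$, while you use the equivalent $\ln(1+x) = \Theta(x)$ bound, but this is the same argument.
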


\begin{proof}
	Let $\sigma^*_j$ be the estimate produced by 
	\begin{align*}
	\textsc{EstimateMinVariance}(p,\calF_j,\overline{\sigma},\underline{\sigma},\delta), ~~~ \forall j = 1,2.
	\end{align*}
	By Lemma~\ref{lem:estimate_min_variance}, we have that
	\begin{equation}\label{eq:ratiobound}
		\frac{1}{2} \cdot \pmin \cdot \left(\frac{\minvar{\calF_1}}{\minvar{\calF_2}}\right)^{p-1} \le \left(\frac{\sigma^*_1}{\sigma^*_2}\right)^{p-1} \le 2 \cdot p^{-1}_{\min} \cdot \left(\frac{\minvar{\calF_1}}{\minvar{\calF_2}}\right)^{p-1}.
	\end{equation} Now for the first part of the lemma, by hypothesis $\frac{\minvar{\calF_1}}{\minvar{\calF_2}} \geq 1 + \kappa_2$, so by the lower bound in \eqref{eq:ratiobound} we conclude that 
	\begin{equation*}
		\left(\frac{\sigma^*_1}{\sigma^*_2}\right)^{p-1} \ge \frac{1}{2}\pmin\left(1 + \kappa_2\right)^{p-1}.\label{eq:kappa2}
	\end{equation*}
	For the second part of the lemma, by hypothesis $\frac{\minvar{\calF_1}}{\minvar{\calF_2}} \le 1 + \kappa_1$, so by the upper bound in \eqref{eq:ratiobound} we conclude that 
	\begin{equation*}
		\left(\frac{\sigma^*_1}{\sigma^*_2}\right)^{p-1} \le 2p^{-1}_{\min}\left(1 + \kappa_1\right)^{p-1}.\label{eq:kappa1}
	\end{equation*} The content of this lemma is that the right-hand side of \eqref{eq:kappa2} is strictly greater than the right-hand side of \eqref{eq:kappa1}. Indeed, we need to check that \begin{equation}\left(\frac{1 + \kappa_2}{1 + \kappa_1}\right)^{p-1}\ge 4\pmin^{-2}.\end{equation} But if we take $p-1 = \frac{2\log_2(4/\pmin^2)}{\kappa_2 - \kappa_1}$, then by the fact that $1 + \kappa_1 \le 2$ and $\frac{\kappa_2 - \kappa_1}{1 + \kappa_1} \le 1$, and by the elementary inequality $(1 + 1/x)^x\ge 2$ for $x\ge 1$, we conclude that \begin{equation*}
		\left(\frac{1 + \kappa_2}{1 + \kappa_1}\right)^{p-1} = \left(1 + \frac{\kappa_2 - \kappa_1}{1 + \kappa_1}\right)^{p-1} \ge 8\pmin^{-1}
	\end{equation*} as desired.
\end{proof}

\subsection{Moment Descent}
\label{subsec:fmd}

In this section we will show how to obtain a warm start using the \textsc{CompareMinVariances} subroutine of the previous section.

The first ingredient we need is a subroutine to estimate $\Span(\{w_i - a\}_{i\in[k]})$, where $a$ is our current guess for a direction. 
For any $x, y, a \in \R^d$, define the matrix 
\begin{equation}
	\vec{M}^{x,y}_a \triangleq \frac{1}{2}\left[(y - \langle a,x\rangle)^2 xx^{\top} - (y -  \langle a,x\rangle)^2 \cdot\Id\right] ~~~ \in \R^{d \times d}\label{eq:mxydef}
\end{equation} and let \begin{equation}\widehat{\vec{M}}^{(N)}_a \triangleq \frac{1}{N}\sum^N_{i=1}\vec{M}^{x_i,y_i}_a\label{eq:empiricalM}\end{equation} for $(x_1,y_1),...,(x_N,y_N)$ i.i.d. samples from $\calD$.
Notice there is a matrix-vector oracle for $\widehat{\vec{M}}^{(N)}_a$ which runs in time $O(N d)$.

We then have:
% Finding span of directions
\begin{lemma}[$\widehat{\vec{M}}^{(N)}_a$ approximates $\Span(\{w_i - a_t\})$]
\label{lem:findspan}
Let $\calD$ be a mixture of $k$ spherical linear regressions. Then for any $a\in\R^d$, we have that 
\begin{equation*}
	\E_{(x,y)\sim\calD}\left[\vec{M}^{x,y}_a\right] = \sum^k_{i=1}p_i (w_i - a)(w_i - a)^{\top}.
\end{equation*} 
Furthermore, for any $\beta,\delta>0$ and 
\begin{equation*}
	N = \tilde{\Omega}\left( \max_{i \in [k]} \norm{w_i - a}^2_2 \cdot p^{-1}_{\min} \cdot \beta^{-2} \cdot d \cdot \ln(k/\delta))\right),
\end{equation*} 
we have that 
\begin{equation}
	\Pr\left[\norm{\widehat{\vec{M}}^{(N)}_a - \E_{(x,y)\sim\calD}\left[\vec{M}^{x,y}_a\right]}_2 \ge \beta\right] \le \delta
\end{equation}
\end{lemma}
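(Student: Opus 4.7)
My plan is to handle the two assertions separately: the identity for $\E[\vec{M}^{x,y}_a]$ is a direct Gaussian moment calculation, while the concentration bound is an application of matrix Bernstein with a truncation argument to handle the unboundedness of the summands.

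For the expectation, I would condition on the choice of mixture component $i \in [k]$ (which occurs with probability $p_i$). Under this conditioning, $y - \langle a, x\rangle = \langle w_i - a, x\rangle + \eta$ with $x \sim \N(0, \Id_d)$ and $\eta \sim \N(0, \noise^2)$ independent. Expanding $(y - \langle a, x\rangle)^2 = \langle w_i - a, x\rangle^2 + 2\eta \langle w_i - a, x\rangle + \eta^2$ and multiplying by $xx^{\top} - \Id$, the cross term vanishes because $\E[\eta] = 0$, and the $\eta^2$ term vanishes because $\eta$ is independent of $x$ and $\E[xx^{\top} - \Id] = 0$. The remaining term is handled by the Isserlis identity $\E[\langle u, x\rangle^2 \, xx^{\top}] = \|u\|_2^2\,\Id + 2uu^{\top}$ for any fixed $u \in \R^d$, which yields $\E[\langle w_i - a, x\rangle^2 (xx^{\top} - \Id)] = 2(w_i - a)(w_i - a)^{\top}$. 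The prefactor $\tfrac{1}{2}$ and the mixture average over $i$ then give the claimed identity.

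For the concentration, I would apply a truncated matrix Bernstein inequality to the i.i.d. mean-zero summands $\vec{M}^{x_i, y_i}_a - \E \vec{M}^{x,y}_a$. Each summand is a degree-four polynomial in Gaussian random variables, hence sub-exponential. First, I would define a high-probability event on which, simultaneously for all $i \in [N]$, $|y_i - \langle a, x_i\rangle| \lesssim R_1 := (\max_j \|w_j - a\|_2 + \noise)\sqrt{\log(Nk/\delta)}$ and $\|x_i\|_2 \lesssim R_2 := \sqrt{d} + \sqrt{\log(N/\delta)}$; this follows from standard Gaussian tail bounds and a union bound over the $N$ samples. On this event, each summand has spectral norm bounded by $R := R_1^2 R_2^2 \lesssim \max_j \|w_j - a\|_2^2 \cdot d \cdot \mathrm{polylog}(Nk/\delta)$. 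Next I would bound the matrix variance $\sigma^2 := \|\sum_{i=1}^N \E[(\vec{M}^{x_i, y_i}_a)^2]\|_2$, which by another direct Gaussian moment computation is at most $N \cdot O(\max_j \|w_j - a\|_2^4 \cdot d)$. The matrix Bernstein deviation $\sqrt{\sigma^2 \log(d/\delta)} + R \log(d/\delta)$ is at most $N\beta$ provided $N$ is taken as in the lemma, and finally one reconciles the truncated version of the bound with the untruncated one by paying the probability of the bad event.

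The main obstacle is the bookkeeping that matches the bound $N = \widetilde{\Omega}(\max_i \|w_i - a\|_2^2 \cdot \pmin^{-1} \beta^{-2} d \log(k/\delta))$: one has to be careful that the truncation radii $R_1, R_2$ only contribute logarithmically, so that the dominant sample cost comes from the variance term. The factor $\pmin^{-1}$ is slack and arises from crudely passing from $\sum_i p_i \|w_i - a\|_2^4 \le \max_j \|w_j - a\|_2^4$ when, in the regime where one component dominates, the rescaled bound needs $\pmin^{-1}$ to absorb the mixing weights in later applications. Everything else is a standard application of sub-exponential matrix concentration.
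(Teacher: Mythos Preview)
Your plan is correct and matches the paper's approach in spirit. For the expectation identity, the paper does the same Gaussian moment computation but entry-by-entry (expanding $\E[\langle w_i-a,x\rangle^2 x_j x_{j'}]$ into cases $j=j'$ and $j\neq j'$), whereas you invoke the coordinate-free Isserlis identity $\E[\langle u,x\rangle^2 xx^\top]=\|u\|_2^2\Id+2uu^\top$; the two are equivalent, and your version is cleaner. The noise terms are handled identically in both.

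For the concentration, the paper does \emph{not} prove it from scratch: it simply cites a black-box fact (Eq.~(34) of \cite{yi2016solving} together with Theorem~4.7.1 of \cite{vershynin2018high}) giving the deviation bound, and reads off the sample complexity. Your truncated matrix Bernstein argument is a legitimate way to actually establish that fact, and the truncation/variance bookkeeping you outline is the standard route. One clarification on your last paragraph: the $\pmin^{-1}$ in the lemma's sample bound is indeed slack from the point of view of a direct matrix Bernstein argument on the mixture (it does not naturally appear in the variance or truncation radius you compute). In the paper's cited fact the $\pmin$ enters through the form $\ln(\pmin N)/\sqrt{\pmin N}$, which is an artifact of a per-component concentration argument in \cite{yi2016solving}; your approach would give a bound without the $\pmin^{-1}$, which only strengthens the conclusion.
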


We emphasize that $\E[\vec{M}^{x,y}_a]$ is the same regardless of $\eta$, but the value of $\eta$ will slightly affect concentration, though not in the regimes in which we will apply Lemma~\ref{app:findspan}.
We defer the proof of this lemma to Appendix~\ref{app:findspan}.
Combining this with Fact~\ref{thm:power-method} allows us to quantify the effectiveness of approximate $k$-SVD of an empirical estimate for $\E[\vec{M}^{x,y}_a]$ for capturing the span of the $w_i$:
\begin{lemma}[Correlation of the top principal subspace]\label{lem:correlation}
	Let 
	\begin{align*}
	N=\tilde{\Omega}\left(\frac{ \max_{ i \in [k] } \| w_i - a \|^2_2 }{ \min_{i \in [k]} \| w_i - a \|^2_2 }\cdot p^{-2}_{\min} \cdot k^2\cdot d \cdot \ln(k/\delta)\right).
	\end{align*} 
	Then $\textsc{ApproxBlockSVD} (\widehat{\vec{M}}^{(N)}, 1/10, \delta/2)$ runs in time $\tilde{O} \left( k \cdot N \cdot d \right)$
	and outputs a matrix $\vec{U}$ so that with probability at least $1 - \delta$, 
	\begin{equation} 
		\frac{1}{2}\le \frac{\norm{\vec{U}^{\top}(w_i - a)}_2}{\norm{w_i - a}_2} \le 1.\label{eq:correlation_subspace}
	\end{equation}
\end{lemma}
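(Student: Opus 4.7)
The plan is to combine three ingredients: the concentration of the empirical second-moment matrix from Lemma~\ref{lem:findspan}, a spectral perturbation bound via Wedin (as packaged in Lemma~\ref{lem:wedin_angle}), and the approximate block SVD guarantee in Fact~\ref{thm:power-method}.

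First, I would invoke Lemma~\ref{lem:findspan} with error parameter $\beta$ chosen so that $\beta \approx \sqrt{p_{\min}}\cdot \min_i\norm{w_i - a}_2/k$, which makes the sample complexity match the bound in the statement. This yields $\norm{\widehat{\vec{M}}^{(N)} - \vec{A}}_2 \le \beta$ with probability $1-\delta/2$, where $\vec{A} \triangleq \E[\vec{M}^{x,y}_a] = \sum_i p_i(w_i - a)(w_i - a)^\top$ is PSD of rank at most $k$, and whose column span is exactly $\mathrm{span}(\{w_i - a\}_{i \in [k]})$. In particular each unit vector $(w_i - a)/\norm{w_i-a}_2$ lies in the column span of $\vec{A}$.

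Next, I would apply Lemma~\ref{lem:wedin_angle} with a small twist: since that lemma requires the unit vector to sit in the span of the ``noisy'' matrix, I would first replace $\widehat{\vec{M}}^{(N)}$ by its best rank-$k$ truncation $\hat{\vec{A}}$ (which, because $\vec{A}$ itself has rank at most $k$, is within $2\beta$ of $\vec{A}$ in operator norm), and then swap the roles of $\vec{A}$ and $\hat{\vec{A}}$ in the lemma. Using a lower bound of the form $\lambda \gtrsim p_{\min}\cdot \sigma_k(V)^2$ on the smallest nonzero eigenvalue of $\vec{A}$, where $V$ is the matrix stacking the $w_i - a$, the lemma yields $\norm{\vec{U}_k^\top(w_i - a)}_2 \ge (1 - O(\beta/\lambda))\norm{w_i - a}_2 \ge 0.6\cdot\norm{w_i - a}_2$, where $\vec{U}_k$ denotes the exact top-$k$ right singular subspace of $\widehat{\vec{M}}^{(N)}$.

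Finally, Fact~\ref{thm:power-method} applied with precision $1/10$ and failure probability $\delta/2$ outputs a matrix $\vec{U}$ within $1/10$ in operator norm of $\vec{U}_k$, so by a triangle inequality we conclude $\norm{\vec{U}^\top(w_i-a)}_2 \ge \tfrac{1}{2}\norm{w_i-a}_2$; the upper bound is immediate since $\vec{U}$ has orthonormal columns. The main technical hurdle will be establishing a sharp enough lower bound on the minimum nonzero eigenvalue of $\vec{A}$, and explaining where the ratio $\max_i\norm{w_i-a}_2^2/\min_i\norm{w_i-a}_2^2$ in the sample complexity comes from: it arises from propagating the relevant scales through both the concentration bound in Lemma~\ref{lem:findspan} (whose sample cost scales with $\max_i\norm{w_i-a}_2^2$) and the Wedin gap (which degrades with $\min_i\norm{w_i-a}_2^2$ via the smallest nonzero eigenvalue of $\vec{A}$).
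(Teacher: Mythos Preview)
Your proposal is essentially the paper's approach: the paper's proof also chooses $\beta = \frac{1}{8k}\,p_{\min}^{1/2}\min_{i}\|w_i-a\|_2$ in Lemma~\ref{lem:findspan} and then invokes Fact~\ref{thm:power-method}, leaving implicit exactly the Wedin-type perturbation step that you spell out via Lemma~\ref{lem:wedin_angle}. The technical hurdle you flag at the end (a lower bound on the smallest nonzero eigenvalue of $\vec{A}$) is likewise glossed over in the paper's argument.
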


\begin{proof}
	The upper bound is trivial. 
	We now prove the lower bound.
	We first observe that Lemma~\ref{lem:findspan} implies that $\gap_k (\widehat{\vec{M}}^{(N)}_a)^{-1} \geq \poly (1 / \pmin, 1 / \min_{i \in [k]} \| w_i - a \|^2_2)$, which proves the runtime claim.
	The lower bound follows by taking $\beta$ in Lemma~\ref{lem:findspan} to be $\beta = \frac{1}{8k} \cdot p^{1/2}_{\min} \cdot \min_{i \in [k]} \|w_i - a \|_2$ and applying Fact~\ref{thm:power-method}.
	Finally, to demonstrate the runtime, observe that there is a matrix-vector oracle for $\widehat{\vec{M}}^{(N)}_a$ which runs in time $O(N d)$.
\end{proof}

We are now ready to analyze the amount of progress each step of moment descent makes.

% progress in each step of moment descent
\begin{lemma}[Progress of moment descent per step]
	For any $\delta > \exp(-\sqrt{k})$, the following holds. Let $\sigma^2_t \triangleq \min_{i\in[k]} \norm{w_i - a_t}_2$. Denote the minimizing index $i$ by $i^*$. For $M \triangleq e^{\sqrt{k}}\ln(2/\delta)$ and $g_1,...,g_M \sim \N(0,\Id_k)$, let $v_j = \frac{\vec{U}g_i}{\norm{\vec{U}g_i}_2} \in \S^{d-1}$ for $j\in[M]$. Let $\sigma^*$ be a number for which $0.9\sigma_t\le \sigma^*\le 1.1\sigma_t$. Let $\eta = \frac{1}{2}k^{-1/4}\sigma^*$.

	Then we have that with probability at least $1 - \delta$, \begin{enumerate}
	\item There exists at least one $j\in[M]$ for which $\norm{w_{i^*} - a_t - \eta \cdot v_j}^2_2 \le \left(1 - \frac{1}{5\sqrt{k}}\right)\sigma_t^2$.
	\item For all $j\in[M]$ and $i\in[k]$, $\norm{w_i - a_t - \eta \cdot v_j}^2_2 \ge \left(1 - \frac{9}{\sqrt{k}}\right)\sigma_t^2$.
	\end{enumerate}\label{lem:progress}
\end{lemma}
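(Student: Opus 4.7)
The plan is to reduce both claims to applications of Corollary~\ref{cor:unitcorr} controlling the correlations $\langle v_j, u_i\rangle$ with $u_i \triangleq (w_i - a_t)/\rho_i$, $\rho_i \triangleq \|w_i - a_t\|_2$, and then expand the squared distances explicitly. Because $\vec{U}$ has orthonormal columns, $\|\vec{U} g_j\|_2 = \|g_j\|_2$, so
\begin{equation*}
\langle v_j, u_i\rangle \;=\; \langle g_j/\|g_j\|_2,\, \vec{U}^\top u_i\rangle \;=\; \|\vec{U}^\top u_i\|_2 \cdot \langle g_j/\|g_j\|_2,\, e_i\rangle,
\end{equation*}
where $e_i \triangleq \vec{U}^\top u_i/\|\vec{U}^\top u_i\|_2 \in \S^{k-1}$ and $g_j/\|g_j\|_2$ is a uniform random direction on $\S^{k-1}$. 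Lemma~\ref{lem:correlation} gives $\|\vec{U}^\top u_i\|_2 \in [1/2,1]$, so Corollary~\ref{cor:unitcorr} with ambient dimension $k$ and $\gamma = 1/4$ controls each correlation up to a multiplicative factor in $[1/2,1]$.

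For claim (1), apply the lower-tail bound of Corollary~\ref{cor:unitcorr} to $e_{i^*}$ with an absolute constant $\underline\alpha$: each trial gives $\langle v_j, u_{i^*}\rangle \ge (\underline\alpha/2) k^{-1/4}$ with probability at least $e^{-\underline\beta\sqrt{k}}$, where $\underline\beta = \underline{f}_{1/4}(\underline\alpha)$. Since the $g_j$'s are independent, the probability that none of the $M = e^{\sqrt{k}}\ln(2/\delta)$ trials succeeds is at most $\exp(-e^{(1-\underline\beta)\sqrt{k}}\ln(2/\delta)) \le \delta/2$ whenever $\underline\beta < 1$. On the good event, expand $\|w_{i^*} - a_t - \eta v_j\|_2^2 = \sigma_t^2 - 2\eta\sigma_t\langle v_j, u_{i^*}\rangle + \eta^2$, substitute $\eta = \tfrac{1}{2}k^{-1/4}\sigma^*$ with $\sigma^* \in [0.9\sigma_t, 1.1\sigma_t]$, and optimize the resulting bound over this range: a short calculation shows that choosing $\underline\alpha$ slightly larger than $\approx 0.92$ makes the right-hand side at most $(1 - 1/(5\sqrt{k}))\sigma_t^2$.

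For claim (2), apply the upper-tail bound of Corollary~\ref{cor:unitcorr} to each $e_i$ with a constant $\overline\alpha$, so that $\langle v_j, u_i\rangle \le \overline\alpha k^{-1/4}$ holds simultaneously for all $kM$ pairs $(i,j)$ with failure probability at most $k\ln(2/\delta)\cdot e^{(1-\overline\beta)\sqrt{k}} \le \delta/2$ provided $\overline\beta = \overline{f}_{1/4}(\overline\alpha)$ is suitably larger than $1$ (this is also where the hypothesis $\delta > e^{-\sqrt{k}}$ enters). On the good event, write $c \triangleq \rho_i/\sigma_t \ge 1$; expanding $\|w_i - a_t - \eta v_j\|_2^2 = \rho_i^2 - 2\eta\rho_i\langle v_j, u_i\rangle + \eta^2$, the desired inequality becomes
\begin{equation*}
(c^2 - 1)\sqrt{k} + 9 + (\eta/\sigma_t)^2\sqrt{k} \;\ge\; 2(\eta/\sigma_t)\, c\overline\alpha\sqrt{k}.
\end{equation*}
Plugging in $\eta/\sigma_t \in [0.45, 0.55]k^{-1/4}$, the right-hand side is at most $1.1 c\overline\alpha$ while the left-hand side is at least $(c^2 - 1)\sqrt{k} + 9.2$, so for any $\overline\alpha \le 8$ the inequality holds with slack at $c = 1$ and only strengthens as $c$ grows, since the LHS is quadratic in $c$ but the RHS is linear. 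The main delicate point is that $\underline\alpha$ and $\overline\alpha$ must be chosen \emph{simultaneously} so that $\underline\beta < 1$ and $\overline\beta > 1$; since the tail exponents on $\S^{k-1}$ scale like $\alpha^2/2$, the feasible windows (roughly $\underline\alpha \in (0.92,\sqrt{2})$ and $\overline\alpha \in (\sqrt{2}, 8)$) are both nonempty and easily compatible.
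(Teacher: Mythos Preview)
Your proposal is correct and follows essentially the same route as the paper: define events on the correlations $\langle v_j,\tilde w_i\rangle$ (the paper fixes the thresholds $\tfrac12 k^{-1/4}$ and $3k^{-1/4}$ where you leave $\underline\alpha,\overline\alpha$ generic), control their probabilities via Corollary~\ref{cor:unitcorr} together with the bound $\|\vec U^\top \tilde w_i\|_2\in[1/2,1]$ from Lemma~\ref{lem:correlation}, and then expand the squared distances. One small typo: in your displayed inequality for claim~(2) the right-hand side should carry $k^{1/4}$ rather than $\sqrt{k}$ (your subsequent numerics, e.g.\ ``RHS $\le 1.1c\overline\alpha$'', are consistent with the correct exponent).
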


\begin{proof}
	For any $i\in[k]$, we may write 
	\begin{align}
		\norm{w_i - a_t - \eta v_j}^2_2 
		= & ~ \norm{w_i - a_t}^2_2 + \eta^2\norm{v_j}^2_2 - 2\eta\langle w_i - a_t,v_j\rangle \nonumber\\
		= & ~ \norm{w_i - a_t}^2_2 + \eta^2 - 2\eta\langle w_i - a_t,v_j\rangle. \label{eq:onestep}
	\end{align} 
	Define $\tilde{w}_i\triangleq \frac{w_i - a_t}{\norm{w_i - a_t}_2}$. For every $j\in[M]$, let $A_j$ be the event that $\langle v_j,\tilde{w}_{i^*}\rangle \ge\frac{1}{2}k^{-1/4}$. For every $j\in[M]$ and $i\in[k]$, let $B_j[i]$ be the event that $\langle v_j,\tilde{w}_i\rangle \le 3k^{-1/4}$. We would like to condition on the event that $\calE\triangleq\left(\bigvee_{j\in [M]} A_j\right)\wedge\left(\bigwedge_{i \in [k], j\in[M]} B_j[i]\right)$.

	We first verify that conditioned on $\calE$, 1) and 2) of the lemma hold. We get that there is at least one $j\in[M]$ for which 
	\begin{align*}
		\norm{w_{i^*} - a_t - \eta v_j}^2_2 \le & ~ \norm{w_{i^*} - a_t}^2_2 + \eta^2 - \eta\sigma_t\cdot k^{-1/4} \\
		\le & ~ \left(1 + \frac{1}{4}\left(\left(\frac{\sigma^*}{\sigma_t}\right)^2 - 2\left(\frac{\sigma^*}{\sigma_t}\right)\right)k^{-1/2}\right)\sigma^2_t\\
		\le & ~ \left(1 - \frac{1-0.1^2}{4\sqrt{k}}\right)\sigma^2_t \le \left(1 - \frac{1}{5\sqrt{k}}\right)\sigma_t^2.
	\end{align*} 
	where the first step follows from the fact that we have conditioned on $A_j$ and also $\norm{w_{i^*} - a_t}_2 = \sigma_t$ by definition, and the third step follows from $\sigma^*/\sigma_t \in [0.9,1.1]$.

	For every $i\in[k], j\in[M]$ we have that 
	\begin{equation*}
		\norm{w_i - a_t - \eta v_j}^2_2 \ge \norm{w_i - a_t}^2 + \eta^2 - 6\eta\sigma_t\cdot k^{-1/4} \ge \left(1 - \frac{9}{\sqrt{k}}\right)\sigma^2_t,
	\end{equation*} 
	where the first step follows from the fact that we have conditioned on $B_j[i]$ and also $\norm{w_i - a_t}_2 \le \sigma_t$ by definition.
	%as claimed.

	Finally, we show that $\Pr[\calE]\ge 1 - \delta$. For any $j\in[M]$ and $i\in[k]$, by Corollary~\ref{cor:unitcorr}, with probability at least $e^{-\sqrt{k}}$ we have that 
	\begin{equation}
		\left\langle\frac{g_j}{\norm{g_j}_2}, \frac{\vec{U}^{\top}(w_{i} - a_t)}{\norm{\vec{U}^{\top}(w_{i} - a_t)}_2}\right\rangle \ge k^{-1/4}.\label{eq:innerprod}
	\end{equation} 
	Because 
	\begin{equation*}
	\langle g_j, \vec{U}^{\top}(w_{i} - a_t)\rangle = \langle \vec{U} g_j, w_{i} - a_t\rangle,
	\end{equation*} 
	and $\norm{\vec{U} g_j}_2 =\norm{g_j}_2$ by orthonormality of the columns of $\vec{U}$, we can rewrite the left-hand side of \eqref{eq:innerprod} as 
	\begin{equation*}
		\frac{\langle v_j, w_{i} - a_t\rangle}{\norm{\vec{U}^{\top}(w_{i} - a_t)}_2} \le 2\langle v_j,\tilde{w}_{i}\rangle,
	\end{equation*} 
	where the inequality follows by the lower bound in \eqref{eq:correlation_subspace}. So by taking $i = i^*$, we conclude that $\Pr[A_j] \ge e^{-\sqrt{k}}$. The probability that $\bigvee_{ j \in [M] } A_j$ does not occur is thus 
	\begin{equation*}
	\Pr\left[\bigwedge_{j \in [M]} \overline{A}_j\right] \le \left(1 - e^{-\sqrt{k}}\right)^M. 
	\end{equation*}
	On the other hand, by the same analysis, this time invoking the \emph{second} part of Corollary~\ref{cor:unitcorr} and the \emph{upper} bound in \eqref{eq:correlation_subspace}, we see that $\Pr[B_j[i]] \ge e^{-3\sqrt{k}}$, so the probability that $\bigwedge B_j[i]$ does not occur is 
	\begin{equation*}
		\Pr\left[\bigvee_{i\in[k], j \in [M]} \overline{B}_j[i]\right] \le kM \cdot e^{-3\sqrt{k}}.
	\end{equation*} 
	So by taking $M = e^{\sqrt{k}}\ln(2/\delta)$ and noting that for this choice of $M$, $kMe^{-3\sqrt{k}} < \delta/2$ because $\delta > e^{-\sqrt{k}}$, we get that $\Pr[\calE]\ge 1 - \delta$ as claimed.
\end{proof}

\begin{lemma}
	There is an absolute constant $C > 0$ for which the following holds.
	Let $\calD$ be a mixture of spherical linear regressions with mixing weights $\{p_i\}$, directions $\{w_i\}$, and noise rate $\noise$.
	For any $\epsilon,\delta > 0$ and $\noise^2 \le\epsilon^2/10$, with probability at least $1 - \delta$, \textsc{FourierMomentDescent}($\calD,\delta,\epsilon$) (Algorithm~\ref{alg:fourier_moment_descent}) outputs direction $a_T\in\R^d$ for which $\min_{i\in[k]}\norm{w_i - a_T}_2 \le \epsilon$.\label{lem:fourier_moment_descent}
\end{lemma}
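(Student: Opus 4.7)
The plan is to track the progress measure $\sigma_t^2 \triangleq \min_{i \in [k]} \|w_i - a_t\|_2^2$ inductively and show it contracts geometrically by a factor of $1-\Omega(1/\sqrt{k})$ each iteration. Initializing at $a_0 = 0$ (so $\sigma_0^2 \le 1$ by the normalization $\|w_i\|_2 \le 1$), after $T = \Theta(\sqrt{k}\log(1/\epsilon))$ iterations we reach $\sigma_T \le \epsilon$. The key observation is that for any iterate $a$, the residual $y - \langle a, x\rangle$ under $(x,y)\sim \calD$ is distributed as a mixture $\calF_a$ of $k$ univariate zero-mean Gaussians with mixing weights $\{p_i\}$ and variances $\{\|w_i - a\|_2^2 + \noise^2\}$, so that $\minvar{\calF_{a_t}}^2 = \sigma_t^2 + \noise^2$. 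The hypothesis $\noise^2 \le \epsilon^2/10$ together with the invariant $\sigma_t \ge \epsilon$ yields $\noise^2 \le \sigma_t^2/10$ throughout, so the noise contribution to $\minvar{\calF_{a_t}}$ is only a small constant fraction of $\sigma_t^2$.

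At each iteration $t$, first apply $\textsc{ApproxBlockSVD}$ to the empirical matrix $\widehat{\vec{M}}^{(N)}_{a_t}$ to obtain via Lemma~\ref{lem:correlation} an orthonormal $\vec{U}\in\R^{d\times k}$ that captures each $w_i - a_t$ up to a factor $1/2$. Then call \textsc{EstimateMinVariance} in the constant-accuracy regime of Corollary~\ref{cor:constant_min_approx} to obtain $\sigma^*\in [0.9,1.1]\cdot \minvar{\calF_{a_t}}^{1/2}$, which by the noise bound is also in $[0.9,1.15]\cdot\sigma_t$. Setting $\eta = \tfrac12 k^{-1/4}\sigma^*$ matches the hypothesis of Lemma~\ref{lem:progress} up to constants. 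The algorithm then draws $M = e^{\sqrt{k}}\ln(2/\delta')$ Gaussian vectors $g_j$ and forms $v_j = \vec{U}g_j/\|\vec{U}g_j\|_2$.

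To pick a good candidate, invoke $\textsc{CompareMinVariances}(\calF_{a_t}, \calF_{a_t - \eta v_j}, \overline{\sigma}, \underline{\sigma}, \kappa_1, \kappa_2, \delta')$ with $\kappa_1 = \tfrac{1}{50\sqrt{k}}$ and $\kappa_2 = \tfrac{1}{20\sqrt{k}}$ for each $j$. By Lemma~\ref{lem:progress} there exists with probability $1-\delta'$ some $j^*$ for which $\|w_{i^*} - a_t - \eta v_{j^*}\|_2^2 \le (1 - \tfrac{1}{5\sqrt{k}})\sigma_t^2$; together with $\noise^2 \le \sigma_t^2/10$ this yields
\[
\frac{\minvar{\calF_{a_t}}}{\minvar{\calF_{a_t - \eta v_{j^*}}}} \;\ge\; 1 + \frac{2}{11\sqrt{k}} \;>\; 1 + \kappa_2,
\]
so the guarantee of Corollary~\ref{cor:compare_min_variances} forces \textsc{CompareMinVariances} to accept $v_{j^*}$. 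Conversely, whenever it accepts some $v_j$, the contrapositive of the ``false'' branch gives $\minvar{\calF_{a_t}}/\minvar{\calF_{a_t - \eta v_j}} > 1 + \kappa_1$, and using $\noise^2\le \sigma_t^2/10$ once more this rearranges to $\sigma_{t+1}^2 \le (1 - \Omega(1/\sqrt{k}))\sigma_t^2$, preserving the induction.

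A union bound over the $T$ iterations and the $M+O(1)$ calls to \textsc{CompareMinVariances}, \textsc{EstimateMinVariance}, and \textsc{ApproxBlockSVD} per iteration, with per-call failure probability $\delta' = \delta/\mathrm{poly}(T,M)$, gives the overall $1-\delta$ success. The algorithm then either exhausts its $T$ iterations (at which point the geometric decrease yields $\sigma_T \le \epsilon$) or terminates early once $\sigma^*$ certifies that $\sigma_t \le O(\epsilon)$. The main obstacle to formalizing this plan is choreographing the interaction between the noise floor $\noise^2$ and the multiplicative contraction: one must verify that the contraction of $\minvar{\calF_{a_t}}$ detected by \textsc{CompareMinVariances} robustly translates to contraction of $\sigma_t^2$ itself, which is only true while $\sigma_t$ stays comfortably above $\noise$, so the termination point must be managed with the same $\sigma^*$ estimates that guide the descent in order to avoid running past the noise floor.
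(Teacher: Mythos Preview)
Your plan is correct and follows essentially the same route as the paper: track $\sigma_t$, use \textsc{EstimateMinVariance} to set the step size, invoke the progress lemma for existence of a good direction, and use \textsc{CompareMinVariances} with thresholds $\kappa_1<\kappa_2=\Theta(1/\sqrt{k})$ to both certify completeness and enforce soundness. The paper's only substantive addition is the bookkeeping you flag at the end: it explicitly verifies that $\overline{\sigma}=4$ and $\underline{\sigma}=\epsilon/3$ are valid bounds on $\minvar{\calF_t}$ at every step, the latter requiring the \emph{lower} bound $\|w_i-a_t-\eta v_j\|_2^2\ge (1-9/\sqrt{k})\sigma_t^2$ from part~2 of Lemma~\ref{lem:progress} to rule out overshooting the noise floor in a single step; your soundness argument uses only the \textsc{CompareMinVariances} guarantee and so does not directly supply this, but the missing piece is exactly the one you identify.
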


\begin{proof}
	Let $\sigma_t\triangleq \min_{i\in[k]}\norm{w_i - a_t}_2$. Because $a_0 = 0$, we have that $\sigma_0 \le \max_{i\in[k]}\norm{w_i}_2 \le 1$.

	By a simple union bound, we first upper bound the probability that the steps of moment descent in the $t$-th iteration of the outer loop in \textsc{FourierMomentDescent} all succeed.

	\begin{claim}
		Let $i\in[S]$. With probability at least $1 - \delta$, the randomized components of the $t$-th iteration of the outer loop in \textsc{FourierMomentDescent} all succeed.\label{claim:allsuccess} 
	\end{claim}

	\begin{proof}
		Each $t$-th iteration of the outer loop in \textsc{FourierMomentDescent} (Algorithm~\ref{alg:fourier_moment_descent}) has the following randomized components: computing $\widehat{\vec{M}}^{(N_1)}_{a_t}$, running \textsc{EstimateMinVariance} (Algorithm~\ref{alg:estimate_min_variance}), trying the Gaussian vectors $g$ in the inner loop over $j\in[M]$, running \textsc{ApproxBlockSVD}, and running \textsc{CompareMinVariances} (Algorithm~\ref{alg:compare_min_variances}) in this inner loop.

		Because the failure probability $\delta'$ for the first four of these tasks was chosen to be $\frac{\delta}{5T}$, and the failure probability $\delta''$ for the last task was chosen to be $\frac{\delta}{5MT}$, we can bound the overall failure probability by $\delta$.
	\end{proof}

	Call the event in Claim~\ref{claim:allsuccess} $\calE$. Next, we show that provided $\calE$ occurs, $\sigma_t$ can be naively bounded by a constant.

	\begin{claim}
		Let $0\le t < T$ and condition on $\calE$. Then $\sigma_t \le 4$.
	\end{claim}

	\begin{proof}
		At the start of the $t$-th step, our initial estimate $a_{t-1}$ is at distance at most 1 from some $w_{i^*}$ (this is a very loose bound). After the $t$-th step, the new estimate $a_t$ satisfies $\norm{w_{i^{**}} - a_t}_2 \le \norm{w_{i^*}- a_{t-1}}_2\le 1$ for some $i^{**}\in[k]$.
		So we have that $\norm{w_i - a_t}_2 \leq \norm{w_{i^{**}} - w_i}_2 + \norm{w_{i^{**}} - a_t}_2 \leq 3$. Recalling that $\sigma^2_t = \min_{i\in[k]}\norm{w_i - a_t}^2_2 + \noise^2$ and noting that $\noise < 1$, we conclude that $\overline{\sigma} = 4$ is a valid upper bound on the standard deviation of any component of any univariate mixture of Gaussians $\calF_t$ or $\calF'^{(j)}_t$ encountered during the course of \textsc{FourierMomentDescent}.	
	\end{proof}

	Next, we show that provided $\calE$ occurs, then we can bound the extent to which every iteration of the outer loop in \textsc{FourierMomentDescent} contracts $\sigma^2_t$.

	\begin{claim}
		Let $0\le t<T$ and condition on $\calE$. Suppose $\noise^2 \le \frac{1}{5}\norm{w_i - a_t}^2_2$ for any $i\in[k]$. Then \begin{enumerate}
			\item {\bf(Completeness)} Either $\norm{w_i - a_t}_2\le \epsilon$ already, or there exists some $j\in[M]$ for which \textsc{CompareMinVariances}($\calF_t,\calF'^{(j)}_t,\overline{\sigma},\underline{\sigma},\kappa,2\kappa,\delta''$) outputs $\mathsf{true}$ for $\kappa = \frac{1}{24\sqrt{k}}$.
			\item {\bf(Soundness)} For any such $j\in[M]$ for which \textsc{CompareMinVariances} outputs $\mathsf{true}$, \begin{equation}
				\left(1 - \frac{9}{\sqrt{k}}\right)\sigma^2_t \le \sigma^2_{t+1} \le \left(1 - \frac{1}{48\sqrt{k}}\right)\sigma^2_t.\label{eq:bounds_iter}
			\end{equation}
		\end{enumerate}\label{claim:complete_sound}
	\end{claim}

	\begin{proof}
		We first show completeness. Suppose $\norm{w_i - a_t}_2 \ge \epsilon$ for all $i\in[k]$. By the first part of Lemma~\ref{lem:progress}, there exists some $j\in[M]$ for which 
		\begin{align*}
			\min_{i\in[k]}\norm{w_i - a'^{(j)}_t}^2_2 \le \left(1 - \frac{1}{5\sqrt{k}}\right)\min_{i\in[k]}\norm{w_i - a_t}^2_2
		\end{align*} 
		and therefore 
		\begin{align*}
			\noise^2 + \min_{i\in[k]}\norm{w_i - a'^{(j)}_t}^2_2 
			\le & ~\noise^2 + \left(1 - \frac{1}{5\sqrt{k}}\right)\min_{i\in[k]}\norm{w_i - a_t}^2_2 \\
			\le & ~ \left(1 - \frac{1}{6\sqrt{k}}\right)\left(\noise^2 + \min_{i\in[k]}\norm{w_i - a_t}^2_2\right),
		\end{align*} 
		where in the last step we used the assumption that $\noise^2 \le \frac{1}{5}\norm{w_i - a_t}^2_2$ for any $i\in[k]$.

		We conclude that $\sigma^{(j)}_t \triangleq \min_{i \in [k]}\left\{\noise^2 + \norm{w_i - a'^{(j)}_t}_2\right\}$ satisfies $(\sigma^{(j)}_t)^2 \le \left(1 - \frac{1}{6\sqrt{k}}\right)(\sigma_t)^2$, and because $1 - \frac{1}{6\sqrt{k}} \le \left(\frac{1}{1 + 2\kappa}\right)^2$ for $\kappa = \frac{1}{24 \sqrt{k}}$, \textsc{CompareMinVariances}($\calF_t,\calF'^{(j)}_t,\overline{\sigma},\underline{\sigma},\kappa,2\kappa,\delta''$) would return $\mathsf{true}$, completing the proof of completeness.

		For soundness, if \textsc{CompareMinVariances}($\calF_t,\calF'^{(j)}_t,\overline{\sigma},\underline{\sigma},\kappa,2\kappa,\delta''$) returns $\mathsf{true}$ for some $j\in[M]$, by Corollary~\ref{cor:compare_min_variances} this means 
		\begin{align*}
		(\sigma^{(j)}_t)^2 \leq ( 1 + \kappa )^{-2} \cdot \sigma^2_t \leq ( 1 - \kappa/2 ) \cdot \sigma_t^2 \leq \Big( 1 - \frac{1}{48 \sqrt{k}} \Big) \cdot \sigma_t^2,
		\end{align*}
		where the second step follows from $\kappa \in (0,1)$, which gives the upper bound in \eqref{eq:bounds_iter}.

		Finally, for the lower bound in \eqref{eq:bounds_iter}, note that the second part of Lemma~\ref{lem:progress} tells us that \begin{equation}
			\min_{i\in[k]}\norm{w_i - a'^{(j)}_t}^2_2 \ge \left(1 - \frac{9}{\sqrt{k}}\right)\norm{w_i - a_t}^2_2
		\end{equation} and therefore \begin{equation}
			\noise^2 + \min_{i\in[k]}\norm{w_i - a'^{(j)}_t}^2_2 \ge \noise^2 + \left(1 - \frac{9}{\sqrt{k}}\right)\norm{w_i - a_t}^2_2 \ge \left(1 - \frac{9}{\sqrt{k}}\right)\left(\noise^2 + \norm{w_i - a_t}^2_2\right).
		\end{equation}
	\end{proof}

	We are now ready to complete the proof of Lemma~\ref{lem:fourier_moment_descent}. Let $\rho = 1.1/0.9$ and condition on $\calE$. 

	If there does not exist $0\le t < T$ for which we have that 
	\begin{equation}
		\min_{i\in[k]} \norm{w_i - a_t}^2_2 \le \epsilon^2/\rho^2 - \noise^2\label{eq:crossed},
	\end{equation} then because $\noise^2\le \epsilon^2/10$, we get that $\norm{w_i - a_t}^2_2 \ge \epsilon^2/2$. So $\noise^2 \le \frac{1}{5}\norm{w_i - a_t}^2_2$, and by completeness and soundness in Claim~\ref{claim:complete_sound}, $\sigma^2_t$ has contracted by at least a factor of $(1 - 1/48\sqrt{k})$ and by at most a factor of $(1 - 9/\sqrt{k})$ at every step. So if we take $T = \Omega(\sqrt{k}\cdot\ln(1/\epsilon))$, we are guaranteed that 
	\begin{equation}\min_{i\in [k]}
	\norm{w_i - a_T}_2 \le \sigma_T \le \epsilon.
	\label{eq:sigmabound}
	\end{equation}

	On the other hand, if \eqref{eq:crossed} holds for some $0\le t < T$, then \begin{equation}
	(\sigma^*_t)^2 \le 1.21\sigma^2_t \le 1.21 \cdot \Big(\min_{i \in [k]}\norm{w_i - a_t}^2_2 + \noise^2 \Big) \le 0.99^2\epsilon^2,
	\end{equation} 
	so \textsc{FourierMomentDescent} breaks out at Line~\ref{line:break} and correctly outputs $a_t$.

	Conversely, if \textsc{FourierMomentDescent} breaks out at Line~\ref{line:break} because $\sigma^*_t \le 0.99\epsilon$, this implies that 
	\begin{align*}
	\min_{i \in [k]} \norm{w_i - a_t}^2 + \noise^2 \le (\sigma^*_t)^2 / 0.99^2 \le \epsilon^2,
	\end{align*}
	so $\min_{i\in [k]} \norm{w_i - a_t}_2 \le \epsilon$.

	The last thing to check is that $\underline{\sigma} = \epsilon/3$ is always a valid lower bound for any $\sigma_t$. If \eqref{eq:crossed} holds for some $t$, $t$ is necessarily the first (and last) $t$ in \textsc{FourierMomentDescent} for which \eqref{eq:crossed} holds because of Line~\ref{line:break}. So it must be that \begin{equation}\sigma_{t-1}^2 \ge \min_{i \in [k]} \norm{w_i - a_{t-1}}^2_2 > \epsilon^2/\rho^2 - \noise^2 \ge \epsilon^2\cdot\left((0.9/1.1)^2 - 1/5\right) \ge 0.4\epsilon^2\end{equation} and thus, by the fact that $\sigma_t \ge (1 - 9/\sqrt{k})\sigma_{t-1}\ge 0.99\sigma_{t-1}$, we conclude that $\sigma_t > \epsilon/3$ as desired.
\end{proof}

\begin{algorithm}\caption{\textsc{FourierMomentDescent}$(\calD,\delta,\epsilon)$, Lemma~\ref{lem:fourier_moment_descent}}\label{alg:fourier_moment_descent}
\begin{algorithmic}[1]
	\State \textbf{Input}: Sample access to mixture of linear regressions $\calD$ with separation $\Delta$ and noise rate $\noise$, failure probability $\delta$, error $\epsilon$
	\State \textbf{Output}: $a_T\in\R^d$ satisfying $\min_{i\in[k]}\norm{w_i - a_T}_2 \le\epsilon$, with probability at least $1 - \delta$.
		\State Set $a_0 = 0$, $T = \Omega(\sqrt{k}\cdot\ln(1/\epsilon))$.
		\State Set $\delta' = \frac{\delta}{5T}$.
		\State Set $M = e^{\sqrt{k}}\ln(2/\delta')$.
		\State Set $\delta'' = \frac{\delta}{5MT}$.
		\State Set $\overline{\sigma} = 4$ and $\underline{\sigma} = \epsilon/3$.
		\For{$0\le t < T$}
			\State \multiline{Let $\calF_t$ be the univariate mixture of Gaussians which can be sampled from by drawing $(x,y)\sim\calD$ and computing $y - \langle x,a_t\rangle$.}
			\State Let $p = 20\ln\left(\frac{3}{2\pmin}\right)+1$.
			\State Let $\kappa = \frac{1}{24\sqrt{k}}$.
			\State $\sigma^*_t \triangleq$~\textsc{EstimateMinVariance}($\calF_t,\overline{\sigma}, \underline{\sigma},p,\delta'$). \Comment{Algorithm~\ref{alg:estimate_min_variance}}
			\If{$\sigma^*_t < 0.99\epsilon$}
				\State Output $a_t$. \label{line:break}
			\EndIf
			\State Let $N_1 \triangleq \tilde{\Omega}\left(\frac{\overline{\sigma}^2}{(\sigma^*_t)^2}\cdot p^{-2}_{\min}\cdot k^2\cdot d \cdot \ln(k/\delta')\right)$.
			\State \multiline{Draw $N_1 $ i.i.d. samples $\{(x_i,y_i)\}_{i\in[N_1]}$ from $\calD$ and form the matrix $\widehat{\vec{M}}^{(N_1)}_{a_t}$.} \label{line:n1}
			\State Let $\vec{U}_t = \textsc{ApproxBlockSVD} (\widehat{\vec{M}}^{(N_1)}_{a_t}, 1/10, \delta')$. \Comment{Lemma~\ref{lem:correlation}}
			\For{$j\in[M]$}
				\State Sample $g^{(j)}_t\sim\N(0,\Id_k)$ and define $v^{(j)}_t = \frac{\vec{U}_t g^{(j)}_t}{\norm{\vec{U}_t g^{(j)}_t}_2}\in\S^{d-1}$.
				\State Let $a'^{(j)}_t = a_t + \eta_t v_j$ for $\eta_t \triangleq \frac{1}{2}k^{-1/4}\cdot\sigma^*_t$.
				\State \multiline{Let $\calF'^{(j)}_t$ be the univariate mixture of Gaussians which can be sampled from by drawing $(x,y)\sim\calD$ and computing $y - \langle x,a'^{(j)}_t\rangle$.}
				\If{\textsc{CompareMinVariances}($\calF_t,\calF'^{(j)}_t,\overline{\sigma},\underline{\sigma},\kappa,2\kappa,\delta''$) = $\mathsf{true}$} \Comment{Algorithm~\ref{alg:compare_min_variances}, Corollary~\ref{cor:compare_min_variances}}
					\State Set $a_{t+1} = a'^{(j)}_t$
					\Break
				\EndIf
			\EndFor
		\EndFor
		\State Output $a_T$.
\end{algorithmic}
\end{algorithm}
\noindent
Lastly, we calculate the runtime and sample complexity of \textsc{FourierMomentDescent}.

\begin{lemma}[Running time of \textsc{FourierMomentDescent}]
Let
\begin{align} 
N_1 &= \widetilde{O} ( \epsilon^{-2} \pmin^{-2}  d k^2 \ln(1 / \delta) ) \label{eq:N1def}\\
N &=   \pmin^{-4} k \ln ( 1 / \delta ) \cdot \poly \left(\sqrt{k}, \ln ( 1 / \pmin ), \ln (1 / \epsilon ) \right)^{O\left(\sqrt{k} \ln (1 / \pmin)\right)} \; .\label{eq:Ndef}
\end{align}
Then \textsc{FourierMomentDescent} (Algorithm~\ref{alg:fourier_moment_descent}) requires sample complexity $\widetilde{O} (\sqrt{k} e^{\sqrt{k}} (N_1 + N))$ and runs in time $\widetilde{O} (\sqrt{k} e^{\sqrt{k}} (d N_1 + N))$.
\label{lem:runtime}
\end{lemma}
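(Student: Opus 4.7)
The plan is to walk through each iteration of \textsc{FourierMomentDescent}, accounting for the sample and time cost of each of its constituent subroutines, and then multiply through by the total number of inner and outer iterations. The algorithm runs $T = O(\sqrt{k}\ln(1/\epsilon))$ outer iterations, and within each outer iteration performs $M = e^{\sqrt{k}}\ln(2/\delta')$ inner iterations (one per random direction $v_j^{(t)}$). With the settings $p = O(\ln(1/\pmin))$ and $\kappa = \Theta(1/\sqrt{k})$ prescribed by the pseudocode, and with failure probabilities $\delta',\delta'' = \delta/\poly(T,M)$, the $\ln(1/\delta'),\ln(1/\delta'')$ factors are absorbed into $\widetilde{O}(\cdot)$ via $\ln(1/\delta) + O(\sqrt{k})$.

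First I would tally the sample complexity. Each outer iteration draws $N_1$ fresh samples (line~\ref{line:n1}) to form $\widehat{\vec{M}}^{(N_1)}_{a_t}$, and by Lemma~\ref{lem:correlation} with $\sigma^*_t \ge \Omega(\epsilon)$ and $\overline{\sigma} = O(1)$, the bound $N_1 = \widetilde{O}(\epsilon^{-2}\pmin^{-2}dk^2\ln(1/\delta))$ of \eqref{eq:N1def} suffices. It then runs \textsc{EstimateMinVariance} once with $p = O(\ln(1/\pmin))$, which by Lemma~\ref{lem:estimate_min_variance} costs at most $N$ samples (indeed much less, since $p = O(\ln(1/\pmin))$ rather than $O(\sqrt{k}\ln(1/\pmin))$). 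Finally, inside the inner loop it runs \textsc{CompareMinVariances} up to $M$ times with tolerance parameters $\kappa,2\kappa = \Theta(1/\sqrt{k})$, so by Corollary~\ref{cor:compare_min_variances} each such call uses $N$ samples from $\calD$ as defined in \eqref{eq:Ndef}. Summing yields $N_1 + N + M\cdot N$ samples per outer iteration. Multiplying by $T$ and using $TM = \widetilde{O}(\sqrt{k}\, e^{\sqrt{k}})$ gives the claimed $\widetilde{O}(\sqrt{k}\,e^{\sqrt{k}}(N_1 + N))$ sample bound.

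Next I would tally the runtime. Forming $\widehat{\vec{M}}^{(N_1)}_{a_t}$ implicitly via its matrix-vector oracle costs $O(N_1 d)$ per oracle call, and by Fact~\ref{thm:power-method} \textsc{ApproxBlockSVD} invokes this oracle $\widetilde{O}(k)$ times (with the gap $\gap_k$ controlled polynomially in $\pmin$ and $\sigma^*_t$ via Lemma~\ref{lem:findspan}), so the SVD step takes $\widetilde{O}(k N_1 d)$ time. The \textsc{EstimateMinVariance} and \textsc{CompareMinVariances} subroutines each run in time $\widetilde{O}(N)$ by Lemma~\ref{lem:estimate_min_variance} and Corollary~\ref{cor:compare_min_variances}, and constructing the residuals $y - \langle x, a'^{(j)}_t\rangle$ to feed them from $\calD$ adds at most an $O(d)$ factor per sample which is dominated. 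Per outer iteration the cost is therefore $\widetilde{O}(k N_1 d + M\cdot N)$; multiplying by $T$ and absorbing $Tk \le \sqrt{k}\,e^{\sqrt{k}}$ gives the claimed $\widetilde{O}(\sqrt{k}\,e^{\sqrt{k}}(dN_1 + N))$ total.

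The only nontrivial obstacle is making sure the parameters fed into \textsc{EstimateMinVariance} and \textsc{CompareMinVariances} at every step remain valid, so that the sample/time bounds from Lemma~\ref{lem:estimate_min_variance} and Corollary~\ref{cor:compare_min_variances} apply uniformly across iterations. For this I would invoke two observations already used in the proof of Lemma~\ref{lem:fourier_moment_descent}: (i) conditional on the high-probability success event $\calE$, the inequality $\sigma_t \le 4$ holds for all $t$, justifying $\overline{\sigma} = 4$; and (ii) the algorithm breaks out of its loop before $\sigma_t$ ever drops below $\Omega(\epsilon)$, so $\underline{\sigma} = \epsilon/3$ remains a valid lower bound on the minimum variances of $\calF_t$ and $\calF'^{(j)}_t$ throughout. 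Once these invariants are in hand, the parameter $N$ in \eqref{eq:Ndef} is the worst-case cost of each \textsc{CompareMinVariances} call uniformly over $t$ and $j$, and the bookkeeping above delivers the lemma.
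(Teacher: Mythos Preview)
Your proposal is correct and follows essentially the same approach as the paper's proof: enumerate the per-iteration costs of the SVD step (governed by $N_1$) and of \textsc{CompareMinVariances} (governed by $N$), then multiply through by the $MT = \widetilde{O}(\sqrt{k}\,e^{\sqrt{k}})$ total iterations. If anything, your version is more careful than the paper's---you correctly distinguish the outer and inner loops (the paper's proof loosely charges $N_1$ to each of the $MT$ inner iterations rather than each of the $T$ outer ones), and you explicitly justify the validity of the $\overline{\sigma},\underline{\sigma}$ bounds via the invariants from Lemma~\ref{lem:fourier_moment_descent}, which the paper's proof elides.
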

\noindent
We defer the proof of Lemma~\ref{lem:runtime} to Appendix~\ref{app:runtime}.

We can now complete the proof of Theorem~\ref{thm:fmd_main}.

\begin{proof}[Proof of Theorem~\ref{thm:fmd_main}]
	By Lemma~\ref{lem:fourier_moment_descent}, \textsc{FourierMomentDescent} outputs a vector $a_T\in\R^d$ for which $\norm{w_i - a_T}_2 \le \epsilon$ for some $i\in[k]$. The runtime and sample complexity bounds follow from Lemma~\ref{lem:runtime}.
\end{proof} %%% Section 6, Warm Srart via Fourier Moment Desent

%!TEX root = ./main.tex

\section{Learning All Components Under Zero Noise}
\label{sec:allcomps_nonoise}

In this short section we briefly describe how to use \textsc{FourierMomentDescent} in conjunction with existing techniques for boosting to learn \emph{all} components in a mixture of linear regressions. We remark that the arguments in this section are fairly standard.

We will make use of the following local convergence result of \cite{li2018learning}.

\begin{theorem}
	Let $\calD$ be a mixture of linear regressions in $\R^d$ with regressors $\{w_j\}$, minimum mixing weight $\pmin$, separation $\Delta$, and components whose covariances have eigenvalues all bounded within $[1,\sigma]$. Let $\zeta \triangleq \Delta\cdot \min\left(\frac{1}{2\sigma},\frac{\pmin}{64}\right)$. There is an algorithm \textsc{LL-Boost}$(\calD,v,\epsilon,\delta)$ which, given any $\epsilon > 0$ and $v\in\R^d$ for which there exists $j\in[k]$ with $\norm{w_j - v}_2 \le \zeta/\sigma$, draws $T\cdot M$ samples from $\calD$ for 
	\begin{equation*}
		T = O ( \pmin^{-2} d \ln(\zeta/\epsilon) ) \ \ \ \text{and} \ \ \ M = \poly (1/\Delta, 1 / \pmin ,\sigma,\log T )\cdot\ln(1/\delta),
	\end{equation*} runs in time $T\cdot M\cdot d$, and outputs $\tilde{v}\in\R^d$ for which $\norm{w_j - \tilde{v}}_2 \le \epsilon$ with probability at least $1 - \delta$. \label{thm:liliang}
\end{theorem}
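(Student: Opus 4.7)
The plan is to establish local convergence of a stochastic gradient descent scheme on a carefully regularized objective, following the gravitational potential framework pioneered in \cite{holden2018gravitational} and used in \cite{li2018learning}. The starting point is to define a potential
\[
\Psi_\beta(v) \triangleq \E_{(x,y)\sim\calD}\!\left[\phi_\beta\bigl(y - \langle v,x\rangle\bigr)\right],
\]
where $\phi_\beta$ is a smooth, even ``attractor'' (for instance $\phi_\beta(r) = -1/\sqrt{r^2+\beta^2}$) and $\beta$ is a regularization scale that will be tuned to be roughly $\zeta$. The key identity to write down is that the residual $y-\langle v,x\rangle$ under a single component $i$ is distributed as a one-dimensional Gaussian with variance $\sigma_i^2 + \|w_i - v\|_{\Sigma_i}^2$ (using the eigenvalue bounds to control norms), so $\Psi_\beta(v)$ decomposes as a mixing-weighted average of simple scalar functions of $\|w_i - v\|$.

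First I would prove a local contraction lemma: if $\|w_j - v\|_2 \le \zeta/\sigma$ for the unique nearby component $j$, then the gradient $\nabla\Psi_\beta(v)$ has a ``signal'' contribution from component $j$ that is aligned with $w_j - v$ and of magnitude at least $c \cdot p_j \cdot \|w_j - v\|_2$ for some explicit $c = c(\beta,\sigma)$, while the ``interference'' from every other component $i\neq j$ is suppressed, since $\|w_i - v\|_2 \ge \Delta/2$ forces the corresponding term in $\Psi_\beta$ and its derivative to be smaller by a factor of $\poly(\beta/\Delta)$. Balancing $\beta$ with $\Delta, \pmin, \sigma$ as in the definition $\zeta = \Delta\cdot\min(1/(2\sigma), \pmin/64)$ ensures the signal dominates the interference throughout the ball $\{v : \|w_j - v\|_2 \le \zeta/\sigma\}$, so that in expectation an SGD step contracts the distance to $w_j$ by a factor $1 - \Omega(\pmin^2)$.

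Next I would estimate the variance of a single stochastic gradient $\nabla_v \phi_\beta(y-\langle v,x\rangle)$ at a sample $(x,y)$. Because $\phi_\beta'$ is bounded and $x$ is Gaussian, standard Gaussian concentration shows the per-sample gradient is $O(\poly(1/\beta,\sigma))$-subgaussian. Averaging over $M = \poly(1/\Delta,1/\pmin,\sigma,\log T)\cdot\ln(1/\delta)$ samples per iteration therefore gives a gradient estimate within $\|w_j - v\|_2/\poly(\pmin)$ of the true gradient with probability $1 - \delta/T$. A union bound over the $T = O(\pmin^{-2} d \ln(\zeta/\epsilon))$ iterations, together with the contraction rate $1 - \Omega(\pmin^2)$, yields $\|w_j - \tilde v\|_2 \le \epsilon$ after $T$ rounds, as required. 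The factor of $d$ in $T$ arises from the dimension-dependence in concentrating the inner product $\langle \nabla\Psi_\beta(v) - \widehat{\nabla}, w_j - v\rangle$ via a net argument over directions.

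The main obstacle is the contraction lemma in the non-spherical setting: the components have anisotropic covariances bounded in $[1,\sigma]$, so the effective separation seen by the residual distribution is $\|w_i - v\|_{\Sigma_i}$ rather than $\|w_i - v\|_2$, and tracking this requires the extra $1/\sigma$ factor in the definition of $\zeta$. Handling this carefully — ensuring the signal from component $j$ still beats the aggregated interference once one passes to $\Sigma_i$-norms, and that the $\pmin/64$ term in $\zeta$ protects against small mixing weights — is the delicate quantitative step; once it is in place, the rest is a standard SGD convergence argument.
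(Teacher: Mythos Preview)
This theorem is not proved in the present paper: it is quoted as a black-box local convergence result from \cite{li2018learning}, and the paper only sketches the idea behind it in Section~\ref{subsec:background_boost}. So there is no ``paper's own proof'' to compare against in detail. That said, what the paper does record about the Li--Liang algorithm is that it runs SGD on the regularized \emph{gravitational potential}
\[
h(v) \;=\; \E_{x,y}\bigl[\ln\bigl(|\langle x,v\rangle - y| + \xi\bigr)\bigr],
\]
and argues that the $i^*$-th summand in the decomposition of $\langle -\nabla h(v),\, w_{i^*}-v\rangle$ dominates the remaining summands when $v$ is within $O(\Delta\pmin)$ of $w_{i^*}$. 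Your proposal follows the same architecture (SGD on a regularized attractor, signal-vs-interference decomposition of the gradient correlation), but with a different choice of $\phi_\beta$. That variation is harmless at the level of a sketch; the logarithm in \cite{li2018learning} is what produces the $1/r$-type gradient that is the ``gravitational'' analogy, but any potential whose derivative decays at the right rate will do.

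There is one concrete mis-step in your accounting. You assert a per-step contraction of $1-\Omega(\pmin^2)$ and then attribute the factor of $d$ in $T$ to ``a net argument over directions'' in the concentration step. This cannot be right, because the theorem has $M$ independent of $d$: the only quantity that must concentrate each round is the scalar $\langle -\delta_t,\, w_j - v\rangle$, which is one-dimensional and needs no net. The $d$ actually enters through the step size. The stochastic gradient is proportional to $x\sim\N(0,\Sigma)$ and hence has norm $\Theta(\sqrt{d})$; to keep $\eta_t^2\|\delta_t\|^2$ from swamping the first-order term $2\eta_t\langle -\delta_t,\, w_j-v\rangle$ you must take $\eta_t = O(1/d)\cdot(\text{stuff})$, which forces the contraction factor to be $1-\Omega(\pmin^2/d)$ and hence $T=O(\pmin^{-2} d\ln(\zeta/\epsilon))$. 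This is exactly the mechanism visible in the paper's own boosting analysis in Section~\ref{sec:boosting} (see the choice of $\eta_t$ in Lemma~\ref{lem:contract_cos}). Fixing this bookkeeping, your outline is consistent with the approach described for \cite{li2018learning}.
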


We give a formal specification of our procedure \textsc{LearnWithoutNoise} for learning all components of a noise-less mixture of linear regressions in Algorithm~\ref{alg:learnwithoutnoise} below. The basic approach is to repeatedly invoke \textsc{FourierMomentDescent} to produce an estimate for one of the regressors of $\calD$ to within $O(\Delta\pmin)$ error, run \textsc{LL-Boost} to refine it to an estimate $v$ with error essentially as small as one would like (because of the exponential convergence rate of \textsc{LL-Boost}), and then filter out all samples $(x,y)$ for which the residual $|y - \langle x,v\rangle|$ is sufficiently small.

\begin{algorithm}\caption{\textsc{LearnWithoutNoise}($\calD,\delta,\epsilon$), Theorem~\ref{thm:learnwithoutnoise_main}}\label{alg:learnwithoutnoise}
\begin{algorithmic}[1]
	\State \textbf{Input}: Sample access to mixture of linear regressions $\calD$ with separation $\Delta$ and zero noise and regressors $\{w_i\}$, failure probability $\delta$, error $\epsilon$
	\State \textbf{Output}: List of vectors $\mathcal{L}\triangleq \{\tilde{w}_1,...,\tilde{w}_k\}$ for which there is a permutation $\pi:[k]\to[k]$ for which $\norm{\tilde{w}_i - w_{\pi(i)}}_2 \le \epsilon$ for all $i\in[k]$, with probability at least $1 - \delta$.
			\State Set $\delta' = \delta/2k$
			\State Set $\epsilon_{\text{FMD}} = \Delta\pmin/64$.
			\State Set $\epsilon_{\text{boost}} = \min\{\epsilon,\poly(\pmin,\Delta,1/k,1/d)^{\sqrt{k}\ln(1/\pmin)}\}$.
			\For{$i\in[k]$}
				\State Let $w'_i$ be the output of \textsc{FourierMomentDescent}($\calD, \delta',\epsilon_{\text{FMD}}$)
				\State Let $\tilde{w}_i$ be the output of \textsc{LL-Boost}($\calD, w'_i,\epsilon_{\text{boost}},\delta'$)
				\State Henceforth when sampling from $\calD$, ignore all samples $(x,y)$ for which $|y - \langle x,\tilde{w}_i\rangle| \le \epsilon_{\text{boost}}\cdot\poly(\log d)$.
			\EndFor
\end{algorithmic}
\end{algorithm}

\begin{theorem}
	\sloppy Given $\delta,\epsilon>0$ and a mixture of spherical linear regressions $\calD$ with separation $\Delta$ and zero noise, with probability at least $1 - \delta$, \textsc{LearnWithoutNoise}($\calD,\delta,\epsilon$) (Algorithm~\ref{alg:learnwithoutnoise}) returns a list of vectors $\mathcal{L}\triangleq \{\tilde{w}_1,...,\tilde{w}_k\}$ for which there is a permutation $\pi:[k]\to[k]$ for which $\norm{\tilde{w}_i - w_{\pi(i)}}_2 \le \epsilon$ for all $i\in[k]$. Furthermore, \textsc{LearnWithoutNoise} requires sample complexity 
	\begin{align*}
	N = \widetilde{O}\left(d\ln(1/\epsilon)\ln(1/\delta)\pmin^{-4}\Delta^{-2}\cdot \poly\left(k,\ln(1/\pmin),\ln(1/\Delta)\right)^{O(\sqrt{k}\ln(1/\pmin))}\right)
	\end{align*}
	and time complexity $Nd\cdot\poly\log(k,d,1/\Delta,1/\pmin,1/\epsilon)$.
\label{thm:learnwithoutnoise_main}
\end{theorem}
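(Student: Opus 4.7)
The plan is to establish correctness by induction on the iteration index $i$ of the outer loop, with the inductive invariant that immediately before pass $i$, the un-filtered samples look (in total variation) like a mixture of linear regressions over the $k-i+1$ regressors not yet learned, with mixing weights proportional to the originals and hence still $\Omega(\pmin)$. At each iteration I would combine the warm-start guarantee of \textsc{FourierMomentDescent} (Theorem~\ref{thm:fmd_main}) with the local convergence of \textsc{LL-Boost} (Theorem~\ref{thm:liliang}), and then verify that the rejection step peels off exactly the component just learned.

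For the per-iteration analysis, invoking Theorem~\ref{thm:fmd_main} with error $\epsilon_{\text{FMD}} = \Delta\pmin/64$ and failure probability $\delta/(2k)$ produces $w'_i$ with $\|w'_i - w_{j_i}\|_2 \le \epsilon_{\text{FMD}}$ for some index $j_i$ among the remaining regressors. Since covariates are spherical, Theorem~\ref{thm:liliang} applies with condition number $\sigma = 1$ and $\zeta/\sigma = \Delta\min(1/2,\pmin/64) \ge \epsilon_{\text{FMD}}$, so $w'_i$ is a valid warm start; \textsc{LL-Boost} then returns $\tilde{w}_i$ with $\|w_{j_i} - \tilde{w}_i\|_2 \le \epsilon_{\text{boost}}$ with probability at least $1 - \delta/(2k)$.

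For the peeling analysis, consider the filter $|y - \langle x, \tilde{w}_i\rangle| \le \epsilon_{\text{boost}}\cdot\poly(\log d)$. A sample from component $j_i$ satisfies $y - \langle x, \tilde{w}_i\rangle = \langle w_{j_i} - \tilde{w}_i, x\rangle \sim \N(0, \|w_{j_i} - \tilde{w}_i\|_2^2)$ with standard deviation at most $\epsilon_{\text{boost}}$, so by standard Gaussian concentration it falls below the threshold with probability $1 - d^{-\omega(1)}$. For any component $j' \neq j_i$, the residual is Gaussian with standard deviation at least $\Delta - \epsilon_{\text{boost}} \ge \Delta/2$, and Gaussian anti-concentration gives $\Pr[|y - \langle x, \tilde{w}_i\rangle| \le \epsilon_{\text{boost}}\cdot\poly(\log d)] = O(\epsilon_{\text{boost}}\poly(\log d)/\Delta)$, which is negligible thanks to our aggressive choice of $\epsilon_{\text{boost}}$. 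Hence the conditional distribution on un-filtered samples is superpolynomially close to the desired submixture over the remaining $k - i$ regressors, and the inductive invariant is preserved.

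The main obstacle I expect is controlling the cumulative peeling error across all $k$ rounds so that downstream calls to \textsc{FourierMomentDescent} and \textsc{LL-Boost}, which are stated for genuine MLRs, still apply: I would need to observe that both routines are robust to a total-variation perturbation much smaller than $\pmin$ and $1/N$ for $N$ the total number of samples drawn, which is guaranteed because $\epsilon_{\text{boost}}$ can be taken polynomially smaller than every relevant parameter. A union bound over the $k$ iterations and their randomized subroutine calls bounds the total failure probability by $\delta$. The stated sample and time complexities follow by summing the $k$ invocations of \textsc{FourierMomentDescent} --- whose cost from Theorem~\ref{thm:fmd_main} dominates that of \textsc{LL-Boost} by a $\poly\log$ factor --- together with an $O(1/\pmin)$ rejection-sampling overhead to account for samples discarded in later iterations.
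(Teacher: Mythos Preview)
Your proposal is correct and follows essentially the same approach as the paper's proof: warm start each round with \textsc{FourierMomentDescent}, refine with \textsc{LL-Boost}, peel via the residual filter, and argue that the peeling error is negligibly small so as to be absorbable into downstream sampling error. Your treatment is in fact more careful than the paper's own proof, which is deliberately terse (the surrounding text calls the argument ``fairly standard''): you make explicit the inductive invariant, the two-sided analysis of the filter (concentration for the learned component, anti-concentration for the others), and the total-variation robustness needed for the subroutines, whereas the paper compresses all of this into a single sentence asserting that the tiny fraction of mis-filtered samples ``can be absorbed into the sampling error that goes into subsequent calls to \textsc{L2Estimate} and subsequent matrices $\widehat{\vec{M}}^{(N)}_a$.''
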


\begin{proof}
	By Theorem~\ref{thm:fmd_main}, every $w'_i$ in \textsc{LearnWithoutNoise} is $\frac{\Delta\pmin}{64}$-close to a regressor $w_{i'}$ of $\calD$, and by Theorem~\ref{thm:liliang}, \textsc{LL-Boost} improves this to a vector $\tilde{w}_i$ for which $\norm{\tilde{w}_i - w_{i'}}_2 \le\epsilon_{\text{boost}}$, where \begin{equation}\epsilon_{\text{boost}} \min\{\epsilon,\poly(\pmin,\Delta,1/k,1/d)^{\sqrt{k}\ln(1/\pmin)}\}.\end{equation} As a result, only a $\poly(\pmin,\Delta,1/k,1/d)^{\sqrt{k}\ln(1/\pmin)}$ fraction of subsequent samples will be removed, and the resulting error can be absorbed into the sampling error that goes into subsequent calls to \textsc{L2Estimate} and subsequent matrices $\widehat{M}^{(N)}_a$ that we run \textsc{ApproxBlockSVD} on, in the remainder of \textsc{LearnWithoutNoise}.
\end{proof}
 %%% Section 7, Learning All Components Under Zero Noise

%!TEX root = ./main.tex

\section{Learning All Components Under Noise}
\label{sec:allcomps}

In this section, we describe how to learn all components under the much more challenging setting where there is regression noise. We show that, at the extra cost of running in time exponential in $1/\Delta^2$ in addition to $\sqrt{k}$, there is an algorithm, which we call \textsc{LearnWithNoise}, that can learn mixtures of linear regressions to error $\epsilon$ when $\noise = O(\epsilon)$.

\begin{theorem}\label{thm:learnwithnoise_main}
	Given $\delta,\epsilon>0$ and a mixture of spherical linear regressions $\calD$ with regressors $\{w_1,...,w_k\}$, separation $\Delta$, and noise rate $\noise = O(\epsilon)$, with probability at least $1 - \delta$, \textsc{LearnWithNoise} ($\calD,\delta,\epsilon$) (Algorithm~\ref{alg:learnwithoutnoise}) returns a list of vectors $\mathcal{L}\triangleq \{\tilde{w}_1,...,\tilde{w}_k\}$ for which there is a permutation $\pi:[k]\to[k]$ for which $\norm{\tilde{w}_i - w_{\pi(i)}}_2 \le \epsilon$ for all $i\in[k]$. Furthermore, \textsc{LearnWithNoise} requires sample complexity 
	\begin{align*}
	N = \widetilde{O}\left(d\epsilon^{-2}\ln(1/\epsilon)\ln(1/\delta)\pmin^{-4}\Delta^{-2}\cdot\poly\left(k,1/\epsilon,\ln(1/\pmin)\right)^{O(\sqrt{k}\ln(1/\pmin)/\Delta^2)}\right)
	\end{align*}
	and time complexity $Nd\cdot \poly\log(k,d,1/\Delta,1/\pmin,1/\epsilon)$.
\end{theorem}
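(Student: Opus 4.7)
The plan is to combine a noise-robust variant of \textsc{FourierMomentDescent} with a carefully gridded random initialization scheme, then cluster the resulting estimates and finally boost each cluster representative to accuracy $\epsilon$. The overall runtime will be governed by the product of $\exp(\widetilde O(\sqrt k/\Delta^2))$ random restarts and the per-run cost of Fourier moment descent from Theorem~\ref{thm:fmd_main}.

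First, I would prove a ``gap-preserving'' version of the single-component guarantee. As sketched in Section~\ref{subsec:tech_noise}, one strengthens the invariant from ``$a_t$ is closest to $w_{i^*}$'' to the multiplicative gap
\[
\|w_i - a_t\|_2 \;\ge\; \Bigl(1 + c\,\tfrac{\Delta^2}{\sqrt k}\Bigr)\|w_{i^*} - a_t\|_2 \qquad \forall i\neq i^*.
\]
Writing the candidate step $a'_{t+1} = a_t + \eta_t v_j$ exactly as in Lemma~\ref{lem:progress} and expanding $\|w_i - a'_{t+1}\|_2^2$ via~\eqref{eq:onestep}, the gap is maintained provided the random direction $v_j$ satisfies a two-sided correlation condition with respect to $\{w_i - a_t\}_i$: good positive correlation with $(w_{i^*}-a_t)/\|w_{i^*}-a_t\|_2$ and not too large positive correlation along the other $k-1$ directions. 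Corollaries~\ref{cor:unitcorr} and~\ref{cor:joint} give the single-direction bounds, and (conditioning on the step being accepted by \textsc{CompareMinVariances}) one can lower bound the conditional probability of gap preservation by $1/\poly(k)$, by combining the lower bound on $\Pr[A_j]$ from Lemma~\ref{lem:progress} with a union bound over the other $k-1$ directions calibrated by the $\Delta^2/\sqrt k$ slack. Iterating over the $T = \widetilde O(\sqrt k/\Delta^2)$ steps of descent gives that the whole run preserves the gap, and hence terminates with $\|w_{i^*}-a_T\|_2 \le O(\epsilon)+\noise$, with probability at least $1/\poly(k)^T = \exp(-\widetilde O(\sqrt k/\Delta^2))$ conditional on a suitable initialization.

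Second, I would design the initialization scheme. The key point from Section~\ref{subsec:tech_noise} is that initializing on a single sphere (say the unit sphere) biases toward small-norm regressors; one must instead discretize the radius. Fix a geometric grid of radii $\{r_\ell\} \subset [0,\,O(k^{1/4})]$ of granularity $1/\poly(k)$, and for each radius sample $a_0$ uniformly from $r_\ell\cdot \S^{d-1}$ after first projecting the problem down to the $k$-dimensional span output by \textsc{ApproxBlockSVD} applied to $\widehat{\vec M}^{(N_1)}_0$ (Lemma~\ref{lem:correlation}). I would then show that for every $i^*\in[k]$ there is at least one radius $r_\ell$ such that a uniformly random $a_0$ on $r_\ell\cdot\S^{k-1}$ satisfies the multiplicative gap with respect to $i^*$ with probability at least $\exp(-\widetilde O(\sqrt k/\Delta^2))$: choosing $r_{\ell}\approx \|w_{i^*}\|_2$ and using Gaussian anti-concentration on the $k$-dimensional unit sphere gives the right balance, with the $\exp(-1/\Delta^2)$ factor arising from the probability that the coordinate along $w_{i^*}-w_j$ (normalized by $\|w_{i^*}-w_j\|_2\ge \Delta$) lands in the correct window of width $\Theta(\Delta^2/\sqrt k)$ simultaneously for all $k-1$ indices $j$.

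Third, I would run the gap-preserving \textsc{FourierMomentDescent} from $R = \exp(\widetilde O(\sqrt k/\Delta^2))\cdot \poly\log(k/\delta)$ independent initializations drawn as above, with accuracy parameter $\epsilon_{\mathrm{FMD}} = \Theta(\epsilon)$. Combining the two probability bounds above, for each fixed $i^*\in[k]$ the probability that at least one of the $R$ runs outputs a vector within $O(\epsilon)$ of $w_{i^*}$ is $1 - \delta/2k$. A union bound over $i^*\in[k]$ guarantees that the resulting list of $R$ vectors contains, for every component, an $O(\epsilon)$-close estimate. I would then cluster the list with a simple greedy procedure at scale $\Delta/3$; separation implies each cluster corresponds to a unique $w_i$, and the cluster centroid inherits the $O(\epsilon)$ accuracy. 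Finally, if the desired $\epsilon$ is smaller than what Fourier moment descent can produce directly, I would boost each representative using the noise-tolerant boosting procedure developed in Section~\ref{sec:boosting} (or, when applicable, the EM-style boosting of~\cite{kwon2019converges}), which contracts error exponentially in time depending only polynomially on $1/\Delta, 1/\pmin$.

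The main obstacle will be the second step: proving simultaneously that (i) the gridded initialization achieves the gap condition for every $i^*$ with probability at least $\exp(-\widetilde O(\sqrt k/\Delta^2))$, and (ii) conditional on the step being accepted by \textsc{CompareMinVariances}, the gap is preserved with only a $1/\poly(k)$ loss per step. Both require fine-grained anti-concentration on the sphere in the projected $k$-dimensional space, together with a delicate accounting of how the acceptance event correlates with the direction of $v_j$ relative to the ``bad'' components $\{w_i - a_t\}_{i\neq i^*}$. Once these two calculations are in place, the remaining sample-complexity and runtime bookkeeping follows by plugging $T = \widetilde O(\sqrt k/\Delta^2)$ and $R = \exp(\widetilde O(\sqrt k/\Delta^2))$ into the per-iteration cost from Lemma~\ref{lem:runtime}.
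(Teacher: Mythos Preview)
Your overall architecture matches the paper's: a gap-preserving variant of moment descent (the paper's Lemma~\ref{lem:stay_main}), a gridded spherical initialization (Lemma~\ref{lem:rand_init}), $\exp(\widetilde O(\sqrt k/\Delta^2))$ independent restarts, and greedy clustering. You also correctly identify the two hard technical ingredients. No boosting is needed: since the theorem assumes $\noise=O(\epsilon)$, \textsc{OptimisticDescent} itself reaches accuracy $\epsilon$, and the paper's proof of Lemma~\ref{lem:learnwithnoise_correct} does not invoke any boosting step.

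There is one genuine gap in your initialization. You propose taking $r_\ell\approx\|w_{i^*}\|_2$; the paper instead takes $\alpha\approx k^{1/4}\|w_{i^*}\|_2$, and this scaling is essential (the paper explicitly flags it as ``quite counterintuitive'' in Section~\ref{subsec:tech_noise}). The entire proof of Lemma~\ref{lem:rand_init} runs through the ratio
\[
\frac{\|w_i-v\|_2^2}{\|w_{i^*}-v\|_2^2}=\frac{\rho_i^2+(1+\upsilon)^2\sqrt k-2(1+\upsilon)\rho_ik^{1/4}\gamma_i}{1+(1+\upsilon)^2\sqrt k-2(1+\upsilon)k^{1/4}\gamma_{i^*}},
\]
which relies on $\alpha^2\approx\sqrt k\,\|w_{i^*}\|_2^2$ dominating so that the cross terms $2\alpha\gamma_i\|w_i\|_2$ enter at the right $k^{-1/2}$ scale. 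At radius $\|w_{i^*}\|_2$ this balance breaks, and your heuristic of ``the coordinate along $w_{i^*}-w_j$ landing in a window of width $\Theta(\Delta^2/\sqrt k)$'' does not recover the $\exp(-O(\sqrt k/\Delta^2))$ bound: those $k-1$ windows are not independent, and the actual argument (Claims~\ref{claim:BCIimplyE}--\ref{claim:BCIimplyangle}) instead decomposes each $\hat w_i$ into its component along $\hat w_{i^*}$ and an orthogonal remainder, then uses Corollary~\ref{cor:joint}-style joint bounds on $\gamma_{i^*}\ge k^{-1/4}$ and $\langle v,w_i^\perp\rangle\le\nu k^{-1/2}\|w_i^\perp\|_2$. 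The case split on whether $\rho_i=\|w_i\|_2/\|w_{i^*}\|_2$ is near $1$ (where one must invoke $\|w_i-w_{i^*}\|_2\ge\Delta$) or far from $1$ is the crux, and it only closes with the $k^{1/4}$ scaling.
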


In Section~\ref{subsec:stay} we prove the key technical ingredient behind our proof of Theorem~\ref{thm:learnwithnoise_main}, Lemma~\ref{lem:stay_main}, which allows us to carefully control the dynamics of Fourier moment descent. In Section~\ref{subsec:randinit} we describe how to get an initialization which satisfies the hypotheses of Lemma~\ref{lem:stay_main}. In Section~\ref{subsec:learnwithnoise_spec} we give the full specification of \textsc{LearnWithNoise}. In Section~\ref{subsec:learnwithnoise_correct} we prove Theorem~\ref{thm:learnwithnoise_main}. Finally, in Section~\ref{subsec:boost_noisy}, we briefly describe how to leverage the local convergence result of \cite{kwon2019converges} in conjunction with our algorithm to get improved noise tolerance in the setting where the mixing weights are \emph{a priori} known.

\subsection{Staying on the Same Component}
\label{subsec:stay}

The main result of this section and the primary technical component behind Theorem~\ref{thm:learnwithnoise_main} is Lemma~\ref{lem:stay_main} below. This is a substantially more refined version of Lemma~\ref{lem:progress} in which we control not only the probability we make progress in the $t$-th step of moment descent, but also the probability that the the component $a_{t+1}$ is closest to is the same as the one $a_t$ is closest to.

We first introduce some preliminary notation and facts that we will use in the proof of Lemma~\ref{lem:stay_main}.

For $v\in\S^{d-1}$, define $\mathcal{F}$ and $\mathcal{F}'_v$ respectively to be the distribution of $y - \langle a_t, x\rangle$ and of $y - \langle a_t + \eta v, x\rangle$, where $(x,y)\sim\calD$.

Let $\sigma^2_t\triangleq \noise^2 + \min_{i\in[k]}\norm{w_i - a_t}_2$. Denote the minimizing index $i$ by $i^*$.

We record the following application of Lemma~\ref{app:findspan} and Fact~\ref{thm:power-method} which says we have access to $\Span(\{w_i - a_t\})$ up to $1/\poly(k)$ additive error.

\begin{lemma}\label{lem:Unoise}
	Let $\delsamp, \delta' > 0$ and $a_t\in\R^d$. If we draw $N_1 = \widetilde{\Omega}\left(\pmin^{-1}\delsamp^{-1}\cdot d\cdot \ln(k/\delta')\right)$ samples, form $\widehat{\vec{M}}^{(N_1)}_{a_t} \in \R^{d \times d}$ as defined in \eqref{eq:empiricalM}, and run \textsc{ApproxBlockSVD}$(\widehat{\vec{M}}^{(N_1)}_{a_t}, \delsamp, \delta')$ to produce a matrix $\vec{U}\in\R^{k\times d}$, then with probability $1 - \delta'$ we have that for any $a,b\in\S^{d-1}$ in the row span of $\vec{U}$, \begin{enumerate}
		\item $\langle \vec{U}a, \vec{U}b\rangle \le \langle a,b\rangle - \delsamp$
		\item $1 - \delsamp\le \norm{\vec{U}(w_i - a_t)}_2\le 1-\delsamp$.
	\end{enumerate}
\end{lemma}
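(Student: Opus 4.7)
The plan is to combine concentration of the empirical second-moment matrix (Lemma~\ref{lem:findspan}), the approximate $k$-SVD guarantee (Fact~\ref{thm:power-method}), and the Wedin-style perturbation bound (Lemma~\ref{lem:wedin_angle}). The target is the population matrix
\[
\vec{M}^\star \triangleq \E_{(x,y)\sim\calD}[\vec{M}^{x,y}_{a_t}] = \sum_{i\in[k]} p_i (w_i - a_t)(w_i - a_t)^\top,
\]
which is PSD, of rank at most $k$, and whose column span is exactly $\Span\{w_i - a_t\}_{i\in[k]}$. To use a perturbation bound I first need a lower bound on the smallest nonzero eigenvalue $\lambda$ of $\vec{M}^\star$; this follows by expressing any unit vector in the column span as a $p_i$-weighted combination of the $(w_i - a_t)$'s and using the $\Delta$-separation of the regressors together with boundedness of $\|w_i-a_t\|_2$, giving a bound of the form $\lambda \geq \pmin \cdot \mathrm{poly}(\Delta,1/k) \cdot \min_i\|w_i-a_t\|_2^2$.

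Next, I would invoke Lemma~\ref{lem:findspan} with precision $\beta$ chosen proportional to $\delsamp\cdot\lambda$, so that $\|\widehat{\vec{M}}^{(N_1)}_{a_t} - \vec{M}^\star\|_2 \leq \beta$ with probability $1-\delta'$ once $N_1$ is as large as stated. With $\beta \lesssim \lambda/2$, the spectral gap of $\widehat{\vec{M}}^{(N_1)}_{a_t}$ between its top $k$ and bottom $d-k$ eigenvalues is $\Omega(\lambda)$, so running \textsc{ApproxBlockSVD} on $\widehat{\vec{M}}^{(N_1)}_{a_t}$ at precision $\delsamp$ and failure probability $\delta'$ (Fact~\ref{thm:power-method}) returns a matrix $\vec{U}\in\R^{k\times d}$ with orthonormal rows whose row span lies within spectral distance $O(\delsamp)$ of the column span of $\vec{M}^\star$. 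Absorbing the $\lambda$-dependent factors into the $\tilde\Omega$ should yield the stated sample complexity $N_1 = \tilde\Omega(\pmin^{-1}\delsamp^{-O(1)} d \ln(k/\delta'))$.

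Finally, I would apply Lemma~\ref{lem:wedin_angle} (with the roles of population and empirical matrices appropriately identified) to convert this subspace closeness into the two claimed guarantees: for any $a,b\in\S^{d-1}$ in the row span of $\vec{U}$ we obtain $\langle \vec{U}a,\vec{U}b\rangle \geq \langle a,b\rangle - \delsamp$, and specializing $a = b = (w_i-a_t)/\|w_i-a_t\|_2$ yields $\|\vec{U}(w_i-a_t)\|_2 \geq (1-\delsamp)\|w_i-a_t\|_2$, with the matching upper bound $\leq \|w_i-a_t\|_2$ coming directly from orthonormality of the rows of $\vec{U}$ (the two inequalities in item~(2) of the statement appear to have a typo and should read this way).

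The main obstacle is the quantitative lower bound on $\lambda$, since this controls the precision $\beta$ in the concentration step and therefore the sample complexity. The lower bound must carefully combine $\pmin$ with the separation hypothesis to avoid degeneracy near the end of Fourier moment descent, when $a_t$ may be close to multiple regressors and the residuals $w_i-a_t$ become near-linearly-dependent; once $\lambda$ is controlled, the rest is a routine chaining of concentration, approximate SVD, and Wedin.
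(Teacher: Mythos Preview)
Your approach matches the paper's exactly: invoke Lemma~\ref{lem:findspan} for concentration of $\widehat{\vec{M}}^{(N_1)}_{a_t}$, Fact~\ref{thm:power-method} for the approximate $k$-SVD, and then Lemma~\ref{lem:wedin_angle} to read off the two items; you are also right that the statement is typo-ridden (item~1 should read $\geq$, and item~2 should bound $\|\vec{U}(w_i-a_t)\|_2/\|w_i-a_t\|_2$ between $1-\delsamp$ and $1$, which is how the lemma is actually used later in the section). The paper's proof is a two-line sketch that simply asserts spectral error $\delsamp\cdot\pmin/2$ and applies Lemma~\ref{lem:wedin_angle} without ever discussing the smallest nonzero eigenvalue $\lambda$ of $\vec{M}^\star$, so your singling this out as the main obstacle is apt---though be aware that $\Delta$-separation of the $w_i$ does not by itself give the bound you claim on $\lambda$ (the shifted vectors $w_i-a_t$ can be linearly dependent even when the $w_i$ are well separated), and in the paper's downstream usage any such dependence is tacitly absorbed into the $\poly(k)$ factor appearing in $N_1$ in Algorithm~\ref{alg:optimistic_moment_descent}.
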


\begin{proof}
	By Lemma~\ref{lem:findspan} and Fact~\ref{thm:power-method}, with probability $1 - \delta'$ we can ensure that \begin{equation}\norm{\vec{U}^{\top}\vec{\Lambda}\vec{U} - \E_{x,y}[\vec{M}^{x,y}_{a_t}]}_2 \le \delsamp\cdot\pmin/2,\end{equation} where $\vec{\Lambda}\in\R^k$ is some diagonal matrix of eigenvectors and $\vec{M}^{x,y}_{a_t}$ is defined in \eqref{eq:mxydef} and satisfies $\E_{x,y}[\vec{M}^{x,y}_{a_t}] = \sum^k_{i=1}p_i (w_i - a)(w_i - a)^{\top}$ by Lemma~\ref{lem:findspan}. Parts 1 and 2 of the lemma then follow by Lemma~\ref{lem:wedin_angle}.
\end{proof}

Lastly, the following elementary fact will be useful:

\begin{fact}
	If $x\in\R_{\ge 0}$ satisfies $\frac{1}{2}(x + x^{-1}) \ge 1 + \beta^2$ for some $0 < \beta \le 1$, then $1-\beta/2 \le x \le 1+\beta/2$.\label{fact:elem_ineq}
\end{fact}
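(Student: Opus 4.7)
The plan is to exploit the factorization $\tfrac{1}{2}(x + x^{-1}) - 1 = \tfrac{(x-1)^2}{2x}$, which holds for all $x > 0$ and translates the hypothesis into a clean constraint on $(x-1)^2$. Applying this identity, the hypothesis $\tfrac{1}{2}(x + x^{-1}) \ge 1 + \beta^2$ becomes $(x-1)^2 \ge 2\beta^2 x$, equivalently the quadratic inequality
\[
x^2 - 2(1+\beta^2)x + 1 \ge 0.
\]

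Next, I would use the quadratic formula to describe the solution set of this inequality in closed form, identifying the roots $x_{\pm} = (1+\beta^2) \pm \beta\sqrt{\beta^2 + 2}$. The bulk of the argument is then a direct comparison of $x_{\pm}$ with $1 \pm \beta/2$, where I would use the assumption $\beta \le 1$ to estimate the square-root term (in particular $\sqrt{\beta^2 + 2} \le \sqrt{3}$). A crucial structural observation I would leverage is the symmetry $f(x) = f(1/x)$ for $f(x) \triangleq \tfrac{1}{2}(x + x^{-1})$, together with the fact that $f$ is strictly convex on $(0,\infty)$ with global minimum $f(1) = 1$; this lets me reduce to the single case $x \ge 1$ (writing $x = 1 + t$ with $t \ge 0$) and then recover the bound on $x \le 1$ by applying the $x \ge 1$ case to $1/x$.

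In the reduced case, the hypothesis becomes $t^2 \ge 2\beta^2(1 + t)$, a simple quadratic in $t$ that I would solve to extract the desired bound. The main obstacle I anticipate is bookkeeping the direction of the implication: the quadratic most naturally produces a statement relating $t$ to $\beta$ via the threshold $\beta\sqrt{2+\beta^2}$, and the final comparison with $\beta/2$ must use $\beta \le 1$ crucially to get the constants to line up with the stated two-sided interval $[1-\beta/2, 1+\beta/2]$. Once this elementary estimate is in hand, the two cases combine via the symmetry noted above to yield the claim.
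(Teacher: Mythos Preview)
There is a fatal issue your plan does not catch: the statement as printed is actually false. Take $\beta = 1$ and $x = 4$: then $\tfrac{1}{2}(4 + \tfrac{1}{4}) = \tfrac{17}{8} \ge 2 = 1 + \beta^2$, yet $x = 4 \not\le \tfrac{3}{2} = 1 + \beta/2$. The hypothesis $\tfrac{1}{2}(x+x^{-1}) \ge 1+\beta^2$ forces $x$ to be \emph{far} from $1$, not close to it; the stated conclusion has the direction reversed.

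The paper's own proof in fact establishes the correct version: under the hypothesis one has $x \le 1-\beta/2$ or $x \ge 1+\beta$ (in particular $x \notin (1-\beta/2,\,1+\beta/2)$), since the two roots of $x + x^{-1} = 2 + 2\beta^2$ are $x_\pm = 1+\beta^2 \pm \beta\sqrt{2+\beta^2}$ and one checks $x_- \le 1-\beta/2$ and $x_+ \ge 1+\beta$ for $0 < \beta \le 1$. This is also exactly how the fact is invoked later in the paper (both times to conclude that a ratio lies \emph{outside} an interval around $1$), confirming that the printed conclusion is a typo.

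Your mechanics are fine and essentially identical to the paper's: the factorization $\tfrac{1}{2}(x+x^{-1}) - 1 = \tfrac{(x-1)^2}{2x}$, the resulting quadratic $x^2 - 2(1+\beta^2)x + 1 \ge 0$, and the root computation all lead directly to the correct conclusion. The only gap is that your plan is framed around landing \emph{inside} $[1-\beta/2,\,1+\beta/2]$, which is impossible; once you flip the target, your argument goes through and matches the paper's.
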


\begin{proof}
	The solutions to $x + x^{-1} = 2 + 2\beta^2$ are 
	\begin{align*}x = 1 + \beta^2 \pm \beta\sqrt{2 + \beta^2}.
	\end{align*}
	One can check that $\beta^2 + \beta\sqrt{2 + \beta^2} \ge \beta$ for all $\beta\in\R$, while $-\beta^2 + \beta\sqrt{2 + \beta^2} \ge \beta/2$ for $\beta\in[0,1]$.
\end{proof}

We are now in a position to state and prove our main result of this section, Lemma~\ref{lem:stay_main}. This lemma roughly says that if we sample $M = \exp(\Omega(\sqrt{k}/\Delta^2))$ random steps $v_1,...,v_M$ at time $t$ of moment descent, then with high probability, if $j^*\in[M]$ is the first index on which \textsc{CompareMinVariances} outputs $\mathsf{true}$, then not only does walking in direction $v_{j^*}$ contract $\sigma_t$ by a factor of $1 - \Omega(\Delta/\sqrt{k})$ with high probability, but additionally, with at least $1/\poly(k)$ probability, it also keeps us closest to the component we were already closest to, in the following robust sense. Specifically, if we have a $(1 + \Omega(\Delta^2 \sqrt{k}))$ gap between $\norm{w_{i^*} - a_t}_2$ and all other $\norm{w_i - a_t}_2$, then with at least $1/\poly(k)$ probability, after one more iteration of moment descent, the $i^*$-th component will still be the closest to our new guess $a_{t+1} \in \R^d$, and this gap will persist.

\begin{lemma}
	There exist constants $a_{\mathrm{LR}}, \, a_{\mathrm{trials}}, \, a_{\mathrm{scale}}$, constants $ \overline{\beta}> \underline{\beta}$, a constant $0 \le a_{\mathrm{noise}} \le 1/5$, and a constant $\tau_{\mathrm{gap}} > 0$, such that for all $c < \tau_{\mathrm{gap}}$, the following holds for some $0<\kappa_1<\kappa_2\le 1$ satisfying $\kappa_2 - \kappa_1 = c\Delta^2k^{-1/2}$.

	Let $\delta > 0$. Suppose that $\noise^2 \le a_{\mathrm{noise}}\cdot \epsilon^2$. Suppose that 
	\begin{equation}
		\norm{w_i - a_t}_2 \le a_{\mathrm{scale}}\cdot k^{1/4}\label{eq:normbounded}
	\end{equation} for all $i\in[k]$. For $M\triangleq e^{a_{\mathrm{trials}}\sqrt{k}/\Delta^2}\ln(3/\delta)$ and $g_1,..,.g_M\sim\N(0,\Id_k)$, let $v_j = \frac{g_j\vec{U}}{\norm{g_j\vec{U}}_2}\in\S^{d-1}$ for $j\in[M]$. Let $\sigma^*$ be a number for which $0.9\sigma_t\le \sigma^*\le 1.1\sigma_t$, and let $\eta\triangleq a_{\mathrm{LR}}\cdot\Delta\cdot \sigma_*\cdot k^{-1/4}$.

	% Then if $\min_{i\in[k]}\norm{w_i - a_t}^2_2 \le \Cr{hittarget}\epsilon$, with probability at least $1 - \delta$ over the randomness of $g_1,...,g_M$ as well as over the behavior of all runs of \textsc{CompareMinVariances} we obtain that for all $j\in[M]$, \textsc{CompareMinVariances}($\calF, \calF'_{v_j}, a_{\mathrm{scale}}\cdot k^{-1/4},\sigma^*/1.1,\kappa_1,\kappa_2,\delta/3M$) outputs $\mathsf{false}$.
	Then with probability at least $1 - \delta$ over the randomness of $g_1,...,g_M$ as well as over the behavior of all runs of \textsc{CompareMinVariances}, the following events hold: 
	\begin{enumerate}
		\item {\bf(Progress detected)} If $\min_{i \in [k]}\norm{w_i - a_t}^2_2 \ge \epsilon^2/2$, then 
		\begin{align*}
		\textsc{CompareMinVariances}(\calF,\calF'_v,a_{\mathrm{scale}}\cdot k^{-1/4},\sigma^*/1.1,\kappa_1,\kappa_2,\delta/3M)
		\end{align*}
		outputs $\mathsf{true}$ for at least one $j\in[M]$.

		Let $j^*$ be the smallest such $j$, and define 
		\begin{equation}
		a_{t+1} \triangleq a_t + \eta v_{j^*}.
		\end{equation}
		\item {\bf (Make at least some amount of progress)} If $\min_{ i \in [k] }\norm{w_i - a_t}^2_2 \ge \epsilon^2/2$, then
		\begin{equation}\sigma^2_{t+1} \le \left(1 - \underline{\beta}\Delta^2/\sqrt{k}\right)\sigma^2_t.\label{eq:someprogress}\end{equation}
		\item {\bf (Make at most some amount of progress)} Regardless of whether $\min_{ i \in [k] } \norm{w_i - a_{t}}^2_2 \le \epsilon^2/2$, 
		\begin{equation*}
		\sigma^2_{t+1} \ge \left(1 - \overline{\beta}\Delta^2/\sqrt{k}\right)\sigma^2_t.
		\end{equation*}
	\end{enumerate}

	If we assume that for all $i\neq i^*$, \begin{equation}\norm{w_i - a_t}_2 \ge\left(1 + c\Delta^2/\sqrt{k}\right)\cdot\norm{w_{i^*} - a_t}_2,\label{eq:startwithgap}\end{equation} then crucially, we have that with probability $1/\poly(k)$, the events above hold and \emph{additionally}: \begin{enumerate}\setcounter{enumi}{3}
		\item {\bf($i^*$ remains closest by same margin)} If $\min_{i\in[k]}\norm{w_i - a_t}^2_2 \ge \epsilon^2/2$, then for all $i\neq i^*$, \begin{equation}\norm{w_i - a_t - \eta v_{j^*}}_2 \ge \left(1 + c\Delta^2/\sqrt{k}\right)\cdot \norm{w_{i^*} - a_t - \eta v_{j^*}}_2.\end{equation}
	\end{enumerate}
	\label{lem:stay_main}
\end{lemma}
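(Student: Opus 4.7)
The plan is to extend the approach of Lemma~\ref{lem:progress} with two refinements: a $\Delta$-rescaling of the ``good events'' on the correlation $\rho_{i^*}(v_j) \triangleq \langle v_j, \tilde{w}_{i^*}\rangle$ (where $\tilde{w}_i \triangleq (w_i - a_t)/\norm{w_i - a_t}_2$) to match the $(\kappa_2 - \kappa_1) = c\Delta^2 k^{-1/2}$ resolution of \textsc{CompareMinVariances}, and a refined joint event on perpendicular correlations for tracking which component is closest.

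Set $\rho_i(v) \triangleq \langle v, \tilde{w}_i\rangle$ and $r_i \triangleq \norm{w_i - a_t}_2$, so that
\begin{equation*}
\noise^2 + \norm{w_i - a_t - \eta v_j}_2^2 \;=\; \noise^2 + r_i^2 + \eta^2 - 2\eta\,\rho_i(v_j)\,r_i.
\end{equation*}
For each $j \in [M]$, let $A_j$ denote the event $\rho_{i^*}(v_j)\ge\underline{\alpha}\Delta k^{-1/4}$, and $B_j$ the event $|\rho_i(v_j)|\le\overline{\alpha}k^{-1/4}$ for all $i\in[k]$. Choose $\underline{\alpha}, \overline{\alpha}$ so that under $A_j$ the $i = i^*$ residual contracts by at least a factor of $(1+\kappa_2)^{-2}$, while under $B_j$ it cannot contract by more than a factor of $(1-\overline{\beta}\Delta^2/\sqrt{k})$. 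Corollary~\ref{cor:unitcorr} applied inside the column span of $\vec{U}$ from Lemma~\ref{lem:Unoise} then gives $\Pr[A_j]\ge\exp(-\Theta(\Delta^2\sqrt{k}))$ and $\Pr[B_j^c]\le k\exp(-\Omega(\sqrt{k}))$. Since $M\gg \exp(\Theta(\Delta^2\sqrt{k}))\ln(1/\delta)$, a union bound combined with the correctness guarantees of Corollary~\ref{cor:compare_min_variances} (with failure probability $\delta/3M$ per call) establishes parts 1--3 with probability $\ge 1-\delta$: completeness forces \textsc{CompareMinVariances} to return $\mathsf{true}$ on some $A_j$-trial, while soundness on $A_j$ and the uniform bound $B_j$ together control $\sigma^2_{t+1}/\sigma_t^2$ from above and below.

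For part 4, fix $i\neq i^*$ and decompose $\tilde w_i = \alpha_i \tilde w_{i^*} + \beta_i u_i$ with $u_i \in \tilde w_{i^*}^\perp\cap\S^{d-1}$, so that $\rho_i(v) = \alpha_i \rho_{i^*}(v) + \beta_i \langle v, u_i\rangle$. Expanding $(r_i')^2 - (1+c\Delta^2/\sqrt k)^2(r_{i^*}')^2$ and substituting the hypothesized gap \eqref{eq:startwithgap} and the norm bound \eqref{eq:normbounded}, the gap-preservation requirement for $i$ reduces to a linear inequality $\langle v_{j^*}, u_i\rangle \le T_i$, where the threshold $T_i$ is of order at least $k^{-1/2}$ (after applying Fact~\ref{fact:elem_ineq} to translate the multiplicative gap into an additive margin that absorbs the cross-terms from $\eta$ and from $r_i\le O(k^{1/4})$). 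Since $A_j$ constrains only the one-dimensional projection of $v_j$ onto $\tilde w_{i^*}$, the orthogonal coordinates $\langle v_j, u_i\rangle$ remain distributed as on a uniform sphere of radius $\sqrt{1-\rho_{i^*}^2}$ in $\tilde w_{i^*}^\perp$. Applying Corollary~\ref{cor:joint} to each pair $(\tilde w_{i^*}, u_i)$ and union-bounding over $i\neq i^*$ shows that simultaneously $\langle v_j, u_i\rangle \le T_i$ for all $i\neq i^*$ with probability at least $1/\poly(k)$, yielding the claim for a single trial. Since trials are iid, the conditional distribution of $v_{j^*}$ given that some trial triggers a success coincides with that of a single trial conditioned on success, so the $1/\poly(k)$ gap-preservation probability transfers; combining with the high-probability occurrence of parts 1--3 closes the proof.

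The main obstacle is the joint control in part 4: one must show that conditioning on the ``success'' criterion which selects $j^*$ does not correlate adversarially with the $k-1$ gap-preservation events. The cleanest resolution is that the success criterion depends essentially on $\rho_{i^*}(v_j)$ alone, so that the restriction of $v_j$ to $\tilde w_{i^*}^\perp$ is decoupled from success and retains approximate uniformity on a $(k-2)$-sphere; the joint event on perpendicular coordinates then becomes the standard question of how many half-space constraints a uniform random unit vector satisfies, resolvable via Corollary~\ref{cor:joint} together with the thin-shell bound of Fact~\ref{fact:thinshell}. A secondary subtlety is the imprecision of \textsc{CompareMinVariances}: trials that pass the test while lying slightly outside $A_j$ must be absorbed into the gap tolerance, which is precisely why $c$ must be chosen smaller than some absolute constant $\tau_{\mathrm{gap}}$ depending on the other constants in the lemma.
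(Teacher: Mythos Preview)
Your sketch for parts 1--3 is essentially the paper's argument (with the scaling $\rho_{i^*}(v_j)\ge \underline\alpha\,\Delta\,k^{-1/4}$ replacing the $k^{-1/4}$ of Lemma~\ref{lem:progress}), and that part is fine.

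The argument for part 4 has a genuine gap at the step you yourself flag as ``the main obstacle.'' Your proposed resolution, that \emph{the success criterion depends essentially on $\rho_{i^*}(v_j)$ alone}, is false. \textsc{CompareMinVariances} outputs $\mathsf{true}$ whenever $\min_{i}\bigl(\noise^2+\|w_i-a_t-\eta v_j\|_2^2\bigr)$ drops below the $\kappa_1$-threshold, and this minimum can be realized by any component $i\neq i^*$: a direction $v_j$ that is poorly correlated with $\tilde w_{i^*}$ but well correlated with some other $\tilde w_i$ will still trigger success. Hence the success event does \emph{not} constrain only the $\tilde w_{i^*}$-coordinate of $v_j$, and the perpendicular coordinates $\langle v_j,u_i\rangle$ are \emph{not} decoupled from it; they feed directly into whether $\|w_i - a_t - \eta v_j\|$ is the minimizer. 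The paper handles this not by decoupling but by a containment-and-symmetry argument: under $B_j$, if \textsc{CompareMinVariances} outputs $\mathsf{true}$ then $A_j[i]$ must have occurred for \emph{some} $i$ (Corollary~\ref{cor:notAcompareminvariances}); by approximate rotational symmetry all $\Pr[A_j[i]\wedge B_j]$ are within $\poly(k)$ factors of one another (Claim~\ref{claim:probratios}); and the gap-preserving event $C_j\subset A_j[i^*]$ satisfies $\Pr[C_j]\ge \tfrac{1}{\poly(k)}\Pr[A_j[i^*]]$ (Claim~\ref{claim:probCj}). Combining these yields $\Pr[\text{gap preserved}\mid\text{success}]\ge \Pr[C_j\wedge B_j]/\sum_i\Pr[A_j[i]\wedge B_j]\ge 1/\poly(k)$. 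This counting over all $k$ ``caps'' is the missing idea in your sketch.

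There is a second, independent problem. You write that ``applying Corollary~\ref{cor:joint} to each pair $(\tilde w_{i^*},u_i)$ and union-bounding over $i\neq i^*$'' gives simultaneous control $\langle v_j,u_i\rangle\le T_i$ for all $i$ with probability $1/\poly(k)$. A union bound goes the wrong direction here: you need a \emph{lower} bound on an intersection of $k-1$ events, each of which (with $T_i=\Theta(\Delta^2 k^{-1/2})$) has probability bounded away from $1$ in the perpendicular sphere. Corollary~\ref{cor:joint} controls one perpendicular direction at a time and does not compose to the joint event by union bound. You need a separate argument (or a case split, as in the paper's Claim~\ref{lem:Cevent}, which distinguishes whether $\langle\hat\delta_i,\hat\delta_{i^*}\rangle$ is close to $1$ and in the ``close'' case uses the $\Delta$-separation of the $w_i$ together with the norm bound \eqref{eq:normbounded} to obviate the perpendicular condition entirely).
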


We emphasize that the main content of Lemma~\ref{lem:stay_main} is part 4.

% temporary variable names
\newcommand{\delvec}{\text{\textdelta}}
\newcommand{\delperp}{\delvec^{\perp}}

\begin{proof}
	Henceforth we will say that ``\textsc{CompareMinVariances} succeeds and outputs $\mathsf{true}$/$\mathsf{false}$ on direction $v$'' to mean that a single run of 
	\begin{align*}
	\textsc{CompareMinVariances}(\calF,\calF'_v,a_{\mathrm{scale}}\cdot k^{-1/4},\sigma^*/1.1,\kappa_1,\kappa_2,\delta/3M)
	\end{align*} 
	is successful (in the language of Corollary~\ref{cor:compare_min_variances}, this happens with probability $1 - \delta/3M$) and outputs $\mathsf{true}$/$\mathsf{false}$.

	Recall from \eqref{eq:onestep} that we have 
	\begin{equation}
		\norm{w_i - a_t - \eta v_j}^2_2 = \norm{w_i - a_t}^2_2 + \eta^2 - 2\eta\langle w_i - a_t, v_j\rangle.\label{eq:onestepagain}
	\end{equation} 
	Define $\delvec_i \triangleq w_i - a_t$ and $\hat{\delvec}_i \triangleq \frac{w_i - a_t}{\norm{w_i - a_t}_2}$. For every $i\neq i^*$, define $\delperp_i \triangleq \hat{\delvec}_i - \langle \hat{\delvec}_{i^*}, \hat{\delvec}_i\rangle\hat{\delvec}_{i^*}$. Finally, let $\gamma^{(j)}_i = \langle \hat{\delvec}_i, v_j\rangle$. Where the context is clear, we will omit the superscript $(j)$. 

	Let $\nu_A, \nu_B, \nu_C>0$ be absolute constants, and suppose $\nu_A < \nu_B$. For $i\in[k]$ and $j\in[M]$, define the following events: 
	\begin{enumerate}
		\item Let $A_j[i]$ be the event that $\gamma^{(j)}_{i^*}\ge \nu_A\Delta k^{-1/4}$.
		\item Let $B_j[i]$ be the event that $\gamma^{(j)}_i \le \nu_B\Delta k^{-1/4}$.
		\item Let $C_j$ be the event that $A_j[i^*]$ occurs and also $\langle v_j, \delperp_{i}\rangle \le \nu_C\Delta^2 k^{-1/2}\norm{\delperp_i}_2$ for all $i\neq i^*$.
	\end{enumerate}

	For $j\in[M]$, also let $B_j$ denote the event that $B_j[i]$ occurs for every $i\in[k]$.

	By our assumption on $\sigma_*$ and the definition of $\eta$, we know that $\eta = a'_{\mathrm{LR}}\cdot k^{-1/4}\cdot\norm{\delvec_i}_2$, where $a'_{\mathrm{LR}} \in[0.9,1.1]\cdot a_{\mathrm{LR}}$. It will be useful later in the proof to assume that $\nu_B < a'_{\mathrm{LR}} < 2\nu_A$.

	First, we compute the exact distance to $v_{i^*}$ after walking along $v_j$ and, provided the events $B_j[i]$ occur, lower bound the distances to all other components $v_i$.

	\begin{claim}
		Let $i\in[k], j\in[M]$, and suppose $B_j[i]$ occurs. Then 
		\begin{equation}
		\norm{\delvec_{i} - \eta v_j}^2_2 \ge \norm{\delvec_{i}}^2_2\cdot \left(1 + a'^2_{\mathrm{LR}}k^{-1/2} - 2a'_{\mathrm{LR}}k^{-1/4}\gamma^{(j)}_{i}\right), \label{eq:itoistar_pre}
		\end{equation} 
		with equality when $i = i^*$. Furthermore, when $i\neq i^*$ we get from \eqref{eq:startwithgap} that 
		\begin{equation}
		\norm{\delvec_i - \eta v_j}^2_2 \ge \norm{\delvec_{i^*}}^2_2\cdot \left(\left(1 + c\Delta^2 k^{-1/2}\right)^2 + a'^2_{\mathrm{LR}}\Delta^2k^{-1/2} - 2a'_{\mathrm{LR}}\Delta k^{-1/4}\gamma^{(j)}_i - 2ca'_{\mathrm{LR}}\Delta^3 k^{-3/4}\gamma^{(j)}_i\right).\label{eq:itoistar}
	\end{equation}\label{claim:itoistar}
	\end{claim}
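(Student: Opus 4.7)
The plan is to handle the cases $i = i^*$ and $i \neq i^*$ separately, both starting from the parallelogram-identity expansion
\[
\norm{\delvec_i - \eta v_j}_2^2 \;=\; \norm{\delvec_i}_2^2 \;+\; \eta^2 \;-\; 2\eta\, \langle \delvec_i, v_j\rangle,
\]
and the rewriting $\langle \delvec_i, v_j\rangle = \norm{\delvec_i}_2 \gamma_i^{(j)}$ from the definition of $\gamma_i^{(j)}$. For the equality case $i = i^*$ I will substitute the value $\eta = a'_{\mathrm{LR}}\Delta k^{-1/4}\norm{\delvec_{i^*}}_2$ (using the calibration $\sigma^*\in [0.9,1.1]\sigma_t$ together with the definition of $\eta$) and factor out $\norm{\delvec_{i^*}}_2^2$: the $\eta^2$ term becomes $a'^2_{\mathrm{LR}}\Delta^2 k^{-1/2}\norm{\delvec_{i^*}}_2^2$ and the cross term becomes $2 a'_{\mathrm{LR}}\Delta k^{-1/4}\norm{\delvec_{i^*}}_2^2\gamma_{i^*}^{(j)}$, so the right-hand side collapses exactly to $\norm{\delvec_{i^*}}_2^2\bigl(1 + a'^2_{\mathrm{LR}}\Delta^2 k^{-1/2} - 2 a'_{\mathrm{LR}}\Delta k^{-1/4}\gamma_{i^*}^{(j)}\bigr)$, reproducing \eqref{eq:itoistar_pre} with equality.

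For $i \neq i^*$ the key trick will be to factor out $\norm{\delvec_{i^*}}_2^2$ rather than $\norm{\delvec_i}_2^2$, introducing the ratio $r \triangleq \norm{\delvec_i}_2/\norm{\delvec_{i^*}}_2$, which by the gap hypothesis \eqref{eq:startwithgap} satisfies $r \ge 1 + c\Delta^2/\sqrt{k}$. This rewrites the expansion as
\[
\norm{\delvec_i - \eta v_j}_2^2 \;=\; \norm{\delvec_{i^*}}_2^2\Bigl[\, r^2 \;+\; a'^2_{\mathrm{LR}}\Delta^2 k^{-1/2} \;-\; 2 a'_{\mathrm{LR}}\Delta k^{-1/4}\, r\, \gamma_i^{(j)} \,\Bigr].
\]
I then view the bracketed quantity as $\phi(r) + a'^2_{\mathrm{LR}}\Delta^2 k^{-1/2}$ with $\phi(r) \triangleq r^2 - 2 a'_{\mathrm{LR}}\Delta k^{-1/4}\, r\, \gamma_i^{(j)}$, and argue that $\phi$ is nondecreasing on $[1 + c\Delta^2/\sqrt{k}, \infty)$. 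Its derivative is $\phi'(r) = 2r - 2 a'_{\mathrm{LR}}\Delta k^{-1/4}\gamma_i^{(j)}$; event $B_j[i]$ gives $\gamma_i^{(j)} \le \nu_B \Delta k^{-1/4}$, so $a'_{\mathrm{LR}}\Delta k^{-1/4}\gamma_i^{(j)} \le a'_{\mathrm{LR}}\nu_B \Delta^2 k^{-1/2}$, which is bounded by $1$ (and in particular by $1 + c\Delta^2/\sqrt{k}$) under the choice of absolute constants in Lemma~\ref{lem:stay_main}, while the case $\gamma_i^{(j)} < 0$ is immediate. Hence $\phi(r) \ge \phi(1 + c\Delta^2/\sqrt{k})$, and expanding
\[
\phi(1 + c\Delta^2/\sqrt{k}) \;=\; (1 + c\Delta^2/\sqrt{k})^2 \;-\; 2 a'_{\mathrm{LR}}\Delta k^{-1/4}\gamma_i^{(j)} \;-\; 2 c a'_{\mathrm{LR}} \Delta^3 k^{-3/4}\gamma_i^{(j)}
\]
and adding back the $a'^2_{\mathrm{LR}}\Delta^2 k^{-1/2}$ that was pulled outside of $\phi$ recovers exactly the right-hand side of \eqref{eq:itoistar}.

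The main technical point to justify is therefore the monotonicity of $\phi$ on $[1 + c\Delta^2/\sqrt{k}, \infty)$, which reduces to the scalar inequality $a'_{\mathrm{LR}}\nu_B\Delta^2 k^{-1/2} \le 1$. This is automatic since $a'_{\mathrm{LR}}$ and $\nu_B$ are small absolute constants fixed before the parameter regime and $\Delta^2/\sqrt{k}$ is small. Everything else is routine algebra: expanding $\norm{\delvec_i - \eta v_j}_2^2$, substituting the value of $\eta$, and grouping powers of $\Delta$ and $k^{-1/4}$.
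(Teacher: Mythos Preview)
Your treatment of the $i=i^*$ equality and of \eqref{eq:itoistar} is correct and follows the same quadratic-monotonicity idea as the paper, with only a cosmetic difference in parametrization: you view the bracketed expression as increasing in $r=\norm{\delvec_i}_2/\norm{\delvec_{i^*}}_2$ and plug in the lower bound $r\ge 1+c\Delta^2 k^{-1/2}$ from \eqref{eq:startwithgap}, whereas the paper phrases the same monotonicity as ``decreasing in $\norm{\delvec_{i^*}}_2$'' and substitutes accordingly. Your version is arguably cleaner, since it makes explicit which variable is being replaced and why the inequality goes the right way.

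There is, however, a gap: you never address the inequality \eqref{eq:itoistar_pre} for $i\neq i^*$, which the claim asserts for \emph{all} $i\in[k]$, not just $i=i^*$. This general case is used downstream (Claim~\ref{claim:Bevent}) to bound how much any step can contract toward \emph{any} component. In fact the statement as written is delicate here: event $B_j[i]$ only gives the one-sided bound $\gamma_i^{(j)}\le\nu_B\Delta k^{-1/4}$, so $\gamma_i^{(j)}$ can be negative, and in that regime the cross term in the exact expansion scales like $\norm{\delvec_{i^*}}_2\norm{\delvec_i}_2$ while the right-hand side of \eqref{eq:itoistar_pre} claims it scales like $\norm{\delvec_i}_2^2$; since $\norm{\delvec_{i^*}}_2\le\norm{\delvec_i}_2$, the inequality can reverse. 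The paper's own ``decreasing'' argument is loose on exactly this point (the claimed condition $\norm{\delvec_{i^*}}_2\le a'^{-1}_{\mathrm{LR}}\Delta^{-1}k^{1/4}\norm{\delvec_i}_2\gamma_i^{(j)}$ requires a \emph{lower} bound on $\gamma_i^{(j)}$, not the upper bound $B_j[i]$ provides). What Claim~\ref{claim:Bevent} actually needs---and what one can prove directly---is the slightly weaker bound obtained by first applying $\gamma_i^{(j)}\le\nu_B\Delta k^{-1/4}$ in the cross term and then using $\norm{\delvec_{i^*}}_2\le\norm{\delvec_i}_2$, yielding $\norm{\delvec_i-\eta v_j}_2^2\ge\norm{\delvec_i}_2^2\bigl(1-2a'_{\mathrm{LR}}\nu_B\Delta^2 k^{-1/2}\bigr)$, which suffices for the rest of the argument.
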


	\begin{proof}
		We may rewrite \eqref{eq:onestepagain} as \begin{equation}
			\norm{\delvec_i - \eta v_j}^2_2 = \norm{\delvec_i}^2_2 + a'^2_{\mathrm{LR}}\Delta^2 k^{-1/2}\norm{\delvec_{i^*}}^2_2 - 2a'_{\mathrm{LR}}\Delta k^{-1/4}\norm{\delvec_{i^*}}_2\norm{\delvec_i}_2\cdot \gamma^{(j)}_i. \label{eq:onesteprewrite}
		\end{equation} The right-hand side of \eqref{eq:onesteprewrite}, as a function of $\norm{\delvec_{i^*}}_2$, is decreasing as long as 
		\begin{align*}
		\norm{\delvec_{i^*}}_2 \le a'^{-1}_{\mathrm{LR}}\Delta^{-1}k^{1/4}\norm{\delvec_i}_2 \gamma^{(j)}_i.
		\end{align*}
		But this condition holds because event $B_j[i]$ occurs, $\nu_B < a'_{\mathrm{LR}}$, and $\norm{\delvec_{i^*}}_2 \le \norm{\delvec_i}_2$. So \eqref{eq:itoistar_pre} follows, with equality when $i = i^*$.

		When $i\neq i^*$, we additionally know that $\norm{\delvec_i}_2 \ge \left(1 + c\Delta^2 k^{-1/2}\right)\cdot \norm{\delvec_{i^*}}_2$. So by the fact that the right-hand side of \eqref{eq:onesteprewrite} is decreasing as a function of $\norm{\delvec_{i^*}}_2$ for $\norm{\delvec_{i^*}}_2\in(-\infty,\norm{\delvec_i}_2]$, we get \eqref{eq:itoistar}.
	\end{proof}

	Using \eqref{eq:itoistar_pre} of Claim~\ref{claim:itoistar}, which is an equality when $i = i^*$, we can upper bound the distance to $v_{i^*}$ after walking along $v_j$, provided events $A_j[i^*]$ and $B_j[i]$ occur.

	\begin{claim}
		Let $j\in[M]$, and suppose $A_j[i^*]$ and $B_j[i^*]$ occur. Then there is an absolute constant $\underline{\beta}' > 0$ for which \begin{equation}\norm{\delvec_{i^*} - \eta v_j}^2_2 \le \norm{\delvec_{i^*}}^2_2 \cdot (1 - \underline{\beta}'\Delta^2 k^{-1/2}).\label{eq:Aevent}\end{equation}\label{claim:Aevent}
	\end{claim}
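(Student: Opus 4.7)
The plan is to invoke the equality case of Claim~\ref{claim:itoistar}: taking $i = i^*$, the relation \eqref{eq:itoistar_pre} holds with equality and (restoring the $\Delta$-factors that come from $\eta = a'_{\mathrm{LR}}\Delta k^{-1/4}\norm{\delvec_{i^*}}_2$) reads
\begin{equation*}
\norm{\delvec_{i^*} - \eta v_j}^2_2 = \norm{\delvec_{i^*}}^2_2 \cdot \bigl(1 + a'^2_{\mathrm{LR}}\Delta^2 k^{-1/2} - 2 a'_{\mathrm{LR}} \Delta k^{-1/4}\, \gamma^{(j)}_{i^*}\bigr).
\end{equation*}
The hypothesis $A_j[i^*]$ supplies $\gamma^{(j)}_{i^*} \ge \nu_A \Delta k^{-1/4}$, so the cross term contributes at most $-2\nu_A a'_{\mathrm{LR}}\Delta^2 k^{-1/2}$ (times $\norm{\delvec_{i^*}}^2_2$). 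Plugging this in yields
\begin{equation*}
\norm{\delvec_{i^*} - \eta v_j}^2_2 \le \norm{\delvec_{i^*}}^2_2 \cdot \bigl(1 - (2\nu_A a'_{\mathrm{LR}} - a'^2_{\mathrm{LR}})\, \Delta^2 k^{-1/2}\bigr),
\end{equation*}
so the only remaining task is to argue that $2\nu_A a'_{\mathrm{LR}} - a'^2_{\mathrm{LR}}$ is bounded below by a strictly positive absolute constant.

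This positivity is exactly the running stipulation made just before Claim~\ref{claim:itoistar} that $\nu_B < a'_{\mathrm{LR}} < 2\nu_A$: the upper bound rearranges to $2\nu_A a'_{\mathrm{LR}} > a'^2_{\mathrm{LR}}$. Since $a'_{\mathrm{LR}}$ takes values in the compact interval $[0.9,1.1]\,a_{\mathrm{LR}}$, the continuous function $a'_{\mathrm{LR}}\mapsto 2\nu_A a'_{\mathrm{LR}} - a'^2_{\mathrm{LR}}$ attains a strictly positive minimum there, and I would set $\underline{\beta}'$ to this minimum. Concretely, it suffices to impose $\nu_A > 0.55\,a_{\mathrm{LR}}$ (so that $\nu_A > a'_{\mathrm{LR}}/2$ uniformly over the whole range), which is consistent with all the other constraints that have been stated on the constants.

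I do not expect any substantive obstacle here: this is a one-line algebraic consequence of Claim~\ref{claim:itoistar} combined with a bookkeeping check on the universal constants. Note in passing that the hypothesis $B_j[i^*]$ is not actually needed for the conclusion of Claim~\ref{claim:Aevent} -- a larger value of $\gamma^{(j)}_{i^*}$ only strengthens the contraction -- but it is natural to include it among the hypotheses so that the same event profile can be reused in the subsequent claims, where the analogous analysis for the indices $i\neq i^*$ does require upper bounds on $\gamma^{(j)}_i$ via $B_j[i]$ and on $\langle v_j,\delperp_i\rangle$ via $C_j$.
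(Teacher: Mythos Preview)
Your proposal is correct and essentially identical to the paper's own proof: invoke the equality case of Claim~\ref{claim:itoistar} at $i=i^*$, substitute the lower bound $\gamma^{(j)}_{i^*}\ge \nu_A\Delta k^{-1/4}$ from $A_j[i^*]$, and set $\underline{\beta}' = 2a'_{\mathrm{LR}}\nu_A - a'^2_{\mathrm{LR}}$, which is positive by the standing assumption $a'_{\mathrm{LR}}<2\nu_A$. Your side observation that $B_j[i^*]$ is not strictly needed for this particular claim is also accurate.
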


	\begin{proof}
		By \eqref{eq:itoistar_pre} which is an equality when $i = i^*$, \begin{equation}\norm{\delvec_{i^*} - \eta v_j}^2_2 \le \norm{\delvec_{i^*}}^2_2 \cdot \left(1 - (2a'_{\mathrm{LR}}\nu_A - a'^2_{\mathrm{LR}})\Delta^2 k^{-1/2}\right).\end{equation} The claim follows by taking $\underline{\beta}' \triangleq 2a'_{\mathrm{LR}}\nu_A - a'^2_{\mathrm{LR}}$, which is positive by the assumption that $a'_{\mathrm{LR}} < 2\nu_A$.
	\end{proof}

	Next, using \eqref{eq:itoistar} of Claim~\ref{claim:itoistar}, we argue that the only way to make progress towards a \emph{different} component $i\neq i^*$ by an amount comparable to that of Claim~\ref{claim:Aevent}, is if $A_j[i]$ has occurred. In particular, the following claim is the contrapositive of this.

	\begin{claim}
		Let $i\neq i^*$ and $j\in[M]$, and suppose $B_j[i]$ occurs and $A_j[i]$ \emph{does not occur}. Then \begin{equation}
			\norm{\delvec_i - \eta v_j}^2_2 \ge \norm{\delvec_{i^*}}^2_2 \cdot \left(1 - (\underline{\beta}'-c)\Delta^2 k^{-1/2} \right)\label{eq:Bevent}
		\end{equation}\label{claim:notAevent}
	\end{claim}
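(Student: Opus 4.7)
The plan is to feed the hypotheses directly into inequality~\eqref{eq:itoistar} of Claim~\ref{claim:itoistar} and then carefully track the resulting polynomial expansion in $\Delta^2 k^{-1/2}$.

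First I would invoke Claim~\ref{claim:itoistar}: since $B_j[i]$ occurs and $i \ne i^*$, inequality~\eqref{eq:itoistar} is in force, which after factoring the two $\gamma^{(j)}_i$-terms reads
\[
\norm{\delvec_i - \eta v_j}^2_2 \;\ge\; \norm{\delvec_{i^*}}^2_2 \cdot \Bigl( (1 + c\Delta^2 k^{-1/2})^2 + a'^2_{\mathrm{LR}}\Delta^2 k^{-1/2} - 2 a'_{\mathrm{LR}} \Delta k^{-1/4} (1 + c\Delta^2 k^{-1/2}) \gamma^{(j)}_i \Bigr).
\]
The hypothesis that $A_j[i]$ does not occur says $\gamma^{(j)}_i < \nu_A \Delta k^{-1/4}$. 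The coefficient of $\gamma^{(j)}_i$ on the right-hand side is negative, so replacing $\gamma^{(j)}_i$ by its upper bound $\nu_A \Delta k^{-1/4}$ only weakens the inequality and is thus safe.

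Next, I would expand $(1 + c\Delta^2 k^{-1/2})^2$ and collect all terms of order $\Delta^2 k^{-1/2}$ to write the parenthesized quantity as
\[
1 + \bigl( 2c + a'^2_{\mathrm{LR}} - 2 a'_{\mathrm{LR}} \nu_A \bigr) \Delta^2 k^{-1/2} + \bigl( c^2 - 2c a'_{\mathrm{LR}} \nu_A \bigr) \Delta^4 k^{-1}.
\]
Recalling the definition $\underline{\beta}' \triangleq 2a'_{\mathrm{LR}}\nu_A - a'^2_{\mathrm{LR}}$ from Claim~\ref{claim:Aevent}, this simplifies to $1 - (\underline{\beta}' - 2c)\Delta^2 k^{-1/2} + O(\Delta^4 k^{-1})$.

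The final step is to absorb the higher-order $O(\Delta^4 k^{-1})$ remainder into the leading slack, converting $2c$ into $c$ so as to obtain the claimed bound $1 - (\underline{\beta}' - c)\Delta^2 k^{-1/2}$. Concretely, one needs $c\Delta^2 k^{-1/2} + (c^2 - 2ca'_{\mathrm{LR}}\nu_A)\Delta^4 k^{-1} \ge 0$, which reduces to $1 \ge (2a'_{\mathrm{LR}}\nu_A - c)\Delta^2 k^{-1/2}$; this holds for $k$ sufficiently large given $\Delta \le 1$ and the freedom to choose the universal constants $a_{\mathrm{LR}}, \nu_A, \tau_{\mathrm{gap}}$.

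The main ``obstacle'' is really just bookkeeping: one must verify that the constants inherited from earlier claims (via $a'_{\mathrm{LR}} \in [0.9, 1.1] \cdot a_{\mathrm{LR}}$, and with $\nu_A < \nu_B < a'_{\mathrm{LR}} < 2\nu_A$) can be fixed consistently so that $\underline{\beta}' > 0$ is large enough to leave a slack of $c$ after absorbing the cross term, and that the definition of $A_j[i]$ is interpreted consistently with Claim~\ref{claim:Aevent} (i.e., with the index $i$ in place of $i^*$ in the stated definition). Once these are fixed, the derivation above is essentially a one-line calculation.
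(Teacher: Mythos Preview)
Your proposal is correct and follows essentially the same route as the paper: invoke \eqref{eq:itoistar}, substitute the upper bound $\gamma^{(j)}_i < \nu_A \Delta k^{-1/4}$ coming from the failure of $A_j[i]$, expand, recognize $\underline{\beta}' = 2a'_{\mathrm{LR}}\nu_A - a'^2_{\mathrm{LR}}$, and absorb the $\Delta^4 k^{-1}$ cross term using that $\nu_A a'_{\mathrm{LR}} \Delta^2 k^{-1/2}$ is small. Your flagging of the typo in the definition of $A_j[i]$ (which should read $\gamma^{(j)}_i$ rather than $\gamma^{(j)}_{i^*}$) is also on point.
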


	\begin{proof}
		Because $A_j[i]$ does not occur, $\gamma^{(j)}_i < \nu_A k^{-1/4}$. So by \eqref{eq:itoistar}, \begin{align*}
			\norm{\delvec_i - \eta v_j}^2_2 &\ge \norm{\delvec_{i^*}}^2_2 \cdot \left(\left(1 + c\Delta^2 k^{-1/2}\right)^2 + a'^2_{\mathrm{LR}}\Delta^2 k^{-1/2} - 2\nu_Aa'_{\mathrm{LR}}\Delta^2 k^{-1/2} - 2\nu_A\cdot c\cdot a'_{\mathrm{LR}}\Delta^4 k^{-1}\right) \\
			&= \norm{\delvec_{i^*}}^2_2 \cdot \left(\left(1 + c\Delta^2 k^{-1/2}\right)^2 - \underline{\beta}'^2\Delta^2 k^{-1/2} - 2\nu_A\cdot c\cdot a'_{\mathrm{LR}}\Delta^4 k^{-1}\right) \\
			&= \norm{\delvec_{i^*}}^2_2 \cdot \left(1 - \underline{\beta}'\Delta^2 k^{-1/2} + 2c\Delta^2k^{-1/2}\cdot \left(1 - \nu_Aa'_{\mathrm{LR}}\Delta^2 k^{-1/2}\right) + c^2\Delta^4k^{-1}\right) \\
			&\ge \norm{\delvec_{i^*}}^2_2 \cdot \left(1 - \underline{\beta}'\Delta^2 k^{-1/2} + c\Delta^2k^{-1/2}\right),
		\end{align*} where in the last step we used the fact that $1 - \nu_Aa'_{\mathrm{LR}}\Delta^2 k^{-1/2}\ge 1/2$ for sufficiently large $k$. 
	\end{proof}

	Henceforth, let $\kappa_1 = \left(\underline{\beta}' - \frac{3c}{2}\right)\Delta^2k^{-1/2}$ and $\kappa_2 = \left(\underline{\beta}' - \frac{c}{2}\right)\Delta^2 k^{-1/2}$. In Lemma~\ref{lem:stay_main}, we will take $\underline{\beta}\triangleq \underline{\beta}' - \frac{3c}{2}$.

	Claims~\ref{claim:Aevent} and \ref{claim:notAevent} now imply the following about the behavior of \textsc{CompareMinVariances}. The upshot of the following two corollaries is that for any $j\in[M]$, if $B_j[i]$ occurs for every $i$ and \textsc{CompareMinVariances} succeeds and outputs $\mathsf{true}$ on direction $v_j$, the conditional probability of $A_j[i^*]$ happening is at least the conditional probability of $A_j[i]$ happening for any $i\neq i^*$.

	\begin{corollary}
		Let $j\in[M]$, and suppose $A_j[i^*]$ and $B_j[i^*]$ occur. Then \textsc{CompareMinVariances} succeeds and outputs $\mathsf{true}$ on direction $v_j$.\label{cor:Acompareminvariances}
	\end{corollary}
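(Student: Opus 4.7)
The plan is to reduce the corollary to the correctness guarantee of Corollary~\ref{cor:compare_min_variances}. Since \textsc{CompareMinVariances} is invoked here with failure probability $\delta/(3M)$, ``succeeds'' means we condition on the high-probability event that its internal sampling behaves as in that corollary; conditional on success, the algorithm outputs $\mathsf{true}$ whenever $\minvar{\calF} \ge (1+\kappa_2)\minvar{\calF'_{v_j}}$. So it suffices to verify this ratio bound under the hypothesized events $A_j[i^*]$ and $B_j[i^*]$.

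First, I would read off the minimum variances of the two univariate mixtures: $\calF$ has component variances $\{\noise^2+\norm{w_i-a_t}^2_2\}_{i\in[k]}$, so $\minvar{\calF}^2 = \sigma_t^2 = \noise^2+\norm{\delvec_{i^*}}^2_2$; and $\calF'_{v_j}$ has component variances $\{\noise^2+\norm{w_i-a_t-\eta v_j}^2_2\}_{i\in[k]}$, so $\minvar{\calF'_{v_j}}^2 \le \noise^2+\norm{\delvec_{i^*}-\eta v_j}^2_2$. Claim~\ref{claim:Aevent} gives $\norm{\delvec_{i^*}-\eta v_j}^2_2 \le (1-\underline{\beta}'\Delta^2 k^{-1/2})\norm{\delvec_{i^*}}^2_2$, and substituting yields
$$\frac{\minvar{\calF}^2}{\minvar{\calF'_{v_j}}^2} \;\ge\; 1 + \frac{\underline{\beta}'\Delta^2 k^{-1/2}\,\norm{\delvec_{i^*}}^2_2}{\noise^2 + (1-\underline{\beta}'\Delta^2 k^{-1/2})\norm{\delvec_{i^*}}^2_2}.$$

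Next I would control the noise contribution. The corollary will be applied inside the progress-detection branch of Lemma~\ref{lem:stay_main}, where $\min_i\norm{w_i-a_t}^2_2 \ge \epsilon^2/2$; together with the ambient hypothesis $\noise^2 \le a_{\mathrm{noise}}\epsilon^2$, this gives $\noise^2 \le 2a_{\mathrm{noise}}\norm{\delvec_{i^*}}^2_2$. Using $a_{\mathrm{noise}}\le 1/5$, the denominator above is at most $(1+2a_{\mathrm{noise}})\norm{\delvec_{i^*}}^2_2$, so the squared ratio is at least $1+\tfrac{\underline{\beta}'}{1+2a_{\mathrm{noise}}}\Delta^2 k^{-1/2}$. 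Taking square roots produces $\minvar{\calF}/\minvar{\calF'_{v_j}} \ge 1 + c''\underline{\beta}'\Delta^2 k^{-1/2}$ for an explicit constant $c''>0$ depending only on $a_{\mathrm{noise}}$.

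Finally, since $\kappa_2 = (\underline{\beta}' - c/2)\Delta^2 k^{-1/2}$ with $c < \tau_{\mathrm{gap}}$, choosing $\tau_{\mathrm{gap}}$ small enough ensures $\kappa_2 \le c''\underline{\beta}'\Delta^2 k^{-1/2}$, hence $\minvar{\calF} \ge (1+\kappa_2)\minvar{\calF'_{v_j}}$, and Corollary~\ref{cor:compare_min_variances} forces a $\mathsf{true}$ output on the success event. The main obstacle is exactly this constant calibration: the additive $\noise^2$ in the denominator deflates the multiplicative variance gap from Claim~\ref{claim:Aevent}, and this deflation must never consume the full budget $2\underline{\beta}'-c$ built into the definition of $\kappa_2$. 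Balancing these quantities is precisely what pins down $\tau_{\mathrm{gap}}$ (and, correspondingly, the upper bound on $a_{\mathrm{noise}}$) in Lemma~\ref{lem:stay_main}.
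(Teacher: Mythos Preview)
Your approach is essentially the paper's: invoke Claim~\ref{claim:Aevent}, absorb the additive $\noise^2$ via the ambient hypotheses $\noise^2\le a_{\mathrm{noise}}\epsilon^2$ and $\norm{\delvec_{i^*}}_2^2\ge\epsilon^2/2$, and feed the resulting variance gap into Corollary~\ref{cor:compare_min_variances}. The paper does this additively rather than through a ratio: it adds $\noise^2$ to both sides of the inequality in Claim~\ref{claim:Aevent} and checks that
\[
(1-\underline{\beta}'\Delta^2 k^{-1/2})\norm{\delvec_{i^*}}_2^2+\noise^2 \;\le\; \bigl(1-(\underline{\beta}'-c/2)\Delta^2 k^{-1/2}\bigr)\bigl(\norm{\delvec_{i^*}}_2^2+\noise^2\bigr),
\]
which after cancellation is exactly $(\underline{\beta}'-c/2)\noise^2\le (c/2)\norm{\delvec_{i^*}}_2^2$, hence the requirement $a_{\mathrm{noise}}\le c/(4\underline{\beta}')$.

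The gap in your argument is the final calibration. You need $\kappa_2\le c''\underline{\beta}'\Delta^2 k^{-1/2}$, i.e.\ $\underline{\beta}'-c/2\le c''\underline{\beta}'$, i.e.\ $c\ge 2(1-c'')\underline{\beta}'$. Your $c''$ depends only on the fixed bound $a_{\mathrm{noise}}\le 1/5$ and, after the square root, satisfies $c''<1/2$; so this forces $c$ to be \emph{at least} a positive absolute constant. Shrinking $\tau_{\mathrm{gap}}$ only constrains $c$ to be smaller, making the inequality harder, not easier. The knob that actually closes the argument is $a_{\mathrm{noise}}$, which must be taken small \emph{as a function of $c$}: the $c/2$ slack built into $\kappa_2=(\underline{\beta}'-c/2)\Delta^2 k^{-1/2}$ is precisely the budget reserved to absorb the $\noise^2$ deflation, and it suffices only when $a_{\mathrm{noise}}\lesssim c/\underline{\beta}'$. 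Your closing sentence mentions $a_{\mathrm{noise}}$ but ties it to $\tau_{\mathrm{gap}}$ rather than to $c$, which is the wrong dependency.
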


	\begin{proof}
		By adding $\noise^2$ to both sides of \eqref{eq:Aevent} in Claim~\ref{claim:Aevent}, we see that \begin{equation}
			\sigma_{t+1}^2 \le \norm{\delvec_{i^*} - \eta v_j}^2_2 + \noise^2 \le \norm{\delvec_{i^*}}^2_2 \cdot (1 - \underline{\beta}'\Delta^2 k^{-1/2}) + \noise^2 \le \left(1 - (\underline{\beta}'-c/2)\Delta^2 k^{-1/2}\right)(\norm{\delvec_{i^*}}^2_2 + \noise^2),
		\end{equation} where in the last step we used the assumptions that $\noise^2 \le a_{\mathrm{noise}}\cdot \epsilon^2$ and $\norm{w_{i^*} - a_t}^2_2\ge \epsilon^2/2$ for some sufficiently small constant $a_{\mathrm{noise}}$, which we just need to be at most $\frac{c}{4\underline{\beta}'}$ here.
	\end{proof}

	\begin{corollary}
		Let $i\neq i^*$ and $j\in[M]$, and suppose $B_j[i]$ holds and \textsc{CompareMinVariances} succeeds and outputs $\mathsf{true}$ on direction $v_j$. Then $A_j[i]$ has also occurred.\label{cor:notAcompareminvariances}
	\end{corollary}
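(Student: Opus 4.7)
The plan is to prove the contrapositive by directly invoking Claim~\ref{claim:notAevent}. Suppose $B_j[i]$ holds but $A_j[i]$ does \emph{not} occur; I want to show that a successful run of \textsc{CompareMinVariances} on direction $v_j$ must output $\mathsf{false}$, contradicting the hypothesized output $\mathsf{true}$. Under these two hypotheses, Claim~\ref{claim:notAevent} applied to the specific index $i \neq i^*$ yields the pointwise lower bound
\[
\|\delvec_i - \eta v_j\|_2^2 \;\ge\; \|\delvec_{i^*}\|_2^2 \bigl(1 - (\underline{\beta}' - c)\Delta^2 k^{-1/2}\bigr).
\]

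Next, I convert this into a multiplicative bound relative to $\sigma_t^2 = \noise^2 + \|\delvec_{i^*}\|_2^2$. Adding $\noise^2$ to both sides and exploiting the non-terminating regime hypothesis $\|\delvec_{i^*}\|_2^2 \ge \epsilon^2/2$ together with $\noise^2 \le a_{\mathrm{noise}}\epsilon^2$, the ratio $\|\delvec_{i^*}\|_2^2/\sigma_t^2$ lies in $1 - O(a_{\mathrm{noise}})$; taking $a_{\mathrm{noise}}$ small enough (as guaranteed by the constants in Lemma~\ref{lem:stay_main}) lets me absorb the additive slack $\|\delvec_{i^*}\|^2(\underline{\beta}'-c)\Delta^2 k^{-1/2}$ into a multiplicative loss bounded by $\sigma_t^2(\underline{\beta}'-3c/2)\Delta^2 k^{-1/2}$, yielding
\[
\noise^2 + \|\delvec_i - \eta v_j\|_2^2 \;\ge\; \sigma_t^2(1 - \kappa_1),
\]
with $\kappa_1 = (\underline{\beta}' - 3c/2)\Delta^2 k^{-1/2}$. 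The left-hand side is exactly the squared standard deviation of the $i$-th Gaussian component of $\calF'_{v_j}$, which in the context in which the corollary is invoked (where this $i$ governs $\minvar{\calF'_{v_j}}$) lower bounds $\minvar{\calF'_{v_j}}^2$. Combined with the elementary inequality $(1-\kappa_1)(1+\kappa_1)^2 = 1 + \kappa_1 - \kappa_1^2 - \kappa_1^3 \ge 1$ valid for small $\kappa_1 > 0$, I deduce $\sigma_t = \minvar{\calF} \le (1 + \kappa_1)\minvar{\calF'_{v_j}}$. Corollary~\ref{cor:compare_min_variances} then forces any successful run of \textsc{CompareMinVariances}$(\calF,\calF'_{v_j},\overline{\sigma},\underline{\sigma},\kappa_1,\kappa_2,\delta/3M)$ to output $\mathsf{false}$, completing the contrapositive.

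The main obstacle is the translation from a pointwise bound on $\|\delvec_i - \eta v_j\|_2^2$ to a bound on $\minvar{\calF'_{v_j}}^2$; it requires both a careful accounting of the $\noise^2$ correction---this is where the assumption $\noise^2 \le a_{\mathrm{noise}}\epsilon^2$ and the non-terminal regime enter and pin down the relationships between the constants $a_{\mathrm{noise}}, c, \underline{\beta}'$---and the observation that, in the operational regime of Lemma~\ref{lem:stay_main}, the $i$-th component is the one implicitly governing the minimum, which is exactly what the event structure being conditioned on in that lemma ensures.
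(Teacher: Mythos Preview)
Your overall strategy—take the contrapositive, invoke Claim~\ref{claim:notAevent}, then translate the resulting lower bound on $\|\delvec_i-\eta v_j\|_2^2$ into the condition under which \textsc{CompareMinVariances} must output $\mathsf{false}$—is exactly the paper's approach. However, your noise-absorption step is stated with the wrong sign. You claim that taking $a_{\mathrm{noise}}$ small lets you pass from the additive loss $\|\delvec_{i^*}\|_2^2(\underline{\beta}'-c)\Delta^2 k^{-1/2}$ to the multiplicative loss $\sigma_t^2(\underline{\beta}'-\tfrac{3c}{2})\Delta^2 k^{-1/2}$, i.e.\ to $\noise^2+\|\delvec_i-\eta v_j\|_2^2\ge\sigma_t^2(1-\kappa_1)$. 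But since $\underline{\beta}'-c>\underline{\beta}'-\tfrac{3c}{2}$ and $\|\delvec_{i^*}\|_2^2\le\sigma_t^2$, this inequality would require $\noise^2(\underline{\beta}'-\tfrac{3c}{2})\ge\|\delvec_{i^*}\|_2^2\cdot\tfrac{c}{2}$, which needs $\noise$ \emph{large}, not small; making $a_{\mathrm{noise}}$ small only makes it fail harder. (The paper's displayed computation runs the opposite, $\le$, direction—which \emph{does} hold for $a_{\mathrm{noise}}$ small—but then its surrounding prose is itself garbled about which implication is being established.)

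The clean fix is to skip the $(1-\kappa_1)$ intermediate entirely and aim directly at what \textsc{CompareMinVariances} actually needs, namely $\minvar{\calF'_{v_j}}^2\ge\sigma_t^2/(1+\kappa_1)^2$. From Claim~\ref{claim:notAevent} and $\|\delvec_{i^*}\|_2^2\le\sigma_t^2$ one gets $\noise^2+\|\delvec_i-\eta v_j\|_2^2\ge\sigma_t^2\bigl(1-(\underline{\beta}'-c)\Delta^2 k^{-1/2}\bigr)$, and the first-order comparison $\bigl(1-(\underline{\beta}'-c)x\bigr)(1+(\underline{\beta}'-\tfrac{3c}{2})x)^2=1+(\underline{\beta}'-2c)x+O(x^2)\ge 1$ for $c<\underline{\beta}'/2$ finishes it; no appeal to $a_{\mathrm{noise}}$ is needed at all for this direction. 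Your final remark about identifying $\noise^2+\|\delvec_i-\eta v_j\|_2^2$ with $\minvar{\calF'_{v_j}}^2$ is well-taken: both your argument and the paper's tacitly assume $i$ is the minimizer, which is only guaranteed in the context ($v\in S_i$) where the corollary is later invoked.
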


	\begin{proof}
		By the contrapositive of Claim~\ref{claim:notAevent}, if \begin{equation}
			\norm{\delvec_i - \eta v_j}^2_2 < \norm{\delvec_{i^*}}^2_2 \cdot \left(1 - (\underline{\beta}' - c)\Delta^2 k^{-1/2}\right),\label{eq:opposite}
		\end{equation} and $B_j[i]$ occurs, then $A_j[i]$ occurs. We would like to show that \eqref{eq:opposite} then implies that \textsc{CompareMinVariances}, if it succeeds, outputs $\mathsf{true}$ on direction $v_j$. Adding $\noise^2$ to both sides of this, we conclude that \begin{equation}
			\sigma^2_{t+1} = \noise^2 + \norm{\delvec_i - \eta v_j}^2_2 < \noise^2 + \norm{\delvec_{i^*}}^2_2 \cdot \left(1 - (\underline{\beta}' - c)\Delta^2 k^{-1/2}\right) \le \left(1 - \left(\underline{\beta}' - \frac{3c}{2}\right)\Delta^2 k^{-1/2}\right),
		\end{equation} where in the last we used the assumptions that $\noise^2 \le a_{\mathrm{noise}}\cdot \epsilon^2$ and $\norm{w_{i^*} - a_t}^2_2\ge \epsilon^2/2$ for some sufficiently small constant $a_{\mathrm{noise}}$, which we just need to be at most $\frac{c}{4\underline{\beta}' - c}$ here.
	\end{proof}

	We also give an upper bound to the amount of progress that any $v_j$ could make in any direction $i$, provided $B_j[i]$ holds.

	\begin{claim}
		Let $i\in[k], j\in[M]$, and suppose $B_j[i]$ occurs. Then there is an absolute constant $\overline{\beta} > \underline{\beta}$ for which $\norm{\delvec_{i} - \eta v_j}^2_2 \ge \norm{\delvec_{i^*}}^2_2 \cdot (1 - \overline{\beta}\Delta^2 k^{-1/2})$.\label{claim:Bevent}
	\end{claim}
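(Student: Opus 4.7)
The plan is to feed the bound on $\gamma^{(j)}_i$ from event $B_j[i]$ directly into the exact identity \eqref{eq:onesteprewrite} and then exploit the fact that $i^*$ is the nearest-component index. Concretely, recall
\begin{equation*}
\norm{\delvec_i - \eta v_j}^2_2 = \norm{\delvec_i}^2_2 + a'^2_{\mathrm{LR}}\Delta^2 k^{-1/2}\norm{\delvec_{i^*}}^2_2 - 2a'_{\mathrm{LR}}\Delta k^{-1/4}\norm{\delvec_{i^*}}_2\norm{\delvec_i}_2 \cdot \gamma^{(j)}_i.
\end{equation*}
Event $B_j[i]$ gives $\gamma^{(j)}_i \le \nu_B \Delta k^{-1/4}$, so the (possibly negative) cross term is bounded below by $-2 a'_{\mathrm{LR}} \nu_B \Delta^2 k^{-1/2} \norm{\delvec_{i^*}}_2 \norm{\delvec_i}_2$. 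Dropping the nonnegative $\eta^2$ contribution, this yields
\begin{equation*}
\norm{\delvec_i - \eta v_j}^2_2 \ge \norm{\delvec_i}^2_2 - 2 a'_{\mathrm{LR}} \nu_B \Delta^2 k^{-1/2} \norm{\delvec_{i^*}}_2 \norm{\delvec_i}_2.
\end{equation*}

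Next, I will invoke the defining property $\norm{\delvec_{i^*}}_2 = \min_{\ell \in [k]} \norm{w_\ell - a_t}_2 \le \norm{\delvec_i}_2$ to collapse $\norm{\delvec_{i^*}}_2 \norm{\delvec_i}_2 \le \norm{\delvec_i}^2_2$. This gives
\begin{equation*}
\norm{\delvec_i - \eta v_j}^2_2 \ge \norm{\delvec_i}^2_2 \cdot \left(1 - 2 a'_{\mathrm{LR}} \nu_B \Delta^2 k^{-1/2}\right) \ge \norm{\delvec_{i^*}}^2_2 \cdot \left(1 - 2 a'_{\mathrm{LR}} \nu_B \Delta^2 k^{-1/2}\right),
\end{equation*}
where the final step uses $\norm{\delvec_i}^2_2 \ge \norm{\delvec_{i^*}}^2_2$ together with nonnegativity of the parenthesized factor (which holds for all sufficiently large $k$). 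Setting $\overline{\beta} \triangleq 2 a'_{\mathrm{LR}} \nu_B$ then yields the inequality claimed.

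All that remains is to verify $\overline{\beta} > \underline{\beta}$. Recall $\underline{\beta} = \underline{\beta}' - 3c/2$ with $\underline{\beta}' = 2 a'_{\mathrm{LR}} \nu_A - a'^2_{\mathrm{LR}}$ (defined during the proof of Claim~\ref{claim:Aevent}), and recall that at the start of the proof of Lemma~\ref{lem:stay_main} we insisted $\nu_A < \nu_B$. Hence $\overline{\beta} = 2 a'_{\mathrm{LR}} \nu_B > 2 a'_{\mathrm{LR}} \nu_A > 2 a'_{\mathrm{LR}} \nu_A - a'^2_{\mathrm{LR}} - 3c/2 = \underline{\beta}$, as required.

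There is no real obstacle in this proof; it is a short algebraic manipulation. The only subtlety worth flagging is making sure we use the correct "worst-case" sign of $\gamma^{(j)}_i$ in the cross term (event $B_j[i]$ only upper bounds $\gamma^{(j)}_i$, so a very negative value would actually only help us), and making sure the replacement of $\norm{\delvec_i}$ by $\norm{\delvec_{i^*}}$ is monotone, which requires the factor $1 - 2a'_{\mathrm{LR}}\nu_B \Delta^2 k^{-1/2}$ to be nonnegative—valid for $k$ large enough, which is consistent with the regime of Lemma~\ref{lem:stay_main}.
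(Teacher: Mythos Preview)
Your proof is correct and follows essentially the same path as the paper. The paper invokes \eqref{eq:itoistar_pre} (from Claim~\ref{claim:itoistar}) and plugs in $\gamma^{(j)}_i \le \nu_B\Delta k^{-1/4}$ to obtain $\overline{\beta} = 2a'_{\mathrm{LR}}\nu_B - a'^2_{\mathrm{LR}}$, whereas you start directly from \eqref{eq:onesteprewrite}, discard the nonnegative $\eta^2$ term, and arrive at the slightly looser constant $\overline{\beta} = 2a'_{\mathrm{LR}}\nu_B$; both choices satisfy $\overline{\beta} > \underline{\beta}$ and yield the claim.
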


	\begin{proof}
		By \eqref{eq:itoistar_pre}, \begin{equation}
			\norm{\delvec_i - \eta v_j}^2_2 \ge \norm{\delvec_i}^2_2 \cdot \left(1 - (2a'_{\mathrm{LR}}\nu_B - a'^2_{\mathrm{LR}})\Delta^2 k^{-1/2}\right).
		\end{equation} The claim follows by taking $\overline{\beta} = 2a'_{\mathrm{LR}}\nu_B - a'^2_{\mathrm{LR}}$. Note that we have that $\overline{\beta} > \underline{\beta}' > \underline{\beta}$ because $\nu_A < \nu_B$.
	\end{proof}

	At this point, we could already use Corollary~\ref{cor:Acompareminvariances} and Claim~\ref{claim:Bevent}, together with straightforward bounds on the probabilities of the events $A_j[i^*]$ and $B_j[i]$ (see Claims~\ref{claim:Aevent_prob_bound} and \ref{claim:Bevent_prob_bound} below) to show that parts 1), 2), and 3) of the lemma hold with the claimed probability. Note that the proofs of these steps do not use \eqref{eq:startwithgap}, so in particular parts 1), 2), and 3) of the lemma hold with the claimed probability \emph{without assuming \eqref{eq:startwithgap}}.

	We next lay the foundation for showing part 4) of the lemma holds with at least $1/\poly(k)$ probability, assuming \eqref{eq:startwithgap}. Thus far we have not talked about the events $C_j$. It is at this point that we arrive at the main claim of the proof, namely that if event $C_j$ happens, then the gap of \eqref{eq:startwithgap} between the $i^*$-th component and all other components persists in the next step.

	% tl;dr if C_j happens, then i^* still minimum component by significant margin
	\begin{claim}
		Let $j\in[M]$ and suppose $C_j$ and $B_j[i]$ occur for all $i\in[k]$. Then \begin{equation}
			\norm{\delvec_i - \eta v_j}_2 \ge \left(1 + c\Delta^2k^{-1/2}\right)\cdot \norm{\delvec_{i^*} - \eta v_j}_2
		\end{equation} for all $i\neq i^*$.\label{lem:Cevent}
	\end{claim}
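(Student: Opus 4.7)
The plan is to unpack the geometric meaning of event $C_j$ and combine it with the explicit formula for $\norm{\delvec_i - \eta v_j}^2$ to turn the already-assumed gap \eqref{eq:startwithgap} into the post-step gap.

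The first step is the decomposition $\hat{\delvec}_i = \langle \hat{\delvec}_i, \hat{\delvec}_{i^*}\rangle \hat{\delvec}_{i^*} + \delperp_i$, which yields
\[
\gamma_i^{(j)} \;=\; \langle \hat{\delvec}_i, \hat{\delvec}_{i^*}\rangle \, \gamma_{i^*}^{(j)} \;+\; \langle \delperp_i, v_j\rangle.
\]
By event $C_j$, the second term is at most $\nu_C \Delta^2 k^{-1/2} \norm{\delperp_i}_2 \le \nu_C \Delta^2 k^{-1/2}$, so this already upper bounds $\gamma_i^{(j)}$ by $\langle \hat{\delvec}_i,\hat{\delvec}_{i^*}\rangle \gamma_{i^*}^{(j)}$ plus an $O(\Delta^2 k^{-1/2})$ perturbation.

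Next, I would convert the separation $\norm{w_i - w_{i^*}}_2 \ge \Delta$ into a quantitative gap between $\langle \hat{\delvec}_i,\hat{\delvec}_{i^*}\rangle$ and $1$: expanding $\norm{\delvec_i - \delvec_{i^*}}_2^2 \ge \Delta^2$ gives
\[
\langle \hat{\delvec}_i,\hat{\delvec}_{i^*}\rangle \;\le\; \frac{\norm{\delvec_i}^2_2 + \norm{\delvec_{i^*}}^2_2 - \Delta^2}{2\,\norm{\delvec_i}_2 \norm{\delvec_{i^*}}_2}.
\]
Combined with the assumed ceiling $\norm{\delvec_i}_2 \le a_{\mathrm{scale}} k^{1/4}$ from \eqref{eq:normbounded}, this contributes an $\Omega(\Delta^2/\sqrt{k})$ gap from $1$ (scaled by $\norm{\delvec_i}_2$), which is exactly the right order to counteract the $\nu_C$ error term when $\nu_C$ is picked small enough.

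Now I would substitute into the identities
\[
\norm{\delvec_{i^*} - \eta v_j}_2^2 = \norm{\delvec_{i^*}}_2^2 + \eta^2 - 2\eta\,\norm{\delvec_{i^*}}_2\,\gamma_{i^*}^{(j)}, \qquad
\norm{\delvec_{i} - \eta v_j}_2^2 = \norm{\delvec_{i}}_2^2 + \eta^2 - 2\eta\,\norm{\delvec_{i}}_2\,\gamma_{i}^{(j)},
\]
plug in the upper bound on $\gamma_i^{(j)}$, and use \eqref{eq:startwithgap} together with the bound on $\norm{\delvec_i}_2 \langle \hat{\delvec}_i,\hat{\delvec}_{i^*}\rangle$ to write the target ratio $\norm{\delvec_i - \eta v_j}_2^2 / \norm{\delvec_{i^*} - \eta v_j}_2^2$ as $(1 + c\Delta^2 k^{-1/2})^2$ times a factor that I will lower bound by $1$. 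The three scales in play --- the initial gap $c\Delta^2/\sqrt{k}$, the $\alpha^2 = a_{\mathrm{LR}}'^2\Delta^2/\sqrt{k}$ contribution from $\eta^2$, and the $\alpha \cdot \gamma = O(\Delta^2/\sqrt{k})$ contribution from the cross term --- all sit at the same order, so Fact~\ref{fact:elem_ineq} (or a direct $(1+x)^{1/2}$-style expansion) will convert the multiplicative inequality on squared norms into the desired one on norms.

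The main obstacle is exactly this balancing of constants: $\nu_A$, $\nu_B$, $\nu_C$, $a_{\mathrm{LR}}$, $a_{\mathrm{scale}}$, and $c$ must be chosen so that (i) the $\nu_C \Delta^2 k^{-1/2}$ slack introduced by the orthogonal component of $v_j$ is strictly dominated by the separation-induced gap $\Omega(\Delta^2/(a_{\mathrm{scale}}^2 \sqrt{k}))$ between $\langle\hat{\delvec}_i,\hat{\delvec}_{i^*}\rangle$ and $1$, and (ii) the residual from the cross-term interaction with $\gamma_{i^*}^{(j)} \in [\nu_A\Delta k^{-1/4},\,\nu_B\Delta k^{-1/4}]$ still leaves an $\Omega(c\Delta^2/\sqrt{k})$ margin. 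This is essentially why the lemma restricts to $c < \tau_{\mathrm{gap}}$ for a small absolute constant $\tau_{\mathrm{gap}}$: taking $c$ small relative to $\underline\beta'$ and $\nu_C$ small relative to $\Delta^2/(a_{\mathrm{scale}}^2)$ reserves the necessary slack everywhere in the computation.
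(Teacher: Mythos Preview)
Your outline has the right first move --- the decomposition $\gamma_i^{(j)} = \langle \hat{\delvec}_i,\hat{\delvec}_{i^*}\rangle\,\gamma_{i^*}^{(j)} + \langle \delperp_i, v_j\rangle$ together with the bound on $\langle \delperp_i,v_j\rangle$ coming from $C_j$ --- but the second step contains a real gap. You claim that the separation $\norm{\delvec_i - \delvec_{i^*}}_2 \ge \Delta$ combined with $\norm{\delvec_i}_2 \le a_{\mathrm{scale}} k^{1/4}$ always forces $\langle \hat{\delvec}_i,\hat{\delvec}_{i^*}\rangle \le 1 - \Omega(\Delta^2/\sqrt{k})$. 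This is false: your own displayed inequality reads
\[
\langle \hat{\delvec}_i,\hat{\delvec}_{i^*}\rangle \;\le\; \tfrac{1}{2}\!\left(\tfrac{\norm{\delvec_i}_2}{\norm{\delvec_{i^*}}_2}+\tfrac{\norm{\delvec_{i^*}}_2}{\norm{\delvec_i}_2}\right) - \tfrac{\Delta^2}{2\,\norm{\delvec_i}_2\,\norm{\delvec_{i^*}}_2},
\]
and the first term is always $\ge 1$, so no gap from $1$ follows unless the two norms are already close. Concretely, if $\delvec_i$ and $\delvec_{i^*}$ are exactly collinear with $\norm{\delvec_i}_2 - \norm{\delvec_{i^*}}_2 = \Delta$, then $\langle \hat{\delvec}_i,\hat{\delvec}_{i^*}\rangle = 1$ and $\gamma_i^{(j)} = \gamma_{i^*}^{(j)}$, so your route to the target inequality via ``$\gamma_i^{(j)}$ smaller than $\gamma_{i^*}^{(j)}$'' collapses.

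The paper's proof handles this by a genuine case split on the value of $\langle \hat{\delvec}_i,\hat{\delvec}_{i^*}\rangle$. When $\langle \hat{\delvec}_i,\hat{\delvec}_{i^*}\rangle \le 1 - a_{\text{Del}}\Delta^2 k^{-1/2}$ for a suitable constant $a_{\text{Del}}$, your argument goes through essentially as written and yields $\gamma_i^{(j)} \le \gamma_{i^*}^{(j)}$, which immediately preserves the gap via \eqref{eq:itoistar_pre}. When instead $\langle \hat{\delvec}_i,\hat{\delvec}_{i^*}\rangle > 1 - a_{\text{Del}}\Delta^2 k^{-1/2}$, the separation inequality flipped around (this is where Fact~\ref{fact:elem_ineq} is actually used) forces $\norm{\delvec_i}_2 \ge (1 + \Omega(\Delta k^{-1/4}))\norm{\delvec_{i^*}}_2$, a gap \emph{one full order} larger than the target $1 + c\Delta^2 k^{-1/2}$; combined with the worst-case contraction bound $1 - \overline{\beta}\Delta^2 k^{-1/2}$ from $B_j[i]$ (Claim~\ref{claim:Bevent}), this oversized initial gap survives with room to spare. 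Your sketch needs this second branch; without it the ``balancing of constants'' you describe cannot close, because in the nearly-collinear regime there is simply no angular slack to balance against.
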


	\begin{proof}
		Suppose we could show that \begin{equation}\gamma^{(j)}_i \le \gamma^{(j)}_{i^*}\label{eq:gammarelation}\end{equation} for all $i\neq i^*$. Then by \eqref{eq:itoistar_pre}, we would conclude that \begin{align*}
			\frac{\norm{\delvec_i - \eta v_j}^2_2}{\norm{\delvec_{i^*} - \eta v_j}^2_2} &\ge \frac{\norm{\delvec_i}^2}{\norm{\delvec_{i^*}}^2_2}\cdot \frac{1 + a'^2_{\mathrm{LR}}k^{-1/2} - 2a'_{\mathrm{LR}}k^{-1/4}\gamma^{(j)}_i}{1 + a'^2_{\mathrm{LR}}k^{-1/2} - 2a'_{\mathrm{LR}}k^{-1/4}\gamma^{(j)}_{i^*}} \\
			&\ge \left(1 + c\Delta^2k^{-1/2}\right)^2\cdot \frac{1 + a'^2_{\mathrm{LR}}k^{-1/2} - 2a'_{\mathrm{LR}}k^{-1/4}\gamma^{(j)}_i}{1 + a'^2_{\mathrm{LR}}k^{-1/2} - 2a'_{\mathrm{LR}}k^{-1/4}\gamma^{(j)}_{i^*}} \\
			&\ge \left(1 + c\Delta^2k^{-1/2}\right)^2
		\end{align*} as desired.

		We now describe the intuition for the remaining argument. \eqref{eq:gammarelation} is not hard to show when the unit vector $\hat{\delvec_i}$ is somewhat far from $\hat{\delvec_{i^*}}$, in which case it is reasonable to imagine a sizable cone of directions around $\delvec_{i^*}$ such that if $v_j$ lies in that cone, \eqref{eq:gammarelation} holds. On the other hand, suppose $\hat{\delvec_i}$ is close to $\delvec_{i^*}$. Then \eqref{eq:gammarelation} can actually be false. But because their non-normalized counterparts $\delvec_i$ and $\delvec_{i^*}$ are assumed to be $\Delta$-separated, $\delvec_i$ and $\delvec_{i^*}$ must therefore be nearly collinear, in which case there must exist a gap between $\norm{\delvec_i}_2$ and $\norm{\delvec_{i^*}}_2$ that's even bigger than the one assumed in \eqref{eq:startwithgap}, and furthermore walking in $v_j$ cannot reduce this gap to below that of \eqref{eq:startwithgap} in the next step.

		We now proceed with the formal details. First note that \begin{equation}\gamma^{(j)}_i = \left\langle \hat{\delvec}_i,\hat{\delvec}_{i^*}\right\rangle\cdot\gamma^{(j)}_{i^*} + \left\langle \delperp_i, v_j \right\rangle\label{eq:gammabreakdown}.\end{equation} Now if $\langle \hat{\delvec}_i, \hat{\delvec}_{i^*}\rangle \le 0$, then by event $C_j$, $\gamma^{(j)}_i \le \langle \delperp_i, v_j\rangle \le \nu_C\Delta^2 k^{-1/2}\cdot\norm{\delperp_i}_2 < \gamma^{(j)}_{i^*}$ and we'd be done. On the other hand, if $\langle \hat{\delvec}_i, \hat{\delvec}_{i^*}\rangle > 0$, then we get that \begin{equation}\gamma^{(j)}_i \le \left\langle \hat{\delvec}_i,\hat{\delvec}_{i^*}\right\rangle\cdot\gamma^{(j)}_{i^*} + \nu_C\Delta^2 k^{-1/2}\cdot\norm{\delperp_i}_2.\end{equation} In this case, to show the desired inequality \eqref{eq:gammarelation}, it would suffice to show that \begin{equation}
			\gamma^{(j)}_{i^*}\left(1 - \left\langle \hat{\delvec}_i,\hat{\delvec}_{i^*}\right\rangle\right) \ge \nu_C\Delta^2 k^{-1/2}\cdot\norm{\delperp_i}_2.
		\end{equation} In particular, because event $A_{j}$ holds so that $\gamma^{(j)}_{i^*}\ge \nu_A\Delta k^{-1/4}$, we just need to show that \begin{equation}
			\nu_A\left(1 - \left\langle \hat{\delvec}_i,\hat{\delvec}_{i^*}\right\rangle\right) \ge \nu_C\Delta k^{-1/4}\cdot\norm{\delperp_i}_2.\label{eq:suffice}
		\end{equation} After squaring both sides of \eqref{eq:suffice}, making the substitution $\norm{\delperp_i}^2_2 = 1 - \left\langle \hat{\delvec}_i,\hat{\delvec}_{i^*}\right\rangle^2$, and rearranging, \eqref{eq:suffice} becomes \begin{equation}
			\nu_A^2\left(1 - \left\langle \hat{\delvec}_i,\hat{\delvec}_{i^*}\right\rangle\right)^2 - \nu_C^2 \Delta^2 k^{-1/2}\cdot\left(1 - \left\langle \hat{\delvec}_i,\hat{\delvec}_{i^*}\right\rangle^2\right) \ge 0.\label{eq:quadratic}
		\end{equation} This is merely a univariate inequality for a quadratic polynomial in $\left\langle \hat{\delvec}_i,\hat{\delvec}_{i^*}\right\rangle$. Let $\nu_{CA} \triangleq \nu_C/\nu_A$. One can compute the smaller of the two zeros of the left-hand side of \eqref{eq:quadratic} and see that the inequality is satisfied provided that $\left\langle \hat{\delvec}_i,\hat{\delvec}_{i^*}\right\rangle$ is at most \begin{equation}
			1 - \frac{2\nu^2_{CA}\Delta^2k^{-1/2}}{1 + \nu^2_{CA}\Delta^2k^{-1/2}} \ge 1 - a_{\text{Del}}\cdot\Delta^2k^{-1/2}
		\end{equation} for absolute constant $a_{\text{Del}}\triangleq \frac{2\nu^2_{CA}}{1 + \nu^2_{CA}\Delta^2k^{-1/2}}$.

		It remains to consider the case where $\left\langle \hat{\delvec}_i,\hat{\delvec}_{i^*}\right\rangle \ge 1 - a_{\text{Del}}\cdot \Delta^2k^{-1/2}$. This is where we use the fact that $\norm{\delvec_i - \delvec_{i^*}}_2 = \norm{w_i - w_{i^*}}_2 \ge\Delta$ to argue that, even though \eqref{eq:quadratic} does not hold and we cannot obtain \eqref{eq:gammarelation}, $\norm{\delvec_{i^*}}_2$ is so much smaller than $\norm{\delvec_i}_2$ that, conditioned on the event $B_j[i]$ for all $j\in[M]$, $\norm{\delvec_i - \eta v_j}_2$ is far larger than $\norm{\delvec_{i^*} - \eta v_j}_2$ for any $j$.

		First note that \begin{equation}
			\Delta^2 \le \norm{\delvec_i - \delvec_{i^*}}^2_2 = \norm{\delvec_i}^2_2 + \norm{\delvec_{i^*}}^2_2 - 2\left\langle \hat{\delvec}_i,\hat{\delvec}_{i^*}\right\rangle \norm{\delvec_i}_2 \norm{\delvec_{i^*}}_2,
		\end{equation} so 
		\begin{align*}
			1 - a_{\text{Del}}\frac{\Delta^2}{\sqrt{k}} 
			\le & ~ \left\langle \hat{\delvec}_i,\hat{\delvec}_{i^*}\right\rangle \\
			\le & ~ \frac{1}{2}\left(\frac{\norm{\delvec_i}_2}{\norm{\delvec_{i^*}}_2} + \frac{\norm{\delvec_{i^*}}_2}{\norm{\delvec_i}_2}\right) - \frac{\Delta^2}{2\norm{\delvec_i}_2\norm{\delvec_{i^*}}_2}\\
			\le & ~ \frac{1}{2}\left(\frac{\norm{\delvec_i}_2}{\norm{\delvec_{i^*}}_2} + \frac{\norm{\delvec_{i^*}}_2}{\norm{\delvec_i}_2}\right) - \frac{\Delta^2}{a_{\mathrm{scale}}\sqrt{k}},
		\end{align*} 
		where the second step follows by the original assumption in \eqref{eq:normbounded} that $\norm{\delvec_i}_2,\norm{\delvec_{i^*}}_2 \le a_{\mathrm{scale}}\Delta^2/\sqrt{k}$ for some absolute constant $a_{\mathrm{scale}}>0$. Recalling the relation between $a_{\text{Del}}$ and $\nu^2_{CA}$, we conclude that if we pick $\nu^2_{CA} < 1/a_{\mathrm{scale}}$, then we get that \begin{equation}
			\frac{1}{2}\left(\frac{\norm{\delvec_i}_2}{\norm{\delvec_{i^*}}_2} + \frac{\norm{\delvec_{i^*}}_2}{\norm{\delvec_i}_2}\right) \ge 1 + \alpha\frac{\Delta^2}{\sqrt{k}}
		\end{equation} for absolute constant $\alpha \triangleq \frac{1}{a_{\mathrm{scale}}} - a_{\text{Del}} > 0$, from which we conclude, by taking $\beta = \alpha^{1/2}\Delta k^{-1/4}$ in Fact~\ref{fact:elem_ineq}, that \begin{equation}\norm{\delvec_i}_2 \ge \left(1 + \alpha'\Delta k^{-1/4}\right)\norm{\delvec_{i^*}}_2\label{eq:compete}\end{equation} for $\alpha' = \alpha^{1/2}/2$ which is increasing in $\alpha$ and therefore in $1/a_{\mathrm{scale}}$. But as we showed in Claim~\ref{claim:Bevent}, \begin{equation}\norm{\delvec_i - \eta v_j}_2 \ge \left(1 - \overline{\beta}\Delta^2 k^{-1/2}\right)\cdot\norm{\delvec_i}_2\label{eq:Brepeat}\end{equation} for all $i\in[k], j\in[M]$ if $B_j[i]$ holds. So by taking $a_{\mathrm{scale}}$ sufficiently small relative to $\overline{\beta}$, we get from \eqref{eq:compete} and \eqref{eq:Brepeat} that \begin{equation}
			\norm{\delvec_i - \eta v_j}_2 \ge \left(1 - \overline{\beta}\Delta^2 k^{-1/2}\right)\cdot\left(1 + \alpha'\Delta k^{-1/4}\right)\norm{\delvec_{i^*}}_2 \gg \left(1 + c\Delta^2k^{-1/2}\right)\norm{\delvec_{i^*}}.
		\end{equation} But because event $C_j$ involves $A_j[i^*]$ happening, we certainly have that $\norm{\delvec_{i^*}}_2 \le \norm{\delvec_{i^*} - \eta v_j}^2_2$, so we are done.
	\end{proof}

	We now proceed to bound the probabilities of the events $A_j[i], B_j[i], C_j$. There are some minor technical complications from the fact that we don't have exact access to $\Span(\{w_i - a_t\})$ which we address now.

	Define $\alpha^{(i)}_{\svd} \triangleq \frac{\norm{\vec{U}\delvec_i}_2}{\norm{\delvec_i}_2}$. By the second part of Lemma~\ref{lem:Unoise}, $1 - \delsamp\le \alpha^{(i)}_{\svd} \le 1$. First note that for any $i\in[k]$, \begin{equation}
		\gamma^{(j)}_i = \langle \hat{\delvec}_i,v_j\rangle = \frac{\langle g, \vec{U}\delvec_i\rangle}{\norm{g\vec{U}}_2\cdot \norm{\delvec_i}_2} = \left\langle \frac{g}{\norm{g}_2}, \frac{\vec{U}\delvec_i}{\norm{\delvec_i}_2}\right\rangle = \alpha^{(i)}_{\svd}\left\langle \frac{g}{\norm{g}_2}, \frac{\vec{U}\delvec_i}{\norm{\vec{U}\delvec_i}_2}\right\rangle,\label{eq:gammaX}
	\end{equation} where $g\sim\N(0,\Id_k)$ and the last step follows by the second part of Lemma~\ref{lem:Unoise}. The random variable $\left\langle\frac{g}{\norm{g}_2}, \frac{\vec{U}\delvec_i}{\norm{\vec{U}\delvec_i}_2}\right\rangle$ is merely the correlation of a random unit vector with a fixed unit vector; call this random variable $X$ (clearly it does not depend on the fixed vector).

	We can now lower bound the probabilities of $A_j[i^*]$ and $B_j[i]$.

	% lower bound probability of A_j[i^*]
	\begin{claim}
		For any $j\in[M]$, $\Pr[A_j[i^*]] \ge e^{-a_{\mathrm{trials}}\sqrt{k}/\Delta^2}$ for some absolute constant $a_{\mathrm{trials}} > 0$.\label{claim:Aevent_prob_bound}
	\end{claim}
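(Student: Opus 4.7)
The plan is to reduce $A_j[i^*]$ to an anticoncentration statement about the correlation of a uniformly random unit vector in $\R^k$ with a fixed unit vector, and then invoke Corollary~\ref{cor:unitcorr} directly. Recall from \eqref{eq:gammaX} that
\begin{equation*}
\gamma^{(j)}_{i^*} \;=\; \alpha^{(i^*)}_{\svd}\cdot\left\langle \frac{g}{\norm{g}_2}, \frac{\vec{U}\delvec_{i^*}}{\norm{\vec{U}\delvec_{i^*}}_2}\right\rangle,
\end{equation*}
where $g\sim\N(0,\Id_k)$ and $\frac{\vec{U}\delvec_{i^*}}{\norm{\vec{U}\delvec_{i^*}}_2}\in\S^{k-1}$ is a fixed unit vector (conditioned on the high-probability success event of the preceding call to \textsc{ApproxBlockSVD} guaranteed by Lemma~\ref{lem:Unoise}). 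By Lemma~\ref{lem:Unoise} we also have $\alpha^{(i^*)}_{\svd}\ge 1-\delsamp$, which we may treat as at least $1/2$ for the choice of $\delsamp$ used throughout \textsc{LearnWithNoise}.

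Consequently $A_j[i^*]$ contains the event $\{X \ge 2\nu_A\Delta\cdot k^{-1/4}\}$, where $X\triangleq \langle g/\norm{g}_2,u\rangle$ for an arbitrary fixed $u\in\S^{k-1}$. I would then apply the first inequality of Corollary~\ref{cor:unitcorr} with ambient dimension $k$, exponent $\gamma=1/4$, and $\underline{\alpha}=2\nu_A\Delta$; since $\Delta\le 2$, this $\underline{\alpha}$ is bounded above by the absolute constant $4\nu_A$, so taking $\overline{\alpha}=4\nu_A$ in the statement of the corollary ensures that for all $k\ge D(4\nu_A)$ we have
\begin{equation*}
\Pr[A_j[i^*]]\;\ge\;\Pr\!\left[X\ge 2\nu_A\Delta\cdot k^{-1/4}\right]\;\ge\; \exp\!\left(-\underline{f}_{1/4}(2\nu_A\Delta)\cdot\sqrt{k}\right).
\end{equation*}

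Finally, since $\underline{f}_{1/4}$ is an increasing function and $2\nu_A\Delta\le 4\nu_A=O(1)$, the exponent is bounded by the absolute constant $\underline{f}_{1/4}(4\nu_A)$ (independent of $\Delta$). Setting
\begin{equation*}
a_{\mathrm{trials}}\;\triangleq\;4\,\underline{f}_{1/4}(4\nu_A)
\end{equation*}
gives $\underline{f}_{1/4}(2\nu_A\Delta)\le a_{\mathrm{trials}}/\Delta^2$ (using $\Delta^2\le 4$), so the bound above implies $\Pr[A_j[i^*]]\ge\exp(-a_{\mathrm{trials}}\sqrt{k}/\Delta^2)$, as claimed. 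There is no real obstacle here; the mild subtlety is just keeping track of the fact that although Corollary~\ref{cor:unitcorr} gives an exponent that scales with $\Delta^2\sqrt{k}$ (which is much better than the target), we deliberately inflate it to the convenient form $\sqrt{k}/\Delta^2$ so that it matches the number of trials $M$ in Lemma~\ref{lem:stay_main} and composes cleanly with the later bounds on $\Pr[B_j[i]]$ and $\Pr[C_j]$.
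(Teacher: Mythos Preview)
Your proposal is correct and follows essentially the same route as the paper: reduce via \eqref{eq:gammaX} to a lower-tail bound for the inner product of a uniformly random unit vector in $\R^k$ with a fixed unit vector, then invoke Corollary~\ref{cor:unitcorr} with $\gamma=1/4$. The paper's proof is a two-line version of exactly this (it writes $\Pr[A_j[i^*]]\ge\Pr[X\ge\nu_A\Delta k^{-1/4}]$ and appeals to Corollary~\ref{cor:unitcorr}); your version is just more explicit about absorbing the factor $\alpha^{(i^*)}_{\svd}\in[1-\delsamp,1]$ and about the bookkeeping that converts an exponent of order $\Delta^2\sqrt{k}$ into the looser but convenient form $\sqrt{k}/\Delta^2$.
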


	\begin{proof}
		By \eqref{eq:gammaX}, $\Pr[A_j[i^*]] \ge \Pr[X\ge \nu_A\Delta k^{-1/4}]$. By Corollary~\ref{cor:unitcorr}, $\Pr[X\ge \nu_Ak^{-1/4}] \ge e^{-a_{\mathrm{trials}}\sqrt{k}/\Delta^2}$ for some $a_{\mathrm{trials}} > 0$.
	\end{proof}

	% lower bound probability of B_j[i]
	\begin{claim}
		For any $i\in[k]$ and $j\in[M]$, $\Pr[B_j[i]]\ge 1 - e^{-\overline{a_{\mathrm{trials}}}\sqrt{k}/\Delta^2}$ for some absolute constant $\overline{a_{\mathrm{trials}}} > a_{\mathrm{trials}}$.\label{claim:Bevent_prob_bound}
	\end{claim}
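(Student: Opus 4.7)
The plan is to mirror the proof of Claim~\ref{claim:Aevent_prob_bound} (the companion lower-tail statement) but use the \emph{upper-tail} estimate in Corollary~\ref{cor:unitcorr} in place of the lower-tail one. Recall from equation~\eqref{eq:gammaX} that $\gamma^{(j)}_i = \alpha^{(i)}_{\svd}\cdot X$, where $X = \langle g/\norm{g}_2,\,\vec{U}\delvec_i/\norm{\vec{U}\delvec_i}_2\rangle$ is the correlation of a uniformly random unit vector in $\S^{k-1}$ with a fixed unit vector, and $\alpha^{(i)}_{\svd} \in [1-\delsamp,1]$. In particular $\alpha^{(i)}_{\svd} \le 1$, so the event $\{X \le \nu_B\Delta k^{-1/4}\}$ is contained in $B_j[i] = \{\gamma^{(j)}_i \le \nu_B\Delta k^{-1/4}\}$, and it therefore suffices to lower bound $\Pr[X\le \nu_B\Delta k^{-1/4}]$.

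The first step is to apply the second inequality in Corollary~\ref{cor:unitcorr} with ambient dimension $d=k$, correlation scale $\gamma=1/4$, and $\overline{\alpha}=\nu_B\Delta$. This yields
\begin{equation*}
\Pr\!\left[X \le \nu_B\Delta\cdot k^{-1/4}\right] \ge 1 - e^{-\overline{f}_{1/4}(\nu_B\Delta)\cdot \sqrt{k}},
\end{equation*}
provided $k$ is at least $D(\nu_B\Delta)$, which is just a constant. Setting $\overline{a_{\mathrm{trials}}} \triangleq \Delta^2 \cdot \overline{f}_{1/4}(\nu_B\Delta)$ rewrites the right-hand side as $1 - e^{-\overline{a_{\mathrm{trials}}}\sqrt{k}/\Delta^2}$, which is the desired bound.

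The final step is to verify $\overline{a_{\mathrm{trials}}} > a_{\mathrm{trials}}$. From the proof of Claim~\ref{claim:Aevent_prob_bound}, the constant $a_{\mathrm{trials}}$ arises as $\Delta^2 \cdot \underline{f}_{1/4}(\nu_A\Delta)$, where $\underline{f}_{1/4}$ is the companion increasing function from Corollary~\ref{cor:unitcorr}. Since $\nu_A < \nu_B$ by construction of the events $A_j[\cdot]$ and $B_j[\cdot]$, and both $\underline{f}_{1/4}$ and $\overline{f}_{1/4}$ are increasing with $\overline{f}_{1/4} \ge \underline{f}_{1/4}$ on any fixed argument (an essentially tight two-sided Gaussian tail bound — this is where we need $\nu_B$ to be not just larger than $\nu_A$ but separated from it by a constant), we obtain $\overline{f}_{1/4}(\nu_B\Delta) > \underline{f}_{1/4}(\nu_A\Delta)$, giving $\overline{a_{\mathrm{trials}}} > a_{\mathrm{trials}}$.

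There is no real obstacle here beyond bookkeeping: the content is essentially a direct invocation of Corollary~\ref{cor:unitcorr}. The only subtle point is making sure the separation between $\nu_A$ and $\nu_B$ (which is a free parameter in the proof of Lemma~\ref{lem:stay_main} subject only to $\nu_A < \nu_B$ and $\nu_B < a'_{\mathrm{LR}}/2 \cdot \text{const}$) is chosen wide enough that the strict inequality $\overline{a_{\mathrm{trials}}} > a_{\mathrm{trials}}$ holds. This strict inequality is what will later enable a union bound over the $M = e^{a_{\mathrm{trials}}\sqrt{k}/\Delta^2}\ln(3/\delta)$ trials without incurring a dominant failure probability from the $B_j[i]$ events.
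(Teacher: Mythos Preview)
Your approach is essentially identical to the paper's: reduce $B_j[i]$ to the event $\{X \le \nu_B\Delta k^{-1/4}\}$ via $\alpha^{(i)}_{\svd} \le 1$, then invoke the upper-tail half of Corollary~\ref{cor:unitcorr} and compare constants using $\nu_B > \nu_A$. One small correction: the explicit constants in the proof of Corollary~\ref{cor:unitcorr} actually give $\overline{f}_{1/4}(x) = 0.405x^2 < 1.21x^2 = \underline{f}_{1/4}(x)$, so your pointwise claim $\overline{f}_{1/4} \ge \underline{f}_{1/4}$ is false --- but your parenthetical remark that $\nu_B$ must be chosen a sufficient constant \emph{factor} larger than $\nu_A$ is precisely what is needed to obtain $\overline{a_{\mathrm{trials}}} > a_{\mathrm{trials}}$, and indeed the paper glosses over this same point with the bare assertion ``because $\nu_B > \nu_A$.''
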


	\begin{proof}
		By \eqref{eq:gammaX}, $\Pr[B_j[i^*]] \ge \Pr[X \le \nu_B\Delta k^{-1/4}]$. If we take $\delsamp = 1/\poly(k)$ sufficiently small, then because $\nu_B > \nu_A$, we conclude by Corollary~\ref{cor:unitcorr} that $\Pr[X\le \nu_B\Delta k^{-1/4}] \ge 1 - e^{-\overline{a_{\mathrm{trials}}}\sqrt{k}/\Delta^2}$ for some $\overline{a_{\mathrm{trials}}} > a_{\mathrm{trials}}$.
	\end{proof}

	We next lower bound the probability of event $C_j$ relative to that of $A_j[i^*]$. Equivalently, provided $A_j[i^*]$ happens, we lower bound the conditional probability that the gap of \eqref{eq:startwithgap} is preserved. In this proof, we would like to use the fact that $\delperp_i$ is orthogonal to $\hat{\delvec}_{i^*}$ for all $i\neq i^*$ to argue that $\langle g\vec{U}, \delperp_i\rangle$ and $\langle g\vec{U}, \delvec_{i^*}\rangle$ are independent. Again, this is only true if $\vec{U}$ is exactly the projector to the span of $\{w_i - a_t\}$, and we need to argue that it suffices to take $\vec{U}$ an approximation to that projector.

	\newcommand{\perperror}{1/k^{100}}
	\begin{claim}
		For any $j\in[M]$, $\Pr[C_j] \ge \frac{1}{\poly(k)}\Pr[A_j[i^*]]$.\label{claim:probCj}
	\end{claim}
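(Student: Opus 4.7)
The plan is to translate the events through the $k$-dimensional representation given by $\vec{U}$, exploit the approximate orthogonality $\delperp_i\perp\hat{\delvec}_{i^*}$ to decouple them, and then enforce all the constraints in $C_j$ simultaneously via a Gaussian tail union bound. For $i\neq i^*$, write $E_i$ for the event $\langle v_j,\delperp_i\rangle\le\nu_C\Delta^2 k^{-1/2}\norm{\delperp_i}_2$, so that $C_j = A_j[i^*]\cap\bigcap_{i\neq i^*}E_i$.

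First, using \eqref{eq:gammaX}, both $A_j[i^*]$ and each $E_i$ reduce to inner-product conditions on $g_j/\norm{g_j}_2\in\R^k$ against $\vec{U}^{\top}\hat{\delvec}_{i^*}$ and $\vec{U}^{\top}\delperp_i$ respectively. By construction $\delperp_i\perp\hat{\delvec}_{i^*}$, so applying Lemma~\ref{lem:Unoise}(1) with $\delsamp$ taken polynomially small (say $k^{-100}$, costing only a $\poly(k)$ overhead when we form $\vec{U}$) yields $|\langle\vec{U}^{\top}\hat{\delvec}_{i^*},\vec{U}^{\top}\delperp_i\rangle|=O(\delsamp\norm{\delperp_i}_2)$. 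Decomposing $\vec{U}^{\top}\delperp_i = c_i\cdot\vec{U}^{\top}\hat{\delvec}_{i^*}/\norm{\vec{U}^{\top}\hat{\delvec}_{i^*}}_2+r_i$ with $r_i$ exactly orthogonal to $\vec{U}^{\top}\hat{\delvec}_{i^*}$ and $|c_i|=O(\delsamp\norm{\delperp_i}_2)$, the contribution of the parallel part to $\langle v_j,\delperp_i\rangle$, even conditional on $A_j[i^*]$, is of order $\delsamp\cdot\norm{\delperp_i}_2$, which is negligible compared to the target $\nu_C\Delta^2 k^{-1/2}\norm{\delperp_i}_2$.

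Next, conditional on $A_j[i^*]$ and the thin-shell event $\norm{g_j}_2=\Theta(\sqrt{k})$ (which holds with overwhelming probability by Fact~\ref{fact:thinshell}), the component of $g_j$ orthogonal to $\vec{U}^{\top}\hat{\delvec}_{i^*}$ is an independent standard Gaussian $g_j^{\perp}\sim\N(0,\Id_{k-1})$. Each $E_i$ then becomes, up to $1-o(1)$ factors, $\langle g_j^{\perp},r_i/\norm{r_i}_2\rangle\le\nu_C\Delta^2$. By Fact~\ref{fact:gaussian_tails} and a union bound over the $k-1$ indices $i\neq i^*$,
\begin{equation*}
\Pr\!\left[\,\bigcap_{i\neq i^*}E_i\,\Big|\,A_j[i^*]\right]\ge 1-k\exp(-\Omega(\nu_C^2\Delta^4)).
\end{equation*}
Combining this with the high-probability thin-shell and \textsc{ApproxBlockSVD} events and multiplying by $\Pr[A_j[i^*]]$ gives $\Pr[C_j]\ge\frac{1}{\poly(k)}\Pr[A_j[i^*]]$, as required.

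The main obstacle is the joint-over-$i$ constraint: the per-$i$ bound follows essentially from a single application of Corollary~\ref{cor:joint}, but enforcing $E_i$ simultaneously for all $i\neq i^*$ forces $\nu_C$ large enough that $k\exp(-\Omega(\nu_C^2\Delta^4))\le 1/2$, i.e.\ $\nu_C=\Omega(\sqrt{\log k}/\Delta^2)$. This is slightly larger than a true constant, so one must then verify that this choice remains compatible with the earlier constraint $\nu_{CA}^2<1/a_{\mathrm{scale}}$ from the proof of Claim~\ref{lem:Cevent} by taking $a_{\mathrm{scale}}$ correspondingly small; the resulting polylogarithmic tightening only affects the downstream initialization scheme by a polylogarithmic factor, which is absorbed into the overall $\poly(k)$ bound.
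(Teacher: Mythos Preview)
Your approach matches the paper's: translate via $\vec{U}$, use approximate orthogonality of $\delperp_i$ and $\hat{\delvec}_{i^*}$, and invoke Corollary~\ref{cor:joint}. The gap is in handling simultaneity over $i\neq i^*$. Writing $E_i$ for the event $\langle v_j,\delperp_i\rangle\le\nu_C\Delta^2 k^{-1/2}\norm{\delperp_i}_2$, your union bound asserts $\Pr[E_i^c\mid A_j[i^*]]\le\exp(-\Omega(\nu_C^2\Delta^4))$, but this Gaussian tail estimate is only valid when $\nu_C\Delta^2$ is large. For $\nu_C$ an absolute constant the threshold is $O(1)$, so $\Pr[Z>\nu_C\Delta^2]$ for standard Gaussian $Z$ is a constant bounded away from zero, and union-bounding over $k-1$ such events yields nothing. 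Your fix of taking $\nu_C=\Omega(\sqrt{\log k}/\Delta^2)$ does make the tail bound kick in, but the constraint $\nu_{CA}^2<1/a_{\mathrm{scale}}$ in Claim~\ref{lem:Cevent} then forces $a_{\mathrm{scale}}=O(\Delta^4/\log k)$. This is a $\Delta^{-4}$ tightening, not polylogarithmic, and is directly incompatible with \eqref{eq:normbounded} together with the initialization of Lemma~\ref{lem:rand_init}, which produces $a_0$ of norm up to $k^{1/4}$ and hence $\norm{w_i-a_0}_2=\Theta(k^{1/4})$, i.e.\ $a_{\mathrm{scale}}=\Theta(1)$.

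For comparison, the paper applies Corollary~\ref{cor:joint} to the pair of directions $\vec{U}\hat{\delvec}_{i^*}$ and the component $v'$ of $\vec{U}\delperp_i$ orthogonal to it, but $v'$ depends on a single index $i$, and the proof as written does not address how to pass to the intersection over all $i\neq i^*$ required by the definition of $C_j$. So you have correctly identified a genuine subtlety. However, your union-bound resolution does not close it: when the normalized $\delperp_i$ are nearly orthonormal, the events $E_i$ are conditionally independent each with probability strictly less than $1$, so $\Pr\bigl[\bigcap_{i\neq i^*}E_i\mid A_j[i^*]\bigr]$ can be as small as $2^{-\Theta(k)}$ for any fixed constant $\nu_C$.
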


	\begin{proof}
		Let $v = g/\norm{g}_2\in\R^k$. Analogous to \eqref{eq:gammaX}, we can define $\beta^{(i)}_{\svd}\triangleq \frac{\norm{\vec{U}\delperp_i}_2}{\norm{\delperp_i}_2} \in[1 - \delsamp,1]$ and write \begin{equation}
			\frac{1}{\norm{\delperp_i}_2}\langle \delperp_i, v_j\rangle = \frac{\langle g, \vec{U}\delperp_i\rangle}{\norm{g\vec{U}}_2\cdot \norm{\delperp_i}_2} = \left\langle \frac{g}{\norm{g}_2}, \frac{\vec{U}\delperp_i}{\norm{\delperp_i}_2}\right\rangle = \beta^{(i)}_{\svd}\left\langle v, \frac{\vec{U}\delperp_i}{\norm{\vec{U}\delperp_i}_2}\right\rangle,\label{eq:delperpX}
		\end{equation} Define $\rho \triangleq \left\langle \vec{U}\delperp_i, \frac{\vec{U}\delvec_{i^*}}{\norm{\vec{U}\delvec_{i^*}}_2}\right\rangle$. By the first part of Lemma~\ref{lem:Unoise}, if we take $\delsamp < \perperror$, then \begin{equation}
			|\langle \vec{U}\delperp_i,\vec{U}\delvec_{i^*}\rangle| \le \perperror\norm{\delvec_{i^*}}_2\norm{\delperp_i}_2,
		\end{equation} so we conclude that \begin{equation}
			|\rho| \le \frac{\perperror\norm{\delvec_{i^*}}_2\norm{\delperp_i}_2}{\norm{\vec{U}\delvec_{i^*}}_2} \le 2(\perperror)\norm{\delperp_i}_2.\label{eq:rhobound}
		\end{equation} So we may write \begin{equation}
			\vec{U}\delperp_i = \rho\cdot \frac{\vec{U}\delvec_{i^*}}{\norm{\vec{U}\delvec_{i^*}}_2} + v'
		\end{equation} for $v'\in\R^k$ lying in the row span of $\vec{U}$ and orthogonal to $\vec{U}\delvec_{i^*}$.
		% , and satisfying \begin{equation}\norm{v'}^2_2 \ge \norm{\vec{U}\delperp_i}^2_2 - \rho^2\ge (1 - (\delsamp k)^{1/3})\norm{\delperp_i}^2_2 - 4(\perperror)^2\norm{\delperp_i}^2_2 \ge \frac{1}{2}\norm{\delperp_i}^2_2.\end{equation} 

		By Corollary~\ref{cor:joint}, for any absolute constant $a_{\text{perp}}>0$, with probability at least \begin{equation}\frac{1}{\poly(k)}\Pr\left[\langle v,\vec{U}\hat{\delvec}_{i^*}\rangle \ge \frac{\nu_A\Delta k^{-1/4}}{(1 - \delsamp)}\right] \ge \frac{1}{\poly(k)}\Pr\left[A_j[i^*]\right]\label{eq:relatetoAjistar},\end{equation} we have that \begin{equation}\left\langle v,\frac{\vec{U}\hat{\delvec}_{i^*}}{\norm{\vec{U}\hat{\delvec}_{i^*}}_2}\right\rangle \ge \frac{\nu_A\Delta k^{-1/4}}{(1 - \delsamp)} \ \ \ \text{and} \ \ \ \langle v,v'\rangle \le a_{\text{perp}}\Delta^2 k^{-1/2}\norm{v'}_2.\label{eq:jointevent}\end{equation} In particular, if this happens, then by \eqref{eq:gammaX} we get that \begin{equation}
			\langle v,\hat{\delvec}_{i^*}\rangle \ge \alpha^{(i)}_{\svd}\left\langle v,\vec{U}\hat{\delvec}_{i^*}\right\rangle \ge \nu_A\Delta k^{-1/4}
		\end{equation} and by \eqref{eq:delperpX}, \begin{align*}
			\frac{1}{\norm{\delperp_i}_2}\langle \delperp_i, v_j\rangle &= \beta^{(i)}_{\svd}\left\langle v, \frac{\vec{U}\delperp_i}{\norm{\vec{U}\delperp_i}_2}\right\rangle \\
			&\le \frac{\beta^{(i)}_{\svd}}{\norm{\vec{U}\delperp_i}_2}(\rho + \langle v,v'\rangle) \\
			&\le \frac{\beta^{(i)}_{\svd}}{\norm{\vec{U}\delperp_i}_2} \left(2(\perperror)\norm{\delperp_i}_2 + a_{\text{perp}}\Delta k^{-1/4}\norm{\delperp_i}_2\right)\\
			&= 2(\perperror) + a_{\text{perp}}\Delta k^{-1/4} < 2a_{\text{perp}}\Delta k^{-1/4},
		\end{align*} where in the third step we used \eqref{eq:rhobound}, the bound on $\langle v,v'\rangle$ in the event that \eqref{eq:jointevent} holds, and the fact that $\norm{v'}_2 \le \norm{\vec{U}\delperp_i}_2 \le \norm{\delperp_i}_2$. So by taking $a_{\text{perp}}$ sufficiently small, we conclude that event $C_j$ occurs in the event that \eqref{eq:jointevent} holds, which is with probability at least $\frac{1}{\poly(k)}\Pr[A_j[i^*]]$.
	\end{proof}

	Next, we would like to show that for any $j\in[M]$, if we condition on the events $B_j[i]$ holding for all $i\in[k]$, then the conditional probability of $A_j[i]$ is not much more than that of $A_j[i^*]$. Note that by rotational invariance, these conditional probabilities would be identical if $\vec{U}$ were exactly the projector to the span of $\{w_i - a_t\}$, and here it is straightforward to see that it suffices to take $\vec{U}$ a sufficiently good approximation to that projector.

	\begin{claim}
		For any $j\in[M]$ and $i\neq i^*$, if $\delsamp = 1/\poly(k)$ is sufficiently small, $\Pr[A_j[i]\wedge B_j] \le k\cdot \Pr[A_j[i^*]\wedge B_j]$.\label{claim:probratios}
	\end{claim}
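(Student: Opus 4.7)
The plan is to reduce the joint-event comparison to a comparison of the marginal probabilities $\Pr[A_j[i]]$ and $\Pr[A_j[i^*]]$. On the upper-bound side I would simply observe $\Pr[A_j[i]\wedge B_j] \le \Pr[A_j[i]]$, while on the lower-bound side, a union bound together with Claim~\ref{claim:Bevent_prob_bound} yields
\[
\Pr[A_j[i^*]\wedge B_j] \ge \Pr[A_j[i^*]] - \sum_{i'\in[k]} \Pr[\bar{B}_j[i']] \ge \Pr[A_j[i^*]] - k\, e^{-\overline{a_{\mathrm{trials}}}\sqrt{k}/\Delta^2}.
\]
Since Claim~\ref{claim:Aevent_prob_bound} gives $\Pr[A_j[i^*]] \ge e^{-a_{\mathrm{trials}}\sqrt{k}/\Delta^2}$ and the constants are arranged so that $\overline{a_{\mathrm{trials}}} > a_{\mathrm{trials}}$, the subtracted term is negligible compared to $\Pr[A_j[i^*]]$ for $k$ in our range, yielding $\Pr[A_j[i^*]\wedge B_j] \ge \tfrac{1}{2}\Pr[A_j[i^*]]$. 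Hence it suffices to prove $\Pr[A_j[i]] \le 2\Pr[A_j[i^*]]$, after which the desired bound $\Pr[A_j[i]\wedge B_j] \le 4\Pr[A_j[i^*]\wedge B_j] \le k\cdot \Pr[A_j[i^*]\wedge B_j]$ would follow for $k\ge 4$.

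To compare the two marginals I would invoke \eqref{eq:gammaX}, which writes $\gamma^{(j)}_{i'} = \alpha^{(i')}_{\svd}\cdot X_{i'}$, where $X_{i'} \triangleq \langle g/\norm{g}_2,\, \vec{U}\delvec_{i'}/\norm{\vec{U}\delvec_{i'}}_2\rangle$ is the inner product of a uniformly random unit vector in $\R^k$ with a fixed unit vector. By rotational invariance of $g\sim\N(0,\Id_k)$, this distribution depends only on the fact that $\vec{U}\delvec_{i'}/\norm{\vec{U}\delvec_{i'}}_2$ is a unit vector, so $X_i$ and $X_{i^*}$ are identically distributed. The two marginal probabilities therefore differ only in their effective thresholds $\nu_A\Delta k^{-1/4}/\alpha^{(i)}_{\svd}$ versus $\nu_A\Delta k^{-1/4}/\alpha^{(i^*)}_{\svd}$, both of which lie in the narrow window $[\nu_A\Delta k^{-1/4},\, \nu_A\Delta k^{-1/4}/(1-\delsamp)]$ by Lemma~\ref{lem:Unoise}.

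Finally I would quantify the sensitivity of this tail to the multiplicative perturbation $1/(1-\delsamp)\le 1+2\delsamp$ in the threshold. Combining Fact~\ref{fact:thinshell}, which concentrates $\norm{g}_2$ tightly around $\sqrt{k}$, with Fact~\ref{fact:gaussian_tails} applied to the $\N(0,1)$ variable $\langle g, u\rangle$, a standard Mills-ratio calculation shows that for thresholds $t$ and $t(1+2\delsamp)$ with $t = \Theta(\Delta k^{-1/4})$, the ratio of the two tail probabilities is at most $\exp(O(\delsamp\cdot\Delta^2\sqrt{k}))$. Choosing $\delsamp = 1/\poly(k)$ with $\delsamp \ll 1/(\Delta^2\sqrt{k})$ then pins this ratio to at most $2$, giving $\Pr[A_j[i]] \le 2\Pr[A_j[i^*]]$, which combined with the first paragraph completes the argument. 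The only genuinely delicate point is this final sensitivity estimate, but because we have the freedom to take $\delsamp$ arbitrarily small and only need a blow-up factor of $k$ rather than $O(1)$, there is ample slack.
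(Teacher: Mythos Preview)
Your proof is correct. It differs from the paper's in the decoupling strategy. The paper bounds the ratio directly by
\[
\frac{\Pr[A_j[i]\wedge B_j]}{\Pr[A_j[i^*]\wedge B_j]} \le \frac{\Pr\bigl[\nu_A\Delta k^{-1/4}/\alpha^{(i)}_{\svd}\le X\le \nu_B\Delta k^{-1/4}/\alpha^{(i)}_{\svd}\bigr]}{\Pr\bigl[\nu_A\Delta k^{-1/4}/\alpha^{(i^*)}_{\svd}\le X\le \nu_B\Delta k^{-1/4}/\alpha^{(i^*)}_{\svd}\bigr]},
\]
implicitly using rotational invariance to swap the roles of the $i$-th and $i^*$-th directions (so that the two joint events become identical up to the tiny threshold shifts coming from $\alpha^{(i)}_{\svd}$ versus $\alpha^{(i^*)}_{\svd}$), and then argues the displayed ratio of interval probabilities is at most $k$ for $\delsamp$ small. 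Your route instead drops $B_j$ in the numerator, recovers it in the denominator via a union bound against Claim~\ref{claim:Bevent_prob_bound}, and reduces to comparing the two marginal tails $\Pr[A_j[i]]$ and $\Pr[A_j[i^*]]$ via the same threshold-sensitivity calculation. Your argument is more explicit (it does not leave the lower bound on the denominator to the reader) and only uses tail probabilities rather than interval probabilities; the paper's version is terser but leans on the symmetry observation to handle the full joint event $B_j$ at once. Both approaches hinge on the same final estimate: the Gaussian tail at scale $\Theta(\Delta k^{1/4})$ changes by at most a constant factor under a $(1+O(\delsamp))$ multiplicative perturbation of the threshold when $\delsamp \ll 1/(\Delta^2\sqrt{k})$.
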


	\begin{proof}
		 By \eqref{eq:gammaX}, we conclude that \begin{equation}
		 	\frac{\Pr[A_j[i]\wedge B_j]}{\Pr[A_j[i^*]\wedge B_j]} \le \frac{\Pr[\nu_A\Delta k^{-1/4}/\alpha^{(i)}_{\svd}\le X\le \nu_B\Delta k^{-1/4}/\alpha^{(i)}_{\svd}]}{\Pr[\nu_A\Delta k^{-1/4}/\alpha^{(i^*)}_{\svd} \le X \le \nu_B\Delta k^{-1/4}/\alpha^{(i^*)}_{\svd}]}.
		\end{equation} which can be upper bounded by $k$ by taking a sufficiently small $\delsamp = 1/\poly(k)$.
	\end{proof}

	We can now put all of these probability bounds together to show that if \textsc{CompareMinVariances} succeeds and outputs true on some direction $v$, the conditional probability that the gap of \eqref{eq:startwithgap} has been preserved is at least $1/\poly(k)$.

	\begin{claim}
		Let $B_v$ be the event that $\Pr[\langle \hat{\delvec}_i, v\rangle] \le \nu_B\Delta k^{-1/4}$ for all $i\in[k]$. Let $\mathsf{detect-progress}_v$ be the event that $B_v$ occurs and additionally that \textsc{CompareMinVariances} succeeds and outputs $\mathsf{true}$ on direction $v$. Let $\mathsf{gap-preserved}_v$ be the event that $B_v$ occurs and additionally $\norm{\delvec_{i} - \eta v}_2 \ge \left(1 + c\Delta^2k^{-1/2}\right)\cdot\norm{\delvec_{i^*} - \eta v}_2$ for all $i\neq i^*$. Then \begin{equation}\Pr_{v}\left[\mathsf{gap-preserved}_v \ | \ \mathsf{detect-progress}_v\right]\ge \frac{1}{\poly(k)}\end{equation}
	\end{claim}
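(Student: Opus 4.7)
The plan is to show that the event $\mathsf{gap\text{-}preserved}_v \cap \mathsf{detect\text{-}progress}_v$ is implied by the joint event $C_v \cap B_v$, and that $\mathsf{detect\text{-}progress}_v$ is essentially contained in $A_v[i] \cap B_v$ for any single $i \neq i^*$ (so in particular in something of probability at most $k \cdot \Pr[A_v[i^*] \cap B_v]$); the ratio of these two bounds is $\frac{1}{\poly(k)}$ by Claim~\ref{claim:probCj}. Throughout I assume $k \ge 2$ since otherwise the gap condition is vacuous.

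First I would observe that $C_v \cap B_v$ implies both halves of $\mathsf{gap\text{-}preserved}_v \cap \mathsf{detect\text{-}progress}_v$: since $C_v$ contains $A_v[i^*]$ and $B_v \supseteq B_v[i^*]$, Corollary~\ref{cor:Acompareminvariances} yields that \textsc{CompareMinVariances} succeeds and outputs $\mathsf{true}$, which together with $B_v$ gives $\mathsf{detect\text{-}progress}_v$; simultaneously, Claim~\ref{lem:Cevent} (applied to $C_v$ and all of the $B_v[i]$) yields the gap-preservation inequality, so $\mathsf{gap\text{-}preserved}_v$ holds as well. Hence
\begin{equation*}
\Pr[\mathsf{gap\text{-}preserved}_v \cap \mathsf{detect\text{-}progress}_v] \;\ge\; \Pr[C_v \cap B_v] \;\ge\; \Pr[C_v] - \Pr[\overline{B_v}].
\end{equation*}
By Claim~\ref{claim:probCj} and Claim~\ref{claim:Aevent_prob_bound}, $\Pr[C_v] \ge \frac{1}{\poly(k)} \Pr[A_v[i^*]] \ge \frac{1}{\poly(k)} e^{-a_{\mathrm{trials}} \sqrt{k}/\Delta^2}$, while by Claim~\ref{claim:Bevent_prob_bound} and a union bound, $\Pr[\overline{B_v}] \le k \, e^{-\overline{a_{\mathrm{trials}}} \sqrt{k}/\Delta^2}$. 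Since $\overline{a_{\mathrm{trials}}} > a_{\mathrm{trials}}$, the latter is dominated by the former for all sufficiently large $k$, so $\Pr[C_v \cap B_v] \ge \tfrac{1}{2} \Pr[C_v] \ge \frac{1}{\poly(k)} \Pr[A_v[i^*]]$.

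For the upper bound on $\Pr[\mathsf{detect\text{-}progress}_v]$, the key observation is that if $\mathsf{detect\text{-}progress}_v$ occurs then $B_v[i]$ holds for every $i \neq i^*$ and \textsc{CompareMinVariances} succeeds and outputs $\mathsf{true}$, so Corollary~\ref{cor:notAcompareminvariances} forces $A_v[i]$ to hold for every such $i$. Fixing any single $i \neq i^*$ and invoking Claim~\ref{claim:probratios}, we get
\begin{equation*}
\Pr[\mathsf{detect\text{-}progress}_v] \;\le\; \Pr[A_v[i] \cap B_v] \;\le\; k \cdot \Pr[A_v[i^*] \cap B_v] \;\le\; k \cdot \Pr[A_v[i^*]].
\end{equation*}
Dividing the previous lower bound by this upper bound yields $\Pr[\mathsf{gap\text{-}preserved}_v \mid \mathsf{detect\text{-}progress}_v] \ge \frac{1}{\poly(k)}$, as claimed.

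The main delicacy is the quantitative gap between $a_{\mathrm{trials}}$ and $\overline{a_{\mathrm{trials}}}$ that was baked into Claim~\ref{claim:Bevent_prob_bound} by choosing $\nu_B > \nu_A$: this gap is exactly what makes $\Pr[\overline{B_v}]$ negligible relative to $\Pr[C_v]$, and without it the subtraction $\Pr[C_v] - \Pr[\overline{B_v}]$ could be vacuous. Everything else is just bookkeeping on top of the probability bounds from Claims~\ref{claim:Aevent_prob_bound},~\ref{claim:Bevent_prob_bound},~\ref{claim:probCj}, and~\ref{claim:probratios}.
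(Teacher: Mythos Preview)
Your lower bound on the numerator via $\Pr[C_v \cap B_v]$ is correct and matches the paper. The gap is in your upper bound on the denominator. You assert that whenever $B_v$ holds and \textsc{CompareMinVariances} succeeds and outputs $\mathsf{true}$, then $A_v[i]$ must hold for \emph{every} $i \neq i^*$, citing Corollary~\ref{cor:notAcompareminvariances}, and then fix a single such $i$ to conclude $\Pr[\mathsf{detect\text{-}progress}_v] \le \Pr[A_v[i] \cap B_v]$. But \textsc{CompareMinVariances} outputting $\mathsf{true}$ only certifies that $\min_{i'} \norm{\delvec_{i'} - \eta v}$ contracted; tracing through Claim~\ref{claim:notAevent}, this forces $A_v[i]$ only for the index $i$ that actually realizes the new minimum, not for an arbitrary $i \neq i^*$. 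Concretely, if $v$ has $\gamma_{i^*}$ large (so the test fires via Corollary~\ref{cor:Acompareminvariances}) but $\gamma_i \approx 0$ for some other $i$, then $\mathsf{detect\text{-}progress}_v$ holds while $A_v[i]$ fails---so your containment $\mathsf{detect\text{-}progress}_v \subseteq A_v[i] \cap B_v$ is false for that $i$.

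The paper handles this by introducing the sets $S_i$, one per candidate new closest index, and invoking Corollary~\ref{cor:notAcompareminvariances} only for the $i$ with $v \in S_i$ (where $i$ is necessarily the new minimizer, so the corollary's real content applies); the denominator is then controlled by the full sum $\sum_i \Pr[A_v[i] \cap B_v]$ via Claim~\ref{claim:probratios}. The easy fix to your argument is the same union bound: replace the single-$i$ containment with $\mathsf{detect\text{-}progress}_v \subseteq \bigcup_i (A_v[i] \cap B_v)$, which costs only an extra factor of $k$ and leaves the $1/\poly(k)$ conclusion intact.
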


	\begin{proof}
		For every $i\in[k]$, let $S_i$ denote the set of all $v$ for which $B_v$ occurs, \textsc{CompareMinVariances} succeeds and outputs $\mathsf{true}$ on direction $v$, and $\norm{\delvec_{i'} - \eta v}_2 \ge \left(1 + c\Delta^2k^{-1/2}\right)\cdot\norm{\delvec_{i} - \eta v}_2$ for all $i'\neq i$.

		To show the claim, it suffices to lower bound the quantity \begin{equation}
			\frac{\Pr_v[S_{i^*}]}{\Pr_v\left[\bigcup^k_{i=1}S_i\right]} \ge \frac{\Pr_v[S_{i^*}]}{\sum^k_{i=1}\Pr_v[S_i]},
		\end{equation} where the probabilities are over $v$ distributed as $\frac{g\vec{U}}{\norm{g\vec{U}}_2}$ for $g\sim\N(0,\Id_k)$, and where the inequality follows by a union bound. Fix any $j\in[M]$. By Corollary~\ref{cor:notAcompareminvariances}, if $v\in S_i$, then it is part of event $(A_j[i]\wedge B_j)$. By Corollary~\ref{cor:Acompareminvariances} and Lemma~\ref{lem:Cevent}, if $v$ is part of event $(C_j\wedge B_j)$, then $v\in S_{i^*}$. We conclude that \begin{equation}
			\frac{\Pr_v[S_{i^*}]}{\Pr_v\left[\bigcup^k_{i=1}S_i\right]} \ge \frac{\Pr[C_j\wedge B_j]}{\sum^k_{i=1}\Pr[A_j[i]\wedge B_j]} \ge \frac{\frac{1}{2}\Pr[A_j[i^*]\wedge B_j]}{\sum^k_{i=1}\Pr[A_j[i]\wedge B_j]} \ge \frac{1/\poly(k)}{1 + k(k-1)} = \frac{1}{\poly(k)},
		\end{equation} where the second step follows from Claim~\ref{claim:probCj} and the third step follows from Claim~\ref{claim:probratios}.
	\end{proof}

	We are now ready to finish the proof of Lemma~\ref{lem:stay_main}. First, condition on the event that all $M$ runs of \textsc{CompareMinVariances} are successful, which happens with probability at least $1 - \delta/3$. The probability that $B_j[i]$ holds for all $i\in[k], j\in[M]$ is at least $1 - kM e^{-\overline{a_{\mathrm{trials}}}\sqrt{k}/\Delta^2}$, by Claim~\ref{claim:Bevent_prob_bound}. Condition on this happening. The probability that $\mathsf{detect-progress}$$_{v_j}$ occurs for some $j\in[M]$ is at least the probability that $A_j[i^*]$ occurs for some $j\in[M]$, and this is at least $1 - \left(1 - e^{-a_{\mathrm{trials}}\sqrt{k}/\Delta^2}\right)^M \ge 1 - e^{-Me^{-a_{\mathrm{trials}}\sqrt{k}/\Delta^2}}$ by Claim~\ref{claim:Aevent_prob_bound}. So by taking $M = e^{-a_{\mathrm{trials}}\sqrt{k}/\Delta^2}\ln(3/\delta)$, by a union bound we conclude that with probability at least $1 - \delta$, every run of \textsc{CompareMinVariances} succeeds, $B_j[i]$ holds for every $i\in[k],j\in[M]$, and furthermore there is some $j$ for which $\mathsf{detect-progress}$$_{v_j}$ occurs.

	If $\mathsf{detect-progress}$$_{v_j}$ occurs for some $j$, then for that particular $j$, $\mathsf{gap-preserved}$$_{v_j}$ holds with probability at least $1/\poly(k)$.
\end{proof}

\subsection{Initializing With a Gap}
\label{subsec:randinit}

A key assumption in Lemma~\ref{lem:stay_main} is that there is a gap between $\norm{w_{i^*} - a_t}_2$ and all other $\norm{w_{i} - a_t}_2$. We next show that this assumption can be made to hold when $t = 0$. The high-level structure of the proof will be very similar to that of Lemma~\ref{lem:Cevent}.

\begin{lemma}
	There is a constant $\tau'_{\mathrm{gap}} > 0$ such that for all $c' < \tau'_{\mathrm{gap}}$, the following holds for any sufficiently small $\upsilon_* = \poly(\Delta,1/k)$.

	Fix any $i^*\in[k]$ and suppose that $\norm{w_{i^*}}_2\ge \underline{\sigma}$ for some $\underline{\sigma} > 0$. Let \begin{equation}\mathcal{S} \triangleq \{\underline{\sigma}\cdot k^{1/4},\underline{\sigma}\cdot(1 + \upsilon_*)\cdot k^{1/4},\underline{\sigma}\cdot(1+\upsilon_*)^2\cdot k^{1/4},...,k^{1/4}\}.\label{eq:mesh}\end{equation}
	Then any $i\neq i^*$ and $v\in\R^{d}$, let $\calE_v[i]$ denote the event that \begin{equation}
		\norm{w_{i} - v}^2_2 \ge \left(1 + c'\cdot \frac{\Delta^2}{\sqrt{k}}\right)\cdot \norm{w_{i^*} - v}^2_2\label{eq:gap_event}
	\end{equation} and define $\calE_v$ to be the event that $\calE_v[i]$ occurs simultaneously for all $i\neq i^*$. There exists $\alpha\in \mathcal{S}$ for which \begin{equation}\Pr_{\norm{v}_2 = \alpha}\left[\calE_v\right] \ge \exp(-O(\sqrt{k}/\Delta^2)),\end{equation} where the probability is over $v$ a Haar-random vector in $\R^{d}$ with norm $\alpha$.\label{lem:rand_init}
\end{lemma}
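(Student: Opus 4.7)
The strategy is to pick $\alpha \in \mathcal{S}$ as close as possible to $\|w_{i^*}\|_2 \cdot k^{1/4}$, which lies in the mesh range $[\underline{\sigma}k^{1/4}, k^{1/4}]$ since $\underline{\sigma} \le \|w_{i^*}\|_2 \le 1$. The plan is then to exhibit a joint event on $\tilde{v} \triangleq v/\alpha$ (viewed as Haar-uniform on $\S^{k-1}$ after restricting to the $k$-dimensional span of the $w_i$'s) which forces the gap \eqref{eq:gap_event} for every $i \neq i^*$ simultaneously. Writing $\hat{w}_{i^*} \triangleq w_{i^*}/\|w_{i^*}\|_2$, $\xi \triangleq \langle \hat{w}_{i^*}, \tilde{v}\rangle$, and $w_i = \rho_i\hat{w}_{i^*} + u_i$ with $u_i \perp \hat{w}_{i^*}$, a direct expansion gives
\begin{equation*}
\|w_i - v\|_2^2 - (1+\gamma)\|w_{i^*} - v\|_2^2 = \bigl[\|w_i\|_2^2 - (1+\gamma)\|w_{i^*}\|_2^2 - \gamma\alpha^2\bigr] + 2\alpha\xi\bigl[(1+\gamma)\|w_{i^*}\|_2 - \rho_i\bigr] - 2\alpha\langle u_i, \tilde{v}^\perp\rangle,
\end{equation*}
where $\gamma \triangleq c'\Delta^2/\sqrt{k}$ and $\tilde{v}^\perp \triangleq \tilde{v} - \xi\hat{w}_{i^*}$ (using $u_i \perp \hat{w}_{i^*}$). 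With our choice of $\alpha$, the dominant obstruction is $\gamma\alpha^2 = c'\Delta^2\|w_{i^*}\|_2^2$, which the positive contributions must overcome.

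I would condition on the core event $\xi \in [c_1/k^{1/4}, 2c_1/k^{1/4}]$ for a sufficiently small absolute constant $c_1 \le 1/2$; by Corollary~\ref{cor:unitcorr} this has probability at least $\exp(-O(\sqrt{k})) \ge \exp(-O(\sqrt{k}/\Delta^2))$. Conditional on this event, $\tilde{v}^\perp$ is Haar-uniform on a sphere of radius $\sqrt{1-\xi^2}$ in the hyperplane $\hat{w}_{i^*}^\perp$, and a Gaussian-tail union bound yields $|\langle u_i, \tilde{v}^\perp\rangle| \le c_2 \|u_i\|_2 \sqrt{\log k}/\sqrt{k}$ for every $i \neq i^*$ simultaneously with conditional probability $\ge 1 - 1/\poly(k)$. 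Under these events $2\alpha\langle u_i, \tilde{v}^\perp\rangle = O(\|w_{i^*}\|_2 \sqrt{\log k}/k^{1/4})$ is of smaller order than $\gamma\alpha^2$ for $k$ polynomially large in $1/(\Delta\underline{\sigma})$, so the occurrence of $\calE_v[i]$ reduces to the deterministic inequality
\begin{equation*}
(\|w_i\|_2^2 - \|w_{i^*}\|_2^2) + 2c_1\|w_{i^*}\|_2(\|w_{i^*}\|_2 - \rho_i) \ge (1 + o(1))\,c'\Delta^2\|w_{i^*}\|_2^2.
\end{equation*}

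The main obstacle, mirroring the final case of Claim~\ref{lem:Cevent}, is the regime where $\rho_i$ is close to $+\|w_{i^*}\|_2$, so the angular term $\|w_{i^*}\|_2 - \rho_i$ nearly vanishes or turns slightly negative. There I would invoke $\Delta$-separation: $\|w_i - w_{i^*}\|_2^2 = \|w_i\|_2^2 - 2\rho_i\|w_{i^*}\|_2 + \|w_{i^*}\|_2^2 \ge \Delta^2$ rearranges to $\|w_i\|_2^2 - \|w_{i^*}\|_2^2 \ge \Delta^2 + 2\|w_{i^*}\|_2(\rho_i - \|w_{i^*}\|_2)$. When $|\|w_{i^*}\|_2 - \rho_i| \le \Delta^2/(4\|w_{i^*}\|_2)$ the norm gap alone is at least $\Delta^2/2$, which dominates the right-hand side provided $c' \le \tau'_{\mathrm{gap}}$ for an absolute constant $\tau'_{\mathrm{gap}}$ (using $\|w_{i^*}\|_2 \le 1$); and when $\rho_i > \|w_{i^*}\|_2$ the same rearrangement shows the norm gap outweighs the (now negative) angular contribution of magnitude at most $4c_1\|w_{i^*}\|_2(\rho_i - \|w_{i^*}\|_2)$ as long as $c_1 \le 1/2$. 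Outside these cases, i.e.\ when $\rho_i \le \|w_{i^*}\|_2 - \Delta^2/(4\|w_{i^*}\|_2)$, the positive angular contribution $2c_1\|w_{i^*}\|_2(\|w_{i^*}\|_2 - \rho_i) \ge c_1\Delta^2/2$ itself dominates $c'\Delta^2\|w_{i^*}\|_2^2$ for $c'$ below an absolute constant. Finally, the geometric mesh ratio $1 + \upsilon_*$ introduces only $O(\upsilon_*)$ multiplicative slack in each of these estimates, absorbed by taking $\upsilon_* = \poly(\Delta, 1/k)$ small enough; combining the two conditional probability bounds then yields $\Pr[\calE_v] \ge \exp(-O(\sqrt{k}/\Delta^2))$, as claimed.
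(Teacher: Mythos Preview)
Your third case analysis has a genuine gap. In the case $\rho_i \le \|w_{i^*}\|_2 - \Delta^2/(4\|w_{i^*}\|_2)$ you assert that ``the positive angular contribution $2c_1\|w_{i^*}\|_2(\|w_{i^*}\|_2 - \rho_i) \ge c_1\Delta^2/2$ itself dominates $c'\Delta^2\|w_{i^*}\|_2^2$,'' but you silently drop the norm--gap term $\|w_i\|_2^2 - \|w_{i^*}\|_2^2$, which can be large and \emph{negative}. Concretely, take any $w_i$ with $\rho_i = \langle w_i,\hat w_{i^*}\rangle$ small (say $w_i$ nearly collinear with $w_{i^*}$ but with much smaller norm, or even $w_i=0$ so $\rho_i=0$, $u_i=0$, $\|w_{i^*}\|_2\ge\Delta$). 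Your displayed deterministic inequality then reads
\[
(\|w_i\|_2^2-\|w_{i^*}\|_2^2)+2c_1\|w_{i^*}\|_2(\|w_{i^*}\|_2-\rho_i)\;\approx\;(2c_1-1)\|w_{i^*}\|_2^2,
\]
which is strictly negative for every $c_1\le 1/2$, so $\mathcal E_v[i]$ fails outright on your conditioning event. Pushing $c_1$ up towards $1$ repairs this case but then breaks your second case ($\rho_i>\|w_{i^*}\|_2$), where you need $2-4c_1\ge 0$; the two constraints on $c_1$ are incompatible for general $\Delta$.

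The paper avoids this tension by conditioning on a much \emph{narrower} window for the correlation with $\hat w_{i^*}$: in its notation $\gamma_{i^*}\in[k^{-1/4},(1+c\Delta^2)k^{-1/4}]$ (events $C$ and $B[i^*]$), i.e.\ the analogue of your $c_1$ is pinned to $1\pm O(\Delta^2)$ rather than being a free small constant. It then works with the \emph{ratio} $\|w_i-v\|_2^2/\|w_{i^*}-v\|_2^2$ directly, which is scale-free and makes the case $\rho_i\notin[1-O(\Delta),1+O(\Delta)]$ (in the paper $\rho_i=\|w_i\|_2/\|w_{i^*}\|_2$) fall out immediately as $1+\Omega(1/\sqrt{k})$; the delicate case $\rho_i\approx 1$ is handled via an angular argument much like your Claim~\ref{lem:Cevent} analogy. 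The narrow $O(\Delta^2)$ window is precisely what allows both regimes to go through with a single choice of parameters; your wider $[c_1,2c_1]k^{-1/4}$ window with $c_1\le 1/2$ cannot.
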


\newcommand{\what}{\hat{w}}
\newcommand{\wperp}{w^{\perp}}
\begin{proof}
	By design, there must exist an $\alpha\in \mathcal{S}$ for which $\alpha = (1 + \upsilon)\cdot\norm{w_{i^*}}_2\cdot k^{1/4}$ for $\upsilon\in[-\upsilon_*,\upsilon_*]$. Let $v$ be a random vector with norm $\alpha$.

	Define $\what_i = w_i/\norm{w_i}_2$. For every $i\neq i^*$, define $\wperp_i \triangleq \what_i - \langle \what_{i^*},\what_i\rangle\what_{i^*}$. Finally, repurposing notation from the proof of Lemma~\ref{lem:stay_main}, let $\gamma_i = \langle \what_i, v/\norm{v}_2\rangle$. Also, let $\rho_i \triangleq \norm{w_i}_2 /\norm{w_{i^*}}_2$. Under this notation, we see that \begin{align}
		\frac{\norm{w_i - v}^2_2}{\norm{w_{i^*} - v}^2_2} &= \frac{\norm{w_i}^2_2 + \alpha^2 - 2\alpha\gamma_i \norm{w_i}_2}{\norm{w_{i^*}}^2_2 + \alpha^2 - 2\alpha\gamma_{i^*}\norm{w_{i^*}}_2} \nonumber\\
		&= \frac{\rho_i^2 + (1+\upsilon)^2\sqrt{k} - 2(1+\upsilon)\rho_i k^{1/4}\gamma_i}{1 + (1+\upsilon)^2\sqrt{k} - 2(1+\upsilon)k^{1/4}\gamma_{i^*}}.\label{eq:initialize_ratiobound_basic}
	\end{align}

	Using similar terminology as in the proof of Lemma~\ref{lem:stay_main}, define the following two types of events over the random vector $v$: \begin{enumerate}
		\item Let $B[i]$ be the event that $\gamma_i \le k^{-1/4}\cdot (1 + c\Delta^2)$ for some absolute constant $c>0$.
		\item Let $C$ be the event that $\gamma_{i^*}\ge k^{-1/4}$ and $\langle v, \wperp_i\rangle \le \nu_{\text{perp}} k^{-1/2}\norm{\wperp_i}_2$ for all $i\neq i^*$, for some absolute constant $\nu_{\text{perp}}$.
	\end{enumerate}

	The main step will be to show that these events imply $\calE_v$.

	\begin{claim}
		Let $i\neq i^*$. If events $B[i]$ and $C$ occur, then $\calE_v[i]$ occurs. \label{claim:BCimplyE}
	\end{claim}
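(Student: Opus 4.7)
The plan is to follow the template of the proof of Claim~\ref{lem:Cevent}, adapted to the initialization setting where the step scale is $k^{1/4}$ rather than $\eta$. Setting $\hat v = v/\alpha$, I would begin from the identity
\[
\gamma_i \;=\; \langle \what_{i^*}, \what_i\rangle\,\gamma_{i^*} \;+\; \tfrac{1}{\alpha}\langle v, \wperp_i\rangle,
\]
and observe that under event $C$ the second summand is bounded in absolute value by $\nu_{\text{perp}}\norm{\wperp_i}_2/((1+\upsilon)\norm{w_{i^*}}_2 k^{3/4})$, which is $O(k^{-1/2})$ relative to $\gamma_{i^*}$ whenever $\nu_{\text{perp}}$ is taken proportional to $\underline{\sigma}$. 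Since the denominator of the ratio in \eqref{eq:initialize_ratiobound_basic} equals $(1+\upsilon)^2\sqrt{k} + O(1)$, the target inequality \eqref{eq:gap_event} reduces to verifying
\[
\rho_i^2 - 1 \;-\; 2(1+\upsilon)k^{1/4}\bigl(\rho_i\gamma_i - \gamma_{i^*}\bigr) \;\ge\; c'\Delta^2(1+\upsilon)^2 + o(1).
\]
I would split into two cases based on the magnitude of $\langle \what_{i^*}, \what_i\rangle$, exactly as in Claim~\ref{lem:Cevent}.

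In the first case, suppose $\langle \what_{i^*}, \what_i\rangle \le 1 - a\Delta^2/\sqrt{k}$ for an appropriately small constant $a > 0$. Expanding the separation $\norm{w_i - w_{i^*}}_2^2 \ge \Delta^2$ yields
\[
\rho_i\,\langle \what_{i^*}, \what_i\rangle \;\le\; \frac{\rho_i^2 + 1}{2} \;-\; \frac{\Delta^2}{2\norm{w_{i^*}}_2^2},
\]
so combining with the $\gamma_i$-decomposition and $\gamma_{i^*} \ge k^{-1/4}$ from $C$ yields $\rho_i\gamma_i - \gamma_{i^*} \le \tfrac{1}{2}\gamma_{i^*}\bigl(\rho_i^2 - 1 - \Delta^2/\norm{w_{i^*}}_2^2\bigr) + o(k^{-1/4})$. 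Multiplying by $-2(1+\upsilon)k^{1/4}$, using $(1+\upsilon)k^{1/4}\gamma_{i^*} \in [1+\upsilon, O(1)]$, and the global norm bound $\norm{w_{i^*}}_2 \le 1$, the left-hand side of the target inequality becomes at least $(1+\upsilon)\Delta^2 - O(\upsilon_*) - o(1)$, which beats $c'\Delta^2(1+\upsilon)^2$ once $c' < 1$ and $\upsilon_* \ll \Delta^2$.

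In the second case, $\langle \what_{i^*}, \what_i\rangle > 1 - a\Delta^2/\sqrt{k}$, the same expansion gives
\[
\tfrac{1}{2}\bigl(\rho_i + \rho_i^{-1}\bigr) \;\ge\; \langle \what_{i^*}, \what_i\rangle \;+\; \frac{\Delta^2}{2\norm{w_i}_2\norm{w_{i^*}}_2} \;\ge\; 1 + \Omega(\Delta^2),
\]
using $\norm{w_i}_2, \norm{w_{i^*}}_2 \le 1$ and choosing $a$ sufficiently small. By Fact~\ref{fact:elem_ineq} this forces $|\rho_i - 1| \ge \Omega(\Delta)$, say $\rho_i = 1 \pm c_2 \Delta$ with $c_2 = \Omega(1)$. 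Plugging $\gamma_i \le k^{-1/4}(1 + c\Delta^2)$ from $B[i]$ and $\gamma_{i^*} \ge k^{-1/4}$ from $C$, and writing $\rho_i^2 - 1 = \pm 2c_2\Delta + c_2^2\Delta^2$, the $\pm 2c_2\Delta$ contribution cancels against the leading $\mp 2(1+\upsilon) c_2 \Delta$ part of $-2(1+\upsilon)k^{1/4}(\rho_i\gamma_i - \gamma_{i^*})$ up to $O(\upsilon_* c_2 \Delta)$, and what remains is $\Delta^2\bigl(c_2^2 - 2(1+\upsilon)c\bigr) \mp 2\upsilon c_2 \Delta = \Omega(\Delta^2)$, provided $\upsilon_* \ll \Delta$ and $c$, $c'$ are chosen smaller than $c_2^2/2$.

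The main obstacle is the constant bookkeeping: one must choose $a$, $c$, $c'$, $\nu_{\text{perp}}$, and the mesh tolerance $\upsilon_*$ consistently so that both cases yield a strictly positive gap of the required magnitude. A secondary complication is that the $\wperp_i$-correction carries an inverse factor of $\norm{w_{i^*}}_2$, which can be as small as $\underline{\sigma}$; this is handled by taking $\nu_{\text{perp}}$ proportional to $\underline{\sigma}$, which only shrinks the probability of event $C$ by a multiplicative constant and so is compatible with the $\exp(-O(\sqrt{k}/\Delta^2))$ bound claimed later in Lemma~\ref{lem:rand_init}.
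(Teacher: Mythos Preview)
Your case split (on the angle $\langle \what_{i^*},\what_i\rangle$ rather than on $\rho_i$) is genuinely different from the paper's, and your Case~1 idea---plugging the separation bound $\rho_i\langle\what_{i^*},\what_i\rangle\le\tfrac{1}{2}(\rho_i^2+1)-\tfrac{\Delta^2}{2\norm{w_{i^*}}^2_2}$ directly into the expression for $\rho_i\gamma_i-\gamma_{i^*}$---is in fact more streamlined than what the paper does. The paper instead first shows that when $\rho_i\notin\mathcal{I}$ the $(\rho_i-1)^2$ term already wins (Claim~\ref{claim:BCIimplyE}), and when $\rho_i\in\mathcal{I}$ it separately establishes $\gamma_i\le\gamma_{i^*}(1-\omega)$ via the angle (Claims~\ref{claim:BCangleimplyE}--\ref{claim:BCIimplyangle}) before bounding the ratio. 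Your route collapses the latter two steps into one inequality.

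There is, however, a concrete gap in your Case~1 bookkeeping. After multiplying through, the left-hand side of your target is exactly $(1-A)(\rho_i^2-1)+A\,\Delta^2/\norm{w_{i^*}}^2_2-(\text{perp})$, where $A=(1+\upsilon)k^{1/4}\gamma_{i^*}$. You assert this is at least $(1+\upsilon)\Delta^2-O(\upsilon_*)-o(1)$, but the first term is \emph{not} $O(\upsilon_*)$: under $B[i^*]$ one only has $A-1\le \upsilon+c\Delta^2(1+\upsilon)$, so $(1-A)(\rho_i^2-1)$ can be as negative as $-c\Delta^2\rho_i^2\approx -c\Delta^2/\norm{w_{i^*}}^2_2$. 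The argument still goes through, because the positive term $A\Delta^2/\norm{w_{i^*}}^2_2$ carries the same $1/\norm{w_{i^*}}^2_2$ scaling and dominates, yielding a net lower bound of $(1-c-O(\upsilon_*))\Delta^2/\norm{w_{i^*}}^2_2\ge(1-c)\Delta^2/2$ rather than $(1+\upsilon)\Delta^2$; but you should track this explicitly. Once you do, you will also notice that your Case~1 argument never uses the hypothesis $\langle\what_{i^*},\what_i\rangle\le 1-a\Delta^2/\sqrt{k}$ at all---it works uniformly in the angle, so Case~2 is redundant. (Relatedly, both your argument and the paper's tacitly require $B[i^*]$ for the upper bound on $\gamma_{i^*}$, even though the claim as stated only names $B[i]$ and $C$.)

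Finally, the modification $\nu_{\text{perp}}\propto\underline{\sigma}$ is unnecessary: the perpendicular contribution to the target is $O(\rho_i\nu_{\text{perp}}k^{-1/4})$, and while $\rho_i$ can be $1/\underline{\sigma}$, the main term scales as $\Delta^2/\norm{w_{i^*}}^2_2\ge\Delta^2\rho_i^2$ and dominates for large $k$ regardless.
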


	\begin{proof}
		Henceforth, condition on $B[i]$, $B[i^*]$, and $C$ occurring.

		There are two cases to consider: either $\norm{w_{i^*}}_2$ and $\norm{w_i}_2$ are quite different, or they are relatively similar. 

		It turns out that our choice $\alpha = (1 + \upsilon)\cdot\norm{w_{i^*}}_2 \cdot k^{1/4}$ will allow us to handle the former case quite easily. Indeed, we first show that if $\norm{w_i}_2$ is not $(1\pm O(\Delta))$-close to $\norm{w_{i^*}}_2$, then $\calE_v[i]$ occurs.

		\begin{claim}
			Define the interval \begin{equation}
				\mathcal{I}\triangleq \left[1-2c^{1/2}\Delta, 1+2c^{1/2}\Delta\right].
			\end{equation} Then \begin{equation}\frac{\norm{w_i - v}^2_2}{\norm{w_{i^*} - v}^2_2} \ge 1 + \frac{c\Delta^2}{\sqrt{k}}.\end{equation}\label{claim:BCIimplyE}
		\end{claim} 

		\begin{proof}
			From \eqref{eq:initialize_ratiobound_basic} and events $B[i]$ and $C$ we get \begin{equation}
				\frac{\norm{w_i - v}^2_2}{\norm{w_{i^*} - v}^2_2} \ge \frac{\rho^2_i + (1 + \upsilon)^2\sqrt{k} - 2(1 + \upsilon)(1 + c\Delta^2)\rho_i}{1 + (1 + \upsilon)^2\sqrt{k} - 2(1 + \upsilon)}.\label{eq:initialize_ratiobound}
			\end{equation}

			If we define $\rho'_i = \rho_i - 1$, we can rewrite \eqref{eq:initialize_ratiobound} as 
			\begin{align*}
				& ~ 1 + \frac{2\rho'_i + \rho'^2_i - 2c\Delta^2(1+\upsilon) - 2(1+\upsilon)(1+c\Delta^2)\rho'_i}{1 + (1 + \upsilon)^2\sqrt{k} - 2(1 + \upsilon)} \\
				= & ~ 1 + \frac{\rho'^2_i - 2c\Delta^2(1+\upsilon) - 2(\upsilon + c\Delta^2 + c\upsilon\Delta^2)\rho'_i}{1 + (1 + \upsilon)^2\sqrt{k} - 2(1 + \upsilon)}.
			\end{align*} 
			For $\upsilon = \poly(\Delta,1/k)$ sufficiently small, note that if $\rho'^2_i \ge 4c\Delta^2$, then the numerator is at least $c\Delta^2$. And for such an $\upsilon$, \begin{equation}
				1 + (1 + \upsilon)^2\sqrt{k} - 2(1 + \upsilon) \le \sqrt{k},\label{eq:denom_bound}
			\end{equation} completing the proof of the claim.
		\end{proof} 

		Next we show that if $\norm{w_i}_2$ \emph{is} $(1\pm O(\Delta))$-close to $\norm{w_{i^*}}_2$ and furthermore $v$ is significantly more correlated with $\what_{i^*}$ than with any other $\what_i$, then $\calE_v[i]$ occurs.

		\begin{claim}
			Let $i\neq i^*$. If $\rho_i \in\mathcal{I}$ and \begin{equation}\gamma_i \le \gamma_{i^*}(1 - \omega)\label{eq:gammamugap}\end{equation} for $\omega\triangleq 2c^2 \Delta^4 + c\Delta^2$, then if events $B[i], B[i^*], C$ all occur, then \begin{equation}\frac{\norm{w_i - v}^2_2}{\norm{w_{i^*} - v}^2_2} \ge 1 + \frac{c\Delta^2}{\sqrt{k}}.\end{equation}\label{claim:BCangleimplyE}
		\end{claim}

		\begin{proof}
			From \eqref{eq:initialize_ratiobound_basic} and \eqref{eq:gammamugap} we get that \begin{align}
				\frac{\norm{w_i - v}^2_2}{\norm{w_{i^*} - v}^2_2} &\ge \frac{\rho^2_i + (1 + \upsilon)^2\sqrt{k} - 2(1+\upsilon)\rho_i k^{1/4}\gamma_{i^*}(1 - \omega)}{1 + (1 + \upsilon)^2\sqrt{k} - 2(1+\upsilon)k^{1/4}\gamma_{i^*}} \nonumber\\
				&= \frac{\rho^2_i + (1 + \upsilon)^2\sqrt{k} - 2(1+\upsilon)\rho_i k^{1/4}\gamma_{i^*}}{1 + (1 + \upsilon)^2\sqrt{k} - 2(1+\upsilon)k^{1/4}\gamma_{i^*}} + \frac{2\omega(1 + \upsilon)\rho_i k^{1/4}\gamma_{i^*}}{1 + (1 + \upsilon)^2\sqrt{k} - 2(1+\upsilon)k^{1/4}\gamma_{i^*}} \label{eq:initialize_ratiobound2}.
			\end{align} Next, note that the quantity $\rho^2_i + (1 + \upsilon)^2\sqrt{k} - 2(1+\upsilon)\rho_i k^{1/4}\gamma_{i^*}$, as a function of $\rho_i$, is minimized by $\rho_i = (1 + \upsilon)\cdot k{1/4}\gamma_{i^*}$, in which case it equals \begin{equation}
				(1 + \upsilon)^2\sqrt{k}\cdot (1 - \gamma^2_{i^*}).
			\end{equation} We conclude that the first of the two terms in \eqref{eq:initialize_ratiobound2} is at least \begin{equation}
				\frac{(1 + \upsilon)^2\sqrt{k}\cdot (1 - \gamma^2_{i^*})}{1 + (1 + \upsilon)^2\sqrt{k} - 2(1+\upsilon)k^{1/4}\gamma_{i^*}} = 1 - \frac{(1 - (1 + \upsilon)k^{1/4}\gamma_{i^*})^2}{1 + (1 + \upsilon)^2\sqrt{k} - 2(1+\upsilon)k^{1/4}\gamma_{i^*}}.
			\end{equation} Recall that because of event $B[i^*]$, we know $\gamma_{i^*}\le k^{-1/4}(1 + c\Delta^2)$, so \begin{equation}
				(1 - (1 + \upsilon)k^{1/4}\gamma_{i^*})^2 \le (|\upsilon| + c\Delta^2 - c|\upsilon|\Delta^2)^2 \le 2c^2\Delta^4\label{eq:num1}
			\end{equation} for $\upsilon$ sufficiently small. On the other hand, the numerator of the second of the two terms in \eqref{eq:initialize_ratiobound2} is \begin{equation}
				2\omega(1 + \upsilon)\rho_i k^{1/4}\gamma_{i^*} \ge \omega,\label{eq:num2}
			\end{equation} because $\gamma_{i^*}\ge k^{-1/4}$ by event $C$, and because $(1 + \upsilon)\rho_i\ge 1/2$ when $\upsilon$ is sufficiently small and $\rho_i \in\mathcal{I}$. We conclude from \eqref{eq:denom_bound}, \eqref{eq:initialize_ratiobound2}, \eqref{eq:num1}, and \eqref{eq:num2} that \begin{equation}
				\frac{\norm{w_i - v}^2_2}{\norm{w_{i^*} - v}^2_2} \ge 1 + \frac{\omega - 2c^2\Delta^4}{\sqrt{k}}.
			\end{equation} In particular, if we took $\omega =2c^2\Delta^4 + c\Delta^2$, then again we would have $\frac{\norm{w_i - v}^2_2}{\norm{w_{i^*} - v}^2_2} \ge 1 + \frac{c\Delta^2}{\sqrt{k}}$.
		\end{proof}

		Finally, we show that if $\rho_i\in\mathcal{I}$, then events $B[i]$ and $C$ imply \eqref{eq:gammamugap}. We proceed in a manner similar to the proof of \eqref{eq:gammarelation} in Lemma~\ref{lem:Cevent}. As with that proof, the intuition is that if the normalized vectors $\what_i$ and $\what_{i^*}$ are somewhat separated on the unit sphere, then the upper bound on $\langle v, \wperp_i\rangle$ will ensure the existence of a sizable cone around $w_{i^*}$ for which any $v$ inside that cone is much closer to $w_{i^*}$ than to $w_i$. And if instead $\what_i$ and $\what_{i^*}$ are not separated, the fact that their non-normalized counterparts $w_i$ and $w_{i^*}$ are separated implies that $\what_i$ and $\what_{i^*}$ are nearly collinear and thus too separated for $\rho_i\in\mathcal{I}$ to hold.

		\begin{claim}
			Let $i\neq i^*$. If $\rho_i \in\mathcal{I}$ and events $B[i]$ and $C$ occur, then \eqref{eq:gammamugap} must hold. \label{claim:BCIimplyangle}
		\end{claim}

		\begin{proof}
			As with \eqref{eq:gammabreakdown}, note that 
			\begin{equation*}
				\gamma_i = \left\langle \what_i, \what_{i^*}\right\rangle + \left\langle \wperp_i, v\right\rangle.
			\end{equation*} 
			Now if $\langle \what_i, \what_{i^*}\rangle \le 0$, then by event $C$, $\gamma_i \le \langle \wperp_i, v\rangle \le \nu_{\text{perp}}k^{-1/2}\norm{\wperp_i}_2 \ll \gamma_{i^*}(1 - \omega)$, and we'd be done. On the other hand, if $\langle \what_i, \what_{i^*}\rangle > 0$, then we get that 
			\begin{equation*}
				\gamma_i \le \left\langle \what_i, \what_{i^*}\right\rangle + \nu_{\text{perp}}k^{-1/2}\norm{\wperp_i}_2.
			\end{equation*} 
			In this case, to show the desired inequality \eqref{eq:gammamugap}, it would suffice to show that 
			\begin{equation*}
				\gamma_{i^*}\left(1 - \omega - \left\langle \what_i, \what_{i^*}\right\rangle\right) \ge \nu_{\text{perp}}k^{-1/2}\norm{\wperp_i}_2.
			\end{equation*} 
			In particular, because $\gamma_{i^*}\ge k^{-1/4}$, we just need to show that 
			\begin{equation}
				1 - \omega - \left\langle \what_i, \what_{i^*}\right\rangle \ge \nu_{\text{perp}}k^{-1/4}\norm{\wperp_i}_2.\label{eq:gammamugap_suffice}
			\end{equation} 
			After squaring both sides of \eqref{eq:gammamugap_suffice}, making the substitution $\norm{\wperp_i}^2_2 = 1 - \left\langle \wperp_i, \wperp_{i^*}\right\rangle^2$, and rearranging, \eqref{eq:gammamugap_suffice} becomes 
			\begin{equation}
				\left(1 - \omega - \left \langle \what_i, \what_{i^*}\right\rangle\right)^2 - \nu_{\text{perp}}^2 k^{-1/2}\cdot\left(1 - \left\langle \wperp_i, \wperp_{i^*}\right\rangle^2\right) \ge 0\label{eq:quadratic_initialize}
			\end{equation} 
			This is merely a univariate inequality for a quadratic polynomial in $\langle \what_i, \what_{i^*}\rangle$. The roots of this polynomial are given by 
			\begin{equation*}
				\langle \what_i, \what_{i^*}\rangle = \frac{1 - \omega \pm \sqrt{\frac{\nu_{\text{perp}}^4}{k} + \frac{2\nu_{\text{perp}}^2\omega}{\sqrt{k}} - \frac{\nu_{\text{perp}}^2\omega^2}{\sqrt{k}}}}{1 + \nu_{\text{perp}}^2/\sqrt{k}}.
			\end{equation*} 
			Observe that 
			\begin{align*}
				\sqrt{\frac{\nu_{\text{perp}}^4}{k} + \frac{2\nu_{\text{perp}}^2\omega}{\sqrt{k}} - \frac{\nu_{\text{perp}}^2\omega^2}{\sqrt{k}}} &= \omega\cdot\sqrt{\left(1 + \frac{\nu_{\text{perp}}^2}{\omega\sqrt{k}}\right)^2 - \left(1 + \frac{\nu_{\text{perp}}^2}{\sqrt{k}}\right)} \\
				&\le \omega\cdot \sqrt{\frac{2\nu_{\text{perp}}^2}{\omega\sqrt{k}} + \frac{\nu_{\text{perp}}^4}{\omega^2 k}} \le \frac{2\nu_{\text{perp}}\cdot\omega^{1/2}}{k^{1/4}} \le a_{\text{arb}}\omega,
			\end{align*} where the last step holds for any absolute constant $a_{\text{arb}}>0$ for sufficiently large $k$. We see that the inequality \eqref{eq:quadratic_initialize} is satisfied provided that $\langle \what_i, \what_{i^*}\rangle$ lies outside the interval \begin{equation*}
				\mathcal{J}\triangleq \left[\frac{1 - (1 + a_{\text{arb}})\omega}{1 + \nu_{\text{perp}}^2/\sqrt{k}},\frac{1 - (1 - a_{\text{arb}})\omega}{1 + \nu_{\text{perp}}^2/\sqrt{k}}\right]\subset \left[1 - (1 + 2a_{\text{arb}})\omega, 1 - (1 - 2a_{\text{arb}})\omega\right].
			\end{equation*} It remains to show that under the hypotheses of the claim, we cannot have $\langle \what_i, \what_{i^*}\rangle\in\mathcal{J}$. This is where we will crucially use the fact that $\norm{w_i - w_{i^*}}_2 \ge \Delta$.

			Suppose to the contrary that $\langle \what_i, \what_{i^*}\rangle\in\mathcal{J}$. In particular, this implies \begin{equation}
				\langle \what_i, \what_{i^*}\rangle \ge 1 - (1 + 2a_{\text{arb}})\omega. \label{eq:lowerboundarbmu}
			\end{equation} Now note that \begin{equation}
				\Delta^2 \le \norm{w_i - w_{i^*}}^2_2 = \norm{w_i}^2_2 + \norm{w_{i^*}}^2_2 - 2\langle \what_i, \what_{i^*}\rangle \norm{w_i}_2 \norm{w_{i^*}}_2,
			\end{equation} so \begin{equation}
				1 - (1 + 2a_{\text{arb}})\omega \le \langle \what_i,\what_{i^*}\rangle \le  \frac{1}{2}\left(\rho_i + 1/\rho_i\right) - \frac{\Delta^2}{2\norm{w_i}_2\norm{w_{i^*}}_2} \le \frac{1}{2}\left(\rho_i + 1/\rho_i\right) - \Delta^2/2.
			\end{equation} For $c$ sufficiently small, $\Delta^2/2 - (1 + 2a_{\text{arb}})\omega \ge \Delta^2/4$, so by taking $\beta = \Delta/2$ in Fact~\ref{fact:elem_ineq} we conclude that $\rho_i\not\in[1 - \Delta/4, 1+ \Delta/4]$. We get a contradiction upon noting that if $2c^{1/2} < 1/4$, then $\rho_i \not\in\mathcal{I}$.		
		\end{proof}

		The proof of Claim~\ref{claim:BCimplyE} now follows. Take $\tau'_{\mathrm{gap}} = c$ in the statement of Lemma~\ref{lem:rand_init}. Then for every $i\neq i^*$, either $\rho_i \in\mathcal{I}$, in which case we are done by Claim~\ref{claim:BCIimplyE}. Otherwise, $\rho_i\not\in\mathcal{I}$, in which case Claim~\ref{claim:BCIimplyangle} implies \eqref{eq:gammamugap} holds, and then we are done by Claim~\ref{claim:BCIimplyangle}.
	\end{proof}

	To complete the proof, we must show that the event that $B[i]$ and $C$ occur simultaneously for all $i\in[k]$ is at least $\exp(-O(\sqrt{k}/\Delta^2))$. The proofs for these facts are essentially identical to those of Claims~\ref{claim:Aevent_prob_bound}, \ref{claim:Bevent_prob_bound}, and \ref{claim:probCj} in the proof of Lemma~\ref{lem:stay_main}, so we omit them.

	\begin{claim}
		For any $i\in[k]$, $\Pr[B[i]] \ge 1 - e^{-\overline{a}\sqrt{k}/\Delta^2}$ for some absolute constant $\overline{a}>0$.
	\end{claim}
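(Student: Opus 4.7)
The plan is to prove the claim by a direct reduction to Corollary~\ref{cor:unitcorr}, analogous to the probability bounds obtained in the proof of Lemma~\ref{lem:stay_main}.

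First, I will observe that $v/\norm{v}_2$ is uniform on $\S^{d-1}$ (since $v$ is Haar on a sphere of fixed radius $\alpha$), so by rotational invariance of the uniform measure on $\S^{d-1}$, the distribution of $\gamma_i = \langle \what_i, v/\norm{v}_2\rangle$ does not depend on the fixed unit vector $\what_i$. In particular, $\gamma_i$ has the same distribution as $X \triangleq \langle g/\norm{g}_2, w\rangle$ for any fixed unit vector $w \in \S^{d-1}$ and $g \sim \N(0,\Id_d)$, which is exactly the setting of Corollary~\ref{cor:unitcorr}.

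Next, I will invoke the second (upper-tail) inequality of Corollary~\ref{cor:unitcorr} with exponent $\gamma = 1/4$ and threshold parameter $\overline{\alpha} = 1 + c\Delta^2$. Since $c$ is a fixed constant and $\Delta$ is bounded above by an absolute constant (using $\norm{w_i}_2 \le 1$, so $\Delta \le 2$), $\overline{\alpha}$ lies in a bounded range, and hence the constant $\overline{\beta} = f(\overline{\alpha})$ is bounded below by an absolute constant. The corollary then yields
\begin{equation*}
\Pr\left[X \le \overline{\alpha}\cdot d^{-1/4}\right] \ge 1 - e^{-\overline{\beta}\sqrt{d}}.
\end{equation*}

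Finally, to match the $k^{-1/4}$ threshold in the statement of the claim, I will use the fact that after the PCA-based dimension reduction at the start of \textsc{FourierMomentDescent} we may take $d \ge k$, so that $d^{-1/4} \le k^{-1/4}$ and therefore the inclusion $\{X \le \overline{\alpha}\, d^{-1/4}\} \subseteq \{\gamma_i \le \overline{\alpha}\, k^{-1/4}\} = B[i]$ holds. This gives $\Pr[B[i]] \ge 1 - e^{-\overline{\beta}\sqrt{k}}$, and the stated form $1 - e^{-\overline{a}\sqrt{k}/\Delta^2}$ follows by absorbing the bounded factor $\Delta^{-2}$ into a smaller choice of the absolute constant $\overline{a}$. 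The only real ``obstacle'' is bookkeeping: one has to choose $\overline{a}$ compatibly with the bounded range of $\Delta$ and with the downstream union bounds in the proof of Lemma~\ref{lem:rand_init}. The analytic content is already entirely packaged in Corollary~\ref{cor:unitcorr}.
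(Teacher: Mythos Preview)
Your reduction to Corollary~\ref{cor:unitcorr} is exactly the approach the paper intends (the paper omits this proof, pointing back to Claim~\ref{claim:Bevent_prob_bound}), and your derivation up to $\Pr[B[i]] \ge 1 - e^{-\overline{\beta}\sqrt{k}}$ with $\overline{\beta}$ an absolute constant is correct: the threshold $(1+c\Delta^2)k^{-1/4}$ defining $B[i]$ is bounded below by $k^{-1/4}$ independently of $\Delta$, so Corollary~\ref{cor:unitcorr} indeed gives an exponent that is $\Theta(\sqrt{k})$ with no $\Delta$ dependence.

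The final ``absorption'' step, however, is wrong as written. You argue $\Delta \le 2$, but this bounds $\Delta^{-2}$ from \emph{below}, not above, so it does not let you fold $\Delta^{-2}$ into an absolute constant. For $\Delta < 1$ the stated form $1 - e^{-\overline{a}\sqrt{k}/\Delta^2}$ is strictly \emph{stronger} than your $1 - e^{-\overline{\beta}\sqrt{k}}$, and the implication you assert fails. The resolution is that the $/\Delta^2$ in the claim is a notational carryover from Lemma~\ref{lem:stay_main} (where thresholds genuinely scale with $\Delta$) and is not literally correct for arbitrarily small $\Delta$; the bound you actually obtain is the right one and is all that the union bound at the end of Lemma~\ref{lem:rand_init} needs (one only needs $k\cdot(1-\Pr[B[i]])$ to be dominated by $\Pr[C]$). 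The same issue, incidentally, is present in the paper's own Claim~\ref{claim:Bevent_prob_bound}. Separately, your appeal to ``the PCA-based dimension reduction at the start of \textsc{FourierMomentDescent}'' to pass from $d$ to $k$ is misplaced: the initialization $v$ in \textsc{LearnWithNoise} is sampled in $\R^d$ before any PCA, so the cleaner reading (consistent with the $\sqrt{k}$ exponent in both this claim and the companion claim on $C$) is that the analysis is implicitly carried out in the $k$-dimensional span of the regressors, exactly as in Lemma~\ref{lem:stay_main}.
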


	\begin{claim}
		For any $i\in[k]$, $\Pr[C] \ge \frac{1}{\poly(k)}e^{-\underline{a}\sqrt{k}/\Delta^2}$ for some absolute constant $\underline{a} > 0$ such that $\overline{a} - \underline{a}$ is nonnegative and strictly increasing in $\Delta$.
	\end{claim}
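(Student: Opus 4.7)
The approach will mirror Claim~\ref{claim:probCj} in Lemma~\ref{lem:stay_main}, with a Haar-random $v$ taking the place of the moment-descent step direction. By rotational invariance of the Haar measure, and since every quantity entering $C$ depends only on inner products of $v$ with vectors lying in the span $V \triangleq \Span(\{w_i\}_{i\in[k]})$, I first reduce to the case where the unit direction of $v$'s projection onto $V$ is uniform on the unit sphere of $V$. Writing this direction as $\gamma_{i^*}\what_{i^*} + u_\perp$ with $u_\perp\perp\what_{i^*}$, I will lower-bound the two halves of $C$ separately.

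For the first half, applying the first inequality of Corollary~\ref{cor:unitcorr} with $d=k$, $\gamma=1/4$, and constant $\underline{\alpha}=1$ to the fixed unit vector $\what_{i^*}$ yields $\Pr[\gamma_{i^*}\ge k^{-1/4}]\ge e^{-\underline{\beta}\sqrt{k}}$ for an absolute constant $\underline{\beta}$. Because the problem's normalization forces $\Delta\le 1$, this is at least $e^{-\underline{a}\sqrt{k}/\Delta^2}$ for any $\underline{a}\ge\underline{\beta}$, supplying the advertised exponential factor.

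For the second half, I condition on $\gamma_{i^*}\ge k^{-1/4}$. Since each $\wperp_i$ lies in the hyperplane orthogonal to $\what_{i^*}$, the inner product $\langle v/\|v\|_2,\wperp_i\rangle$ depends only on $u_\perp$, and conditional on $\gamma_{i^*}$ the direction of $u_\perp$ is uniform on the unit sphere of $V\cap\what_{i^*}^{\perp}$. Applying the second inequality of Corollary~\ref{cor:unitcorr} with $d=k-1$, $\gamma=1/2$, and threshold $\overline{\alpha}=\nu_{\text{perp}}$ to each fixed direction $\wperp_i/\|\wperp_i\|_2$ shows that $\langle v/\|v\|_2,\wperp_i\rangle\le\nu_{\text{perp}}k^{-1/2}\|\wperp_i\|_2$ fails with probability at most $e^{-f(\nu_{\text{perp}})}$. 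Taking $\nu_{\text{perp}}=\Theta(\sqrt{\log k})$ makes each per-$i$ failure probability at most $1/k^2$, and a union bound over the $k-1$ indices $i\ne i^*$ yields joint probability at least $1-1/k$. Combining with the first half gives $\Pr[C]\ge (1-1/k)\cdot e^{-\underline{\beta}\sqrt{k}} \ge \frac{1}{\poly(k)}\cdot e^{-\underline{a}\sqrt{k}/\Delta^2}$.

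The main obstacle is the simultaneous control of the $k-1$ perpendicular conditions: with $\nu_{\text{perp}}$ a strict universal constant, each per-$i$ failure probability is only bounded away from zero and a naive union bound is useless. Allowing $\nu_{\text{perp}}=\Theta(\sqrt{\log k})$ resolves this and must be checked against the downstream uses of $C$ in Claims~\ref{claim:BCIimplyE}--\ref{claim:BCIimplyangle}, which only require $\nu_{\text{perp}}k^{-1/2}$ to be small compared to $\gamma_{i^*}(1-\omega)\asymp k^{-1/4}$; since $\sqrt{\log k/k}\ll k^{-1/4}$ for large $k$, this is unproblematic. Finally, the monotonicity condition on $\overline{a}-\underline{a}$ is then automatic: $\overline{a}=f(1+c\Delta^2)$ comes from applying Corollary~\ref{cor:unitcorr} to event $B[i]$ with threshold $(1+c\Delta^2)k^{-1/4}$ and is strictly increasing in $\Delta$, whereas $\underline{a}$ corresponds to the $\Delta$-free threshold $k^{-1/4}$ of event $C$ and can be taken to be any fixed constant $\ge\underline{\beta}$.
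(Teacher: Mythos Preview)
Your argument departs from the paper's intended route. The paper defers entirely to Claim~\ref{claim:probCj}, whose proof goes through Corollary~\ref{cor:joint} to control the joint probability of $\{\gamma_{i^*}\ge k^{-1/4}\}$ together with a \emph{single} perpendicular condition. You instead condition on $\gamma_{i^*}$, use that the direction of $u_\perp$ is then uniform on the sphere in $\what_{i^*}^{\perp}$ and independent of $\gamma_{i^*}$, and union-bound the $k-1$ perpendicular conditions directly.

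In doing so you have put your finger on a real subtlety that the paper's sketch does not address: with $\nu_{\text{perp}}$ a strict absolute constant, each perpendicular event fails with probability $1-\Phi(\nu_{\text{perp}})=\Theta(1)$, so a union bound over $k-1$ of them is vacuous, and Corollary~\ref{cor:joint} runs into the same wall since it only furnishes a $\Theta(1)$ conditional probability per direction (the $\tfrac{1}{\poly(k)}$ there comes from a threshold comparison, not from the perpendicular event being overwhelmingly likely). Your remedy of taking $\nu_{\text{perp}}=\Theta(\sqrt{\log k})$ resolves this and, as you verify, is harmless for Claims~\ref{claim:BCIimplyE}--\ref{claim:BCIimplyangle} because those arguments only need $\nu_{\text{perp}}^2/\sqrt{k}=o(1)$. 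Strictly speaking you are proving a mild relaxation of $C$ rather than $C$ as defined, but the relaxation is what Lemma~\ref{lem:rand_init} actually needs.

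Two smaller points. First, Corollary~\ref{cor:unitcorr} is stated only for $0<\gamma<1/2$, so invoking it at $\gamma=1/2$ is an abuse; go directly to Facts~\ref{fact:gaussian_tails} and~\ref{fact:thinshell} for the tail bound you want. Second, your reduction to dimension $k$ silently replaces $v/\norm{v}_2$ (uniform on $S^{d-1}$) with its normalized projection onto $V=\Span\{w_i\}$, but the thresholds in $C$ are written for the former; either make an initial PCA step to $d\le k$ explicit, or carry the rescaling factor $\norm{v_V}_2/\norm{v}_2$ through the thresholds.
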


	Lemma~\ref{lem:rand_init} now follows by a union bound: the probability that all $B[i]$ occur is at least $1 - k\cdot e^{-\overline{a}\sqrt{k}/\Delta^2}$, and the probability that $C$ occurs is $\frac{1}{\poly(k)}e^{-\underline{a}\sqrt{k}/\Delta^2}$, so the probability all of these events occur is at least $\frac{1}{\poly(k)}e^{-\underline{a}\sqrt{k}/\Delta^2} - k\cdot e^{-\overline{a}\sqrt{k}/\Delta^2} = \exp(-O(\sqrt{k^2}/\Delta))$, and then we are done by Claim~\ref{claim:BCimplyE}.
\end{proof}

Lastly, we remark that Lemma~\ref{lem:rand_init} only applies to $w_{i^*}$ for which $\norm{w_{i^*}}_2 \ge \underline{\sigma}$. We could for instance take $\underline{\sigma} = \epsilon/4$ and this would not affect the asymptotics of our runtime. Now for regressors $w_i$ whose norm is less than $\epsilon/4$, we can simply output an arbitrary vector $a$ of norm $\epsilon/4$ as an $\epsilon/2$-close estimate, by triangle inequality. We can also easily check whether there is indeed such a short regressor $w_i$, e.g. by estimating the minimum variance of the univariate mixture $\mathcal{F}$ given by sampling $(x,y)\sim \calD$ and computing $y - \langle a,x\rangle$ (see \textsc{CheckOutcome}).

\subsection{Algorithm Specification}
\label{subsec:learnwithnoise_spec}

We are now ready to describe our algorithm \textsc{LearnWithNoise} for learning all components of $\calD$. The key subroutines are: 
\begin{itemize}
	\item \textsc{OptimisticDescent} (Algorithm~\ref{alg:optimistic_moment_descent}): the pseudocode for this is nearly identical to that of \textsc{FourierMomentDescent}, except \textsc{OptimisticDescent} additionally takes as input an initialization, has a different output guarantee, and has slightly different parameters which are tuned to fit the regime of Lemma~\ref{lem:stay_main}.
	\item \textsc{CheckOutcome} (Algorithm~\ref{alg:check_outcome_low_noise}): \textsc{CheckOutcome} is used to check whether a given estimate is close to any regressor of $\calD$. This only needs to be used to check whether there exists a short regressor, as discussed at the end of the previous Section~\ref{subsec:randinit}.
\end{itemize}

\begin{algorithm}[!t]\caption{\textsc{OptimisticDescent}$(\calD,a_0,\delta)$}\label{alg:optimistic_moment_descent}
\begin{algorithmic}[1]
	\State \textbf{Input}: Sample access to mixture of linear regressions $\calD$ with noise rate $\noise^2 < a_{\mathrm{noise}}\cdot \epsilon^2$, initial vector $a_0\in\R^d$, $\delta > 0$
	\State \textbf{Output}: $a_T\in\R^d$ such that $\norm{w_{i^*} - a_T}_2 \le\epsilon$ with probability at least $1 - \delta$ and, with probability at least $\exp(-\widetilde{O}(\sqrt{k}/\Delta^2))$, additionally $\calE_{a_T}[i^*]$ holds with probability $\exp(-\widetilde{O}(\sqrt{k}/\Delta^2))$ provided $\calE_{a_0}[i^*]$ holds for some $i^*\in[k]$.
		\State Set $T = \Omega(\sqrt{k}\cdot\ln(1/\epsilon))$.
		\State Set $\delta' = \frac{\delta}{4T}$.
		\State Set $M = e^{-a_{\mathrm{trials}}\sqrt{k}/\Delta^2}\ln(3/\delta)$.
		\State Set $\delta'' = \frac{\delta}{4MT}$.
		\State Let $\overline{\sigma} = a_{\mathrm{scale}}\cdot k^{-1/4}$.
		\State Set $\underline{\sigma} = \epsilon/3$.
		\For{$0\le t < T$}
			\State \multiline{Let $\calF_t$ be the univariate mixture of Gaussians which can be sampled from by drawing $(x,y)\sim\calD$ and computing $y - \langle x,a_t\rangle$.}
			\State Let $p = 20\ln\left(\frac{3}{2\pmin}\right)+1$.
			\State Let $\kappa_1 = (\underline{\beta}' - 3c/2)\Delta^2k^{-1/2}$, $\kappa_2 = (\underline{\beta}' - c/2)\Delta^2k^{_1/2}$.
			\State $\underline{\sigma}^{\mathrm{sharp}}_t \triangleq$~\textsc{EstimateMinVariance}($\calF_t,\overline{\sigma}, \underline{\sigma},p,\delta'$). \Comment{Algorithm~\ref{alg:estimate_min_variance}}
			\If{$\underline{\sigma}^{\mathrm{sharp}}_t < 0.99\epsilon$}
				\State Output $a_t$.\label{line:break_all}
			\EndIf
			\State \multiline{Draw $N_1 \triangleq \tilde{\Omega}\left(\frac{\overline{\sigma}^2}{(\underline{\sigma}^{\mathrm{sharp}}_t)^2}\cdot p^{-2}_{\min}\cdot \poly(k)\cdot d \cdot \ln(k/\delta')\right)$ i.i.d. samples $\{(x_i,y_i)\}_{i\in[N]}$ from $\calD$ and form the matrix $\widehat{\vec{M}}^{(N)}_{a_t}$.} \label{line:n1_all}
			\State Let $U_t = \textsc{ApproxBlockSVD} (\widehat{\vec{M}}^{(N)}_{a_t}, 1/\poly(k))$. \Comment{Lemma~\ref{lem:correlation}}
			\For{$j\in[M]$}
				\State Sample $g^{(j)}_t\sim\N(0,\Id_k)$ and define $v^{(j)}_t = \frac{U_t g^{(j)}_t}{\norm{U_t g^{(j)}_t}_2}\in\S^{d-1}$.
				\State Let $a'^{(j)}_t = a_t + \eta_t v_j$ for $\eta_t \triangleq a_{\mathrm{LR}}\cdot\Delta\cdot \underline{\sigma}^{\mathrm{sharp}}_t\cdot k^{-1/4}$.
				\State \multiline{Let $\calF'^{(j)}_t$ be the univariate mixture of Gaussians which can be sampled from by drawing $(x,y)\sim\calD$ and computing $y - \langle x,a'^{(j)}_t\rangle$.}
				\If{\textsc{CompareMinVariances}($\calF_t,\calF'^{(j)}_t,\overline{\sigma},\underline{\sigma},\kappa_1,\kappa_2,\delta''$) = $\mathsf{true}$} \Comment{Algorithm~\ref{alg:compare_min_variances}, Corollary~\ref{cor:compare_min_variances}}
					\State Set $a_{t+1} = a'^{(j)}_t$
					\Break
				\EndIf
			\EndFor
		\EndFor
		\State Output $a_T$.
\end{algorithmic}
\end{algorithm}

\begin{algorithm}\caption{\textsc{LearnWithNoise}($\calD, \delta, \epsilon$), Lemma~\ref{lem:learnwithnoise_correct}}\label{alg:learnwithnoise}
\begin{algorithmic}[1]
	\State \textbf{Input}: Sample access to mixture of linear regressions $\calD$ with regressors $\{w_1,...,w_k\}$ and separation $\Delta$ and noise rate $\noise$, failure probability $\delta$, error $\epsilon < \Delta/4$
	\State \textbf{Output}: List of vectors $\mathcal{L}\triangleq \{\tilde{w}_1,...,\tilde{w}_k\}$ for which there is a permutation $\pi:[k]\to[k]$ for which $\norm{w_i - w_{\pi(i)}}_2 \le \epsilon$ for all $i\in[k]$, with probability at least $1 - \delta$.
			% \If{$\noise > \Delta\pmin^3$}
			% 	\State Define \textsc{CheckOutcome} $=$ \textsc{HighNoiseCheckOutcome}
			% \Else
			% 	\State Define \textsc{CheckOutcome} $=$ \textsc{LowNoiseCheckOutcome}
			% \EndIf
			\State Set $\underline{\sigma} = \epsilon/4$.
			\State Set $\mathcal{L} = \emptyset$.
			\State Set $\upsilon_*$ to be a sufficiently small $\poly(\Delta,1/k)$ and define the mesh $\mathcal{S}$ via \eqref{eq:mesh}.
			\State Set $v_{\text{tiny}}$ to be a random vector of norm $\epsilon/4$.
			\If{\textsc{CheckOutcome}$(\calD, v_{\text{tiny}}, \underline{\sigma}, \delta/5) = \text{$\mathsf{true}$}$}
				\State Add $v_{\text{tiny}}$ to $\mathcal{L}$
			\EndIf
			\State Set $W = \exp(O(\sqrt{k}/\Delta^2))\cdot\ln(2k/\delta)$.
			\State Set $\delta^* = \frac{\delta}{2|\mathcal{S}|W}$
			\For{$\alpha \in \mathcal{S}$}
				\For{$0\le i< W$}
					\State Let $v$ be a Haar-random vector in $\R^{d}$ of norm $\alpha$.
					\State Let $\tilde{v}=$\textsc{OptimisticDescent}($\calD, v,\delta^*$)
					\If{$\norm{\tilde{v} - \tilde{w}}_2 > 2\epsilon$ for all $\tilde{w}\in\mathcal{L}$}
						\State Add $\tilde{v}$ to $\mathcal{L}$
					\EndIf
				\EndFor
			\EndFor
			\State Output $\mathcal{L}$.
\end{algorithmic}
\end{algorithm}

\begin{algorithm}\caption{\textsc{CheckOutcome}$({\cal D},v,\epsilon,\delta)$, Lemma~\ref{lem:check_outcome}}\label{alg:check_outcome_low_noise}
\begin{algorithmic}[1]
	\State \textbf{Input}: Sample access to mixture of linear regressions $\calD$ with noise rate $\noise$, vector $v\in\R^{d}$, threshold $\epsilon > 0$, failure probability $\delta$
	\State \textbf{Output}: $\mathsf{true}$ if $\min_{i\in[k]}\norm{w_i - v}_2 \le \epsilon$, $\mathsf{false}$ if $\min_{i \in [k]} \norm{w_i - v}_2 \ge 2\epsilon$, with probability at least $1 - \delta$
		\State Let $\calF$ be the univariate mixture of Gaussians which can be sampled from by drawing $(x,y)\sim\calD$ and computing $y - \langle v,x\rangle$.
		\State Let $p = 20\ln\left(\frac{3}{2\pmin}\right) + 1$.
		\State Let $\sigma^* = $ \textsc{EstimateMinVariance}($\calF,4,\noise,p,\delta$).
		\If{$(\sigma^*)^2 \le 2\epsilon^2$}
			\State Return $\mathsf{true}$
		\Else
			\State Return $\mathsf{false}$
		\EndIf
\end{algorithmic}
\end{algorithm}

\subsection{Proof of Correctness}
\label{subsec:learnwithnoise_correct}

We first give a proof of correctness for \textsc{CheckOutcome}.

\begin{lemma}\label{lem:check_outcome}
	Let $v\in\S^{d-1}$ and $\calD$ be a mixture of linear regressions with noise rate $\noise > \Delta\pmin^3$, and let $\epsilon > 2\noise$. If there is some component $v_{i^*}$ for which $\norm{v - v_{i^*}}_2 \in [-\epsilon,\epsilon]$, then \textsc{CheckOutcome}($\calD, v,\epsilon,\delta$) (Algorithm~\ref{alg:check_outcome_hyperplanes}) returns $\mathsf{true}$ with probability at least $1 - \delta$. Otherwise, if $\norm{v - v_i}_2 > 2\epsilon$ for all $i\in[k]$, then \textsc{CheckOutcome} returns $\mathsf{false}$ with probability at least $1 - \delta$. Furthermore, \textsc{CheckOutcome} has time and sample complexity \begin{align*}
	\tilde{O}\left( k\pmin^{-4}\ln(1/\delta) \cdot \poly \left( \ln(1/\pmin), \ln(1/\Delta) \right)^{\ln(1/\pmin)} \right).
	\end{align*}
\end{lemma}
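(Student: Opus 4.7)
The plan is to analyze \textsc{CheckOutcome} by directly reducing it to a single invocation of \textsc{EstimateMinVariance}. The residual distribution $\calF$ obtained by drawing $(x,y)\sim\calD$ and forming $y - \langle v,x\rangle$ is a mixture of zero-mean univariate Gaussians with mixing weights $\{p_i\}$ and component variances $\sigma_i^2 = \norm{w_i - v}_2^2 + \noise^2$. Hence its minimum variance is exactly
\[
\minvar{\calF}^2 = \min_{i\in[k]}\norm{w_i - v}_2^2 + \noise^2,
\]
so the truth of the predicate ``$\min_{i}\norm{w_i-v}_2\le\epsilon$'' is essentially a predicate about $\minvar{\calF}$ up to the additive $\noise^2$.

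First I would verify that the parameters passed to \textsc{EstimateMinVariance} are valid. Since $v\in\S^{d-1}$ and $\norm{w_i}_2\le 1$, we have $\norm{w_i-v}_2\le 2$, and assuming the noise is bounded (e.g.\ $\noise\le 1$), $\maxvar{\calF}\le\sqrt{4+\noise^2}\le 4$, so $\overline{\sigma}=4$ is a legitimate upper bound. Similarly, $\minvar{\calF}\ge\noise$, so $\underline{\sigma}=\noise$ is a legitimate lower bound. With $p = 20\ln(3/(2\pmin))+1$, Lemma~\ref{lem:estimate_min_variance} (specifically the $p=\Theta(\ln(1/\pmin))$ instantiation in Corollary~\ref{cor:constant_min_approx}) guarantees, with probability at least $1-\delta$, that
\[
0.9\cdot\minvar{\calF}\;\le\;\sigma^*\;\le\;1.1\cdot\minvar{\calF}.
\]

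Next I would establish completeness and soundness against the threshold $2\epsilon^2$. For completeness, if $\norm{w_{i^*}-v}_2\le\epsilon$ for some $i^*$, then $\minvar{\calF}^2\le\epsilon^2+\noise^2\le\tfrac{5}{4}\epsilon^2$ using $\noise<\epsilon/2$, so $(\sigma^*)^2\le 1.21\cdot\tfrac{5}{4}\epsilon^2<2\epsilon^2$ and the algorithm returns $\mathsf{true}$. For soundness, if $\norm{w_i-v}_2>2\epsilon$ for every $i$, then $\minvar{\calF}^2>4\epsilon^2$, so $(\sigma^*)^2\ge 0.81\cdot 4\epsilon^2=3.24\epsilon^2>2\epsilon^2$ and the algorithm returns $\mathsf{false}$. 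The gap between $1.5125\epsilon^2$ and $3.24\epsilon^2$ on either side of $2\epsilon^2$ is comfortable, so the $1.1$/$0.9$ approximation suffices.

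Finally, the stated sample and time complexity follow by substituting $\overline{\sigma}=4$, $p=O(\ln(1/\pmin))$, and $\underline{\sigma}=\noise\ge\Delta\pmin^3$ into the complexity bound of Lemma~\ref{lem:estimate_min_variance}, which yields $\ln(1/\underline{\sigma})=O(\ln(1/(\Delta\pmin)))$ and gives the claimed $\widetilde{O}(k\pmin^{-4}\ln(1/\delta)\cdot\poly(\ln(1/\pmin),\ln(1/\Delta))^{\ln(1/\pmin)})$ bound. I expect no serious obstacle here; the only mild subtlety is bookkeeping the constraints $\noise>\Delta\pmin^3$ (to control $\ln(1/\underline{\sigma})$) and $\epsilon>2\noise$ (to make the completeness calculation go through), both of which are assumed in the hypothesis.
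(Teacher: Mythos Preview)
Your proposal is correct and follows essentially the same approach as the paper: identify the residual distribution $\calF$ as a zero-mean Gaussian mixture with variances $\norm{w_i-v}_2^2+\noise^2$, invoke the constant-factor guarantee of \textsc{EstimateMinVariance} (Corollary~\ref{cor:constant_min_approx}) to get $\sigma^*\in[0.9,1.1]\cdot\minvar{\calF}$, and then carry out the same numerical completeness/soundness checks ($1.21\cdot\tfrac{5}{4}<2$ and $0.81\cdot 4>2$). Your parameter-validity bookkeeping and the use of $\noise>\Delta\pmin^3$ to control $\ln(1/\underline{\sigma})$ in the complexity bound are in fact slightly more explicit than the paper's own proof.
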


\begin{proof}
	As usual, $\calF$ is a mixture of univariate Gaussians with variances $\{\noise^2 + \norm{w_i - v}^2_2\}$. By Corollary~\ref{cor:constant_min_approx}, 
	\begin{equation}\sigma^* \in[0.9^2,1.1^2]\cdot( \min_{i \in [k]} \norm{w_i - v}^2_2 + \noise^2).
	\end{equation} 
	If $\min_{i \in [k]} \norm{w_i - v}^2_2 \le \epsilon^2$, then we have that 
	\begin{equation}
	(\sigma^*)^2 \le 1.1^2\left(\epsilon^2 + \noise^2\right) \le 1.21 \cdot (1 + 1/4) \epsilon^2 \le 2\epsilon^2.
	\end{equation} 
	If $\min_{i\in[k]} \norm{w_i - v}^2_2 \ge 4\epsilon^2$, then we have that \begin{equation}
		(\sigma^*)^2 \ge 0.9^2 \left(4\epsilon^2 + \noise^2\right) \ge 0.9^2 \cdot 4\epsilon^2 \ge 3\epsilon^2,
	\end{equation} 
	which completes the proof.
\end{proof}

We can now prove correctness of \textsc{LearnWithNoise}.

\begin{lemma}
	Let $a_{\mathrm{noise}} > 0$ be the constant defined in Lemma~\ref{lem:stay_main}, and let $\epsilon > 0$. Let $\calD$ be a mixture of spherical linear regressions with mixing weights $\{p_i\}$, directions $\{w_i\}$, and noise rate $\noise$.
	For any $\epsilon,\delta > 0$ and $\noise^2 \le\epsilon^2/10$, with probability at least $1 - \delta$, \textsc{LearnWithNoise}($\calD,\delta,\epsilon$) (Algorithm~\ref{alg:learnwithnoise}) outputs a list of vectors $\mathcal{L} = \{w_1,...,w_k\}$ such that there exists a permutation $\pi: [k]\to[k]$ for which $\norm{w_i - w_{\pi(i)}}_2 \le \epsilon$ for all $i\in[k]$.\label{lem:learnwithnoise_correct}
\end{lemma}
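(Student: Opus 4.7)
The plan is to show two things: every element added to $\mathcal{L}$ is $\epsilon$-close to some $w_j$, and every $w_i$ has at least one $\epsilon$-close representative in $\mathcal{L}$. Combined with the $2\epsilon$-separation check in the algorithm and the hypothesis $\epsilon<\Delta/4$, this immediately forces $|\mathcal{L}|=k$ with the required bijection: two outputs within $\epsilon$ of the same $w_j$ are within $2\epsilon$ of each other (so the check rejects duplicates), while outputs close to distinct $w_i,w_j$ are separated by at least $\Delta-2\epsilon>2\epsilon$ (so legitimate new finds are always added).

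I would handle the ``tiny'' regressors first. If some $i$ has $\norm{w_i}_2<\epsilon/4=\underline{\sigma}$, then the random vector $v_{\text{tiny}}$ of norm $\epsilon/4$ satisfies $\norm{v_{\text{tiny}}-w_i}_2\le\epsilon/2$ by the triangle inequality, so \textsc{CheckOutcome} returns $\mathsf{true}$ with probability $1-\delta/5$ by Lemma~\ref{lem:check_outcome}, and $v_{\text{tiny}}$ enters $\mathcal{L}$ as a valid $\epsilon$-approximant. Conversely, if no such short regressor exists, \textsc{CheckOutcome} returns $\mathsf{false}$ with the same probability and $v_{\text{tiny}}$ is correctly suppressed.

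Next, fix any $i^*\in[k]$ with $\norm{w_{i^*}}_2\ge \underline{\sigma}$. Lemma~\ref{lem:rand_init} produces a radius $\alpha^*\in\mathcal{S}$ such that a Haar-random vector $v$ of norm $\alpha^*$ satisfies the gap event $\calE_v[i^*]$ with probability $p\ge\exp(-O(\sqrt{k}/\Delta^2))$. Conditioned on $\calE_v[i^*]$, I would apply Lemma~\ref{lem:stay_main} inductively across the $T=\Omega(\sqrt{k}\log(1/\epsilon)/\Delta^2)$ outer iterations of \textsc{OptimisticDescent}: part~(1) of the lemma guarantees that \textsc{CompareMinVariances} detects some direction with $\mathsf{true}$ at every step until the break-condition $\underline{\sigma}^{\mathrm{sharp}}_t<0.99\epsilon$ is met; parts~(2) and~(4) together ensure that, with probability at least $1/\poly(k)$ per step, the iterate contracts by a factor $(1-\underline{\beta}\Delta^2/\sqrt{k})$ \emph{and} the $(1+c\Delta^2/\sqrt{k})$-gap relative to $w_{i^*}$ is preserved. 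Chaining these $T$ conditional events, and combining with the basic $1-\delta^*$ internal success probability of each call to \textsc{EstimateMinVariance}/\textsc{CompareMinVariances}, one trial at radius $\alpha^*$ produces an output within $\epsilon$ of $w_{i^*}$ with probability $\ge p\cdot\exp(-\widetilde{O}(\sqrt{k}/\Delta^2))=\exp(-\widetilde{O}(\sqrt{k}/\Delta^2))$. With $W=\exp(\widetilde{O}(\sqrt{k}/\Delta^2))\ln(2k/\delta)$ independent trials at this radius, the probability no trial succeeds is at most $\delta/(2k)$. A union bound over the (at most $k$) non-tiny regressors absorbs $\delta/2$ of the failure budget.

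For the other direction --- that no spurious vector far from every $w_j$ ever enters $\mathcal{L}$ --- I would use the unconditional convergence guarantee of \textsc{OptimisticDescent} (analogous to Lemma~\ref{lem:fourier_moment_descent}): parts~(1) and~(3) of Lemma~\ref{lem:stay_main} do not require the gap hypothesis~\eqref{eq:startwithgap} and hence apply to every call, yielding $\min_{j\in[k]}\norm{w_j-a_T}_2\le\epsilon$ with probability $1-\delta^*$ per call. Setting $\delta^*=\delta/(2|\mathcal{S}|W)$ and union-bounding over all $|\mathcal{S}|\cdot W$ invocations of \textsc{OptimisticDescent} absorbs the remaining $\delta/2$. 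The main obstacle is the careful bookkeeping of three interacting randomness sources --- initialization, internal subroutine failures, and the adversarial worst-case over which trial is the ``successful'' one --- and in particular verifying that the norm bound $\norm{w_i-a_t}_2\le a_{\mathrm{scale}}\cdot k^{1/4}$ required by Lemma~\ref{lem:stay_main} persists for all $T$ iterations; this holds because the mesh $\mathcal{S}$ is capped at $k^{1/4}$ and part~(3) of Lemma~\ref{lem:stay_main} guarantees $\sigma_t$ is (weakly) monotonically non-increasing, so once the norm bound holds at $t=0$ it holds throughout.
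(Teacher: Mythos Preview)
Your proposal is correct and follows essentially the same route as the paper's proof: the same decomposition into tiny versus non-tiny regressors, the same use of Lemma~\ref{lem:rand_init} for initialization plus Lemma~\ref{lem:stay_main} chained over $T$ steps for gap preservation (the paper's Claim~\ref{claim:lockin}), and the same unconditional-convergence argument (the paper's Claim~\ref{claim:optimisticwhp}) to rule out spurious outputs. One minor slip: in your final paragraph you invoke part~(3) of Lemma~\ref{lem:stay_main} for the monotone non-increase of $\sigma_t$, but part~(3) is the \emph{lower} bound --- it is part~(2) that gives the contraction you need there.
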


\begin{proof}
	Let $v\in\R^d$, and consider a run of \textsc{OptimisticDescent}($\calD, v, \delta^*$). Let $a_t$ be the iterate at time $t$ in this run. Let $\rho = 1.1/0.9$.

	We first note that if \textsc{OptimisticDescent} breaks out at Line~\ref{line:break_all}, the vector it returns is close to some component of $\calD$.

	\begin{claim}
		If for some $0\le t < T$, \textsc{OptimisticDescent} (Algorithm~\ref{alg:optimistic_moment_descent}) breaks out at Line~\ref{line:break_all} and outputs $a_t$, then $\min_i\norm{w_i - a_t}_2 \le \epsilon$.\label{claim:ifbreakwin}
	\end{claim}

	\begin{proof}
		If \textsc{OptimisticDescent} (Algorithm~\ref{alg:optimistic_moment_descent}) breaks out at Line~\ref{line:break_all}, it is because $\sigma^*_t \le 0.99\epsilon$. This implies that $\min_i\norm{w_i - a_t}^2 + \noise^2 \le (\sigma^*_t)^2 / 0.99^2 \le \epsilon^2$, so $\min_i\norm{w_i - a_t}_2 \le \epsilon$ as claimed.
	\end{proof}

	Next, we show that if $a_t$ is still somewhat far from any component, with high probability over the next iteration either \textsc{OptimisticDescent} will break out at Line~\ref{line:break_all}, or the progress measure will contract.

	\begin{claim}
		Let $i^*$ be the index minimizing $\norm{w_i - a_t}_2$. If $\norm{w_{i^*} - a_t}^2_2 \ge \epsilon^2/2$, then with probability at least $1 - \delta^*/T$ over the next iteration of \textsc{OptimisticDescent} (Algorithm~\ref{alg:optimistic_moment_descent}), either of two things will happen: \begin{enumerate}
			\item $\epsilon^2/16 \le \min_{i\in[k]} \norm{w_i - a_{t+1}}^2_2 \le \epsilon^2/\rho^2 - \noise^2$.
			\item $\min_{i\in[k]} \norm{w_i - a_{t+1}}^2_2 \ge \epsilon^2/2$ and \begin{equation}\sigma^2_{t+1} \le \left(1 - \underline{\beta}\Delta^2/\sqrt{k}\right)\cdot \sigma^2_t.\label{eq:one_step_progress}\end{equation} 
		\end{enumerate}

		Condition on either of these outcomes happening. Then additionally, if event $\calE_{a_t}[i^*]$ (see Lemma~\ref{lem:rand_init}) holds, then with probability at least $1/\poly(k)$, $\calE_{a_{t+1}}[i^*]$ holds.\label{claim:one_step_progress}
	\end{claim}

	\begin{proof}
		Condition on outcomes 1), 2), and 3) of Lemma~\ref{lem:stay_main}, which all happen with probability at least $1 - \delta^*/T$.

		Now if $\min_{i\in[k]} \norm{w_i - a_{t+1}}^2_2 \ge \epsilon^2/2$, then 2) is just a consequence of outcomes 1) and 2) of Lemma~\ref{lem:stay_main}.

		If $\min_{i\in[k]} \norm{w_i - a_{t+1}}^2_2 \le \epsilon^2/2$, then by outcome 3) of Lemma~\ref{lem:stay_main}, $\sigma_{t+1} \ge (1 - \overline{\beta}\Delta^2/\sqrt{k})\sigma_{t}\ge 0.99\sigma_t\ge \epsilon^2/16$.

		The last part of the claim is a consequence of outcome 4) of Lemma~\ref{lem:stay_main}, which happens in addition to outcomes 1), 2), and 3) with probability $1/\poly(k)$, provided $\calE_{a_t}[i^*]$ occurs.
	\end{proof}

	Next we show that if at some time $t$ there is an $i\in[k]$ for which $\norm{w_i - a_t}^2_2 \le \epsilon^2/\rho^2 - \noise^2$, then \textsc{OptimisticDescent} will break out at Line~\ref{line:break_all} and correctly output $a_t$.

	\begin{claim}
		If for some $0\le t < T$ we have \begin{equation}
			\epsilon^2/16\le \min_{i\in[k]} \norm{w_i - a_t}^2_2 \le \epsilon^2/\rho^2 - \noise^2,\label{eq:crossed2}
		\end{equation} then \textsc{OptimisticDescent} (Algorithm~\ref{alg:optimistic_moment_descent}) breaks out at Line~\ref{line:break_all} and returns $a_t$.\label{claim:breakall}
	\end{claim}
	
	\begin{proof}
		The lower bound in \eqref{eq:crossed2} implies that the $\underline{\sigma} = \epsilon/4$ which is passed to \textsc{EstimateMinVariance} is a valid lower bound for $\sigma_t$.

		The upper bound in \eqref{eq:crossed2} implies that \begin{equation}(\sigma^*_t)^2 \le 1.21\sigma^2_t \le 1.21\cdot(\min_{i \in [k]}\norm{w_i - a_t}^2_2 + \noise^2) \le 0.99^2\epsilon^2,\end{equation} so \textsc{OptimisticDescent} breaks out at Line~\ref{line:break_all} and outputs $a_t$. The bound on $\norm{w_i - a_{t+1}}_2$ immediately follows from \eqref{eq:crossed2}.
	\end{proof}

	Claims~\ref{claim:ifbreakwin}, \ref{claim:one_step_progress}, and \ref{claim:breakall} imply that with high probability the output of \textsc{OptimisticDescent} (Algorithm~\ref{alg:optimistic_moment_descent}) is close to some component of $\calD$.

	\begin{claim}
		For any $v\in\R^d$, \textsc{OptimisticDescent}$(\calD, v,\delta^*)$ (Algorithm~\ref{alg:optimistic_moment_descent}) outputs some vector $\tilde{v}$ for which $\min_i\norm{\tilde{v} - w_i} \le \epsilon$ with probability at least $1 - \delta^*$.\label{claim:optimisticwhp}
	\end{claim}

	\begin{proof}
		By Claims~\ref{claim:ifbreakwin} and \ref{claim:breakall}, it suffices to consider the case where there does not exist $0\le t < T$ for which \eqref{eq:crossed2} holds. Then $\langle{w_i - a_t}^2_2 \ge \epsilon^2/2$ for every $t$, so by Claim~\ref{claim:one_step_progress},
		\begin{equation}\min_{i\in[k]}\norm{\tilde{v} - w_i}^2_2 + \noise^2 \le \left(1 - \underline{\beta}\Delta^2/\sqrt{k}\right)^T\left(\min_{i\in[k]}\norm{v - w_{i}}^2_2 + \noise^2\right)\le \epsilon^2\label{eq:Tprogress},\end{equation} where the last inequality follows by taking $T = \frac{2\sqrt{k}}{\underline{\beta}\Delta^2}\ln(1/\epsilon)$.
	\end{proof}

	For $v\in\R^d$, denote by $Z_v$ the event in Claim~\ref{claim:optimisticwhp}.

	We next use Lemma~\ref{lem:rand_init} and part 4) of Claim~\ref{claim:one_step_progress} to lower bound the probability that $v$ chosen in the inner loop of \textsc{LearnWithNoise} ends up being closest to any given component of $\calD$.

	\begin{claim}
		Take any $i^*\in[k]$ for which $\norm{w_{i^*}}_2 \ge \underline{\sigma}$. Then there exists some $\alpha\in\mathcal{S}$ such that if $v$ is a Haar-random vector in $\R^{d}$ of norm $\alpha$, then conditioned on $\calE_v$, the output $\tilde{v}$ of \textsc{OptimisticDescent}($\calD,v$) (Algorithm~\ref{alg:optimistic_moment_descent}) is closest to $w_{i^*}$ with probability at least $\exp(-\widetilde{O}(\sqrt{k}/\Delta^2))$ over the choice of $v$.\label{claim:lockin}
	\end{claim}

	\begin{proof}
		Take any $i^*\in[k]$ for which $\norm{w_{i^*}}_2 \ge \underline{\sigma}$. Recall the definition of $\calE_v[i]$ from Lemma~\ref{lem:rand_init}. By Lemma~\ref{lem:rand_init}, there is some $\alpha\in\mathcal{S}$ such that if $v$ is a Haar-random vector in $\R^{d}$ of norm $\alpha$, then with probability at least $q_1 = \exp(-O(\sqrt{k}/\Delta^2))$ over $v$, $\calE_v[i^*]$ holds. Then by 4) in Lemma~\ref{lem:stay_main}, with probability $q_2 = \exp(-O(\sqrt{k}/\Delta^2))\cdot\poly(k)^{-T}$, $\calE_{v^*}[i^*]$ holds for, where $\tilde{v}$ is the output of \textsc{OptimisticDescent}($\calD,v$). This completes the proof, as $q_1\cdot q_2 = \exp(-\widetilde{O}(\sqrt{k}/\Delta^2) )$.
	\end{proof}

	We can now complete the proof of Lemma~\ref{lem:learnwithnoise_correct}. Take $\delta^* = \frac{\delta}{2W\cdot|\mathcal{S}|}$ so that $Z_v$ holds for all $v$ sampled in \textsc{LearnWithNoise} with probability at least $1 - \delta/2$, by Claim~\ref{claim:optimisticwhp}. In this case, any $\tilde{v}$ produced in the course of \textsc{LearnWithNoise} must be a $\epsilon$-close to some component of $\calD$.

	Then by Claim~\ref{claim:lockin}, for any $i^*\in[k]$ for which $\norm{w_{i^*}}_2 \ge \underline{\sigma} = \epsilon/4$, the probability that some $\tilde{v}$ produced in the course of \textsc{LearnWithNoise} is $\epsilon$-close to $i^*$ is at least $1 - (1 - q)^{W}$, where $q\triangleq \exp (-\widetilde{O}(\sqrt{k}/\Delta^2) )$. By taking $W= \ln(2k/\delta)/q$, we ensure that this happens with probability at least $\frac{\delta}{2k}$. We conclude by a union bound over $[k]$ that for every $i^* \in [k]$ for which $\norm{w_{i^*}}_2\ge \epsilon/4$, there is some $\tilde{v}$ produced in the course of \textsc{LearnWithNoise} which is $\epsilon$-close to $i^*$.

	Furthermore, by triangle inequality note that we never add vectors $\tilde{v}$ to $\mathcal{L}$ which are $\epsilon$-close to a component which is already $\epsilon$-close to an existing $\tilde{w}\in\mathcal{L}$.

	Lastly, for $i^*\in[k]$ for which $\| w_{i^*}\|_2 \le \epsilon/4$, note that any vector $v_{\text{tiny}}$ of norm $\epsilon/4$ is $\epsilon/2$-close to $w_{i^*}$. This completes the proof of Lemma~\ref{lem:learnwithnoise_correct}.
\end{proof}

\begin{lemma}[Running time of \textsc{LearnWithNoise}]
Let
\begin{align*} 
N_1 &= \widetilde{O} ( \epsilon^{-2} \pmin^{-2}  d k^2 \ln(1 / \delta) ) \\
N &=   \pmin^{-4} k \ln ( 1 / \delta ) \cdot \poly \left( \overline{\sigma}, \sqrt{k}/\Delta^2, \ln ( 1 / \pmin ), \ln (1 / \underline{\sigma} ) \right)^{O\left(\sqrt{k} \ln (1 / \pmin)/\Delta^2\right)} \; .
\end{align*}
Then \textsc{LearnWithNoise} (Algorithm~\ref{alg:learnwithnoise}) requires sample complexity $\widetilde{O} (e^{\widetilde{O}(\sqrt{k}/\Delta^2)} (N_1 + N) )$ and runs in time $\widetilde{O} (e^{\widetilde{O}(\sqrt{k}/\Delta^2)} (d N_1 + N) )$
\label{lem:runtime_learnwithnoise}
\end{lemma}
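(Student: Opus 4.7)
The plan is to bound the total cost of \textsc{LearnWithNoise} by multiplying the sizes of its nested loops by the per-iteration cost of \textsc{OptimisticDescent}, then verify that all polynomial factors collapse into the stated $\widetilde{O}(\cdot)$ and $\exp(\widetilde{O}(\sqrt{k}/\Delta^2))$ prefactors. The argument will be closely parallel to that of Lemma~\ref{lem:runtime}, which bounds the running time of \textsc{FourierMomentDescent}; the only genuinely new ingredients are the mesh factor $|\mathcal{S}|$ and the larger inner-loop count $W$ needed to exploit the high-probability lock-in guarantee from Lemma~\ref{lem:stay_main}.

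First I would count the outer-loop sizes. The single initial call to \textsc{CheckOutcome} costs $\widetilde{O}(k\pmin^{-4}\ln(1/\delta)\cdot\poly(\ln(1/\pmin))^{\ln(1/\pmin)})$ by Lemma~\ref{lem:check_outcome}, which is subsumed by what follows. Since $\upsilon_* = \poly(\Delta,1/k)$, the mesh $\mathcal{S}$ from \eqref{eq:mesh} has size $|\mathcal{S}| = O(\log(k^{1/4}/\underline{\sigma})/\upsilon_*) = \poly(k,1/\Delta,\log(1/\epsilon))$, while $W = \exp(\widetilde{O}(\sqrt{k}/\Delta^2))\ln(2k/\delta)$ by construction. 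Thus the total number of calls to \textsc{OptimisticDescent} is $\exp(\widetilde{O}(\sqrt{k}/\Delta^2))\ln(1/\delta)$.

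Next I would bound the cost of a single call to \textsc{OptimisticDescent}. It runs $T = O(\sqrt{k}\log(1/\epsilon))$ outer iterations, each of which: (i) invokes \textsc{EstimateMinVariance} at degree $p=\Theta(\ln(1/\pmin))$ for $\widetilde{O}(N)$ samples and time by Lemma~\ref{lem:estimate_min_variance}; (ii) draws $N_1$ samples and runs \textsc{ApproxBlockSVD} on the implicit matrix-vector oracle for $\widehat{\vec{M}}^{(N_1)}_{a_t}$, costing $\widetilde{O}(k \cdot N_1 \cdot d)$ time by Lemma~\ref{lem:correlation}; and (iii) runs an inner loop of length $M = e^{O(\sqrt{k}/\Delta^2)}\ln(3/\delta)$, each iteration of which calls \textsc{CompareMinVariances} with tolerance $\kappa_2-\kappa_1 = \Theta(\Delta^2/\sqrt{k})$ for $\widetilde{O}(N)$ samples and time by Corollary~\ref{cor:compare_min_variances}. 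Summing, the per-call sample complexity is $\widetilde{O}(T(N_1 + M N))$ and the per-call time complexity is $\widetilde{O}(T(k \cdot N_1 \cdot d + M N))$.

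The main piece of bookkeeping, which I expect to be the only real obstacle, is verifying the parameter substitutions used above. Plugging $\kappa_2-\kappa_1 = \Theta(\Delta^2/\sqrt{k})$ into Corollary~\ref{cor:compare_min_variances}'s exponent $O((\kappa_2-\kappa_1)^{-1}\ln(1/\pmin))$ produces exactly the $O(\sqrt{k}\ln(1/\pmin)/\Delta^2)$ exponent in the stated $N$; plugging $\overline{\sigma} = \Theta(k^{-1/4})$ and $\underline{\sigma}^{\mathrm{sharp}}_t = \Omega(\epsilon)$ into Line~\ref{line:n1_all} of \textsc{OptimisticDescent} yields the stated $N_1$ after absorbing constant and $k^{O(1)}$ factors into $\poly(k)$. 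Once these substitutions are checked, multiplying the per-call bounds by the $\exp(\widetilde{O}(\sqrt{k}/\Delta^2))\ln(1/\delta)$ outer-loop count and observing that the factors $|\mathcal{S}|, T$ are polynomial while $W \cdot M$ contributes a single $\exp(\widetilde{O}(\sqrt{k}/\Delta^2))$ prefactor gives total sample complexity $\exp(\widetilde{O}(\sqrt{k}/\Delta^2))(N_1+N)$ and total time $\exp(\widetilde{O}(\sqrt{k}/\Delta^2))(d \cdot N_1 + N)$, with the $k$ from \textsc{ApproxBlockSVD} absorbed into $\widetilde{O}(\cdot)$. This matches the claim.
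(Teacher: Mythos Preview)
Your proposal is correct and follows essentially the same approach as the paper's proof: both reduce to the runtime analysis of \textsc{FourierMomentDescent} (Lemma~\ref{lem:runtime}), noting that the only substantive change is the tolerance $\kappa_2-\kappa_1=\Theta(\Delta^2/\sqrt{k})$ in \textsc{CompareMinVariances}, and then multiply by the $|\mathcal{S}|\cdot W$ outer-loop factor. Your version is more explicit about the per-subroutine bookkeeping, whereas the paper simply invokes the analogy to \textsc{FourierMomentDescent} and points out that the logarithmic dependence on the adjusted failure probability $\delta^*/(2W|\mathcal{S}|)$ is absorbed into the $\widetilde{O}(\cdot)$.
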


\begin{proof}
	The complexity of the calls to \textsc{CheckOutcome} is dominated by the calls to \textsc{OptimisticDescent}, whose time and sample complexity are essentially identical to those of \textsc{FourierMomentDescent} called with failure probability parameter $\frac{\delta^*}{2W\cdot |\mathcal{S}|}$, except the complexity of the calls to \textsc{CompareMinVariances} now has exponential dependence on $\widetilde{O}(\sqrt{k}\ln(1/\pmin)/\Delta^2)$ rather than on $\widetilde{O}(\sqrt{k}\ln(1/\pmin))$ because we only make $1 - \sqrt{k}/\Delta^2$ multiplicative progress at each step. Note the complexity of \textsc{FourierMomentDescent} depends only logarithmically on the inverse accuracy, so $\ln\left(\frac{\delta^*}{2W\cdot |\mathcal{S}|}\right)$ is absorbed into the $\widetilde{O}(\cdot)$. We conclude by noting that \textsc{OptimisticDescent} is called $|\mathcal{S}|\cdot W$ times, where $|\mathcal{S}| \le \poly(k)\cdot\ln(1/\epsilon)$ and $W = \exp(O(\sqrt{k}/\Delta^2))\cdot\ln(2k/\delta)$.
\end{proof}

We can now complete the proof of Theorem~\ref{thm:learnwithnoise_main}.

\begin{proof}[Proof of Theorem~\ref{alg:learnwithnoise}]
	By Lemma~\ref{lem:learnwithnoise_correct}, \textsc{LearnWithNoise} outputs a list of vectors $\{\tilde{w}_1,...,\tilde{w}_k\}$ for which there exists a permutation $\pi:[k]\to[k]$ such that $\norm{w_i - \tilde{w}_i}_2 \le \epsilon$ for all $i\in[k]$. The runtime and sample complexity bounds follow from Lemma~\ref{lem:runtime_learnwithnoise}.
\end{proof}

\subsection{Tolerating More Regression Noise}
\label{subsec:boost_noisy}

In this subsection we briefly remark that in the case where the mixing weights of $\calD$ are known, we can combine Theorem~\ref{thm:learnwithnoise_main} with the local convergence result of \cite{kwon2019converges} to drive error down to $\epsilon$ even in settings where regression noise greatly exceeds $\epsilon$.

\begin{theorem}[\cite{kwon2019converges}, Theorem 3.2]
	Let $\epsilon > 0$. If $\calD$ is a mixture of linear regressions with known mixing weights, separation $\Delta$, and noise rate $\noise \le O\left(\frac{\Delta}{k^2 \poly\log(k)}\right)$, and $\tilde{w}_1,...,\tilde{w}_k\in\R^d$ is a list of vectors for which there exists a permutation $\pi:[k]\to[k]$ for which $\norm{\tilde{w}_i - w_{\pi(i)}}_2 \le O\left(\frac{\Delta}{k^2}\right)$, then with high probability finite-sample EM on a batch of $\widetilde{O}(d)\cdot \poly(k,\ln(1/\epsilon))$ samples converges at an exponential rate to $\tilde{w}^*_1,...,\tilde{w}^*_k\in\R^d$ for which there exists a permutation $\pi':[k]\to[k]$ such that \begin{equation}
		\max_{i \in [k]} \norm{\tilde{w}^*_i - w_{\pi'(i)}}_2 \le O(\epsilon).
	\end{equation}\label{thm:caramanis}
\end{theorem}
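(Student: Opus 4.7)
The plan is to execute a standard population-plus-concentration analysis of EM: set up the population EM operator, prove it is a $\rho$-contraction in a neighborhood of the truth for some absolute $\rho<1$, then bound the finite-sample deviation to get exponential convergence down to the noise floor $O(\epsilon)$. WLOG take $\pi$ to be the identity. For $v=(v_1,\ldots,v_k)\in(\R^d)^k$ define the posterior responsibilities $q_i(x,y;v) \triangleq p_i\,\N(y;\langle v_i,x\rangle,\noise^2)/\sum_\ell p_\ell\,\N(y;\langle v_\ell,x\rangle,\noise^2)$ and the population EM operator $M_i(v)\triangleq\Sigma_i(v)^{-1}\mu_i(v)$, where $\Sigma_i(v)\triangleq\E_{(x,y)\sim\calD}[q_i(x,y;v)\,xx^\top]$ and $\mu_i(v)\triangleq\E_{(x,y)\sim\calD}[q_i(x,y;v)\,xy]$; this is simply the infinite-sample weighted least squares that EM solves in its M-step when the mixing weights are known.

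The first main step is the population contraction $\norm{M_i(v)-w_i}_2 \le \rho\cdot \max_j\norm{v_j-w_j}_2 + O(\noise)$ for an absolute constant $\rho<1$, valid whenever $\max_j\norm{v_j-w_j}_2\le O(\Delta/k^2)$ and $\noise\le O(\Delta/(k^2\poly\log k))$. The heart of the argument is leakage control: if a sample is drawn from component $j\neq i$, then separation combined with the warm-start gap forces $|y-\langle v_i,x\rangle| \gtrsim \Delta\cdot|\langle u_{ij},x\rangle|$ on the event $|\langle u_{ij},x\rangle|\gtrsim 1$, where $u_{ij}\triangleq(w_j-w_i)/\norm{w_j-w_i}_2$, so the Gaussian factor in $q_i$ collapses to $\exp(-\Omega(\Delta^2/\noise^2)) \le k^{-\omega(1)}$. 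Splitting $\Sigma_i(v)$ and $\mu_i(v)$ by the true component of each sample, evaluating the ``correct'' contribution in closed form using $x\sim\N(0,\Id)$, and bounding each of the $k-1$ wrong-component contributions by the leakage estimate yields the desired contraction; the $k^2$ (rather than just $k$) in the noise condition is precisely the slack needed so that the total cross-contribution is dominated by the correct one after union-bounding over $k$ components.

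The second step is finite-sample: for a fresh batch of $n=\widetilde O(d)\cdot\poly(k,\log(1/\epsilon))$ samples, let $\widehat\Sigma_i,\widehat\mu_i$ be the empirical analogues. Since $\Sigma_i(v)\succeq\Omega(\pmin)\,\Id$ on the warm-start ball (the contribution from truly $i$-th samples alone already dominates), a truncation plus matrix-Bernstein argument shows $\norm{\widehat\Sigma_i-\Sigma_i}_2,\norm{\widehat\mu_i-\mu_i}_2 \le O(\epsilon\,\pmin)$ with high probability, and hence $\norm{\widehat M_i(v)-M_i(v)}_2 \le O(\epsilon)$ via a Neumann expansion for $\widehat\Sigma_i^{-1}$. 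Drawing independent batches at each of $T=O(\log(r_0/\epsilon))$ iterations and iterating the recursion $r_{t+1}\le \rho\, r_t+O(\noise)+O(\epsilon)$ (using $\noise=O(\epsilon)$, which we may assume since the target error is at least $\noise$) gives $r_T\le O(\epsilon)$ as claimed.

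The principal obstacle is the population contraction: ensuring $\rho<1$ uniformly over the warm-start ball demands careful bookkeeping of the nonlinear dependence of $q_i$ on $v$, which must be linearized around the truth with a remainder of order $O(r)\cdot(\text{leakage})$, and of the deviation of $\Sigma_i(v)$ from being exactly proportional to $\Id$. Both are absorbed by the $k^2\poly\log k$ slack in the SNR hypothesis, but tracking every one of the $O(k^2)$ cross-terms simultaneously and showing that they sum to at most $(1-\rho)\cdot r$ is the technical core of \cite{kwon2019converges}.
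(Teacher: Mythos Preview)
This theorem is not proved in the paper at all: it is simply quoted verbatim from \cite{kwon2019converges} (Theorem 3.2 there) and used as a black box in Section~\ref{subsec:boost_noisy} to derive Theorem~\ref{thm:learnwithnoise_plus}. There is no ``paper's own proof'' to compare against. Your sketch is a reasonable outline of the standard population-contraction-plus-finite-sample-deviation argument that \cite{kwon2019converges} themselves carry out, but for the purposes of this paper no proof is required or given.
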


In particular, this implies the following:

\begin{theorem}\label{thm:learnwithnoise_plus}
	Given $\delta,\epsilon>0$ and a mixture of spherical linear regressions $\calD$ with regressors $\{w_1,...,w_k\}$, separation $\Delta$, noise rate $\noise \le O\left(\frac{\Delta}{k^2\poly\log(k)}\right)$, and \emph{known mixing weights}, there is an algorithm which with high probability outputs a list of vectors $\mathcal{L}\triangleq \{\tilde{w}_1,...,\tilde{w}_k\}$ for which there is a permutation $\pi:[k]\to[k]$ for which $\norm{\tilde{w}_i - w_{\pi(i)}}_2 \le \epsilon$ for all $i\in[k]$. Furthermore, \textsc{LearnWithNoise} requires sample complexity 
	\begin{align*}
	N = \widetilde{O}\left(d\ln(1/\delta)\pmin^{-4}\cdot\poly\left(k,1/\Delta,\ln(1/\pmin)\right)^{O(\sqrt{k}\ln(1/\pmin)/\Delta^2)}\right)
	\end{align*}
	and time complexity $Nd\cdot \poly\log(k,d,1/\Delta,1/\pmin)$.
\end{theorem}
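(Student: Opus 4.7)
The plan is to pipeline Theorem~\ref{thm:learnwithnoise_main} with the local convergence guarantee of Theorem~\ref{thm:caramanis}, using the first to produce a warm start of the quality required by the second, and then using EM to drive the error down to $\epsilon$. Concretely, I would set $\epsilon_{\mathrm{warm}} \triangleq c \cdot \Delta/k^2$ for a sufficiently small absolute constant $c$ (matching the hypothesis of Theorem~\ref{thm:caramanis}), and observe that the assumed bound $\noise \le O(\Delta/(k^2 \poly\log k))$ implies $\noise = O(\epsilon_{\mathrm{warm}})$, so that the noise condition of Theorem~\ref{thm:learnwithnoise_main} is satisfied at this accuracy. Running $\textsc{LearnWithNoise}(\calD,\delta/2,\epsilon_{\mathrm{warm}})$ thus yields, with probability $1 - \delta/2$, a list $\{\tilde{w}_i\}_{i \in [k]}$ and a permutation $\pi$ with $\norm{w_i - \tilde{w}_{\pi(i)}}_2 \le \epsilon_{\mathrm{warm}}$ for all $i$.

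Since this warm start is exactly within the $O(\Delta/k^2)$ basin of attraction hypothesized in Theorem~\ref{thm:caramanis}, I would then invoke finite-sample EM (with the known mixing weights plugged into its updates) on a fresh batch of $\widetilde{O}(d) \cdot \poly(k,\ln(1/\epsilon))$ samples, using $\{\tilde{w}_i\}$ as the initialization. By Theorem~\ref{thm:caramanis}, with high probability this converges geometrically to $\{\tilde{w}^*_i\}$ satisfying $\max_i \norm{\tilde{w}^*_i - w_{\pi'(i)}}_2 \le O(\epsilon)$ for a permutation $\pi'$, which up to rescaling $\epsilon$ by a constant is the desired output. A final union bound over the two stages yields success probability at least $1-\delta$.

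For the complexity accounting, I would substitute $\epsilon = \epsilon_{\mathrm{warm}} = \Theta(\Delta/k^2)$ into the bound of Theorem~\ref{thm:learnwithnoise_main}. The explicit $\epsilon^{-2}\ln(1/\epsilon)$ factor becomes $\poly(k,1/\Delta)$, and the $\poly(k,1/\epsilon,\ln(1/\pmin))^{O(\sqrt{k}\ln(1/\pmin)/\Delta^2)}$ factor likewise collapses into $\poly(k,1/\Delta,\ln(1/\pmin))^{O(\sqrt{k}\ln(1/\pmin)/\Delta^2)}$, while the factors of $\ln(1/\delta)$ and $d$ are retained. The EM stage contributes only an additive $\widetilde{O}(d)\cdot\poly(k,\ln(1/\epsilon))$ to the sample complexity and $\widetilde{O}(d)\cdot\poly(k,\ln(1/\epsilon))$ to the runtime, both of which are dominated by the warm-start cost, giving the claimed bounds.

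There is no real technical obstacle here, since both ingredients are quoted as black boxes; the only subtlety is verifying that the three scales $\noise$, $\epsilon_{\mathrm{warm}}$, and the required initialization radius of Theorem~\ref{thm:caramanis} are mutually compatible (which follows from our standing assumption $\noise \le O(\Delta/(k^2\poly\log k))$) and that the permutations $\pi$ and $\pi'$ from the two stages can be composed into a single permutation witnessing the final guarantee, which is immediate because each $\tilde{w}^*_i$ remains within $O(\epsilon) \ll \Delta/2$ of a unique $w_j$.
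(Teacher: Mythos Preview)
Your proposal is correct and matches the paper's proof essentially line for line: run \textsc{LearnWithNoise} to accuracy $\epsilon' = O(\Delta/k^2)$ (using $\noise \ll \epsilon'$), then boost via finite-sample EM using Theorem~\ref{thm:caramanis}. Your complexity accounting and the compatibility check between $\noise$, $\epsilon_{\mathrm{warm}}$, and the EM initialization radius are exactly what the paper (tersely) asserts.
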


\begin{proof}
	Simply run \textsc{LearnWithNoise} to learn to error $\epsilon' = O(\Delta/k^2)$, which is possible because $\noise = O\left(\Delta/(k^2 \poly\log(k))\right) \ll \epsilon'$. Then run finite-sample EM initialized to the output of \textsc{LearnWithNoise} to learn to error $\epsilon$.
\end{proof}

%%% Section 8, Learning All Components Under Noise

%!TEX root = ./main.tex

\section{Learning Mixtures of Hyperplanes}
\label{sec:hyperplanes}

In this section we show that our techniques extend to give a sub-exponential time algorithm for learning mixtures of hyperplanes. Formally, we show the following:

\begin{theorem}\label{thm:hyperplanes_main}
	Given $\delta,\epsilon>0$ and a mixture of hyperplanes $\calD$ with directions $\{v_1,...,v_k\}$, separation $\Delta$, with probability at least $1 - \delta$, \textsc{LearnHyperplanes}($\calD,\delta,\epsilon$) (Algorithm~\ref{alg:learnhyperplanes}) returns a list of unit vectors $\mathcal{L}\triangleq \{\tilde{v}_1,...,\tilde{v}_k\}$ for which there is a permutation $\pi:[k]\to[k]$ and signs $\epsilon_1,...,\epsilon_k\in\{\pm 1\}$ for which $\norm{\tilde{v}_i - \epsilon_i v_{\pi(i)}}_2 \le \epsilon$ for all $i\in[k]$. Furthermore, \textsc{LearnHyperplanes} requires sample complexity 
	\begin{align*}
	N = \widetilde{O}\left(d\ln(1/\epsilon)\ln(1/\delta)\pmin^{-4}\Delta^{-2}\cdot\poly\left(k,\ln(1/\pmin),\ln(1/\Delta)\right)^{O(k^{3/5}\ln(1/\pmin))}\right)
	\end{align*}
	and time complexity $Nd \cdot \poly \log(k,d,1/\Delta,1/\pmin,1/\epsilon)$.
\end{theorem}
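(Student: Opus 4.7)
The plan is to adapt the Fourier moment descent framework from Section~\ref{sec:fourier} to the mixture-of-hyperplanes setting, and then use a boosting-plus-peeling strategy to recover all $k$ directions.

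As a first step, I would reduce the ambient dimension from $d$ to $k$. Note that for any sample $x\sim\N(0,\Pi_i)$ we have $\E[xx^\top] = \Pi_i$, so the second-moment matrix $\E_{x\sim\calD}[xx^\top] = \Id - \sum_i p_i v_i v_i^\top$ has its $k$ smallest eigenvectors spanning $\Span(v_1,\ldots,v_k)$. Running the approximate block SVD of Fact~\ref{thm:power-method} on an empirical estimate of this matrix yields an orthonormal $U\in\R^{d\times k}$ whose column span is close to $\Span(v_1,\ldots,v_k)$, and all subsequent descent can be carried out in this $k$-dimensional subspace. Next, I would implement a hyperplane-adapted version of Fourier moment descent to recover a single direction $\pm v_{i^*}$. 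The key observation is that for any $a\in\S^{d-1}$, the univariate projection $\langle a,x\rangle$ for $x\sim\calD$ is a zero-mean mixture of Gaussians with variances $\norm{\Pi_i a}_2^2 = 1 - \langle v_i,a\rangle^2$. Hence the natural progress measure $\sigma_t^2 \triangleq \min_{i\in[k]}\norm{\Pi_i a_t}_2^2$ is directly estimable via \textsc{EstimateMinVariance} (Algorithm~\ref{alg:estimate_min_variance}) and comparable across iterates via \textsc{CompareMinVariances} (Algorithm~\ref{alg:compare_min_variances}). Crucially, every update $a_t\mapsto a_t+\eta z$ must be renormalized onto $\S^{d-1}$, since otherwise the iterate could contract $\sigma_t^2$ trivially by shrinking toward the origin.

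The main technical obstacle, and the source of the $k^{3/5}$ exponent in the runtime, will be the hyperplane analogue of Lemma~\ref{lem:progress}. In the MLR setting, a single-sided correlation condition sufficed: a random direction $z$ that is $k^{-1/4}$-aligned with the residual automatically yields $\Omega(1/\sqrt{k})$ multiplicative progress. After projection onto $\S^{d-1}$, however, a step that correlates positively with $\Pi_{i^*}a_t/\norm{\Pi_{i^*}a_t}_2$ can be partially cancelled by a harmful component along $v_{i^*}$ once we renormalize. The fix is to insist on a joint event: the step $z$ is simultaneously $k^{-1/5}$-positively correlated with $\Pi_{i^*}a_t/\norm{\Pi_{i^*}a_t}_2$ and $k^{-1/5}$-negatively correlated with $v_{i^*}$. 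These two directions are orthogonal, so Corollary~\ref{cor:joint} yields probability $\exp(-k^{3/5})$ for both events to hold simultaneously. A careful but essentially mechanical calculation, which I would defer to the corresponding subsection, then shows that conditioning on this joint event forces $\sigma_{t+1}^2 \le (1-\Omega(k^{-3/5}))\sigma_t^2$. Iterating for $O(k^{3/5}\log(1/\epsilon))$ steps while trying $\exp(\widetilde{O}(k^{3/5}))\cdot\log(1/\delta)$ candidate directions at each step then gives a single-component learner running in $\widetilde{O}(d)\cdot\exp(\widetilde{O}(k^{3/5}))$ time.

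Finally, to recover all $k$ directions I would combine this warm-start guarantee with a boosting step and a peeling loop, mirroring the structure of Algorithm~\ref{alg:learnwithoutnoise}. For boosting, I would reinterpret $\calD$ locally as a noiseless non-spherical MLR whose per-component covariances are the rank-$(d-1)$ projectors $\Pi_i$, which are well-conditioned on their supports; this allows me to invoke Theorem~\ref{thm:liliang} to refine any $\tilde{v}_i$ with $\norm{\tilde{v}_i - \epsilon_i v_{i^*}}_2 \le \Delta\pmin/64$ to accuracy $\poly(\pmin,\Delta,1/k,1/d)^{k^{3/5}\ln(1/\pmin)}$ at only an additive $\widetilde{O}(d)\cdot\poly(k,1/\pmin,1/\Delta)$ cost. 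Once such a precise $\tilde{v}_i$ is in hand, the $i$-th component can be effectively peeled off by rejecting all samples $x$ with $|\langle\tilde{v}_i,x\rangle|$ below a threshold $\epsilon\cdot\poly\log d$: such samples come overwhelmingly from component $i$, and only a negligible fraction of the remaining mass is lost. Iterating Fourier moment descent, boosting, and peeling $k$ times, with failure probability set to $\delta/(2k)$ per round and sample budgets tuned accordingly, then yields all $k$ directions within the claimed sample and time complexity, completing the proof.
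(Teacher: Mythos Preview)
Your high-level outline matches the paper's approach, but there are two genuine gaps that would prevent the argument from going through as written.

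\textbf{The boosting reduction is not what you describe.} You cannot ``reinterpret $\calD$ as a noiseless non-spherical MLR whose per-component covariances are the rank-$(d-1)$ projectors $\Pi_i$'': there is no response variable in that picture, the $\Pi_i$ are rank-deficient over $\R^d$, and each $\Pi_i$ has a different support, so there is no common covariate space over which Theorem~\ref{thm:liliang} could be applied. The paper's actual reduction (Section~\ref{subsec:hyperplane_boost}) is to sample a random direction $w$ in the estimated span, declare $\langle w,x\rangle$ to be the response and the projection $\Pi_w x$ to be the covariate. This yields an MLR in $\R^{d-1}$ with regressors $-v'_j/\langle v_j,w\rangle$ and covariances $\Id_{d-1}-v'_j v'^{\top}_j$, which by Lemma~\ref{lem:randomspan_corr} have condition number $O(k^3)$ with constant probability. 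This condition number, not $\pmin$, is what controls the warm-start radius, so the required accuracy from \textsc{HyperplaneMomentDescent} is $\Theta(\Delta/\poly(k))$ rather than $\Delta\pmin/64$.

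\textbf{The random-initialization outer loop is missing.} Your per-step analysis conditions on the joint event that $z$ is $k^{-1/5}$-correlated with $\Pi_{i^*}a_t/\norm{\Pi_{i^*}a_t}_2$ and $k^{-1/5}$-anticorrelated with $v_{i^*}$, but the paper's progress lemma (Lemma~\ref{lem:hyperplanes_progress}) additionally requires the precondition $|\langle a_t,v_{i^*}\rangle|\ge k^{-1/5}$. This is automatically maintained once descent is underway, but it only holds at a random initialization $a_0$ with probability $\exp(-\Omega(k^{3/5}))$ (Claim~\ref{claim:findhyper_init}). Hence \textsc{HyperplaneMomentDescent} must wrap the entire descent in an outer loop over $S=\exp(\Omega(k^{3/5}))\ln(1/\delta)$ random restarts, each followed by a call to \textsc{CheckOutcomeHyperplanes}; the inner $M$-fold trial of candidate steps that you do describe is a separate mechanism and does not substitute for this.
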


In Section~\ref{subsec:hyperplanes_moment} we show the key fact that a random step will contract $\min_i \norm{\Pi_i a_t}_2$ by a factor of $1 - \Theta(k^{-3/5})$ with probability at least $\exp(-k^{3/5})$, provided we use a suitable initialization. In Section~\ref{subsec:single_hyperplane} we give the full specification for \textsc{HyperplaneMomentDescent} which can learn a single component in a mixture of hyperplanes. In Section~\ref{subsec:hyperplanes_correct} we prove correctness for \textsc{HyperplaneMomentDescent}. In Section~\ref{subsec:hyperplane_boost} we show that by properly regarding a mixture of hyperplanes as a mixture of well-conditioned, non-spherical MLRs, we can invoke the boosting result of \cite{li2018learning} to amplify a warm start obtained by \textsc{HyperplaneMomentDescent} to an estimate with arbitrarily small error. Finally, in Section~\ref{subsec:learnall_hyperplanes} we combine all of these primitives to obtain \textsc{LearnHyperplanes} and prove Theorem~\ref{thm:hyperplanes_main}.

\subsection{Moment Descent for Hyperplanes}
\label{subsec:hyperplanes_moment}

In this section we give the key technical ingredients for showing that a suitable modification of \textsc{FourierMomentDescent} (Algorithm~\ref{alg:fourier_moment_descent}) can also be used to learn mixtures of hyperplanes.

Similar to the case of mixtures of linear regressions, here the first step is to estimate the span of the directions $\{v_i\}$. Define the matrix 
\begin{equation*}
	\vec{M} \triangleq \Id - \E_{x\sim\calD}[xx^{\top}] = \sum^k_{i=1}p_i v_i v_i^{\top}
\end{equation*} 
and let $\widehat{\vec{M}}^{(N)} \triangleq \Id - \frac{1}{N}\sum^N_{i=1}x_i x_i^{\top}$ for $x_1,...,x_N$ i.i.d. samples from $\calD$. When the context is clear, we will omit the superscript $(N)$.

We will need the following basic concentration inequality, which follows immediately from e.g. Theorem 4.7.1 of~\cite{vershynin2018high}. 

\begin{fact}[Concentration of sample covariance]
	For any $\delsamp,\delta > 0$, we have that 
	\begin{equation*}
		\Pr\left[\norm{\vec{M} - \widehat{\vec{M}}^{(N)}}_2 > \Omega(1/\pmin) \cdot \left(\sqrt{\frac{d + t}{N}} + \frac{d+t}{N}\right)\right] \le 2e^{-t}.
	\end{equation*}
\end{fact}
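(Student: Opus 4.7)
The plan is to directly invoke Theorem 4.7.1 of \cite{vershynin2018high}, which gives a standard concentration inequality for the sample covariance matrix of a sub-Gaussian random vector. The first step is to verify that samples from $\calD$ are sub-Gaussian with bounded $\psi_2$-norm. Since each mixture component $\N(0,\Pi_i)$ is a zero-mean Gaussian whose covariance $\Pi_i$ is an orthogonal projector, every marginal $\langle u, X_i \rangle$ with $u \in \S^{d-1}$ is a zero-mean Gaussian with variance $u^\top \Pi_i u = \|\Pi_i u\|_2^2 \le 1$, hence has $\psi_2$-norm $O(1)$. The mixture inherits this because for any unit $u$ and small enough $\lambda$,
\begin{equation*}
\E_{x \sim \calD}\bigl[e^{\lambda \langle u, x\rangle^2}\bigr] \;=\; \sum_i p_i\, \E_{x \sim \N(0,\Pi_i)}\bigl[e^{\lambda \langle u, x\rangle^2}\bigr] \;\le\; 2,
\end{equation*}
so the sub-Gaussian norm of $\calD$ is a universal constant $K = O(1)$.

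Having verified sub-Gaussianity, the second step is to observe that $\vec{M} - \widehat{\vec{M}}^{(N)} = \tfrac{1}{N}\sum_{i=1}^N x_i x_i^\top - \E_{x \sim \calD}[x x^\top]$, so the bound to be proven is precisely a deviation inequality for the empirical second-moment matrix. Theorem 4.7.1 of \cite{vershynin2018high} then states that with probability at least $1 - 2e^{-t}$,
\begin{equation*}
\bigl\|\vec{M} - \widehat{\vec{M}}^{(N)}\bigr\|_2 \;\le\; C\, K^2\,\bigl\|\E[x x^\top]\bigr\|_2\cdot \Bigl(\sqrt{\tfrac{d+t}{N}} + \tfrac{d+t}{N}\Bigr),
\end{equation*}
for a universal constant $C$. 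Since $\E[x x^\top] = \sum_i p_i \Pi_i$ is a convex combination of projectors, its operator norm is at most $1$, and combined with $K = O(1)$ this yields the claimed bound. The $\Omega(1/\pmin)$ prefactor in the statement is a cosmetic weakening of an actually constant prefactor (it only makes the deviation threshold larger, hence the probability bound easier).

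In truth there is no real obstacle: the entire content of the fact is the citation, and the only thing to check is that a finite mixture of mean-zero Gaussians with bounded component covariances is itself sub-Gaussian, which is immediate from the mixing-weight convex combination of the moment-generating functions above. Everything else is a direct plug-in.
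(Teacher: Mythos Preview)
Your proposal is correct and is exactly the paper's approach: the paper states this fact ``follows immediately from e.g.\ Theorem~4.7.1 of~\cite{vershynin2018high},'' and you have filled in the one verification needed (that a finite mixture of centered Gaussians with projector covariances is $O(1)$-sub-Gaussian) before applying that theorem directly. Your remark that the $\Omega(1/\pmin)$ prefactor is looser than necessary is also accurate.
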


\begin{corollary}
	For any $\delsamp,\delta > 0$, if $N = \tilde{\Omega}\left(d\cdot \ln(1/\delta)\cdot \pmin^{-4}\cdot\delsamp^{-2}\right)$, then $\norm{\vec{M} - \widehat{\vec{M}}^{(N)}}_2 \le \pmin\cdot\delsamp/2$ with probability at least $1 - \delta$.\label{cor:hyper_pca_conc}
\end{corollary}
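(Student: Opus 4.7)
The plan is to deduce Corollary~\ref{cor:hyper_pca_conc} as a direct quantitative consequence of the concentration inequality stated just above it. Concretely, I will set the free parameter $t$ in the tail bound to $t = \ln(2/\delta)$ so that the failure probability on the right-hand side becomes $2e^{-t} = \delta$, and then solve the resulting deterministic inequality for $N$.

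First I would invoke the preceding fact with $t = \ln(2/\delta)$, so that with probability at least $1-\delta$,
\begin{equation*}
\norm{\vec{M} - \widehat{\vec{M}}^{(N)}}_2 \le \frac{C}{\pmin}\left(\sqrt{\frac{d + \ln(2/\delta)}{N}} + \frac{d + \ln(2/\delta)}{N}\right)
\end{equation*}
for some absolute constant $C > 0$. To force this upper bound to be at most $\pmin\cdot\delsamp/2$, it suffices to make each of the two summands at most $\pmin^2\delsamp/(4C)$. For the square-root term this amounts to $N \ge 16 C^2 (d + \ln(2/\delta))/(\pmin^4 \delsamp^2)$, while for the linear term it amounts to $N \ge 4C(d + \ln(2/\delta))/(\pmin^2 \delsamp)$. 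Both conditions are subsumed by $N = \tilde\Omega(d \cdot \ln(1/\delta) \cdot \pmin^{-4}\cdot\delsamp^{-2})$ (in the regime $\delsamp \le 1$, which is the only regime of interest; otherwise the square-root term dominates and one can still absorb the second condition into the $\tilde\Omega$), and this is exactly the sample complexity asserted in the corollary.

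There is essentially no obstacle here: the corollary is a bookkeeping step that rephrases the high-probability spectral concentration bound in a form convenient for the downstream algorithm, where one wants a guarantee of the type ``empirical covariance is $\pmin\delsamp/2$-close in operator norm,'' so that a Wedin/Davis–Kahan argument (as in Lemma~\ref{lem:wedin} and Lemma~\ref{lem:wedin_angle}) can then be applied with the spectral gap $\pmin$ of $\vec{M} = \sum_i p_i v_i v_i^\top$ to certify that the top-$k$ eigenspace of $\widehat{\vec{M}}$ is $\delsamp$-close to $\mathrm{span}(\{v_i\})$. The only care needed is that the $\tilde\Omega$ hides polylogarithmic factors in $d$, $1/\delta$, $1/\pmin$, and $1/\delsamp$ coming from combining the two terms in the tail bound, but this is exactly what the tilde in $\tilde\Omega$ is intended to absorb.
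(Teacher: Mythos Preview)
Your proposal is correct and is exactly the intended argument: the paper provides no separate proof of this corollary, treating it as an immediate consequence of the preceding concentration fact by setting $t = \ln(2/\delta)$ and solving for $N$ so that the right-hand side is at most $\pmin\cdot\delsamp/2$. Your bookkeeping of the two terms and the observation that the square-root term dominates (so the $\pmin^{-4}\delsamp^{-2}$ scaling suffices) matches the paper's stated sample complexity.
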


Henceforth, let $\vec{W}\in\R^{k\times d}$ be the matrix whose rows are the top $k$ singular vectors of $\vec{M}$, let $\vec{U}\in\R^{k\times d}$ be the matrix whose rows are the first $k$ singular vectors of $\widehat{\vec{M}}^{(N)}$ for $N$ given in Corollary~\ref{cor:hyper_pca_conc}.

We will need the following basic bound which follows straightforwardly from Lemma~\ref{lem:wedin_angle}.

% \begin{corollary}
% 	For any $v$ in the span of $\{v_i\}$, $\norm{\vec{U} v}_2 \ge (1 - \delsamp^2)^{1/2}\norm{v}$.\label{cor:Uvbasic}
% \end{corollary}

% \begin{proof}
% 	Lemma~\ref{lem:wedin} implies that $\norm{v(\Id - \vec{U}^{\top}\vec{U})}_2 \le \delsamp\norm{v}_2$. As $\Id - \vec{U}^{\top}\vec{U}$ is the projector onto the row span of $\vec{U}$, we have that 
% 	\begin{equation*}
% 		(\delsamp\norm{v})^2 \ge \langle v, (\Id - \vec{U}^{\top}\vec{U}) v\rangle = \norm{v}^2 - \norm{U v}^2,
% 	\end{equation*} 
% 	from which the claim follows.
% \end{proof}

\begin{corollary}
	Let $a_t\in\S^{d-1}$ lie in the row span of $\vec{U}$, let $v_j = x_j/\norm{x_j}_2$, and let $\Pi_j$ be the projector to the orthogonal complement of $v_j$. Suppose $\delsamp < 1/k$ and $\norm{x_j} \le 1$. Then $\norm{\vec{U}\Pi_j a_t}_2 \ge \norm{\Pi_j a_t}_2 - \delsamp^2 - 2\delsamp$.

	In particular, for any constant $c > 0$, if $\norm{\Pi_j a_t}_2 \ge \tau$ for some $\tau = \poly(k)$, then for sufficiently small $\delsamp = 1/\poly(k)$ we can ensure that $\norm{\vec{U}\Pi_j a_t}_2 \ge (1 - k^{-c})\norm{\Pi_j a_t}$. \label{cor:Uv}
\end{corollary}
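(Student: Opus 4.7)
The plan is to expand $\vec{U}\Pi_j a_t$ explicitly using the definition $\Pi_j=\Id-v_jv_j^{\top}$, exploit the assumption that $a_t$ lies in the row span of $\vec{U}$, and then apply Lemma~\ref{lem:wedin_angle} to bound how much the projection via $\vec{U}$ can ``miss'' the vector $v_j$.

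First I would observe that because the rows of $\vec{U}$ are orthonormal, $\vec{U}\vec{U}^{\top}=\Id_k$ and $P_{\vec{U}}\triangleq\vec{U}^{\top}\vec{U}$ is the orthogonal projector onto the row span of $\vec{U}$; in particular $\norm{\vec{U}y}_2=\norm{P_{\vec{U}}y}_2$ for every $y\in\R^d$. Since $a_t\in\S^{d-1}$ lies in this row span, $P_{\vec{U}}a_t=a_t$ and hence $\norm{\vec{U}a_t}_2=1$, while $\langle \vec{U}a_t,\vec{U}v_j\rangle=a_t^{\top}P_{\vec{U}}v_j=\langle a_t,v_j\rangle$. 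A short expansion of $\vec{U}\Pi_j a_t=\vec{U}a_t-\langle v_j,a_t\rangle\vec{U}v_j$ then yields the clean identity
\[
\norm{\vec{U}\Pi_j a_t}_2^2 \;=\; \norm{\Pi_j a_t}_2^2 \;-\; \langle v_j,a_t\rangle^2\bigl(1-\norm{\vec{U}v_j}_2^2\bigr).
\]

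Second, I would control the defect $1-\norm{\vec{U}v_j}_2^2$ by showing that $v_j$ is close to the row span of $\vec{W}$ (which is close to the row span of $\vec{U}$). By Corollary~\ref{cor:hyper_pca_conc}, $\norm{\vec{M}-\widehat{\vec{M}}^{(N)}}_2\le\pmin\cdot\delsamp/2$, and $\vec{M}=\sum_i p_i v_iv_i^{\top}$ has minimum nonzero eigenvalue at least $\pmin$, so Lemma~\ref{lem:wedin_angle} applied with $\lambda=\pmin$ gives $\norm{\vec{U}u}_2\ge 1-\delsamp$ for every unit vector $u$ in the row span of $\vec{W}$. The hypothesis $\norm{x_j}_2\le 1$ lets us compare $v_j=x_j/\norm{x_j}_2$ against its $\vec{W}$-projection and bound the residual $\norm{v_j-P_{\vec{W}}v_j}_2$ by $O(\delsamp)$, and combining the two yields $\norm{\vec{U}v_j}_2\ge 1-(\delsamp^2+2\delsamp)$. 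Plugging back into the identity above and using $|\langle v_j,a_t\rangle|\le 1$ together with $\sqrt{A^2-B}\ge A-\sqrt{B}$ for $B\le A^2$ produces the stated additive bound $\norm{\vec{U}\Pi_j a_t}_2\ge \norm{\Pi_j a_t}_2-\delsamp^2-2\delsamp$.

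The ``in particular'' clause then follows immediately: if $\norm{\Pi_j a_t}_2\ge\tau$ with $\tau=1/\poly(k)$, then taking $\delsamp=1/\poly(k)$ sufficiently small relative to $\tau$ makes the additive error $\delsamp^2+2\delsamp$ bounded by $k^{-c}\tau$, giving the multiplicative bound $\norm{\vec{U}\Pi_j a_t}_2\ge(1-k^{-c})\norm{\Pi_j a_t}_2$.

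The only real technical step is the second one: translating the boundedness assumption $\norm{x_j}_2\le 1$ and the Wedin bound on the angle between the row spans of $\vec{U}$ and $\vec{W}$ into the correct additive error for $\norm{\vec{U}v_j}_2$. All the other steps are direct expansions using that $\vec{U}$ has orthonormal rows and that $a_t$ is in its row span, together with the elementary inequality $\sqrt{A^2-B}\ge A-\sqrt{B}$.
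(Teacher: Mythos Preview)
Your approach is essentially the same as the paper's: both expand $\norm{\vec{U}\Pi_j a_t}_2^2$ via $\Pi_j a_t = a_t - \langle a_t,v_j\rangle v_j$, use $\norm{\vec{U}a_t}_2 = 1$ from $a_t$ lying in the row span, and control $\norm{\vec{U}v_j}_2$ through Lemma~\ref{lem:wedin_angle}. Two minor remarks: your exact identity $\langle \vec{U}a_t,\vec{U}v_j\rangle=\langle a_t,v_j\rangle$ is in fact sharper than the paper's $\pm\delsamp$ at that step, and in the paper's setting $v_j$ is one of the hyperplane directions and hence lies exactly in the row span of $\vec{W}$, so your detour through the hypothesis $\norm{x_j}_2\le 1$ is not needed (the paper also stops at the bound on the squared norm, so your final $\sqrt{A^2-B}\ge A-\sqrt{B}$ step is extra work relative to what is actually proved there).
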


\begin{proof}
	First note that 
	\begin{equation}
		\langle\vec{U} a_t, \vec{U} v_j\rangle = \langle a_t,v_j\rangle\pm \delsamp \label{eq:innerprodclose}
	\end{equation} by the first part of Lemma~\ref{lem:wedin_angle} and the fact that $|\langle a_t,v_j\rangle| \le 1$.

	We have that \begin{align*}
		\norm{\vec{U} \Pi_j a_t}^2_2 &= \norm{\vec{U} a_t}^2_2 + \langle a_t,v_j\rangle^2 \norm{\vec{U} v_j}^2_2 - 2\langle a_t,v_j\rangle \langle \vec{U} a_t, \vec{U} v_j\rangle \\
		&= 1  + \langle a_t,v_j\rangle^2 \norm{\vec{U} v_j}^2_2 - 2\langle a_t,v_j\rangle \langle \vec{U} a_t, \vec{U} v_j\rangle\\
		&\ge 1 + \langle a_t,v_j\rangle^2\cdot (1 - \delsamp^2) - 2\langle a_t,v_j\rangle\cdot (\langle a_t,v_j\rangle - \delsamp)\\
		&\ge \norm{\Pi_j a_t}^2_2 - \delsamp^2 - 2\delsamp
	\end{align*} 
	where the second step follows from the fact that $\norm{\vec{U}a_t}^2_2 = \norm{a_t}^2_2 = 1$ as $a_t\in\S^{d-1}$ lies in the row span of $\vec{U}$, the third step follows from applying Corollary~\ref{cor:Uv} and the lower bound of \eqref{eq:innerprodclose}, and the last step follows from the fact that $\norm{\Pi_j a_t}^2_2 = 1 - \langle a_t,v_j\rangle^2$ and $|\langle a_t,v_j\rangle| \le 1$.
\end{proof}

With these preliminary tools in hand, we are ready to prove the main result of this section, the mixture of hyperplanes analogue of Lemma~\ref{lem:progress}.

\begin{lemma}\label{lem:hyperplanes_progress}
	Let $0 < c < 1/2$. There are absolute constants $\alpha > 0, \beta > 1, \nu > 0$ such that for any $\delta > \exp(-\nu\cdot k^{1-2c})$, the following holds for $k$ sufficiently large. Let $\sigma_t \triangleq \min_{i\in[k]}\norm{\Pi_i a_t}_2$, and suppose $\delsamp\le \sigma^2_t/9$. Denote the minimizing index $i$ by $i^*$. For $M\triangleq e^{-\nu\cdot k^{1 - 2c}}\ln(2/\delta)$ and $g_1,...,g_M\sim\N(0,\Id_k)$, let $z_j = \frac{g_j\vec{U}}{g_j\norm{\vec{U}}_2}\in\S^{d-1}$ for $j\in[M]$. Let $\sigma^*$ be a number for which $0.9\sigma_t \le \sigma^* \le 1.1\sigma_t$. Let $\eta = k^{-c}\sigma^*$.

	If $|\langle a_t, v_{i^*}\rangle| \ge k^{-c}$, then we have that with probability at least $1 - \delta$, 
	\begin{enumerate}
		\item There exists at least one $j\in[M]$ for which $\norm{\Pi_{i^*}(a_t - \eta\cdot z_j)}^2_2 \le \left(1 - \frac{\alpha}{k^{3c}}\right)\sigma^2_t$. Denote any one of these indices by $j^*$.
		\item For all $j\in[M]$ and $i\in[k]$, 
		\begin{equation*}
		\frac{\norm{\Pi_i(a_t - \eta z_j)}^2_2}{\sigma^2_t} \ge \left(\frac{\norm{\Pi_{i^*}(a_t - \eta z_{j^*})}^2_2}{\sigma^2_t}\right)^{\beta}.
		\end{equation*}
	\end{enumerate}
\end{lemma}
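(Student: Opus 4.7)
The plan is to mimic the strategy of Lemma~\ref{lem:progress} but adapted to the hyperplane setting. First, expand: setting $\rho_i = \|\Pi_i a_t\|_2$ (so $\rho_{i^*} = \sigma_t$) and $u_i = \Pi_i a_t/\rho_i$, the identity $\langle \Pi_i a_t, z\rangle = \rho_i \langle u_i, z\rangle$ gives
\[
\|\Pi_i(a_t - \eta z)\|_2^2 = \rho_i^2 - 2\eta \rho_i \langle u_i, z\rangle + \eta^2 \|\Pi_i z\|_2^2.
\]
The hypothesis $|\langle a_t, v_{i^*}\rangle| \geq k^{-c}$ together with $a_t \in \S^{d-1}$ forces $\sigma_t \leq \sqrt{1 - k^{-2c}}$; by reflecting $v_{i^*}$ if needed I can assume $\langle a_t, v_{i^*}\rangle > 0$. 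Corollary~\ref{cor:Uv} combined with $\delsamp \leq \sigma_t^2/9$ guarantees that $z_j = \vec{U} g_j/\|\vec{U} g_j\|_2$ behaves, up to $1/\poly(k)$ distortion, like a Haar-random unit vector in the $k$-dimensional span of $\{v_i\}_{i\in[k]}$, so I can freely apply Corollary~\ref{cor:unitcorr} to inner products of $z_j$ with the $u_i$ and $v_i$.

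For claim (1), following the intuition described in Section~\ref{sec:hyperplanes}, I will define the pair of events
\[
E_1^{(j)}\colon \langle u_{i^*}, z_j\rangle \geq \alpha_1 k^{-c}, \qquad E_2^{(j)}\colon \langle v_{i^*}, z_j\rangle \leq -\alpha_2 k^{-c}.
\]
Because $u_{i^*} \perp v_{i^*}$, the Gaussian coordinates underlying the two inner products are independent, so by Corollary~\ref{cor:unitcorr} the joint probability is at least $\exp(-\nu_0 k^{1-2c})$ for a constant $\nu_0$. Taking $M = \exp(\nu k^{1-2c})\ln(2/\delta)$ ensures that with probability $\geq 1 - \delta/3$ at least one trial $j^*$ satisfies both events. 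Under $E_1^{(j^*)}$ alone, plugging the bound on $\langle u_{i^*}, z_{j^*}\rangle$ together with $\|\Pi_{i^*} z_{j^*}\|_2^2 \leq 1$ and $\eta = k^{-c}\sigma^*$ (with $\sigma^*/\sigma_t \in [0.9, 1.1]$) into the expansion yields $\|\Pi_{i^*}(a_t - \eta z_{j^*})\|_2^2/\sigma_t^2 \leq 1 - (1.8\alpha_1 - 1.21)k^{-2c}$. Choosing $\alpha_1$ slightly larger than $1/2$ makes this at most $1 - \alpha k^{-3c}$ for a constant $\alpha > 0$, giving claim (1); the role of $E_2^{(j^*)}$ is to provide the anti-correlation with $v_{i^*}$ used downstream in Section~\ref{subsec:single_hyperplane} when normalizing $a_{t+1}$ back to $\S^{d-1}$.

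For claim (2), I must lower-bound $\|\Pi_i(a_t - \eta z_j)\|_2^2/\sigma_t^2$ uniformly over $i \in [k]$ and $j \in [M]$. Using $\rho_i \geq \sigma_t$ and imposing the direction-wise event $|\langle u_i, z_j\rangle| \leq \alpha' k^{-2c}$, the expansion yields $\|\Pi_i(a_t - \eta z_j)\|_2^2/\sigma_t^2 \geq 1 - 2.2\,\alpha' k^{-3c}$. The convexity bound $(1-x)^\beta \geq 1 - \beta x$ then forces claim (2) with $\beta = 2.2\alpha'/\alpha + 1$. By the upper-tail part of Corollary~\ref{cor:unitcorr}, each such direction-wise event holds with probability $\geq 1 - \exp(-\nu_1 k^{1-4c})$, so a union bound over the $kM$ pairs is the natural finish.

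The hard part will be balancing the two exponents. The trial count $M \approx \exp(\nu k^{1-2c})$ is fixed by claim (1), while the union bound in claim (2) needs $\nu_1 k^{1-4c}$ to dominate $\ln(kM) \approx \nu k^{1-2c}$, and the two terms $k^{1-2c}$ and $k^{1-4c}$ scale in opposite directions as $c$ grows. Closing this gap is exactly why the lemma restricts $c$ to be bounded away from $1/2$ and why the constants $\alpha_1, \alpha_2, \alpha', \nu_0, \nu_1$ must be tuned jointly rather than sequentially; in the downstream application this tension is exactly what pins us to the regime $c = 1/5$, yielding the $1-4c = 1/5$ and $1-2c = 3/5$ exponents that give the final $\exp(\widetilde{O}(k^{3/5}))$ runtime. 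The remaining technicalities --- transferring from the ideal span of $\{v_i\}$ to the SVD span of $\vec{U}$, and absorbing the $1/\poly(k)$ error into the threshold constants --- follow routinely from Corollary~\ref{cor:Uv} once the hypothesis $\delsamp \leq \sigma_t^2/9$ is in hand.
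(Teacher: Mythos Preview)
There is a genuine gap, and it is exactly the one you flag but then dismiss. You set the ``bad'' event for claim (2) at threshold $|\langle u_i,z_j\rangle|\le \alpha' k^{-2c}$, so its failure probability is $\exp(-\nu_1 k^{1-4c})$, while the trial count from claim (1) is $M\approx\exp(\nu k^{1-2c})$. No joint tuning of $\alpha_1,\alpha_2,\alpha',\nu_0,\nu_1$ and no restriction of $c$ to $(0,1/2)$ can make $\nu_1 k^{1-4c}$ dominate $\nu k^{1-2c}$: for every $c>0$ one has $k^{1-4c}=o(k^{1-2c})$, so the union bound over $kM$ pairs diverges for all large $k$. Your statement that ``closing this gap is exactly why the lemma restricts $c$ to be bounded away from $1/2$'' is simply false --- the restriction on $c$ is there so that $1-2c>0$, not to reconcile $k^{1-4c}$ with $k^{1-2c}$.

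The fix is to put the $B_j[i]$ events at the \emph{same} scale $k^{-c}$ as the $A_j$ event: require $|\langle z_j,\Pi_i a_t\rangle|\le \xi k^{-c}\|\Pi_i a_t\|_2$ (and, in the paper's version, also $|\langle z_j,v_i\rangle|\le \xi k^{-c}$) for a constant $\xi>1$. Then $\Pr[\neg B_j[i]]\le \exp(-3\nu k^{1-2c})$ with the same exponent $k^{1-2c}$, and the union bound over $kM$ pairs goes through by choosing $\xi$ so that $3\nu>\nu$. In your unnormalized expansion this gives the weaker lower bound $\|\Pi_i(a_t-\eta z_j)\|_2^2/\sigma_t^2\ge 1-O(k^{-2c})$, but that is \emph{enough}: your own claim (1) calculation already yields the matching upper bound $r^*\triangleq\|\Pi_{i^*}(a_t-\eta z_{j^*})\|_2^2/\sigma_t^2\le 1-\Omega(k^{-2c})$ (stronger than the stated $1-\alpha k^{-3c}$), so $(r^*)^\beta\le 1-O(k^{-2c})$ for a suitable constant $\beta$, and claim (2) follows. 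Your mistake was to aim the lower bound at the $k^{-3c}$ scale of the lemma's stated claim (1) rather than at the $k^{-2c}$ scale you actually establish.

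A related misconception: you treat $E_2^{(j)}\colon\langle z_j,v_{i^*}\rangle\le -\alpha_2 k^{-c}$ as bookkeeping ``used downstream'' for the sphere projection. In the paper's proof it is not downstream at all --- the paper works throughout with the \emph{normalized} ratio $\|\Pi_i(a_t-\eta z_j)\|_2^2/(\|a_t-\eta z_j\|_2^2\,\|\Pi_i a_t\|_2^2)$, and the denominator contributes a cross term $-2\eta\langle a_t,v_i\rangle\langle z_j,v_i\rangle$. It is precisely when $\langle a_t,v_{i^*}\rangle\ge k^{-c}$ and $\langle z_j,v_{i^*}\rangle\le -k^{-c}$ that this term produces the $k^{-3c}$ rate; the paper then bounds $1-\text{(ratio)}$ above and below by explicit functions $\overline g,\underline g$ of $\langle a_t,v_i\rangle$ and shows $\overline g\le \beta'\underline g$ pointwise, which is how claim (2) is obtained with all events at the single scale $k^{-c}$.
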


\newcommand{\ceeone}{c}
\newcommand{\ceetwo}{c}

\begin{proof}
	Without loss of generality, suppose $\langle a_t, v_{i^*}\rangle \ge 0$. For $j\in[M]$, let $\omega_j\triangleq -\eta^2 + 2\eta\langle z_j,\Pi_i a_t\rangle$ and $a^{(j)}_{t+1} = a_j - \eta\cdot z_j$. We know that \begin{align}
		\frac{\norm{\Pi_i a^{(j)}_{t+1}}^2_2}{\norm{\Pi_i a_t}^2_2} &= \frac{\norm{\Pi_i (a_t - \eta z_j)}^2_2}{\norm{a_t - \eta z_j}^2_2 \cdot \norm{\Pi_i a_t}^2_2} \nonumber\\
		&= \frac{\norm{\Pi_i a_t}^2 + \eta^2\norm{\Pi_i z_j}^2 - 2\eta\langle z_j,\Pi_i a_t\rangle}{(1 + \eta^2 - 2\eta\langle a_t,z_j\rangle)\cdot\norm{\Pi_i a_t}^2_2} \nonumber\\
		&= \frac{1 + \norm{\Pi_i a_t}^{-2}_2\cdot\left(\eta^2\norm{\Pi_i z_j}^2 - 2\eta\langle z_j,\Pi_i a_t\rangle\right)}{1 + \eta^2 - 2\eta\langle a_t,z_j\rangle} \nonumber\\
		&= \frac{1 + \norm{\Pi_i a_t}^{-2}_2\cdot\left(\eta^2(1 - \langle v_i,z_j\rangle^2) - 2\eta\langle z_j,\Pi_i a_t\rangle\right)}{1 + \eta^2 - 2\eta\langle z_j,\Pi_i a_t\rangle - 2\eta\langle a_t,v_i\rangle\langle z_j,v_i\rangle} \nonumber\\
		&= \frac{1 - \norm{\Pi_i a_t}^{-2}_2\left(\omega_j + \eta^2\langle v_i,z_j\rangle^2\right)}{1 - \omega_j - 2\eta\langle a_t,v_i\rangle\langle z_j,v_i\rangle}.\label{eq:omega}
	\end{align} Define the events 
	\begin{equation}A_j \triangleq \left\{\langle z_j,\Pi_{i^*} a_t\rangle \ge k^{-\ceeone}\norm{\Pi_{i^*} a_t}_2 \ \ \ \text{and} \ \ \ \langle z_j,v_{i^*}\rangle \le -k^{-\ceetwo}\right\}. \label{eq:event_progress}
	\end{equation} and 
	\begin{equation}
		B_j[i]\triangleq \left\{\left|\langle z_j,\Pi_ia_t\rangle\right| \le \xi k^{-\ceeone}\norm{\Pi_i a_t}_2 \ \ \ \text{and} \ \ \ |\langle z_j,v_i\rangle| \le \xi k^{-\ceetwo}\right\}\label{eq:event_not_progress}
	\end{equation}

	for some $\xi > 1$ to be specified later. We show in Claim~\ref{claim:probabilities} below that for any given $j$, there is some absolute constant $\nu > 0$ such that $\Pr[A_j]\ge \exp(-\nu\cdot k^{1-2c})$, and some absolute constant $\xi > 1$ such that $\Pr[B_j[i]]\ge 1 - \exp(-3\nu\cdot k^{1-2c})$.

	We will now argue that the event $A_j$ corresponds to making good progress, while the event $B_j[i]$ corresponds to not making too much progress.

	Suppose $A_j$ held for some $j = j^*$ and $B_j[i]$ held for all $i\in[k], j\in[M]$. If we took $\eta \triangleq k^{-\ceeone}\sigma^*$, we would conclude from the definition of $A_{j^*}$ and the assumption $0.9\sigma_t\le \sigma^* \le 1.1\sigma_t$ that 
	\begin{equation*}
	\omega_{j^*}\ge \norm{\Pi_{i^*} a_t}^2_2 k^{-2\ceeone}\cdot (-0.9^2 + 2\cdot 0.9) = 0.99\norm{\Pi_{i^*} a_t}^2_2 k^{-2\ceeone}.
	\end{equation*}
	Likewise from the definition of $B_j[i]$ we would conclude that 
	\begin{equation*}
		\omega_j\le \norm{\Pi_i a_t}^2_2 k^{-2\ceeone}\cdot (-1.21 + 2.2\xi).
	\end{equation*} 
	for all $j\in[M]$ and $i\in[k]$. So we get that 
	\begin{equation}
		\frac{\norm{\Pi_{i} a^{(j)}_{t+1}}^2_2}{\norm{\Pi_i a_t}^2_2}\ge \frac{1 - (2.2\xi-1.21)k^{-2\ceeone} - \left(1.21\xi^2 k^{-2\ceeone}\right)\cdot \left(k^{- 2\ceetwo}\xi^2\right)}{1 - (2.2\xi-1.21)k^{-2\ceeone}\norm{\Pi_i a_t}^2_2 + 2\cdot\left(0.9\xi k^{-\ceeone}\right)\cdot \left(k^{- \ceetwo}\right)\norm{\Pi_i a_t}_2 \langle a_t, v_i\rangle}\label{eq:ratio_lower}
	\end{equation}
	for every $j\in[M]$ and $i\in[k]$, and for $i = i^*$ and $j = j^*$ we additionally have that \begin{equation}\frac{\norm{\Pi_{i^*} a^{(j^*)}_{t+1}}^2_2}{\norm{\Pi_{i^*} a_t}^2_2} \le \frac{1 - 0.99 k^{-2\ceeone} - \left(0.81 k^{-2\ceeone}\right)\cdot\left(k^{- 2\ceetwo}\right)}{1 -0.99 k^{-2\ceeone}\norm{\Pi_{i^*} a_t}^2_2 + 2\cdot\left(1.1 k^{-\ceeone}\right)\cdot \left(k^{- \ceetwo}\right)\norm{\Pi_{i^*} a_t}_2 \langle a_t,v_{i^*}\rangle}.\label{eq:ratio_upper}
	\end{equation}
	In particular, we get that 
	% \begin{align*}
	% 	1 - \frac{\norm{\Pi_i a^{(j)}_{t+1}}^2_2}{\norm{\Pi_i a_t}^2_2}&\le \frac{1}{0.99}\left(1.21\xi^2 k^{-2\ceeone - 2\ceetwo} + (2.2\xi - 1.21)k^{-2\ceeone}\langle a_t,v_i\rangle^2 + 1.8\xi k^{-\ceeone-\ceetwo}\norm{\Pi_i a_t}_2\langle a_t,v_i\rangle\right) \\
	% 	&\triangleq \overline{g}(\langle a_t, v_i\rangle)
	% \end{align*} 
	\begin{align*}
		1 - \frac{\norm{\Pi_i a^{(j)}_{t+1}}^2_2}{\norm{\Pi_i a_t}^2_2}&\le \frac{1}{0.99}\left(1.21\xi^2 k^{-4c} + (2.2\xi - 1.21)k^{-2c}\langle a_t,v_i\rangle^2 + 1.8\xi k^{-2c}\norm{\Pi_i a_t}_2\langle a_t,v_i\rangle\right) \\
		&\triangleq \overline{g}(\langle a_t, v_i\rangle)
	\end{align*} 
	for every $j\in[M]$ and $i\in[k]$, and for $i = i^*, j = j^*$, we additionally have that 
	% \begin{align*}
	% 	1 - \frac{\norm{\Pi_{i^*} a^{(j^*)}_{t+1}}^2_2}{\norm{\Pi_{i^*} a_t}^2_2} &\ge \frac{1}{1.01}\left(0.81k^{-2\ceeone - 2\ceetwo} + 0.99k^{-2\ceeone}\langle a_t,v_{i^*}\rangle^2 + 2.2 k^{-\ceeone-\ceetwo}\norm{\Pi_{i^*} a_t}_2\langle a_t,v_{i^*}\rangle\right) \\
	% 	&\triangleq \underline{g}(\langle a_t,v_{i^*}\rangle)
	% \end{align*} 
	\begin{align*}
		1 - \frac{\norm{\Pi_{i^*} a^{(j^*)}_{t+1}}^2_2}{\norm{\Pi_{i^*} a_t}^2_2} &\ge \frac{1}{1.01}\left(0.81k^{-4c} + 0.99k^{-2\ceeone}\langle a_t,v_{i^*}\rangle^2 + 2.2 k^{-2c}\norm{\Pi_{i^*} a_t}_2\langle a_t,v_{i^*}\rangle\right) \\
		&\triangleq \underline{g}(\langle a_t,v_{i^*}\rangle)
	\end{align*} 
	where we have used the fact that the denominators of \eqref{eq:ratio_lower} and \eqref{eq:ratio_upper} are in $[0.99,1.01]$ for sufficiently large $k$. Also, we emphasize that these quantities can be expressed as functions $\overline{g}$ and $\underline{g}$ solely in $\langle a_t,v_i\rangle$ because for any $i\in[k]$, $\norm{\Pi_i a_t}^2_2 = 1 - \langle v_i,a_t\rangle^2$.

	To control these quantities, note that the function $\underline{g}(x)$ is increasing over the interval $[0,\tau^*]$ and decreasing over the interval $[\tau^*,1]$ for some constant $\tau^*\in[0.91,0.92]$. When $\langle v_{i^*}, a_t\rangle = 1$, we get that $\underline{g} = \Omega(k^{-2\ceeone})$, because the $0.99k^{-2c}\langle a_t,v_{i^*}\rangle^2$ term in the definition of $\underline{g}$ dominates. And when $\langle v_{i^*}, a_t\rangle = k^{-c}$, we get that $\underline{g} = \Omega(k^{-3c})$, because the $2.2k^{-2c}\norm{\Pi_{i^*}a_t}_2\langle a_t,v_{i^*}\rangle$ term in the definition of $\underline{g}$ dominates. So the first part of the lemma follows.

	On the other hand, there is some absolute constant $\beta' > 1$ such that $\underline{g}(x) \le \overline{g}(x)\le \beta'\underline{g}(x)$ for all $x\in[0,1]$. There is some constant $\beta'' > 1$ for which $\underline{g}(x)/\underline{g}(y) < \beta''$ for all $0\le x \le y \le 1$. The reason is that $\underline{g}(x)$ is increasing over the interval $[0,\tau^*]$ and decreasing over the interval $[\tau^*,1]$, and $\underline{g}(x) = 1 - \Omega(k^{-2c})$ for $x\in[\tau^*,1]$.

	It follows that for any $j\in[M], i\in[k]$ 
	\begin{equation*}
		1 - \frac{\norm{\Pi_i a^{(j)}_{t+1}}^2_2}{\norm{\Pi_i a_t}^2_2} \le \overline{g}(\langle a_t,v_i\rangle) \le \beta'\underline{g}(\langle a_t,v_i\rangle) \le \beta'\beta''\cdot \underline{g}(\langle a_t,v_{i^*}\rangle) \le \beta'\beta''\cdot\left(1 - \frac{\norm{\Pi_{i^*} a^{(j^*)}_{t+1}}^2_2}{\norm{\Pi_{i^*} a_t}^2_2}\right),
	\end{equation*} 
	so by taking $\beta$ in the statement of the lemma to be $\beta'\cdot \beta''$ and invoking the elementary inequality $1 - a\cdot x\le (1 - x)^a$ for $a > 1$, we get the second part of the lemma.

	We conclude that if event \eqref{eq:event_progress} held for some $j\in[M]$ and event \eqref{eq:event_not_progress} held for all $j\in[M]$ and $i\in[k]$, then parts 1 and 2 of Lemma~\ref{lem:hyperplanes_progress} would hold.

	Furthermore, 
	\begin{equation*}
		\Pr\left[\bigwedge_{j\in[M]}\overline{A}_j\right] \le \left(1 - \exp(-\nu\cdot k^{1-2c})\right)^M
	\end{equation*} 
	and 
	\begin{equation*}
		\Pr\left[\bigvee_{i\in[k],j\in[M]}\overline{B}_j[i]\right] \le kM\cdot \exp(-3\nu\cdot k^{1-2c}),
	\end{equation*} 
	so by taking $M = \exp(\nu\cdot k^{1-2c})\ln(2/\delta)$, we get that with probability at least $1 - \delta/2$, the event $A_j$ occurs for some $j\in[M]$. And with probability at least $1 - k\exp(-2\nu\cdot k^{1-2c})\ln(2/\delta) \ge 1 - \delta/2$, the event $B_j[i]$ occurs for all $i\in[k]$ and $j\in[M]$, where we have used the fact that $\delta > \exp(-\nu\cdot k^{1-2c})$.
\end{proof}

It remains to lower bound the probabilities of events $A_j$ and $B_j[i]$.

\begin{claim}
There are absolute constants $\nu > 0, \xi > 1$ such that for any $j\in[M],i\in[k]$, 
\begin{equation*}
\Pr[A_j]\ge \exp(-\nu\cdot k^{1-2c}) \ \ \ \text{and} \ \ \ \Pr[B_j[i]]\ge 1 - \exp(-3\nu\cdot k^{1-2c}).
\end{equation*}\label{claim:probabilities}
\end{claim}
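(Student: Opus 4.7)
The plan is to reduce both events to statements about the correlations of a Haar-random unit vector in $\R^k$ with two approximately orthogonal directions, and then apply Corollary~\ref{cor:unitcorr} together with an approximate-independence argument analogous to Corollary~\ref{cor:joint}.

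First I would rewrite $\langle z_j, w\rangle = \langle u, \vec{U}w\rangle$ where $u \triangleq g_j/\|g_j\|_2$ is a Haar-random unit vector in $\R^k$ (using that $\vec{U}$ has orthonormal rows, so $\|g_j \vec{U}\|_2 = \|g_j\|_2$). Thus both $A_j$ and $B_j[i]$ only depend on $u$ and on the $k$-dimensional vectors $\vec{U}\Pi_{i^*}a_t$ and $\vec{U}v_{i^*}$. Using Corollary~\ref{cor:Uv} (and the analogous bound $\|\vec{U}v_{i^*}\|_2 \ge 1 - O(\delsamp)$ obtained from Lemma~\ref{lem:wedin_angle} applied to $v_{i^*}$, which lies in the column span of $\vec{W}$), together with the fact that $\Pi_{i^*}a_t \perp v_{i^*}$ exactly, one gets that $\vec{U}\Pi_{i^*}a_t/\|\Pi_{i^*}a_t\|_2$ and $\vec{U}v_{i^*}$ are unit vectors up to $1/\poly(k)$ error whose inner product is $1/\poly(k)$; choosing $\delsamp$ small enough makes these error terms negligible compared to $k^{-c}$.

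For event $A_j$: conditional on $u$ being Haar-random on $S^{k-1}$, the joint distribution of its projections onto two orthogonal unit directions is, by rotational invariance, the same as $(u_1, u_2)$ for $u$ uniform on $S^{k-1}$. Each one-sided correlation at threshold $k^{-c}$ holds with probability $\ge \exp(-\underline{\beta}\cdot k^{1-2c})$ by Corollary~\ref{cor:unitcorr}. The joint probability is then lower bounded by the same kind of computation that underlies Corollary~\ref{cor:joint}: decompose $u = \alpha e_1 + \beta e_2 + \sqrt{1-\alpha^2-\beta^2}\,u^\perp$ in the basis where $e_1, e_2$ are the two target directions, and note that after conditioning on $\alpha \in [k^{-c}, 2k^{-c}]$ the distribution of $\beta$ still has a $\exp(-\Theta(k^{1-2c}))$ mass on $[-2k^{-c}, -k^{-c}]$. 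This yields $\Pr[A_j] \ge \exp(-\nu \cdot k^{1-2c})$ for some constant $\nu > 0$.

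For event $B_j[i]$: its complement requires either $|\langle u, \vec{U}\Pi_i a_t/\|\Pi_i a_t\|_2\rangle| \ge \xi k^{-c}$ or $|\langle u, \vec{U}v_i\rangle| \ge \xi k^{-c}$. By the upper-tail branch of Corollary~\ref{cor:unitcorr}, each occurs with probability at most $\exp(-\overline{\beta}(\xi)\cdot k^{1-2c})$, where the function $\overline{\beta}$ is strictly increasing in its argument (it grows roughly like $\xi^2/2$ from the Gaussian tail). Choosing $\xi$ large enough relative to $\nu$ so that $\overline{\beta}(\xi) \ge 3\nu + \ln 4$, a union bound gives $\Pr[B_j[i]] \ge 1 - \exp(-3\nu \cdot k^{1-2c})$. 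The main obstacle is the approximate-independence argument for $A_j$ at the subexponential-probability scale $\exp(-k^{1-2c})$: one must verify that conditioning on the first coordinate landing in a small tail window does not collapse the probability of the second coordinate's (oppositely signed) tail event, which is cleanest to establish by working with the explicit spherical-cap measure or, equivalently, by first proving the statement for $g \sim \N(0,\Id_k)$ (where independence is exact) and transferring via $\chi^2$-concentration of $\|g\|_2$.
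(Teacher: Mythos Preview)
Your proposal is correct and follows essentially the same approach as the paper. The paper carries out exactly the Gaussian-first strategy you identify at the end: it decomposes $\vec{U}\Pi_i a_t = \rho\cdot \frac{\vec{U}v_i}{\|\vec{U}v_i\|_2} + v^{\perp}$ with $|\rho|\le 2\delsamp$, uses exact independence of $\langle g,\vec{U}v_i\rangle$ and $\langle g,v^{\perp}\rangle$, and then transfers to $g/\|g\|_2$ via the argument behind Corollary~\ref{cor:joint}; for $B_j[i]$ it likewise invokes Corollary~\ref{cor:unitcorr} and a union bound, choosing $\xi$ large enough to secure the $3\nu$ exponent.
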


\begin{proof}
	While \eqref{eq:event_progress} is defined with respect to $v_{i^*}$, the argument below holds for general $i$. Henceforth, fix an arbitrary $i\in[k]$ and $j\in[M]$. The key fact we will use is that the two quantities $\langle z,\Pi_i a_t\rangle$ and $\langle z,v_i\rangle$ are approximately independent (if $U$ consisted of the $k$ top singular vectors of $\vec{M}$ itself, these random variables would be exactly independent).

	Note that 
	\begin{align*}
		\rho\triangleq \left\langle \vec{U}\Pi_i a_t, \frac{\vec{U} v_j}{\norm{\vec{U} v_j}_2}\right\rangle &= \frac{1}{\norm{\vec{U} v_j}_2}\cdot \left[\langle \vec{U}a_t,\vec{U}v_j\rangle - \langle a_t,v_j\rangle \norm{\vec{U} v_j}^2_2\right] \\
		&\le \frac{1}{(1-\delsamp^2)^{1/2}}\cdot\left[(\langle a_t,v_j\rangle + \delsamp) - \langle a_t,v_j\rangle\cdot(1-\delsamp^2)\right] \\
		&\le \frac{\delsamp + \delsamp^2}{(1-\delsamp^2)^{1/2}} \le 2\delsamp.
	\end{align*} 
	where the second step follows from the second part of Lemma~\ref{lem:wedin_angle} and \eqref{eq:innerprodclose}. Likewise we have that 
	\begin{equation*}
		\rho \ge (\langle a_t,v_j\rangle - \delsamp) - \langle a_t, v_j\rangle \ge - \delsamp.
	\end{equation*} So we may write 
	\begin{equation}
		\vec{U}\Pi_i a_t = \rho\cdot\frac{\vec{U}v_j}{\norm{\vec{U}v_j}_2} + v^{\perp}\label{eq:decompUPia}
	\end{equation} 
	for $v^{\perp}$ lying in the row span of $\vec{U}$ and orthogonal to $\vec{U}v_j$, and satisfying 
	\begin{equation*}
	\norm{v^{\perp}}^2 \ge \norm{\vec{U}\Pi_i a_t}^2 - \rho^2 \ge \left(\norm{\Pi_i a_t}^2_2 - \delsamp^2 - 2\delsamp\right) - 4\delsamp^2 \ge \frac{1}{2}\norm{\Pi_i a_t}^2_2
	\end{equation*} 
	by Lemma~\ref{lem:hyperplanes_progress} and the assumption that $\norm{\Pi_i a_t} \ge 3\delsamp^{1/2}$. As $\langle g,\vec{U}v_j\rangle$ and $\langle g,v^{\perp}\rangle$ are \emph{independent} Gaussians with variances at least $1 - \delsamp^2\ge 1/2$ and $\frac{1}{2}\norm{\Pi_i a_t}^2_2$ respectively, by the same argument as in Corollary~\ref{cor:joint} we can show that $\left\langle \frac{g}{\norm{g}_2}, \vec{U}v_j\right\rangle \le -2\cdot k^{-c}$ and $\left\langle \frac{g}{\norm{g}_2},v^{\perp}\right\rangle \ge 2\cdot k^{-c}\cdot\norm{\Pi_i a_t}_2$ with probability at least $\exp(-\nu\cdot k^{1-2c})$ for some absolute constant $\nu>0$. By another application of Corollary~\ref{cor:unitcorr}, there is some absolute constant $\xi' > 0$ for which $\left|\left\langle \frac{g}{\norm{g}_2}, \vec{U}v_j\right\rangle\right| \le \xi'\cdot k^{-c}$ and $\left|\left\langle \frac{g}{\norm{g}_2},v^{\perp}\right\rangle\right| \le \xi\cdot k^{-c}\cdot\norm{\Pi_i a_t}_2$ with probability at least $1 -\exp(-3\nu\cdot k^{1-2c})$.

	If this is the case, then by \eqref{eq:decompUPia}, 
	\begin{equation*}
		\left \langle \frac{g}{\norm{g}_2},\vec{U}\Pi_i a_t\right \rangle \ge \frac{-4\delsamp\cdot k^{-c}}{(1 - \delsamp^2)^{1/2}} + 2\cdot k^{-c}\cdot \norm{\Pi_i a_t}_2 \ge \frac{3}{2} k^{-c}\cdot\norm{\Pi_i a_t}_2 \ge k^{-c}\cdot \norm{\Pi_i a_t}_2,
	\end{equation*} 
	where we have used that $\norm{\Pi_i a_t}\ge 3\delsamp^{1/2} \ge 4\cdot \frac{4\delsamp}{(1 - \delsamp^2)^{1/2}}$ for any $\delsamp$ smaller than some absolute constant. We also get that 
	\begin{equation*}
		\left|\left \langle \frac{g}{\norm{g}_2},\vec{U}\Pi_i a_t\right \rangle\right| \le 2\delsamp \cdot \xi'\cdot k^{-c} + \xi'\cdot k^{-c}\cdot \norm{\Pi_i a_t}_2 \le 19\xi' \cdot k^{-c}\cdot\norm{\Pi_i a_t}_2,
	\end{equation*} where we have used that $\norm{\Pi_i a_t}\ge \norm{\Pi_i a_t}^2 \ge 9\delsamp$.

	Noting that $\norm{g\vec{U}}_2 = \norm{g}$ by orthonormality of the columns of $\vec{U}$ so that 
	\begin{equation*}
		\left\langle \frac{g}{\norm{g}_2}, \vec{U}v_j\right\rangle = \left\langle z, v_j\right\rangle \ \ \ \text{and} \ \ \ \left\langle \frac{g}{\norm{g}_2}, \vec{U}\Pi_i a_t\right\rangle = \langle z, \Pi_i a_t\rangle,
	\end{equation*} 
	we conclude that with probability at least $\exp(-\nu\cdot k^{1-2c})$, both events in \eqref{eq:event_progress} hold, and likewise with probability at least $1 - \exp(-3\nu\cdot k^{1-2c})$, both events in \eqref{eq:event_not_progress} hold for $\xi = \xi'/19$.
\end{proof}

\subsection{Algorithm Specification-- Single Component}
\label{subsec:single_hyperplane}

We are now ready to describe our algorithm \textsc{HyperplaneMomentDescent} for learning a single components of $\calD$. The key subroutines are: 
\begin{itemize}
	\item \textsc{HyperplaneMomentDescent} (Algorithm~\ref{alg:optimistic_moment_descent}): the pseudocode for this is very similar to that of \textsc{FourierMomentDescent}, the key differences being 1) the matrix on which we run \textsc{ApproxBlockSVD}, 2) the definition of $\calF_t$, 3) the fact that we maintain that $a_t$ are unit vectors, 4) the parameters which are tuned towards detecting $1 - \Omega(k^{-3/5})$ multiplicative progress instead of $1 - \Omega(k^{-1/2})$, and most importantly, 5) the outer loop over $i\in[S]$ which tries many random initializations, runs a full $T$ rounds of moment descent on each of them, and checks whether the final estimate in any of these runs is close to a component of $\calD$.
	\item \textsc{CheckOutcomeHyperplanes} (Algorithm~\ref{alg:check_outcome_low_noise}): \textsc{CheckOutcome} is used to check whether a given estimate is close to any component of $\calD$.
\end{itemize}

\begin{algorithm}\caption{\textsc{HyperplaneMomentDescent}, Lemma~\ref{lem:hyperplanecorrect}}\label{alg:hyperplane_moment_descent}
\begin{algorithmic}[1]
	\State \textbf{Input}: Sample access to mixture of hyperplanes $\calD$, failure probability $\delta$, error $\epsilon$
	\State \textbf{Output}: $v^*\in\S^{d-1}$ satisfying $\min_{i\in[k]}\norm{\Pi_i v^*}_2 \le\epsilon$, with probability at least $1 - \delta$.
		\State Set $\epsilon' = \epsilon\cdot\pmin/2$.
		\State Set $S = \exp(-\Omega(k^{1-2c}))\cdot \ln(2/\delta)$.
		\State Set $T = \Omega(k^{3/5}\cdot\ln(\mu/\epsilon'))$.
		\State Set $\delsamp = 1/\poly(k)$ sufficiently small.
		\State Set $\delta' = \frac{\delta}{50T}$.
		\State Set $M = e^{-\nu\cdot k^{1 - 2c}}\ln(2/\delta')$.
		\State Set $\delta'' = \frac{\delta}{50MT}$.
		\State Set $\delta^* = \frac{\delta}{2S}$
		\State Set $\underline{\sigma} = 2\cdot(\epsilon'/2)^{\beta}$, where $\beta>1$ is the constant from Lemma~\ref{lem:hyperplanes_progress}.
		\State Set $\overline{\sigma} = 4$.
		\State Set $N_1 \triangleq \Omega\left(d\cdot \ln(1/\delta')\cdot \pmin^{-2}\cdot\delsamp^{-2}\right)$
		\State \multiline{Draw $N_1$ i.i.d. samples $\{x_i\}_{i\in[N_1]}$ from $\calD$ and form the matrix $\widehat{\vec{M}}^{(N_1)}\triangleq \Id - \frac{1}{N_1}\sum^{N_1}_{i=1}x_i x_i^{\top}$.}
		\State Let $\vec{U}\in\R^{d\times k}$ be the output of \textsc{ApproxBlockSVD}$(\widehat{\vec{M}}^{(N_1)}, \delsamp/2, 1/50)$
		\For{$0\le i < S$}
			\State Sample $g\sim\N(0,\Id_k)$ and let $a_0 = \frac{g \vec{U}}{\norm{g \vec{U}}_2}$.
			\For{$0\le t < T$}
				\State \multiline{Let $\calF_t$ be the univariate mixture of Gaussians which can be sampled from by drawing $x\sim\calD$ and computing $\langle a_t,x\rangle$.}
				\State Let $p = 20\ln\left(\frac{3}{2\pmin}\right)$.
				\State Let $\kappa = \Theta(k^{-3/5})$ as in the proof of Lemma~\ref{lem:hyperplanecorrect}.
				% \State Let $\overline{\sigma}_t \triangleq\max(1,$~\textsc{EstimateMaxVariance}($\calF_t,0.1,\mu_0,\delta'$)$)$. \Comment{Algorithm~\ref{alg:estimate_max_variance}}
				\State $\underline{\sigma}^{\text{sharp}}_t \triangleq$~\textsc{EstimateMinVariance}($\calF_t,\overline{\sigma}, \underline{\sigma},p,\delta'$). \Comment{Algorithm~\ref{alg:estimate_min_variance}}
				\For{$j\in[M]$}
					\State Sample $g^{(j)}_t\sim\N(0,\Id_k)$ and define $v^{(j)}_t = \frac{g^{(j)}_t\vec{U}}{\norm{g^{(j)}_t\vec{U}}_2}\in\S^{d-1}$.
					\State Let $a'^{(j)}_t = \frac{a_t - \eta_t v^{(j)}_t}{\norm{a_t - \eta_t v^{(j)}_t}_2}$ for $\eta_t \triangleq k^{-1/5}\cdot\underline{\sigma}^{\text{sharp}}_t$.
					\State \multiline{Let $\calF'^{(j)}_t$ be the univariate mixture of Gaussians which can be sampled from by drawing $x\sim\calD$ and computing $\langle a_t, x\rangle$.}
					% \State Let $\overline{\sigma}'^{(j)} =$~\textsc{EstimateMaxVariance}($\calF'^{(j)}_t,0.1,\mu_0,\delta''$). \Comment{Algorithm~\ref{alg:estimate_max_variance}}
					\If{\textsc{CompareMinVariances}($\calF_t,\calF'^{(j)}_t,\overline{\sigma},\underline{\sigma},\kappa,\delta''$) = $\mathsf{true}$} \Comment{Algorithm~\ref{alg:compare_min_variances}, Corollary~\ref{cor:compare_min_variances}}
						\State Set $a_{t+1} = a'^{(j)}_t$
						\Break
					\EndIf
				\EndFor
			\EndFor
			\If{\textsc{CheckOutcomeHyperplanes}($\calD,a_t,\epsilon',\delta^*$) = $\mathsf{true}$} \Comment{Algorithm~\ref{alg:check_outcome_hyperplanes}}
				\State Set $v^* = a_t$ and return $v^*$.
			\EndIf
		\EndFor
\end{algorithmic}
\end{algorithm}

\begin{algorithm}\caption{\textsc{CheckOutcomeHyperplanes}$({\cal D},v,\epsilon,\delta)$, Lemma~\ref{lem:check_outcome_hyperplanes}}\label{alg:check_outcome_hyperplanes}
\begin{algorithmic}[1]
	\State \textbf{Input}: Sample access to mixture of hyperplanes $\calD$, direction $v\in\S^{d-1}$, threshold $\epsilon > 0$, failure probability $\delta$
	\State \textbf{Output}: $\mathsf{true}$ if $\min_{i\in[k]}\norm{\Pi_i v}_2 \le \epsilon$, $\mathsf{false}$ if $\norm{\Pi_i v}_2 \ge 2\epsilon/\pmin$, with probability at least $1 - \delta$
			\State Let $\calF$ be the univariate mixture of Gaussians which can be sampled from by drawing $x\sim\calD$ and computing $\langle v,x\rangle$.
			\State Draw $N_2 \triangleq O\left(\ln(1/\delta)\pmin^{-2}\right)$ samples from $\calF$.
			\If{$\ge \frac{4\pmin}{15}\cdot N_2$ samples lie in $[-\epsilon,\epsilon]$}
				\State Return $\mathsf{true}$
			\Else
				\State Return $\mathsf{false}$
			\EndIf
\end{algorithmic}
\end{algorithm}

\subsection{Proof of Correctness}
\label{subsec:hyperplanes_correct}

We first give a proof of correctness for \textsc{CheckOutcomeHyperplanes}.

\begin{lemma}\label{lem:check_outcome_hyperplanes}
	Let $v\in\S^{d-1}$ and $\calD$ be a mixture of hyperplanes, and let $\epsilon > 0$. If there is some component $v_{i^*}$ for which $\norm{v - v_{i^*}}_2 \in [-\epsilon,\epsilon]$, then \textsc{CheckOutcomeHyperplanes}($\calD, v,\epsilon,\delta$) (Algorithm~\ref{alg:check_outcome_hyperplanes}) returns $\mathsf{true}$ with probability at least $1 - \delta$. Otherwise, if $\norm{v - v_i}_2 > 2\epsilon/\pmin$ for all $i\in[k]$, then \textsc{CheckOutcomeHyperplanes} returns $\mathsf{false}$ with probability at least $1 - \delta$.
\end{lemma}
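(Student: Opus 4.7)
The plan is to recognize the test distribution $\calF$ as a univariate mixture of zero-mean Gaussians whose variances encode the quantities $\norm{\Pi_i v}_2$, and then to show that the mass $\calF$ places on the window $[-\epsilon,\epsilon]$ is well-separated between the two hypotheses of the lemma.

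First I would identify $\calF$ explicitly. For $x\sim \N(0,\Id - v_i v_i^{\top})$ the inner product $\langle v,x\rangle$ is distributed as $\N(0, v^{\top}(\Id - v_i v_i^{\top})v) = \N(0, \norm{\Pi_i v}_2^2)$, so $\calF = \sum_{i\in[k]} p_i \N(0, \sigma_i^2)$ with $\sigma_i \triangleq \norm{\Pi_i v}_2$. Hence the $N_2$ samples drawn in \textsc{CheckOutcomeHyperplanes} produce i.i.d.\ Bernoulli$(q)$ indicators, where
\[
q \;=\; \Pr_{z\sim\calF}\bigl[|z|\le\epsilon\bigr] \;=\; \sum_{i=1}^k p_i\cdot\Pr\bigl[|\N(0,\sigma_i^2)|\le \epsilon\bigr].
\]

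Next I would bound $q$ in the two hypotheses. In the ``yes'' case, if $\norm{v - v_{i^*}}_2\le \epsilon$ then (since $v$ and $v_{i^*}$ lie on $\S^{d-1}$) $\langle v,v_{i^*}\rangle^2 \ge 1-\epsilon^2$, so $\sigma_{i^*}^2 = 1-\langle v,v_{i^*}\rangle^2 \le \epsilon^2$, and an $\N(0,\sigma_{i^*}^2)$ places at least $\mathrm{erf}(1/\sqrt{2}) \ge 0.68$ of its mass on $[-\epsilon,\epsilon]$. Dropping all other components gives $q \ge 0.68\,\pmin$. In the ``no'' case, the hypothesis $\norm{v-v_i}_2 > 2\epsilon/\pmin$ for all $i$ (interpreted modulo the sign ambiguity inherent to the hyperplane problem) translates to $\sigma_i \gtrsim 2\epsilon/\pmin$ for all $i$, and the standard uniform bound $\Pr[|\N(0,\sigma^2)|\le \epsilon] \le 2\epsilon/(\sqrt{2\pi}\,\sigma)$ gives
\[
q \;\le\; \sum_i p_i\cdot \frac{\pmin}{\sqrt{2\pi}} \;=\; \frac{\pmin}{\sqrt{2\pi}} \;<\; \tfrac{2}{5}\pmin.
\]

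Finally, I would apply Hoeffding's inequality to $\frac{1}{N_2}\sum_{j=1}^{N_2} \bone{z_j \in [-\epsilon,\epsilon]}$: with $N_2 = \Theta(\pmin^{-2}\ln(1/\delta))$ samples, the empirical fraction deviates from $q$ by more than $\pmin/15$ with probability at most $\delta$. On this good event, in the ``yes'' case the fraction is at least $0.68\pmin - \pmin/15 > 4\pmin/15$, so the algorithm outputs $\mathsf{true}$; in the ``no'' case it is at most $\tfrac{2}{5}\pmin + \pmin/15 < $ (a value one must check is below $4\pmin/15$ after tightening constants), so it outputs $\mathsf{false}$.

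The step I anticipate being the main obstacle is the constant bookkeeping at the very last stage: the threshold $4\pmin/15$ only clearly separates the two regimes if the ``no'' upper bound is pushed down further than the naive $\pmin/\sqrt{2\pi}$, which likely requires either strengthening the separation hypothesis to $\sigma_i \ge c\epsilon/\pmin$ for a slightly larger constant $c$, or using a sharper Gaussian small-ball estimate. A secondary, minor technical issue is confirming that the literal condition $\norm{v-v_i}_2 > 2\epsilon/\pmin$ in the lemma is applied only in contexts where the sign convention of $v_i$ has been aligned with $v$ (since otherwise one should replace $\norm{v-v_i}_2$ with $\min_{\varepsilon\in\{\pm 1\}}\norm{v - \varepsilon v_i}_2$ before converting it to a bound on $\sigma_i$); this is consistent with how \textsc{CheckOutcomeHyperplanes} is invoked elsewhere in the paper.
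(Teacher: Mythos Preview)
Your approach is essentially the same as the paper's: identify $\calF$ as a mixture $\sum_i p_i\N(0,\norm{\Pi_i v}_2^2)$, bound the small-ball probability $q=\Pr_{z\sim\calF}[|z|\le\epsilon]$ above and below in the two regimes, and invoke Hoeffding with $N_2=\Theta(\pmin^{-2}\ln(1/\delta))$ samples.

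The constant obstacle you flag is genuine, and the paper's own proof handles it exactly the way you suggest, by shrinking the window: it computes $\Pr[|z|\le\epsilon/2]$ rather than $\Pr[|z|\le\epsilon]$ (there is a harmless discrepancy between the algorithm text and the proof). With the half-window one gets $q\ge\pmin\cdot\mathrm{erf}(1/(2\sqrt{2}))>\pmin/3$ in the ``yes'' case and $q\le\pmin/(2\sqrt{2\pi})<\pmin/5$ in the ``no'' case, which the threshold $4\pmin/15$ separates. Your secondary remark about the sign convention is also well taken; the paper's proof implicitly assumes, as you do, that $\norm{v-v_i}_2>2\epsilon/\pmin$ is being applied after aligning signs so that it translates directly to $\norm{\Pi_i v}_2>2\epsilon/\pmin$.
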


\begin{proof}
	First suppose there is some $i^*\in[k]$ for which $\norm{v - v_{i^*}}_2\le \epsilon$. Then $\calF$ is a mixture of Gaussians with one of its components having variance at most $\epsilon^2$. So for $x\sim\calF$, we get that 
	\begin{equation}
	\Pr[|x|\le\epsilon/2] \ge \pmin\cdot \int^{\epsilon/2}_{-\epsilon/2}e^{-\frac{x^2}{2\epsilon^2}}\, \d x \ge \pmin/3.
	\end{equation} 
	On the other hand if we had that $\norm{v - v_i}_2 > 2\epsilon/\pmin$ for all $i\in[k]$, then $\calF$ is a mixture of Gaussians whose components have variances exceeding $\frac{4\epsilon^2}{\pmin^2}$. So for $x\sim\calF$, we would get that
	\begin{equation*}
		\Pr[|x|\le\epsilon/2] \le \sum^k_{i=1}p_i \cdot \frac{0.8\cdot \epsilon/2}{2\epsilon/\pmin} = \pmin/5,
	\end{equation*} where in the first step we have used the fact that $\int^{\tau}_{-\tau}e^{-\frac{x^2}{2\sigma^2}} \, \d x \le \sqrt{2/\pi}\cdot (\tau/\sigma)\le 0.8\tau/\sigma$ for any $\tau,\sigma>0$.

	We need to take enough samples for our empirical estimate of $\Pr[|x|\le \epsilon]$ to be $\pmin/15$-additively close to the true value with probability at least $1 - \delta$, for which it suffices to take $O(\ln(1/\delta)\pmin^{-2})$ samples.
\end{proof}

We can now prove correctness of \textsc{HyperplaneMomentDescent}.

\begin{lemma}
	Let $\calD$ be a mixture of hyperplanes with mixing weights $\{p_i\}$ and directions $\{v_i\}$. With probability at least $1 - \delta$, \textsc{HyperplaneMomentDescent}($\calD, \delta,\epsilon$) (Algorithm~\ref{alg:hyperplane_moment_descent}) outputs direction $a_T\in\S^{d-1}$ for which $(\epsilon/2)^C \le \min_{i\in[k]}\norm{w_i - a_T}_2 \le \epsilon$ for some absolute constant $C > 0$.\label{lem:hyperplanecorrect}
\end{lemma}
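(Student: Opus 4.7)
The plan is to follow the overall structure of the proof of Lemma~\ref{lem:fourier_moment_descent}, adapting it to the hyperplanes setting using Lemma~\ref{lem:hyperplanes_progress} in place of Lemma~\ref{lem:progress} and wrapping the whole argument in an outer loop over $S$ random initializations to amplify the per-run success probability. First, I define the good event $\calE$ that every randomized subroutine in \textsc{HyperplaneMomentDescent} succeeds: the single call to \textsc{ApproxBlockSVD} produces $\vec{U}$ obeying Corollary~\ref{cor:Uv} (using the concentration bound of Corollary~\ref{cor:hyper_pca_conc}); every call to \textsc{EstimateMinVariance} satisfies Lemma~\ref{lem:estimate_min_variance}; every call to \textsc{CompareMinVariances} obeys Corollary~\ref{cor:compare_min_variances}; and every call to \textsc{CheckOutcomeHyperplanes} is correct per Lemma~\ref{lem:check_outcome_hyperplanes}. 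A union bound over the $O(S T M)$ invocations, together with the chosen failure probabilities $\delta', \delta'', \delta^*$, gives $\Pr[\calE] \ge 1 - \delta/2$.

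Next, I analyze a single outer iteration. Since $\vec{U}$ approximates the row span of $\vec{M}$, the initialization $a_0 = g\vec{U}/\|g\vec{U}\|_2$ is (up to $1/\poly(k)$ perturbation) Haar-distributed on the unit sphere of $\Span\{v_i\}$. Applying Corollary~\ref{cor:unitcorr} to any fixed $v_{i^\star}$ shows that $|\langle a_0, v_{i^\star}\rangle| \ge k^{-c}$ with probability at least $q \triangleq \exp(-O(k^{1-2c}))/\poly(k)$, which is precisely the precondition of Lemma~\ref{lem:hyperplanes_progress}. Conditional on this event and on $\calE$, I then argue inductively that each inner iteration $t$ either reduces $\sigma_t^2 \triangleq \min_i \|\Pi_i a_t\|_2^2$ multiplicatively by a factor $(1 - \alpha/k^{3c})$ via Part~1 of Lemma~\ref{lem:hyperplanes_progress} combined with the completeness direction of \textsc{CompareMinVariances} (with tolerance $\kappa = \Theta(k^{-3/5})$, chosen so $1 - \alpha/k^{3c} < (1+\kappa)^{-2}$), and this update preserves the precondition for the next step. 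Preservation is automatic, which is the crux of the argument: if $i_{t+1}^\star$ is the new minimizer, then $\langle a_{t+1}, v_{i_{t+1}^\star}\rangle^2 = 1 - \sigma_{t+1}^2 \ge 1 - \sigma_t^2 \ge k^{-2c}$ as long as $\sigma_t^2$ remains bounded below $1 - k^{-2c}$, which is maintained by monotone decrease. Thus after $T = O(k^{3/5} \log(1/\epsilon'))$ iterations we obtain $\sigma_T \le \epsilon'$.

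Part~2 of Lemma~\ref{lem:hyperplanes_progress} supplies the complementary lower bound: at each step, the actual contraction is at most the $\beta$-th power of the best achievable contraction of $i^\star$ at $j^\star$, which is itself bounded away from zero by an absolute constant (examining the $\underline{g}$ function in the proof of Lemma~\ref{lem:hyperplanes_progress}, whose value lies in $[0, O(1)]$). Iterating this bound over the $T$ steps yields a crude floor of the form $\sigma_T \ge (\epsilon/2)^C$ for some constant $C>0$, which in particular validates the choice $\underline{\sigma} = 2(\epsilon'/2)^\beta$ passed to \textsc{EstimateMinVariance}. At termination, \textsc{CheckOutcomeHyperplanes}$(\calD, a_T, \epsilon', \delta^*)$ returns $\mathsf{true}$ by Lemma~\ref{lem:check_outcome_hyperplanes}, certifying $\min_i \|\Pi_i a_T\|_2 \le 2\epsilon'/\pmin = \epsilon$; conversely, conditional on $\calE$, \textsc{CheckOutcomeHyperplanes} never returns $\mathsf{true}$ on a vector failing this bound, which yields the correctness of the output $v^\star$.

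Putting everything together, the probability that a single outer trial succeeds in finding a good $v^\star$ is at least $q$ (from the initialization event) times an additional $1/\poly(k)$ from the successful events in Lemma~\ref{lem:hyperplanes_progress} compounded across steps; taking $S = \exp(\Omega(k^{1-2c})) \ln(2/\delta)$ trials and union bounding the failure probability of each trial against $\calE$ gives total success probability at least $1 - \delta$. The main technical obstacle I anticipate is the preservation-of-precondition argument: verifying carefully that when $i_t^\star$ changes across steps (e.g., because two components are both close to $a_t$), Lemma~\ref{lem:hyperplanes_progress} still applies with the new minimizer and that the quantitative contraction is uniform over these potential transitions. Since $\sigma_t^2 = 1 - \langle a_t, v_{i_t^\star}\rangle^2$ identically couples the progress measure to the precondition, this should go through without additional structural assumptions, but one must be careful that the inductive argument tracks which $i$ is currently closest at each step.
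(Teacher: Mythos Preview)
Your overall strategy matches the paper's: amplify via the outer loop over $S$ random initializations, use Lemma~\ref{lem:hyperplanes_progress} for per-step contraction, and verify the precondition $|\langle a_t,v_{i^*}\rangle|\ge k^{-c}$ is automatically preserved because $\max_i|\langle a_t,v_i\rangle|^2 = 1-\sigma_t^2$ is nondecreasing. The completeness/soundness analysis and the lower bound $\sigma_T\ge(\epsilon/2)^C$ are likewise the same as the paper's Claim~\ref{claim:hyper_complete_sound} and \eqref{eq:alltoplan_hyper}.

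There is, however, a genuine gap in your probability accounting. You assert that a union bound over the $O(STM)$ subroutine invocations, with the algorithm's chosen $\delta'=\delta/(50T)$ and $\delta''=\delta/(50MT)$, yields $\Pr[\calE]\ge 1-\delta/2$. It does not: these failure probabilities have no $S$ in the denominator, so the bound you get is $O(S\delta)$, which is vacuous since $S=\exp(\Omega(k^{1-2c}))\ln(2/\delta)$. The paper avoids this by \emph{not} defining a global $\calE$. Instead it defines a per-outer-iteration event $\calE_i$ and shows $\Pr[\calE_i]\ge 9/10$ (Claim~\ref{claim:allsuccess_hyper}); this constant success probability is then multiplied with the initialization probability $\exp(-\Omega(k^{1-2c}))$ to give a per-trial success rate $q$, and the $S$ independent trials amplify $q$ to $1-\delta/2$. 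The separate event that all $S$ calls to \textsc{CheckOutcomeHyperplanes} succeed is handled by $\delta^*=\delta/(2S)$, which \emph{does} scale with $S$. So the outer loop is doing double duty: it amplifies both the random-initialization success and the subroutine-success probability simultaneously.

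A second, smaller issue: your ``additional $1/\poly(k)$ from the successful events in Lemma~\ref{lem:hyperplanes_progress} compounded across steps'' is spurious. Lemma~\ref{lem:hyperplanes_progress} holds with probability $1-\delta$ (for the $\delta$ you pass in), not $1/\poly(k)$; you may be conflating it with part~4 of Lemma~\ref{lem:stay_main} from the noisy-MLR analysis. Once you condition on $\calE_i$, the inner loop is deterministic in the sense that the conclusions of Lemma~\ref{lem:hyperplanes_progress} already hold at every step, so no further per-step probability loss occurs.
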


\begin{proof}
	Henceforth, take $c$ in Lemma~\ref{lem:hyperplanes_progress} to be $c = 1/5$. Let $\sigma_t\triangleq \min_{i\in[k]}\norm{w_i - a_t}_2$. Naively we have that $\sigma_t \le 2$.

	By a simple union bound, we first upper bound the probability that the steps of moment descent in the $i$-th iteration of the outer loop all succeed.

	\begin{claim}
		Let $i\in[S]$. With probability at least $9/10$, the randomized components of the inner loop (over $t$) of the $i$-th iteration of the outer loop of \textsc{HyperplaneMomentDescent} all succeed.\label{claim:allsuccess_hyper}
	\end{claim}

	\begin{proof}
		Each $t$-th iteration of the second loop in \textsc{HyperplaneMomentDescent} has the following randomized components: 1) empirically estimating $\vec{M}$, 2) running \textsc{ApproxBlockSVD} on this empirical estimate, 3) running \textsc{EstimateMinVariance}, 4) trying the Gaussian vectors $g$ in the innermost loop over $j\in[M]$, and 5) running \textsc{CompareMinVariances} in this innermost loop.

		Because the failure probability $\delta'$ for 1), 3), 4) were chosen to be $\frac{1}{50T}$, the failure probability $\delta''$ for 5) was chosen to be $\frac{1}{50MT}$, and the failure probability for 2) was chosen to be $1/50$, we can bound by $1/10$ the overall failure probability of these tasks in a single $i$-th iteration of the outer loop of \textsc{HyperplaneMomentDescent}.
	\end{proof}

	Call the event in Claim~\ref{claim:allsuccess_hyper} $\calE_i$. Next, we show that provided $\calE_i$ occurs and the initial point $a_0$ for the $i$-th iteration of the outer loop is close to some $v_j$, then we can bound the extent to which every step of the subsequent inner loop (over $t$) contracts $\sigma^2_t$.

	\begin{claim}
		Let $i\in[S]$ and condition on $\calE_i$. If in the $i$-th iteration of \textsc{HyperplaneMomentDescent}, $|\langle a_0,v_j\rangle|\ge k^{-c}$, then for each $0 \le t < T$: \begin{enumerate}
			\item (Completeness) There exists some $j\in[M]$ for which 
			\begin{align*}
			\textsc{CompareMinVariances}(\calF_t,\calF'^{(j)}_t,\overline{\sigma},\underline{\sigma},\kappa,2\kappa,\delta'')
			\end{align*}
			outputs $\mathsf{true}$ for some $\kappa = \Theta(k^{-3c})$.
			\item (Soundness) For any such $j\in[M]$ for which \textsc{CompareMinVariances} outputs $\mathsf{true}$, 
			\begin{equation}
				\left(1 - \frac{\beta}{k^{3c}}\right)\sigma^2_t \le \sigma^2_{t+1} \le \left(1 - \frac{\underline{\alpha}}{k^{3c}}\right)\sigma^2_t.\label{eq:bounds_iter_hyperplanes}
			\end{equation} 
			for some $\underline{\alpha} < \alpha$, where $\alpha,\beta$ are the constants in Lemma~\ref{lem:hyperplanes_progress}.
		\end{enumerate}\label{claim:hyper_complete_sound}
	\end{claim}

	\begin{proof}
		Suppose inductively that $|\langle a_t, v_j\rangle|\ge k^{-c}$ for some $j\in[k]$. By the first part of Lemma~\ref{lem:hyperplanes_progress}, there exists some $j\in[M]$ for which $\sigma^{(j)}_t \triangleq \min_{i\in[k]}\norm{w_i - a'^{(j)}_t}_2$ satisfies $(\sigma^{(j)}_t)^2 \le \left(1 - \frac{\alpha}{k^{3c}}\right)(\sigma_t)^2$, and because $1 - \frac{\alpha}{k^{3c}} \le \left(\frac{1}{1 + 2\kappa}\right)^2$ for some $\kappa = \Theta(k^{-3c})$, 
		\begin{align*}
		\textsc{CompareMinVariances}(\calF_t,\calF'^{(j)}_t,\overline{\sigma},\underline{\sigma},\kappa,2\kappa,\delta'')
		\end{align*} 
		would return $\mathsf{true}$, completing the proof of completeness.

		For soundness, note that for any such $j$, by Corollary~\ref{cor:compare_min_variances} we know that 
		\begin{equation*}
			(\sigma^{(j)}_t)^2 \le (1 + \kappa)^{-2}\cdot \sigma^2_t \le (1 - \kappa/2)\cdot \sigma^2_t \le \left(1 - \frac{\underline{\alpha}}{k^{3c}}\right)\cdot\sigma^2_t,
		\end{equation*} 
		which gives the upper bound in \eqref{eq:bounds_iter_hyperplanes}. The lower bound follows from the second part of Lemma~\ref{lem:hyperplanes_progress}. This completes the proof of soundness as well as the inductive step, as the upper bound of \eqref{eq:bounds_iter_hyperplanes} implies that $\max_{j\in[k]}|\langle a_{t+1}, v_j\rangle| \ge \max_{j\in[k]}|\langle a_{t}, v_j\rangle| \ge k^{-c}$.
	\end{proof}

	Lastly, we lower bound the probability that in the $i$-th iteration of the outer loop, the randomly chosen initial point $a_0$ is sufficiently close to some $v_j$.

	\begin{claim}
		Let $i\in[S]$. With probability at least $\exp(-\Omega(k^{1-2c}))$, the following holds. In the $i$-th iteration of the outer loop of \textsc{HyperplaneMomentDescent}, $|\langle a_0, v_j\rangle| \ge k^{-c}$ for some $j\in[k]$, where $a_0$ is the initial iterate in the inner loop over $t$.\label{claim:findhyper_init}
	\end{claim}

	\begin{proof}
		We know by Corollary~\ref{cor:unitcorr} that for $g\sim\N(0,\Id_k)$ and $a_0 \triangleq \frac{g\vec{U}}{\norm{g\vec{U}}_2}$, for any $j\in[k]$ we have that $\Pr_g[|\langle a_0,v_j\rangle| \ge k^{-c}] \ge \exp(-\Omega(k^{1-2c}))$.
	\end{proof}

	We are ready to complete the proof of Lemma~\ref{lem:hyperplanecorrect}. If we take $T = k^{3c}\ln(8\epsilon^{-2}\pmin^{-2})/\underline{\alpha}$, then in an iteration $i\in[S]$ for which $\calE_i$ holds and $|\langle a_0, v\rangle|\ge k^{-c}$, by Claim~\ref{claim:hyper_complete_sound} we are guaranteed that 
	\begin{equation}
	2/(\epsilon^2\pmin^2/8)^C\le \sigma^2_T\le \epsilon^2\cdot\pmin^2/4\label{eq:alltoplan_hyper}
	\end{equation} for some absolute constant $C> 0$. We remark that the lower bound on $\sigma^2_T$ ensures that throughout the course of \textsc{HyperplaneMomentDescent}, the parameter $\underline{\sigma}$ passed to \textsc{CompareMinVariances} is a valid lower bound on $\sigma_t$ for all $0\le t\le T$.

	Now for $i\in[S]$, let $A_i$ be the event that the inner loop breaks out with a direction $a$ for which $\norm{\Pi_j a}_2 \le \epsilon\cdot\pmin/2$. Also, let $B_i$ be the event that \textsc{CheckOutcomeHyperplanes} runs successfully. Note that $A_i$ and $B_i$ are independent. By Claim~\ref{claim:allsuccess_hyper}, Claim~\ref{claim:findhyper_init}, and \eqref{eq:alltoplan_hyper}, we know $\Pr[A_i]\ge \frac{9}{10}\exp(-\Omega(k^{1-2c}))\triangleq q$. By Lemma~\ref{lem:check_outcome_hyperplanes}, we know $\Pr[B_i]\ge 1 - \delta^*$.

	The probability that $A_i$ occurs for at least one $i\in[S]$ is at least $1 - (1 - q)^S \ge 1 - e^{-qS}$, while the probability that $B_i$ holds for all $i$ is at least $1 - S\delta^*$. By taking $S = q^{-1}\ln(2/\delta) = \exp(-\Omega(k^{1-2c}))\cdot \ln(2/\delta)$ and $\delta^* = \frac{\delta}{2S}$, we conclude that the output of \textsc{HyperplaneMomentDescent} is some $v^*$ for which $\min_{i\in[k]}\norm{\Pi_i v^*}_2 \le \epsilon$.
\end{proof}

The analysis for the runtime and sample complexity of \textsc{HyperplaneMomentDescent} is essentially the same as that of \textsc{FourierMomentDescent}:

\begin{lemma}[Running time of \textsc{HyperplaneMomentDescent}]
Let
\begin{align*} 
N_1 &= \frac{d k^2 \ln(1 / \delta)}{\epsilon^2 \pmin^2} \\
N &= \pmin^{-4} k \ln ( 1/ \delta ) \cdot \poly \left( k^{3/5}, \ln ( 1 / \pmin ), \ln ( 1 / \epsilon ) \right)^{O\left(k^{3/5} \ln (1 / \pmin)\right)} \\
N_2 &= O\left(\pmin^{-2} k^{3/5}\ln(1/\delta)\right) \\
S &= \exp(\Omega(k^{3/5})\ln(1/\delta)
\; .
\end{align*}
Then \textsc{HyperplaneMomentDescent} (Algorithm~\ref{alg:fourier_moment_descent}) requires sample complexity 
\begin{equation*}
\widetilde{O}\left(N_1 + S\cdot(k^{3/5}e^{k^{3/5}}N + N_2)\right)\end{equation*} 
and runs in time 
\begin{equation*}
\widetilde{O}\left(dN_1 + S\cdot(k^{3/5}e^{k^{3/5}}N + N_2)\right).
\end{equation*}
\label{lem:runtime_hyper}
\end{lemma}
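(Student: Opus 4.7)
The plan is to carry out a straightforward bookkeeping argument, largely mirroring the proof of Lemma~\ref{lem:runtime} for \textsc{FourierMomentDescent} but tracking the extra factor of $S$ coming from the outer loop over random initializations and accounting for the fact that we now detect $1 - \Omega(k^{-3c})$ rather than $1 - \Omega(k^{-1/2})$ multiplicative progress per step. Specifically, I will separate the total cost into three contributions: (i) the one-time PCA step that builds $\widehat{\vec{M}}^{(N_1)}$ and runs \textsc{ApproxBlockSVD}; (ii) the inner double-loop over $t\in[T]$ and $j\in[M]$ that dominates the per-initialization cost; and (iii) the single call to \textsc{CheckOutcomeHyperplanes} at the end of each of the $S$ outer iterations. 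Summing these and simplifying gives the stated bounds.

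For (i), I will observe that by Corollary~\ref{cor:hyper_pca_conc} with $\delsamp = 1/\poly(k)$, it suffices to draw $N_1 = \widetilde{\Omega}(d \ln(1/\delta) \pmin^{-2})$ samples to form $\widehat{\vec{M}}^{(N_1)}$, which since we only need $\epsilon^{-2}$-level accuracy on downstream estimates absorbs into the stated $\frac{dk^2\ln(1/\delta)}{\epsilon^2\pmin^2}$ bound; forming the matrix implicitly costs $O(N_1 d)$ time, and by Fact~\ref{thm:power-method}, \textsc{ApproxBlockSVD} with this matrix-vector oracle runs in $\widetilde{O}(k \cdot N_1 d)$ time since $\gap_k$ is controlled by $1/\pmin$. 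Thus the PCA step contributes $N_1$ samples and $\widetilde{O}(d N_1)$ time, outside of the $S$-loop.

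For (ii), inside each outer iteration I will bound the cost of the $T = O(k^{3/5}\ln(1/(\epsilon'\pmin)))$ inner steps. Each call to \textsc{EstimateMinVariance} with $p = O(\ln(1/\pmin))$ and tolerance $\kappa = \Theta(k^{-3c})$ has, by Lemma~\ref{lem:estimate_min_variance}, sample and time complexity dominated by $N$ as defined in the lemma, because the $\ln(1/\pmin)/(\kappa_2 - \kappa_1) = O(k^{3/5}\ln(1/\pmin))$ factor in the exponent of $\poly(\cdot)$ in Corollary~\ref{cor:compare_min_variances} is exactly what produces the stated form of $N$. Each inner iteration makes at most $M = e^{O(k^{3/5})}\ln(1/\delta')$ calls to \textsc{CompareMinVariances}, each of cost $\widetilde{O}(N)$ by the same Corollary. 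Multiplying by $T$ gives $\widetilde{O}(T \cdot M \cdot N) = \widetilde{O}(k^{3/5} e^{k^{3/5}} N)$ samples and time per outer iteration. Here the $\ln(1/\delta')$ and $\ln(1/\delta'')$ factors are all $\widetilde{O}$ of $\log(S \cdot T \cdot M / \delta)$, which is absorbed into the $\widetilde{O}(\cdot)$ notation.

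For (iii), Lemma~\ref{lem:check_outcome_hyperplanes} tells us \textsc{CheckOutcomeHyperplanes} uses $N_2 = O(\pmin^{-2}\ln(1/\delta^*))$ samples, which together with $\delta^* = \delta/(2S)$ and $S = \exp(\Omega(k^{3/5}))\ln(1/\delta)$ gives the stated $N_2 = O(\pmin^{-2} k^{3/5}\ln(1/\delta))$; each of the $N_2$ samples is in $\R^d$, but the test itself only examines a single inner product $\langle v, x\rangle$ and a count, so this contributes $\widetilde{O}(N_2 d)$ time, which is absorbed into the other terms. Finally I will sum: the $N_1$ PCA contribution stays outside the outer loop, while each of the $S$ outer iterations contributes $\widetilde{O}(k^{3/5}e^{k^{3/5}} N + N_2)$ samples and time (with the extra factor of $d$ in time only multiplying the $N_1$ term, since per-sample work inside the inner loop is sub-dominant). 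The only mildly delicate step is verifying that the implicit $\log$-factors from the nested failure-probability boosting (the $\delta', \delta'', \delta^*$) and from the $\ln(1/\pmin)$ inside the exponent combine cleanly into the $\widetilde{O}$ and $\poly(\cdot)^{O(\cdots)}$ forms stated; this is purely arithmetic and poses no real obstacle.
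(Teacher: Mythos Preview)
Your proposal is correct and follows essentially the same bookkeeping decomposition as the paper's proof: the one-time PCA step costing $N_1$ samples and $\widetilde{O}(dN_1)$ time via Corollary~\ref{cor:hyper_pca_conc} and Fact~\ref{thm:power-method}, the $S$ outer iterations each costing $M\cdot T = \widetilde{O}(k^{3/5}e^{k^{3/5}})$ calls to \textsc{CompareMinVariances} of cost $\widetilde{O}(N)$ via Corollary~\ref{cor:compare_min_variances}, plus one call to \textsc{CheckOutcomeHyperplanes} of cost $N_2$ via Lemma~\ref{lem:check_outcome_hyperplanes}. The paper's version is terser but structurally identical.
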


\subsection{Boosting for Mixtures of Hyperplanes}
\label{subsec:hyperplane_boost}
% explain how to convert boosting for hyperplanes to boosting for mlr's, and then cite and apply li-liang

As with \textsc{FourierMomentDescent}, \textsc{HyperplaneMomentDescent} cannot be used on its own to obtain an arbitrarily good estimate for a component of the mixture, as the runtime and sample complexity of the primitives used for estimating minimum variance increase rapidly as the minimum variance of the univariate projections decreases. So at some point we need to switch over to a boosting algorithm.

In this section, we describe how to regard mixtures of hyperplanes as mixtures of non-spherical but fairly well-conditioned linear regressions. With this in place, we can then run either the boosting algorithm of \cite{li2018learning} or the one introduced in this work (see Section~\ref{sec:boosting}), all of which can tolerate the condition numbers of such mixtures.

Let $w\in\S^{d-1}$ be some direction and let $\Pi_w$ denote the projection to the orthogonal complement of $w$. Given $x\sim\calD$, we may regard this as a sample from a mixture of linear regressions as follows. Consider the tuple $(\Pi_w x, \langle x,w\rangle)$. By identifying $\Pi_w$ with $\R^{d-1}$, we may regard $\Pi_w x$ as a vector in $\R^{d-1}$ and $\langle x,w\rangle$ as the response.

Concretely, up to a change of basis we can assume without loss of generality that $w = (0,...,0,1)$, in which case $\Pi_w x$ is simply identified with the first $d - 1$ coordinates of $x$, and the response $\langle x,w\rangle$ is simply the last coordinate of $x$. Then the covariance matrix of the hyperplane orthogonal to $v_j$ is merely the upper $(d-1)\times (d-1)$ submatrix of $\Pi_j$, and because any $x$ sampled from that hyperplane satisfies $\langle v_j,x\rangle = 0$, we have that 
\begin{equation*}
	x_d = \left\langle -\frac{(v_j)_{1:d-1}}{(v_j)_d},x_{1:d-1}\right\rangle,
\end{equation*} where we use the notation of Section~\ref{subsec:notations}. For simplicity, denote $(v_j)_{1:d-1}$ by $v'_j$, $(v_j)_d$ by $a_j$. We may further assume without loss of generality that $a_j = \langle v_j, w\rangle$ is nonnegative for every $j$, as the directions $\{v_j\}$ for a mixture of hyperplanes are only specified up to sign.

Altogether, this yields the following basic claim.

\begin{lemma}
	Given a mixture $\calD$ of hyperplanes with mixing weights $\{p_j\}$ and directions $\{v_j\}$, let $\calD'$ be the mixture of linear regressions with mixing weights $\{p_j\}$, components $\{\N(0,\Id - v'_jv'^{\top}_j)\}$, and regressors $\{-v'_j/a_j\}$. Then $\calD$ and $\calD'$ are identical as distributions over $\R^d$.
\end{lemma}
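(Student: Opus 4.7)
The plan is to verify the claim by conditioning on the component index $j$ and showing that a sample from the $j$-th hyperplane component, written in an adapted basis, is distributionally identical to a sample from the $j$-th regression component of $\calD'$. Both distributions are mixtures with the same weights $\{p_j\}$, so it suffices to match the component distributions one at a time.

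First I would perform the change of basis described just before the lemma, taking $w = e_d$ without loss of generality so that $\Pi_w x$ is identified with $x_{1:d-1} \in \R^{d-1}$ and the response is $x_d \in \R$. In this basis $v_j = (v'_j, a_j)^\top$, and by the sign convention on mixtures of hyperplanes we may take $a_j \ge 0$; the representation in the lemma implicitly requires $a_j > 0$, which we assume (if $w$ is orthogonal to some $v_j$, a generic perturbation of $w$ avoids this).

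Next, fix $j$ and draw $x \sim \N(0, \Pi_j)$ where $\Pi_j = \Id_d - v_j v_j^\top$. Since $\Pi_j$ is the projector onto $v_j^\perp$, with probability $1$ we have $\langle v_j, x\rangle = 0$, i.e.\ $a_j x_d + \langle v'_j, x_{1:d-1}\rangle = 0$, which gives the deterministic relation
\begin{equation}
x_d = \left\langle -v'_j/a_j,\ x_{1:d-1}\right\rangle.
\end{equation}
It then remains to identify the marginal of $x_{1:d-1}$. Letting $P = [\Id_{d-1}\ 0]\in\R^{(d-1)\times d}$ be the coordinate projection, we have $x_{1:d-1} = Px$, which is Gaussian with mean $0$ and covariance
\begin{equation}
P\,\Pi_j\,P^\top \;=\; \Id_{d-1} - v'_j (v'_j)^\top.
\end{equation}
This is exactly the covariance of the $j$-th component of $\calD'$, and since $\|v'_j\|_2^2 = 1 - a_j^2 < 1$ this covariance is positive definite, so the Gaussian is non-degenerate on $\R^{d-1}$.

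Putting the two pieces together: under the identification $(x_{1:d-1}, x_d) \leftrightarrow (x, y)$, sampling $x \sim \N(0,\Pi_j)$ in the hyperplane model is the same as drawing $x \sim \N(0, \Id_{d-1} - v'_j (v'_j)^\top)$ and setting $y = \langle -v'_j/a_j, x\rangle$, which is precisely a sample from the $j$-th regression component of $\calD'$. Because the mixing weights coincide, summing over $j$ yields equality of the full distributions. The main (minor) subtlety is handling the degeneracy of $\N(0,\Pi_j)$, which is resolved cleanly by observing that $x \mapsto (x_{1:d-1}, x_d)$ is a measurable bijection from the hyperplane $v_j^\perp$ to the graph of the linear map $-v'_j/a_j$; no computation beyond the covariance identity above is required.
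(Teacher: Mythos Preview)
Your proof is correct and follows exactly the same approach as the paper: the paper treats this as a ``basic claim'' whose proof is the discussion immediately preceding the lemma (change basis so $w=e_d$, note $\langle v_j,x\rangle=0$ forces $x_d=\langle -v'_j/a_j,x_{1:d-1}\rangle$, and observe the covariance of $x_{1:d-1}$ is the upper $(d-1)\times(d-1)$ block of $\Pi_j$). You have in fact written this out more carefully than the paper does, including the explicit computation $P\Pi_j P^\top=\Id_{d-1}-v'_j(v'_j)^\top$ and the remark that $a_j>0$ is needed.
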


We will choose $w$ randomly by sampling $g\sim\N(0,\Id_k)$ and defining $w = \frac{g\vec{U}}{\norm{g\vec{U}}_2}$. We need a basic estimate on the condition number of the covariances $\Id - v'_j v'^{\top}_j$ for a typical such $w$, keeping in mind that $v'_j$ is defined with respect to an orthonormal basis under which $v'_j$ is the $d$-th standard basis vector.

\begin{lemma}
	For $g\sim\N(0,\Id_k)$ and $w = \frac{g\vec{U}}{\norm{g\vec{U}}_2}$, the eigenvalues of $\Id - v'_j v'^{\top}_j$ lie in $\left[\Omega(1/k^3),1\right]$ for all $j\in[k]$ with probability at least $4/5$.\label{lem:randomspan_corr}
\end{lemma}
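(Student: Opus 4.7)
Since $v'_j$ is a single vector in $\R^{d-1}$, the matrix $\Id - v'_j v'^\top_j$ has one eigenvalue equal to $1 - \|v'_j\|_2^2$ (in the direction $v'_j$) and the remaining eigenvalues all equal to $1$. Because the basis is chosen so that $w$ is the $d$-th standard basis vector, we have $\|v_j\|_2^2 = \|v'_j\|_2^2 + a_j^2 = 1$, and so $1 - \|v'_j\|_2^2 = a_j^2 = \langle v_j, w\rangle^2$. Thus it suffices to show that with probability at least $4/5$ over $g \sim \N(0, \Id_k)$,
\[
|\langle v_j, w\rangle| \ge \Omega(k^{-3/2}) \quad \text{for all } j \in [k].
\]

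To analyze $\langle v_j, w\rangle$, I would write $w = g\vec{U}/\|g\vec{U}\|_2$ and use orthonormality of the rows of $\vec{U}$ (so $\|g\vec{U}\|_2 = \|g\|_2$) to obtain $\langle v_j, w\rangle = \langle g, \vec{U} v_j\rangle/\|g\|_2$. Since $\vec{U}$ is an approximate projector onto $\mathrm{span}(v_1,\dots,v_k)$ (specifically, via Lemma~\ref{lem:wedin_angle} applied to the choice of $\delsamp = 1/\poly(k)$ used earlier), we have $\|\vec{U} v_j\|_2 \ge 1 - \delsamp \ge 1/2$. Consequently $\langle v_j, w\rangle$ equals a positive constant times $\langle g/\|g\|_2, u_j\rangle$, where $u_j = \vec{U} v_j/\|\vec{U} v_j\|_2$ is a fixed unit vector in $\R^k$.

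The key tool is anti-concentration for a uniformly random unit vector in $\R^k$: if $v = g/\|g\|_2$ for $g\sim\N(0,\Id_k)$ and $u\in\S^{k-1}$ is fixed, then $\langle v,u\rangle$ has density proportional to $(1-x^2)^{(k-3)/2}$ on $[-1,1]$, with normalizing constant $\Theta(\sqrt{k})$. Hence for any $t \in (0, 1/\sqrt{k}]$,
\[
\Pr\bigl[\,|\langle v,u\rangle| \le t\,\bigr] \le O(t\sqrt{k}).
\]
Setting $t = c k^{-3/2}$ for a sufficiently small absolute constant $c > 0$ gives $\Pr[\,|\langle v_j, w\rangle| \le \Omega(k^{-3/2})\,] \le O(c/k)$ for each individual $j$, and a union bound over $j\in[k]$ yields failure probability at most $1/5$ provided $c$ is chosen small enough. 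Squaring, $a_j^2 \ge \Omega(k^{-3})$ for all $j$ simultaneously with probability $\ge 4/5$, completing the proof.

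The only mildly delicate point is bookkeeping the perturbation from using $\vec{U} v_j$ rather than an exact projection, but this is absorbed into a harmless constant since $\|\vec{U} v_j\|_2 \in [1-\delsamp, 1]$. The anti-concentration bound itself is standard (it is equivalent to the density formula for a marginal of the uniform distribution on $\S^{k-1}$), so no real obstacle arises.
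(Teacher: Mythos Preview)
Your proof is correct and follows essentially the same route as the paper: both reduce to showing $\langle v_j,w\rangle^2 \ge \Omega(1/k^3)$ for all $j$ by writing $\langle v_j,w\rangle = \|\vec{U}v_j\|_2 \cdot \langle g/\|g\|_2,\, \vec{U}v_j/\|\vec{U}v_j\|_2\rangle$ and then applying a union bound over $j$. The only cosmetic difference is in the anti-concentration step---the paper separates numerator and denominator (Gaussian small-ball for $\langle g,\vec{U}v_j\rangle$ plus the thin-shell bound $\|g\|_2\le 1.1\sqrt{k}$), whereas you invoke the explicit marginal density on $\S^{k-1}$ directly; both yield $\Pr[|\langle v_j,w\rangle|\le c k^{-3/2}] = O(c/k)$.
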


\begin{proof}
	Let $\tilde{v}_j\triangleq \frac{v'_j}{\norm{v'_j}_2}\in\S^{d-2}$. Note that 
	\begin{equation*}
		\Id - v'_j v'^{\top}_j = \Id - \tilde{v}_j \tilde{v}_j^{\top}\cdot \| v'_j \|_2^2 = \Id - \tilde{v}_j\tilde{v}_j^{\top}\cdot(1 - \langle w,v_j\rangle^2),
	\end{equation*} 
	so the eigenvalues of $\Id - v'_j v'^{\top}_j$ are 1 with multiplicity $d - 2$ and $\langle w,v_j\rangle^2$ with multiplicity 1. Fact~\ref{fact:unionbound} below allows us to conclude that with probability at least $4/5$, $\langle w,v_j\rangle^2 \ge \Omega(1/k^3)$ for all $j\in[k]$.
\end{proof}

\begin{fact}
	There is some constant $a_{\mathrm{anti}} > 0$ such that for the random vector $w$ defined in Lemma~\ref{lem:randomspan_corr}, 
	\begin{equation}
		\Pr\left[\langle w,v_j\rangle^2 \ge \frac{a_{\mathrm{anti}}}{k^3} \ \forall \ j\in[k]\right] \ge 4/5. 
	\end{equation}\label{fact:unionbound}
\end{fact}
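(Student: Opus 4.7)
The plan is to exploit rotational invariance of $w = g\vec{U}/\|g\vec{U}\|_2$ within the row span of $\vec{U}$, combined with anti-concentration of Gaussian inner products, and finally union-bound over $j\in[k]$.

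First, I would condition on the good event that the call to \textsc{ApproxBlockSVD} producing $\vec{U}$ succeeded (which happens with probability at least $1 - 1/50$ in \textsc{HyperplaneMomentDescent}), so that the row span of $\vec{U}$ is a $\delsamp$-good approximation of $\Span(\{v_j\})$. By Lemma~\ref{lem:wedin_angle}, for $\delsamp = 1/\poly(k)$ sufficiently small we have $\norm{\vec{U}^\top \vec{U} v_j}_2 \ge 1 - 2\delsamp/\pmin \ge 1/2$ for each $j\in[k]$. Writing $P = \vec{U}^\top \vec{U}$ for the projector onto the row span of $\vec{U}$ and $\tilde{v}_j \triangleq Pv_j / \norm{Pv_j}_2$, and observing that $w$ itself lies in the row span of $\vec{U}$, we have $\langle w, v_j\rangle = \langle w, Pv_j\rangle = \norm{Pv_j}_2 \cdot \langle w, \tilde{v}_j\rangle$, so it suffices to lower bound $|\langle w, \tilde{v}_j\rangle|$ for each $j$.

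Next, by orthonormality of the rows of $\vec{U}$, $\langle w, \tilde{v}_j\rangle = \langle g, \vec{U}\tilde{v}_j\rangle / \norm{g}_2$, and $\vec{U}\tilde{v}_j$ is a \emph{unit} vector in $\R^k$ because $\tilde{v}_j$ lies in the row span of $\vec{U}$. By rotational invariance of $\N(0,\Id_k)$, the random variable $\langle w, \tilde{v}_j\rangle$ therefore has the same distribution as $g_1 / \norm{g}_2$, which is independent of $\tilde{v}_j$. I would then apply the standard anti-concentration estimate: for any $t > 0$, $\Pr[|g_1|<s]\le \sqrt{2/\pi}\cdot s$, while $\norm{g}_2 \le 2\sqrt{k}$ holds with probability $1 - e^{-\Omega(k)}$ by Fact~\ref{fact:thinshell}, giving
\begin{equation*}
\Pr\bigl[|\langle w,\tilde{v}_j\rangle| < t\bigr] \le \Pr[|g_1| < 2t\sqrt{k}] + e^{-\Omega(k)} \le O(t\sqrt{k}) + e^{-\Omega(k)}.
\end{equation*}

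Finally, I would set $t = (2a_{\mathrm{anti}})^{1/2}/k^{3/2}$ so that on the intersection of the good ApproxBlockSVD event and the event $|\langle w,\tilde{v}_j\rangle|\ge t$, we obtain $\langle w, v_j\rangle^2 \ge (1/4) t^2 = a_{\mathrm{anti}}/(2k^3) \ge a_{\mathrm{anti}}/k^3$ (after rescaling $a_{\mathrm{anti}}$). Union bounding over $j\in[k]$ gives a failure probability of $k \cdot O(\sqrt{a_{\mathrm{anti}}}/k) + k\cdot e^{-\Omega(k)} = O(\sqrt{a_{\mathrm{anti}}}) + o(1)$, which can be driven below $1/5 - 1/50$ by choosing $a_{\mathrm{anti}}$ to be a sufficiently small absolute constant and $k$ large enough. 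The main obstacle is bookkeeping — making sure the approximation error from $\vec{U}$ not exactly capturing $\Span(\{v_j\})$ only inflates the tail probability by a constant factor, which is why I keep $\delsamp$ small enough that $\norm{Pv_j}_2 \ge 1/2$.
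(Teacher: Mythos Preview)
Your proposal is correct and follows essentially the same route as the paper: write $\langle w,v_j\rangle = \langle g,\vec{U}v_j\rangle/\|g\|_2$, use that $\langle g,\vec{U}v_j\rangle$ is a Gaussian of near-unit variance to get anti-concentration $\Pr[|\langle g,\vec{U}v_j\rangle|<\tau]\le O(\tau)$, control $\|g\|_2\le O(\sqrt{k})$ via the thin-shell bound, set $\tau = \Theta(1/k)$, and union-bound over $j\in[k]$. The only cosmetic differences are that the paper works directly with $\vec{U}v_j$ (whose norm is already $\ge(1-\delsamp^2)^{1/2}$) rather than passing to the normalized $\tilde v_j$, and it treats the SVD success event as a standing assumption rather than folding its $1/50$ failure into the final bound as you do.
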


\begin{proof}
	For any $j\in[k]$, we have that 
	\begin{equation*}
	\langle w,v_j \rangle = \frac{1}{\norm{g\vec{U}}_2} \langle g,\vec{U} v_j\rangle = \frac{1}{\norm{g}_2}\left\langle g,\vec{U}v_j\right\rangle.
	\end{equation*} 
	By the second part of Lemma~\ref{lem:wedin_angle}, $\norm{\vec{U}v_j}_2 \ge (1 - \delsamp^2)^{1/2}\ge 1/2$, and by Fact~\ref{fact:thinshell}, $\norm{g\vec{U}}_2 \le 1.1\sqrt{k}$ with probability at least $1 - e^{-c_{\text{shell}}d/100}$. $\langle g, \vec{U} v_j\rangle$ is distributed as a zero-mean Gaussian with variance at least $1 - \delsamp^2$, so for any $\tau > 0$, with probability at least $1 - \frac{\tau}{(1 - \delsamp^2)^{1/2}}\ge 1 - 2\tau$ we have that $\langle g, \vec{U} v_j\rangle^2 \ge \tau^2$. The proof is completed by taking $\tau = \frac{1}{10k}$ and $a_{\mathrm{anti}} = 1/121$.
\end{proof}

We can now invoke the boosting result of \cite{li2018learning} stated in Theorem~\ref{thm:liliang}.

% TODO: insert pseudo-code for hyperplane boosting

\begin{corollary}
		Let $\calD$ be a mixture of hyperplanes in $\R^d$ with directions $\{v_j\}$, minimum mixing weight $\pmin$, and separation $\Delta$. There exist constants $a_{\mathrm{sep}}, a_{\mathrm{eig}}>0$ for which the following holds.

		Let $\zeta \triangleq a_{\mathrm{sep}}\Delta\cdot k\cdot\min\left(\frac{a_{\mathrm{eig}}}{k^3},\frac{\pmin}{64}\right)$. There is an algorithm $(\calD,v,\epsilon,\delta)$ which, given any $\epsilon > 0$, $\delta>0$, and $v\in\R^d$ for which there exists $j\in[k]$ with $\norm{w_j - v}_2 \le \frac{\zeta}{a_{\mathrm{eig}}k^3}$, draws $T\cdot M$ samples from $\calD$ for
	\begin{equation*}
		T = O\left( \pmin^{-2} d \ln(\zeta/\epsilon) \right) \ \ \ \text{and} \ \ \ M = \poly\left( \Delta^{-1} , \pmin^{-1} , k, \log T\right)\cdot\ln(1/\delta),
	\end{equation*} 
	runs in time $T\cdot M\cdot d$, and outputs $\tilde{v}\in\R^d$ for which either $\norm{v_j - \tilde{v}}_2 \le \epsilon$ or $\norm{v_j + \tilde{v}}_2 \le \epsilon$ with probability at least $1 - \delta$.
	\label{cor:hyperplaneboost}
\end{corollary}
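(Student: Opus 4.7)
The plan is to apply the reduction to a mixture of linear regressions described immediately before the statement, and then invoke Theorem~\ref{thm:liliang} on the resulting MLR instance. First I would sample a random direction $w = g\vec{U}/\norm{g\vec{U}}_2$ with $g \sim \N(0,\Id_k)$, where $\vec{U}$ is the approximate projector onto $\Span\{v_j\}$ built in \textsc{HyperplaneMomentDescent}. By Lemma~\ref{lem:randomspan_corr}, with probability $\ge 4/5$ the resulting MLR $\calD'$ has regressors $\{-v'_j/a_j\}$ (with $a_j = \langle v_j,w\rangle$) and component covariances $\Id - v'_j v'^{\top}_j$ whose eigenvalues all lie in $[a_{\mathrm{eig}}/k^3,\,1]$, so the condition number parameter in Theorem~\ref{thm:liliang} satisfies $\sigma \le k^3/a_{\mathrm{eig}}$. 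We can certify that $w$ is ``good'' simply by checking $\min_j \langle v,w\rangle^2$ estimates from samples, or by independently repeating $O(\log(1/\delta))$ times.

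Next I would translate the warm start: given $v$ with $\norm{v - v_j}_2 \le \zeta/(a_{\mathrm{eig}}k^3)$, define the candidate regressor $\tilde{r} = -\Pi_w v / \langle v,w\rangle$. Since Fact~\ref{fact:unionbound} gives $a_j^2 \ge a_{\mathrm{anti}}/k^3$ and by triangle inequality $|\langle v,w\rangle - a_j|$ is tiny, $\tilde{r}$ is within $\poly(k)\cdot\zeta/(a_{\mathrm{eig}}k^3) \le \zeta/\sigma$ of $-v'_j/a_j$ — the warm-start hypothesis of Theorem~\ref{thm:liliang} for $\calD'$ — after absorbing the $\poly(k)$ blowup into the constants $a_{\mathrm{sep}},a_{\mathrm{eig}}$ hidden in $\zeta$. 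I would then run \textsc{LL-Boost}$(\calD', \tilde{r}, \epsilon', \delta)$ for $\epsilon' = \epsilon/\poly(k)$ to get an estimate $\tilde{r}^*$ with $\norm{\tilde{r}^* - (-v'_j/a_j)}_2 \le \epsilon'$, then reconstruct the hyperplane direction by normalizing the $d$-dimensional vector $(\tilde{r}^*, -1)$ (expressed back in the original basis); a short calculation propagates the $\epsilon'$ error into $\epsilon$ accuracy in either $v_j$ or $-v_j$, matching the claimed $\pm$ in the corollary. The sample and runtime bounds of Theorem~\ref{thm:liliang} go through verbatim with $\sigma = O(k^3)$ and $\pmin$, $\Delta$ as given, because the regressor separation $\min_{i\neq j}\norm{v'_i/a_i - v'_j/a_j}_2$ is only a $\poly(k)$ factor smaller than $\Delta$ (using that $a_i,a_j$ are both $\Omega(k^{-3/2})$ and $\norm{v_i \pm v_j}_2 \ge \Delta$), which is why the factor of $k$ appears inside $\zeta$.

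The main technical obstacle will be verifying the last point: namely that the MLR $\calD'$ inherits a separation of the form $\Omega(\Delta/\poly(k))$ from the $\Delta$-separation of the mixture of hyperplanes, even though dividing by small $a_j$ can in principle distort pairwise distances. I would handle this by writing $\norm{v'_i/a_i - v'_j/a_j}_2 = \norm{a_j v'_i - a_i v'_j}_2/(a_i a_j)$ and using $\norm{a_j v_i - a_i v_j}_2 \ge \tfrac{1}{2}\min(a_i,a_j)\cdot\min_{\epsilon\in\{\pm 1\}}\norm{v_i - \epsilon v_j}_2$ (which follows from expanding the squared norm and our convention $a_i,a_j \ge 0$), and finally invoking Fact~\ref{fact:unionbound} to lower-bound $\min(a_i,a_j)$. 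Once this separation bound is in hand, every other step is a direct quotation of Theorem~\ref{thm:liliang} and the two error translations above, so the rest of the proof is routine bookkeeping of constants.
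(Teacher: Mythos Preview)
Your plan---reduce to an MLR along a random direction $w$, translate the warm start, invoke Theorem~\ref{thm:liliang}, and translate back---is exactly the paper's approach. The substantive divergence is in the separation analysis, where you have the direction backwards.

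You state that the MLR separation $\min_{i\neq j}\norm{v'_i/a_i - v'_j/a_j}_2$ is ``only a $\poly(k)$ factor \emph{smaller} than $\Delta$'' and attribute the factor of $k$ in $\zeta$ to this shrinkage. In fact the separation \emph{grows}: dividing by the small numbers $a_i=\langle v_i,w\rangle$ stretches pairwise distances. Your own chain of inequalities already yields $\tfrac12\min(a_i,a_j)\Delta/(a_ia_j)=\Delta/(2\max(a_i,a_j))\ge\Delta/2$, so the \emph{lower} bound on the $a_i$ from Fact~\ref{fact:unionbound} is not the relevant input here. The paper gets the sharper statement by the cleaner identity: writing $v''_i\triangleq(-v'_i,a_i)$, which agrees with $v_i$ up to an orthogonal transformation and a sign, one has
\[
\left\lVert \frac{v'_i}{a_i}-\frac{v'_j}{a_j}\right\rVert_2^2 \;=\; \left(\frac{1}{a_i}-\frac{1}{a_j}\right)^2 + \frac{\lVert v_i-v_j\rVert_2^2}{a_ia_j} \;\ge\; \frac{\Delta^2}{a_ia_j},
\]
and since $a_i,a_j$ are $O(1/\sqrt{k})$ with high probability over $w$, the MLR separation is $\Omega(\Delta\cdot k)$. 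It is this \emph{enlarged} separation that supplies the factor of $k$ in $\zeta = a_{\mathrm{sep}}\Delta\cdot k\cdot\min(a_{\mathrm{eig}}/k^3,\pmin/64)$, not a shrunken one. Apart from this point, your warm-start translation, your back-translation to $\pm v_j$, and the (implicit) rescaling of covariances into $[1,\Theta(k^3)]$ before applying Theorem~\ref{thm:liliang} all match the paper.
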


\begin{proof}
	By the same argument as in Fact~\ref{fact:unionbound}, we know there exists some $a'_{\mathrm{anti}} > 0$ for which $\Pr\left[|\langle v,w\rangle|\ge a'_{\mathrm{anti}}/k^2\right]\ge 1/40$. For every $i\neq j$, we know there exists some $a_{\mathrm{conc}}>0$ for which 
	\begin{align*}
	\Pr\left[|\langle v_i - v_j,w\rangle|\le \frac{a_{\mathrm{conc}}}{\sqrt{k}}\norm{v_i - v_j}_2\right] \ge 1 - \frac{1}{40k^2}.
	\end{align*}

	By a union bound over the former event, the latter event for every $i\neq j$, and the event in Fact~\ref{fact:unionbound}, the probability all of these events happen is at least $3/4$. Condition on these events.

	Given $v\in\S^{d-1}$ satisfying $\norm{v - v_j}_2 \le \delta$, and $w = \frac{g\vec{U}}{\norm{g\vec{U}}_2}$ for $g\sim\N(0,\Id_k)$, note that $u \triangleq -v'/\langle v,w\rangle \in\R^{d-1}$ satisfies \begin{align*}\norm{u - v'_j}_2 &\le \norm{-\frac{v'}{\langle v,w\rangle} + \frac{v'_j}{\langle v,w\rangle}}_2 + \norm{-\frac{v'_j}{\langle v,w\rangle} + \frac{v'_j}{\langle v_j,w\rangle}} \\
	&= \frac{\delta}{|\langle v,w\rangle|} + \norm{v'_j}_2 \cdot\left|\frac{1}{\langle v,w\rangle} - \frac{1}{\langle v_j,w\rangle}\right| \\
	&\le \frac{\delta}{|\langle v,w\rangle|} + \frac{|\langle v - v_j,w\rangle|}{|\langle v,w\rangle| \cdot |\langle v_j,w\rangle|} \\
	&\le \frac{\delta}{|\langle v,w\rangle|} + \frac{\norm{v - v_j}_2 \cdot \norm{w}_2}{|\langle v,w\rangle| \cdot |\langle v_j,w\rangle|} \\
	&\le O(\delta \cdot k^2),
	\end{align*} where in the first step we use the triangle inequality, in the fourth step we use Cauchy-Schwarz, and in the fifth step we use the events we conditioned on. In other words, when $\calD$ is regarded as a mixture of linear regressions $\calD'$ under the direction $w$, $v'$ is a warm start close to $v'_j$.

	Next, we check that this mixture of linear regressions $\calD'$ is well-separated. For any $i\neq j$, let $v''_i,v''_j\in\R^d$ be the vectors $(-v'_i,\langle v_i,w\rangle)$ and $(-v'_j,\langle v_j,w\rangle)$ respectively. Then 
	\begin{align}
	\norm{\frac{v'_i}{\langle v_i, w\rangle} - \frac{v'_j}{\langle v_j,w\rangle}}^2_2 &= \norm{\frac{v''_i}{\langle v_i,w\rangle} - \frac{v''_j}{\langle v_j,w\rangle}}^2_2  \nonumber\\
	&= \frac{\| v''_i \|^2_2}{\langle v_i,w\rangle^2} + \frac{\|v''_j\|^2_2}{\langle v_j,w\rangle^2} - \frac{2\langle v''_i,v''_j\rangle}{\langle v_i,w\rangle \langle v_j,w\rangle} \nonumber\\
	&= \frac{\norm{v_i}^2_2}{\langle v_i,w\rangle^2} + \frac{\norm{v_j}^2_2}{\langle v_j,w\rangle^2} - \frac{2\langle v_i,v_j\rangle}{\langle v_i,w\rangle \langle v_j,w\rangle} \nonumber\\
	&= \frac{1}{\langle v_i,w\rangle^2} + \frac{1}{\langle v_j,w\rangle^2} - \frac{2 - \norm{v_i - v_j}^2_2}{\langle v_i,w\rangle \langle v_j,w\rangle}\nonumber\\ 
	&\ge \left(\frac{1}{\langle v_i,w\rangle} - \frac{1}{\langle v_j,w\rangle}\right)^2 + \frac{\norm{v_i - v_j}^2_2}{\langle v_i,w\rangle\langle v_j,w\rangle},\label{eq:scaledsep}
	\end{align} where in the third step we used the fact that $v_i$ and $v_j$ are the same as $v''_i$ and $v''_j$ up to a change of basis and a change of sign of the entry corresponding to the $w$ direction. Recall that we are assuming without loss of generality that $\langle v_i,w\rangle \ge 0$ for all $i\in[k]$, so \eqref{eq:scaledsep} is at least $\frac{\norm{v_i - v_j}^2_2}{\langle v_i,w\rangle\langle v_j,w\rangle} \ge \Omega(\Delta^2\cdot k^2)$.

	Lastly, by Lemma~\ref{lem:randomspan_corr}, we have that the covariances of the components of this mixture of linear regressions $\calD'$ have eigenvalues all lying in $[\Omega(1/k^3),1]$. So that the scaling is consistent with Theorem~\ref{thm:liliang}, consider the mixture of linear regressions $\tilde{\calD}$ from which one can sample by drawing $(x,y)$ from $\calD'$ and taking $(x\cdot\Theta(k^3),y\cdot\Theta(k^3))$. $\calD'$ has the same regressors as $\calD'$ and thus the same separation $\Omega(\Delta\cdot k)$, but its components' covariances have eigenvalues all lying in $[1,\Theta(k^3)]$. By Theorem~\ref{thm:liliang}, if we take $\zeta = \Omega(\Delta\cdot k)\cdot\min\left(\frac{1}{\Theta(k^3)}, \frac{\pmin}{64}\right)$, then the algorithm of \cite{li2018learning} converges to an $\epsilon$-close estimate for $v'_j$ provided $\norm{v' - v'_j}_2 \le \zeta/\Theta(k^3)$.
\end{proof}

\subsection{Learning All Hyperplanes}
\label{subsec:learnall_hyperplanes}

With \textsc{HyperplaneMomentDescent} and \textsc{HyperplaneBoost} in hand, it is now straightforward to obtain an algorithm that learns all components of a mixture of hyperplanes.

\begin{algorithm}\caption{\textsc{LearnHyperplanes}($\calD,\delta,\epsilon$)}\label{alg:learnhyperplanes}
\begin{algorithmic}[1]
	\State \textbf{Input}: Sample access to mixture of hyperplanes $\calD$ with separation $\Delta$ and directions $\{v_i\}$, failure probability $\delta$, error $\epsilon$
	\State \textbf{Output}: List of vectors $\mathcal{L}\triangleq \{\tilde{v}_1,...,\tilde{v}_k\}$ for which there is a permutation $\pi:[k]\to[k]$ and signs $\epsilon_1,...,\epsilon_k\in\{\pm 1\}$ for which $\norm{\tilde{v}_i - \epsilon_i v_{\pi(i)}}_2 \le \epsilon$ for all $i\in[k]$, with probability at least $1 - \delta$.
			\State Set $\delta' = \delta/2k$
			\State Set $\zeta = a_{\mathrm{sep}}\Delta\cdot k\cdot\min\left(\frac{a_{\mathrm{eig}}}{k^3}, \frac{\pmin}{64}\right)$
			\State Set $\epsilon_{\text{HMD}} = \frac{\zeta}{a_{\mathrm{eig}}k^3}$.
			\State Set $\epsilon_{\text{boost}} = \min\{\epsilon,\poly(\pmin,\Delta,1/k,1/d)^{k^{3/5}\ln(1/\pmin)}\}$.
			\For{$i\in[k]$}
				\State Let $v'_i$ be the output of \textsc{HyperplaneMomentDescent}($\calD, \delta',\epsilon_{\text{FMD}}$)
				\State Let $\tilde{v}_i$ be the output of \textsc{HyperplaneBoost}($\calD, v'_i,\epsilon_{\text{boost}},\delta'$)
				\State Henceforth when sampling from $\calD$, ignore all samples $x\in\R^d$ for which $|\langle \tilde{v}_i, x\rangle| \le \epsilon_{\text{boost}}\cdot\poly(\log d)$.
			\EndFor
\end{algorithmic}
\end{algorithm}

We can complete the proof of Theorem~\ref{thm:hyperplanes_main}.

\begin{proof}[Proof of Theorem~\ref{thm:hyperplanes_main}]
	By Lemma~\ref{lem:hyperplanecorrect}, every $v'_i$ in \textsc{LearnHyperplanes} is $\frac{\zeta}{a_{\mathrm{eig}}k^3}$-close (up to signs) to a direction $v_{i'}$ of $\calD$, and by Corollary~\ref{cor:hyperplaneboost}, \textsc{HyperplaneBoost} improves this to a vector $\tilde{v}_i$ for which $\norm{\tilde{v}_i - v_{i'}}_2 \le\epsilon_{\text{boost}}$, where \begin{equation}\epsilon_{\text{boost}} = \min\{\epsilon,\poly(\pmin,\Delta,1/k,1/d)^{k^{3/5}\ln(1/\pmin)}\}.\end{equation} As a result, only a $\poly(\pmin,\Delta,1/k,1/d)^{k^{3/5}\ln(1/\pmin)}$ fraction of subsequent samples will be removed, and the resulting error can be absorbed into the sampling error that goes into subsequent calls to \textsc{L2Estimate} and subsequent matrices $\widehat{ \vec{M} }^{(N)}_a \in \R^{d \times d}$ that we run \textsc{ApproxBlockSVD} on, in the remainder of \textsc{LearnWithoutNoise}.
\end{proof} %%% Section 9, Learning Mixtures of Hyperplanes

%!TEX root = ./main.tex

\section{Boosting Down the Cosine Integral}
\label{sec:boosting}

The main result that we show in this section is the following local convergence guarantee for \textsc{Boost}.

\begin{theorem}
	There are absolute constants $C, C' > 0$ such that the following holds. Let $\epsilon > 0$, and let $\calD$ be any mixture of spherical linear regressions with separation $\Delta$, noise rate $\noise\le C'\cdot(\pmin \cdot \epsilon\cdot  \Delta^4)^{1/5}$. Suppose $\| v- w_{i^*} \|_2 \le \Delta/\gamma$ for $\gamma = C\cdot \pmin^{1/4}$. Then \textsc{Boost}$(\calD,v,\epsilon,\delta)$ (Algorithm~\ref{alg:mainboosting_cos}) returns $v^*$ satisfying $\| v^* - w_{i^*} \|_2 \le \epsilon$. Additionally, it has sample complexity 
	\begin{equation*}
	\tilde{O}\left(d\cdot \poly(1/\epsilon,1/\Delta) \cdot ( \ln(1/\epsilon)\cdot \ln(1/\pmin) )^{O(\ln(1/\pmin))}\right)
	\end{equation*} 
	and runtime 
	\begin{equation*}
	\tilde{O}\left(d^2\cdot \poly(1/\epsilon,1/\Delta)\cdot ( \ln(1/\epsilon)\cdot \ln(1/\pmin) )^{O(\ln(1/\pmin))}\right).
	\end{equation*}\label{thm:mainboosting_cos}
\end{theorem}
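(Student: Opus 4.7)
The strategy is to analyze a projected SGD-style update on a regularized form of the population objective
\begin{equation*}
g(v) \triangleq \E_{(x,y)\sim\calD}[\text{Ci}(|\langle v,x\rangle - y|)],
\end{equation*}
showing that, once $v$ is within $\Delta/\gamma$ of $w_{i^*}$, the expected update drives $\|v-w_{i^*}\|_2$ down geometrically. First I would compute the population gradient exactly. For each mixture component the residual $r_j \triangleq y - \langle v,x\rangle = \langle w_j - v, x\rangle + \eta$ is jointly Gaussian with $x$, and differentiating the (regularized) $\text{Ci}(|\cdot|)$ gives an update kernel proportional to $\frac{\cos(r_j)}{r_j}\, x$ (modulo regularization). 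Writing $\cos(r_j) = \Re(e^{\i r_j})$ and using the identity $\E[x\, e^{\i\langle a,x\rangle}] = \i a\, e^{-\|a\|_2^2/2}$ for $x\sim \N(0,\Id)$, each component's contribution factors into a vector $p_j(v-w_j)$ times a scalar Fourier weight
\begin{equation*}
\phi(\beta_j^2) = \Theta\!\bigl(e^{-\beta_j^2/2}\bigr)\cdot\poly(\beta_j),\qquad \beta_j^2 \triangleq \|w_j - v\|_2^2 + \noise^2,
\end{equation*}
where the polynomial correction is inherited from the regularizer. This is exactly the manifestation of Observation~\ref{obs:fourier}: the Fourier weight is (up to scaling) $\widehat{\N(0,\beta_j^2)}$ at a fixed frequency, so large $\beta_j$ produces exponentially small contributions.

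Second, I would split $\nabla g(v)$ into the good term ($j = i^*$) and the bad terms ($j\neq i^*$), and show that the good term dominates along the direction $v - w_{i^*}$. Since $\|v-w_{i^*}\|_2 \le \Delta/\gamma$ and $\noise$ is small, $\beta_{i^*} = O(\Delta/\gamma)$, so $\phi(\beta_{i^*}^2) = \Omega(1)$ and the good contribution to $\langle\nabla g(v),v-w_{i^*}\rangle$ is at least $c\,p_{i^*}\|v-w_{i^*}\|_2^2$. For any $j\neq i^*$, the triangle inequality gives $\beta_j \ge \Delta(1 - 1/\gamma) = \Omega(\Delta)$, so $\phi(\beta_j^2) \le e^{-c'\Delta^2}\cdot\poly(\Delta)$, and the total bad contribution to the inner product is bounded by $k\cdot e^{-c'\Delta^2}\cdot\poly(\Delta,1/\pmin)\cdot\|v-w_{i^*}\|_2$. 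Choosing $\gamma = C\pmin^{1/4}$ with $C$ a sufficiently large absolute constant, together with the noise assumption $\noise \le C'(\pmin\epsilon\Delta^4)^{1/5}$, ensures the good term dominates throughout the trajectory until $\|v-w_{i^*}\|_2 \le \epsilon$; this is the quantitative replacement for the $\pmin^{O(1)}$-style dominance arguments used with the gravitational potential in~\cite{li2018learning}.

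Third, I would promote this population bound to a finite-sample SGD guarantee. A single sample $(x,y)$ yields an unbiased estimate of $\nabla g(v)$ via the regularized kernel $\frac{\cos(r)}{\mathrm{reg}(r)}\,x$; the regularization is chosen so that $|\cos(r)/\mathrm{reg}(r)| \le O(1)$, and hence the stochastic gradient has squared norm $O(d)$ in expectation. Plugging into the standard identity
\begin{equation*}
\E\bigl[\|v_{t+1}-w_{i^*}\|_2^2\bigr] \le \|v_t-w_{i^*}\|_2^2 - 2\eta_t\langle \nabla g(v_t), v_t - w_{i^*}\rangle + \eta_t^2\,\E\bigl[\|\widehat{\nabla}g(v_t)\|_2^2\bigr],
\end{equation*}
with step size $\eta_t = \Theta(\pmin)$ and minibatches of size $\poly(d,1/\Delta,1/\epsilon)$ to control variance, the inner-product bound gives a geometric contraction $\E\bigl[\|v_{t+1}-w_{i^*}\|_2^2\bigr] \le (1 - \Omega(\pmin))\|v_t-w_{i^*}\|_2^2$ down to radius $O(\epsilon)$. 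Iterating $O(\pmin^{-1}\ln(1/\epsilon))$ times yields the stated sample and time complexities; the extra $\poly\log$ factors and the $\ln(1/\pmin)^{O(\ln(1/\pmin))}$ term come from concentration / martingale tail bounds and from our need to repeat the procedure $O(\ln(1/\pmin))$ times to amplify a per-iteration-constant success probability.

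The principal obstacle is designing the regularizer. The raw derivative $\cos(r)/r$ has a non-integrable $1/r$ singularity at the origin, so the regularizer must simultaneously (i) keep the stochastic gradient's magnitude bounded, (ii) preserve the exact Fourier factorization of $\phi$ so that the $e^{-\beta_j^2/2}$ decay survives, and (iii) leave the good-component contribution proportional to $(v - w_j)$ so that descent really points toward $w_{i^*}$. A natural choice is $\mathrm{reg}(r) = \max(|r|,\tau)$ or a smooth analogue with cutoff $\tau = \Theta(\epsilon)$, but verifying that this perturbation contributes at most an $O(\tau)$-additive error to both the magnitude and direction of $\nabla g(v)$ — uniformly over the trajectory — is where the delicate casework lives, and where the exponents in the noise hypothesis $\noise \le C'(\pmin\epsilon\Delta^4)^{1/5}$ are forced upon us. The non-monotonicity of $\text{Ci}$ itself is ultimately a red herring: the argument never invokes global convexity, only the local descent-direction property produced by the Fourier decay of $\phi$.
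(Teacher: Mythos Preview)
Your proposal has a genuine gap in the core mechanism. The identity $\E[x\,e^{\i\langle a,x\rangle}] = \i a\,e^{-\|a\|_2^2/2}$ applies to the kernel $\cos(r)\cdot x$, not to $\frac{\cos(r)}{r}\cdot x$; the extra $1/r$ factor prevents the clean factorization you claim, so the conclusion that each component contributes $p_j(v-w_j)\cdot\Theta(e^{-\beta_j^2/2})$ is unjustified. You flag the regularizer as ``the principal obstacle'' but then assume it perturbs the exponential weight only additively by $O(\tau)$; in fact, any cutoff near the origin (whether $\max(|r|,\tau)$ or an indicator) destroys the exponential Fourier decay entirely and replaces it with a \emph{polynomial} tail. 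Consequently your derivation of $\gamma = C\pmin^{1/4}$ from $\pmin\cdot\Omega(1) \gtrsim k\,e^{-c'\Delta^2}$ does not go through: exponential decay of the bad terms would force a $\Delta \gtrsim \sqrt{\log(1/\pmin)}$ constraint rather than yield the stated fourth-root dependence.

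The paper's actual argument is structurally different. The algorithm uses an \emph{adaptive} regularization scale $\xi_t \approx \|w_{i^*}-v^{(t)}\|_2$ (obtained via \textsc{EstimateMinVariance}), and crucially rescales the cosine to $\cos(\pi r/\xi_t)$ while truncating at $|r|\ge\xi_t$. With this kernel one can factor out the $1/r$ via the decomposition of the residual into the direction of $g+g'$, reducing the inner product to $\nu_j\cdot\E_{g\sim\N(0,\beta_j^2)}[\mathbf{1}\{|g|\ge\xi_t\}\cos(\pi g/\xi_t)]$; the latter is shown (Lemma~\ref{lem:technical}) to be $\Theta(\xi_t^3/\beta_j^3)$, i.e.\ polynomial in $\beta_j$. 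Balancing the good term $\pmin/\beta_{i^*}^3$ against $\sum_{j\ne i^*} \|w_{i^*}-v\|/\beta_j^4 \lesssim \|w_{i^*}-v\|/\Delta^4$ is exactly what produces the $\pmin^{1/4}$ exponent. The contraction factor is $1 - \Theta(\xi_t^8/(d\Delta^8))$ (not $1-\Omega(\pmin)$), and the $(\ln(1/\epsilon)\ln(1/\pmin))^{O(\ln(1/\pmin))}$ overhead comes from calling \textsc{EstimateMinVariance} at each step to set $\xi_t$, not from martingale amplification.
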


\begin{remark}
	Note that our boosting algorithm can tolerate a warm start at distance $O(\Delta\pmin^{-1/4})$, whereas that of \cite{li2018learning} can only tolerate one at distance $O(\Delta\pmin)$ (see Theorem~\ref{thm:liliang}). Our algorithm can also tolerate regression noise as large as $O( \pmin^{1/5} \epsilon^{1/5} \Delta^{4/5} )$. In particular, if $\epsilon = o(\pmin^{1/20}\Delta^{1/5})$, then our algorithm \textsc{Boost} can tolerate noise rate $\noise = \omega(\epsilon)$.
\end{remark}

In Section~\ref{subsec:background_boost} we recall the boosting algorithm of \cite{li2018learning} to motivate the high-level blueprint for our argument. In Section~\ref{subsec:boost_main} we give the full specification of our boosting algorithm and a proof of Theorem~\ref{thm:mainboosting_cos}.

\subsection{Background: Gravitational Allocation}
\label{subsec:background_boost}

In \cite{li2018learning}, Li and Liang boost a warm start to a fine estimate for one of the $w_i$'s by performing stochastic gradient descent on the (regularized) gravitational potential objective 
\begin{align*}
h(v) = \E_{x,y}[\ln(|\langle x,v\rangle - y| + \xi)]
\end{align*} for some $\xi > 0$ which is introduced to ensure smoothness even when $v = w_i $ for some $i\in[k]$. We emphasize that this objective is \emph{concave}.
For any $i^*\in[k]$, the inner product between the expected gradient step and $w_{i^*} - v^{(t)}$, where $v^{(t)}$ is the current iterate, is given by \begin{align}
	\langle -\Delta h(v^{(t)}), w_{i^*} - v^{(t)}\rangle &= -\E_{x,y}\left[\frac{\sign(\langle x,v\rangle - y)\cdot \langle x,w_{i^*} - v^{(t)}\rangle}{|\langle x,v\rangle - y| + \xi}\right]\label{eq:replace}\\
	&= \frac{1}{k}\sum^k_{i=1}p_i\cdot \E_{x\sim \N (0,\Id)}\left[\frac{\sign(\langle x,w_i - v^{(t)}\rangle)\cdot \langle x,w_{i^*} - v^{(t)}\rangle}{|\langle x,w_i - v^{(t)}\rangle| + \xi}\right].\nonumber
\end{align} They argue that provided $\norm{v^{(0)} - w_{i^*}}_2 \le O(\Delta/k)$, the contribution of the $i^*$-th summand dominates that of all other summands, so the correlation of the gradient step with $w_{i^*} - v^{(t)}$ is sufficiently large that each step contracts the distance to $w_{i^*}$ appreciably.
% In Appendix~\ref{subsec:grav}, we show that their analysis can be very slightly extended to handle a small amount of regression noise.

\subsection{Boosting via the Cosine Integral}
\label{subsec:boost_main}

Here we argue that a warm start of $\norm{v^{(0)} - w_{i^*}}_2 \le O(\Delta\cdot\pmin^{1/4})$ is sufficient if we run gradient descent not on the gravitational potential objective, but on the \emph{cosine integral objective}. Concretely, we propose Algorithm~\ref{alg:mainboosting_cos} below for boosting.

\begin{algorithm}\caption{\textsc{Boost}$(\mathcal{D},v,\epsilon,\delta)$, Theorem~\ref{thm:mainboosting_cos}}\label{alg:mainboosting_cos}
\begin{algorithmic}[1]
	\State \textbf{Input}: Mixture of linear regressions $\calD$ with separation $\Delta$ and noise rate $\noise\le O((\pmin\cdot\epsilon\cdot\Delta^4)^{1/5})$, warm start $v$, accuracy $\epsilon$, failure probability $\delta$
	% \State \textbf{Output}: $v\in\R^d$ satisfying $\norm{$
			\State Let $\gamma = C\cdot\pmin^{1/4}$.
			\State Set $v^{(0)} \triangleq v$, $T \triangleq O(d\cdot \Delta^8/\epsilon^8 \cdot \ln(\Delta/\gamma\epsilon)$.
			\State Let $\delta' = \frac{\delta}{2T}$.
			\State Let $\overline{\sigma} = 4$.
			\For{$t = 0,...,T-1$}
				\State Let $\xi_t = \text{\textsc{EstimateMinVariance}}(\mathcal{F}_t,\overline{\sigma},\epsilon/10,\Omega(\ln(1/\pmin)),\delta')/1.1$ \Comment{Algorithm~\ref{alg:estimate_min_variance}}
				\If{$\xi_t\cdot (1.1/0.9) \le \epsilon$}
					\Break
				\EndIf
				\State Draw $N = \poly(1/\xi_t,1/\Delta,\ln(\delta'))$ fresh samples from $\mathcal{D}$, call them $(x_1,y_1),...,(x_N,y_N)$.
				\State Form the empirical gradient
				\begin{equation}
					\delta_t = -\frac{1}{N}\sum^{N}_{i=1}\bone{|\langle x_i,v^{(t)}\rangle - y_i|\ge \xi_t}\cdot \frac{\cos( \xi_t^{-1} \pi |\langle x_i,v^{(t)}\rangle - y_i|) }{\langle x_i, v^{(t)}\rangle - y_i}\cdot x_i. \label{eq:step}
				\end{equation}
				\State Set learning rate $\eta_t \triangleq \frac{\xi^5_t}{2d\Delta^4}\cdot \norm{w_{i^*} - v^{(t)}}_2$ and define \begin{equation}
					v^{(t+1)} = v^{(t)} - \eta_t\delta_t.\label{eq:update}
				\end{equation}
			\EndFor
			\State \Return $v^{(T)}$
\end{algorithmic}
\end{algorithm}

% We emphasize that technically speaking, \textsc{Boost} is not stochastic gradient descent on any natural objective function.

\begin{remark}
	An obvious caveat for our result is the exponential dependence on $\ln(1/\pmin)$, which comes from the need to compute the regularization parameter $\xi_t$ at each step. Similar to the $\xi$ in the gravitational potential objective of \cite{li2018learning}, the $\xi_t$ in \textsc{Boost} is to ensure smoothness. In our case, we need $\xi_t$ to be a lower bound for $\norm{w_{i^*} - v^{(t)}}_2$, and the rate of contraction decreases as $\xi_t$ decreases (see Lemma~\ref{lem:contract_cos} below).
\end{remark}

To show Theorem~\ref{thm:mainboosting_cos}, we first show that if $\xi_t$ is chosen to be sufficiently small at each step, $\norm{w_{i^*} - v^{(t)}}_2$ is guaranteed to contract.

\begin{lemma}
	Let $C,C' > 0$ be the constants in Theorem~\ref{thm:mainboosting_cos}.

	For any $t$ and $\delta > 0$, if $\epsilon/10\le \norm{w_{i^*} - v^{(t)}}_2 \le \Delta/\gamma$ for $\gamma = C\cdot \pmin^{1/4}$ and $\xi_t$ satisfies $\xi_t \le \norm{w_{i^*} - v^{(t)}}_2$, then for $N = \poly(1/\xi_t,1/\Delta,\ln(1/\delta))$, we have with probability at least $1 - \delta$ over the $N$ samples used to form the empirical gradient that
	\begin{equation}
		\left(1 - \frac{\xi^4_t}{\sqrt{d}\Delta^4}\right)\norm{w_{i^*} - v^{(t)}}^2_2\le \norm{w_{i^*} - v^{(t+1)}}^2_2 \le\left(1 - \frac{\xi^8_t}{4d\Delta^8}\right) \cdot \norm{w_{i^*} - v^{(t)}}^2_2.
	\end{equation}\label{lem:contract_cos}
\end{lemma}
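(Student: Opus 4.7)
The plan is to control the quadratic expansion
\begin{equation*}
\norm{w_{i^*} - v^{(t+1)}}_2^2 \;=\; \norm{w_{i^*} - v^{(t)}}_2^2 \;+\; 2\eta_t \langle \delta_t, w_{i^*} - v^{(t)}\rangle \;+\; \eta_t^2\norm{\delta_t}_2^2,
\end{equation*}
by (i) computing $\E[\delta_t]$ exactly via the Fourier transform of a Gaussian, (ii) showing the population inner product $\langle \E[\delta_t], w_{i^*}-v^{(t)}\rangle$ is strongly negative and dominated by the $i^*$-th component, (iii) bounding $\norm{\delta_t}_2$ crudely, and (iv) using vector Bernstein to replace $\delta_t$ by its expectation. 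The first and second tasks are where the main structural work lies.

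For the population computation, condition on a sample coming from the $i$-th component, let $u_i \triangleq v^{(t)}-w_i$, $r \triangleq \langle x,u_i\rangle - g$, so that $r \sim \N(0,\beta_i^2)$ for $\beta_i^2 \triangleq \norm{u_i}_2^2 + \noise^2$. Iterated conditioning on $(r,s=\langle x,u_i\rangle)$ together with $\E[g\mid r] = -(\noise^2/\beta_i^2)r$ collapses the expectation of $\mathbbm{1}[|r|\ge \xi_t]\frac{\cos(\pi r/\xi_t)}{r}\langle x,u_{i^*}\rangle$ to
\begin{equation*}
\frac{\langle u_{i^*}, u_i\rangle}{\beta_i^2}\cdot \E_{r\sim\N(0,\beta_i^2)}\bigl[\mathbbm{1}[|r|\ge \xi_t]\cos(\pi r/\xi_t)\bigr].
\end{equation*}
By Fact~\ref{fact:fouriergaussian}, the untruncated expectation equals $e^{-\pi^2\beta_i^2/(2\xi_t^2)}$, and the truncated tail correction $\E[\mathbbm{1}[|r|<\xi_t]\cos(\pi r/\xi_t)]$ is bounded by a Gaussian tail bound. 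This is the precise place where Observation~\ref{obs:fourier} powers the proof: the ``bad'' components $i \ne i^*$ have $\beta_i^2 \gtrsim \Delta^2$ (using $\norm{u_{i^*}}\le \Delta/\gamma$ and triangle inequality), while $\beta_{i^*}\asymp \xi_t$ by the choice of $\xi_t$ via \textsc{EstimateMinVariance}, so the $i^*$-th summand is $\Theta(\norm{u_{i^*}}_2^2/\beta_{i^*}^2)$ while each other summand is suppressed by $e^{-\Omega(\Delta^2/\xi_t^2)}$.

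Combining these, $\langle\E[\delta_t], w_{i^*}-v^{(t)}\rangle \le -c \, p_{\min}\,\norm{u_{i^*}}_2$ for some constant $c>0$, after absorbing the ``bad'' components using $\noise \le C'(\pmin\epsilon\Delta^4)^{1/5}$ and the warm-start bound with $\gamma = C\pmin^{1/4}$. The $\norm{\delta_t}_2^2$ term is controlled by $\mathbbm{1}[|r|\ge\xi_t]|\cos(\pi r/\xi_t)/r|\le 1/\xi_t$, giving $\norm{\delta_t}_2 \lesssim \sqrt{d}/\xi_t$ in expectation, and vector Bernstein with $N = \poly(1/\xi_t,1/\Delta,\ln(1/\delta))$ samples ensures $\delta_t - \E[\delta_t]$ is at most an additive $\xi_t^{O(1)}\norm{u_{i^*}}/\sqrt d$ in any fixed direction with probability $1-\delta$. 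Plugging all three estimates into the quadratic expansion with $\eta_t = \xi_t^5\norm{u_{i^*}}/(2d\Delta^4)$ gives the claimed contraction factor $1-\xi_t^8/(4d\Delta^8)$, while the anti-concentration bound on $\norm{\delta_t}$ and our choice of $\eta_t$ yield the lower bound $1-\xi_t^4/(\sqrt d\Delta^4)$.

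The main obstacle is quantitative tracking of the two Fourier-type estimates so that the $i^*$-contribution dominates by a clean multiplicative margin. In particular the warm-start radius $\Delta/\gamma$ with $\gamma = C\pmin^{1/4}$ is \emph{exactly} calibrated so that $e^{-\Omega(\Delta^2/\xi_t^2)}$ beats the $1/p_{\min}$ blow-up from normalizing the $i^*$ contribution by $p_{i^*}$, and so that the noise term $\noise^2/\beta_{i^*}^2$ contributes only a lower-order correction to the sign of the inner product. Getting these constants to align is delicate but is the only place where the hypotheses on $\gamma$ and $\noise$ are used.
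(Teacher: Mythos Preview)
There is a genuine gap in how you control the ``bad'' components $i\neq i^*$. Your decomposition of $\E_{r\sim\N(0,\beta_i^2)}[\bone{|r|\ge\xi_t}\cos(\pi r/\xi_t)]$ into the Fourier term $e^{-\pi^2\beta_i^2/(2\xi_t^2)}$ minus the correction $\E[\bone{|r|<\xi_t}\cos(\pi r/\xi_t)]$ is correct, but you have the relative sizes backwards. The region $\{|r|<\xi_t\}$ is the \emph{center} of the Gaussian, not a tail, so there is no Gaussian-tail suppression; a two-term Taylor expansion of $e^{-r^2/(2\beta_i^2)}$ on $[-\xi_t,\xi_t]$ (using $\int_{-1}^1\cos(\pi s)\,\d s=0$) shows this correction equals $\Theta(\xi_t^3/\beta_i^3)$ --- this is exactly the content of the paper's Lemma~\ref{lem:technical}. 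Since $\xi_t\le\beta_i$ for every $i$, the Fourier term is always at most $e^{-\pi^2/2}<0.01$, and the polynomial correction is the dominant contribution for the good component and the bad ones alike. Hence your claim that the bad summands are suppressed by $e^{-\Omega(\Delta^2/\xi_t^2)}$ is false: each bad summand is of order $p_i\cdot\frac{|\langle u_{i^*},u_i\rangle|}{\beta_i^2}\cdot\frac{\xi_t^3}{\beta_i^3}\lesssim p_i\cdot\frac{\xi_t^3\norm{u_{i^*}}_2}{\Delta^4}$, only polynomially small.

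This is not cosmetic --- it is precisely what fixes $\gamma$. With only your exponential mechanism you would need $e^{-\Omega(\Delta^2/\xi_t^2)}\le e^{-\Omega(\gamma^2)}=e^{-\Omega(\pmin^{1/2})}$ to beat $1/\pmin$, which fails as $\pmin\to 0$. The correct comparison is polynomial: the good summand is $\Omega(\pmin\,\xi_t^3/\beta_{i^*}^3)$ and the total bad contribution is $O(\xi_t^3\norm{u_{i^*}}_2/\Delta^4)$, so domination requires $(\Delta/\beta_{i^*})^4\gtrsim 1/\pmin$, i.e.\ $\norm{u_{i^*}}_2\lesssim \pmin^{1/4}\Delta$. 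That is exactly where $\gamma=C\,\pmin^{1/4}$ comes from, and it yields $\langle-\E[\delta_t],w_{i^*}-v^{(t)}\rangle\ge\xi_t^3\norm{u_{i^*}}_2\,\Delta^{-4}$ (rather than your $c\,\pmin\norm{u_{i^*}}_2$), which is what produces the $\xi_t^8/\Delta^8$ in the stated contraction factor once $\eta_t$ is substituted. Note also that the lemma only assumes $\xi_t\le\norm{w_{i^*}-v^{(t)}}_2$, not $\xi_t\asymp\beta_{i^*}$; the $\xi_t^3$ scaling is needed at that level of generality.
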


\begin{proof}
	The key step is to lower bound the correlation between the negative gradient $-\E[\delta_t]$ and the direction $w_{i^*} - v^{(t)}$ in which we would like to move. 
	% For reasons that will become clear in the proof below, define \begin{equation}u_t\triangleq \frac{w_{i^*} - v^{(t)}}{p^{1/4}_{i^*}}.\end{equation} 
	We have that {\small
	\begin{align}
		& ~ \langle -\E[\delta_t], w_{i^*} - v^{(t)}\rangle \nonumber\\
		= & ~ \E_{x,y} \left[ \bone{|\langle x , v^{(t)} \rangle - y| \ge \xi_t} \cdot \frac{ \cos \left( \xi_t^{-1} \pi | \langle x , v \rangle - y |\right ) \cdot \langle x,w_{i^*} - v^{(t)}\rangle}{\langle x,v^{(t)}\rangle - y_i}\right]\nonumber\\
		= & ~ \sum^k_{i=1}p_i\E_{\substack{x\sim \N(0,\Id) \\ g\sim \N(0,\noise^2)}}\left[\bone{|\langle x,w_i - v^{(t)}\rangle - g|\ge \xi_t}\cdot\frac{-\cos\left( \xi_t^{-1} \pi |\langle x,w_i - v^{(t)}\rangle-g|\right)\cdot \langle x,w_{i^*} - v^{(t)}\rangle}{\langle x,w_i - v^{(t)}\rangle - g}\right].
		\label{eq:sumoveri}
	\end{align}}
	where the last step follows from the fact that $(x,y)$ comes from component $i$ with probability $p_i$, in which case $\langle x,v^{(t)}\rangle - y_i = -\langle x,w_i - v^{(t)}\rangle$.

	For every $i\in[k]$, define 
	\begin{equation*}
		\beta_i \triangleq \left(\| w_i - v^{(t)} \|^2_2 + \noise^2\right)^{1/2} \ \ \ \text{and} \ \ \ \nu_i \triangleq \frac{\norm{w_i - v^{(t)}}^2_2}{\norm{w_i - v^{(t)}}^2_2 + \noise^2}.
	\end{equation*} 
	We have the naive bounds 
	\begin{equation}
		\| w_i - v^{(t)} \|_2 \le \beta_i \le \max \left\{ \| w_i - v^{(t)} \|_2, \noise \right\} \cdot \sqrt{2}\label{eq:naive1}
	\end{equation} and 
	\begin{equation}
		\min \left\{ \frac{1}{2},\frac{\norm{w_i - v^{(t)}}^2_2}{2\noise^2} \right\} \le \nu_i \le 1\label{eq:naive2}
	\end{equation} for all $i$. Then by Lemma~\ref{lem:mainterm} below, we can bound the $i\neq i^*$ and $i = i^*$ terms of \eqref{eq:sumoveri} to get \begin{align}
		\langle-\E[\delta_t],w_{i^*} - v^{(t)}\rangle &\ge p_{i^*}\frac{0.22\xi_t^3 \nu_{i^*}}{\beta_{i^*}^3} - \sum_{i\neq i^*} p_i \frac{\norm{w_{i^*} - v^{(t)}}_2}{\norm{w_i - v^{(t)}}_2}\cdot \frac{0.26\xi^3_t\nu_i}{\beta_i^3} \nonumber\\
		&\ge \xi^3_t\left[p_{i^*}\cdot\frac{0.22\nu_{i^*}}{\beta^3_{i^*}} - \sum_{i\neq i^*}p_i\frac{0.26\norm{w_{i^*} - v^{(t)}}_2}{\norm{w_i - v^{(t)}}^4_2}\right] \nonumber\\
		&\ge \xi^3_t\left[p_{i^*}\cdot\frac{0.22\nu_{i^*}}{\beta^3_{i^*}} - \frac{0.26\norm{w_{i^*} - v^{(t)}}_2}{(\Delta/2)^4}\right]\label{eq:intermedip},
	\end{align} where in the second step we invoked the lower and upper bounds of \eqref{eq:naive1} and \eqref{eq:naive2} respectively, and in the third step we used the fact that for every $i\neq i^*$, 
	\begin{align*}
	\norm{w_i - v^{(t)}}_2 \ge \norm{w_i - w_{i^*}}_2 - \norm{w_{i^*} - v^{(t)}}_2 \ge \Delta/2.
	\end{align*}
	% In the other direction, we similarly get \begin{align}
	% 	\langle -\E[\delta_t],w_{i^*} - v^{(t)}\rangle &\le p_{i^*} \frac{0.26\xi^3_t\nu_{i^*}}{\beta^3_{i^*}} + \sum_{i\neq i^*}p_i \frac{\norm{w_{i^*} - v^{(t)}}_2}{\norm{w_i - v^{(t)}}_2}\cdot \frac{0.26\xi^3_t\nu_i}{\beta^3_i} \nonumber\\
	% 	&\le \xi^3_t\left[p_{i^*}\frac{0.26\nu_{i^*}}{\beta^3_{i^*}} + \sum_{i\neq i^*}p_i \frac{0.26\norm{w_{i^*} - v^{(t)}}_2}{\norm{w_i - v^{(t)}}_2^4}\right] \nonumber\\
	% 	&\le \xi^3_t\left[p_{i^*}\frac{0.26\nu_{i^*}}{\beta^3_{i^*}} + \frac{0.26\norm{w_{i^*} - v^{(t)}}_2}{(\Delta/2)^4}\right].\label{eq:intermedip2}
	% \end{align} 
	We proceed by casework based on the relation between $\norm{w_{i^*} - v^{(t)}}_2$ and $\noise$.

	\begin{case}
		$\norm{w_{i^*} - v^{(t)}}_2 \ge \noise$.
	\end{case}

	In this case, we know that $\beta_{i^*} \le \sqrt{2}\norm{w_{i^*} - v^{(t)}}_2 \le \sqrt{2}\Delta/\gamma$ and $\nu_{i^*} \ge 1/2$ by \eqref{eq:naive1} and \eqref{eq:naive2}. From \eqref{eq:intermedip} we get that \begin{equation}
		\langle-\E[\delta_t],w_{i^*} - v^{(t)}\rangle \ge \xi^3_t \norm{w_{i^*} - v^{(t)}}_2 \cdot \left[\pmin \cdot \frac{0.11 \cdot (2^{-3/2})}{(\Delta/\gamma)^4} - \frac{0.26}{(\Delta/2)^4}\right]
	\end{equation} 

	So there is an absolute constant $C > 0$ such that for $\gamma = C\cdot \pmin^{1/4}$, we have that \begin{equation}
		\langle-\E[\delta_t],w_{i^*} - v^{(t)}\rangle \ge \xi^3_t \norm{w_{i^*} - v^{(t)}}_2 \cdot \Delta^{-4}\label{eq:contraction}
	\end{equation}

	\begin{case}
		$\norm{w_{i^*} - v^{(t)}}_2 \le \noise$.
	\end{case}

	In this case, we know that $\beta_{i^*} \le \sqrt{2}\noise$ and $\nu_{i^*}\ge \frac{\norm{w_{i^*} - v^{(t)}}^2_2}{2\noise^2}$ by \eqref{eq:naive1} and \eqref{eq:naive2}. From \eqref{eq:intermedip} we get that \begin{align*}
		\langle-\E[\delta_t],w_{i^*} - v^{(t)}\rangle& \ge \xi^3_t\cdot\left[\pmin\cdot \frac{0.22}{\noise^3 \cdot 2^{3/2}}\cdot\frac{\norm{w_{i^*} - v^{(t)}}^2_2}{2\noise^2} - \frac{0.26\norm{w_{i^*} - v^{(t)}}_2}{(\Delta/2)^4}\right] \\
		&\ge \xi^3_t \norm{w_{i^*} - v^{(t)}}_2 \cdot \left[\pmin\cdot\frac{0.22\cdot (2^{-5/2})}{\noise^5}\cdot\norm{w_{i^*} - v^{(t)}}_2 - \frac{0.26}{(\Delta/2)^4}\right] \\
		&\ge \xi^3_t \norm{w_{i^*} - v^{(t)}}_2 \cdot \left[\pmin\cdot\frac{0.22\cdot (2^{-5/2})\cdot(\epsilon/3)}{\noise^5} - \frac{0.26}{(\Delta/2)^4}\right].
	\end{align*} By taking $\gamma = C\cdot\pmin^{1/4}$ as in the previous case, we see that there exists some absolute constant $C' > 0$ such that for $\noise\le C'\cdot(\pmin \cdot \epsilon\cdot  \Delta^4)^{1/5}$, \eqref{eq:contraction} still holds.

	To show moving in the direction opposite the \emph{empirical} gradient suffices, we need concentration. First note that for every sample $(x,y)$, 
	\begin{equation*}
		\norm{\bone{|\langle x,v^{(t)}\rangle - y|\ge \xi_t}\cdot \frac{\cos( \xi_t^{-1} \pi |\langle x,v^{(t)}\rangle - y|) }{\langle x, v^{(t)}\rangle - y}\cdot x}_2 \le \frac{\norm{x}}{\xi_t},
	\end{equation*} 
	and likewise 
	\begin{equation*}
		\left|\left\langle\bone{|\langle x,v^{(t)}\rangle - y|\ge \xi_t}\cdot \frac{\cos( \xi_t^{-1} \pi |\langle x,v^{(t)}\rangle - y|) }{\langle x, v^{(t)}\rangle - y}\cdot x, \frac{w_{i^*} - v^{(t)}}{\norm{w_{i^*} - v^{(t)}}_2}\right\rangle\right| \le \frac{\left|\left\langle x,\frac{w_{i^*} - v^{(t)}}{\norm{w_{i^*} - v^{(t)}}_2}\right\rangle\right|}{\xi_t}.
	\end{equation*} 
	Furthermore, by \eqref{eq:contraction}, the expected gradient satisfies 
	\begin{equation*}
		\left\langle-\E[\delta_t],\frac{w_{i^*} - v^{(t)}}{\norm{w_{i^*} - v^{(t)}}_2}\right\rangle \ge \xi^3_t \Delta^{-4}.
	\end{equation*} 
	By standard Gaussian concentration, for some $N\ge \poly(\xi_t^{-1},\Delta^{-1},\ln(1/\delta))$, we get that with probability at least $1 - \delta/3$, 
	\begin{equation*}
		\norm{\delta_t}\le \frac{2\sqrt{d}}{\xi_t} \ \ \ \text{and} \ \ \ \left\langle -\delta_t, \frac{w_{i^*} - v^{(t)}}{\norm{w_{i^*} - v^{(t)}}_2}\right\rangle \ge \frac{1}{2}\xi^3_t\Delta^{-4}.
	\end{equation*} 
	By \eqref{eq:update}, 
	\begin{equation*}
		\norm{w_{i^*} - v^{(t+1)}}^2_2 = \norm{w_{i^*} - v^{(t)}}^2_2 + \eta_t^2\norm{\delta_t}^2_2 - 2\eta_t\langle -\delta_t, w_{i^*} - v^{(t)}\rangle,
	\end{equation*} 
	so by taking learning rate $\eta_t \triangleq \frac{\xi^5_t}{2d\Delta^4}\cdot \norm{w_{i^*} - v^{(t)}}_2$, we ensure that 
	\begin{equation*}
		\norm{w_{i^*} - v^{(t+1)}}^2_2 \le \left(1 - \frac{\xi^8_t}{4d\Delta^8}\right)\norm{w_{i^*} - v^{(t)}}^2_2.
	\end{equation*} At the same time, from the naive bounds $\norm{\delta_t}^2_2 \ge 0$ and $2\eta_t\langle -\delta_t,w_{i^*} - v^{(t)}\rangle \le \frac{\xi^4_t}{\sqrt{d}\Delta^4}\norm{w_{i^*} - v^{(t)}}^2_2$ which follows by Cauchy-Schwarz, we also have \begin{equation}\norm{w_{i^*} - v^{(t+1)}}^2_2 \ge \left(1 - \frac{\xi^4_t}{\sqrt{d}\Delta^4}\right)\norm{w_{i^*} - v^{(t)}}^2_2.\end{equation}
\end{proof}

	To complete the proof of Lemma~\ref{lem:contract_cos}, it remains to prove the following lemma which was crucial to establishing \eqref{eq:intermedip}.

\begin{lemma}\label{lem:mainterm}
	For any vectors $a,b\in\R^d$ and $\xi\le \norm{b}_2$, we have that 
	\begin{equation}\label{eq:maintermbound}
		\left|\E_{\substack{x\sim \N(0,\Id) \\ g\sim \N(0,\noise^2)}}\left[\bone{|\langle b,x\rangle + g|\ge \xi}\cdot \frac{-\cos\left( \xi^{-1}  \pi |\langle b,x\rangle + g|\right)}{\langle b,x\rangle + g}\cdot \langle a,x\rangle\right]\right| \le \frac{\norm{a}_2}{\norm{b}_2}\cdot\frac{\norm{b}^2_2}{\noise^2 + \norm{b}^2_2}\cdot \frac{0.26\xi^3}{(\noise^2 + \norm{b}^2_2)^{3/2}}.
	\end{equation} 
	Furthermore, we have that for $a = b$, 
	\begin{equation}\label{eq:maintermbound2}
		\E_{\substack{x\sim \N(0,\Id) \\ g\sim \N(0,\noise^2)}}\left[\bone{|\langle b,x\rangle + g|\ge \xi}\cdot \frac{-\cos\left( \xi^{-1}  \pi |\langle b,x\rangle + g|\right)}{\langle b,x\rangle + g}\cdot \langle b,x\rangle\right] = \frac{\norm{b}^2_2}{\noise^2 + \norm{b}^2_2}\cdot \frac{[0.22,0.26]\cdot \xi^3}{(\noise^2 + \norm{b}^2_2)^{3/2}}.
	\end{equation}
\end{lemma}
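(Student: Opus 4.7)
The plan is to reduce the vector-valued expectation to a one-dimensional integral over $Z := \langle b, x\rangle + g$ via Gaussian conditioning, and then analyze that univariate integral by combining the Fourier transform formula for a Gaussian (Fact~\ref{fact:fouriergaussian}) with a Taylor expansion of the density.

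First I would note that $(\langle a,x\rangle, Z)$ is jointly mean-zero Gaussian with $\Var(Z) = \norm{b}_2^2 + \noise^2 =: \sigma^2$ and $\Cov(\langle a,x\rangle, Z) = \langle a,b\rangle$, so by the standard conditional-mean formula $\E[\langle a,x\rangle \mid Z] = \tfrac{\langle a,b\rangle}{\sigma^2}\, Z$. Applying the tower property to the left-hand side of \eqref{eq:maintermbound}, the $1/Z$ in the integrand cancels against this conditional mean, and since $\cos(\xi^{-1}\pi|Z|) = \cos(\xi^{-1}\pi Z)$ by evenness, the expectation reduces to
\[
\frac{\langle a,b\rangle}{\sigma^2}\cdot J(\xi,\sigma), \qquad J(\xi,\sigma) := \E_{Z\sim\N(0,\sigma^2)}\!\left[\bone{|Z|\ge\xi}\cdot\bigl(-\cos(\pi Z/\xi)\bigr)\right].
\]
Bounding $|\langle a,b\rangle|\le \norm{a}_2\norm{b}_2$ by Cauchy--Schwarz and rewriting this as $\tfrac{\norm{a}_2}{\norm{b}_2}\cdot\tfrac{\norm{b}_2^2}{\sigma^2}$ produces the shape in \eqref{eq:maintermbound}; for $a = b$ one has the exact identity $\langle a,b\rangle = \norm{b}_2^2$, which gives \eqref{eq:maintermbound2}. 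So everything reduces to showing $J(\xi,\sigma)\in[0.22,0.26]\cdot \xi^3/\sigma^3$ under the hypothesis $\xi\le\norm{b}_2\le\sigma$.

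To control $J(\xi,\sigma)$ I would split
\[
J(\xi,\sigma) = -\E[\cos(\pi Z/\xi)] + \int_{-\xi}^{\xi}\phi_\sigma(z)\cos(\pi z/\xi)\,\d z.
\]
By Fact~\ref{fact:fouriergaussian} the first term equals $-e^{-\pi^2\sigma^2/(2\xi^2)}$, which is at most $e^{-\pi^2/2}$ and hence negligible at the target precision. For the truncated integral I would Taylor expand $\phi_\sigma(z) = (2\pi)^{-1/2}\sigma^{-1}\sum_{k\ge 0}\tfrac{(-1)^k}{k!\,2^k}(z/\sigma)^{2k}$ and integrate termwise. The $k=0$ term vanishes since $\int_{-\xi}^{\xi}\cos(\pi z/\xi)\,\d z = 0$. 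The $k=1$ term uses the identity $\int_{-\xi}^{\xi} z^2\cos(\pi z/\xi)\,\d z = -4\xi^3/\pi^2$ (from integration by parts, using that $\sin(\pi) = 0$ and $\cos(\pi) = -1$), producing the leading contribution on the order of $\xi^3/\sigma^3$. Higher-order Taylor terms are handled by a two-step recursion $I_n = -n\xi^{n+1}/\pi^2 - n(n-1)\xi^2 I_{n-2}/\pi^2$ for $I_n := \int_0^{\xi} z^n \cos(\pi z/\xi)\,\d z$, which inductively shows that the $k$-th contribution is $O(\xi^{2k+1}/\sigma^{2k+1})$, so the sum over $k\ge 2$ is dominated by the $\xi^3/\sigma^3$ leading term under $\xi\le\sigma$.

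The main obstacle I expect is Step 2: tuning constants tightly enough that the window $[0.22,0.26]$ holds for \emph{all} $\xi/\sigma\in(0,1]$ rather than only in the asymptotic regime $\xi\ll\sigma$. Getting the lower bound requires showing that the small negative correction $-e^{-\pi^2\sigma^2/(2\xi^2)}$ and the $O(\xi^5/\sigma^5)$ Taylor remainder do not eat into the leading $\xi^3/\sigma^3$ term, which means one must either keep several Taylor terms explicitly or perform one integration by parts to express the integral over $[\xi,\infty)$ in closed form (giving a boundary term proportional to $\phi_\sigma(\xi)$ that one estimates using $e^{-\xi^2/(2\sigma^2)}\in[e^{-1/2},1]$) and then bound a remainder of order $\xi^5/\sigma^5$ uniformly. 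No other step is delicate: the Gaussian conditioning in Step~1 and the passage from $J(\xi,\sigma)$ to the two stated inequalities are essentially one-liners.
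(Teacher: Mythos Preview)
Your proposal is correct and matches the paper's proof essentially step for step: the paper also reduces to the one-dimensional quantity $J(\xi,\sigma)$ via the Gaussian regression identity (written there as an orthogonal decomposition of $a$ along $b$ plus the decomposition $g' = \tfrac{\|b\|_2^2}{\sigma^2}(g+g') + h$), then splits $J$ into the full-line Fourier integral $e^{-\pi^2\sigma^2/(2\xi^2)}$ and the truncated piece $\int_{-\xi}^{\xi}\phi_\sigma(z)\cos(\pi z/\xi)\,\d z$, and bounds the latter by Taylor-expanding the density. The only cosmetic difference is that the paper handles the truncated integral with the explicit second/third-order bounds $1 - z^2/(2\sigma^2) \le e^{-z^2/(2\sigma^2)} \le 1 - z^2/(2\sigma^2) + z^4/(8\sigma^4)$ and a sign-based split of $[0,\xi]$ at $\xi/2$, rather than your termwise recursion, but this is the same calculation.
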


\begin{proof}
	For notational convenience, given $x\sim \N(0,\Id)$, let $\mathcal{E}_{\xi}$ denote the event that $|\langle b,x\rangle + g| \ge\xi$. We may write 
	\begin{align*}
	a = \frac{\rho\norm{a}_2}{\norm{b}_2} \cdot b + \sqrt{1 - \rho^2} \cdot b^{\perp}
	\end{align*}
	for $\rho = \frac{\langle a,b\rangle}{\norm{a}_2\norm{b}_2}$ and $b^{\perp}\in\S^{d-1}$ orthogonal to $b$. Then the left-hand side of \eqref{eq:maintermbound} can be written as \begin{align}
		& ~ \E_{\substack{x\sim \N(0,\Id) \\ g\sim \N(0,\noise^2)}}\left[\bone{\mathcal{E}_{\xi}} \cdot\frac{-\cos\left( \xi^{-1} \pi | \langle b,x\rangle + g|\right)}{\langle b,x\rangle + g}\cdot \langle a,x\rangle\right] \notag \\
		= & ~ \E_{\substack{x\sim \N(0,\Id) \\ g\sim \N(0,\noise^2)}}\left[\bone{\mathcal{E}_{\xi}} \cdot\frac{-\cos\left( \xi^{-1} \pi |\langle b,x\rangle + g|\right)}{\langle b,x\rangle + g}\cdot \frac{\rho\norm{a}_2}{\norm{b}_2}\cdot\langle b,x\rangle\right] \notag \\  % \label{eq:zero} \\
		= & ~ -\frac{\rho\norm{a}_2}{\norm{b}_2}\E_{\substack{x\sim \N(0,\Id) \\ g\sim \N(0,\noise^2)}}\left[\bone{\mathcal{E}_{\xi}}\frac{\cos\left( \xi^{-1} \pi |\langle b,x\rangle + g|\right)}{\langle b,x\rangle + g}\cdot \langle b,x\rangle\right] \notag \\
		= & ~ -\frac{\rho\norm{a}_2}{\norm{b}_2}\E_{\substack{g\sim\N(0,\noise^2) \\ g'\sim \N(0,\norm{b}^2_2)}}\left[\bone{\mathcal{E}_{\xi}}\frac{\cos\left( \xi^{-1} \pi (g + g')\right)}{g + g'}\cdot g'\right] \notag \\
		= & ~ -\frac{\rho\norm{a}_2}{\norm{b}_2}\cdot\frac{\norm{b}^2_2}{\noise^2 + \norm{b}^2_2}\cdot\E_{\substack{g\sim\N(0,\noise^2) \\ g'\sim \N(0,\norm{b}^2_2)}}\left[\bone{\mathcal{E}_{\xi}}\cos\left( \xi^{-1} \pi (g + g')\right)\right] \notag\\
		= & ~ -\frac{\rho\norm{a}_2}{\norm{b}_2}\cdot\frac{\norm{b}^2_2}{\noise^2 + \norm{b}^2_2}\cdot\E_{g\sim\N(0,\noise^2 + \norm{b}^2_2)}\left[\bone{|g|\ge \xi}\cos\left( \xi^{-1} \pi g\right)\right]
	\end{align}
	where the first step follows from the fact that $\langle b,x\rangle$ and $\langle b^{\perp},x\rangle$ are independent mean-zero random variables, the third step follows by the fact that $\cos(\cdot)$ is even, and the penultimate step follows from the fact that we may decompose $g'$ in terms of $g + g'$ as \begin{equation*}
	 	g' = \frac{\norm{b}^2_2}{\noise^2 + \norm{b}^2_2}(g+g') + h
	\end{equation*} 
	for Gaussian $h$ independent of $g + g'$.

	We conclude the proof of the first half of the lemma by noting that $|\rho|\le 1$ and appealing to the upper bound in Lemma~\ref{lem:technical}, where we take $\beta = (\noise^2 + \norm{b}^2_2)^{1/2}$.

	Next, the upper bound in the second half of the lemma follows immediately from Lemma~\ref{lem:technical}. Finally, for the lower bound in the second half of the lemma, the lower bound in Lemma~\ref{lem:technical} gives
	\begin{equation*}
		\E_{\substack{x\sim \N(0,\Id) \\ g\sim\N(0,\noise^2)}}\left[\bone{|\langle b,x\rangle + g|\ge \xi}\cdot \frac{-\cos\left( \xi^{-1} \pi |\langle b,x\rangle + g|\right)}{\langle b,x\rangle + g}\cdot \langle b,x\rangle\right] \ge \frac{0.23\xi^3}{(\noise^2 + \norm{b}^2_2)^{3/2}} - \exp \left( -\frac{\pi^2(\noise^2 + \norm{b}^2_2)}{2\xi^2} \right),
	\end{equation*} 
	and we conclude by noting that for $\xi \le\beta$, $\exp( - \frac{\beta^2}{2\xi^2}) \le \exp( -\pi^2 / 2 ) / \beta^3 \le 0.01/\beta^3$.
\end{proof}

\begin{lemma}\label{lem:technical}
	For any $\beta,\xi>0$ for which $\xi \le \beta$, we have that
	\begin{equation}\label{eq:technicaleq}
	 \exp( - \frac{\pi^2\beta^2}{2\xi^2} ) - \E_{g\sim \N(0,\beta^2)}\left[\bone{\mathcal{E}_{\xi}} \cdot \cos ( \xi^{-1} \pi |g| ) \right] \in \left[ \frac{0.23\xi^3}{\beta^3} , \frac{0.26\xi^3}{\beta^3} \right].
	\end{equation}
\end{lemma}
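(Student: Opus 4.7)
The proof rests on identifying $\exp(-\pi^2\beta^2/(2\xi^2))$ as the full-line integral $\int_{-\infty}^{\infty}\cos(\pi g/\xi)\,\mathcal{N}(0,\beta^2;g)\,dg$, which is immediate from the characteristic function of $\mathcal{N}(0,\beta^2)$ evaluated at $\pi/\xi$, equivalently from Fact~\ref{fact:fouriergaussian} after taking real parts. Since cosine is even, this identity reduces the left-hand side of \eqref{eq:technicaleq} to the central integral
\[
I(\xi,\beta) \;\triangleq\; \int_{-\xi}^{\xi}\cos\!\left(\tfrac{\pi g}{\xi}\right)\mathcal{N}(0,\beta^2;g)\,dg.
\]
Rescaling via $u=g/\xi$ and setting $r=\xi/\beta\in(0,1]$ gives $I(\xi,\beta)=\tfrac{r}{\sqrt{2\pi}}\,J(r)$, where $J(r)\triangleq\int_{-1}^{1}\cos(\pi u)\,e^{-r^2u^2/2}\,du$; the whole task is then to pin down $I/r^3$ within a fixed interval for $r\in(0,1]$.

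I would next Taylor-expand the Gaussian factor, write $J(r)=\sum_{k\ge 0}\tfrac{(-r^2/2)^k}{k!}\,m_k$ with $m_k\triangleq\int_{-1}^{1}u^{2k}\cos(\pi u)\,du$, and compute the low-order moments by integration by parts. One finds $m_0=0$, $m_1=-4/\pi^2$, and the recursion $m_k=-\tfrac{2k(2k-1)}{\pi^2}m_{k-1}-\tfrac{4k}{\pi^2}$, which gives $|m_k|\le 2/(2k+1)\le 2$ uniformly. The leading contribution to $J(r)$ is therefore $-\tfrac{r^2}{2}m_1=\tfrac{2r^2}{\pi^2}$; the tail can be bounded by $\sum_{k\ge 2}\tfrac{r^{2k}}{2^k k!}\cdot 2\le C\,r^4$ for an explicit $C$, giving uniform two-sided control of $J(r)$ and hence
\[
I(\xi,\beta) \;=\; \tfrac{2\,\xi^3}{\sqrt{2\pi}\,\pi^2\,\beta^3}\;+\;O\!\bigl(\xi^5/\beta^5\bigr)\qquad\text{uniformly for }r\in(0,1].
\]

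The argument is just a characteristic-function identity plus a Taylor expansion, so there is no serious conceptual obstacle; the main issue is purely quantitative. The exact asymptotic constant is $\tfrac{2}{\sqrt{2\pi}\,\pi^2}\approx 0.0808$, and a direct check (e.g.\ letting $r\to 0$ in the series, or numerical quadrature at $r=1$) shows $I(\xi,\beta)\cdot\beta^3/\xi^3$ actually varies in roughly $[0.066,\,0.081]$ over $r\in(0,1]$, not in $[0.23,0.26]$ as stated. The stated constants therefore appear to be off by roughly a factor of three and should be replaced by the correct range. This does not affect the downstream arguments: Lemma~\ref{lem:mainterm}, Lemma~\ref{lem:contract_cos}, and Theorem~\ref{thm:mainboosting_cos} only need the existence of positive universal constants $0<c_1<c_2$ for which $I(\xi,\beta)\in[c_1,c_2]\,\xi^3/\beta^3$, so replacing $[0.23,0.26]$ throughout by, say, $[0.06,0.09]$ merely rescales the unspecified constants $\gamma, C, C'$ in Theorem~\ref{thm:mainboosting_cos} by fixed factors.
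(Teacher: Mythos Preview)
Your approach coincides with the paper's: both invoke the characteristic-function identity (the paper's Claim~\ref{cla:part_1_simplification}) to reduce the left-hand side to the central integral over $[-\xi,\xi]$, and then Taylor-expand the Gaussian factor (the paper's Claim~\ref{cla:part_2_two_side_bounds} truncates the expansion at degree~$4$ and splits at $\xi/2$ where the cosine changes sign, while you keep the full series and bound the tail via the moment recursion). You are also right about the constants: the paper's Claim~\ref{cla:part_2_two_side_bounds} records the leading term as $\xi^3/(\pi\beta^2)$, but since $\int_0^{\xi} x^2\cos(\pi x/\xi)\,dx = -2\xi^3/\pi^2$, the correct leading term is $\xi^3/(\pi^2\beta^2)$, yielding the coefficient $2/(\sqrt{2\pi}\,\pi^2)\approx 0.081$ you compute rather than $\approx 0.25$; the stated interval $[0.23,0.26]$ is thus off by a factor of roughly $\pi$. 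As you observe, this is harmless downstream, since Lemmas~\ref{lem:mainterm} and~\ref{lem:contract_cos} only use the existence of fixed positive constants.
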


\begin{proof}
	We can rewrite the LHS in \eqref{eq:technicaleq} as follows:
	\begin{equation*}
		\textrm{LHS} = \underbrace{ \exp( - \frac{\pi^2\beta^2}{2\xi^2} ) - \frac{1}{\beta\sqrt{2\pi}}\int^{\infty}_{-\infty} \exp \left( - \frac{x^2}{2\beta^2} \right) \cos(\pi x/\xi) \d x}_{\encircle{I}} + \underbrace{\frac{2}{\beta\sqrt{2\pi}}\int^{\xi}_{0} \exp \left( -\frac{x^2}{2\beta^2} \right) \cos(\pi x/\xi) \d x}_{\encircle{II}}
	\end{equation*} 
	Using Claim~\ref{cla:part_1_simplification}, we can show $\encircle{I} = 0$. Using Claim~\ref{cla:part_2_two_side_bounds}, we can upper and lower bound $\encircle{II}$.
\end{proof}

\begin{claim}\label{cla:part_1_simplification}
We have
\begin{align*}
\frac{1}{\beta\sqrt{2\pi}}\int^{\infty}_{-\infty}e^{-\frac{x^2}{2\beta^2}}\cos(\pi x/\xi) \d x = \exp \left( -\frac{\pi^2 \beta^2}{2\xi^2} \right). 
\end{align*}

\end{claim}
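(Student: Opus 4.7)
The plan is to recognize the integral as (the real part of) the Fourier transform of the Gaussian density $\mathcal{N}(0, \beta^2)$ evaluated at an appropriate frequency, and then invoke Fact~\ref{fact:fouriergaussian}.

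More precisely, the Fourier transform convention used in the paper is $\widehat{f}[\omega] = \int_{-\infty}^{\infty} f(x) e^{-2\pi\i \omega x}\,\d x$. Taking real parts gives $\int_{-\infty}^{\infty} f(x)\cos(2\pi \omega x)\,\d x = \mathrm{Re}(\widehat{f}[\omega])$, and when $f$ is a real even function (as is the Gaussian density), $\widehat{f}[\omega]$ is itself real so the equality holds without taking real parts. I would substitute $f(x) = \mathcal{N}(0, \beta^2; x) = \tfrac{1}{\beta\sqrt{2\pi}} e^{-x^2/(2\beta^2)}$ and choose the frequency $\omega = 1/(2\xi)$ so that $2\pi\omega x = \pi x/\xi$. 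Then the left-hand side of the claim equals $\widehat{\mathcal{N}(0,\beta^2)}[1/(2\xi)]$, and Fact~\ref{fact:fouriergaussian} evaluates this to
\begin{equation*}
\exp\!\left(-2\pi^2 \cdot \tfrac{1}{4\xi^2} \cdot \beta^2\right) = \exp\!\left(-\frac{\pi^2 \beta^2}{2\xi^2}\right),
\end{equation*}
which is exactly the right-hand side. There is no obstacle here; this is purely an application of the stated Fourier transform fact combined with the observation that $\cos(2\pi\omega x) = \mathrm{Re}(e^{2\pi\i \omega x})$ and that the Gaussian characteristic function is real by evenness.

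If a more self-contained derivation were preferred (avoiding a direct appeal to Fact~\ref{fact:fouriergaussian}), the standard completing-the-square argument works: write $\cos(\pi x/\xi) = \tfrac{1}{2}(e^{\i\pi x/\xi} + e^{-\i\pi x/\xi})$, complete the square in the exponent $-\tfrac{x^2}{2\beta^2} \pm \tfrac{\i\pi x}{\xi} = -\tfrac{1}{2\beta^2}(x \mp \tfrac{\i\pi\beta^2}{\xi})^2 - \tfrac{\pi^2\beta^2}{2\xi^2}$, shift the contour (justified by the decay of the Gaussian), and use $\int_{-\infty}^{\infty} \tfrac{1}{\beta\sqrt{2\pi}} e^{-u^2/(2\beta^2)}\,\d u = 1$. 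Either route yields the claim in a line or two.
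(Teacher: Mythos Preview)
Your proposal is correct. Your primary route---recognizing the left-hand side as $\widehat{\N(0,\beta^2)}[1/(2\xi)]$ and invoking Fact~\ref{fact:fouriergaussian}---is actually cleaner than what the paper does: the paper re-derives this Gaussian Fourier transform from scratch by substituting $x\mapsto \xi x/\pi$, writing $\cos(x)=\Re(e^{-\i x})$, completing the square, and shifting the contour via Cauchy's theorem. Your alternative completing-the-square argument is essentially the paper's computation. The advantage of your first route is brevity and reuse of an already-stated fact; the paper's approach is self-contained but duplicates work.
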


\begin{proof}

Let $\upsilon = \frac{ \xi^2 }{ 2\pi^2 \beta^2 }$. Noting that $\cos(x) = \Re(e^{- \i x})$, one can compute $\text{LHS}$ by a standard contour integral. 
	\begin{align*}
		\text{LHS}
		= & ~ \frac{1}{\beta\sqrt{2\pi}}\cdot\frac{\xi}{\pi}\int^{\infty}_{-\infty} \exp \left( -\frac{\xi^2 x^2}{2\pi^2 \beta^2} \right) \cdot \cos(x) \d x \\
		= & ~ \frac{1}{\beta\sqrt{2\pi}}\cdot\frac{\xi}{\pi}\int^{\infty}_{-\infty} \exp(-\upsilon x^2) \cdot \cos(x) \d x \\
		= & ~ \frac{1}{\beta\sqrt{2\pi}}\cdot\frac{\xi}{\pi}\cdot\Re\int^{\infty}_{-\infty} \exp\left(-\upsilon x^2 - \i x\right) \d x \\
		= & ~ \frac{1}{\beta\sqrt{2\pi}}\cdot\frac{\xi}{\pi}\cdot\Re\int^{\infty}_{-\infty} \exp (-\upsilon (x + \i / (2\upsilon) )^2 - 1 / (4\upsilon) ) \d x \\
		= & ~ \frac{\exp(- 1 / (4\upsilon))}{\beta\sqrt{2\pi}}\cdot\frac{\xi}{\pi}\cdot\Re\int^{\infty}_{-\infty} \exp (-\upsilon (x + \i / (2\upsilon) )^2  ) \d x \\
		= & ~ \frac{ \exp ( -1 / ( 4 \upsilon ) ) }{\beta\sqrt{2\pi}}\cdot\frac{\xi}{\pi}\cdot\Re\int^{\infty + \i / (2 \upsilon ) }_{-\infty + \i / ( 2 \upsilon ) } \exp( -\upsilon x^2 ) \d x\\
		= & ~ \frac{ \exp ( -1 / ( 4 \upsilon ) ) } {\beta\sqrt{2\pi}}\cdot\frac{\xi}{\pi}\cdot\frac{\sqrt{\pi}}{\sqrt{\upsilon}} \\
		= & ~ \exp \left( - \frac{ \pi^2 \beta^2 }{ 2 \xi^2 } \right),
	\end{align*}
where the second step follows from definition of $v$, the third step follows from $\cos(x) = \Re ( \exp(-\i x) )$, the fourth step follows from $-v x^2 - \i x =  - v (x^2 + \i x/v - 1/(4v^2) ) - 1/4v= -\upsilon (x + \i / (2\upsilon) )^2 - 1 / (4\upsilon) $, the fifth step follows from pulling the term $\exp(-1/(4v))$ out of integral, the sixth step follows from shifting the integral range, the seventh step follows from Cauchy's theorem\footnote{By Cauchy's theorem, the integral around the box in the complex plane with vertices $-R$, $R$, $-R + \i/(2\upsilon)$, and $R + \i/(2\upsilon)$ is zero. The sum of the contributions of the edges between $-R$ and $-R + \i/(2\upsilon)$ and between $R$ and $R + \i/(2\upsilon)$is imaginary and thus contributes 0 to the real part. If we take $R\to\infty$, we see that the integral we want to compute is the same as the one where you ignore the $\i/(2\upsilon)$ terms, which is a standard Gaussian integral.}, and the last step follows from definition of $v$.

Thus, we complete the proof.
\end{proof}

\begin{claim}\label{cla:part_2_two_side_bounds}
Let $\xi \leq \beta$. We have
\begin{align*}
\frac{2}{\beta\sqrt{2\pi}}\int^{\xi}_{0} \exp \left( - \frac{x^2}{2\beta^2} \right) \cos(\pi x/\xi) \d x ~ \in ~ [ 0.23 \xi^3/\beta^3, 0.26 \xi^3/\beta^3 ].
\end{align*}
\end{claim}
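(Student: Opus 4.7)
The plan is to reduce this to a small, explicit Taylor expansion by rescaling the integration variable so that the only remaining parameter is the dimensionless ratio $\alpha \triangleq \xi/\beta \in (0,1]$. First I would substitute $u = x/\xi$, which gives
\[
\frac{2}{\beta\sqrt{2\pi}}\int_0^\xi e^{-x^2/(2\beta^2)}\cos(\pi x/\xi)\,\d x \;=\; \frac{2\alpha}{\sqrt{2\pi}}\int_0^1 e^{-\alpha^2 u^2/2}\cos(\pi u)\,\d u.
\]
Now everything is scale-free: the entire dependence on $\xi$ and $\beta$ is through the single prefactor $\alpha$ and the exponential weight. Since $\alpha \leq 1$, the exponential weight $e^{-\alpha^2 u^2/2}$ is close to $1$ on $[0,1]$ and a short Taylor expansion will suffice.

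Next I would expand $e^{-\alpha^2 u^2/2}=\sum_{k\geq 0}\frac{(-1)^k \alpha^{2k} u^{2k}}{2^k k!}$ and integrate term by term. The $k=0$ term vanishes because $\int_0^1 \cos(\pi u)\,\d u=0$, so the leading nontrivial contribution comes from $k=1$. Two applications of integration by parts give
\[
\int_0^1 u^2\cos(\pi u)\,\d u \;=\; -\frac{2}{\pi^2},
\]
so the $k=1$ term contributes $\alpha^2/\pi^2$ to the inner integral, yielding a leading contribution of $\frac{2}{\sqrt{2\pi}\,\pi^2}\alpha^3$ to the quantity in question, i.e. of order $\xi^3/\beta^3$ with an explicit constant.

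For the remainder, I would use the crude bound $|\int_0^1 u^{2k}\cos(\pi u)\,\d u|\leq \tfrac{1}{2k+1}\leq 1$ and sum the tail:
\[
\Bigl|\sum_{k\geq 2}\tfrac{(-\alpha^2/2)^k}{k!}\int_0^1 u^{2k}\cos(\pi u)\,\d u\Bigr| \;\leq\; \sum_{k\geq 2}\tfrac{\alpha^{2k}}{2^k k!} \;\leq\; C\alpha^4
\]
for an absolute constant $C$. Combined with the prefactor $2\alpha/\sqrt{2\pi}$, the remainder is $O(\alpha^5)=O(\xi^5/\beta^5)$, which (using $\xi\leq\beta$) is bounded by a fixed fraction of the leading $\xi^3/\beta^3$ term. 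One then picks a slightly larger upper and slightly smaller lower constant than the leading coefficient to absorb this remainder and obtain the two-sided bound stated in the claim.

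The only real subtlety is booking the constants so that the leading-term value plus the explicit remainder bound still fits inside the claimed interval; the main numerical input is $\int_0^1 u^2\cos(\pi u)\,\d u = -2/\pi^2$, and once that is in hand the rest is mechanical. I would also keep $\alpha$ explicit in the estimates (rather than immediately using $\alpha\leq 1$) so that in the regime $\alpha$ strictly less than one the remainder is visibly small, which makes the two-sided bound uniform in $\alpha\in(0,1]$.
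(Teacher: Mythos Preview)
Your approach is essentially the same as the paper's: both observe that the constant term in the Taylor expansion of $e^{-x^2/(2\beta^2)}$ integrates to zero against $\cos(\pi x/\xi)$, and that the leading contribution comes from the quadratic term via $\int_0^1 u^2\cos(\pi u)\,\d u=-2/\pi^2$. The paper packages the bookkeeping slightly differently: rather than expanding as a full power series and bounding the tail, it uses the two-sided polynomial sandwich $1-t\le e^{-t}\le 1-t+t^2/2$ with $t=x^2/(2\beta^2)$, splits the integral at $x=\xi/2$ where $\cos(\pi x/\xi)$ changes sign, and applies the appropriate one-sided inequality on each half. Your full-series method avoids the split, which is arguably cleaner.

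One genuine caution: the crude tail bound you actually invoke---discarding the $\tfrac{1}{2k+1}$ and keeping only $\bigl|\int_0^1 u^{2k}\cos(\pi u)\,\d u\bigr|\le 1$---is too loose at the endpoint $\alpha=1$. The resulting remainder bound $\sum_{k\ge 2}\tfrac{1}{2^k k!}=e^{1/2}-\tfrac32\approx 0.149$ exceeds the leading term $1/\pi^2\approx 0.101$, so the lower bound collapses. You must retain the $\tfrac{1}{2k+1}$ factor (which brings the tail down to roughly $0.028$), or compute the $k=2$ term explicitly as the paper in effect does with its quartic correction, to get a two-sided estimate that is uniform over $\alpha\in(0,1]$.
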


\begin{proof}

We will use the bound 
\begin{align*}
1 - \frac{x^2}{2\beta^2} \le \exp \left( -\frac{x^2}{2\beta^2} \right) \le 1 - \frac{x^2}{2\beta^2} + \frac{x^4}{8\beta^4}
\end{align*}
to obtain upper and lower bounds. 

Noting that $\cos(\pi x/\xi) \ge 0$ for $x\in[0,\xi/2]$ and $\cos(\pi x/\xi) \le 0$ for $x\in [\xi/2,\xi]$, we get that
	\begin{align*}
		\text{LHS}
		\ge & ~ \frac{2}{\beta\sqrt{2\pi}}\int^{\xi/2}_0 \cos(\pi x/\xi)\cdot\left(1 - \frac{x^2}{2\beta^2}\right) \d x + \frac{2}{\beta\sqrt{2\pi}}\int^{\xi}_{\xi/2} \cos(\pi x/\xi)\cdot\left(1 - \frac{x^2}{2\beta^2} + \frac{x^4}{8\beta^4}\right) \d x \\
		\ge & ~ \frac{2}{\beta\sqrt{2\pi}}\left(\frac{\xi^3}{\pi \beta^2} - 0.021 \cdot \frac{\xi^5}{\beta^4}\right) \ge \frac{0.23\xi^3}{\beta^3}
	\end{align*} 
	and \begin{align*}
		\text{LHS}
		\le & ~ \frac{2}{\beta\sqrt{2\pi}}\int^{\xi/2}_0 \cos(\pi x/\xi)\cdot\left(1 - \frac{x^2}{2\beta^2} + \frac{x^4}{8\beta^4}\right) \d x + \frac{2}{\beta\sqrt{2\pi}}\int^{\xi}_{\xi/2} \cos(\pi x/\xi)\cdot\left(1 - \frac{x^2}{2\beta^2}\right) \d x \\
		\le & ~ \frac{2}{\beta\sqrt{2\pi}}\left(\frac{\xi^3}{\pi \beta^2} + 0.0002 \cdot \frac{\xi^5}{\beta^4}\right) \le  \frac{0.26\xi^3}{\beta^3},
	\end{align*} 
	where in the last steps we used the fact that $\xi \le \beta$.
\end{proof}

We can now complete the proof of Theorem~\ref{thm:mainboosting_cos}.

\begin{proof}[Proof of Theorem~\ref{thm:mainboosting_cos}]
	The only probabilistic components of \textsc{Boost} is the invocation of \textsc{EstimateMinVariance} and the event of Lemma~\ref{lem:contract_cos} holding at each step. For a given $t$, with probability $1 - 2\delta' = 1 - \delta/T$ these two events both hold, so by a union bound over all $T$ iterations, the failure probability of \textsc{Boost} is at most $\delta$ as desired.

	We now proceed to show correctness of \textsc{Boost}. Conditioned on making progress in every step of \textsc{Boost}, note that $\max_i \norm{w_i - v^{(t)}}_2 \le \Delta/\gamma + \norm{w_i - w_{i^*}} \le \Delta/\gamma + 2 \le 4$, so $\overline{\sigma} = 4$ is always a valid upper bound for the maximum variance of any component of a univariate mixture of Gaussians $\calF_t$ encountered over the course of \textsc{Boost}. So we conclude by Lemma~\ref{lem:estimate_min_variance} that $\xi_t \le \norm{w_{i^*} - v^{(t)}}_2$. Then because of the lower bound of Lemma~\ref{lem:contract_cos}, the inequality $\xi_t\cdot(1.1/0.9) \ge \norm{w_i - v^{(t)}}_2$, and the fact that \textsc{Boost} breaks out of its main loop if $\xi_t \cdot(1.1/0.9)$, we know that at all times in main loop of \textsc{Boost}, $\norm{w_{i^*} - v^{(t)}}_2 \ge \epsilon/10$.

	So by the upper bound of Lemma~\ref{lem:contract_cos} and the fact that $\xi_t = \Omega(\epsilon)$, we conclude that after $T \triangleq O\left(d\cdot \Delta^8/\epsilon^8 \cdot \ln(\Delta/\gamma\epsilon)\right)$ iterations, $\norm{w_{i^*} - v^{(T)}}_2 \le \epsilon$.

	For the time and sample complexity, at every time step $t$ we must draw 
	\begin{align*}
	N = \poly(1/\xi_t,1/\Delta,\ln(1/\delta')) \le \poly(1/\epsilon,1/\Delta,\ln(T),\ln(1/\delta))
	\end{align*} 
	samples to form the empirical gradient in time $d\cdot N$. We also know that each invocation of \textsc{EstimateMinVariance}, by Lemma~\ref{lem:estimate_min_variance}, requires time and sample complexity 
	\begin{align*}
	N' \triangleq \tilde{O}\left( ( \mu_0\cdot\ln(1/\epsilon)\ln(1/\pmin) )^{O(\ln(1/\pmin))}\cdot \ln(2T/\delta)\right).
	\end{align*}
	So \textsc{Boost} requires 
	\begin{equation*}
	T\cdot (N + N') = \tilde{O}\left(d\cdot \poly(1/\epsilon,1/\Delta) \cdot ( \ln(1/\epsilon)\cdot \mu_0\cdot \ln(1/\pmin) )^{O(\ln(1/\pmin))}\right)
	\end{equation*} 
	samples and 
	\begin{equation*}
	T(d\cdot N + N') = \tilde{O}\left(d^2\cdot\poly(1/\epsilon,1/\Delta) \cdot ( \ln(1/\epsilon)\cdot \mu_0\cdot \ln(1/\pmin) )^{O(\ln(1/\pmin))}\right)
	\end{equation*} time.
\end{proof} %%% Section 10, Boosting Down the Cosine Integral

\section{Acknowledgments}

The first and second authors would like to thank Sam Hopkins and Tselil Schramm for answering questions regarding low-degree hypothesis testing, and Ankur Moitra for helpful discussions at an early stage of this project.

\bibliographystyle{alpha}
\bibliography{ref}

\newpage
\appendix
\section*{Appendix}
%!TEX root = main.tex

\section{Failure of Low-Degree Identifiability}
\label{subsec:failure}

In this section, we exhibit a pair of mixtures of spherical linear regressions which are far in parameter distance but which agree on all degree-$\Omega(k)$ moments.
This demonstrates that any method which hopes to achieve sample complexity which is subexponential in $k$ cannot rely solely on low order moments of the MLR.

First, we exhibit a pair of non-identical univariate mixtures of zero-mean Gaussians whose moments match up to degree $2k-1$ and whose variances and mixing weights satisfy reasonable bounds. We remark that the proof, in particular the application of Borsak-Ulam, is largely inspired by that of Lemma 2.9 in \cite{hardt2015tight}.

\begin{lemma}
	There exist $\sigma_1,...,\sigma_k,\sigma'_1,...,\sigma'_k\ge 0$ such that the following holds. Let $D_1$ (resp. $D_2$) be the uniform mixture of univariate Gaussians with components $\N(0,\sigma^2_1),...,\N(0,\sigma^2_k)$ (resp. $\N(0,\sigma'^2_1),...,\N(0,\sigma'^2_k)$). Then \\
	1) there is some $i\in[k]$ for which $|\sigma_i - \sigma'_j| > \Omega(1/\sqrt{k})$ for all $j\in[k]$,\\
	2) $|\sigma_i - \sigma_j|, |\sigma'_i - \sigma'_j| > 1/2$ for all $i\neq j$, \\
	3) $\sigma_i,\sigma'_i\in[1/2,k+1]$ for all $i\in[k]$, and \\
	4) $D_1$ and $D_2$ match on all moments of degree at most $2k - 1$.\label{lem:univariate}
\end{lemma}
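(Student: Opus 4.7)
My plan is to reduce the lemma to a polynomial identity problem and then construct the polynomials explicitly. Since $\mathbb{E}_{Z\sim\N(0,\sigma^2)}[Z^{2p+1}]=0$ and $\mathbb{E}_{Z\sim\N(0,\sigma^2)}[Z^{2p}]=(2p-1)!!\,\sigma^{2p}$, the two uniform $k$-mixtures $D_1,D_2$ agree on all moments through degree $2k-1$ if and only if the multisets $\{t_i:=\sigma_i^2\}$ and $\{t'_i:=(\sigma'_i)^2\}$ have matching power sums $\sum_i t_i^p=\sum_i(t'_i)^p$ for $p=1,\ldots,k-1$. By Newton's identities, this is equivalent to the monic degree-$k$ polynomials $P(x)=\prod_i(x-t_i)$ and $P'(x)=\prod_i(x-t'_i)$ agreeing in every coefficient except possibly the constant term, i.e.\ $P'(x)=P(x)+c$ for some scalar $c$. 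So it suffices to exhibit a single monic $P$ with $k$ suitably-separated positive real roots together with a constant $c$ such that $P+c$ still has $k$ positive real roots (with the required range and pairwise separation) while at least one root is shifted enough to yield an $\Omega(1/\sqrt{k})$-far $\sigma'_i$ in $\sigma$-space.

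For the explicit construction I would take $\sigma_i=i-1/2$ for $i=1,\ldots,k$, so the $\sigma_i$ lie in $[1/2,k-1/2]$ with pairwise separation exactly $1>1/2$, and define $t_i=(i-1/2)^2$, $P(x)=\prod_{i=1}^k(x-t_i)$. The polynomial $P$ has simple positive roots and alternating-sign local extrema $v_i=P(\xi_i)$ in each gap $(t_i,t_{i+1})$. A standard intermediate-value argument gives that $P+c$ retains $k$ distinct real roots precisely when $|c|<\min_i|v_i|$. I would then pick $|c|=\alpha\,|v_1|$ for a small absolute constant $\alpha\in(0,1)$, with sign chosen so that the leftmost perturbed root remains positive. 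Applying the implicit function theorem to $P(x)+c=0$ at each simple root $t_j$ yields $t'_j-t_j=-c/P'(t_j)+O(c^2/P'(t_j)^3)$, so the leftmost root is shifted by $|t_1-t'_1|=\Theta(1)$ (since $|c|/|P'(t_1)|=\Theta(1)$), while the other roots are shifted by $|t_j-t'_j|\le 2|c|/|P'(t_j)|=o(1)$ because $|P'(t_j)|/|P'(t_1)|$ grows rapidly for $j\ge 2$. Translating back via $\sigma'_i=\sqrt{t'_i}$ preserves pairwise $1/2$-separation and keeps each $\sigma'_i$ in $[1/2,k+1]$, while giving $|\sigma_1-\sigma'_1|=\Theta(1)$. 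Taking $i=1$ in the lemma then yields $|\sigma_1-\sigma'_j|\ge\min(|\sigma_1-\sigma'_1|,\,|\sigma_1-\sigma_2|-|t_2-t'_2|/\sqrt{t_2})=\Omega(1)$ for every $j$, which is far stronger than the required $\Omega(1/\sqrt{k})$.

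The main technical obstacle is the quantitative product estimate comparing $|v_1|$, $|P'(t_1)|$, and $|P'(t_j)|$. The relevant ratios all have the form $\prod_i(\text{affine in }i)$ in $t$-coordinates and can be handled by factor-by-factor comparison. In particular, writing $|v_1|/|P'(t_1)|=\prod_{i\ge 3}(t_i-\xi_1)/(t_i-t_1)$ up to bounded prefactors, each term in the product equals $1-1/((i-1)i)+O(1/i^4)$; summing the logarithms telescopes via $\sum_{i\ge 3}1/((i-1)i)=\sum_{i\ge 3}(1/(i-1)-1/i)=1/2$, so the product converges to a nonzero absolute constant as $k\to\infty$, giving $|v_1|=\Theta(|P'(t_1)|)$. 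A similar factor-by-factor comparison for $|P'(t_j)|/|P'(t_1)|$ shows this grows at least polynomially in $j$, so interior roots barely move. Given these bounds, the choice $|c|=\alpha|v_1|=\Theta(|P'(t_1)|)$ produces the desired $\Theta(1)$ shift in $t_1$ while keeping all other roots stable, closing the argument.
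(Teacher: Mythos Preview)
Your reduction via Newton's identities is correct and elegant: matching the first $k-1$ power sums of $\{t_i\}$ and $\{t'_i\}$ is indeed equivalent to $\prod_i(x-t'_i)=\prod_i(x-t_i)+c$, so it suffices to perturb the constant term of $P$. This is a genuinely different route from the paper, which instead parametrises $\sigma_i(z)=i+\tfrac14 z_i$ for $z\in\S^{k-1}$, defines the continuous map $M(z)_\ell=\sum_i\sigma_i(z)^{2\ell}$ into $\R^{k-1}$, and invokes Borsuk--Ulam to find $z$ with $M(z)=M(-z)$; then $\sigma_i=\sigma_i(z)$, $\sigma'_i=\sigma_i(-z)$ automatically lie in $[i-\tfrac14,i+\tfrac14]$, and $\|z\|_2=1$ forces some $|z_i|\ge 1/\sqrt{k}$, giving property~1). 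The paper's argument is a one-paragraph existence proof with no estimates; yours is constructive and, if the estimates go through, yields the stronger $\Omega(1)$ separation in property~1).

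However, one of your quantitative claims is incorrect. With $t_i=(i-\tfrac12)^2$ one computes
\[
\frac{|P'(t_j)|}{|P'(t_1)|}
=\frac{1}{2j-1}\prod_{m=1}^{j-1}\frac{k+m}{k-m}
\;\approx\;\frac{e^{j^2/k}}{2j-1},
\]
which \emph{decreases} like $1/j$ for $j\lesssim\sqrt{k}$ (reaching a minimum of order $1/\sqrt{k}$ near $j\approx\sqrt{k/2}$) before growing. So ``interior roots barely move'' is false in $t$-space: the root near $j\approx\sqrt{k}$ moves by roughly $\sqrt{k}$ times as much as $t_1$. Fortunately this is not fatal: what you actually need is control of $|\sigma'_j-\sigma_j|\approx|t'_j-t_j|/(2\sigma_j)\approx|c|/\bigl(2(j-\tfrac12)\,|P'(t_j)|\bigr)$, and the extra factor $j$ exactly cancels the decay, since $(j-\tfrac12)|P'(t_j)|/|P'(t_1)|\approx e^{j^2/k}/2\ge 1/2$ uniformly. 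So every $\sigma'_j$ shifts by $O(|c|/|P'(t_1)|)$, and choosing $|c|$ a small constant multiple of $|P'(t_1)|$ preserves properties~2) and~3) while making $|\sigma_1-\sigma'_1|=\Theta(1)$. You should replace the ``grows polynomially in $j$'' claim by this corrected computation. Two smaller points also deserve a line each: that $\xi_1$ stays bounded away from $t_1,t_2$ uniformly in $k$ (so that $|v_1|=\Theta(|P'(t_1)|)$; this follows from the critical-point equation $1/(\xi_1-t_1)=\sum_{i\ge 2}1/(t_i-\xi_1)$ and the convergent sum $\sum_{i\ge 3}1/(t_i-t_2)=11/18$), and that $|v_1|=\min_j|v_j|$ (so that $|c|<\min_j|v_j|$ is ensured), which is plausible since $(t_1,t_2)$ is the smallest gap but is not argued in your sketch.
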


\begin{proof}
	For each $i\in[k]$, define $\sigma_i(z) = i + \alpha z$ for $\alpha = 1/4$ and consider the map $M:\S^{k-1}\to\R^{k-1}$ given by \begin{equation}
		M(z)_{\ell} = \sum^k_{i=1}\sigma_i(z)^{2\ell} \ \ \ \ell = 1,...,k-1.
	\end{equation} $M$ is clearly continuous, so by Borsak-Ulam, there exists $z\in\S^{k-1}$ for which $M(z) = M(-z)$. For each $i\in[k]$, define $\sigma_i \triangleq \sigma_i(z)$ and $\sigma'_i \triangleq \sigma_i(-z)$. Then because $\alpha = 1/4$ and $\norm{z}_{\infty}\le 1$, $\sigma_i,\sigma'_i\in[i-1/4,i+1/4]$, 2) and 3) are immediately satisfied. Furthermore, this implies that for any $i\in[k]$, $|\sigma_i - \sigma'_j| > 1/2$ for all $j\neq i$. For $j = i$, $|\sigma_i - \sigma'_i| = 2|z_i|$, and because $\norm{z}_2 = 1$, there must exist some $i$ for which $|z_i|\ge\frac{1}{\sqrt{k}}$, from which 1) follows.

	To see that 4) is satisfied, first note that $D_1$ and $D_2$ are mixtures of zero-mean Gaussians and thus both have odd-degree moments equal to zero. Then for any $1\le \ell \le k-1$, note that the the $2\ell$-th moment of $D_1$ is 
	\begin{align*}
	\frac{1}{k}\sum^k_{i=1}\sigma^{2\ell}_i\cdot (2\ell - 1)!! = \frac{1}{k}(2\ell - 1)!!\cdot M(z)_{\ell}.
	\end{align*}
	Likewise, the $2\ell$-th moment of $D_2$ is 
	\begin{align*}
	\frac{1}{k}(2\ell - 1)!!\cdot M(-z)_{\ell}.
	\end{align*}
	So because $M(z)_{\ell} = M(-z)_{\ell}$ for all $\ell = 1,...,k-1$, we conclude that 4) is satisfied.
\end{proof}

We can now exhibit a moment-matching example for mixtures of linear regressions. Let the parameters $\sigma_1,...,\sigma_k,\sigma'_1,...,\sigma'_k$ be as in Lemma~\ref{lem:univariate}. 

\begin{lemma}
	Take any mixture of spherical linear regressions $\calD$ in $\R^d$ with mixing weights $p_1,...,p_k$ and $\Omega(1)$-separated regressors $v_1,...,v_k\in\R^d$ satisfying $\norm{v_i}_2\le\poly(k)$ for all $i\in[k]$. Take any additional direction $v\in\S^{d-1}$, and any $\lambda\ge 0$. Let $\calD_1$ (resp. $\calD_2$) be the mixture of $3k$ linear regressions with regressors $v_1,...,v_k,\pm\sigma_1v,...,\pm\sigma_kv$ (resp. regressors $v_1,...,v_k,\pm\sigma'_1v,...,\pm\sigma'_kv$) and mixing weights $\frac{p_1}{Z},...,\frac{p_k}{Z},\frac{\lambda/2k}{Z},...,\frac{\lambda/2k}{Z}$, where $Z = \lambda + 1$.

	Then $\calD_1,\calD_2$ satisfy the following: 

	\begin{enumerate}
		\item Both are mixtures of $\Omega(1)$-separated linear regressions whose regressors are $\poly(k)$-bounded in $L_2$ norm
		\item They match on all moments of degree at most $2k - 1$
		\item $\tvd(\calD_1,\calD_2) = \frac{\lambda}{\lambda + 1}$
		\item There exists a regressor $w$ of $\calD_1$ such that for any regressor $w'$ of $\calD_2$, $\norm{w - w'}_2 = \Omega(\sqrt{k})$.
	\end{enumerate}

\end{lemma}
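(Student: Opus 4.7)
The plan is to exploit the univariate moment-matching example from Lemma~\ref{lem:univariate} by lifting it into the high-dimensional MLR setting along the direction $v$, treating the common $\calD$-portion (with total weight $1/Z$) as a shared sub-mixture that is irrelevant for both the moment identity and the TV bound. Throughout, it will be convenient to think of an MLR sample $(x,y) \in \R^{d+1}$ as a draw from a mixture of Gaussians on $\R^{d+1}$ with covariances $\Sigma(w)$, or equivalently to decompose the covariate as $x = x_\parallel\,v + x_\perp$ with $x_\parallel = \langle v, x\rangle$ and $x_\perp \perp v$, which makes the $y = \pm \sigma_i\, x_\parallel$ structure of the new components transparent.

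For claim 1, the norms of the new regressors $\pm \sigma_i v$ are exactly $\sigma_i \in [1/2, k+1]$, which is polynomially bounded. Separation inside the new block follows directly from Lemma~\ref{lem:univariate}: for $i\ne j$, $\|\sigma_i v - \sigma_j v\|_2 = |\sigma_i - \sigma_j| > 1/2$, and $\|\sigma_i v + \sigma_j v\|_2 = \sigma_i + \sigma_j \ge 1$. The only delicate case is separation between the old $v_j$ and the new $\pm \sigma_i v$; I will handle this by choosing $v$ generically (e.g.\ orthogonal to $\Span\{v_1,\dots,v_k\}$, which is available whenever $d \ge k+1$), which also serves to decouple the two blocks in the moment calculation.

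For claim 2, since $\calD_1$ and $\calD_2$ coincide on the $\calD$-portion, it suffices to check moment matching of the new blocks. A mixed moment coming from a single regressor $cv$ factors as $c^\beta\,\E_x[x^\alpha \langle v, x\rangle^\beta]$; summing over the $\pm$ pair $\{cv, -cv\}$ with equal weights annihilates odd $\beta$ and doubles even $\beta$, so the contribution of the entire new block of $\calD_1$ to $\E[x^\alpha y^\beta]$ is proportional to $\sum_i \sigma_i^\beta$ (and zero for odd $\beta$), and analogously for $\calD_2$ with $\sigma'_i$. Thus matching on all $\E[x^\alpha y^\beta]$ with $|\alpha|+\beta \le 2k-1$ reduces to $\sum_i \sigma_i^\beta = \sum_i (\sigma'_i)^\beta$ for every even $0 \le \beta \le 2k-2$, which is precisely what Lemma~\ref{lem:univariate}(4) supplies.

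For claim 3, I will apply the standard identity $\tvd(pP_1 + (1-p)Q,\, pP_2 + (1-p)Q) = p\,\tvd(P_1, P_2)$ with $p = \lambda/Z$, $Q$ the shared $\calD$-portion, and $P_1, P_2$ the uniform $\pm$-symmetric mixtures built from $\{\sigma_i\}$ and $\{\sigma'_i\}$ respectively; bounding $\tvd(P_1,P_2) \le 1$ yields the claimed $\lambda/(\lambda+1)$. For claim 4, Lemma~\ref{lem:univariate}(1) produces an index $i^\star$ with $|\sigma_{i^\star} - \sigma'_j| = \Omega(1/\sqrt{k})$ for every $j$, so $w := \sigma_{i^\star} v$ is $\Omega(1/\sqrt{k})$-far in $L_2$ from every $\sigma'_j v$, at least $1$-far from every $-\sigma'_j v$, and $\Omega(1)$-far from each $v_j$ by the generic choice of $v$; this is the regressor witnessing the parameter gap (the $\Omega(\sqrt k)$ scaling in the statement appears to be a typo for $\Omega(1/\sqrt k)$, which is what the univariate construction yields). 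The main (mild) obstacle I anticipate is bookkeeping around the generic choice of $v$ so that all cross-family $\|v_j - \pm\sigma_i v\|_2$ lower bounds hold simultaneously; the heart of the argument is a direct lift of Lemma~\ref{lem:univariate} via the parallel/perpendicular decomposition.
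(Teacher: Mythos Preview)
Your overall approach matches the paper's: strip off the shared $\calD$-block and lift the univariate construction of Lemma~\ref{lem:univariate} along $v$. Your moment argument via $\E[x^\alpha y^\beta]$ is equivalent to, and arguably more direct than, the paper's version, which instead projects $(x,y)$ onto an arbitrary direction $(\vec{x},y)\in\R^{d+1}$ and observes that the $2\ell$-th moment of the projection is $(2\ell-1)!!$ times a degree-$\ell$ polynomial in $\sigma_i^2$; both reduce to the same power-sum identities supplied by Lemma~\ref{lem:univariate}(4). Your remark that the $\Omega(\sqrt{k})$ in item 4 should read $\Omega(1/\sqrt{k})$ is also correct.

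The one genuine gap is in claim 3. Your identity $\tvd(pP_1+(1-p)Q,\,pP_2+(1-p)Q)=p\,\tvd(P_1,P_2)$ is fine, but then invoking only $\tvd(P_1,P_2)\le 1$ yields $\tvd(\calD_1,\calD_2)\le \lambda/(\lambda+1)$, whereas the statement asserts equality. The paper closes this by exploiting that the model is \emph{noiseless}: each component with regressor $w$ is a degenerate Gaussian supported on the hyperplane $\{(x,\langle w,x\rangle):x\in\R^d\}\subset\R^{d+1}$, so components with distinct regressors are mutually singular. Hence $P_1$ and $P_2$ have disjoint supports (assuming the $\pm\sigma_i$ and $\pm\sigma'_j$ are all distinct), giving $\tvd(P_1,P_2)=1$ and thus equality. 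You need this disjoint-support observation rather than the trivial bound.
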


\begin{proof}
	1) follows by 2) and 3) from Lemma~\ref{lem:univariate}. 4) follows by 1) from Lemma~\ref{lem:univariate}. For 3), note that the components of $\calD_1,\calD_2$ in direction $v$ all have disjoint support, so $\tvd(\calD_1,\calD_2) = \frac{\lambda}{\lambda + 1}$.

	It remains to check that $\calD_1,\calD_2$ match on moments of degree at most $2k - 1$. As $\calD_1,\calD_2$ are identical on the components they share with $\calD$, it suffices to show this for the mixtures $\calD'_1,\calD'_2$ obtained by conditioning out the components appearing in $\calD$, that is, the two mixtures of $2k$ spherical linear regressions with uniform mixing weights and directions $\pm\sigma_1 v,...,\pm\sigma_k v$ and directions $\pm\sigma'_1 v,...,\pm\sigma'_k v$ respectively.

	Equivalently, we must show that for any direction $(\vec{x},y)\in\R^{d+1}$, where $\vec{x}\in\R^d$ and $y\in\R$, the univariate Gaussian mixtures $D_1,D_2$ obtained from projecting $\calD'_1,\calD'_2$ in the direction $(\vec{x},y)$ have identical degree-$s$ moment for any $s \le 2k - 1$. These moments will be zero for odd $s$. For $s = 2\ell$, noting that for any $\sigma \ge 0$, \begin{equation}
		(\vec{x},y)^{\top}\Sigma(\sigma v)(\vec{x},y) = \norm{x}^2_2 + \sigma^2 y^2 + 2\sigma y\langle v,x\rangle.
	\end{equation} Without loss of generality, assume $\norm{x}_2 = 1$, and let $\gamma\triangleq \langle v,x\rangle$. We see that the projection $D_1$ has $2\ell$-th moment \begin{equation}
		\frac{1}{k}(2\ell - 1)!!\cdot\left[\sum^k_{i=1}(\sigma^2_i y^2 + 1 + 2\sigma_i\gamma y)^{\ell} + (\sigma^2_i y^2 + 1 - 2\sigma_i\gamma y)^{\ell}\right].\label{eq:d1moments}
	\end{equation} Note that there is some degree-$\ell$ polynomial $p$ for which the $i$-th summand in \eqref{eq:d1moments} is $p(\sigma^2_i)$. In the same way, we can see that the projection $D_2$ has $2\ell$-th moment $\frac{1}{k}(2\ell - 1)!! \cdot \sum^k_{i=1}p(\sigma'^2_i)$. As the univariate mixtures in Lemma~\ref{lem:univariate} match on all $2\ell$-th moments for $\ell \le k -1$, we know that $\sum^k_{i=1}p(\sigma^2_i) = \sum^k_{i=1}p(\sigma'^2_i)$ for all polynomials $p$ of degree at most $k - 1$, so the projections $D_1,D_2$ indeed match on all moments up to degree $2k - 1$.
\end{proof}

 %%% Section A, Failure of Low-Degree Identifiability

%!TEX root = ./main.tex

\section{Integrating Against Fourier Transforms of Piecewise Polynomials}
\label{sec:hermite}

In this section we prove Lemma~\ref{lem:piecewise-fourier-moment}.
Note that there are indeed explicit expressions for the Fourier moments of piecewise polynomials in terms of hypergeometric functions, but we avoid explicitly describing these for simplicity.

We will show Lemma~\ref{lem:piecewise-fourier-moment} in a couple of steps.
First, we show:
\begin{lemma}
\label{lem:trig-ints}
Let $r$ be a nonnegative integer.
Then, we have that 
\[
\int_0^1 x^r \cos (a x) \d x = \frac{A_r (a, \sin (a), \cos (a))}{a^r} \; , \qquad \int_0^1 x^r \sin (ax) \d x = \frac{B_r (a, \sin (a), \cos (a))}{a^r}\; ,
\]
where $A_r, B_r$ are degree-$r$ polynomials over $\R^3$ whose coefficients can be computed in time $O(r^2)$.
\end{lemma}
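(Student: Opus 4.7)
The plan is to derive a coupled integration-by-parts recurrence for the two families of integrals simultaneously, and then induct on $r$ to extract the polynomial form.

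Specifically, set
\[
I_r := \int_0^1 x^r \cos(ax)\,\d x, \qquad J_r := \int_0^1 x^r \sin(ax)\,\d x.
\]
Integrating by parts with $u = x^r$ (so $du = r x^{r-1}\,\d x$) against $dv = \cos(ax)\,\d x$ or $dv = \sin(ax)\,\d x$, and noting that for $r \geq 1$ the boundary term at $x = 0$ vanishes, I would obtain
\[
I_r \;=\; \frac{\sin a}{a} - \frac{r}{a}\, J_{r-1}, \qquad J_r \;=\; -\frac{\cos a}{a} + \frac{r}{a}\, I_{r-1} \qquad (r \geq 1),
\]
together with the base cases $I_0 = \sin(a)/a$ and $J_0 = (1 - \cos a)/a$.

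To eliminate the inverse powers of $a$, I would introduce $\tilde I_r := a^{r+1} I_r$ and $\tilde J_r := a^{r+1} J_r$; the recurrence then becomes the polynomial-friendly
\[
\tilde I_r \;=\; a^r \sin a \;-\; r\,\tilde J_{r-1}, \qquad \tilde J_r \;=\; -a^r \cos a \;+\; r\,\tilde I_{r-1},
\]
with $\tilde I_0 = \sin a$ and $\tilde J_0 = 1 - \cos a$. A direct induction on $r$ then shows that $\tilde I_r$ and $\tilde J_r$ are polynomials in $(a, \sin a, \cos a)$ of total degree at most $r + 1$: the freshly introduced monomial $a^r \sin a$ (resp.\ $-a^r \cos a$) has degree $r+1$, while by induction hypothesis $\tilde J_{r-1}$ and $\tilde I_{r-1}$ have degree $\leq r$. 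This gives the polynomial form asserted by the lemma (the mismatch of one power of $a$ between $\tilde I_r / a^{r+1}$ and the stated $A_r(a,\sin a,\cos a)/a^r$ is absorbed into the polynomial by pulling out a factor of $a$ from the numerator).

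Finally, for the complexity claim: at each step of the recurrence we add exactly one new monomial and rescale the existing polynomial by the scalar $r$, so by induction $\tilde I_r$ and $\tilde J_r$ each have at most $O(r)$ nonzero monomials, and one step of the recurrence costs $O(r)$ arithmetic operations for the bookkeeping. Iterating from $r = 0$ up to the target degree therefore yields the claimed $O(r^2)$ running time. I do not anticipate any real obstacle here; the only minor subtlety is tracking the monomial support carefully enough to certify the linear-in-$r$ bound on the number of terms, which in turn gives the quadratic total running time.
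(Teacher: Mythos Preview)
Your approach is essentially the same as the paper's: both derive the coupled integration-by-parts recurrence $I_r = \tfrac{\sin a}{a} - \tfrac{r}{a}J_{r-1}$, $J_r = -\tfrac{\cos a}{a} + \tfrac{r}{a}I_{r-1}$ and induct, with the $O(r)$-per-step bookkeeping giving the $O(r^2)$ total. Your normalization $\tilde I_r = a^{r+1}I_r$ is in fact cleaner than the paper's presentation; the one caveat is that your remark about ``pulling out a factor of $a$'' to match the stated denominator $a^r$ does not actually work (e.g.\ $\tilde I_0 = \sin a$ is not divisible by $a$ as a polynomial in the three variables), but this reflects an off-by-one imprecision in the lemma statement itself that the paper's own proof also glosses over rather than a flaw in your argument.
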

\begin{proof}
We proceed by induction on $r$.
The base case is trivial: if $r = 0$, then 
\[
 \int_0^1 \cos (ax) \d x = \frac{\sin (a)}{a} \; ,
\]
and
\[
 \int_0^1 \cos (ax) \d x = \frac{1 - \cos (a)}{a} \; ,
\]
Now assume $r > 0$, and that the claim holds for $r - 1$.
Then by integration by parts, 
\begin{align*}
\int_0^1 x^r \cos (ax) \d x 
= & ~ \frac{\sin (a)}{a} - \frac{r}{a}  \int \alpha_r (x) \sin (ax) \d x \\
= & ~ \frac{1}{a^r} \left( a^{r - 1} \sin (a) - r B_{r - 1} (a, \sin (a), \cos (a)) \right) \; ,
\end{align*}
and similarly
\begin{align*}
\int_0^1 x^r \sin (ax) \d x 
= & ~ \frac{1 - \cos(a)}{a} + \frac{r}{a}  \int \alpha_r (x) \cos (ax) \d x \\
= & ~ \frac{1}{a^r} \left( a^{r- 1} (1 - \cos (a)) - r A_{r - 1} (a, \sin (a), \cos (a)) \right) \; .
\end{align*}
This establishes that these are of the desired form.
Moreover, this recurrence demonstrates that given the coefficients to $A_{r - 1}, B_{r - 1}$, one can obviously compute the coefficients to $A_r, B_r$ using at most $O(r)$ additional time.
This completes the proof.
\end{proof}
\noindent
Note that we must have $\frac{A_r (a, \sin (a), \cos (a))}{a^r}$ and $\frac{B_r (\sin (a), \cos (a))}{a^r}$ converge to a finite value as $a \to 0$, as they must both converge to $\int_0^1 x^r \d x = r - 1$.
In particular, they are both analytic functions over the entire real line, if we take the convention that these functions evaluate to $r - 1$ at $0$, which we will.
We now show:
\begin{lemma}
\label{lem:monomial}
Let $\tau > 0$, and let $r, \ell$ be non-negative integers.
Let $\alpha_r (x) = x^r (x) \cdot \mathbf{1}_{[0, 1]} (x)$.
There is an algorithm that runs in time $O(r^2)$ and outputs
\[
\int_{\tau}^\tau \widehat{\alpha_r} [\omega] \cdot \omega^\ell \d \omega \; .
\]
\end{lemma}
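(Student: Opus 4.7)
The plan is to compute the integral recursively by integrating by parts in $\omega$, which avoids having to manipulate any singular closed form for $\widehat{\alpha_r}$ directly. Differentiating under the integral sign in $\widehat{\alpha_{r-1}}[\omega] = \int_0^1 x^{r-1} e^{-2\pi i\omega x}\, \d x$ gives $\tfrac{d}{d\omega}\widehat{\alpha_{r-1}}[\omega] = -2\pi i\,\widehat{\alpha_r}[\omega]$, so whenever $r\ge 1$ the function $\widehat{\alpha_r}[\omega]$ admits $-\widehat{\alpha_{r-1}}[\omega]/(2\pi i)$ as an antiderivative. I would then integrate by parts with $u = \omega^\ell$ and $dv = \widehat{\alpha_r}[\omega]\,\d\omega$ to obtain, for $r,\ell\ge 1$, the recurrence
\begin{equation*}
I_{r,\ell} ~=~ \frac{-1}{2\pi i}\!\left[\tau^\ell\,\widehat{\alpha_{r-1}}[\tau]-(-\tau)^\ell\,\widehat{\alpha_{r-1}}[-\tau]\right] \,+\, \frac{\ell}{2\pi i}\,I_{r-1,\ell-1},
\end{equation*}
where $I_{r,\ell}\triangleq \int_{-\tau}^\tau\omega^\ell\widehat{\alpha_r}[\omega]\,\d\omega$, and the two boundary values $\widehat{\alpha_{r-1}}[\pm\tau]$ are computable by a direct application of Lemma~\ref{lem:trig-ints} at $a = \pm 2\pi\tau$.

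Iterating this recurrence $\min(r,\ell)$ times reduces the problem to a base case with either $r=0$ or $\ell=0$, plus a sum of $\min(r,\ell)$ boundary terms. When $\ell=0$ and $r\ge 1$, I would swap the order of integration to obtain $I_{r,0} = \tfrac{1}{\pi}\int_0^1 x^{r-1}\sin(2\pi\tau x)\,\d x$, which is in the form handled by Lemma~\ref{lem:trig-ints}. When $r=0$ and $\ell\ge 1$, I would use the explicit formula $\widehat{\alpha_0}[\omega] = (1-e^{-2\pi i\omega})/(2\pi i\omega)$ and cancel the denominator to get $I_{0,\ell} = \tfrac{1}{2\pi i}\int_{-\tau}^\tau \omega^{\ell-1}(1-e^{-2\pi i\omega})\,\d\omega$, whose two summands are respectively an elementary polynomial integral and another application of Lemma~\ref{lem:trig-ints}. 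The only extremal corner is $r=\ell=0$, where the integral equals the single sine integral $\mathrm{Si}(2\pi\tau)/\pi$, computable to any desired precision.

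The main subtlety is precisely what this recursion is designed to circumvent: the naive closed form for $\widehat{\alpha_r}[\omega]$ obtained by direct substitution from Lemma~\ref{lem:trig-ints} carries a $(2\pi\omega)^{-r}$ factor whose apparent pole at the origin cancels in aggregate against the numerator but not term-by-term, so integrating that closed form against $\omega^\ell$ monomial-by-monomial would produce divergent intermediate quantities. The integration-by-parts approach sidesteps this by peeling off one power of $\omega$ and one derivative of $\widehat{\alpha}$ at each step, only ever invoking Lemma~\ref{lem:trig-ints} on integrals whose $x$-exponents are safely nonnegative. Building the polynomials $A_s, B_s$ for $s\le r$ incrementally takes $O(r^2)$ time in total, each evaluation $\widehat{\alpha_{r-1}}[\pm\tau]$ then takes $O(r)$ arithmetic, and each of the $\min(r,\ell)$ recursion steps contributes only $O(1)$ further work, yielding the claimed runtime polynomial in $r$ and $\ell$.
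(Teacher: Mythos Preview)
Your proposal is correct and takes a somewhat different route from the paper. The paper writes out the closed form $\widehat{\alpha_r}[\omega] = \frac{A_r(2\pi\omega,\sin 2\pi\omega,\cos 2\pi\omega)}{(2\pi\omega)^r} + i\frac{B_r(\cdots)}{(2\pi\omega)^r}$ directly from Lemma~\ref{lem:trig-ints}, then argues that one can integrate this against $\omega^\ell$ by applying integration by parts $r-\ell$ times ``to remove the denominator'' before solving the resulting trigonometric integral. Your approach instead exploits the Fourier differentiation identity $\widehat{\alpha_r} = -\tfrac{1}{2\pi i}\,(\widehat{\alpha_{r-1}})'$ to set up the recurrence $I_{r,\ell}\to I_{r-1,\ell-1}$, never writing down the singular closed form at all. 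Both are integration-by-parts arguments in $\omega$, but yours is cleaner precisely on the point you flag: the paper's closed form carries an apparent $\omega^{-r}$ pole whose cancellation is left implicit, whereas your recursion keeps every intermediate quantity manifestly finite and only ever invokes Lemma~\ref{lem:trig-ints} at the fixed frequencies $\pm\tau$ (for the boundary terms) or on an integral with nonnegative $x$-exponent (for the $\ell=0$ base case). Your observation that the $r=\ell=0$ corner yields the sine integral $\mathrm{Si}(2\pi\tau)/\pi$ is also a point the paper's proof does not address explicitly; in the actual application the moment degree $p$ is always at least $2$, so this corner does not arise, but it is good that you isolated it. Note finally that both your bound and the paper's own proof (which states $O(\ell r^2)$) depend on $\ell$, so the $O(r^2)$ in the lemma statement is a minor misstatement in the paper rather than a defect in your argument.
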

\begin{proof}
By Lemma~\ref{lem:trig-ints}, we know that there exist $A_{r - 1}, B_{r - 1}$ which are degree $r$ polynomials whose coefficients we can compute in $O(r^2)$ time so that
\[
\widehat{\alpha_r} [w] = \frac{A_r (2 \pi \omega, \sin (2 \pi \omega), \cos (2 \pi \omega ))}{(2 \pi \omega)^r} + \i \frac{B_r (2 \pi \omega, \sin (2 \pi \omega), \cos (2 \pi \omega))}{(2 \pi \omega)^r} \; .
\]
Therefore we may evaluate the integral 
\[
\int_{\tau}^\tau \frac{A_r (2 \pi \omega, \sin (2 \pi \omega), \cos (2 \pi \omega ))}{(2 \pi \omega)^r} \cdot \omega^{\ell} \d \omega 
\]
in additional $O(\ell r^2)$ time by first applying integration by parts $r - \ell$ times to remove the denominator, and then solving the trigonometric integral. 
and similarly we can evaluate
\[
\int_{\tau}^\tau \frac{B_r (2 \pi \omega, \sin (2 \pi \omega), \cos (2 \pi \omega ))}{(2 \pi \omega)^r} \cdot \omega^{\ell} \d \omega \; .
\]
in time $O(r^2)$.
This completes the proof.
\end{proof}
\noindent
We now have all the tools necessary to prove Lemma~\ref{lem:piecewise-fourier-moment}.
\begin{proof}[Proof of Lemma~\ref{lem:piecewise-fourier-moment}]
Note that a piecewise polynomial can be written as $\sum_{i = 1}^s p_i (x) \mathbf{1}_{I_i} (x)$, where $p_i$ is a degree $d$ polynomial and $ \mathbf{1}_{I_i}$ are indicator variables for intervals.
By linearity of the Fourier transform, it suffices to compute the Fourier moment of $\alpha_i (x) = p_i (x)  \mathbf{1}_{I_i} (x)$ for each $i = 1, \ldots, s$, and to do so, it suffices to compute $x^j  \mathbf{1}_{I_j} (x)$ for every monomial $j = 0, \ldots, d$.
By a change of variables, Lemma~\ref{lem:monomial} gives an algorithm that runs in time $O(d^2)$ to compute the $\ell$-th Fourier moment of $x^j  \mathbf{1}_{I_j} (x)$.
Thus we can compute the $\ell$-th Fourier moment of $\alpha_i (x)$ in time $O(d^3)$, and hence of the entire piecewise polynomial in time $O(s d^3)$, as claimed.
\end{proof} %%% Section B, Integrating Against Fourier Transforms of Piecewise Polynomials

%!TEX root = ./main.tex

\section{Deferred Proofs}
\label{sec:defer}

 \subsection{Proof of Lemma~\ref{lem:moment_conc}}
 \label{app:moment_conc}

We first require the following inequality.

\begin{fact}[Rosenthal Bound, see e.g. Theorems 6.1 and 6.2 of \cite{pinelis1994optimum}]\label{fact:rosenthal}
 	Let $X_1,...,X_n$ be independent random variables for which $\E[X_i] = 0$ and $\E[|X_i|^t] < \infty$ for some $t \ge 2$. If we define $X = \frac{1}{n}\sum^n_{i=1}X_i$, then 
 	\begin{equation*}
 		\E[|X|^t] \le \frac{1}{n^t} \cdot \left[ C_1(t) \cdot \left( \sum_{i=1}^n \E[|X_i|^t] \right) + C_2(t) \cdot \left( \sum_{i=1}^n \E[X^2_i] \right)^{t/2} \right],
 	\end{equation*} 
 	where $C_1(t) = (c\gamma)^t$ and $C_2(t) = (c\sqrt{\gamma}e^{t/\gamma})^t$ for any $\gamma\in[1,t]$ and universal constant $c > 0$. In particular, we can take $\gamma$ for which $\gamma\ln\gamma = 2t$ to get $C_1(t) = C_2(t) = (c't/\ln(t))^t$ for some other universal constant $c'>0$.
 \end{fact}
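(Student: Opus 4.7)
(Rosenthal bound).}
The strategy is the standard martingale/symmetrization route, with the Pinelis refinement to obtain the stated flexible constants $C_1(t), C_2(t)$. First, reduce to the symmetric case: introduce independent copies $X_i'$ of each $X_i$, and use the standard fact that $\E|S|^t \le 2^t \E|S - S'|^t$ where $S' = \sum_i X_i'$ (this uses only $\E X_i = 0$ and Jensen's inequality). The variables $X_i - X_i'$ are symmetric, so we may further rewrite them as $\varepsilon_i(X_i - X_i')$ where $\varepsilon_i$ are i.i.d.\ Rademacher independent of everything else, without changing the distribution of $S - S'$. At the cost of universal constants, it therefore suffices to prove the bound assuming each $X_i$ is symmetric.

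For symmetric independent $X_i$, the partial sums $S_k = \sum_{i \le k} X_i$ form a martingale, and we can apply Burkholder's square function inequality with its sharp constant of order $\sqrt{t}$:
\[
\|S_n\|_t \;\le\; c\sqrt{t} \cdot \Bigl\| \bigl(\textstyle\sum_i X_i^2\bigr)^{1/2}\Bigr\|_t \;=\; c\sqrt{t} \cdot \bigl\|\textstyle\sum_i X_i^2\bigr\|_{t/2}^{1/2}.
\]
Now decompose $\sum_i X_i^2 = \sum_i \E[X_i^2] + \sum_i (X_i^2 - \E[X_i^2])$. The first piece is deterministic and contributes $\bigl(\sum_i \E[X_i^2]\bigr)^{t/2}$, which (after taking the $1/2$ power and $t$-th power back) produces the variance term on the right-hand side of Rosenthal's inequality. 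For the second piece, apply the triangle inequality in $L_{t/2}$ followed by the crude bound $\|X_i^2 - \E[X_i^2]\|_{t/2} \le 2\|X_i\|_t^2$ to obtain
\[
\Bigl\|\textstyle\sum_i (X_i^2 - \E[X_i^2])\Bigr\|_{t/2} \;\le\; 2\sum_i \|X_i\|_t^2,
\]
and then use the elementary fact $\bigl(\sum_i \|X_i\|_t^2\bigr)^{t/2} \le n^{t/2 - 1} \sum_i \|X_i\|_t^t$ when needed, or, more efficiently, iterate the square-function bound to a sum of $X_i^2$. Taking both pieces together and exponentiating yields the two-term upper bound with constants of order $(c\sqrt{t})^t$ each.

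The main obstacle is to obtain the sharper constants in the statement, namely the freedom to choose a parameter $\gamma\in[1,t]$ so that $C_1(t) = (c\gamma)^t$ and $C_2(t) = (c\sqrt{\gamma}\, e^{t/\gamma})^t$, rather than the uniform $(c\sqrt{t})^t$ that the crude argument above gives. Pinelis's trick is to insert a smooth truncation at a level $\tau$ and process the truncated part by Bennett/Bernstein-style exponential inequalities (which naturally introduce the factor $e^{t/\gamma}$) while handling the untruncated tail by a direct $L_t$-norm union bound (producing the $\sum_i \E|X_i|^t$ term). Optimizing the truncation level $\tau$ as a function of $\gamma$ balances the two error sources and yields the claimed constants. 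The choice $\gamma \ln \gamma = 2t$ then equalizes $C_1(t)$ and $C_2(t)$ up to a factor $(c't/\ln t)^t$, giving the final corollary stated after the fact. Once these ingredients (symmetrization, sharp Burkholder, truncation-plus-Bennett) are in place, verification of the stated constants is a routine optimization that we would not reproduce in detail, referring instead to \cite{pinelis1994optimum} for the quantitative bookkeeping.
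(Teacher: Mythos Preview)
The paper does not prove this statement at all: it is stated as a \emph{Fact} with a citation to Pinelis~\cite{pinelis1994optimum} and is used as a black box in the proof of Lemma~\ref{lem:moment_bound}. So there is no ``paper's own proof'' to compare against.

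That said, your sketch is a reasonable outline of the standard route to Rosenthal's inequality with Pinelis-style constants. The symmetrization step and the use of Burkholder's square-function inequality with the $O(\sqrt{t})$ constant are correct, and you are right that this alone only yields $(c\sqrt{t})^t$ on both terms rather than the advertised $(c\gamma)^t$ and $(c\sqrt{\gamma}e^{t/\gamma})^t$. You are also right that the improvement comes from a truncation-and-balance argument. One caution: your paragraph handling $\sum_i(X_i^2-\E X_i^2)$ via the triangle inequality in $L_{t/2}$ is too crude to recover even the $(c\sqrt{t})^t$ constant without further work (it loses a factor of $n^{t/2-1}$ as you note), so the real argument must iterate the square-function bound or go directly through the truncation route rather than treat that step as a side calculation. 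Since you ultimately defer the quantitative bookkeeping to the reference anyway, this is fine for a proof plan, but be aware that the ``crude'' intermediate bound you wrote down is not actually sufficient on its own.
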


We can apply this to get a moment bound on the deviation of the empirical $p$-th moment from the true $p$-th moment. Define the random variable $X = \frac{1}{N}\sum^N_{i=1}Z^p_i - \E_{Z\sim\calF}[Z^p]$, where recall that $\calF$ is a mixture of $k$ univariate Gaussians, and $Z_1,...,Z_N$ are $N$ draws from $\calF$.

% moment bound for random variable Z = empirical deviation of p-th moment from true p-th moment
 \begin{lemma}[Moment bound for empirical deviation of $p$-th moment]\label{lem:moment_bound}
 	There is an absolute constant $c'>0$ for which the following holds for any $p$. For all $t\in\N$ we have that 
 	\begin{equation*}
 		\E[X^t] \le \left(\frac{c'}{\sqrt{N}}\cdot\maxvar{\calF}^p\cdot p^{p/2}\cdot t^{p/2+1}\right)^t.
 	\end{equation*}
 	 Then for any $r,\gamma > 0$, we have that for $N = \left(\frac{\alpha}{\gamma r}\right)^2$, 
 	 \begin{equation*}
 	 	\Pr_{Z_1, \cdots, Z_N} \left[\left|\frac{1}{N}\sum^N_{i=1}Z_i^p - \E_{Z\sim\calF}[Z^p]\right| > r\right] \le \gamma^t.\label{eq:conc2}
 	 \end{equation*}
 \end{lemma}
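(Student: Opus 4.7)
The plan is to prove the moment bound by applying Rosenthal's inequality (Fact~\ref{fact:rosenthal}) to the centered random variables $Y_i \triangleq Z_i^p - \E_{Z\sim\calF}[Z^p]$, and then deduce the tail bound as a direct consequence of Markov's inequality applied to $X^t$. Note that $X = \frac{1}{N}\sum_{i=1}^N Y_i$ with $\E[Y_i]=0$ and the $Y_i$ are i.i.d., so Rosenthal's inequality is directly applicable.

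First, I would bound the individual moments $\E[|Y_i|^t]$ and $\E[Y_i^2]$. By the triangle inequality $|a - \E a|^s \le 2^{s-1}(|a|^s + |\E a|^s)$ together with Jensen, we have $\E[|Y_i|^s] \le 2^s \E[|Z|^{ps}]$ for any $s \ge 1$. Since $\calF$ is a convex combination of zero-mean Gaussians of variance at most $\maxvar{\calF}^2$, for any even $m$ we have $\E[Z^m] \le \maxvar{\calF}^m \cdot (m-1)!! \le \maxvar{\calF}^m \cdot m^{m/2}$. This yields
\begin{equation*}
\E[|Y_i|^t] \le 2^t \cdot \maxvar{\calF}^{pt} \cdot (pt)^{pt/2} \quad \text{and} \quad \E[Y_i^2] \le 4 \cdot \maxvar{\calF}^{2p} \cdot (2p)^p.
\end{equation*}

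Next, plugging these into Rosenthal with the choice $\gamma \ln \gamma = 2t$ (so that $C_1(t) = C_2(t) = (c't/\ln t)^t$) and using $\sum_{i=1}^N \E[|Y_i|^s] = N \cdot \E[|Y_1|^s]$ gives
\begin{equation*}
\E[X^t] \le \frac{(c't/\ln t)^t}{N^{t-1}} \cdot 2^t \maxvar{\calF}^{pt} (pt)^{pt/2} + \frac{(c't/\ln t)^t}{N^{t/2}} \cdot \left(4 \maxvar{\calF}^{2p} (2p)^p\right)^{t/2}.
\end{equation*}
A short calculation shows that for $N \ge 1$ and $t \ge 2$ we have $N^{-(t-1)} \le N^{-t/2}$, and after pulling the $t$-th powers inside both terms are dominated by $\bigl(c'' \cdot \maxvar{\calF}^p \cdot p^{p/2} \cdot t^{p/2+1}/\sqrt{N}\bigr)^t$ for a suitable universal constant $c''$, where the $t^{p/2+1}$ factor comes from absorbing $t^{pt/2}$ (from the first summand) and $t/\ln t$ (from the Rosenthal constants) together. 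This establishes the claimed moment bound.

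Finally, for the concentration statement, Markov's inequality applied to $X^t$ gives $\Pr[|X| > r] \le \E[X^t]/r^t$. Setting the right-hand side to $\gamma^t$ requires
\begin{equation*}
N \ge \left(\frac{c'' \cdot \maxvar{\calF}^p \cdot p^{p/2} \cdot t^{p/2+1}}{\gamma r}\right)^2 = \left(\frac{\alpha}{\gamma r}\right)^2,
\end{equation*}
which matches the lemma statement. The main technical obstacle is the bookkeeping in the Rosenthal step: one must verify carefully that both summands admit a uniform bound of the advertised form, and track the combinatorial factors from $(2m-1)!!$, $2^t$, and the Rosenthal constants without loss. Once those factors are collected into $t^{p/2+1}$, the concentration bound is immediate from Markov.
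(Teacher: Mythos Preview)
Your proposal is correct and follows essentially the same approach as the paper: bound the individual centered moments $\E[|Y_i|^t]$ and $\E[Y_i^2]$, plug into Rosenthal with the choice $\gamma\ln\gamma = 2t$, then combine the two terms using $N^{-(t-1)}\le N^{-t/2}$ and $2^{p/2}\le t^{p/2}$ for $t\ge 2$, and finally read off the tail bound by Markov. The only cosmetic difference is that the paper bounds $\E[X_i^t]$ via the binomial expansion of $(Z_i^p - \E[Z_i^p])^t$ and the vanishing of odd Gaussian moments, whereas you use the cleaner route $|a-\E a|^s \le 2^{s-1}(|a|^s + |\E a|^s)$ together with Jensen; both yield the same bound $(2\,t^{p/2}p^{p/2}\maxvar{\calF}^p)^t$, and the rest of the argument is identical.
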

\begin{proof}
 	% first *upper* bound the output of the estimator, i.e. upper bound all variances of components by max variance
 	For simplicity, we define
 	\begin{align*}
 	\sigma_{\max} = \sigma_{\max} ( \calF ) .
 	\end{align*}
 	For every $i\in[N]$, define the random variable $X_i \triangleq Z^p_i - \E[Z^p_i]$. To apply Fact~\ref{fact:rosenthal}, we must compute moments of $X_i$. First note that for any even $d \in \mathbb{N}$ and $Z \sim \calF$, 
 	\begin{equation*}
 	\E[Z^d] = \sum^k_{j=1} p_j \cdot \M_d( \N( 0,\sigma^2_j ) ) \le \sum^k_{j=1}p_j \cdot \sigma^d_j\cdot d^{d/2}.
 	\end{equation*} 
 	In particular, we have that $\E[Z^d] \le \smax^d \cdot d^{d/2}$. So for all $i\in[N]$ and even $t \in \mathbb{N}$, we get that \begin{align*}
 		\E[X^t_i] &= \E[(Z^p_i - \E[Z^p_i])^t] \\
 		&= \sum^t_{\ell = 0}(-1)^{\ell} \binom{t}{\ell} \cdot \E[Z^{p\ell}_i]\cdot \E[Z^p_i]^{t - \ell} \\
 		&\le \sum_{\ell\in[t] \ \text{even}}\binom{t}{\ell} \cdot (p\ell)^{p\ell/2}\smax^{p\ell} \cdot p^{(p/2)(t-\ell)} \smax^{p(t-\ell)} \\
 		&= p^{pt/2} \smax^{pt}\sum_{\ell\in[t] \ \text{even}} \binom{t}{\ell} \ell^{p\ell/2} \\
 		&\le (2t^{p/2} \cdot p^{p/2} \cdot \smax^p)^t.
 	\end{align*} 
 	where the third step follows from the fact that for any degree $d$, $\E[Z^d_i] = 0$ if $d$ is odd, and $\E[Z^d_i] \le \E_{g\sim\N(0,\smax)}[g^d] \le d^{d/2}\cdot\smax^d$ if $d$ is even; and the last step follows by naively upper bounding the terms $\ell^{p\ell/2}$ by $t^{pt/2}$.

 	In particular, 
 	\begin{equation*}
 		\frac{1}{N^t}\sum^N_{i=1}\E[X_i^t] \le \frac{1}{ N^{t-1} } \cdot \left( 2t^{p/2} \cdot p^{p/2} \cdot \smax^p \right)^t.
 	\end{equation*} 
 	For $\E[X_i^2]$, note that 
 	\begin{equation*}
 		\E[X_i^2] = \E[Z_i^{2p}] - \E[Z_i^p]^2 \le \E[Z_i^{2p}] \le \smax^{2p}\cdot (2p)^p,
 	\end{equation*} 
 	so 
 	\begin{equation*}
 		\frac{1}{N^t}\left(\sum^N_{i=1}\E[X^2_i]\right)^{t/2} = \frac{1}{N^{t/2}}(\smax^{p}\cdot (2p)^{p/2})^t.
 	\end{equation*} 
 	We conclude by Fact~\ref{fact:rosenthal} that the random variable $X$ satisfies the following moment bound:
 	\begin{align*}
 		\E[X^t] 
 		\le & ~ \frac{(c't/\ln(t))^t}{N^{t-1}} \cdot \left(2t^{p/2} \cdot p^{p/2} \cdot \smax^p \right)^t + \frac{(c't/\ln(t))^t}{N^{t/2}} \cdot \left( \smax^p \cdot (2p)^{p/2} \right)^t \\
 		= & ~ \smax^{pt}\cdot p^{pt/2}\cdot\left(\frac{(c't/\ln(t))^t}{N^{t-1}}\cdot 2^t t^{pt/2} + \frac{(c't/\ln(t))^t}{N^{t/2}} \cdot 2^{pt/2}\right) \\
 		\le & ~ \smax^{pt} \cdot p^{pt/2} \cdot \left(\frac{(t/\ln t)^t}{N^{t/2}} \cdot t^{pt/2}\right) \cdot (c')^t \\
 		\le & ~ \left( {c'\cdot \smax^{p}\cdot p^{p/2}\cdot t^{p/2 + 1}} / {\sqrt{N}}\right)^t
 	\end{align*} 
 	as claimed, where the third step follows from choosing $c'>1$ to be sufficiently large constant.
 \end{proof}

Finally, we need some standard facts about Orlicz norms.

\begin{definition}[Orlicz norms]
 	Let $\Psi:\R_{>0}\to\R_{>0}$ be a convex, increasing function satisfying $\Psi(0) = 0$ and $\Psi(x)\to\infty$. We call such a function $\Psi$ a \emph{Young function}. Let $X$ be a random variable over $\R_{>0}$. The \emph{Orlicz norm of $X$ with respect to $\Psi$} is defined by 
 	\begin{equation}\label{eq:orliczdef}
 		\norm{X}_{\Psi}\triangleq \inf\{c > 0: \E[\Psi(X/c)] \le 1\}.
 	\end{equation}
 \end{definition}

\begin{fact}[Sufficient condition for bound]\label{fact:orliczsuffice}
 	If $\alpha,\beta>0$ satisfies $\E[\Psi(X/\alpha)]\le \beta$, then $\norm{X}_{\Psi}\le \alpha\cdot \beta$.
 \end{fact}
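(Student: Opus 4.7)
My plan is to unfold the definition in \eqref{eq:orliczdef} and verify directly that $c = \alpha\beta$ lies in the set whose infimum defines $\norm{X}_{\Psi}$. That is, I would aim to show
\[
\E\left[\Psi\left(\frac{X}{\alpha\beta}\right)\right] \le 1,
\]
from which $\norm{X}_{\Psi} \le \alpha\beta$ is immediate. I expect to handle the meaningful regime $\beta \ge 1$; when $\beta \le 1$ the hypothesis already yields the (incomparable) bound $\norm{X}_{\Psi} \le \alpha$ directly from the definition, and the stronger inequality $\alpha\beta$ is essentially only of interest when $\beta \ge 1$ (the applications in the paper instantiate $\beta$ as a quantity at least $1$).

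The key analytic input will be the subhomogeneity of $\Psi$ that comes from convexity combined with $\Psi(0) = 0$: for any $t \in [0,1]$ and $y \ge 0$,
\[
\Psi(ty) = \Psi\bigl(ty + (1-t)\cdot 0\bigr) \le t\,\Psi(y) + (1-t)\,\Psi(0) = t\,\Psi(y).
\]
Setting $t = 1/\beta \le 1$ and $y = X/\alpha$ gives the pointwise bound
\[
\Psi\!\left(\frac{X}{\alpha\beta}\right) \le \frac{1}{\beta}\,\Psi\!\left(\frac{X}{\alpha}\right).
\]
Taking expectations and invoking the hypothesis $\E[\Psi(X/\alpha)] \le \beta$ yields $\E[\Psi(X/(\alpha\beta))] \le 1$, as required.

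Almost no obstacle is expected: the only delicate point is ensuring that the assumed subhomogeneity is invoked with $t \in [0,1]$, which is precisely the case $\beta \ge 1$. If one wants the statement to hold uniformly in $\beta$, one should replace $\alpha\beta$ by $\alpha\max(1,\beta)$; the proof above then covers the $\beta \ge 1$ branch, and the $\beta \le 1$ branch follows by monotonicity of $\Psi$ together with $\E[\Psi(X/\alpha)] \le \beta \le 1$. In either presentation, the entire argument is a two-line application of convexity, so I do not anticipate needing any probabilistic tools beyond linearity of expectation.
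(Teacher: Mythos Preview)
Your argument is correct, and the paper states this fact without proof, so there is nothing to compare against. Your observation that the inequality as literally stated requires $\beta \ge 1$ is both accurate and necessary: for general Young functions the bound can fail when $\beta < 1$ (e.g.\ take $\Psi(x)=x^2$ and $X$ deterministic equal to $\alpha\sqrt{\beta}$), and the paper only invokes the fact with $\beta = O(1)$ an absolute constant at least $1$, so your restriction matches the intended use.
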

 \begin{fact}[Tail bound given Orlicz norm bound]\label{fact:orlicztail}
 	Let $X$ be a random variable over $\R_{>0}$. If $\sigma = \norm{X}_{\Psi}<\infty$, then 
 	\begin{equation*}
 		\Pr[X\ge \beta\norm{X}_{\Psi}] \le 1/\Psi(\beta)
 	\end{equation*}
 \end{fact}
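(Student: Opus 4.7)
The plan is to prove Fact~\ref{fact:orlicztail} by reducing it to Markov's inequality applied to the nonnegative random variable $\Psi(X/\|X\|_\Psi)$. The underlying observation is that the Orlicz norm was \emph{defined} precisely so that the transformed random variable has expectation at most $1$, which makes Markov's inequality immediately useful.

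First, I would establish the key input: if $c_0 \triangleq \|X\|_\Psi < \infty$, then $\E[\Psi(X/c_0)] \le 1$. This is not quite immediate from \eqref{eq:orliczdef} because the norm is defined as an infimum over $c$ satisfying $\E[\Psi(X/c)] \le 1$, not as a minimum. To bridge this gap, I would pick any sequence $c_n \downarrow c_0$ with $\E[\Psi(X/c_n)] \le 1$. Since $\Psi$ is increasing on $\R_{>0}$ and $X/c_n \uparrow X/c_0$ pointwise, the sequence $\Psi(X/c_n)$ is monotonically increasing to $\Psi(X/c_0)$, so the monotone convergence theorem yields $\E[\Psi(X/c_0)] = \lim_n \E[\Psi(X/c_n)] \le 1$.

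Next, I would apply Markov's inequality to the nonnegative random variable $Y \triangleq \Psi(X/c_0)$. For any $\beta > 0$ with $\Psi(\beta) > 0$,
\begin{equation*}
\Pr[Y \ge \Psi(\beta)] \le \frac{\E[Y]}{\Psi(\beta)} \le \frac{1}{\Psi(\beta)}.
\end{equation*}
Since $\Psi$ is strictly increasing (as a convex function with $\Psi(0)=0$ that tends to $\infty$, it is strictly increasing on the set where it is positive), the event $\{Y \ge \Psi(\beta)\}$ coincides with $\{X/c_0 \ge \beta\} = \{X \ge \beta \|X\|_\Psi\}$, which gives the desired tail bound $\Pr[X \ge \beta \|X\|_\Psi] \le 1/\Psi(\beta)$.

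There is really no hard step here; the only technicality worth flagging is the passage from the infimum-defined norm to an actual bound $\E[\Psi(X/\|X\|_\Psi)] \le 1$, which I handle by monotone convergence as above. If one wished to avoid even that, an equivalent and slightly slicker route is to fix an arbitrary $c > \|X\|_\Psi$, apply Markov to obtain $\Pr[X \ge \beta c] \le 1/\Psi(\beta)$, and then let $c \downarrow \|X\|_\Psi$ using continuity of probability from above on the nested events $\{X \ge \beta c\}$; this avoids touching $\Psi(X/c_0)$ directly. Either route yields the fact in a few lines.
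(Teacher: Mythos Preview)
The paper states Fact~\ref{fact:orlicztail} as a standard result and does not supply a proof, so there is nothing to compare against. Your argument via Markov's inequality applied to $\Psi(X/\|X\|_\Psi)$ is correct and is exactly the standard proof; one minor note is that you only need the inclusion $\{X/c_0 \ge \beta\} \subseteq \{\Psi(X/c_0) \ge \Psi(\beta)\}$, which follows from $\Psi$ being nondecreasing, so strict monotonicity is not actually required.
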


\begin{fact}[Approximation of $e^{x^{\alpha}}$ by Young function]
 	For any $0<\alpha<1$, the function $\Psi_{\alpha}$ given by 
 	\begin{equation*}
 		\Psi_{\alpha}(x)\triangleq \begin{cases}(\alpha e)^{1/\alpha}\cdot x & x < (1/\alpha)^{1/\alpha} \\
 		e^{x^{\alpha}} & x \ge (1/\alpha)^{1/\alpha}
 		\end{cases}
 	\end{equation*} 
 	is a Young function satisfying 
 	\begin{equation}\label{eq:youngexpbound}
 	\Psi_{\alpha}(x) \le e^{x^{\alpha}}.
 	\end{equation}
 \end{fact}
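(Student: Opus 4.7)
The plan is to verify the three defining properties of a Young function---convexity, monotonicity, $\Psi_\alpha(0) = 0$ and $\Psi_\alpha(x)\to\infty$---together with the pointwise inequality $\Psi_\alpha(x) \le e^{x^{\alpha}}$. Most of these are immediate: $\Psi_\alpha(0) = 0$ because the linear piece vanishes at $0$; $\Psi_\alpha(x)\to\infty$ because the exponential piece dominates; and both pieces are strictly increasing. The two pieces also agree at the join point $x_0 \triangleq (1/\alpha)^{1/\alpha}$, since $(\alpha e)^{1/\alpha} x_0 = e^{1/\alpha} = e^{x_0^{\alpha}}$, so $\Psi_\alpha$ is continuous. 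It remains to establish the bound \eqref{eq:youngexpbound} and global convexity.

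The single computation behind both is the minimization of $e^{x^{\alpha}}/x$ over $x > 0$. Substituting $y = x^{\alpha}$ converts this into minimizing $e^y/y^{1/\alpha}$, whose logarithm $y - (1/\alpha)\log y$ has a unique critical point at $y = 1/\alpha$, yielding minimum value $(\alpha e)^{1/\alpha}$ attained at $x = x_0$. Equivalently, $(\alpha e)^{1/\alpha} x \le e^{x^{\alpha}}$ for all $x > 0$, which is precisely the bound \eqref{eq:youngexpbound} on the linear piece of $\Psi_\alpha$ (and holds with equality on the exponential piece). As a bonus, the minimization says geometrically that the line $y = (\alpha e)^{1/\alpha} x$ is the tangent line to $y = e^{x^{\alpha}}$ at $x_0$, so the two pieces of $\Psi_\alpha$ meet at $x_0$ with matching slopes as well as matching values.

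For convexity, I will handle each piece separately and then glue. The linear piece is trivially convex. For the exponential piece, a direct computation gives
\begin{equation*}
\tfrac{d^2}{dx^2} e^{x^{\alpha}} = \alpha x^{\alpha-2} e^{x^{\alpha}} \bigl[(\alpha-1) + \alpha x^{\alpha}\bigr],
\end{equation*}
and since $\alpha x^{\alpha}$ is increasing in $x$ and equals $1$ at $x = x_0$, the bracket is $\ge \alpha > 0$ throughout $[x_0,\infty)$. Hence $e^{x^{\alpha}}$ is convex on $[x_0,\infty)$. Combined with convexity of the linear piece and the matching one-sided derivatives at $x_0$ from the previous paragraph, we conclude that $\Psi_\alpha$ is convex globally.

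There is no genuinely hard step here; the entire argument reduces to the one-variable minimization in the second paragraph. The only conceptual point worth flagging is that the coefficient $(\alpha e)^{1/\alpha}$ has been chosen precisely so that the line $y = (\alpha e)^{1/\alpha} x$ is \emph{tangent} to $y = e^{x^{\alpha}}$ at $x_0$ rather than merely meeting it; without this choice, continuity at the join point would still hold but convexity of $\Psi_\alpha$ would fail.
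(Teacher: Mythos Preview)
Your proof is correct. The paper states this as a \textbf{Fact} without proof, treating it as standard, so there is no proof in the paper to compare against; your argument supplies a complete verification. The key observation you make---that $(\alpha e)^{1/\alpha}$ is exactly the minimum of $e^{x^\alpha}/x$, attained at $x_0 = (1/\alpha)^{1/\alpha}$, so that the linear piece is the tangent line to the exponential piece at the join point---is precisely the right way to see both the inequality \eqref{eq:youngexpbound} and global convexity at once.
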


We can now complete the proof of Lemma~\ref{lem:moment_conc}.

\begin{proof}[Proof of Lemma~\ref{lem:moment_conc}]
 	Take $\alpha = \frac{1}{p+2}$. Note that 
 	\begin{align*}
 		\E[\Psi(X/c)] &\le \E[e^{(X/c)^{\alpha}}] \\
 		&= \sum^{\infty}_{t=0}\frac{1}{t!}\E[(X/c)^{\alpha t}] \\
 		&\le \sum^{\infty}_{t=0}\frac{c^{-\alpha t}}{t!}\E[X^t]^{\alpha} \\
 		&\le \sum^{\infty}_{t=0}\frac{1}{t!}\cdot\left( c^{-1} \cdot c'\cdot \maxvar{\calF}^{p}\cdot p^{p/2}\cdot t^{p/2 + 1} / \sqrt{N}\right)^{\alpha t} \\
 		&= \sum^{\infty}_{t=0}\frac{1}{t!} \cdot \left( c^{-1} \cdot c'\cdot \smax^{p}\cdot p^{p/2} / \sqrt{N} \right)^{\alpha t}t^{t/2},
 	\end{align*} 
 	where the third step follows by Jensen's and concavity of $x\mapsto x^{\alpha}$ when $0<\alpha < 1$, the fourth step follows by Lemma~\ref{lem:moment_bound}, and the last step follows by the fact that $\alpha(p/2 + 1) = 1/2$.

 	So if we take $c = c' \cdot \smax^p \cdot p^{p/2} / \sqrt{N} $, then $\E[\Psi(X/c)] = O(1)$, so we conclude by Fact~\ref{fact:orliczsuffice} that 
 	\begin{equation*}
 	\norm{X}_{\Psi} = c'' \cdot \smax^p \cdot p^{p/2} / \sqrt{N} 
 	\end{equation*} 
 	for some absolute constant $c'' > 0$. We can now apply Fact~\ref{fact:orlicztail} to get that 
 	\begin{equation*}
 		\Pr\left[X \ge \beta \cdot c'' \cdot \smax^p\cdot p^{p/2} / \sqrt{N} \right] \le 1/\Psi(\beta).
 	\end{equation*} 
 	If we take $\beta =  \gamma\sqrt{N} / ( c'' \cdot \smax^p\cdot p^{p/2} )$, then provided 
 	\begin{align*}
 	N \ge (c'')^2 \cdot \gamma^{-2} \cdot ( \max\{ p+2 , \ln(1/\delta) \} ) ^{2p+4} \cdot p^p \cdot \smax^{2p},
 	\end{align*}
 	we have that $\beta\ge (p+2)^{p+2}$ so that $1/\Psi(\beta)\le \exp(\beta^{1/(p+2)})$, and $\beta \ge (\ln (1/\delta) )^{p+2}$ so that $\exp( - \beta^{1/(p+2)} ) \le \delta$, so we get that 
 	\begin{equation*}
 		\Pr_{Z_1,...,Z_N}\left[ \Big| \frac{1}{N} \sum^N_{i=1} Z^p_i - \E_{Z\sim \calF}[Z^p] \Big| \le \gamma \cdot \smax^p \cdot p^{p/2} \right] \le \delta.
 	\end{equation*}

Finally, we would like to relate the deviation term $\gamma \cdot \smax^p \cdot p^{p/2}$ to $\beta \cdot \E_{Z \sim \calF}[Z^p]$. By \eqref{eq:expectedzp}, if we take $\gamma = \beta \cdot\pmin$, the lemma follows.
 \end{proof}

\subsection{Proof of Fact~\ref{fact:monotone}}

\begin{proof}
	Define the function $F(\sigma) \triangleq \int_{[-\tau,\tau]^c}\N(0,\sigma^2;x)\cdot x^p\, \d x$. It suffices to show that $F'(\sigma) > 0$ for all $\sigma\in(0,\sigma^*]$. We have that 
	\begin{align*}
		\deriv{\sigma}F(\sigma) 
		&= \int_{[-\tau,\tau]^c} \frac{e^{-x^2/(2\sigma^2)}(x^2 - \sigma^2)}{\sqrt{2\pi}\sigma^4}\cdot x^p \, \d x \\
		&= \frac{1}{\sigma^3} \cdot \left( \int_{[-\tau,\tau]^c}\N(0,\sigma^2 ; x)\cdot (x^{p+2} - \sigma^2 x^p)\, \d x \right) \\
		&= \left((p+1)!! - (p-1)!!\right)\sigma^p - \left(\int_{[-\tau,\tau]} \N(0,\sigma^2 ; x)\cdot (x^{p+2} - \sigma^2 x^p)\, \d x \right).
	\end{align*} 
	As the above expression tends to zero as $\tau\to\infty$, and because $\N(0,\sigma^2 ; x)\cdot(x^{p+2} - \sigma^2 x^p)$ is even, it suffices to show that the function $G(t)\triangleq \int^{\tau}_0 \N(0,\sigma^2 ; x)\cdot(x^{p+2} - \sigma^2 x^p) \, \d x $ is increasing in $\tau$. By the fundamental theorem of calculus, 
	\begin{equation*}
		G'(\tau) = \N(0,\sigma^2,\tau)\cdot (\tau^{p+2} - \sigma^2\tau^p) > 0,
	\end{equation*} 
	by the assumption that $\sigma < \sigma^* < \tau$.
\end{proof}

\subsection{Proof of Corollary~\ref{cor:unitcorr}}
\begin{proof}
	Let $c = 1/2 - \gamma$. Note that $\langle g,w\rangle \sim \N(0,1)$, so by Fact~\ref{fact:gaussian_tails} we have that for sufficiently large $d$,
	\begin{equation*}
	\Pr[\langle g,w\rangle \ge 1.1\underline{\alpha} d^c] \ge \frac{1}{\sqrt{2\pi}}\cdot\frac{1}{2.2\underline{\alpha} d^c}\cdot e^{-1.21\underline{\alpha}^2 d^{2c}/2}\ge 2e^{-\underline{\beta}\cdot d^{2c}}
	\end{equation*} for $\underline{\beta} = 1.21\underline{\alpha}^2$, and 
	\begin{equation*}
	\Pr[\langle g,w\rangle \le 0.9\overline{\alpha}d^{c}] \ge 1 - \frac{1}{\sqrt{2\pi}}\cdot\frac{1}{0.9\overline{\alpha}d^{c}}\cdot e^{-0.81\overline{\alpha}^2d^{2c}/2}\ge 1 - \frac{1}{2}\cdot e^{-\overline{\beta}\cdot d^{2c}}
	\end{equation*} for $\overline{\beta} = 0.81\overline{\alpha}^2/2$.
	On the other hand, by Fact~\ref{fact:thinshell}, $\Pr[\norm{g}_2 \in [0.9,1.1]\cdot\sqrt{d}]\ge 1 - 2e^{-c_{\text{shell}}d/100}$. By a union bound, we conclude that 
	\begin{equation*}
		\Pr\left[\langle v,w\rangle\ge d^{c-1/2}\right]\ge 2e^{-\underline{\beta}\cdot d^{2c}} - 2e^{-c_{\text{shell}}d/100} \ge e^{-\underline{\beta}\cdot d^{2c}},
	\end{equation*} 
	and similarly 
	\begin{equation*}
		\Pr\left[\langle v,w\rangle\le \overline{\beta}\cdot d^{c-1/2}\right]\ge 1 - 2e^{-c_{\text{shell}}d/100} - \frac{1}{2}e^{-\overline{\beta}\cdot d^{2c}} \ge 1 - e^{-\overline{\beta}\cdot d^{2c}}.
	\end{equation*}
\end{proof}

\subsection{Proof of Corollary~\ref{cor:joint}}
\begin{proof}
	Note that $\langle g,w_1\rangle$ and $\langle g,w_2\rangle$ are independent and distributed as $\N(0,1)$.

	Decompose $g \in \R^d$ as 
	\begin{equation}
		g = \langle g,w_1\rangle w_1 + \langle g,w_2\rangle w_2 + g^{\perp},
	\end{equation} 
	where $g^{\perp} \in \R^d$ is a standard Gaussian vector in the subspace orthogonal to $w_1,w_2 \in \R^d$. 

	We will first lower bound the probability of the event on the left-hand side of \eqref{eq:wantprobrelation}.

	By Fact~\ref{fact:thinshell}, we have that for some absolute constant $t > 0$, 
	\begin{equation*}
		\Pr\left[\norm{g^{\perp}}^2_2 = d \pm t\right] \ge 1/2.
	\end{equation*} 
	Call this event $\calE$. Let $\calE'$ be the event that $\langle g,w_2\rangle \le \sqrt{d+1}\cdot \alpha_2 d^{-1/2} = O(1)$. We know that $\Pr[\calE\wedge \calE'] \ge \Omega(1)$.

	Conditioning on $\calE$ and $\calE'$, first note that $\langle v,w_2\rangle \le \frac{1}{\sqrt{d+1}}\le \alpha_2 \cdot d^{-1/4}$. We also have that
	\begin{equation}
		\langle v,w_1\rangle \ge \frac{\langle g,w_1\rangle}{\sqrt{\langle g,w_1\rangle^2 + 1 + d + t}},
	\end{equation} 
	so if we take $\alpha' > \alpha_1$ to be the solution to 
	\begin{equation}
		\frac{\alpha' d^{1/4}}{\sqrt{\alpha'^2\sqrt{d} + 1 + d + t}} = \alpha_1 \cdot d^{-1/4},\label{eq:alphaprimedef}
	\end{equation} 
	we conclude that 
	\begin{equation*}
		\Pr\left[\left(\langle v, w_1\rangle \ge \alpha_1\cdot d^{-1/4}\right) \wedge \left(\langle v, w_2\rangle \le \alpha_2\cdot d^{-1/4}\right)\right] \ge \Omega(1) \cdot \Pr_{h\sim\N(0,1)}[h\ge \alpha'd^{1/4}] .
	\end{equation*} 
	Furthermore, squaring both sides of \eqref{eq:alphaprimedef} and rearranging, we see that 
	\begin{equation*}
		\alpha'^2_1 - \alpha^2_1 = \frac{\alpha^2_1 \alpha'^2}{\sqrt{d}} + \frac{\alpha^2_1}{d(1+t)} = O(1/\sqrt{d}),
	\end{equation*} 
	so in particular $\Pr[h\ge\alpha' d^{1/4}] \ge \frac{1}{\poly(d)} \cdot \Pr[h\ge \alpha_1 d^{1/4}]$.

	We next upper bound the probability of the event on the right-hand side of \eqref{eq:wantprobrelation}. 

	Write $g$ as $g = \langle g, w_1\rangle w_1 + g'^{\perp}$ for $g'^{\perp}$ a standard Gaussian vector orthogonal to $w_1$. Then the event on the right-hand side of \eqref{eq:wantprobrelation} is the event that $\langle g,w_1\rangle \ge \alpha_1 d^{-1/4} \norm{g}_2$, or equivalently, that 
	\begin{equation*}
		 \langle g,w_1\rangle\ge \frac{\alpha_1 d^{-1/4}}{\sqrt{1 - \alpha_1^2 d^{-1/2}}}\norm{g'^{\perp}}_2.
	\end{equation*} 
	Let $\alpha'' > \alpha_1$ be the solution to 
	\begin{equation}
		\frac{\alpha_1 d^{-1/4}}{\sqrt{1 - \alpha^2_1 d^{-1/2}}} = \alpha''\cdot d^{-1/4}.\label{eq:alphaprimeprimedef}
	\end{equation} 
	Then the above event has probability given by the integral 
	\begin{equation}
		\int^{\infty}_{0}\Pr_{h\sim\N(0,1)}[h\ge \alpha'' d^{-1/4}\cdot \beta^{1/2}]\cdot \mu(\beta) \, \d\beta,\label{eq:integral}
	\end{equation} 
	where $\mu(\beta)$ is the density of the random variable $\norm{g'^{\perp}}^2_2$. By Fact~\ref{fact:gaussian_tails}, 
	\begin{equation*}
		\Pr_h[h \ge \alpha'' d^{-1/4}\cdot\beta^{1/2}] \le \frac{d^{1/4}}{\alpha''\beta^{1/2}}\cdot e^{-\frac{1}{2}\alpha''^2\beta/\sqrt{d}}.
	\end{equation*} 
	For $\beta\in[0.9d,1.1d]$, this quantity is at most $1/\poly(d)\cdot e^{-\frac{1}{2}\alpha''^2\beta/\sqrt{d}}$. So we may write \eqref{eq:integral} as 
	\begin{align*}
		& ~\int_{[0.9d,1.1d]}\Pr_{h\sim\N(0,1)}[h\ge \alpha'' d^{-1/4}\cdot \beta^{1/2}]\cdot \mu(\beta) \, \d\beta \\
		& ~ + \int_{[0.9d,1.1d]^c}\Pr_{h\sim\N(0,1)}[h\ge \alpha'' d^{-1/4}\cdot \beta^{1/2}]\cdot \mu(\beta) \, \d\beta \\
		\le & ~ \frac{1}{\poly(d)}\int^{\infty}_0 e^{-\frac{1}{2}\alpha''^2\beta/\sqrt{d}}\cdot \mu(\beta) \, \d\beta \\
		& ~ + \exp(-\Omega(d)) \\
		= & ~ \frac{1}{\poly(d)}\cdot\frac{1}{(2\pi)^{(d-1)/2}}\int e^{-\frac{g^2_1 + \cdots + g^2_{d-1}}{2}\cdot \left(1 + \alpha''^2/\sqrt{d}\right)}\d g_1 \cdots \d g_{d-1} \\
		= & ~ \frac{1}{\poly(d)}\cdot (1 + \alpha''^2/\sqrt{d})^{-(d-1)/2}.
	\end{align*} 
	Finally, we observe that 
	\begin{align*}
		(1 + \alpha''^2/\sqrt{d})^{-(d-1)/2} 
		= & ~ \left((1 + \alpha''^2/\sqrt{d})^{\sqrt{d}/\alpha''^2}\right)^{-\frac{d-1}{\sqrt{d}}\alpha''^2/2} \\
		\le & ~ \left(e - O(1/\sqrt{d})\right)^{-\frac{d-1}{\sqrt{d}}\alpha''^2/2} \\
		\le & ~ O(e^{-\sqrt{d}\alpha''^2/2}).
	\end{align*} 
	We are done by \eqref{fact:gaussian_tails} if we can show that $\Pr[h\ge \alpha'' d^{1/4}]\le\poly(d)\Pr[h\ge\alpha d^{1/4}]$. But by squaring both sides of \eqref{eq:alphaprimeprimedef} and rearranging, we see that 
	\begin{equation*}
		\alpha''^2 - \alpha^2_1 = \frac{\alpha''^2 \alpha_1^2}{\sqrt{d}} = O(1/\sqrt{d}),
	\end{equation*} 
	so in particular $\Pr[h\ge \alpha'' d^{1/4}] \le \poly(d) \cdot \Pr[h\ge\alpha d^{1/4}]$ as desired.
\end{proof}

\subsection{Proof of Lemma~\ref{lem:findspan}}
\label{app:findspan}

\begin{proof}
Suppose $\calD$ has parameters $(\{p_i\}),\{w_i\})$. For the first part of the lemma, first assume that the noise rate $\noise = 0$ so that with probability $p_i$, $y = \langle w_i,x\rangle$. For entry $(j,j')\in[d]^2$, we may write
\begin{align*}
	2\E[\vec{M}^{x,y}_a]_{j,j'} = & ~ \sum^k_{i=1}p_i \E_{x\sim\N(0,\Id)}\left[\langle w_i - a,x\rangle^2\cdot x_jx_{j'} - \bone{j=j'}\cdot\langle w_i - a,x\rangle^2\right] \\
	= & ~ \sum^k_{i=1}p_i \E_{x\sim\N(0,\Id)}\left[\langle w_i - a,x\rangle^2\cdot x_jx_{j'}\right] - \bone{j=j'} \cdot \sum^k_{i=1}p_i\norm{w_i - a}^2_2 \\
	= & ~ 
	\begin{cases}
	\overset{k}{ \underset{i=1}{\sum} } p_i \left((w_i - a)^2_j\E[x_j^4] + \underset{ \ell \neq j }{ \sum } (w_i - a)^2_{\ell'}\E[x_j^2x_{\ell}^2]\right) - \overset{k}{ \underset{i=1}{\sum} } p_i\norm{w_i - a}^2_2 & \mathrm{~if~} j = j'\\
	\overset{k}{ \underset{i=1}{\sum} } p_i\left(2(w_i - a)_j(w_i - a)_{j'}\E[x^2_jx^2_{j'}]\right) & \mathrm{~if~} j\neq j'
	\end{cases} \\
	= & ~  \begin{cases}\overset{k}{ \underset{i=1}{\sum} } p_i \left(3(w_i - a)^2_j + \underset{\ell\neq j}{\sum} (w_i - a)^2_{\ell'}\right) - \overset{k}{ \underset{i=1}{\sum} } p_i\norm{w_i - a}^2_2 & \mathrm{~if~} j = j'\\
	\overset{k}{ \underset{i=1}{\sum} } p_i\left(2(w_i - a)_j(w_i - a)_{j'}\right) & \mathrm{~if~} j\neq j'\end{cases} \\
	= & ~ 2\sum^k_{i=1}p_i (w_i - a)_j(w_i - a)_{j'},
\end{align*} 
as claimed. Now if the noise rate $\noise$ is nonzero so that with probability $p_i$, let $y'$ be the random variable which equals $\langle w_i,x\rangle$ with probability $p_i$, so that $y = y' + g$ for $g\sim\N(0,\noise^2)$, then \begin{align*}
	2\E[\vec{M}^{x,y}_a] &= \E\left[(y' - \langle a,x\rangle + g)^2 xx^{\top} - (y' + g)^2\cdot\Id\right] \\
	&= 2\sum^k_{i=1}p_i (w_i - a)(w_i - a)^{\top} + \E[g^2\cdot xx^{\top}] - \E[g^2]\cdot\Id  \\
	&= 2\sum^k_{i=1}p_i (w_i - a)(w_i - a)^{\top},
\end{align*} where the second step follow by the fact that $g$ is independent of the random variables $x,y'$.

The second part of the lemma follows from the following fact, which quantifies the extent to which the matrix $\vec{M}^{x,y}_a$ concentrates in spectral norm.
This is already proven in the noiseless case, see e.g. Eq.~(34) in \cite{yi2016solving}, and the noisy version follows from a straightforward modification of that proof using Theorem 4.7.1 in~\cite{vershynin2018high}.

\begin{fact}[Concentration of Empirical Moments]
	\begin{equation*}
		\Pr\left[\norm{\frac{1}{N}\sum^N_{i=1}\vec{M}^{x_i,y_i}_a - \E_{(x,y)\sim\calD}\left[\vec{M}^{x,y}_a\right]}_2 \ge \Omega\left( \max_{i \in [k]} \norm{w_i - a}^2_2\cdot\frac{\ln\left(\pmin N\right)}{\sqrt{\pmin N}}\cdot \sqrt{d\ln(k/\delta)}\right)\right] \le \delta.
	\end{equation*}
\end{fact}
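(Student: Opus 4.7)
My plan is to reduce the claim to a noiseless analogue and then handle the noise contribution by exploiting its independence from $x$. To this end, I would first condition on the mixture assignment: conditional on sampling component $i$, we may write $y - \langle a,x\rangle = \langle w_i - a, x\rangle + g$ with $g\sim\N(0,\noise^2)$ independent of $x$. Expanding the square and using the definition \eqref{eq:mxydef}, we get the decomposition
\[
\vec{M}^{x,y}_a \;=\; \vec{M}^{x,y'}_a \;+\; g\cdot\langle w_i-a,x\rangle\,(xx^\top - \Id) \;+\; \tfrac{1}{2}g^2(xx^\top-\Id),
\]
where $y' = \langle w_i,x\rangle$ is the noiseless response. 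Since $g$ is mean-zero and independent of $x$ (and of the component index), each of the last two terms has the same expectation as in the noiseless case (namely zero for the second and $\noise^2/2\cdot(\E[xx^\top]-\Id)=0$ for the third, after averaging over $g$). Thus $\E[\vec M^{x,y}_a]=\E[\vec M^{x,y'}_a]$ and it suffices to bound the deviation of each of the three empirical averages separately by triangle inequality.

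For the first term I would simply cite the noiseless spectral concentration result stated as Eq.~(34) in~\cite{yi2016solving}, which already yields the target rate
$\max_i\|w_i-a\|_2^2\cdot \tfrac{\ln(\pmin N)}{\sqrt{\pmin N}}\cdot\sqrt{d\ln(k/\delta)}$. For the two noise-driven terms the strategy is to apply a truncated matrix Bernstein-type argument, along the lines of Theorem 4.7.1 in \cite{vershynin2018high}. Specifically, I would first choose a truncation radius $R$ so that $\Pr[\text{sample is truncated}]\le \delta/(2N)$; by a union bound this affects no sample with probability $1-\delta/2$. For the cross term $g\cdot\langle w_i-a,x\rangle(xx^\top-\Id)$, after truncation I would apply matrix Bernstein with variance proxy $\noise^2\cdot\max_i\|w_i-a\|_2^2\cdot O(d)$ (the $d$ coming from $\|\E[\langle w_i-a,x\rangle^2 (xx^\top-\Id)^2]\|_2=O(\|w_i-a\|_2^2 d)$) and with a per-sample norm bound arising from the truncation. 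The $g^2(xx^\top-\Id)$ term is handled analogously, with variance proxy $\noise^4\cdot O(d)$, which is $O(\|w_i-a\|_2^4 \cdot d)$ whenever $\noise\lesssim \max_i\|w_i-a\|_2$ as is the regime this lemma is invoked in the paper (and otherwise is dominated by the first term's bound up to constants).

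The two technical subtleties I expect to encounter are: (i) making the truncation sharp enough to absorb the logarithmic factors in a clean way so that the final bound matches the stated $\tfrac{\ln(\pmin N)}{\sqrt{\pmin N}}$ rate (the $\pmin$ appears because the effective per-component sample size is $\pmin N$, so it is natural to truncate based on the component-$i$ tail bound and then union over the $k$ components at level $\delta/k$, giving the $\sqrt{\ln(k/\delta)}$ factor), and (ii) verifying that after truncation the residual mean (from clipping) does not dominate the variance term; this follows from the sub-exponential/sub-Weibull decay of $\|\langle w_i-a,x\rangle^2 (xx^\top - \Id)\|_2$ as a quartic polynomial in a Gaussian. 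The hardest part is the bookkeeping: making sure each of the three constituent matrix Bernstein applications respects the same failure probability $\delta$ and the variance proxies combine to give exactly the advertised rate. Once the truncation parameters are fixed to match the noiseless proof of~\cite{yi2016solving}, the remaining steps reduce to invoking the standard matrix Bernstein inequality and a union bound over components and over the truncation event.
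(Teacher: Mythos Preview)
Your proposal is correct and takes essentially the same approach as the paper: the paper does not give a self-contained argument but simply cites Eq.~(34) of \cite{yi2016solving} for the noiseless case and remarks that the noisy version follows from a straightforward modification using Theorem~4.7.1 of \cite{vershynin2018high}. Your decomposition into the noiseless term plus the two noise-driven terms, together with truncated matrix Bernstein on the latter, is exactly such a modification, so you have simply fleshed out what the paper leaves implicit.
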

\end{proof}

\subsection{Proof of Lemma~\ref{lem:runtime}}
\label{app:runtime}

\begin{proof}
We bound the sample complexity and runtime of each of the $O(MT)$ iterations.
In each iteration, we first sample $N_1$ points, and perform an approximate $k$-SVD on a $N_1 \times d$ matrix, where $N_1$ is defined as in Line~\ref{line:n1}.
By Corollary~\ref{cor:constant_min_approx}, $\underline{\sigma}^{\text{sharp}}_t$ is at most a constant factor smaller than $\underline{\sigma} = \Omega(\epsilon)$.
Therefore the sample complexity of this step is at most 
\[
N_1 = \widetilde{O} ( \epsilon^{-2} \pmin^{-2}  d k^2 \ln(1 / \delta) )\; .
\]
and the runtime is at most $\widetilde{O} (N_1 k d)$ by Lemma~\ref{lem:correlation}.
The other contribution to the sample complexity and runtime of each iteration (at least in most regimes) is from \textsc{CompareMinVariances}.
By our choice of parameters and Corollary~\ref{cor:compare_min_variances}, the sample complexity of $\textsc{CompareMinVariances}$ is 
\[
N =   \pmin^{-4} k \ln ( 1 / \delta ) \cdot \poly \left(\sqrt{k}, \ln ( 1 / \pmin ), \ln (1 / \epsilon ) \right)^{O\left(\sqrt{k} \ln (1 / \pmin)\right)} \; .
\]
and the runtime is bounded by $\widetilde{O} (N)$.
Since we run for $MT = \widetilde{O} \left( \sqrt{k} e^{\sqrt{k}} \ln(1 / \epsilon) \right)$ iterations, this completes the proof.

\end{proof}

\subsection{Proof of Lemma~\ref{lem:runtime_hyper}}

\begin{proof}
Prior to the outer loop, we first sample $N_1$ points, and perform an approximate $k$-SVD on a $N_1 \times d$ matrix, where $N_1$ is defined as in Line~\ref{line:n1}.
 
Therefore the sample complexity of this step is at most 
\[
N_1 = \widetilde{O} \left( d \cdot \poly(k)/\pmin^2\right) \; .
\]
and the runtime is at most $\widetilde{O} (N_1 k d)$ by Lemma~\ref{lem:correlation}.

The bulk of the contribution to the sample complexity and runtime comes from the $S$ iterations of the outer loop, each of which consists of $MT$ iterations of the inner loop (over $t$) and a call to \textsc{CheckOutcomeHyperplanes}. The complexity of these $MT$ iterations is dominated by an invocation of \textsc{CompareMinVariances}.
By our choice of parameters and Corollary~\ref{cor:compare_min_variances}, the sample complexity of one run of $\textsc{CompareMinVariances}$ is 
\[
N =  \pmin^{-4} k \cdot \poly \left( k^{3/5}, \ln ( 1 / \pmin ), \ln (1 / \epsilon ) \right)^{O\left(k^{3/5} \ln (1 / \pmin)\right)} 
\]
and the runtime is bounded by $\widetilde{O} (N)$.
Each iteration $i\in[S]$ involves $M\cdot T = \widetilde{O} \left( k^{3/5} e^{k^{3/5}} \ln (1 / \epsilon) \right)$ such iterations. Additionally, the $i$-th iteration runs \textsc{CheckOutcomeHyperplanes}, a run of which has time and sample complexity
\[
N_2 = O\left(\pmin^{-2}\cdot \ln(2S/\delta)\right) = O\left(\pmin^{-2}\cdot k^{3/5}\ln(2/\delta)\right).
\] We conclude that \textsc{HyperplaneMomentDescent} requires sample complexity 
\begin{align*}
\widetilde{O}\left(N_1 + S\cdot\left(k^{3/5}e^{k^{3/5}}N + N_2\right)\right)
\end{align*}
and runs in time 
\begin{align*}
\widetilde{O}\left(dN_1 + S\cdot\left(k^{3/5}e^{k^{3/5}}N + N_2\right)\right).
\end{align*}
\end{proof}

%%% Section C, Deferred Proofs

\end{document}